\documentclass[reqno]{article}

\usepackage[utf8]{inputenc}

\usepackage{amsmath, amsthm}
\usepackage{amsfonts}
\usepackage{amssymb}
\usepackage{amsxtra}
\usepackage{mathrsfs}
\usepackage{mathtools}

\usepackage{bbm}
\usepackage{bm}
\usepackage{color}
\usepackage{float}
\usepackage{graphicx}
\usepackage{caption}
\usepackage{subfigure}
\usepackage{cancel}
\usepackage{comment}
\usepackage{latexsym}
\usepackage{soul}
\usepackage{todonotes}
\usepackage{accents}

\usepackage{authblk}
\usepackage{natbib}

\usepackage{tikz}
\usetikzlibrary{calc,arrows.meta, decorations.pathreplacing, calligraphy}

\usepackage{slashed}
\usepackage{leftidx}
\usepackage{tensor}
\usepackage{epstopdf}

\usepackage{enumerate}
\usepackage{enumitem}

\usepackage{hyperref}

\newtheorem{theorem}{Theorem}[section]
\newtheorem{lemma}[theorem]{Lemma}
\newtheorem{corollary}[theorem]{Corollary}
\newtheorem{proposition}[theorem]{Proposition}

\newtheorem{definition}[theorem]{Definition}

\newtheorem{convention}[theorem]{Convention}

\theoremstyle{remark}
\newtheorem{remark}[theorem]{Remark}
\newcommand{\mr}{\mathring}
\numberwithin{equation}{section}

\usepackage[margin = 1in]{geometry}

\makeatletter

\newcommand{\Rmnum}[1]{\expandafter\@slowromancap\romannumeral #1@}
\makeatother

\begin{document}

\title{Nonlinear stability of continuously self-similar naked singularities for the Einstein-scalar field equations \Rmnum{2}: linearized stability}

\author[1]{Jaydeep Singh}
\author[2]{Weihao Zheng\thanks{wz344@math.rutgers.edu}}
\affil[1]{\small Department of Mathematics, Princeton University, Washington Road, Princeton, NJ, USA}
\affil[2]{\small Department of Mathematics, Rutgers University, Hill Center, 110 Frelinghuysen Road, Piscataway, NJ, USA}
\date{\today}

\maketitle
\begin{abstract}
This is the second part of a series of papers proving the nonlinear stability of a one-parameter family of continuous self-similar $C^{1,\alpha}$ naked singularity solutions, with $0<\alpha\ll1$, to the spherically symmetric Einstein-scalar field equations. These solutions were constructed by Christodoulou and are known to be unstable under sufficiently rough perturbations due to the blue-shift instability mechanism. In complete contrast to the previous instability results, we establish the linearized stability for those naked singularity spacetimes under perturbations of the same regularity as the background, revealing the central role of regularity in determining the strength of the blue-shift instability mechanism, and showing that it is not triggered at the regularity level of the background spacetime.

The linear analysis carried out in this paper provides the foundation for the nonlinear stability result established in the companion paper [W. Zheng, Nonlinear stability of the continuous self-similar naked singularities for the Einstein-scalar field equations \Rmnum{1}: main results]. Together with that companion paper, this yields the nonlinear stability of these continuously self-similar naked singularities.

\end{abstract}
\tableofcontents
\newpage
\section{Introduction}
In this paper and the companion paper~\cite{zhenglinear}, we prove the nonlinear stability of the $k$-self-similar naked singularity solution to the Einstein-scalar field equations, which are a geometric system of PDEs describing a Lorentzian manifold $(\mathcal{M},g)$ coupled to a real scalar field $\phi: \mathcal{M}\rightarrow\mathbb{R}$
\begin{align}
    &Ric[g]_{\mu\nu} = \partial_{\mu}\phi\partial_{\nu}\phi,\label{eq: ESf 1}\\&
    \Box_{g}\phi = 0,\label{eq: ESf 2}
\end{align}
where Ric is the standard Ricci curvature of $(\mathcal{M},g)$ and $\Box_{g}$ is the covariant wave operator on $(\mathcal{M},g)$. This paper focuses on the linearized stability of these $k$-self-similar naked singularity solutions.

The family of $k$-self-similar naked singularity solutions $(\mathcal{M}_{k},g_{k},\phi_{k})$ with $0<k^{2}<1/3$, first constructed by Christodoulou~\cite{chris94}, is a family of spherically symmetric spacetimes solving \eqref{eq: ESf 1}--\eqref{eq: ESf 2} that contain a curvature singularity without an event horizon and possess a complete future null infinity. Notably, the regularity class of the initial data of $(\mathcal{M}_{k},g_{k},\phi_{k})$ is not smooth. To be more precise, let $\mathcal{O}$ be the naked singularity in $\mathcal{M}_{k}$, $\underline{C}_{o}^{-}$ be the past light cone emanating from the singularity, and $C_{out}$ the initial null hypersurface emanating from the center $\{r = 0\}$. The causal future of $\underline{C}_{o}^{-}$ is called the exterior region of the naked singularity spacetime $\mathcal{M}_{k}^{(ex)}$, while the causal past of $\underline{C}_{o}^{-}$ is called the interior region $\mathcal{M}_{k}^{(in)}$. Then the initial data of $(\mathcal{M}_{k},g_{k},\phi_{k})$ is smooth on $C_{out}\backslash \underline{C}_{o}^{-}$ and has $C^{1,k^{2}/(1-k^{2})}$-Hölder regularity at the intersecting sphere $C_{out}\cap \underline{C}_{o}^{-}$; see Figure~\ref{fig:penrose}.

These $k$-self-similar naked singularity spacetimes $(\mathcal{M}_{k},g_{k})$ have been proved to be unstable to trapped surface formation under initial perturbations that are rougher than the background~\cite{chris99,liuli,liuli_outsidesymm,li2025interior,an_highcodim}, while they are stable for spherically symmetric perturbations supported in the exterior region of $C_{out}$ at the same regularity level as the background~\cite{singh1,singh2,singhphdtheis}. To obtain a complete understanding of their stability properties, it is therefore sufficient to analyze how the interior region of $(\mathcal{M}_{k},g_{k},\phi_{k})$ responds to initial perturbations that are nontrivial within the interior of $C_{out}$\footnote{In fact, the exterior stability results in~\cite{singh2,singhphdtheis} apply to a wider class of naked singularity spacetimes that are asymptotically $k$-self-similar. As a corollary of~\cite{singh2,singhphdtheis}, for general perturbations, the asymptotic stability of the interior region implies asymptotic stability of the entire spacetime.}.

It has been shown that solutions to the linear wave equation on the interior of $(\mathcal{M}_{k},g_{k},\phi_{k})$ are stable under perturbations of the same regularity as the background~\cite{singh2}.  As a next step toward nonlinear stability for general perturbations at this regularity level, we therefore consider the linearized spherically symmetric Einstein-scalar field equations in the present work.

Before we state the rough version of our main theorem, we define the function space we shall consider in this paper. We parametrize the interior region of $C_{out}$ by $v\in[-1,0]$, where $v = -1$ corresponds to $r = 0$, while $v = 0$ corresponds to $C_{out}\cap \underline{C}_{o}^{-}$. We define the following Banach space capturing the regularity of $\phi_{k}|_{C_{out}\cap\mathcal{M}_{k}^{(in)}}$.
\begin{definition}
    Let $I=[-1,0]$ and $\delta\in(0,1)$. We define \begin{equation*}
        \mathcal{C}_{N}^{1/(1-k^{2}),\delta}: = \left\{f(v): I\rightarrow\mathbb{R}|f\in C^{N}(I\backslash\{0\})\cap C^{1}(I),\ \vert v\vert^{j-1/(1-k^{2})}\frac{d^{j}f}{dv^{j}}\in C^{0,\delta},\ 2\leq j\leq N\right\},
    \end{equation*}
    associated with the norm \begin{equation*}
        \Vert f\Vert_{\mathcal{C}_{N}^{1/(1-k^{2}),\delta}} = \sum_{i = 0}^{1}\left\Vert\frac{d^{i}f}{dv^{i}}\right\Vert_{C^{0,\delta}}+\sum_{i = 2}^{N}\left\Vert(-v)^{i-1/(1-k^{2})}\frac{d^{i}f}{dv^{i}}\right\Vert_{C^{0,\delta}},
    \end{equation*}
    where $C^{0,\delta}$ is the standard Hölder space.
\end{definition}
{Functions in the above Banach space are $C^{N}$ away from $v = 0$, while their second and higher-order derivatives are allowed to exhibit polynomial blow-up at $v = 0$. This localized singular behavior accurately captures the regularity of $\phi_{k}|_{C_{out}}\in\mathcal{C}_{N}^{1/(1-k^{2})}$ for any $N\geq 2$, and the space is therefore referred to as a localized Hölder space.}
\begin{theorem}[Rough version of Theorem~\ref{thm: linearized result at the threshold}]
\label{thm: rough version of the linearized result}
   Fix a non-zero value of $k$ sufficiently small. For any initial data of the scalar field in $\mathcal{C}_{N}^{1/(1-k^{2}),\delta}$ with $\delta\in(0,1-O(k))$ and $N\geq 5$, the solution to the spherically symmetric Einstein-scalar field equations linearized around $(\mathcal{M}_{k}^{(in)},g_{k},\phi_{k})$ contains a naked singularity and asymptotes to $(\mathcal{M}_{k},g_{k},\phi_{k})$ when approaching the singularity.
\end{theorem}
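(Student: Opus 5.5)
The plan is to work in self-similar coordinates adapted to the interior region $\mathcal{M}_{k}^{(in)}$, and to reduce the linearized problem to a linear evolution in the logarithmic self-similar time $s = -\log(-u)$. Here $u$ is an optical function whose level set $u=0$ is the past light cone $\underline{C}_{o}^{-}$. In these coordinates the background $(g_k,\phi_k)$ is $s$-independent up to the scaling symmetry. In spherical symmetry the linearized system can be written in double null gauge with unknowns $(\dot r,\dot\Omega,\dot\phi)$, and the constraint equations let us eliminate the metric perturbations in favour of $\dot\phi$ (or of $\partial_v(r\dot\phi)$) plus a finite number of gauge modes. The gauge modes are: the residual null reparametrizations, the scaling (time translation in $s$), and the perturbation within the family $k\mapsto k+\dot k$. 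These correspond to trivial or neutral eigenmodes of the linearized self-similar operator $L$. I would then formulate the problem as $\partial_s\Psi = L\Psi$ on the compact self-similar interval $x=v/(-u)\in[-1,0]$, with regularity at the center $x=-1$ and the singular Hölder behaviour at $x=0$ encoded by the localized space $\mathcal{C}_{N}^{1/(1-k^{2}),\delta}$.

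The core step is a spectral analysis of $L$. We must show that, after modding out the gauge and $\dot k$ modes, the spectrum on the relevant space lies in $\{\mathrm{Re}\,\lambda<-\epsilon\}$. The semigroup should then decay like $e^{-\epsilon s}$, and translating back, the perturbation is $O((-u)^{\epsilon})$ relative to the background near $\mathcal{O}$. To do this, treat $k$ as small and compare with the $k=0$ limit. There the background degenerates to Minkowski with $\phi=0$, and $L$ reduces to the self-similar wave operator, whose modes are explicit hypergeometric-type functions. Eigenfunctions with $\mathrm{Re}\,\lambda\ge 0$ are then either non-smooth at the center or incompatible with the allowed behaviour $|x|^{1/(1-k^{2})}$ at $x=0$. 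A perturbative argument in $k$, with resolvent bounds uniform on a compact set of $\lambda$ and a separate large-$|\lambda|$ energy estimate, excludes unstable eigenvalues for small $k\neq0$.

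The main obstacle is the blue-shift at $x=0$, i.e.\ on $\underline{C}_{o}^{-}$. There the transport equation for $\partial_v(r\dot\phi)$ has a coefficient producing growth of order $|x|^{-k^2/(1-k^2)}$ times a factor depending on $\lambda$. Rough data excite exactly such growing modes, which is the known instability mechanism. I would handle this with weighted energy and transport estimates using the weights $(-v)^{j-1/(1-k^{2})}$ from the definition of the space. The key point is to show that the exponent $1/(1-k^2)$ together with the Hölder gain $\delta<1-O(k)$ makes the boundary term at $x=0$ coercive rather than destabilizing. This is the single place where the regularity threshold enters, and I expect the sharp commutator bookkeeping for $N\ge5$ derivatives to be the hardest part.

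Finally, with the decay of the semigroup in hand, reconstruct $\dot r$ and $\dot\Omega$ from the constraints. One then checks that the linearized solution remains singular at $\mathcal{O}$ (no apparent horizon forms at the linear level, since the mass ratio perturbation decays) and converges to a member of the $k$-family modulo gauge, which gives Theorem~\ref{thm: linearized result at the threshold}.
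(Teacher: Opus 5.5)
Your spectral step has a genuine gap: a perturbation argument off $k=0$ cannot decide stability here.

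First, near the singular horizon the operator is not a small perturbation of the $k=0$ one, so resolvent bounds at $k=0$ do not transfer:
\begin{itemize}
\item $\partial_v\phi_k$ is of size $1/k$ there ($\mathring{\phi}'(0)=C/k$ in Proposition~\ref{prop: estimate on the exact k self similar spacetime}) and only $C^{0,p_kk^2}$.
\item The natural mode basis degenerates as $k\to0$: the ingoing $r$-type solution has a $1/k$ component.
\item Both the self-similar variable $z=v/(-u)^{1-k^2}$ and the threshold exponent $1/(1-k^2)\to1$ move with $k$. Your $x=v/(-u)$ is self-similar only in Christodoulou's coordinates, where the double-null gauge degenerates exactly at $x=0$.
\end{itemize}

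Second, and decisively, the self-similar rate $\lambda=0$ ($\sigma=-1$ in the paper) is a double eigenvalue at $k=0$. Both the $r_p$- and the $\Psi_p$-equation reduce to $zf''+(1+\sigma)f'=0$, which at $\sigma=-1$ has the Dirichlet, smooth solution $1+z$. For $k\neq0$ these two modes separate:
\begin{itemize}
\item One becomes the scaling mode $(r_k,\Psi_k,m_k)$. It has threshold regularity, so it drops out of the above-threshold class.
\item The other becomes the translation mode $(0,r_k,0)$ generated by $\phi\mapsto\phi+c$. This mode is missing from your list of neutral modes, although it is exactly what produces $c_\infty$. The $\dot k$ direction you do list is not part of the paper's conclusion, which has no such component.
\end{itemize}
Whether an extra eigenvalue survives near, say, $\sigma=-q_k$, i.e.\ $\Re\lambda\approx k^2>0$, is the geometry--scalar-field coupling instability of Section~\ref{sec: difficulty for the scattering}. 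A soft perturbation argument around a double eigenvalue cannot rule it out.

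The paper supplies exactly the missing structure. It keeps both transport equations for $m_p$, Fourier--Laplace transforms in $s$, and solves the $\partial_s$-transport equation algebraically for $\hat m_p$. There $\partial_z\phi_k$ enters only through $z\partial_z\phi_k=O(k|z|^{1-\epsilon}/\epsilon)$. Since outgoing solutions have bounded derivatives, first-order errors of that size still give a convergent Volterra construction. The price is fake modes in the right half-plane, which the paper avoids by placing the inversion contour at $\Re\sigma=-\eta$, using the non-sharp energy decay of Theorem~\ref{thm:a priori energy estimate}.

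It then computes the leading coefficients at the horizon explicitly:
\begin{itemize}
\item $d_1,d_2$ contain the factor $\sigma+1$, so only the $\Psi$-type outgoing solution needs renormalizing.
\item The Wronskian at the center is then $1+\sigma/q_k+O(k)$, and Rouch\'e gives a single resonance.
\item The translation mode pins that resonance at $\sigma=-1$; its residue is $c_\infty r_k$, and the deformed contour gives $e^{-\beta_*'s}$ with $\beta_*'>1$.
\end{itemize}

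Finally, weighted energy estimates at $x=0$ are not where the threshold is won. As Remark~\ref{rmk: observation why we could not get sharp decay from energy} explains, no energy argument beats the self-similar rate, because regularity does not detect the translation mode. The paper uses energy estimates only for a priori decay and to propagate axis decay outward. The threshold is instead used in two places:
\begin{itemize}
\item Writing $\phi_p^0=c_0\mathring{\phi}+\phi_1$ (Lemma~\ref{lemma: decomposition of the function spaces}) and subtracting the scaling solution $c_0(r_k,\Psi_k,m_k)$.
\item Integrability of the resolvent kernel against the data, $(-z)^{\sigma/q_k+\beta-1}$, which lets the contour cross $\sigma=-1$ exactly when $\beta>p_k$.
\end{itemize}
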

\paragraph{Difference between the linear wave equation on $(\mathcal{M}_{k},g_{k})$ and the linearized Einstein-scalar field equations}
Although stability has been established for the linear wave equation on $(\mathcal{M}_{k},g_{k})$ under initial perturbations in $\mathcal{C}_{N}^{1/(1-k^{2}),\delta}$~\cite{singh2}, this is insufficient to justify expectations of nonlinear stability. The argument in~\cite{singh2} crucially relies on the regularity of the background geometry $g_{k}$, which is one derivative more regular than the background scalar field $\phi_{k}$. However, the Einstein-scalar field equations linearized around $(\mathcal{M}_{k},g_{k},\phi_{k})$ capture both the contribution of the background scalar field $\phi_{k}$ and its coupling with the geometry $g_{k}$. The presence of these effects introduces substantial analytical difficulties; see Section~\ref{sec: previous works on the interior perturbations}, Section~\ref{sec: review on the linear wave}, and Section~\ref{sec: analysis on the linearized operator} for further discussions.

\paragraph{Nonlinear stability result in the companion paper~\cite{zhenglinear}}In the companion paper \cite{zhenglinear}, we use the linearized result established here to further prove the following nonlinear stability of $(\mathcal{M}_{k},g_{k},\phi_{k})$ under small perturbations of the same regularity as $\phi_{k}|_{C_{out}}$, \textbf{thereby falsifying the Weak Cosmic Censorship conjecture for initial data of this regularity} and showing the importance of the regularity class of the initial data in the formulation of the Weak Cosmic Censorship conjecture.
\begin{theorem}[\cite{zhenglinear}]
  Fix a non-zero value of $k$ sufficiently small. There exists a small open neighborhood $\mathcal{U}$ of $\phi_{k}|_{C_{out}\cap\mathcal{M}^{(in)}_{k}}$ in the topology of $\mathcal{C}_{N}^{1/(1-k^{2}),\delta}$ with $\delta\in(0,1-O(k))$ and $N\geq 6$, such that for any initial data of $\phi$ in $\mathcal{U}$, the maximal globally hyperbolic development of this initial data for the Einstein-scalar field equations \eqref{eq: ESf 1}-\eqref{eq: ESf 2} terminates at a naked singularity and has an incomplete future null infinity. Moreover, the spacetime asymptotes to $(\mathcal{M}_{k},g_{k},\phi_{k})$ when approaching the naked singularity.
\end{theorem}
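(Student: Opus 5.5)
The statement just above is the nonlinear result of the companion paper, so here I only sketch how I would derive it from the linearized stability of Theorem~\ref{thm: rough version of the linearized result}. The approach is a perturbative fixed-point argument in self-similar coordinates.

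\textbf{Step 1: reformulation.} I would introduce the scaling coordinate $s=-\log(-u)$ together with a self-similar spatial variable $y$, so that $(\mathcal{M}_k,g_k,\phi_k)$ becomes a stationary solution. Working in a double-null gauge, I would write $\phi=\phi_k+\psi$ and $r=r_k+\rho$, with the lapse perturbation determined by the constraint (Raychaudhuri) equations. The spherically symmetric Einstein-scalar system then takes the form
\[
\partial_s \Psi = L\Psi + \mathcal{N}(\Psi),
\]
where $L$ is the linearized operator studied in this paper and $\mathcal{N}$ is at least quadratic in $\Psi$.

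\textbf{Step 2: linear decay estimate.} From Theorem~\ref{thm: linearized result at the threshold} I would extract a quantitative statement: the linearized evolution is bounded in the weighted space adapted to $\mathcal{C}_{N}^{1/(1-k^{2}),\delta}$, and it decays like $e^{-\epsilon s}$ modulo finitely many unstable or neutral modes. The neutral modes come from the symmetries, namely the scaling and translation of the singular point along the axis.

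\textbf{Step 3: modulation.} To handle those modes, I would modulate the location of the singularity and the self-similarity parameter, fixing the finitely many parameters by a topological or Brouwer-type argument or by orthogonality conditions. Since perturbations are supported near the past cone, I expect the instability directions to be gauge directions only.

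\textbf{Step 4: bootstrap.} I would close a bootstrap for $\|\Psi(s)\|_{X}\lesssim \varepsilon e^{-\epsilon s}$ using Duhamel's formula. A derivative is lost when passing to $N\geq6$ in the nonlinear estimates, which is why the order required here exceeds the $N\ge5$ needed in the linear theorem.

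\textbf{Step 5: global structure.} Finally I would show that the resulting development asymptotes to the background near the singularity. Gluing this with the exterior stability results \cite{singh2,singhphdtheis} then gives a naked singularity with incomplete null infinity.

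\textbf{Main obstacle.} The hardest step is the nonlinear estimate at the threshold regularity. The products in $\mathcal{N}$ involve $\partial_v^2$-type terms that blow up like $|v|^{1/(1-k^2)-2}$. One must check that the localized Hölder space is an algebra compatible with the blue-shift weights, so that $\mathcal{N}(\Psi)$ remains in the space with a gain of $e^{-\epsilon s}$. I would try to first prove product estimates in $\mathcal{C}_{N}^{1/(1-k^{2}),\delta}$ using the structure $|v|^{j-1/(1-k^2)}\partial_v^j$, and then use the smallness of $k$ to absorb the borderline terms.
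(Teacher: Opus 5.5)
Your skeleton (treat the nonlinearity as a source for the linearized system in self-similar coordinates, close by a fixed point, then glue to the exterior results of \cite{singh2,singhphdtheis}) matches the strategy the paper indicates. However, Steps 2--3 rest on a wrong picture of the linear theory, and no parameter needs to be fixed by a Brouwer or orthogonality argument.

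The singular point does not move. The data are characteristic on $\{u=-1\}$ and the center is gauge-fixed to $\{z=-1\}$, so the singularity is the tip $(u,v)=(0,0)$ of the domain of dependence. Bounds for all $s\geq0$, $z\in[-1,0]$ are exactly the statement that the development reaches it.

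Nor is $k$ free. For $k'\neq k$ with $|p_{k'}-p_{k}|<\delta$, the function $|z|^{2-p_{k}}\partial_{z}^{2}\mathring{\phi}_{k'}\sim|z|^{p_{k'}-p_{k}}$ is not $C^{0,\delta}$ at $z=0$, so $\phi_{k'}-\phi_{k}\notin\mathcal{C}^{p_{k},\delta}_{N}$.

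Above all, no mode grows. Theorem~\ref{thm:a priori energy estimate} places the inversion contour at $\Re\sigma=-\eta$. In $\mathbb{I}_{[-1-\eta,-\eta]}$ the only resonance is $\sigma=-1$; the Rouch\'e count is made possible by the factor $\sigma+1$ in $d_{1},d_{2}$. This resonance comes from $\phi\mapsto\phi+c$, i.e.\ $(r_{p},\Psi_{p},m_{p})=(0,cr_{k},0)$, and its coefficient $c_{\infty}$ is simply the residue. Since the equations involve only $\partial\phi$, you should run the iteration on translation-invariant quantities ($\partial\phi_{p}$, or $\Psi_{p}$ modulo $\mathbb{R}r_{k}$) rather than try to fix it.

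The other neutral direction is the constant rescaling $(r,\Psi,m)\mapsto(ar,a\Psi,am)$, with kernel $(r_{k},\Psi_{k},m_{k})$. At threshold only $r_{p}-c_{0}r_{k}$ and $m_{p}-c_{0}m_{k}$ beat the self-similar rate. So for $c_{0}\neq0$ a bootstrap $\Vert\Psi(s)\Vert_{X}\lesssim\varepsilon e^{-\epsilon s}$ in self-similar normalization fails until this component is removed. The constant $c_{0}$ is read off directly from the data via Lemma~\ref{lemma: decomposition of the function spaces} (Step 5 of the proof in Section~\ref{sec: decay in the interior region}).

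Second, the linear input is not a semigroup bound on a fixed space $X$, so ``Duhamel plus bootstrap'' does not close as written. For a source, the paper provides Proposition~\ref{prop: near axis decay} combined with Proposition~\ref{prop: key to close the linearized result}. This requires the source to have the form $(\partial_{z}I_{1},\partial_{z}I_{2},I_{3},0)$: there is no source in the $\partial_{z}$-constraint for $m_{p}$, and $I_{3}$ is determined by $(I_{1},I_{2})$ rather than free (Section~\ref{sec: freedom of the inhomogeneous terms}).

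It also loses derivatives. The resolvent bound \eqref{eq: resolvent estimate} grows like $|\sigma|^{j-1}$, and each $\partial_{s}$ costs another $|\sigma|$. So the contour integral needs $|\sigma|^{-3}$ decay of $(F_{1},F_{2})$, i.e.\ three $\partial_{s}$-derivatives of the source with $e^{-\frac{3}{2}s}$ weights, plus $\mathcal{C}_{N}$-regularity in $z$. In return you get only $\partial_{s}^{i}\partial_{z}^{j}$ with $i+j\leq2$ near the axis, and first derivatives globally. That is far more than the single derivative you budget for in going from $N\geq5$ to $N\geq6$.

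To put $\mathcal{N}$ back into the admissible source class, you need a separate mechanism propagating the top-order norms. One option is commuting with $\partial_{s}$ (the linearized coefficients are $s$-independent, which is why $\Psi$ was introduced), together with weighted multiplier estimates as in Section~\ref{sec:multiplier_estimates} for the $z$-derivatives. This is the real obstacle. The algebra property you single out is routine: for $i,j-i\geq2$ one has $|z|^{j-\alpha}f^{(i)}g^{(j-i)}=|z|^{i-\alpha}f^{(i)}\cdot|z|^{j-i-\alpha}g^{(j-i)}\cdot|z|^{\alpha}$ with $|z|^{\alpha}$ Lipschitz, and the remaining Leibniz terms are immediate.
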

\begin{figure}[htbp]
\centering


\begin{tikzpicture}[scale=1.5]

\coordinate (n1) at (0,0);
\coordinate (n2) at (45:6);
\draw[thick] (n1) -- (n2);

\coordinate (n3) at ($(n2) + (135:2) $);
\coordinate (b1) at ($(n3) + (-135:2) $);
\node[circle, inner sep = 0, minimum size = .1cm, label = {[xshift=1mm, yshift=-.5mm]above:{$i^+$}}, draw] (node1) at (n3) {};
\draw[dashed] (n2) -- (node1);

\node[label = left:{$\{r=0\}$}] at ($(0,1.5) + (.1,0)$) {};

\coordinate (n8) at ($(node1) + (-135:4)$) ;
\draw[dashed] (node1) --+ (-135:4);

\node[circle, draw, inner sep = 0, minimum size = .1cm, label = left:{$\mathcal{O}$}] (Onode) at (n8) {};

\node[label = left:{``naked singularity"}] (Bnode) at ($(n8)+(0,.2) + (.3,0)$) {};

\draw[thick] (n1) -- (Onode);
\draw[dashed] (Onode) -- ($(Onode) + (-45:2)$);
\coordinate (Hend) at ($(Onode)+(-45:2)$);
\draw[dashed] (Onode) -- (Hend);

\coordinate (Hmid) at ($(Onode)!0.75!(Hend)$);

\draw[->]
($(Hmid)+(45:2.2)$)
-- node[midway, above=2pt, sloped] {singular horizon $\underline{C}_{o}^{-}$}
(Hmid);
\coordinate (n7) at (45:4) ;
\node[label = {[rotate = 45]below:{initial data hypersurface $C_{out}$}}] at (45:3.4) {};

\coordinate (n95) at ($(0,1.9) + (45:4.67) $ );
\coordinate (n10) at ($(n95) + (-45:.4) $);
\coordinate (n11) at ($(n95) + (-135:.4) $);
\fill[blue!20] ($(0,1.9) + (45:4.67) $ ) -- (n10) -- (n11);
\node[circle, blue, fill, draw, inner sep = 0, minimum size = .1cm, label = {[yshift = 1mm, xshift = -.7mm]right:{distant observer}}] (obs) at (n95) {};

\draw[->, blue] (obs) -- (n10) ;
\draw[->, blue] (obs) -- (n11) ;

\coordinate (n12) at ($(n95) + (0,-.28) $);
\coordinate (n13) at ($(n12) + (-135:.5) $ );

\draw[->] (n13) to[out=10, in = -110] (n12);

\node[label = {[yshift = 2mm]below:{causal past}}] at (n13) {};

\coordinate (b2) at ($(n2) + (-135:.7)  $ ) ;
\coordinate (b3) at ($(n3) + (-135:.7)  $ ) ;
\coordinate (b4) at ($(Onode) + (-45:1)  $ ) ;
\node[label = {[rotate = 45, yshift=-1mm]above:{ $ r\rightarrow \infty $ }}] at (b2) {};
\node[label = {[rotate = 45,yshift=-1mm]above:{ $ r\rightarrow \infty $ }}] at (b3) {};
\coordinate (p1) at (60:3);

\node at (65:3.8) {exterior};
\coordinate (p2) at (70:1.5);
\node[label = above:{interior}] at (p2){};
\end{tikzpicture}

\caption{Penrose diagram for naked singularity spacetimes}
\label{fig:penrose}
\end{figure}

For the rest of the introduction, we will briefly recall some key properties of the naked singularity spacetime $(\mathcal{M}_{k},g_{k},\phi_{k})$ in Section \ref{sec: review on the k-self-similar spacetime}, review the previous works on the (in)stability of the interior of $(\mathcal{M}_{k},g_{k},\phi_{k})$ in Section~\ref{sec: previous works on the interior perturbations}, and discuss the transition mechanisms for low-regularity instability and high-regularity stability in Section \ref{sec: blue-shift mechanism}. We refer the interested reader to the introduction of our companion paper \cite{zhenglinear} for a more detailed review of the previous results and some related works.

\subsection{Global double-null gauge and the $k$-self-similarity}
\label{sec: review on the k-self-similar spacetime}
Although the original construction of $(\mathcal{M}_{k},g_{k},\phi_{k})$ by Christodoulou \cite{chris94} was carried out in the so-called Bondi coordinates, in this paper, we will work with spherically symmetric global double-null coordinates $(u,v,\theta,\varphi)$ \begin{equation}
    g =-\frac{1}{2}\Omega^{2}(u,v)du\otimes dv-\frac{1}{2}\Omega^{2}(u,v)dv\otimes du+r^{2}(u,v)d\sigma^{2},\label{eq: double null gauge intro}
\end{equation}
where $\Omega^{2}$ is the time-lapse function, $r$ is the area radius of the orbit of the $SO(3)$ action, and $d\sigma^{2}$ is the standard sphere metric on $S^{2}$. Then the study of the spherically symmetric Einstein-scalar field equations reduces to the study of $(r,\Omega^{2},\phi)$ with suitable initial conditions. In \cite{chris94}, Christodoulou further assumed that the solution is $k$-self-similar, i.e., there exists a conformal Killing vector field such that \begin{equation*}
    \mathcal{L}_{K}g = 2g,\quad \mathcal{L}_{K}\phi = -k,
\end{equation*} 
where $\mathcal{L}$ is the standard Lie derivative. Up to the gauge freedom of the $(u,v)$ coordinates, one natural choice of $K$ seems to be $K = u\partial_{u}+v\partial_{v}$. However, such a gauge choice fails to make $(u,v,\theta,\varphi)$ a global double-null coordinate system; see the discussion in Section~\ref{sec: pre on k naked singularity sapcetime} for details. Therefore, we assume that $K = u\partial_{u}+(1-k^{2})v\partial_{v}$. Then the naked singularity $\mathcal{O}$ in $(\mathcal{M}_{k},g_{k},\phi_{k})$ corresponds to the vanishing point $(u,v) = (0,0)$ of $K$. The past light cone $\underline{C}_{o}^{-}$ emanating from $\mathcal{O}$ corresponds to $\{v= 0\}$. We assume the initial hypersurface corresponds to $\{u = -1\}$. Then the spherically symmetric $k$-self-similar solution $(r_{k},\Omega_{k}^{2},\phi_{k})$ to the Einstein-scalar field equations takes the form \begin{equation*}
r_{k}(u,v) = (-u)\mr{r}(z),\quad \Omega_{k}^{2}(u,v)= \mr{\Omega}^{2}(z),\quad \phi_{k}(u,v) = \mr{\phi}(z)-k\log(-u),
\end{equation*}
where $z = \frac{v}{(-u)^{1-k^{2}}}$ is called the self-similar coordinate. We can further fix the gauge choice of the center to be $z = -1$. Then on the initial hypersurface $\{u = -1\}$, the triple $(r_{k},\Omega_{k}^{2},\phi_{k})$ has the following regularity properties
\begin{equation}
\label{eq: regularity of the background intro}
\begin{aligned}
    &r_{k}|_{\{u = -1\}}\in C^{\infty}(C_{out}\backslash\{v = 0\})\cap C^{2,\frac{k^{2}}{1-k^{2}}}(C_{out}),\\&  \Omega_{k}^{2}|_{\{u = -1\}}\in C^{\infty}(C_{out}\backslash\{v = 0\})\cap C^{1,\frac{k^{2}}{1-k^{2}}}(C_{out}),\\&  \phi_{k}|_{\{u = -1\}}\in C^{\infty}(C_{out}\backslash\{v = 0\})\cap C^{1,\frac{k^{2}}{1-k^{2}}}(C_{out}).
\end{aligned}
\end{equation}
One should note that the exact $k$-self-similar solution to the Einstein-scalar field equations is not asymptotically flat. In \cite{chris94}, Christodoulou truncated the exact $k$-self-similar spacetime in the far-away region $r\gg1$ to make it asymptotically flat. Nonetheless, we still call the asymptotically flat naked singularity spacetimes in \cite{chris94} the $k$-self-similar naked singularity spacetimes.

\subsection{Previous works on the interior perturbations}
\label{sec: previous works on the interior perturbations}
The first instability result for the $k$-self-similar naked singularity spacetime $(\mathcal{M}_{k},g_{k},\phi_{k})$ was proved in the pioneering work of Christodoulou \cite{chris99}. Subsequently, the stability and instability of the $k$-self-similar naked singularities under perturbations supported in the exterior region have been extensively studied~\cite{liuli,liuli_outsidesymm,an_highcodim,singh1,singh2}. More precisely, the works \cite{liuli,singh2} proved the instability of $(\mathcal{M}_{k},g_{k},\phi_{k})$ to trapped surface formation under exterior spherical perturbations of regularity $C^{1,\alpha}$ with $\alpha<k^{2}/(1-k^{2})$; the work \cite{singh2} established stability under exterior spherical perturbations of regularity $C^{1,\alpha}$ with $\alpha\geq k^{2}/(1-k^{2})$; and \cite{liuli_outsidesymm,an_highcodim} proved instability under exterior perturbations that are non-spherically symmetric and rougher than the background spacetime. See Section 1.3 of the companion paper \cite{zhenglinear} for a more detailed review of these results. 

Moreover, the mechanism for those low-regularity instability results~\cite{chris99,liuli,liuli_outsidesymm,an_highcodim,singh2} is called the \textit{blue-shift instability mechanism}, while the mechanism for those stability results~\cite{singh1,singh2} is called the \textit{high-regularity stabilizing effect}; see Section \ref{sec: blue-shift mechanism} for a more detailed discussion of these mechanisms. Motivated by the transition from instability to stability as one increases the regularity of the initial perturbations, we refer to the regularity of $\phi_{k}|_{C_{out}}$ as the \textit{threshold regularity}.

As mentioned above, to fully understand the stability aspect of the $k$-self-similar naked singularity, one must consider general perturbations rather than the initial perturbations supported in the exterior region considered in \cite{singh1,singh2}. Since the approach in \cite{singh1,singh2} for the exterior perturbations is robust enough to be generalized to any naked singularity spacetime that is asymptotically $k$-self-similar, the key difficulty in studying (in)stability under general perturbations lies in the study of the interior region under such perturbations.

The study of the (in)stability of the $k$-self-similar naked singularity spacetime was initiated by the first author in \cite{singh2}, where the linear wave equation \begin{equation}
    r_{k}\Box_{g_{k}}\phi=r_{k}\partial_{u}\partial_{v}\phi+\partial_{u}r_{k}\partial_{v}\phi+\partial_{v}r_{k}\partial_{u}\phi = 0\label{eq: linear wave equation lit review}
\end{equation}
on $(\mathcal{M}_{k},g_{k})$ is considered for $0<k^{2}\ll1$. At the level of the linear wave equation, stability is understood in terms of an asymptotic rate relative to the self-similar rate when approaching the singularity $\mathcal{O}$. More precisely, the scalar field $\phi_{k}$ in the $k$-self-similar naked singularity spacetime has the following bounds\footnote{To avoid the confusion from the gauge choice, we state the bound for the gauge invariant quantities $\frac{1}{\partial_{u}r_{k}}\partial_{u}\phi_{k}$ and $\frac{1}{\partial_{v}r_{k}}\partial_{v}\phi_{k}$.} \begin{equation}
    \left\vert\frac{1}{\partial_{u}r_{k}}\partial_{u}\phi_{k}\right\vert\lesssim \left\vert u\right\vert^{-1},\quad \left\vert\frac{1}{\partial_{v}r_{k}}\partial_{v}\phi_{k}\right\vert\lesssim \left\vert u\right\vert^{-1}.\label{eq: self-similar rate intro}
\end{equation}
We refer to Section \ref{sec: pre on k naked singularity sapcetime} for a detailed introduction to the $k$-self-similar spacetime.

The solutions to the linear wave equation $\Box_{g_{k}}\phi = 0$ are said to be stable if the upper bounds of $\partial_{u}\phi/\partial_{u}r_{k}$ and $\partial_{v}\phi/\partial_{v}r_{k}$ are comparable to, or polynomially faster than the corresponding self-similar rates \eqref{eq: self-similar rate intro}, and unstable if they have polynomially slower rates than \eqref{eq: self-similar rate intro}. It is shown in \cite{singh2} that solutions arising from interior perturbations of regularity below the threshold exhibit instability, whereas perturbations with regularity at or above the threshold lead to stability. This instability aspect was subsequently established at the nonlinear level in \cite{li2025interior}, where Li proved that, for interior perturbations below the threshold, solutions to the Einstein-scalar field equations contain a trapped surface.

However, the stability result for the wave equation on the $k$-self-similar naked singularity interior does not suffice to justify the expectation of linearized stability for the following two reasons:
\begin{itemize}
    \item As seen from \eqref{eq: regularity of the background intro} and \eqref{eq: linear wave equation lit review}, the spherically symmetric linear wave equation depends only on the background area radius $r_{k}$, which is $C^{2,k^{2}/(1-k^{2})}$ and hence one derivative more regular than the background scalar field $\phi_{k}$. Since the analysis in~\cite{singh2} crucially relies on the $C^{1,k^{2}/(1-k^{2})}$ regularity of $\partial r_{k}$, the linear wave equation avoids the lowest-regularity terms coming from the background scalar field. In contrast, the linearized Einstein--scalar field equations contain the contribution of $\phi_k$, making the stability problem more delicate. More precisely, if one writes the spherically symmetric Einstein--scalar field equations under the double-null gauge~\eqref{eq: double null gauge intro} as three wave equations for $r$, $\Omega^{2}$, and $\phi$, then the wave equation for $\Omega^{2}$ involves the quadratic term $\partial\phi\cdot\partial\phi$. Upon linearization around the $k$-self-similar spacetime, this produces a term of regularity $C^{0,k^{2}/(1-k^{2})}$ in the wave equation for $\Omega^{2}$, which falls outside the framework developed for the linear wave equation in~\cite{singh2}.

    \item As we will review in detail in Section~\ref{sec: review on the linear wave}, the approach in~\cite{singh2} to proving stability for the linear wave equation is to take the Fourier--Laplace transform of the corresponding wave equation and analyze the distribution of the scattering resonances. A key step in establishing stability is to rule out exponentially growing modes, except for those generated by the symmetries of the original equations. However, for the linearized Einstein--scalar field equations, the coupling between the background geometry and the background scalar field may produce additional nontrivial exponentially growing modes. The presence of such modes would lead to a further instability mechanism; see Section~\ref{sec: difficulty for the scattering} for a detailed discussion.
\end{itemize}

\subsection{Blue-shift instability mechanism, high-regularity stabilizing effect, and model problems for (in)stability}
\label{sec: blue-shift mechanism}
As we have mentioned above, the $k$-self-similar naked singularity spacetime is nonlinearly unstable under perturbations below the threshold \cite{chris99,liuli,liuli_outsidesymm,an_highcodim,li2025interior} and is nonlinearly stable under exterior perturbations above the threshold \cite{singh1,singh2}. The instability is driven by the so-called blue-shift instability mechanism, whereas a stabilizing effect emerges for initial perturbations of high regularity. In this section, we illustrate these mechanisms and describe a transition between the low-regularity instability and high-regularity stability regimes by considering the behavior of linear waves on the exterior region and the interior region of $k$-self-similar naked singularity spacetimes, respectively. These two model problems provide insight into the complex relationship among the support and regularity of the initial perturbation, the regularity of the background spacetime, and the leading order asymptotics of the solution.

Let $\varphi(u,v)$ denote a spherically symmetric solution to the linear wave equation 
\begin{equation}
    \label{intro:eq:5}
   r_{k} \Box_{g_k}\varphi =r_{k}\partial_{u}\partial_{v}\varphi+\partial_{u}r_{k}\partial_{v}\varphi+\partial_{v}r_{k}\partial_{u}\varphi =  0.
\end{equation}

By analogy with the bounds satisfied by the $k$-self-similar scalar field \eqref{eq: self-similar rate intro}, we make the following definition:
\begin{definition}
    A sufficiently regular, spherically symmetric, solution $\varphi(u,v)$ to (\ref{intro:eq:5}) is said to satisfy \textbf{$C^1$-self-similar bounds} if the following pointwise bounds hold:
    \begin{equation*}
        \left|\frac{1}{\partial_ur_k}\partial_u\varphi(u,v)\right| \lesssim |u|^{-1},\quad 
          \left|\frac{1}{(\partial_vr_k)}\partial_v\varphi(u,v)\right| \lesssim |u|^{-1}.
    \end{equation*}
\end{definition}
\begin{remark}
    More generally, one can consider self-similar bounds for higher-order derivatives and for non-spherically symmetric solutions. Given the finite Hölder regularity of the background solution, appropriate modifications must be made in the definition for higher-order derivatives.
\end{remark}

In the case of a single linear wave equation, the effect of the perturbation is reflected in the solution to the wave equation under the perturbed initial data. Solutions $\varphi$ which fail to satisfy $C^1$-self-similar bounds are indicative of a linear instability for the background spacetime, whereas those satisfying \textit{better-than-self-similar} bounds are indicative of linear asymptotic stability. Hence, we will be interested in the conditions under which such bounds occur.

\subsubsection{Problem 1: linear wave equation in the exterior region}
The study of the exterior perturbation, in the setting of a single linear wave equation, is reduced to the following characteristic problem:
\begin{equation}
\label{intro:eq:6}
    \begin{cases}
        \Box_{g_k}\varphi = 0, \\
        \partial_u\varphi(u,0) = 0, \\
        \partial_v\varphi(-1,v) = g_0(v),
    \end{cases}
\end{equation}
where $g_0(v): [0,1] \rightarrow \mathbb{R}$ admits a Hölder expansion
\begin{equation}
\label{intro:eq:6.5}
    g_0(v) = c_0 + c_1 v^{\alpha} + O(v), \quad v\in[0,1]
\end{equation}
for $\alpha \in (0,1)$, and constants $c_0, c_1$. Define $\theta \doteq \frac{r_k}{\partial_v r_k}\partial_v\varphi$, for which the wave equation \eqref{intro:eq:5} reduces to the following transport equation in the ingoing direction:
\begin{equation}
\label{intro:eq:7}
    \partial_u\theta - \frac{\mu_k (-\partial_u r_k)}{(1-\mu_k)r_k} \theta = -\partial_u \varphi.
\end{equation}
\paragraph{Blue-shift instability mechanism for BV data}
In this section, we consider the case $c_{0}\neq 0$, i.e., $\partial_{v}\varphi$ has a jump discontinuity on $\{v = 0\}$. On the past light cone of the singularity $\{v = 0\}$, the above equation \eqref{intro:eq:7} will be reduced to \begin{equation}
    \partial_{u}\theta(u,0) = \frac{k^{2}}{(-u)}\theta(u,0).\label{eq: intro: blue-shift on the ingoing cone}
\end{equation}
Hence, directly integrating the equation \eqref{eq: intro: blue-shift on the ingoing cone}, we can derive \begin{equation}
    \theta(u,0) = \theta(-1,0)(-u)^{-k^{2}} = c_{0}(-u)^{-k^{2}},
\end{equation}
which means that for $\varphi$ satisfying the self-similar bound on the past light cone of the singularity, we must have $c_{0} = 0$.
If $c_{0} \neq 0$, then the lower bound $|\theta| \gtrsim |u|^{-k^2}$ translates to
\begin{equation}
\label{intro:eq:9}
    \big|\frac{1}{\partial_v r_k} \partial_v\varphi\big| \gtrsim |u|^{-1-k^2}.
\end{equation}
We interpret this result as indicating that solutions are unstable for generic choices of outgoing data $g_0(v)$ in function spaces where $c_0$ can vary. In particular, since $g_{0}(v)$ vanishes identically in the interior region on the initial hypersurface, $\partial_{v}\varphi$ has a jump discontinuity at $(u,v) = (-1,0)$. Consequently, the initial data for $\partial_{v}\varphi$ belongs to the function class of bounded variation. This instability is due to the geometry intrinsic to the ingoing null-cone, and is termed the \textbf{blue-shift instability}. We refer to the particular instability rate (\ref{intro:eq:9}) as the \textbf{blue-shift rate}.
\paragraph{Higher-regularity stabilizing effect}
The natural follow-up question is whether $c_{0} = 0$ is \textit{sufficient} for self-similar bounds to hold. Here, the Hölder regularity of $g_0(v)$, particularly the value of $\alpha$, is decisive. We only present heuristics here; a rigorous analysis is given in \cite{singh2}.

It is convenient to expand the metric quantities and the solution in a formal series in powers of the similarity coordinate $z = v (-u)^{-1+k^2}$ in a neighborhood of the singular horizon where $0 \leq z \ll 1$. For the metric quantities, this follows from the \textit{regularity} of the $k$-self-similar spacetimes. In particular, 
\begin{equation*}
    \frac{\mu_k (-\partial_u r_k)}{(1-\mu_k)r_k} = \frac{k^2(1+O(z))}{(-u)} 
\end{equation*}
holds in a fixed spacetime region $\{z \ll 1\}$. For $\partial_u\varphi$, we use the wave equation to write 
\begin{equation*}
    \partial_u\varphi = O(z|\partial_u\varphi| + z (-u)^{-1}|\theta|).
\end{equation*}
Inserting these expansions into (\ref{intro:eq:7}) yields
\begin{equation}
    \partial_u \theta -\frac{k^2}{(-u)}\theta =  O(z|\partial_u\varphi| + z (-u)^{-1}|\theta|).\label{eq: intro: smoothing effect equation}
\end{equation}
Integrating from initial data and using the assumption $c_{0}=0$ gives
\begin{align*}
    \theta(u,v) &= (-u)^{-k^2}g_0(v)+ (-u)^{-k^2}\int_{-1}^{u}z(-u')^{-1+k^2}O(|(-u')\partial_u\varphi| + |\theta|)du' \\
    &= (-u)^{-k^2}\big(c_1v^\alpha + O(v)\big)+ (-u)^{-k^2}\int_{-1}^{u}v (-u')^{-2+2k^2}O(|(-u')\partial_u\varphi| + |\theta|)du'\\
    &= c_1 (-u)^{-k^2}v^\alpha + O((-u)^{-k^2}v+z \|(-u)\partial_u\varphi \|_{L^\infty} + z \|\theta\|_{L^\infty}).
\end{align*}
For simplicity, we have treated the error terms under the assumption that $\partial_{u}\varphi$ satisfies the self-similar bound. The conclusion, then, is that when $0<z \ll 1$, one expects the leading order term in the asymptotic for $\theta$ to be 
\begin{equation*}
    \theta(u,v) \sim c_1 (-u)^{-k^2}v^\alpha + O(1) = c_1 z^\alpha (-u)^{\alpha(1-k^2)-k^2}+O(1).
\end{equation*}
We find the following:
\begin{itemize}
    \item If $\alpha \in (0, \frac{k^2}{1-k^2})$, the observed asymptotic behavior in $u$ is inconsistent with the self-similar bound. The instability rate is strictly between the blue-shift and self-similar rates, approaching the former as $\alpha \rightarrow 0$.
    \item If $\alpha \geq \frac{k^2}{1-k^2}$, the asymptotics are consistent with self-similar bounds. In fact, the argument suggests that $\alpha > \frac{k^2}{1-k^2}$ implies \textit{better than self-similar} bounds are obtained. 
\end{itemize}
This reveals that the blue-shift rate, determined by the geometry along the singular horizon, more generally gives an upper bound for the instability rates of Hölder continuous data in small, self-similar, exterior neighborhoods of the singular horizon. One can understand the above formal computation as a competition between a high-regularity stabilizing effect and the low-regularity blue-shift instability mechanism. As one increases the Hölder regularity of the initial exterior perturbation, the stabilizing effect becomes stronger. The Hölder regularity $C^{1,\alpha_{k}}$
 for $\alpha_{k} = \frac{k^{2}}{1-k^{2}}$ is moreover seen to hold a special role as a \textbf{threshold regularity} separating the \textbf{below-threshold} instability range $\alpha \in (0, \alpha_k)$ from the \textbf{above-threshold} stability range $\alpha \in (\alpha_k,1).$

 Finally, we close the discussion in this section of the exterior perturbations with the following three remarks.
 \begin{remark}
 Although we only consider the characteristic problem \eqref{intro:eq:6} with the trivial initial data on the ingoing cone, as shown in \cite{singh1}, one can also consider the non-trivial ingoing data $\partial_{u}\varphi = \frac{f_{0}(u)}{(-u)}$ with some additional decay assumption on $f_{0}(u)$. The heuristics and their conclusions provided in this section will follow line by line.
 \end{remark}
 \begin{remark}
The underlying mechanism for the exterior perturbations can be understood as a local ODE-type mechanism. The nature of these perturbations provides a priori control of certain quantities in both null directions, allowing both the blue-shift instability mechanism and high-regularity stabilizing effects to be analyzed locally near the past light cone emanating from the singularity.

 \end{remark}
\begin{remark}
The analysis on the wave equation \eqref{intro:eq:6} suggests that the solution to the linear wave equation is only orbitally stable for perturbations at the threshold. This aspect is further exploited in the nonlinear analysis in \cite{singh2} to show the orbital stability at the nonlinear level for exterior perturbations at the threshold. However, using the fact that Einstein-scalar field equations are scaling invariant, one can in fact upgrade the orbital exterior stability at the threshold to asymptotic stability; see Section 3 in the companion paper \cite{zhenglinear}.
\end{remark}

\subsubsection{Problem 2: linear wave equation in the interior region}
\label{eq: linear wave equation in the interior region}
The heuristics for exterior perturbations rely essentially on a priori control in the ingoing direction. On the one hand, the above approach is robust in the sense that, once such a priori control is available, it can be applied to understand (in)stability of naked singularity solutions to Einstein equations coupled with other matter models in the exterior region. On the other hand, for interior perturbations—where no a priori estimate in the ingoing direction is available—the above ODE-type approach fails to provide any insight into the behavior of the interior perturbations. In this section, we provide a way to understand the blue-shift mechanism and the high-regularity stabilizing effect for the interior perturbation.

We consider solutions in the interior region of \textit{mode-type}
\begin{equation}
    \label{intro:eq:mode-type equation}
    (r_k\varphi)(u,v) = (-u)^{\rho} f(z),
\end{equation}
where $z = v/(-u)^{1-k^{2}}\in[-1,0]$ is the self-similar coordinate. If $\Re(\rho)>1$, the mode solution satisfies a bound better than the self-similar bound; if $1-k^{2}<\Re(\rho)<1$, the mode solution satisfies a bound between the self-similar rate and the blue-shift rate; if $\Re(\rho)\leq1-k^{2}$, the mode solution satisfies a bound worse than the self-similar rate. As discussed in the previous section, any rate worse than the self-similar rate is expected to correspond to the instability.

Then, for mode-type solutions, the wave equation \eqref{intro:eq:5} reduces to an ODE for the profile $f(z)$
\begin{equation}
\label{intro:eq:ode profile}
    (1-k^2)z f''(z)+(1-k^2-\rho)f'(z) + V_k(z) f(z) = 0.
\end{equation}

Here, $V_k(z)$ is a non-negative, Hölder continuous potential determined by the background metric. Since $r_{k} = 0$ at the center, the solution $f(z)$ should satisfy the \textit{Dirichlet} boundary condition, i.e., $f(-1) = 0$.

A computation gives that the Frobenius indices for the ODE \eqref{intro:eq:ode profile} are $0, \frac{\rho}{1-k^2}$, and hence in the region $\{\text{Re}(\rho) < \frac{3}{2}\} \setminus \{\frac{\rho}{1-k^2}\in \mathbb{Z}\},$ there exists a local basis of solutions $\{f_{reg}(z), f_{irreg}(z)\}$ for which $f_{reg}(z) \in C^2$ and $f_{irreg}(z) \sim c_0 + c_1 (-z)^{\frac{\rho}{1-k^2}} + \text{l.o.t}  \notin C^2.$ 

Whether or not $f_{irreg}(z)$ lies in a given regularity class depends on $\rho$, leading to the following trichotomy:
\begin{itemize}
    \item For $\text{Re}(\rho) \in (1-k^2,1)$, we have that $f_{irreg}(z) \in C^{1,\frac{\Re(\rho)-(1-k^{2})}{1-k^{2}}} \setminus C^{1,\frac{k^2}{1-k^2}}$, and hence the regularity of $f_{irreg}$ is \textit{strictly worse} than the threshold. It follows that if the admissible class of data is restricted to a Hölder space $C^{1,\alpha}$ with $\alpha < \frac{k^2}{1-k^2}$, any solution $f$ to \eqref{intro:eq:ode profile} satisfying the Dirichlet boundary condition and with $\rho = (1-k^{2})(\alpha+1)\in(1-k^{2},1)$ will lie within the given Hölder regularity. In this case, the mode solution with $\rho = (1-k^{2})(\alpha+1)$ will satisfy the regularity constraint of the initial data and has a lower bound worse than the self-similar rate. Hence, the instability is expected. This instability can be interpreted as the failure of mode stability for perturbations below the threshold. Recently, this instability has been rigorously achieved nonlinearly in the work \cite{li2025interior} for any perturbation strictly below the threshold.
    
    \item For $\Re(\rho)=1,$ $f_{irreg}(z) \notin C^{1,\alpha}$ for any $\alpha>\frac{k^{2}}{1-k^{2}}$. In particular, when $\rho =1$, then $f_{irreg}\in C^{1,\frac{k^{2}}{1-k^{2}}}$; when $\Im(\rho)\neq0$, then $f_{irreg}\notin C^{1,\frac{k^{2}}{1-k^{2}}}$ yet it still possesses Hölder regularity better than any $C^{1,\alpha}$ with $\alpha\in(0,\frac{k^{2}}{1-k^{2}})$. Consequently, this scenario is expected to exhibit a complicated phenomenon.
    \item For $\text{Re}(\rho) > 1$, $f_{irreg}(z)$ has regularity \textit{strictly better} than that of the background solution $\phi$. If the admissible class of initial data is contained in a Hölder space $C^{1,\alpha}$ with $\alpha > \frac{k^2}{1-k^2}$, then mode instability is the question of whether there exists $\text{Re}(\rho) < 1$ satisfying $W[f_{dir}, f_{reg}] = 0$, where $f_{dir}(z)$ is a solution to \eqref{intro:eq:ode profile} satisfying Dirichlet conditions $f(-1) = 0$, and $W[f,g]$ denotes the Wronskian.
\end{itemize}

Although this mode perspective immediately gives instability at the level of the linear wave equation for regularities below threshold, it leaves open the case of regularities above threshold, which is treated in~\cite{singh2}. However, it clarifies the essential difference between the two cases: whereas dynamics in the former case are determined by a continuum of modes defined by the \textit{local} behavior of solutions to (\ref{intro:eq:7}) near $\{z=0\}$, in the latter the dynamics are determined by a discrete set of modes reflecting the \textit{global} behavior of solutions to (\ref{intro:eq:7}). Mode stability in the high-regularity setting is therefore more subtle, necessitating the use of some global properties of the potential $V_k(z)$.
\subsubsection{Difference between the linear wave equation and the linearized Einstein--scalar field equations}

Lastly, as pointed out in Section~\ref{sec: previous works on the interior perturbations}, the stability mechanism for the linear wave equation in the interior of the $k$-self-similar naked singularity, discussed in Section~\ref{eq: linear wave equation in the interior region}, is not sufficient to fully capture the stability mechanism for the linearized Einstein--scalar field equations, for the following two reasons.

{The first issue concerns the regularity of the background scalar field. To illustrate this point, consider mode-type solutions to \eqref{intro:eq:ode profile}. The rigorous construction of the corresponding outgoing solutions relies on the regularity of $V_{k}$, which depends only on the background function $r_{k}$ and belongs to $C^{1,k^{2}/(1-k^{2})}$. However, if one formulates the linearized Einstein--scalar field equations as wave equations for $r$, $\phi$, and $\Omega^{2}$, then the mode-type solutions to the corresponding ODE system involve potential terms containing derivatives of $\phi_{k}$, which have only $C^{0,k^{2}/(1-k^{2})}$ regularity. This loss of regularity obstructs the rigorous construction of outgoing solutions; see Section~\ref{sec: difficulty for the scattering} for details.}

The second issue concerns the coupling between the background geometry and the background scalar field. To rigorously implement the mechanism described in Section~\ref{eq: linear wave equation in the interior region}, the approach in~\cite{singh2} is to take the Fourier--Laplace transform of the linear wave equation for $\psi = r_{k}\varphi$ and to analyze the scattering resolvent and resonances of the resulting ODE. For the linear wave equation, there is only one scattering resonance, corresponding to the translation invariance of the Einstein--scalar field equations. By contrast, due to the coupling between the background geometry and the background scalar field, the analogous ODE system arising from the linearized Einstein--scalar field equations may admit additional scattering resonances. Such resonances would generate an additional instability mechanism and hence fall outside the framework of~\cite{singh2}; see Section~\ref{sec: difficulty for the scattering} for details.

\subsection{Overview of the paper}
\label{intro:subs:overviewofpaper}
In Section \ref{prelims:sec}, we recall some important properties of the $k$-self-similar naked singularities. In Section \ref{sec: main result and proof outline}, we introduce our main results and the proof outline. The analysis of the linearized Einstein-scalar field equations will take the majority of this paper. Section \ref{sec:multiplier_estimates} establishes the energy estimate for the linearized system and obtains a non-sharp decay result. Section \ref{sec: scattering theory for the linearized system} establishes the scattering theory for the linearized Einstein-scalar field operator. Sections \ref{leading order expansion}--\ref{sec: decay in the interior region} conclude the decay estimate for the linearized system.

\subsection{Acknowledgments}
\label{intro:subsec:acknowledgments}
Zheng would like to thank his advisor, Maxime Van de Moortel, for his kind support, patience, encouragement, and many enlightening discussions. Zheng also gratefully acknowledges the support
from the NSF Grant DMS-2247376. The authors would like to thank Igor Rodnianski for proposing this project to us and for many useful discussions. The authors would also like to thank Yakov Shlapentokh-Rothman, Mihalis Dafermos, and Jonathan Luk for their interest in this work and useful conversations.

\section{Preliminary}
\label{prelims:sec}

\subsection{Einstein-scalar field equations under spherical symmetry}
The Einstein-scalar field equations are a coupled system for metric $g$ and the scalar field $\phi$, taking the form of
\begin{align}
Ric(g)_{\mu\nu}&= 2\partial_{\mu}\phi\partial_{\nu}\phi,\label{ric equation}\\
\Box_{g}\phi&=0.\label{scalar field equation}
\end{align}
The solution to the Einstein-scalar field equations is a Lorentzian manifold $\mathcal{M}$ equipped with a Lorentzian metric $g$ and the associated scalar field $\phi:\mathcal{M}\rightarrow \mathbb{R}$. If the solution $(\mathcal{M},g,\phi)$ is invariant under the $SO(3)$ action, then we call the solution spherically symmetric. Let $\mathcal{Q}$ be the quotient manifold $\mathcal{M}/SO(3)$ and $p:\mathcal{M}\rightarrow\mathcal{Q}$ be the projection map. Then $p$ can naturally induce a quotient Lorentzian metric $g_{\mathcal{Q}}$ on $\mathcal{Q}$. The set of fixed points under the $SO(3)$ action is called the center or the axis of the spacetime and is denoted as $\Gamma$.

Assume that there exists a global double-null system $(u,v)$ on $\mathcal{Q}$, under which the quotient metric $g_{Q}$ takes the form of \begin{equation*}
    g_{\mathcal{Q}} = -\frac{1}{2}\Omega^{2}(u,v)du\otimes dv-\frac{1}{2}\Omega^{2}(u,v)dv\otimes du,
\end{equation*}
where $\Omega^{2}$ is the lapse function. For any $q\in\mathcal{Q}$, we can define the associated area radius function by \begin{equation*}
    r(q): = \sqrt{\frac{Area(p^{-1}(q))}{4\pi}}.
\end{equation*}
In particular, by the definition of the center $\Gamma$, we have $r|_{\Gamma} = 0$.

Then, we can express the metric $g$ on $\mathcal{M}$ by
\begin{equation*}
g = -\frac{1}{2}\Omega^{2}du\otimes dv-\frac{1}{2}\Omega^{2}dv\otimes du+r^{2}(u,v)d\sigma^{2},
\end{equation*}
where $d\sigma^{2} = d\theta^{2}+\sin^{2}(\theta)d\varphi^{2}$ is the standard metric on $S^{2}$.

Under the spherically symmetric assumption, the Einstein-scalar field equations \eqref{ric equation}-\eqref{scalar field equation} can be reduced to\begin{align}
\partial_{u}\left(\frac{r_{u}}{\Omega^{2}}\right) &= -\frac{r}{\Omega^{2}}(\partial_{u}\phi)^{2},\label{eq:u-Ray equation}\\
\partial_{v}\left(\frac{r_{v}}{\Omega^{2}}\right)& = -\frac{r}{\Omega^{2}}(\partial_{v}\phi)^{2},\label{eq:v-Ray equation}\\
r\partial_{u}\partial_{v}r+\partial_{u}r\partial_{v}r &= -\frac{1}{4}\Omega^{2},\label{eq:wave equation for r}\\
\partial_{u}\partial_{v}\log(\Omega^{2})& = \frac{\Omega^{2}}{2r^{2}}+\frac{2r_{u}r_{v}}{r^{2}}-2\partial_{u}\phi\partial_{v}\phi,\label{eq:wave equation for omega}\\
r\partial_{u}\partial_{v}\phi +r_{u}\partial_{v}\phi+r_{v}\partial_{u}\phi &= 0\label{eq:wave equation for phi}.  
\end{align} 
The transport equations \eqref{eq:u-Ray equation}-\eqref{eq:v-Ray equation} are constraint equations and are often called the Raychaudhuri equations. The equations \eqref{eq:wave equation for r}-\eqref{eq:wave equation for phi} are the wave equations for $r$, $\Omega^{2}$, and $\phi$, respectively. One should note that the above equations are overdetermined. In fact, one can deduce the wave equation for $\Omega$ \eqref{eq:wave equation for omega} from the wave equations \eqref{eq:wave equation for r} and \eqref{eq:wave equation for phi}, and constraint equations \eqref{eq:u-Ray equation}--\eqref{eq:v-Ray equation}. We will not use the wave equation for $\Omega^{2}$ in the later study.

Under the global double-null coordinates, instead of solving the Einstein-scalar field equations for $(\Omega^{2},r,\phi)$, it is often more convenient to solve for $(r,\phi,m)$, where $m$ is the so-called Hawking mass defined as \begin{align}
m: &= \frac{r}{2}\left(1-g(\nabla r,\nabla r)\right).
\end{align}
The mass ratio $\mu$ is defined as \begin{equation*}
    \mu := \frac{2m}{r}.
\end{equation*}
For the Hawking mass $m$ and the mass ratio $\mu$, the Raychaudhuri equations \eqref{eq:u-Ray equation}-\eqref{eq:v-Ray equation} can be written as the following transport equations for $m$ and $\mu$:
\begin{align}
\partial_{u}m +\frac{r}{r_{u}}\left(\partial_{u}\phi\right)^{2}m &= \frac{1}{2}\frac{r^{2}}{r_{u}}\left(\partial_{u}\phi\right)^{2},\label{eq:u-equ for m}\\
\partial_{v}m+\frac{r}{r_{v}}\left(\partial_{v}\phi\right)^{2}m &= \frac{1}{2}\frac{r^{2}}{r_{v}}\left(\partial_{v}\phi\right)^{2},\label{eq:v-equ for m}\\
\partial_{u}\mu+\left(\frac{r_{u}}{r}+\frac{r}{r_{u}}\left(\partial_{u}\phi\right)^{2}\right)\mu&=\frac{r}{r_{u}}\left(\partial_{u}\phi\right)^{2},\label{eq:u-equ for mu}\\
\partial_{v}\mu+\left(\frac{r_{v}}{r}+\frac{r}{r_{v}}\left(\partial_{v}\phi\right)^{2}\right)\mu&=\frac{r}{r_{v}}\left(\partial_{v}\phi\right)^{2}.\label{eq:v-equ for mu}
\end{align}
\subsection{Christodoulou's naked singularity spacetime}
\label{sec: pre on k naked singularity sapcetime}
The coordinate system in the original construction of the $k$-self-similar spacetime in \cite{chris94} was under the so-called Bondi coordinates. In this section, we give a quick review of these spacetimes under the global double-null coordinates. For a more detailed review, we refer to Section 2.2 in the companion paper~\cite{zhenglinear} or Appendix B of~\cite{singh1}.

First, we recall the scaling and translation symmetries of the Einstein-scalar field equations. Under the spherical double-null coordinates $(\hat{u},\hat{v},\theta,\varphi)$\begin{equation*}
    g = -\frac{1}{2}\Omega^{2}d\hat{u}\otimes d\hat{v}-\frac{1}{2}\Omega^{2}d\hat{v}\otimes d\hat{u}+r^{2}d\sigma^{2},
\end{equation*}
if $(r,\Omega^{2},\phi)$ is a solution, then $(r_{a},\Omega_{a}^{2},\phi_{a,b})$
\begin{equation*}
    r_{a}(\hat{u},\hat{v}): = ar\left(\frac{\hat{u}}{a},\frac{\hat{v}}{a}\right),\quad \Omega_{a}^{2}(u,v): = \Omega^{2}\left(\frac{\hat{u}}{a},\frac{\hat{v}}{a}\right),\quad \phi_{a,b}(\hat{u},\hat{v}): = \phi\left(\frac{\hat{u}}{a},\frac{\hat{v}}{a}\right)+b
\end{equation*}
is also a solution. Taking $b = -k\log a$ and assuming the solution spacetime $(\mathcal{M},g,\phi)$ is invariant under scaling and translation \begin{equation*}
    r(\hat{u},\hat{v}) = ar\left(\frac{\hat{u}}{a},\frac{\hat{v}}{a}\right),\quad\Omega^{2}(\hat{u},\hat{v}) = \Omega^{2}\left(\frac{\hat{u}}{a},\frac{\hat{v}}{a}\right),\quad \phi(\hat{u},\hat{v}) = \phi\left(\frac{\hat{u}}{a},\frac{\hat{v}}{a}\right)+k\log a,
\end{equation*}
then the resulting solution $\left(r,\Omega^{2},\phi\right)$ to \eqref{ric equation}-\eqref{scalar field equation}, known as the $k$-self-similar solution, takes the form of \begin{equation}
    r\left(\hat{u},\hat{v}\right) = \left(-\hat{u}\right)\mr{r}\left(\hat{z}\right),\quad \Omega^{2}\left(\hat{u},\hat{v}\right) = \mr{\Omega}^{2}\left(\hat{z}\right),\quad \phi(\hat{u},\hat{v}) = \mr{\phi}\left(\hat{z}\right)-k\log(-\hat{u}),\label{eq: k self similar form of the equations}
\end{equation}
where $\hat{z}$ is called the self-similar coordinate, defined as \begin{equation*}
    \hat{z}:= -\frac{\hat{v}}{\hat{u}}.
\end{equation*}
Christodoulou's original construction of the $k$-self-similar naked singularity took advantage of the $k$-self-similarity \eqref{eq: k self similar form of the equations}, under which the Einstein-scalar field equations can be reduced to a system of ODEs. Immediately from the form \eqref{eq: k self similar form of the equations}, one can show that the curvature blows up at $(\hat{u},\hat{v})$. Therefore, the point $(0,0)$ corresponds to the naked singularity $\mathcal{O}$. We can further fix the gauge choice of the center $\Gamma$ to be $\Gamma = \{\hat{z} = -1\}$. Then the interior region $\mathcal{Q}^{(in)}$ of the naked singularity spacetime will be $\{-1\leq \hat{z}\leq 0\}$, while the exterior region $\mathcal{Q}^{(ex)}$ becomes $\{\hat{z}\geq 0\}$.

However, the ODE analysis in the original paper by Christodoulou~\cite{chris94} was carried out in Bondi coordinates. For double-null coordinates $(\hat{u},\hat{v})$, the analogous ODE analysis will show that under the $k$-self-similar assumption~\eqref{eq: k self similar form of the equations}, $(\hat{u},\hat{v})$ will no longer be a global double-null coordinate system. More precisely, one can show that $\mr\Omega^{2}(\hat{z})$, $\frac{d\mr{\phi}(\hat{z})}{d\hat{z}}$, and $\frac{d\mr{r}}{d\hat{z}}(\hat{z})$ will blow up at the rate $\vert \hat{z}\vert^{-k^{2}}$ as $\hat{z}\rightarrow 0$. Therefore, to work under a global double-null gauge and extend spacetime beyond the cone $\hat{z} = 0$, we introduce the following renormalized double-null coordinates \begin{equation*}
    u = \hat{u},\quad (-v) = (-\hat{v})^{q_{k}},\quad (-z) = (-\hat{z})^{q_{k}},
\end{equation*}
where $q_{k} := 1-k^{2}$, which will be frequently used later. Let $p_{k}: = \frac{1}{q_{k}}$. The constant-$u$ hypersurface is called the outgoing cone and is denoted as $\Sigma_{const}$, while the constant-$v$ hypersurface is called the ingoing cone and is denoted as $\underline{\Sigma}_{const}$. The ingoing cone emanating from the naked singularity will play an important role in the later analysis. We refer to this cone as \textit{singular horizon}. Although Christodoulou's original construction of those naked singularity spacetimes does not proceed via solving an initial value problem for the Einstein-scalar field equations,  one can nevertheless view the $k$-self-similar naked singularity spacetime as arising from some initial data prescribed on the outgoing null cone $\Sigma_{-1}$.

We collect the quantitative properties of the interior of the $k$-self-similar naked singularity spacetime in \cite{chris94} under the double-null coordinates $(u,v)$ in the following proposition:
\begin{proposition}[\cite{chris94,singh1,zhenglinear}]
\label{prop: estimate on the exact k self similar spacetime}
    Assume that the spacetime solution $(\mathcal{M}_{k},g_{k},\phi_{k})$ is a $k$-self-similar solution to the spherically symmetric Einstein-scalar field equation \eqref{ric equation}--\eqref{scalar field equation}. Then $(r_{k},\Omega^{2}_{k},\phi_{k},m_{k},\mu_{k})$ takes the form of \begin{equation*}
        r_{k} = (-u)\mr{r}(z),\quad \Omega^{2} = (-u)^{k^{2}}\mr{\Omega}^{2}(z),\quad \phi_{k} = \mr{\phi}(z)-k\log(-u),\quad m_{k} = (-u)\mr{m}(z),\quad \mu_{k} = \mr{\mu}(z).
    \end{equation*}
    Let $\nu_{k} = \partial_{u}r_{k}$ and $\lambda_{k} = \partial_{v}r_{k}$. Then we have \begin{equation*}
        \nu_{k} = \mr{\nu}(z) = -\mr{r}(z)+q_{k}z\frac{d\mr{r}}{dz},\quad \lambda_{k} = (-u)^{k^{2}}\mr{\lambda}(z) = (-u)^{k^{2}}\frac{d\mr{r}}{dz}.
    \end{equation*}
    Moreover, we can fix the gauge freedom of $(r_{k},\phi_{k},m_{k})$ to be $r(u,0) = (-u)$ and $\phi_{k}(u,0) = -k\log(-u)$. In the interior region $\mathcal{Q}^{(in)}$ of the $k$-self-similar naked singularity spacetime, we have the following properties on the regularity and quantitative estimates of $(r_{k},\Omega_{k}^{2},\phi_{k},m_{k},\mu_{k})$: \begin{itemize}
        \item[(1)] The spacetime $\left(\mathcal{M}_{k},g_{k},\phi_{k}\right)$ is smooth in the region $$\{-1\leq u<0,v<0\}.$$ In other words, the $k$-self-similar spacetime is smooth away from the singular horizon.
    \item[(2)] The low regularity nature of the spacetime only arises at the singular horizon $\{v = 0\}$. In the interior region $\{-1\leq u<0,\ -1\leq z\leq 0\}$, we have \begin{equation*}
        \mr{m},\ \mr{\Omega},\ \mr{\phi}\in C^{1,p_{k}k^{2}}\left([-1,0]\right),\quad \mr{r}\in C^{2,p_{k}k^{2}}([-1,0]),
    \end{equation*}
    with the estimates \begin{equation*}
        \left\vert\vert z\vert^{j-1-p_{k}k^{2}}\partial_{z}^{j+1}\mr{r}\right\vert+\left\vert\vert z\vert^{j-1-p_{k}k^{2}}\partial_{z}^{j}\mr{\Omega}\right\vert+\left\vert\vert z\vert^{j-1-p_{k}k^{2}}\partial_{z}^{j}\mr{m}\right\vert+\left\vert\vert z\vert^{j-1-p_{k}k^{2}}\partial_{z}^{j}\mr{\phi}\right\vert\lesssim_{k}1,\quad j\geq 2.
    \end{equation*}
    Moreover, the finite regularity for the scalar field $\mr{\phi}$ is sharp in the sense that \begin{align*}
        \frac{d\mr{\phi}}{dz}\sim\frac{1}{k},\quad 
        \frac{d^{2}\mr{\phi}}{dz^{2}}\sim \vert z\vert^{-1+p_{k}k^{2}},\quad z\rightarrow 0.
    \end{align*}
    \item[(3)] In the region $\mathcal{Q}^{(in)}$, we have the following estimates on the geometric quantities: \begin{align*}
        &\mr{r}(-1) = 0,\quad  \mr{r}\approx 1-\left\vert z\right\vert,\quad (-\nu_{k})\approx 1.\quad \lambda_{k}\approx (-u)^{k^{2}},\\&
        \Omega_{k}^{2}\approx (-u)^{k^{2}},\quad 0<\frac{\mr{m}}{\mr{r}^{3}}\lesssim k^{2},\quad 0<\frac{\mr{\mu}}{\mr{r}^{2}}\lesssim k^{2}.
    \end{align*}
    \item[(4)] In the region $\mathcal{Q}^{(in)}$, we have the following estimates on the quantities related to the scalar field: \begin{align*}
    \mr{\phi}^{\prime}(z = 0) = \frac{C}{k}, \quad \left\Vert\mr{\phi}^{\prime}\right\Vert_{L^{\infty}([-1,-\frac{1}{2}])}\lesssim k^{2},\quad 
    \left\Vert\vert z\vert^{\epsilon}\mr{\phi}^{\prime}\right\Vert_{L^{\infty}([-1,0])}\lesssim \frac{k}{\epsilon},\quad k\ll\epsilon<1.
    \end{align*}
    \item[(4)] As a consequence of Property $(4)$ and our gauge choice of $\phi_{k}$, we have \begin{equation*}
        \left\vert\mr{\phi}\right\vert\lesssim \frac{k}{\epsilon}\left\vert z\right\vert^{1-\epsilon}.
    \end{equation*}
    \item[(5)] For the higher order derivatives, we have \begin{align*}
    &\vert z\vert^{1+\epsilon}\left\vert \mr{\phi}^{\prime\prime}\right\vert\lesssim \frac{k^{2}}{\epsilon},\quad \vert z\vert^{\epsilon}\left\vert\frac{d^{2}\mr{r}}{dz^{2}}\right\vert+\vert z\vert^{1+\epsilon}\left\vert\frac{d^{3}\mr{r}}{dz^{3}} \right\vert\lesssim \frac{k^{2}}{\epsilon^{2}},\\&
    \left\vert z\right\vert^{\epsilon}\left\vert\frac{\partial_{z}\mr{\mu}}{(1+z)}\right\vert+\vert z\vert^{1+\epsilon}\left\vert\partial_{z}^{2}\mr{\mu}\right\vert\lesssim \frac{k^{2}}{\epsilon^{2}},\quad \vert z\vert^{\epsilon}\left\vert\frac{\partial_{z}\mr{m}}{(z+1)^{2}}\right\vert+\vert z\vert^{1+\epsilon}\left\vert\frac{\partial_{z}^{2}\mr{m}}{z+1}\right\vert\lesssim \frac{k^{2}}{\epsilon^{2}}.
    \end{align*}
    \end{itemize}
\end{proposition}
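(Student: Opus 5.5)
The plan is to reduce everything to Christodoulou's self-similar ODE system, rewritten in the renormalized coordinate $z=v/(-u)^{q_k}$, and then read off the regularity and the quantitative bounds from the local analysis at the two singular points $z=-1$ (the center) and $z=0$ (the singular horizon).

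First I will insert the ansatz $r_k=(-u)\mr r(z)$, $\Omega_k^2=(-u)^{k^2}\mr\Omega^2(z)$, $\phi_k=\mr\phi(z)-k\log(-u)$ into \eqref{eq:u-Ray equation}--\eqref{eq:wave equation for phi}. The factor $(-u)^{k^2}$ in $\Omega^2$ is forced by the change of variables $(-v)=(-\hat v)^{q_k}$, since $d\hat v = p_k(-v)^{p_k-1}dv$. Using $\partial_u z = q_k z/(-u)$ and $\partial_v z=(-u)^{-q_k}$, the chain rule then gives directly $\nu_k=-\mr r+q_kz\mr r'$ and $\lambda_k=(-u)^{k^2}\mr r'$. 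The same substitution shows that $m_k=(-u)\mr m$ and $\mu_k=\mr\mu(z)$. The gauge normalizations $r(u,0)=-u$ and $\phi_k(u,0)=-k\log(-u)$ amount to choosing $\mr r(0)=1$ and $\mr\phi(0)=0$; this uses the scaling of $v$ and the additive constant in $\phi$.

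Next I will transfer Christodoulou's Bondi-coordinate solution to double-null coordinates. In the $\hat z$ variable, $\mr\Omega^2$, $\mr r'$ and $\mr\phi'$ blow up like $|\hat z|^{-k^2}$. Since $dz/d\hat z\sim|\hat z|^{-k^2}$, this blow-up is exactly absorbed, so the $z$-derivatives are bounded; this gives items (1) and (3).

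Regularity at $z=0$, item (2), comes from a Frobenius analysis of the ODE for $\mr\phi$. The scalar wave equation reduces to
\[
q_k z\,\mr\phi''+(\text{lower order})\,\mr\phi'=\text{smooth},
\]
with indicial exponents $0$ and $-1+p_k$. The exponent $-1+p_k$ is the one that creates the $|z|^{p_kk^2}$ term in $\mr\phi'$. The solution is therefore $C^1$ plus a term $c|z|^{1+p_kk^2}$, and $c\neq0$ because a nonzero coefficient is what makes the spacetime genuinely singular. Differentiating the expansion yields the weighted estimates $|z|^{j-1-p_kk^2}\partial_z^j\mr\phi\lesssim1$.

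The metric components then inherit one more degree of smoothness. For $\mr r$ this is because the wave equation \eqref{eq:wave equation for r} has a right-hand side $\Omega^2$ that is only $C^{1,\alpha}$, so $\mr r$ is $C^{2,\alpha}$. For $\mr m$ and $\mr\Omega$ it comes from \eqref{eq:u-equ for m} and \eqref{eq:v-equ for m}, whose source terms are quadratic in $\partial\phi$.

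The main obstacle is the set of small-$k$ quantitative bounds in items (4)--(5), namely $\mr\phi'(0)\sim C/k$ and $\|\mr\phi'\|_{L^\infty([-1,-1/2])}\lesssim k^2$. For these I will try a perturbative argument around Minkowski as $k\to0$, in which $\mr\mu\lesssim k^2\mr r^2$ holds throughout. Integrating the equation for $\theta=\mr r\mr\phi'$ from the center, where $\mr\phi'$ is bounded by regularity, gives $\mr\phi'=O(k^2)$ away from $z=0$. Near $z=0$ the equation takes the form
\[
q_kz\,\theta'-k^2\theta=O(k)+O(z\theta).
\]
Solving it explicitly produces $\theta\sim \frac{k}{k^2}(1-|z|^{p_kk^2})$-type behavior. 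This gives the $1/k$ size at $z=0$, and it also gives $|z|^\epsilon|\mr\phi'|\lesssim k/\epsilon$, using $(1-|z|^{a})/a\lesssim |z|^{-\epsilon}/\epsilon$ when $a\ll\epsilon$.

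The remaining bounds then follow. Integrating from $z=0$ with $\mr\phi(0)=0$ gives $|\mr\phi|\lesssim\frac{k}{\epsilon}|z|^{1-\epsilon}$. Differentiating the ODE once more gives the bounds on second and third derivatives in item (5), at the cost of one extra factor of $\epsilon^{-1}$ per integration against $|z|^{-1+a}$.
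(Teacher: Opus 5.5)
The paper gives no proof of this proposition; it imports it from \cite{chris94,singh1,zhenglinear}. Reconstructing it from the self-similar ODEs is therefore a legitimate alternative route. The structural parts of your proposal are fine: the formulas for $\nu_k,\lambda_k$, the normalization $\mr r(0)=1$, $\mr\phi(0)=0$, the absorption of the $|\hat z|^{-k^2}$ blow-up by $dz/d\hat z$, and the exponent $p_kk^2$ that generates the H\"older term in $\mr\phi'$.

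The step that fails is your claim that integrating from the center gives $\mr\phi'=O(k^2)$ on $[-1,-\frac12]$. Substituting the ansatz into \eqref{eq:wave equation for phi} gives exactly
\[
q_kz\,\mr r\,\mr\phi''-k^2\,\mr r\,\mr\phi'+2q_kz\,\mr r'\,\mr\phi'+k\,\mr r'=0,
\]
and the source term $k\mr r'$ has size $k$, not $k^2$. Now expand at the axis $z=-1$, where $\mr r=0$, $\mr r'\neq0$, and $\mr\phi$ is smooth by item (1).
\begin{itemize}
\item At zeroth order you get $\mr r'(-1)\bigl(k-2q_k\mr\phi'(-1)\bigr)=0$, so $\mr\phi'(-1)=k/(2q_k)$. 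This is just the axis condition $\lambda_k\partial_u\phi_k+\nu_k\partial_v\phi_k=0$, and it reflects the $-k\log(-u)$ growth of $\phi_k$ along $\Gamma$.
\item At first order you get $\mr\phi''(-1)=k(2q_k-k^2)/(6q_k^2)$.
\end{itemize}
Hence $\|\mr\phi'\|_{L^\infty([-1,-\frac12])}\gtrsim k$, and $|z|^{1+\epsilon}|\mr\phi''|\gtrsim k$ near the axis. So the bound $\lesssim k^2$ in (4) and the bound $\lesssim k^2/\epsilon$ in (5) cannot be reached by your argument. In fact both are false as written (the second one whenever $k\ll\epsilon$). What your method actually yields, and what is true, is the same pair of statements with $k$ in place of $k^2$. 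This is also how the paper uses $\partial_z\mr\phi$ later: as an $\mathcal{F}_k$ coefficient in \eqref{eq:z-transport eq for mrho}.

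Two smaller points.

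First, your reason for the nonvanishing of the H\"older coefficient (``a nonzero coefficient is what makes the spacetime genuinely singular'') is not valid. The curvature blow-up at the vertex comes from self-similarity alone, whatever that coefficient is. The coefficient is better read off from the exact form of the equation. Multiplying by $\mr r$ gives $q_kz(\mr r^2\mr\phi')'-k^2\mr r^2\mr\phi'=-\frac k2(\mr r^2)'$. With $a=p_kk^2$ and $\mr r^2\mr\phi'=0$ at $z=-1$, this gives
\[
\mr r^2\mr\phi'(z)=\frac{k}{2q_k}(-z)^{a}\int_{-1}^{z}(-y)^{-a-1}(\mr r^2)'(y)\,dy .
\]
From this formula the following all follow at once, with $\lambda_0=\mr r'(0)$:
\begin{itemize}
\item $\mr\phi'(0)=\lambda_0/k$;
\item $|z|^{\epsilon}|\mr\phi'|\lesssim k/\epsilon$;
\item $\mr\phi'=O(k)$ on $[-1,-\frac12]$;
\item the coefficient of $(-z)^{a}$ in $\mr\phi'$ is $-\lambda_0/k+O(k)\neq0$.
\end{itemize}
This formula also replaces your $\theta$-equation, whose $\mr r'/\mr r$ coefficient is singular at the axis.

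Second, the $k$-dependent bounds in (3), such as $\mr m/\mr r^{3}\lesssim k^2$, do not follow from the change of variables $\hat z\mapsto z$. They need Christodoulou's small-$k$ control of the solution. Your perturbative step assumes this, through $\mr\mu\lesssim k^2\mr r^2$, rather than deriving it.
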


\subsection{Self-similar coordinates}
We use the same self-similar coordinates $(s,z)$ as in~\cite{zhenglinear}\begin{equation}
    s: = -\log(-u),\quad z = \frac{v}{(-u)^{q_{k}}}.
\end{equation}
In these coordinates, the center $\Gamma$ in the quotient spacetime $\mathcal{Q}$ corresponds to the curve $\{z = -1\}$, the singular horizon $\mathcal{H}$ corresponds to $\{z = 0\}$, and the interior region $\mathcal{Q}^{(in)}$ corresponds to $\{-1\leq z\leq0\}$. Moreover, we have the following relations \begin{align}
    &\partial_{u} = e^{s}\partial_{s}+q_{k}ze^{s}\partial_{z},\quad \partial_{v} = e^{q_{k}s}\partial_{z},\\&
    \partial_{s} =(-u)\partial_{u}+q_{k}(-v)\partial_{v},\quad \partial_{z} = (-u)^{q_{k}}\partial_{v}.
\end{align}

\subsection{Deriving the linearized system for the interior region}
\label{sec: linearized system}
In this section, we derive the linearized system on the interior of the $k$-self-similar background $(r_{k},\phi_{k},m_{k})$. For a technical reason arising in the later analysis of the scattering theory---namely, the need for the coefficients in the linearized system under self-similar coordinates $(s,z)$ to be independent of $s$, we introduce the following renormalized quantity. Let \begin{equation}
\Psi = r\phi+k\log(-u)r.
\end{equation}
Then we can write down the equations for $(\Psi,r,m)$:\begin{align}
\partial_{u}\partial_{v}r-\frac{\mu}{1-\mu}\frac{\partial_{u}r\partial_{v}r}{r}& = 0,\label{eq:new formulation first}\\
\partial_{u}\partial_{v}\Psi+\frac{k}{(-u)}\partial_{v}r-\frac{\partial_{u}\partial_{v}r}{r}\Psi& = 0,\\
\partial_{u}m+\frac{r}{\partial_{u}r}(\partial_{u}\phi)^{2}m & = \frac{1}{2}\frac{r^{2}}{\partial_{u}r}(\partial_{u}\phi)^{2},\\
\partial_{v}m+\frac{r}{\partial_{v}r}(\partial_{v}\phi)^{2}m& = \frac{1}{2}\frac{r^{2}}{\partial_{v}r}(\partial_{v}\phi)^{2}.\label{eq: new formulatin last}
\end{align} 
We keep $\phi$ in the transport equations for $m$ under the new formulation $(r,\Psi,m)$ to reduce the algebraic complexity in the later study. We can schematically write the above Einstein-scalar field equations as $\mathcal{F}(r,\Psi,m) = 0$. Then the $k$-self-similar spacetime $(r_{k},\Psi_{k},m_{k})$ is the kernel of the nonlinear operator $\mathcal{F}$. Let \begin{equation*}
    r = r_{k}+r_{p},\quad \Psi = \Psi_{k}+\Psi_{p},\quad m = m_{k}+m_{p}.
\end{equation*}
Then we have \begin{equation*}
    0 = \mathcal{F}(r_{k}+r_{p},\Psi_{k}+\Psi_{p},m_{k}+m_{p})-\mathcal{F}(r_{k},\Psi_{k},m_{k}).
\end{equation*}
Taylor expanding the above expression and taking out the first-order terms of $(r_{p},\Psi_{p},m_{p})$, we can derive the following linearized operator.

\begin{align}
&
\partial_{u}\partial_{v}r_{p}-\frac{\mu_{k}}{1-\mu_{k}}\frac{\partial_{u}r_{k}}{r_{k}}\partial_{v}r_{p}-\frac{\mu_{k}}{1-\mu_{k}}\frac{\partial_{v}r_{k}}{r_{k}}\partial_{u}r_{p}+\frac{\mu_{k}(2-\mu_{k})}{(1-\mu_{k})^{2}}\frac{\partial_{u}r_{k}\partial_{v}r_{k}}{r_{k}^{2}}r_{p} = \frac{2\partial_{u}r_{k}\partial_{v}r_{k}}{(1-\mu_{k})^{2}r_{k}^{2}}m_{p},\label{eq:linearized r equation}\\[1em]
&\partial_{u}\partial_{v}\Psi_{p}-\frac{\mu_{k}}{1-\mu_{k}}\frac{\partial_{u}r_{k}\partial_{v}r_{k}}{r_{k}^{2}}\Psi_{p}-\left(\frac{1}{r_{k}^{2}}\frac{\mu_{k}}{1-\mu_{k}}\Psi_{k}\partial_{u}r_{k}+\frac{k}{(-u)}\right)\partial_{v}r_{p}-\frac{1}{r_{k}^{2}}\frac{\mu_{k}}{1-\mu_{k}}\Psi_{k}\partial_{v}r_{k}\partial_{u}r_{p} \nonumber\\&= -\frac{\mu_{k}(3-2\mu_{k})}{r_{k}^{3}(1-\mu_{k})^{2}}\Psi_{k} \partial_{u}r_{k}\partial_{v}r_{k}r_{p}+\frac{2}{r_{k}^{3}}\frac{\Psi_{k}\partial_{u}r_{k}\partial_{v}r_{k}}{(1-\mu_{k})^{2}}m_{p},\label{eq:linearized psi equation}\\[1em]
&\partial_{u}m_{p}+\frac{r_{k}}{\partial_{u}r_{k}}(\partial_{u}\phi_{k})^{2}m_{p}-(1-\mu_{k})\frac{r_{k}}{\partial_{u}r_{k}}\partial_{u}\phi_{k}\partial_{u}\Psi_{p}+(1-\mu_{k})\partial_{u}\phi_{k}\Psi_{p}\nonumber\\&=-
\left(\frac{1}{2}\left(\frac{r_{k}}{\partial_{u}r_{k}}\right)^{2}(\partial_{u}\phi_{k})^{2}+\frac{r_{k}}{\partial_{u}r_{k}}\partial_{u}\phi_{k}\left(\phi_{k}+k\log(-u)\right)\right)(1-\mu_{k})\partial_{u}r_{p}\nonumber\\&\hspace{.4cm}+\left(\frac{1}{2}\frac{r_{k}}{\partial_{u}r_{k}}(\partial_{u}\phi_{k})^{2}\mu_{k}+(1-\mu_{k})\partial_{u}\phi_{k}\left(\phi_{k}+k\log(-u)\right)+\frac{k}{(-u)}(1-\mu_{k})\frac{r_{k}}{\partial_{u}r_{k}}\partial_{u}\phi_{k}\right)r_{p},
\label{eq:linerized dum equation}\allowdisplaybreaks\\[1em]
&
\partial_{v}m_{p}+\frac{r_{k}}{\partial_{v}r_{k}}(\partial_{v}\phi_{k})^{2}m_{p}-(1-\mu_{k})\frac{r_{k}}{\partial_{v}r_{k}}\partial_{v}\phi_{k}\partial_{v}\Psi_{p}+(1-\mu_{k})\partial_{v}\phi_{k}\Psi_{p}\nonumber\\&=-
\left(\frac{1}{2}\left(\frac{r_{k}}{\partial_{v}r_{k}}\right)^{2}(\partial_{v}\phi_{k})^{2}+\frac{r_{k}}{\partial_{v}r_{k}}\partial_{v}\phi_{k}\left(\phi_{k}+k\log(-u)\right)\right)(1-\mu_{k})\partial_{v}r_{p}\nonumber\\&\hspace{.4cm}+\left(\frac{1}{2}\frac{r_{k}}{\partial_{v}r_{k}}(\partial_{v}\phi_{k})^{2}\mu_{k}+(1-\mu_{k})\partial_{v}\phi_{k}\left(\phi_{k}+k\log(-u)\right)\right)r_{p}.\label{eq:linearized dvm equation}
\end{align}
Since we will work with the self-similar coordinates later, we derive the corresponding equations.
\begin{align}
&
\partial_{s}\partial_{z}r_{p}+q_{k}z\partial_{z}^{2}r_{p}+\left(q_{k}-\frac{\mu_{k}}{1-\mu_{k}}\frac{\partial_{s}r_{k}+2q_{k}z\partial_{z}r_{k}}{r_{k}}\right)\partial_{z}r_{p}-\frac{\mu_{k}}{1-\mu_{k}}\frac{\partial_{z}r_{k}}{r_{k}}\partial_{s}r_{p}\nonumber\\&=-\frac{\mu_{k}(2-\mu_{k})}{(1-\mu_{k})^{2}}\frac{(\partial_{s}r_{k}+q_{k}z\partial_{z}r_{k})\partial_{z}r_{k}}{r_{k}^{2}}r_{p}+\frac{2}{(1-\mu_{k})^{2}}\frac{(\partial_{s}r_{k}+q_{k}z\partial_{z}r_{k})\partial_{z}r_{k}}{r_{k}^{2}}m_{p},
\label{eq:wave eq for rp under self-similar coordinates}\allowdisplaybreaks\\[1em]
&
\partial_{s}\partial_{z}\Psi_{p}+q_{k}z\partial_{z}^{2}\Psi_{p}+q_{k}\partial_{z}\Psi_{p}-\frac{\mu_{k}}{1-\mu_{k}}\frac{(\partial_{s}r_{k}+q_{k}z\partial_{z}r_{k})\partial_{z}r_{k}}{r_{k}^{2}}\Psi_{p}\nonumber\\&=\frac{1}{r_{k}^{2}}\frac{\mu_{k}}{1-\mu_{k}}\Psi_{k}\partial_{z}r_{k}\partial_{s}r_{p}+\left(\frac{1}{r_{k}^{2}}\frac{\mu_{k}}{1-\mu_{k}}\Psi_{k}(\partial_{s}r_{k}+q_{k}z\partial_{z}r_{k})+k\right)\partial_{z}r_{p}\nonumber\\&\hspace{.4cm}-\left(\frac{\mu_{k}(3-2\mu_{k})}{r_{k}^{3}(1-\mu_{k})^{2}}\Psi_{k}(\partial_{s}r_{k}+q_{k}z\partial_{z}r_{k})\partial_{z}r_{k}\right)r_{p}+\frac{2\Psi_{k}(\partial_{s}r_{k}+q_{k}z\partial_{z}r_{k})\partial_{z}r_{k}}{r_{k}^{3}(1-\mu_{k})^{2}}m_{p},
\label{eq:wave eq for psip under self-similar coordinate}\allowdisplaybreaks\\[1em]
&(\partial_{s}+q_{k}z\partial_{z})m_{p}+\frac{r_{k}\left((\partial_{s}+q_{k}z\partial_{z})\phi_{k}\right)^{2}}{(\partial_{s}r_{k}+q_{k}z\partial_{z}r_{k})}m_{p} \nonumber\\&=(1-\mu_{k})\frac{r_{k}(\partial_{s}+q_{k}z\partial_{z})\phi_{k}}{(\partial_{s}+q_{k}z\partial_{z})r_{k}}(\partial_{s}+q_{k}z\partial_{z})\Psi_{p}-(1-\mu_{k})(\partial_{s}+q_{k}z\partial_{z})\phi_{k}\Psi_{p}\nonumber\\&\quad-(1-\mu_{k})\left(\frac{1}{2}\left(\frac{r_{k}(\partial_{s}+q_{k}z\partial_{z})\phi_{k}}{(\partial_{s}+q_{k}z\partial_{z})r_{k}}\right)^{2}+\frac{r_{k}(\partial_{s}+q_{k}z\partial_{z})\phi_{k}}{(\partial_{s}+q_{k}z\partial_{z})r_{k}}(\phi_{k}-ks)\right)(\partial_{s}+q_{k}z\partial_{z})r_{p}\nonumber\\&\quad +\left(\frac{r_{k}((\partial_{s}+q_{k}z\partial_{z})\phi_{k})^{2}}{2(\partial_{s}+q_{k}z\partial_{z})r_{k}}\mu_{k}+(1-\mu_{k})(\phi_{k}-ks)+(1-\mu_{k})\frac{kr_{k}(\partial_{s}+q_{k}z\partial_{z})\phi_{k}}{(\partial_{s}+q_{k}z\partial_{z})r_{k}}\right)r_{p},\label{eq:u transport eq for m under self-similar coordinates}\allowdisplaybreaks\\[1em]&
\partial_{z}m_{p}+\frac{r_{k}}{\partial_{z}r_{k}}(\partial_{z}\phi_{k})^{2}m_{p}-(1-\mu_{k})\frac{r_{k}}{\partial_{z}r_{k}}\partial_{z}\phi_{k}\partial_{z}\Psi_{p}+(1-\mu_{k})\partial_{z}\phi_{k}\Psi_{p}\nonumber\\=&-\left(\frac{1}{2}\left(\frac{r_{k}}{\partial_{z}r_{k}}\right)^{2}(\partial_{z}\phi_{k})^{2}+\frac{r_{k}}{\partial_{z}r_{k}}\partial_{z}\phi_{k}(\phi_{k}-ks)\right)(1-\mu_{k})\partial_{z}r_{p}\nonumber\\&+
\left(\frac{1}{2}\frac{r_{k}}{\partial_{z}r_{k}}(\partial_{z}\phi_{k})^{2}\mu_{k}+(1-\mu_{k})\partial_{z}\phi_{k}(\phi_{k}-ks)\right)r_{p}.\label{eq:v transport eq for m under self-similar coordinates}
\end{align}

To simplify the notation, we abbreviate equations \eqref{eq:wave eq for rp under self-similar coordinates}-\eqref{eq:v transport eq for m under self-similar coordinates} as a linear operator: \begin{equation*}
    \mathcal{P}(r_{p},\Psi_{p},m_{p}) = 0.
\end{equation*}
Using the local well-posedness result established in \cite{chris93}, one can deduce that the linearized Einstein-scalar field equations are also well-posed. In view of the transport equations of $m_{p}$, we can express $m_{p}$ as a non-local operator of $r_{p}$ and $\Psi_{p}$, integrating from the center in the $z$-direction. Therefore, the free initial data for the linearized equations is \begin{equation*}
    r_{p}|_{\Sigma_{-1}} = r_{p}^{0},\quad \Psi_{p}|_{\Sigma_{-1}} = \Psi_{p}^{0}.
\end{equation*}

\subsection{Function spaces}
We now recall the localized Hölder-type spaces used to measure the interior perturbations.
\begin{definition}
\label{def: definition of the localized holder space}
    Let $\alpha\in(1,2)$, $\gamma\in(-\frac{1}{2},\frac{1}{2})$, and $I:=[-1,0]$. We can define \begin{align}
        \mathcal{C}^{\alpha,\delta}_{N}(I): &=\left\{f(z):I\rightarrow\mathbb{R}| f\in C_{z}^{N}(I\backslash\{0\})\cap C_{z}^{1}(I),\ \vert z\vert^{j-\alpha}\frac{d^{j}f}{dz^{j}}\in C_{z}^{0,\delta},\ 2\leq j\leq N\right \},
        \\
        H^{\gamma}_{N}(I):&=\left\{f(z): I\rightarrow \mathbb{R}|f\in C_{z}^{N}(I\backslash\{0\})\cap W_{z}^{1,2}(I),\ (-z)^{j-2+\gamma}\frac{d^{j}f}{dz^{j}}\in L^{2}(I),\ 2\leq j\leq N\right\},
    \end{align}
    associated with the norms \begin{align}
        \Vert f\Vert_{\mathcal{C}_{N}^{\alpha,\delta}}: &= \sum_{j = 0}^{1}\left\Vert\frac{d^{j}f}{dz^{j}}\right\Vert_{L^{\infty}(I)}+\sum_{j = 2}^{N}\left\Vert \vert z\vert^{j-\alpha}\frac{d^{j}f}{dz^{j}}\right\Vert_{C^{0,\delta}_{z}(I)},\\
        \left\Vert f\right\Vert_{H^{\gamma}_{N}}:&=\sum_{j = 0}^{1}\left\Vert\frac{d^{j}f}{dz^{j}}\right\Vert_{L^{2}(I)}+\sum_{j = 2}^{N}\left\Vert \vert z\vert^{j-2+\gamma}\frac{d^{j}f}{dz^{j}}\right\Vert_{L^{2}(I)},
    \end{align}
    where $\Vert\cdot\Vert_{C_{z}^{0,\delta}}$ means the standard Hölder norm. When $\delta = 0$, $C_{z}^{0,0}$ will be reduced to the standard continuous function space. We simplify the notation $\mathcal{C}_{N}^{\alpha,0}$ to be $\mathcal{C}_{N}^{\alpha}$.

    Finally, we define the function space 
    \begin{equation}
        H_{N,a}^{\gamma} = \left\{f:\mathbb{R}_{+}\times I\rightarrow\mathbb{R}|f(s,\cdot)\in H_{N}^{\gamma}(I),\ \sup_{s\geq0}e^{(1+a)s}\left\Vert f(s,\cdot)\right\Vert_{H_{N}^{\gamma}}<\infty\right\}.
    \end{equation}

    associated with the norm \begin{equation}
        \left\Vert f\right\Vert_{H_{N,a}^{\gamma}} = \sup_{s\geq0}e^{(1+a)s}\left\Vert f(s,\cdot)\right\Vert_{H_{N}^{\gamma}}.
    \end{equation}
\end{definition}
\begin{remark}
The function spaces introduced above will be used to manifest the regularity aspect of the initial data. For instance, by Proposition~\ref{prop: estimate on the exact k self similar spacetime}, the interior initial data $(r_{k},\Psi_{k},m_{k})|_{\Sigma_{-1}^{(in)}}$ of the $k$-self-similar naked singularity spacetime constructed by Christodoulou \cite{chris94} belongs to the function space $\mathcal{C}_{\infty}^{2-}\times\mathcal{C}_{\infty}^{p_{k},\delta}(I)\times\mathcal{C}_{\infty}^{p_{k},\delta}(I)$ for all $\delta\in(0,1-O(k))$. The function spaces with $\alpha>p_{k}$ are considered to be of high regularity in the paper.
\end{remark}
An easy example of a function $f$ in the function space $\mathcal{C}_{N}^{\alpha,\delta}$ is $f = (-z)^{\alpha}$, where $f$ is smooth away from the point $z = 0$ and is of finite Hölder regularity at $z = 0$. The following decomposition lemma illustrates that the motivation of the function space $\mathcal{C}_{N}^{\alpha,\delta}$ is to generalize the function $(-z)^{\alpha}$.

\begin{lemma}{(\cite{singh2})}
\label{lemma: decomposition of the function spaces}
    For any $f\in\mathcal{C}_{N}^{\alpha,\delta}$ with $\delta>0$, there exist a unique constant $c_{0}$ and a unique function $f_{1}\in\mathcal{C}_{N}^{\alpha+\delta^{\prime}}$ with $\delta^{\prime}\in(0,\delta)$ such that \begin{equation}
        f = c_{0}\vert z\vert^{\alpha}+f_{1}.\label{eq: decompose holder function}
    \end{equation}
    Moreover, we have the estimate \begin{equation}
        \left\vert c_{0}\right\vert+\left\Vert f_{1}\right\Vert_{\mathcal{C}^{\alpha+\delta^{\prime}}}\lesssim \left\Vert f\right\Vert_{\mathcal{C}_{N}^{\alpha}}.
        \label{eq: estimate on the decomposition}
    \end{equation}
    For $\alpha = \frac{1}{1-k^{2}}$, the function $\vert z\vert^{\alpha}$ in \eqref{eq: decompose holder function} can be replaced by $\mr{\phi}$ and the estimate \eqref{eq: estimate on the decomposition} continues to hold.
\end{lemma}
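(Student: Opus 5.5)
The plan is to extract the coefficient $c_0$ from the behaviour of the $\alpha$-weighted second derivative at $z=0$, and then to rebuild $f_1$ by integrating twice from $z=0$. Write $g(z):=\vert z\vert^{2-\alpha}f''(z)$. By the definition of $\mathcal{C}_N^{\alpha,\delta}$ we have $g\in C^{0,\delta}(I)$, so $g(0)$ is well defined and $|g(z)-g(0)|\le \Vert g\Vert_{C^{0,\delta}}|z|^\delta$. We determine $c_0$ by matching $g(0)$ with the second derivative of $|z|^\alpha$:
\[
c_0:=\frac{g(0)}{\alpha(\alpha-1)},
\]
which is legitimate because $\alpha\in(1,2)$ gives $\alpha(\alpha-1)\neq0$. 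Uniqueness is immediate. If $c_0|z|^\alpha+f_1=c_0'|z|^\alpha+f_1'$, then $(c_0-c_0')|z|^{2-\alpha}\partial_z^2|z|^\alpha$ equals a difference of functions in $\mathcal{C}^{\alpha+\delta'}$. Any such difference, after multiplication by $|z|^{2-\alpha}$, behaves like $O(|z|^{\delta'})$ and so vanishes at $0$. This forces $c_0=c_0'$.

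For existence, set $f_1:=f-c_0|z|^\alpha$. Then
\[
|z|^{2-\alpha-\delta'}f_1''=|z|^{-\delta'}\bigl(g(z)-g(0)\bigr),
\]
which is bounded by $\Vert g\Vert_{C^{0,\delta}}|z|^{\delta-\delta'}$. An elementary interpolation argument then shows it lies in $C^{0,\delta-\delta'}$. To control the Hölder quotient at two points $z_1,z_2$, split into the cases $|z_1-z_2|\ge\frac12|z_1|$ and the complementary one, and use $|z|^{-\delta'}$ smooth away from $0$ in the second. Since $\mathcal{C}^{\alpha+\delta'}$ here is the $\delta=0$ version, only continuity is strictly needed, so this case analysis is a bonus. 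The higher derivatives $j\ge3$ are handled the same way. The factor $|z|^{j-\alpha}\partial_z^j|z|^\alpha$ is a constant, so $|z|^{j-\alpha-\delta'}f_1^{(j)}=|z|^{-\delta'}\bigl(|z|^{j-\alpha}f^{(j)}-c_j\bigr)$. We need the constant $c_j$ to agree with the value at $0$ of the Hölder function $|z|^{j-\alpha}f^{(j)}$; this is the step requiring care. To show it, differentiate $f''=g|z|^{\alpha-2}$ and integrate back. If $h_j:=|z|^{j-\alpha}f^{(j)}$ had limit $h_j(0)\neq c_j$, integrating $f^{(j)}$ from a point near $0$ would give a term $(h_j(0)-c_j)|z|^{\alpha-j+1}/(\alpha-j+1)$ in $f^{(j-1)}$ that contradicts the already established expansion of $h_{j-1}$. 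Inducting on $j$ gives $h_j(0)=c_0\,\partial$-constants, and hence the required $O(|z|^{\delta})$ vanishing. The $C^0$ and $C^1$ parts of $f_1$ are controlled trivially, since $|z|^\alpha\in C^1(I)$. All bounds are linear in $\Vert f\Vert_{\mathcal{C}_N^{\alpha,\delta}}$ together with $|c_0|\lesssim\Vert g\Vert_{C^0}$, which yields \eqref{eq: estimate on the decomposition}. I expect the only real obstacle to be the consistency of the limits $h_j(0)$ for $j\ge3$ just described, together with noticing that the estimate genuinely needs the Hölder seminorm ($\delta>0$), not just the $\mathcal{C}_N^\alpha$ norm.

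For the case $\alpha=p_k=1/(1-k^2)$ with $|z|^\alpha$ replaced by $\mr{\phi}$, I use Proposition~\ref{prop: estimate on the exact k self similar spacetime}(2). It gives $\mr{\phi}\in\mathcal{C}^{p_k,\delta}_N$ with sharp behaviour $\mr{\phi}''\sim|z|^{-1+p_kk^2}=|z|^{p_k-2}$. Applying the already proven decomposition to $\mr{\phi}$ gives $\mr{\phi}=a_0|z|^{p_k}+\phi_1$, where the sharpness yields $a_0\neq0$ and $\phi_1\in\mathcal{C}^{p_k+\delta'}$. We then write
\[
f=c_0|z|^{p_k}+f_1=\frac{c_0}{a_0}\mr{\phi}+\Bigl(f_1-\frac{c_0}{a_0}\phi_1\Bigr).
\]
The bracket lies in $\mathcal{C}^{p_k+\delta'}$, and the constants depend only on $k$. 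Uniqueness follows as before.
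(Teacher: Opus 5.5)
The paper does not prove this lemma itself; it only refers to Section~2.9 of \cite{singh2}. There is therefore no in-text argument to compare your route against. Your proof is correct and self-contained.

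Reading off $c_0=g(0)/(\alpha(\alpha-1))$ from the Hölder function $g=|z|^{2-\alpha}f''$, and using $|g(z)-g(0)|\le[g]_{\delta}|z|^{\delta}$, gives $|z|^{2-\alpha-\delta'}f_1''\to0$ for every $\delta'<\delta$. You correctly single out the one non-obvious step: for $j\ge3$ the limits must satisfy $h_j(0)=c_0|z|^{j-\alpha}\partial_z^j|z|^{\alpha}$.

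Your integrate-and-contradict argument handles this step. An equivalent one-line form uses the identity $h_j=(j-1-\alpha)h_{j-1}+(-z)h_{j-1}'$ on $z<0$. Since $h_j$ and $h_{j-1}$ both have limits at $0$, $(-z)h_{j-1}'$ tends to some $L$. If $L\neq0$, then $h_{j-1}$ would diverge like $\log|z|$. Hence $h_j(0)=(j-1-\alpha)h_{j-1}(0)$, which is exactly the ratio of the corresponding constants for $|z|^{\alpha}$.

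Three small points.

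First, you are right that the bound must be in terms of $\|f\|_{\mathcal{C}^{\alpha,\delta}_N}$ rather than the printed $\|f\|_{\mathcal{C}^{\alpha}_N}$. The latter does not control the rate at which $g(z)\to g(0)$, so it cannot bound $\||z|^{-\delta'}(g-g(0))\|_{L^\infty}$.

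Second, in the $\mr{\phi}$ case, Proposition~\ref{prop: estimate on the exact k self similar spacetime}(2) only supplies weighted $L^{\infty}$ bounds and the sharp asymptotics $\mr{\phi}''\sim|z|^{p_k-2}$; the asymptotics are what give $a_0\neq0$. The Hölder membership $\mr{\phi}\in\mathcal{C}^{p_k,\delta_0}_N$ is a separate input. It is asserted in the remark following Definition~\ref{def: definition of the localized holder space} and assumed in Theorem~\ref{thm: linearized result at the threshold}.

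Third, $\delta'$ must accordingly be taken below $\min\{\delta,\delta_0\}$, so that $\phi_1$, and hence your new remainder, lies in $\mathcal{C}^{p_k+\delta'}_N$. This is harmless in the application, where $\delta\le\delta_0$.
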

\begin{proof}
The proof can be found in Section 2.9 of \cite{singh2}.
\end{proof}

As we have discussed before, in view of the linearized constraint equations \eqref{eq:linerized dum equation}-\eqref{eq:linearized dvm equation}, the gauge choice of the center $\Gamma$, and the boundary condition $m_{p}|_{\Gamma} = 0$, the quantity $m_{p}$ can be expressed in terms of an integral in the $v$-direction of $r_{p}$ and $\Psi_{p}$ from the center. Hence, the free interior initial data of the linearized equations $\mathcal{P}(r_{p},\Psi_{p},m_{p}) = 0$ on the initial characteristic hypersurface $\Sigma_{-1}^{(in)}$ consist of $(r_{p},\Psi_{p})\bigl|_{\Sigma_{-1}^{(in)}}$. Let \begin{equation*}
    (r_{p},\Psi_{p})|_{\Sigma_{-1}^{(in)}} = ( r_{p}^{0},\Psi_{p}^{0}).
\end{equation*}
For perturbations above the threshold, the choice of the interior initial data will be $(r_{p}^{0},\Psi_{p}^{0})\in\mathcal{C}_{N}^{\alpha^{\prime}}\times\mathcal{C}_{N}^{\alpha}$ with $p_{k}<\alpha<\frac{3}{2}$, $\alpha+\frac{1}{4}\leq\alpha^{\prime}<2$, and $N\geq 5$. For perturbations at the threshold, the choice of the initial data will be $(r_{p}^{0},\Psi_{p}^{0})\in\mathcal{C}^{\alpha^{\prime}}_{N}\times\mathcal{C}_{N}^{p_{k},\delta}$ with $p_{k}+\frac{1}{2}\leq\alpha^{\prime}<2$, $\delta\in(0,1-O(k))$, and $N\geq 5$.
\begin{remark}
We remark that the threshold in this context refers to the regularity of the scalar field in the background $k$-self-similar spacetime. Recall that the geometric quantity $r_{k}$ is in the function space $\mathcal{C}_{N}^{2-}$ and is $C^{2,k^{2}p_{k}}$. Consequently, in our choice of the initial interior perturbation for the linearized problem, the regularity requirement of $r_{p}$ is already lower than that of the background geometry. Nevertheless, the results of this paper show that the linearized stability of the $k$-self-similar naked singularity spacetime depends crucially on the regularity of the scalar field perturbation, while being largely insensitive to the regularity of $r_{p}$.
\end{remark}
\begin{remark}
Due to the constraint equations of the Einstein-scalar field equation under the spherically symmetric assumption, the regularity of $r$ is closely related to the regularity of the scalar field $\phi$. In particular, in the nonlinear problem, once the initial regularity of $\phi$ is given, the regularity of both $\phi$ and $r$ is consequently fixed. Therefore, one should distinguish the regularity of the nonlinear problem and the freedom of imposing $r_{p}^{0}$ in the linearized problem. The aforementioned regularity freedom of $r_{p}^{0}$ in the linearized problem should be interpreted as follows: in estimating the decay rate of solutions to the linearized Einstein-scalar field equations, only the $\mathcal{C}_{N}^{\alpha^{\prime}}$-norm of $r_{p}$ enters the analysis. This does not imply that, in the nonlinear problem, the regularity of $r_{p}$ can be below $C^{2,k^{2}p_{k}}$.
\end{remark}

\subsection{Fourier--Laplace transform}
In this section, we recall the theory of the Fourier--Laplace transform and its inverse. 
\begin{definition}
    Let $f: \mathbb{R}\rightarrow \mathbb{R}$ be a Hölder continuous function satisfying \begin{enumerate}
        \item[(1)] $f(x) = 0$ for $x\leq0$;
        \item[(2)] There exist two constants $M>0$ and $C_{0}$ such that \begin{equation*}
            \vert f\vert\leq M e^{c_{0}x}.
        \end{equation*}
    \end{enumerate}
    Then we can define \begin{equation*}
        \hat{f}(\sigma) := \int_{0}^{\infty}f(x)e^{-\sigma x}dx,\quad \sigma\in\mathbb{C}
    \end{equation*}
    to be the Fourier--Laplace transform of $f$. Moreover, $\hat{f}$ is well-defined and holomorphic in the region $\Re{\sigma}>C_{0}$.
\end{definition}
Moreover, for any $a>C_{0}$, we have the following inverse Fourier--Laplace transform:
\begin{equation*}
        f(x) = \frac{1}{2\pi}\int_{a-i\infty}^{a+i\infty}e^{\sigma x}\hat{f}(\sigma)d\sigma.
    \end{equation*}
\section{Main results and the outline of the proof}
\label{sec: main result and proof outline}

\subsection{Linearized result for the interior region}
For the linearized Einstein-scalar field equations introduced in Section \ref{sec: linearized system} with initial data $(r_{p},\phi_{p})|_{\Sigma_{-1}^{(in)}} = (r_{p}^{0},\phi_{p}^{0})$, we can prove the following theorem.

\begin{theorem}[Linear stability for perturbations at the threshold]
\label{thm: linearized result at the threshold}
Fix $k^{2}$ to be a sufficiently small non-zero number. Let $\delta_{0}\in(0,1)$ be any value such that $\phi_{k}|_{\Sigma_{-1}^{(in)}}\in\mathcal{C}_{N}^{p_{k},\delta_{0}}$. Then for any $\delta\in(0,\delta_{0}]$, $\beta\in(p_{k}+\frac{1}{4},2)$, and $N\geq 5$, let $(r_{p},\phi_{p},m_{p})$ be the solution to the spherically symmetric Einstein-scalar field equations linearized around the $k$-self-similar naked singularity interior with the initial data $(r_{p},\phi_{p})|_{\Sigma_{-1}^{(in)}} = (r_{p}^{0},\phi_{p}^{0})\in\mathcal{C}_{N}^{\beta}\times \mathcal{C}_{N}^{p_{k},\delta}$. According to Lemma \ref{lemma: decomposition of the function spaces}, we further assume that \begin{equation*}
    \phi_{p}^{0} =c_{0}\phi_{k}|_{\Sigma_{-1}^{(in)}}+\phi_{1}= c_{0}\mr{\phi}+\phi_{1},
\end{equation*}
with $\phi_{1}\in\bar{\mathcal{C}}_{N}^{p_{k}+\delta^{\prime}}$ for any $\delta^{\prime}\in(0,\min\{\beta-\frac{1}{4}-p_{k},\delta\})$. Then there exist constants $c_{\infty}$ depending on the initial data and $C$ depending on $k$, such that for any $\alpha^{\prime}\in(p_{k},\min\{\frac{3}{2},p_{k}+\delta^{\prime}\})$, the solution $(r_{p},\phi_{p},m_{p})$ satisfies the following bounds: \begin{align}
    \sum_{0\leq i+j\leq 1}\left\vert\partial_{u}^{i}\partial_{v}^{j}(\phi_{p}-c_{\infty})\right\vert&\leq C\left(\left\Vert r_{p}^{0}\right\Vert_{\mathcal{C}_{N}^{\beta}}+\left\Vert\phi_{p}^{0}\right\Vert_{\mathcal{C}_{N}^{p_{k},\delta}}\right)(-u)^{\alpha^{\prime}q_{k}-1-i-q_{k}j},\\\sum_{0\leq i+j\leq 1}\left\vert\partial_{u}^{i}\partial_{v}^{j}(r_{p}-c_{0}r_{k})\right\vert&\leq C\left(\left\Vert r_{p}^{0}\right\Vert_{\mathcal{C}_{N}^{\beta}}+\left\Vert\phi_{p}^{0}\right\Vert_{\mathcal{C}_{N}^{p_{k},\delta}}\right)(-u)^{\alpha^{\prime}q_{k}-i-q_{k}j},\\\sum_{0\leq i+j\leq 1}\left\vert\partial_{u}^{i}\partial_{v}^{j}(m_{p}-c_{0}m_{k})\right\vert&\leq C\left(\left\Vert r_{p}^{0}\right\Vert_{\mathcal{C}_{N}^{\beta}}+\left\Vert\phi_{p}^{0}\right\Vert_{\mathcal{C}_{N}^{p_{k},\delta}}\right)(-u)^{\alpha^{\prime}q_{k}-i-q_{k}j}.
\end{align}
Moreover, for any fixed $k$, we have the following estimate on the constant $c_{\infty}$ in terms of the initial data \begin{equation}
    \vert c_{\infty}\vert\lesssim_{k}\left\Vert r_{p}^{0}\right\Vert_{\mathcal{C}_{N}^{\beta}}+\left\Vert\phi_{p}^{0}\right\Vert_{\mathcal{C}_{N}^{p_{k},\delta}}.
\end{equation}
\end{theorem}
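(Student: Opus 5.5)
The strategy is to reduce the threshold case to the above-threshold case by subtracting the symmetry-generated solutions, and then to prove decay for the remainder by the scheme outlined in Section~\ref{intro:subs:overviewofpaper}: energy estimates, then a resolvent and resonance analysis, then a contour shift.

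\textbf{Step 1: Subtract the scaling mode.} By Lemma~\ref{lemma: decomposition of the function spaces}, write $\phi_{p}^{0}=c_{0}\mr{\phi}+\phi_{1}$ with $\phi_{1}\in\mathcal{C}_{N}^{p_{k}+\delta'}$. The threshold-regularity part $c_{0}\mr{\phi}$ is precisely the initial datum of a symmetry-generated solution. Differentiating the one-parameter family of solutions produced by the scaling and translation symmetries (in our gauge, the perturbation $(c_{0}r_{k},c_{0}\phi_{k},c_{0}m_{k})$ up to a gauge adjustment) gives an exact solution of $\mathcal{P}=0$ whose data carry the leading singular profile. Set
\begin{equation*}
(\tilde r,\tilde\phi,\tilde m)=(r_{p},\phi_{p},m_{p})-c_{0}(r_{k},\phi_{k},m_{k}).
\end{equation*}
The remainder solves the linearized system with data in $\mathcal{C}_{N}^{\beta}\times\mathcal{C}_{N}^{p_{k}+\delta'}$, that is, strictly above threshold. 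The restriction $\delta'<\beta-\tfrac14-p_{k}$ is what ensures that the $r$-data are at least $\tfrac14$ more regular than the $\Psi$-data, which the energy estimates require. The translation symmetry $\phi\mapsto\phi+b$ generates the constant $c_{\infty}$, so the decay claims reduce to showing that $\tilde\phi-c_{\infty}$, $\tilde r$ and $\tilde m$ decay like $(-u)^{\alpha'q_{k}-1}$ and $(-u)^{\alpha' q_{k}}$ respectively.

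\textbf{Step 2: Energy estimates and the resolvent.} Pass to self-similar coordinates $(s,z)$ and to $\Psi_{p}$, where the coefficients are $s$-independent. Using the weighted spaces $H^{\gamma}_{N}$, prove a multiplier estimate with a red-shift-type weight near $z=0$. This should give a non-sharp bound in $H^{\gamma}_{N,a}$ with some $e^{Cs}$ growth. That bound justifies taking the Fourier--Laplace transform in $s$ for $\Re\sigma$ large.

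The transformed problem is an ODE system in $z$ for $(\hat r,\hat\Psi)$, with $\hat m$ determined nonlocally from the center. Impose the Dirichlet condition at $z=-1$ and select the regular solution at $z=0$.

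\textbf{Step 3: The main obstacle, constructing the regular branch and ruling out resonances.} This step involves two difficulties.

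First, the potentials involve $\partial\phi_{k}\in C^{0,k^{2}p_{k}}$ only, so the Frobenius construction at $z=0$ cannot be done naively. I would first try to use the transport structure of $m_{p}$ to integrate the worst term $\partial_{z}\phi_{k}\,\partial_{z}\Psi_{p}$ by parts, or equivalently to renormalize by $m_{p}-(\ldots)\Psi_{p}$. The aim is to make all ODE coefficients lie in the same weighted class as $V_{k}$ for the wave equation, and then to build $f_{reg}$ by a contraction in $\mathcal{C}^{\alpha}$-type weighted spaces.

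Second, I would show that in the strip $\Re\sigma>-(\alpha'q_{k}-1)$ the resolvent has no poles except at the values corresponding to the two symmetries. For $0<k^{2}\ll1$ this is done perturbatively: the background is $O(k)$-close to Minkowski, where the system decouples and the resonances are explicit, and the coupling terms have small norm, with the bounds $\lesssim k/\epsilon$ from Proposition~\ref{prop: estimate on the exact k self similar spacetime}. The delicate point is $z\approx0$, where $\mr{\phi}'\sim1/k$ is not small. There I would use the weighted bound $|z|^{\epsilon}\mr{\phi}'\lesssim k/\epsilon$ and absorb the loss of $|z|^{-\epsilon}$ using the regular-branch structure.

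\textbf{Step 4: Contour shift and pointwise bounds.} Shift the inverse Fourier--Laplace contour to $\Re\sigma=-(\alpha'q_{k}-1)+$, controlling the high-frequency part of the contour by the Step 2 energy estimates. The residue at the translation resonance gives $c_{\infty}$, bounded by the data norm. Any residue at the scaling resonance has already been removed in Step 1, or else is absorbed into $c_{0}$.

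The remaining integral is $O(e^{-(\alpha'q_{k}-1)s})$ in $H^{\gamma}_{N}$. Sobolev embedding with $N\geq5$ converts this into pointwise bounds on $\partial^{i}_{u}\partial^{j}_{v}$. Converting $\partial_{s},\partial_{z}$ back to $\partial_{u},\partial_{v}$ produces the factors $(-u)^{-i-q_{k}j}$. Finally, reconstruct $m_{p}$ from the transport equations and return from $\Psi_{p}$ to $\phi_{p}$.
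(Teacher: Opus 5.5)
Your overall architecture (remove a symmetry mode, get a priori bounds by energy estimates, build a resolvent, shift the contour) matches the paper's. However, two steps fail as written.

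\textbf{Step~1 subtracts a perturbation that is not a solution.} The triple $(r_k,\phi_k,m_k)$ does not solve the linearized system, and no gauge adjustment can repair this, because gauge modes are themselves solutions. Under $(r,\phi,m)\mapsto\lambda(r,\phi,m)$, the equation $\partial_v m=\frac12(1-\mu)\frac{r^2}{\partial_v r}(\partial_v\phi)^2$ is homogeneous of degree $1$ on the left and degree $3$ on the right. Linearizing along this direction therefore leaves the residual $-2\partial_v m_k\neq0$.

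The actual scaling symmetry is $(r,\Psi,m)\mapsto(ar,a\Psi,am)$, which leaves $\phi$ fixed. Its kernel element is $(r_k,\Psi_k,m_k)$; since $\Psi_p=r_k\phi_p+\mr{\phi}\,r_p$ and $\Psi_k=\mr{\phi}\,r_k$, this is $(r_k,0,m_k)$ in $(r,\phi,m)$-variables. Subtracting it leaves $\phi_p$, and hence its data, unchanged, so your mechanism cannot produce above-threshold $\phi$-data. What improves is the $\Psi$-datum, because $\Psi_k|_{u=-1}=\mr{r}\,\mr{\phi}$ carries exactly the threshold profile. The correct reduction, which is the paper's Step~5, runs as follows:
\begin{itemize}
\item decompose $\Psi_p^0=r_k\phi_p^0+\mr{\phi}\,r_p^0$ using Lemma~\ref{lemma: decomposition of the function spaces};
\item subtract $c(r_k,\Psi_k,m_k)$;
\item run the above-threshold theory in $\Psi$.
\end{itemize}
This is also why the statement is about $\phi_p-c_\infty$. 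Your reduced target $\phi_p-c_0\phi_k-c_\infty$ is incompatible with the theorem for $c_0\neq0$: on $\{z=0\}$ one has $\partial_u\phi_k=(k+q_kz\mr{\phi}')/(-u)=k/(-u)$, which is not $O((-u)^{\alpha'q_k-2})$.

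\textbf{Step~3 has the more serious gap: a perturbative resonance count cannot work here.}
\begin{itemize}
\item \emph{The count.} The first-order coefficients $1+\sigma$ and $q_k+\sigma$ of the $\hat r_p$- and $\hat\Psi_p$-equations degenerate at $\sigma=-1$ and $\sigma=-q_k=-1+k^2$. The decoupled approximation therefore gives a determinant $\approx\bigl(1+\frac{\sigma+k^2}{q_k}\bigr)\bigl(1+\frac{\sigma}{q_k}\bigr)$, with two zeros a distance $k^2$ apart. The coupling you want to treat as small is $O(k)$ (the term $k\,\partial_z\hat r_p$, and $d_2$), so it cannot separate or remove them.
\item \emph{The second zero is not a symmetry mode.} The zero to the right of $-1$ would be a slower-than-self-similar mode. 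For $k\neq0$, the scaling mode $e^{-s}(\mr{r},\mr{r}\,\mr{\phi})$ has a nonzero component on the ingoing branch $(-z)^{-\sigma/q_k}=(-z)^{p_k}$ at $\sigma=-1$. It is therefore not a resonance of the outgoing problem, and ``poles only at the two symmetries'' has nothing to absorb that zero.
\item \emph{The missing idea.} This is the explicit horizon expansion in Proposition~\ref{prop: schematic form of equations}: $d_1$ and $d_2$ both carry the factor $\sigma+1$. That factor cancels the degenerate coefficient in the recursion for $a_1$, so $\mathbf{r}_{+,\sigma}$ needs no renormalization at $\sigma=-1$ (Proposition~\ref{prop: model argument for square coefficient}). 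Only $\mathbf{\Psi}_{+,\sigma}$ is renormalized, by $1+\sigma/q_k$. This gives $\mathcal{W}(\sigma,-1)=(1+\sigma/q_k)+O(k)$, Rouch\'e gives a single zero, and the translation mode $(0,\mr{r})$ places it at $\sigma=-1$.
\item \emph{The regular-branch construction is also left open.} Determining $\hat m$ from the $z$-transport equation keeps $\partial_z\phi_k$ in the coefficients; this term is of size $1/k$ and only $C^{0,k^2p_k}$. ``Renormalize by $m_p-(\ldots)\Psi_p$'' is not a mechanism. The paper instead eliminates $\hat m_p$ algebraically from the $\partial_s$-transport equation, where $\partial_z\phi_k$ always comes with a factor $z$.
\item \emph{Consequence for Step~2.} That algebraic elimination creates fake modes $\sigma\in V([-1,0])\subset[0,O(k^2))$. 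This is why the energy estimate must give genuine decay $e^{(-q_k+k^{1/8})s}$ rather than $e^{Cs}$ growth: the inversion contour has to start at $\Re\sigma=-\eta$.
\end{itemize}

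\textbf{Step~4 uses the wrong contour.} In $\Psi$-variables the translation resonance sits at $\sigma=-1$. The contour must therefore be pushed to $\Re\sigma\approx-\alpha'q_k$, not $-(\alpha'q_k-1)$, to produce $c_\infty$ and the rate $(-u)^{\alpha'q_k}$ for $r_p$ and $m_p$. In addition, the paper uses the resolvent bounds only near the axis; the extension to the horizon goes through Proposition~\ref{prop: key to close the linearized result}.
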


The proof of the above theorem heavily relies on the following theorem, proving the linearized stability result for perturbations above the threshold.

\begin{theorem}[Linear stability for perturbations above the threshold]
\label{thm: linearized result}
    For $k\neq0$ sufficiently small, any $\alpha\in(p_{k},\frac{3}{2})$, and $\beta\in(\alpha+\frac{1}{4},2)$, let $(r_{p},\phi_{p},m_{p})$ be the solution to the spherically symmetric Einstein-scalar field equations linearized around the $k$-self-similar naked singularity interior with the initial data $(r_{p},\phi_{p})|_{\Sigma_{-1}^{(in)}} = (r_{p}^{0},\phi_{p}^{0})\in\mathcal{C}_{N}^{\beta}\times\mathcal{C}_{N}^{\alpha}$ for $N\geq 5$. Then there exist constants $c_{\infty}$ depending on the initial data and $C$ depending on $k$, such that for any $\alpha^{\prime}\in(p_{k},\alpha)$, the solution $(r_{p},\phi_{p},m_{p})$ satisfies the following bounds: \begin{align}
        \sum_{0\leq i+j\leq 1}\left\vert\partial_{u}^{i}\partial_{v}^{j}(\phi_{p}-c_{\infty})\right\vert\leq& C\left(\left\Vert r_{p}^{0}\right\Vert_{\mathcal{C}_{N}^{\beta}}+\left\Vert \phi_{p}^{0}\right\Vert_{\mathcal{C}_{N}^{\alpha}}\right)(-u)^{\alpha^{\prime}q_{k}-1-i-q_{k}j},\\\sum_{0\leq i+j\leq 1}\left\vert\partial_{u}^{i}\partial_{v}^{j}r_{p}\right\vert\leq& C\left(\left\Vert r_{p}^{0}\right\Vert_{\mathcal{C}_{N}^{\beta}}+\left\Vert \phi_{p}^{0}\right\Vert_{\mathcal{C}_{N}^{\alpha}}\right)(-u)^{\alpha^{\prime}q_{k}-i-q_{k}j},\\\sum_{0\leq i+j\leq 1}\left\vert\partial_{u}^{i}\partial_{v}^{j}m_{p}\right\vert\leq& C\left(\left\Vert r_{p}^{0}\right\Vert_{\mathcal{C}_{N}^{\beta}}+\left\Vert \phi_{p}^{0}\right\Vert_{\mathcal{C}_{N}^{\alpha}}\right)(-u)^{\alpha^{\prime}q_{k}-i-q_{k}j}.
    \end{align}
    Moreover, for any fixed $k$, we have the following estimate on the constant $c_{\infty}$ in terms of the initial data: \begin{equation}
        \vert c_{\infty}\vert\lesssim_{k}\left\Vert r_{p}^{0}\right\Vert_{\mathcal{C}_{N}^{\beta}}+\left\Vert \phi_{p}^{0}\right\Vert_{\mathcal{C}_{N}^{\alpha}}.
    \end{equation}
\end{theorem}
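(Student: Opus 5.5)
We will prove Theorem~\ref{thm: linearized result} by working in the self-similar coordinates $(s,z)$, where the linearized operator $\mathcal{P}(r_{p},\Psi_{p},m_{p})=0$ of Section~\ref{sec: linearized system} has $s$-independent coefficients. The decay rate $(-u)^{\alpha' q_k-1}$ for $\partial\phi_p$ becomes exponential decay $e^{-(\alpha' q_k-1)s}$ relative to the self-similar rate. We first eliminate $m_p$, which is determined non-locally by integrating the linearized $v$-constraint \eqref{eq:v transport eq for m under self-similar coordinates} from the center $z=-1$ with $m_p|_{\Gamma}=0$. This leaves a coupled $2\times2$ system of wave equations for $(r_p,\Psi_p)$ with an integral (Volterra-type in $z$) coupling.

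The first step is an energy estimate in the weighted spaces $H^{\gamma}_N$ adapted to $\mathcal{C}^\alpha_N$ data. We commute with $\partial_z$ and use the multiplier structure: $q_k z\partial_z^2$ degenerates at $z=0$, and the weights $|z|^{j-2+\gamma}$ produce a positive boundary/bulk term precisely when $\alpha>p_k$. With this we obtain a non-sharp bound $\|(r_p,\Psi_p)(s)\|_{H^\gamma_N}\lesssim e^{Cs}$ and control of the singular potential terms containing $\partial\phi_k\in C^{0,p_kk^2}$. The loss of one derivative on $\phi_k$ is absorbed because $\Psi=r\phi+k\log(-u) r$ puts only undifferentiated or once-differentiated background quantities in front of the unknowns, with the smallness $\|\mr\phi'\|\lesssim k/\epsilon$ from Proposition~\ref{prop: estimate on the exact k self similar spacetime}. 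This justifies taking the Fourier--Laplace transform in $s$ for $\Re\sigma$ large.

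Second, we develop the scattering theory. For the transformed ODE system in $z$ we construct, via Frobenius at $z=0$ (indices $0$ and $\sigma$-dependent, as in \eqref{intro:eq:ode profile}), the "outgoing" regular solution. We construct the Dirichlet solution at $z=-1$ and define the resolvent by the matrix Wronskian. We show the resolvent extends meromorphically to $\Re(\sigma)>-(\alpha q_k-1)$, with the integrability of the singular coefficients giving the holomorphic family in $H^\gamma_N$.

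The main obstacle is mode stability: showing the only resonance with $\Re(\text{rate})\le$ that of $\alpha'$ is the one generated by symmetries. This is the constant shift $\phi\mapsto\phi+c$, yielding $c_\infty$; the scaling mode is excluded or accounted for by the gauge $z=-1$ at the center. We would treat this perturbatively in $k$. At $k=0$ the background is Minkowski, the system decouples into the free wave for $\Psi_p$ (only the trivial resonance) and linearized vacuum equations for $r_p,m_p$ (pure gauge, no resonance). For $k^2$ small the coupling coefficients are $O(k)$ in the relevant weighted norms, so by Rouché/continuity of the Wronskian no new zeros enter the half-plane, except near the $k=0$ one, which is pinned by the exact symmetry solution. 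Uniform high-frequency bounds for $|\Im\sigma|\to\infty$ come from the first-step energy estimate applied to the transformed equation.

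Finally, we shift the inverse Laplace contour to $\Re\sigma=-(\alpha' q_k-1)$, picking up the residue at the symmetry resonance, which gives $c_\infty$ with $|c_\infty|\lesssim_k$ data norm. The remainder decays in $H^\gamma_N$ at the claimed rate. Sobolev embedding with $N\ge5$ then converts this to pointwise bounds on $\partial^{i}_u\partial^j_v$ of $\phi_p-c_\infty$, $r_p$, $m_p$, and we translate back via $\partial_u=e^s(\partial_s+q_kz\partial_z)$ and $\partial_v=e^{q_ks}\partial_z$. The requirement $\beta>\alpha+\frac14$ enters when bounding the $r_p$-forcing in the $\Psi_p$ equation in the same weighted space.
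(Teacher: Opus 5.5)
The genuine gap is the mode-stability step. Your perturbation from $k=0$ assumes the $(r_p,m_p)$ sector has no resonance there, and that is false. At $k=0$ one has $\phi_k=m_k=\mu_k=0$, so the linearized constraints force $m_p\equiv0$ and $\partial_u\partial_v r_p=0$. Hence $r_p=r_p^0(v)-r_p^0(u)\approx (r_p^0)'(0)\,r_k$: the $k=0$ limit of the scaling mode, $(r_k,0,0)$ with $\hat r_p=\mr{r}(z)$ at $\sigma=-1$, is a smooth Dirichlet state that is generically excited. So at $k=0$ the point $\sigma=-1$ carries two resonant states, scaling and translation. This matches the paper's decoupled heuristics, which give the two factors $1+\frac{\sigma+k^2}{q_k}$ and $1+\frac{\sigma}{q_k}$. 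Worse, as $k\to0$ the exponents $-\frac{\sigma+k^2}{q_k}$ and $-\frac{\sigma}{q_k}$ merge, and the coupling $k\,\partial_z\hat r_p$ in the $\hat\Psi_p$ equation produces the ingoing solution $(-z)^{-\frac{k^2+\sigma}{q_k}}(1,\frac1k)$, so the fundamental system does not even converge. Continuity and Rouch\'e from $k=0$ therefore give at best two zeros near $-1$. Translation pins one; nothing in your argument controls the other, which could sit at $\Re\sigma>-1$ and give exactly a worse-than-self-similar mode. Relatedly, the scaling mode is not removed by the gauge at $z=-1$. It is a genuine Dirichlet solution, and for $k\neq0$ it drops out only because $\Psi_k=e^{-s}\mr{r}\mr{\phi}$ has exactly threshold regularity, i.e.\ lies on the ingoing branch $(-z)^{p_k}$ at $\sigma=-1$. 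The missing ingredient is the explicit horizon cancellation. Computing the leading coefficients of the reduced system at $z=0$ gives $d_1,d_2\propto(\sigma+1)$, see \eqref{expression of d1}--\eqref{expression of d2}. The recurrence $(1+\sigma)a_1=d_1a_0+d_2A_0$ is then non-degenerate at $\sigma=-1$, so the $r$-type outgoing solution $(1,\frac{kq_k}{\sigma})+O(k|z|^{1-2\epsilon})$ needs no renormalization. Only the $\Psi$-type one carries the factor $1+\frac{\sigma}{q_k}$, whence $\mathcal{W}(\sigma,-1)=1+\frac{\sigma}{q_k}+O(k)$ has exactly one zero, identified as $\sigma=-1$ by the solution $(0,\mr{r})$.

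This cancellation is only visible in the right formulation, which is your second problem. If you eliminate $m_p$ through the $v$-constraint, the value $\hat m_p(0)$ that feeds the constant-order terms at the singular point $z=0$ becomes a global integral from the center. Its integrand contains $\partial_z\phi_k$, of size $1/k$ at $z=0$ and only $C^{0,p_kk^2}$, with no compensating factor of $z$. There is then no local model at $z=0$ from which to read off $d_1,d_2$, and with merely H\"older coefficients Frobenius is unavailable anyway. The paper instead transforms the $\partial_s$-transport equation, a combination of both constraints. It becomes algebraic in $\hat m_p$, with every $\partial_z\phi_k$ multiplied by $z$ because $\partial_s$ is a good derivative of the background. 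This yields the model $\mathcal{L}_p$ with constant coefficients $d_i$ plus errors $N_i,h_i$ of order $k(1+|\sigma|)|z|^{1-\epsilon}$, which is what makes the Volterra iteration converge on all of $\mathbb{I}_{[-1-\eta,-\eta]}$ with $O(k)$ control. The price is division by $\sigma-V(z)$ with $0\le V<O(k^2)$, i.e.\ fake poles on the range of $V$. In that formulation your $e^{Cs}$ energy bound is useless: you need the actual decay $e^{(-q_k+k^{1/8})s}$ of Theorem~\ref{thm:a priori energy estimate} to invert on $\Re\sigma=-\eta$. Finally, the paper inverts only the cut-off resolvent $\rho(z)R(\sigma)$ near the axis, with high frequencies controlled by the $|\sigma|^{-3}$ gain from $C^3_s$ forcing. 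It then reaches the whole interior through the weighted second-order energy estimate of Proposition~\ref{prop: key to close the linearized result}. The global $H^{\gamma}_N$ resolvent bounds you invoke for that step are asserted rather than derived.
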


\noindent We will prove these two theorems in Section \ref{leading order expansion}.

\subsection{Trivial mode from the translation invariance}
\label{sec: trivial mode from the translation invariance}
To provide sufficient motivation for our proof outline, in this section, we address one of the main difficulties in the proof, namely, the trivial mode generated by the translation invariance of the Einstein-scalar field equations.

Recall that under the global double-null coordinates $(u,v)$ introduced in Section \ref{prelims:sec}, the interior of the $k$-self-similar naked singularity solution $(r_{k},m_{k},\phi_{k})$ to the spherically symmetric Einstein-scalar field equations has the following behaviors when approaching the singularity $(0,0)$ for $0\leq i+j\leq 1$:\begin{equation*}
    \partial_{u}^{i}\partial_{v}^{j} r_{k}\approx (-u)^{1-i-q_{k}j},\quad\partial_{u}^{i}\partial_{v}^{j}m_{k}\approx (-u)^{1-i-q_{k}j},\quad \partial_{u}^{i}\partial_{v}^{j}\phi_{k}\approx (-u)^{-i-q_{k}j}.
\end{equation*}
Let $(r,m,\phi)$ be the solution of the perturbed initial data and $(r_{p},m_{p},\phi_{p}):=(r-r_{k},m-m_{k},\phi-\phi_{k})$ be the collection of the perturbed quantities. A naive approach, which has been widely employed in the study of nonlinear wave equations, is to construct a suitable vector field and then use the energy method together with Sobolev embedding to show that the perturbed quantities $(r_{p},m_{p},\phi_{p})$ behave better than those of the $k$-self-similar solution. In particular, one has the following expectation \begin{align}
    &\left\vert\partial_{u}^{i}\partial_{v}^{j}r_{p}\right\vert\lesssim (-u)^{1-i-q_{k}j+\epsilon},\quad \left\vert\partial_{u}^{i}\partial_{v}^{j}m_{p}\right\vert\lesssim (-u)^{1-i-q_{k}j+\epsilon},\label{eq: expectation for rp and mp}\\& \left\vert\partial_{u}\phi_{p}\right\vert\lesssim (-u)^{-1+\epsilon},\quad \left\vert\partial_{v}\phi_{p}\right\vert\lesssim (-u)^{-q_{k}+\epsilon},
\end{align}
for $\epsilon$ small enough.

However, since the wave equation for $\phi$ \eqref{eq:wave equation for phi} degenerates at the center $r = 0$, to avoid this degeneracy,  it is convenient to work with $\psi = r\phi$ instead of $\phi$. Then the expectation for $\psi_{p}$ becomes \begin{equation*}
    \left\vert\partial_{u}^{i}\partial_{v}^{j}\psi_{p}\right\vert\lesssim (-u)^{1-i-q_{k}j+\epsilon}.
\end{equation*}

As we have discussed extensively in Section \ref{prelims:sec}, the spherically symmetric Einstein-scalar field equations exhibit two important symmetries: the scaling symmetry and the translation symmetry. These symmetries, on the one hand, play a crucial role in the construction of the $k$-self-similar naked singularities in \cite{chris94}. On the other hand, they create a severe obstacle to demonstrating the nonlinear stability under perturbations above the threshold. To address this, under the $(r,m,\psi)$-formulation of solutions to the Einstein-scalar-field equations, if $(r,m,\psi)$ is a solution, then so is $(r,m,\psi+cr)$ for any $c\in\mathbb{R}$. Although these two solutions are essentially the same up to a translation, they correspond to different initial data for $\psi$, differing by a multiple of $r$. By our gauge choice $r|_{\Sigma_{-1}} = r_{k}\in C^{2}$ on the initial hypersurface, the regularity of this difference lies above the threshold. Consequently, the regularity constraint on our initial perturbation cannot detect this trivial mode generated by the translation symmetry. Since $r$ only behaves like \begin{equation*}
    \partial_{u}^{i}\partial_{v}^{j}r = \partial_{u}^{i}\partial_{v}^{j}r_{k}+\partial_{u}^{i}\partial_{v}^{j}r_{p}\approx (-u)^{1-i-q_{k}j},
\end{equation*}
one can generically establish only the following bound for $\psi_{p}$:
\begin{equation*}
\left\vert\partial_{u}^{i}\partial_{v}^{j}\psi_{p}\right\vert\lesssim (-u)^{1-i-q_{k}j}.
\end{equation*}
Put it differently, the trivial mode $r = r_{k}+r_{p}$ generated by the translation invariance prevents us from obtaining a sharper bound for $\psi_{p}$. Consequently, even for arbitrarily small perturbations of regularity above the threshold, establishing improved behavior for $\psi_{p}$ requires a method to exclude this trivial mode:
\begin{equation*}
    \left\vert\partial_{u}^{i}\partial_{v}^{j}\left(\psi_{p}-cr_{k}\right)\right\vert\lesssim (-u)^{1-i-q_{k}j+\epsilon}.
\end{equation*} 
\subsection{A recap of the study on a linear wave equation}
\label{sec: review on the linear wave}
In fact, even at the level of a linear wave equation on the $k$-self-similar naked singularity spacetime, the same difficulty of excluding the trivial mode presents. If one considers the wave equation for the quantity $\psi = r_{k}\phi$, the linear wave equation will be reduced to \begin{equation}
    \partial_{s}\partial_{z}\psi+q_{k}z\partial_{z}^{2}\psi+q_{k}\partial_{z}\psi+V_{k}(z)\psi = 0\label{eq: linear wave equation for psi}
\end{equation}
under the self-similar coordinates, where the potential term $V_{k}(z)$ is a function only depending on $z$ due to the $k$-self-similarity and has the following expansion:\begin{equation}
    V_{k}(z) = \frac{\mu_{k}}{1-\mu_{k}}\frac{\partial_{u}r_{k}\partial_{v}r_{k}}{r_{k}^{2}} = \omega_{k}k^{2}+O_{k}(z)+O_{k}\left(\left\vert z\right\vert^{\frac{1}{1-k^{2}}}\right).\label{eq: the expansion of Vk}
\end{equation}
Since $\phi = const$ is always a solution to the linear wave equation, the equation \eqref{eq: linear wave equation for psi} admits a trivial solution $\psi = cr_{k}$, which prevents us from seeing a better-than-self-similar rate for $\psi$.

In the work \cite{singh2}, Singh carefully analyzed the behavior of the solution to the linear wave equation and provided a scattering approach to exclude the trivial mode. Since the approach in the present paper is closely related to the work of the first author \cite{singh2}, we give a detailed review here.
\paragraph{Step 1: Backward scattering}
Let $t = -\frac{1}{2}\log(-u)-\frac{1}{2q_{k}}\log(-v)$ and $x = \frac{1}{2}\log(-u)-\frac{1}{2q_{k}}\log(-v)$, which we refer to as hyperbolic coordinates. Then the linear wave equation \eqref{eq: linear wave equation for psi} takes the following standard Cauchy form under $(t,x)$ coordinates \begin{equation*}
    \partial_{t}^{2}\psi-\partial_{x}^{2}\psi+V_{k}(x)e^{-2q_{k}x}\psi = 0.
\end{equation*}
Note that the center $\Gamma$ corresponds to the curve $x = 0$ and the singularity horizon $\{v = 0\}$ corresponds to $x = \infty$ (which can be made rigorous by a suitable compactification process) under the hyperbolic coordinates.

To translate the above characteristic initial value problem to a standard Cauchy data formulation of a wave equation in the one-dimensional scattering theory, we first give free data on the characteristic hypersurface $\{v = 0,\ u\leq-1\}$ and propagate the solution backwards. By choosing the free data properly, one can draw a connection between the regularity of the initial perturbation and the asymptotic tail on the Cauchy hypersurface $\{t = 0\}$: \begin{equation}
    \partial_{t}^{i}\partial_{x}^{j}\psi|_{t = 0}\approx e^{-\alpha q_{k}x},
\end{equation}
where $\alpha$ is the regularity of the initial data. A function space collecting functions with such an asymptotic tail is denoted by $\mathcal{D}_{\infty}^{\alpha,l}$.  The figure \ref{fig:myfigure} depicts the process of the backward scattering.
\begin{figure}
 \centering


\begin{tikzpicture}[scale=1]

\node[circle, label= above:{$\mathcal{O}$}, draw, inner sep =0, minimum size = .2cm] (n2) at (90:6) {};

\fill[gray!30] (90:3) -- (90:6) -- ($(90:3) +  (1.54,-.26)$ ) -- ($(90:3) +  (1,-.15)$ ) -- ($(90:3) +  (.4,-.05)$ );

\fill[gray!60] ($(90:3) + (45:1.35) $ ) --($(90:3) + (45:2.12) $ ) -- (90:6);

\draw[dashed] (90:3) -- ($(90:3) + (45:2.1) $ );

\coordinate (n1) at (0,0) {};

\draw[thick] (n1) -- (n2);
\draw[dashed] (n2) -- ($(n2) + (-45:8) $);

\coordinate (n3) at (90:3);

\draw[ ->] (n3) .. controls ($(n3) + (1,0)  $) and ($(n3) + (3,-.5)  $) .. ($(n2) + (-45:8) + (-.1,0) $);

\node[label = {[yshift=.5mm, xshift=-5mm]below:\scriptsize $\{t=0\}$   }] at ($(n3) + (1,0)  $) {};

\node[label = {[yshift=6mm, xshift=-2mm, rotate=-40]below:\footnotesize $\phi \sim e^{-\alpha q_{k} x}$   }] at ($(n2) + (-45:8) + (-.5,0) $) {};

\draw[dashed] (n2) -- ($(n2) + (-65:6) $ ) ;
\node[label = {[rotate=-65, xshift=-8mm, yshift=-1mm]above:\scriptsize{$\{x = const.\} $} } ] at ($(n2) + (-65:6) $ ) {};

\node[label = {[rotate=-45, xshift=0mm, yshift=-1mm]above:\scriptsize{Free initial data } } ] at ($(n2) + (-45:4) $) {};

\coordinate (b1) at ($(90:5)+(.2,0) $);
\coordinate (b2) at ($(90:5)+(.2,0) + (45:1) $);
\draw[->] (b2) -- (b1);
\node[label={[xshift=-2mm, yshift=1mm]right:\footnotesize{resonance expansion}    }] at (b2) {};

\coordinate (b3) at ($(90:4.4)+(1.1,0) $);
\coordinate (b4) at ($(90:4.4)+(1.1,0) + (45:1) $);
\draw[->]  (b4) -- (b3);
\node[label={[xshift=-2mm, yshift=1mm]right:\footnotesize{multiplier estimates}    }] at (b4) {};

\end{tikzpicture}

 \caption{Backward scattering}
  \label{fig:myfigure}
\end{figure}
The approach to proving the backward scattering is to establish the energy boundedness first and then view the linear wave equation as two transport equations in two characteristic directions to get pointwise asymptotic bounds. \textbf{However, this step is not robust when considering the linearized Einstein-scalar field equations around the $k$-self-similar naked singularity spacetime. In the following, we give a quick guide on redoing the argument in \cite{singh2} purely under the self-similar coordinates.}

\paragraph{Step 2: Scattering theory}
We consider the equation \eqref{eq: linear wave equation for psi} with prescribed initial data on $\{s = 0,\ -1\leq z\leq0\}$ in the function space $\mathcal{C}^{\alpha}$ with $\alpha>p_{k}$. Since the solution is only defined on the upper half plane $s\geq 0$, to extend the solution to the lower half plane and use the standard theory of the Fourier--Laplace transform, we commute the equation \eqref{eq: linear wave equation for psi} with a cut-off function supported on $s\geq 1$. Then the problem is reduced to the following inhomogeneous problem \begin{align*}
    &\partial_{s}\partial_{z}\psi+q_{k}z\partial_{z}^{2}\psi+q_{k}\partial_{z}\psi+V_{k}(z)\psi = \partial_{z}F, \\& \psi(s,z)\equiv0, \quad s\leq 0.
\end{align*}
The forcing term $F$ is determined by the truncation function and the initial data. In particular, we have that $F\in\mathcal{C}^{\alpha}$ is compactly supported in $s$.

Taking the Fourier--Laplace transform, the problem is further reduced to studying the operator: \begin{equation}
  \mathcal{L}[\hat\psi]  = q_{k}z\frac{d^{2}\hat{\psi}}{dz^{2}}+(q_{k}+\sigma)\frac{d\hat{\psi}}{dz}+V_{k}(z)\hat{\psi} = \partial_{z}\hat{F}.\label{eq: equation for scattering theory}
\end{equation}
We follow the standard one-dimensional scattering theory illustrated in \cite{dyatlovzworski} to study the solution to the equation \eqref{eq: equation for scattering theory}.

The asymptotic analysis on the equation $\mathcal{L}[\hat\psi] = 0$ will give a solution basis near $z = 0$: \begin{equation*}
    \hat{\psi}_{out,\sigma}(z)\approx 1,\quad \hat{\psi}_{in,\sigma}(z)\approx (-z)^{-\frac{\sigma}{q_{k}}}.
\end{equation*}
Moreover, we require that the outgoing solution $\hat{\psi}_{out,\sigma}$ and the ingoing solution $\hat{\psi}_{in,\sigma}$ be holomorphic functions of $\sigma$ in the domain of interest $\Re\sigma>-\frac{3}{2}$ for fixed $z$. On the one hand, the solution $\psi$ to \eqref{eq: linear wave equation for psi} should be regular at the center, i.e., $\psi$ satisfies the Dirichlet boundary condition $\psi|_{z = -1} = \hat{\psi}|_{z = -1} = 0$. On the other hand, for $\Re\sigma>-1$, the regularity of the initial data of $\psi\in\mathcal{C}^{\alpha}$ with $\alpha>p_{k}$ requires $\hat\psi$ to satisfy the outgoing boundary condition. Hence, if the Dirichlet solution $\hat{\psi}_{dir,\sigma}$ and the outgoing solution $\hat{\psi}_{out,\sigma}$ to $\mathcal{L}[\hat{\psi}] = 0$ can be constructed and are not parallel, we can use Green's formula to write down the \textit{scattering resolvent}, i.e., a solution to \eqref{eq: equation for scattering theory} with the Dirichlet boundary condition and the outgoing boundary condition: \begin{align*}
    \hat{\psi}(\sigma,z) = R(\sigma)[\hat{F}] =& \hat{\psi}_{out,\sigma}(z)\int_{-1}^{z}\frac{1}{q_{k}(-z^{\prime})}\frac{\partial_{z}\hat{F}\hat{\psi}_{dir,\sigma}}{\mathcal{W}[\hat{\psi}_{dir,\sigma},\hat{\psi}_{out,\sigma}]}(z^{\prime})dz^{\prime}+\hat{\psi}_{dir,\sigma}\int_{z}^{0}\frac{1}{q_{k}(-z^{\prime})}\frac{\partial_{z}\hat{F}\hat{\psi}_{out,\sigma}}{\mathcal{W}[\hat{\psi}_{dir,\sigma},\hat{\psi}_{out,\sigma}]}dz^{\prime}.
\end{align*}
where $\mathcal{W}[f,g] = f^{\prime}g-fg^{\prime}$ is the Wronskian of $f$ and $g$. However, in the whole domain of interest $\Re\sigma>-\frac{3}{2}$, the above expression might not be integrable near $z = 0$. This can be resolved by the integration by parts. We have \begin{equation*}
    \hat{\psi}(\sigma,z) = R(\sigma)[\hat{F}] =\hat{\psi}_{out,\sigma}\int_{-1}^{z}\partial_{z}\left(\frac{\hat{\psi}_{dir,\sigma}}{\hat{\psi}_{out,\sigma}}\right)G(z^{\prime})\partial_{z}\hat{F}(z^{\prime})+\frac{\hat{\psi}_{dir,\sigma}}{\hat{\psi}_{out,\sigma}}G(z^{\prime})\partial_{z}^{2}\hat{F}dz^{\prime}+\hat{\psi}_{dir,\sigma}\int_{z}^{0}G(z^{\prime})\partial_{z}^{2}\hat{F}(z^{\prime})dz^{\prime},
\end{equation*}
where $G(z)^{\prime} = \frac{1}{q_{k}z}\frac{\hat{\psi}_{out,\sigma}}{\mathcal{W}[\hat{\psi}_{dir,\sigma},\hat{\psi}_{out,\sigma}]}$.

Moreover, the scattering resolvent $R(\sigma)[\hat{F}]$ can be viewed as a meromorphic function of $\sigma$ on $\Re\sigma>-\frac{3}{2}$, with poles at the values of $\sigma$ where $\hat{\psi}_{dir,\sigma}$ is linearly dependent on $\hat{\psi}_{out,\sigma}$. Such values of $\sigma$ are called the \textit{scattering resonances}.

The original information of $\psi$ can be recovered by taking the inverse Fourier--Laplace transform in a suitable contour where $R(\sigma)[\hat{F}]$ is well-defined
\begin{equation*}
    \psi(s,z) = \int_{\Re\sigma = \frac{1}{10}}R(\sigma)[\hat{F}]d\sigma.
\end{equation*}
\paragraph{Step 3: Construction of the outgoing and ingoing solutions via the Volterra iteration}
However, although the asymptotic analysis will immediately give the asymptotic behaviors of the outgoing solution and the ingoing solution, a rigorous mathematical construction of the outgoing and ingoing solutions is highly nontrivial. Since the construction itself will give insight into the location of the scattering resonance, we give a sketch here.

 We notice that the equation $\mathcal{L}[\hat{\psi}] = 0$ in the scattering analysis is a singular equation with a regular singularity $z = 0$. In the standard theory of the regular singularity, the zeroth-order term in $\mathcal{L}[\hat\psi] = 0$ is viewed as a lower-order term. This motivates us to consider the following equation as an approximation equation
\begin{equation}
    q_{k}z\frac{d^{2}\hat{\psi}}{dz^{2}}+\left(q_{k}+\sigma\right)\frac{d\hat{\psi}}{dz} = 0,\label{eq: toy mode equation for a single wave equation}
\end{equation}
whose solution basis can be easily found: $\{1,\ (-z)^{-\frac{\sigma}{q_{k}}}\}$.

Treating the zeroth-order term with potential $V_{k}(z)$ as an error term and using the exact solutions to \eqref{eq: toy mode equation for a single wave equation}, we can write the outgoing solution to $\mathcal{L}[\hat\psi] = 0$ in the following integral form by using Green's formula: 
\begin{equation}
\begin{aligned}
    \hat{\psi}(\sigma,z) =& 1+\int_{z}^{0}\underbrace{\left(1-\left(\frac{(-z^{\prime})}{(-z)}\right)^{\frac{\sigma}{q_{k}}}\right)\frac{V_{k}(z^{\prime})}{\sigma}}_{P(z,z^{\prime})}\hat{\psi}(\sigma,z^{\prime})dz^{\prime}\\=&
    1+\sum_{n = 0}^{\infty}\int_{z}^{\infty}\int_{z_{1}}^{\infty}\cdots\int_{z_{n-1}}^{\infty}P(z,z_{1})P(z_{1},z_{2})\cdots P(z_{n-1},z_{n})dz_{1}\cdots dz_{n},
\end{aligned}
\label{eq: volterra iteration in the toy model equation}
\end{equation}
where the second identity is from iterating the first identity, and it is called the \textit{Volterra iteration}. The ingoing solution can be constructed similarly. However, the above iteration will only be convergent in the domain $\Re\sigma>-q_{k}$, which does not cover the whole domain of interest $\Re\sigma>-\frac{3}{2}$ for our later analysis. In fact, the lower bound for $\Re\sigma$ in the domain of the convergence is closely related to the asymptotic tail of the error term. To extend the domain of existence of outgoing and ingoing solutions in our Volterra iteration, we take out the leading order expansion of $V_{k}$ and put it into the approximation equations \begin{equation}
    q_{k}z\frac{d^{2}\hat{\psi}}{dz^{2}}+(q_{k}+\sigma)\frac{d\hat{\psi}}{dz}+\omega_{k}k^{2}\hat{\psi} = 0. \label{eq: second toy model equation}
\end{equation}
Since the resulting error term has a faster decay rate when $z\rightarrow 0$, the domain of convergence in the corresponding Volterra iteration will be larger and sufficient for our later analysis.

To construct outgoing solutions to \eqref{eq: second toy model equation}, we expand the solution into power series:\begin{equation*}
    \hat\psi = \sum_{n = 0}^{\infty}a_{n}(-z)^{n},
\end{equation*} 
and then plug the expansion into the equation to obtain the following recurrence relation: \begin{equation}
    \left(n+\frac{\sigma}{q_{k}}\right)na_{n} = \frac{\omega_{k}k^{2}}{q_{k}}a_{n-1}.\label{eq: recurence relation in outline of the proof}
\end{equation}
Hence, given the value of $a_{0}$, we can determine $a_{n}$ for $n\geq 1$ inductively. The outgoing and ingoing solutions to $\mathcal{L}[\hat{\psi}] = 0$ will be given by the standard Volterra iteration as in \eqref{eq: volterra iteration in the toy model equation}.
\paragraph{Step 4: Reading the location of the scattering resonance from the recurrence relation}
Since only the coefficient for $a_{1}$ in the above recurrence relation will degenerate in our domain of interest $\Re\sigma>-\frac{3}{2}$, viewing each $a_{n}$ as a function of $\sigma$, this degeneration means $a_{1}$ is a meromorphic function of $\sigma$ with a single pole at $\sigma = -q_{k}$, and so for each $a_{n}$ with $n\geq 1$. However, this single pole can be eliminated by renormalizing the value of $a_{0}$. Choosing $a_{0} = 1+\frac{\sigma}{q_{k}}$, one can immediately deduce from the recurrence relation that each $a_{n}$ is a holomorphic function of $\sigma$ in the domain $\Re\sigma>-\frac{3}{2}$. Heuristically, since $k$ is taken to be sufficiently small, one can think that the outgoing solution to $\mathcal{L}[\hat\psi] = 0$ is well-approximated by $\hat{\psi}_{out,\sigma}^{(approx)}(z):= a_{0}= 1+\frac{\sigma}{q_{k}}$. Computing the value of $\hat{\psi}_{out,\sigma}^{(approx)}(-1) = 1+\frac{\sigma}{q_{k}}$ at the center $\Gamma$, which has a single zero point at $\sigma = -q_{k}$, suggests that the location of the scattering resonance will be in a neighborhood of $\sigma = -q_{k}$ and has a single multiplicity. In the work \cite{singh2}, an argument using complex analysis validates this. To determine the exact location of the scattering resonance, plugging the trivial mode solution $\psi = r_{k} = e^{-s}\mr{r}(z)$ into the equation \eqref{eq: linear wave equation for psi}, one can show that $\mr{r}(z)$ is the Dirichlet solution to $\mathcal{L}[\hat{\psi}]$ satisfying the outgoing boundary condition at $\sigma= - 1$.
\begin{remark}
    The $k$-smallness of the potential term $V_{k}$ is used here to determine the location and the multiplicity of the scattering resonance. Since $V_{k}$ is globally $k$-small, we can use $\hat{\psi}_{out,\sigma}^{(approx)}$ to approximate the outgoing solution $\hat{\psi}_{out,\sigma}$. A rigorous argument can show that, the difference between $\hat{\psi}_{out,\sigma}$ and $\hat{\psi}_{out,\sigma}^{(approx)}$ is at most of size $k$.\label{rmk: esstential k smallness}
\end{remark}
\begin{remark}
\label{rmk: the role of the regularity}
    In the rigorous construction of the outgoing solution, the regularity aspect of $V_{k}$ also plays a crucial role. In view of \eqref{eq: the expansion of Vk}, the sub-leading term in the expansion of $V_{k}$ is of order $O(z)$. Therefore, after taking out the leading order term of $V_{k}$ and constructing a solution basis of \eqref{eq: second toy model equation}, the sub-leading behavior of $V_{k}$ ensures convergence of the Volterra iteration in the region $\Re\sigma>-\frac{3}{2}$. However, if the regularity of $V_{k}$ was only $C^{0,\epsilon}$, the sub-leading behavior would only be $O((-z)^{\epsilon})$. In this case, the above argument breaks down, as removing the leading-order term of $V_{k}$ would no longer suffice to guarantee convergence of the Volterra iteration in the domain $\Re\sigma>-\frac{3}{2}$.
\end{remark}
\paragraph{Step 5: Contour deformation and proof concluded}
After determining the location of the scattering resonance and its multiplicity, a standard approach in the one-dimensional scattering theory is to take the inverse Fourier--Laplace transform in the contour $\Re\sigma = \frac{1}{10}$ (the parts $\Gamma_{1,R}^{-}$, $\Gamma_{1,R}^{+}$, and the dashed line in Figure \ref{fig: contour deformation figure}). Since we know the only pole is at $\sigma = -1$, we deform the contour as in Figure \ref{fig: contour deformation figure}. By the residue theorem, the circular contour around $-1$ will be used to exclude the trivial mode $cr_{k}$, and the rest of the contour will give a better-than-self-similar decay rate. 
\begin{figure}[h]

\centering
\begin{tikzpicture}[scale=1.2]

\def\R{3}
\def\eta{-0.8}
\def\beta{1.8}
\def\eps{0.3}

\draw[->] (-4,0) -- (4,0) node[right] {$\Im \sigma$};
\draw[->] (0,-3) -- (0,1) node[above] {$\Re \sigma$};

\draw[thick]
(0,-1) circle (\eps);
\node at (0.4,-0.7) {$\Gamma_\epsilon$};

\draw[thick]
(\R,-\eta) -- (4,-\eta);
\node[right] at (4,-\eta) {$\Gamma_{1,R}^+$};

\draw[thick]
(-\R,-\eta) -- (-4,-\eta);
\node[left] at (-4,-\eta) {$\Gamma_{1,R}^-$};

\draw[thick]
(\R,-\eta) -- (\R,-\beta+\eps);
\node[right] at (\R,-1.3) {$\Gamma_{2,R}^+$};

\draw[thick]
(-\R,-\eta) -- (-\R,-\beta+\eps);
\node[left] at (-\R,-1.3) {$\Gamma_{2,R}^-$};

\draw[thick]
(-\R,-\beta+\eps) -- (\R,-\beta+\eps);
\node[below] at (0,-\beta+\eps) {$\Gamma_{3,R}$};

\fill (0,-1) circle (1.5pt);
\node[left] at (0,-1) {$-1$};
\draw[dashed] (-\R,-\eta)--(\R,-\eta);
\end{tikzpicture}
\caption{Oriented contours in the complex plane}
\label{fig: contour deformation figure}
\end{figure}

\subsection{Analysis on the linearized operator}
\label{sec: analysis on the linearized operator}
As mentioned before, a key step in proving the nonlinear stability is a careful analysis of the linearized Einstein-scalar field equations, denoted by $\mathcal{P}(r_{p},\Psi_{p},m_{p}) = 0$. We follow the same framework as in \cite{singh2}. In this section, we point out the main difficulties and their resolutions.

\subsubsection{Linear analysis for perturbations at the threshold}
One of the advantages of using $(r,\Psi,m)$ formulation is that all three quantities respect the scaling invariance of the Einstein-scalar field equations, i.e., if $(r,\Psi,m)$ is a solution, then so is $(ar,a\Psi,am)$ for any $a\in\mathbb{R}$. This invariance naturally induces a non-trivial kernel for the linearized operator $\mathcal{P}$, namely $\mathcal{P}(r_{k},\Psi_{k},m_{k}) = 0$. Therefore, for the linearized Einstein-scalar field equations with initial data at the threshold, using Lemma \ref{lemma: decomposition of the function spaces} and subtracting the kernel arising from the scaling invariance, we can reduce the problem to perturbations above the threshold. Therefore, in the following sections, we focus on perturbations above the threshold.

\subsubsection{The role of the energy estimate}
Although we have mentioned that it is impossible to prove a good upper bound for $\Psi$ by only using the energy estimate without excluding the trivial mode solution in Section \ref{sec: trivial mode from the translation invariance}, the vector field method still plays a role here. 

First, even at the level of the linear wave equation, the scattering approach emphasized in the previous section can only prove a better bound in the near-axis region. To extend the result to the entire interior region, one has to appeal to the energy estimate. The same situation also appears for the linearized Einstein-scalar field equations.

Second, applying the inverse Fourier--Laplace transform requires an a priori $L^{\infty}$-estimate for the solution. For the linear wave equation, this a priori estimate can be obtained without invoking any energy estimates, since the contour lies in the right half-plane. In the linearized Einstein-scalar field equations, however, some fake modes appear in the right half-plane; see Section \ref{sec: difficulty for the scattering} for an overview of the presence of these fake modes. Using the vector field method and Sobolev embedding, we obtain some non-trivial decay (Theorem \ref{thm:a priori energy estimate}), allowing the contour to be placed in the left half-plane and avoiding the fake modes.

In the energy estimate, although the vector $\partial_{s}$ in the self-similar coordinates corresponds to the natural conformal Killing vector field $K$ in the $k$-self-similar naked singularity spacetime, it turns out that $\partial_{s}$ is not useful in obtaining energy estimates. In fact, we use the null vector field $\partial_{z}$ extensively.

Our basic strategy in the energy estimate is to construct suitable multipliers for the wave equations for $r_{p}$ and $\Psi_{p}$. For $m_{p}$, we always use the $z$-transport equation \eqref{eq:v transport eq for m under self-similar coordinates} of $m_{p}$ to control the $L^{2}$ norm of $m_{p}(s,\cdot)$ on $z\in[-1,0]$ in terms of the $L^{2}$-norm of $r_{p}$ and $\Psi_{p}$.

To close the first-order energy estimate for $r_{p}$ and $\Psi_{p}$, one has to control the $L^{2}$-norm of $\frac{m_{p}}{(z+1)^{2}}$ by the first-order derivatives of $r_{p}$ and $\Psi_{p}$. However, near the axis, the quantity $\frac{m_{p}}{(z+1)^{2}}\approx \partial_{z}^{2}m_{p}(s,-1)$ in the sense of $L^{\infty}$-norm and $\partial_{z}^{2}m_{p}$ can only be controlled by the second-order derivatives of $r_{p}$ and $\Psi_{p}$. Consequently, it will be potentially dangerous to estimate $\frac{m_{p}}{(z+1)^{2}}$ without losing derivatives. However, by a careful use of the Hardy inequality, Fubini theorem, and the explicit form of $m_{p}$, one can bound the $L^{2}$-norm of $\frac{m_{p}}{(z+1)^{2}}$ by the $L^{2}$-norm of $\partial_{z}r_{p}$ and $\partial_{z}\Psi_{p}$; see Section \ref{sec: first-order energy estimate for mp} for details.

\subsubsection{Main difficulty in the scattering theory for the linearized Einstein-scalar field equations}
\label{sec: difficulty for the scattering}
Besides the technical difficulties arising in the energy estimates, additional substantial obstacles emerge in the scattering analysis of the linearized Einstein-scalar field equations.
\paragraph{Problematic $\partial_{v}\phi_{k}$ term}
The linear wave equation \eqref{eq: linear wave equation for psi} on the $k$-self-similar metric is only a toy model for the nonlinear problem or the linearized problem, since it does not capture the most dangerous behavior generated by $\partial_{v}\phi_{k}$ in the $k$-self-similar naked singularity spacetime. The presence of $\partial_{v}\phi_{k}$ is problematic for two reasons:\begin{itemize}
    \item (Loss of global smallness) In the linear wave equation \eqref{eq: linear wave equation for psi}, due to the absence of $\partial_{v}\phi_{k}$ and the presence of $\mu_{k}$, the potential term $V_{k}$ is a globally small quantity. This global smallness is used in an essential way in \cite{singh2} to determine the location of the scattering resonance (see Remark \ref{rmk: esstential k smallness}). However,  as we have mentioned in Section \ref{sec: pre on k naked singularity sapcetime}, $\partial_{v}\phi_{k}$ behaves like $\frac{1}{k}$ near the singular horizon, which is a large quantity if we take $k$ to be sufficiently small.
    \item (Bad regularity) As opposed to Remark \ref{rmk: the role of the regularity}, the quantity $\partial_{v}\phi_{k}$ has only $C^{0,p_{k}k^{2}}$ regularity, which generates a bad sub-leading term in the expansion. Consequently, this lack of enough regularity creates serious difficulties in constructing outgoing and ingoing solutions.
\end{itemize}

\paragraph{Non-trivial coupling between the geometry and the scalar field}
In the a priori energy estimates, the difficulties do not arise from the number of equations. That is, although we deal with two wave equations, once we understand how to do the energy estimate for the wave equation for $r_{p}$, the corresponding energy estimate for $\Psi_{p}$ will follow similarly. Surprisingly, in the analysis of the scattering resonance, the presence of coupled multiple equations introduces additional challenges. Similar to the analysis in Section \ref{sec: review on the linear wave}, the trivial solution arising from the translation symmetry to the linearized Einstein-scalar field equations $(r_{p},\Psi_{p},m_{p}) = (0,r_{k},0)$ will give scattering resonance at $\sigma = -1$. To exclude other scattering resonances, heuristically, in the construction of the outgoing solution to \eqref{eq: second toy model equation}, the renormalization factor $1+\frac{\sigma}{q_{k}}$ in the recurrence relation~\eqref{eq: recurence relation in outline of the proof} suggests that $\sigma = -1$ should be the only scattering resonance. 

To illustrate the difference between one equation and a system of equations, we consider the following toy model equations: \begin{equation}
    q_{k}z\frac{d^{2}f}{dz^{2}}+(1+\sigma)\frac{df}{dz}+\omega_{1}k^{2}f = 0,\quad q_{k}z\frac{d^{2}g}{dz^{2}}+\left(q_{k}+\sigma\right)\frac{dg}{dz}+\omega_{2}k^{2}g = 0.\label{eq: decoupled toy model equations}
\end{equation}
{The coefficients in these equations are obtained from the linearized Einstein-scalar field equations after neglecting the coupling terms.}

The corresponding Dirichlet solutions, should be a two-dimensional vector space with a basis $\{(f_{1},g_{1}),\ (f_{2},g_{2})\}$, where $(f_{1},g_{1})$ and $(f_{2},g_{2})$ satisfy the boundary conditions $$(f_{1},f_{1}^{\prime},g_{1},g_{1}^{\prime})|_{z =  -1} = (0,1,0,0),\quad(f_{2},f_{2}^{\prime},g_{2},g_{2}^{\prime})|_{z = -1} = (0,0,0,1),$$ respectively. The corresponding outgoing solutions should also be a two-dimensional vector space with a basis $\{(f_{out,\sigma},0),(0,g_{out,\sigma})\}$. The generalized notion of the scattering resonance should be the point of $\sigma$ where the determinant of the matrix \begin{equation*}
    \begin{bmatrix}
        f_{1}&f_{2}&f_{out,\sigma}&0\\g_{1}&g_{2}&0&g_{out,\sigma}\\f_{1}^{\prime}&f_{2}^{\prime}&f_{out,\sigma}^{\prime}&0\\g_{1}^{\prime}&g_{2}^{\prime}&0&g_{out,\sigma}^{\prime}
    \end{bmatrix}
\end{equation*}
is zero at $z = -1$. Using the Dirichlet boundary conditions, this can be translated to \begin{equation*}
    f_{out,\sigma}g_{out,\sigma} (-1)= 0.
\end{equation*}
Since in the toy model equations, $f$ and $g$ decouple, using the recurrence relation to construct the corresponding outgoing solutions, renormalizing the leading order coefficients to avoid the presence of poles in the domain $\Re\sigma>-\frac{3}{2}$, and taking $k$ to be sufficiently small, we can derive \begin{equation*}
    f_{out,\sigma}(0)\approx 1+\frac{\sigma+k^{2}}{q_{k}},\quad g_{out,\sigma}(0)\approx 1+\frac{\sigma }{q_{k}}.
\end{equation*}
This will suggest that the equation for determining the scattering resonance $f_{out,\sigma}g_{out,\sigma}(0) = 0$ is very likely to have two zero points in the domain $\Re\sigma>-\frac{3}{2}$, resulting in the failure of the original argument in \cite{singh2}.

{Since the equations in the above toy model problem decouple, the preceding argument suggests that, when the equations are considered individually, each equation has only one scattering resonance in the region $\Re\sigma>-\frac{3}{2}$. However, when the equations are considered as a system, the same argument suggests that there are essentially two scattering resonances, which may or may not be located at the same point; in the former case, this corresponds to a scattering resonance of multiplicity two. In the context of the linearized Einstein-scalar field equations, which are a system of coupled wave equations, the corresponding difficulty is expected to arise from the nontrivial coupling between the geometric quantity $r$ and the scalar field $\phi$.} On the one hand, if stability holds, one expects this system of wave equations to admit only one single scattering resonance from the translation symmetry of the Einstein-scalar field equations. On the other hand, the toy model suggests \textbf{a possible additional instability mechanism arising from the interaction between the geometry and the scalar field}, resulting in the presence of multiple scattering resonances.
\subsubsection{Formulation of the scattering analysis}
\label{sec: formulation of the scattering analysis}
In this section, we will state the formulation of the scattering analysis in terms of the self-similar coordinates $(s,z)$; see \eqref{eq:scattering equation for rp}-\eqref{scattering equation for mp} for the precise equations. Under the self-similar coordinates $(s,z)$, the trivial mode from the translation invariance will be $\sigma = -1$ and the domain of interest of $\sigma$ will be $\Re\sigma>-\frac{3}{2}$.

Under the $(r_{p},\Psi_{p},m_{p})$-formulation of the linearized Einstein-scalar field equations, we need to deal with the wave equation for $r_{p}$, the wave equation for $\Psi_{p}$, and two transport equations for $m_{p}$. {To apply the framework in~\cite{singh2}, it is natural to derive a wave equation for $m_{p}$ from those two transport equations for $m_{p}$~\eqref{eq:u-equ for m}--\eqref{eq:v-equ for m}. The resulting wave equation for $m_{p}$ will take the form of }\begin{equation*}
    \partial_{u}\partial_{v}m_{p}+\cdots = \frac{r_{k}}{\partial_{v}r_{k}}\partial_{u}m_{k}\partial_{v}\phi_{k}\partial_{v}\phi_{p}+\cdots.
\end{equation*}
However, this equation will lead to severe complications because $\partial_{v}\phi_{k}$ is present in the scattering analysis, {in view of the discussion in Section~\ref{sec: difficulty for the scattering}, Remark~\ref{rmk: esstential k smallness}, and Remark~\ref{rmk: the role of the regularity}.}

An explanation for why this complication happens is that by only looking at three wave equations, one cannot retrieve the original important information carried by the constraint equations. Hence, we keep the original two wave equations and two transport equations in the formulation and directly apply the Fourier--Laplace transform. It turns out that the $\partial_s$-transport equation for $m_{p}$, after the Fourier--Laplace transform, will be an algebraic equation in the following form:
\begin{equation}
\label{eq: schematic equation for mphat}
\begin{aligned}
    (\sigma +\text{Terms on the background})\hat{m}_{p} =&(\cdots)z\partial_{z}\phi_{k}\partial_{z}\hat{\Psi}_{p}+(\cdots)z\partial_{z}\phi_{k}\partial_{z}\hat{r}_{p}+(\cdots)z\partial_{z}\phi_{k}\hat{\Psi}_{p}+(\cdots)z\partial_{z}\phi_{k}\hat{r}_{p}\\&+ \text{Other terms only involving ${\hat{r}}_{p}$ and $\hat{\Psi}_{p}$ without $\partial_{z}\phi_{k}$},
\end{aligned}
\end{equation}
where $\cdots$ represents terms depending on the background that do not involve $\phi_{k}$. We can directly express $\hat{m}_{p}$ by $\hat{r}_{p}$ and $\hat{\Psi}_{p}$ from this algebraic equation and then substitute the $\hat{m}_{p}$ terms in differential equations for $\hat{r}_{p}$ and $\hat{\Psi}_{p}$ by this expression. The coefficient before $\hat{m}_{p}$ in the above algebraic equation will be zero for some points on the right half-plane of $\sigma$. We call these points \textit{fake modes}, and they will be excluded by restricting our domain of interest in the left half-plane of $\sigma$. This aspect is different from the work \cite{singh2}, where no fake mode appears and the inverse Fourier--Laplace transform is performed in the upper half-plane. The consequence of this formulation is that we only need to analyze the differential equations for $\hat{r}_{p}$ and $\hat{\Psi}_{p}$ in the scattering analysis.

\subsubsection{Resolution to the presence of $\partial_{v}\phi_{k}$}

{In view of the discussion in Section~\ref{sec: review on the linear wave} on the linear wave equation, Remark~\ref{rmk: esstential k smallness}, and Remark~\ref{rmk: the role of the regularity}, the presence of the problematic $\partial_{v}\phi_{k}$ terms essentially obstructs the construction of the outgoing and ingoing solutions to the corresponding ODE system after taking the Fourier--Laplace transform. For convenience, we focus only on the outgoing solutions here. The way to resolve the difficulty caused by the $\partial_{v}\phi_{k}$ terms in the construction of outgoing solutions relies on the following three observations.}

\paragraph{Observation 1: good regularity in the $\partial_{s}$-direction}
By the nature of the $k$-self-similar naked singularity spacetime (Proposition~\ref{prop: estimate on the exact k self similar spacetime}), the $\partial_{v}$ derivative, or equivalently the $\partial_{z}$ derivative, should be regarded as a bad derivative, due to the $\frac{1}{k}$-largeness of $\partial_{v}\phi_{k}$ and its poor regularity on the singular horizon $v=0$. However, since $u\partial_{u}$ agrees with the conformal Killing vector field
\[
    K = u\partial_{u}+v\partial_{v}=\partial_{s}
\]
of $(\mathcal{M}_{k},g_{k},\phi_{k})$ on the singular horizon, the quantity $u\partial_{u}\phi_{k}$ is in fact smooth there and enjoys $k$-smallness, by the $k$-self-similarity with $0<k^{2}\ll1$. Therefore, the vector field $\partial_{u}$ in double-null coordinates and the conformal Killing vector field $\partial_{s}$ in self-similar coordinates should be viewed as good derivatives. This is analogous to the standard theory of linear wave equations, where vector fields transverse to the light cone are regarded as bad derivatives.

\paragraph{Observation 2: advantage of the formulation in Section~\ref{sec: formulation of the scattering analysis}}
In the formulation of the scattering analysis introduced in Section~\ref{sec: formulation of the scattering analysis}, the differential equations for $\hat{r}_{p}$ and $\hat{\Psi}_{p}$; see \eqref{eq:scattering equation for rp}--\eqref{eq:scattering equation for psip}, do not contain terms involving $\partial_{v}\phi_{k}$ directly. The $\partial_{v}\phi_{k}$ terms enter the system only through the equation for $\hat{m}_{p}$ \eqref{scattering equation for mp}. However, this equation is obtained by taking the Fourier--Laplace transform of $\partial_{s}m_{p}$, which is a good derivative of $m_{p}$ in the sense of Observation~1. Thus, this formulation turns out to overcome the difficulty caused by the problematic $\partial_{v}\phi_{k}$ terms. Indeed, by Proposition~\ref{prop: estimate on the exact k self similar spacetime}, in self-similar coordinates we have
\begin{equation*}
    \left|\partial_{z}\phi_{k}\right|\lesssim \frac{1}{k},\qquad
    \left|\partial_{z}\phi_{k}\right|\lesssim \frac{k}{\epsilon}(-z)^{-\epsilon},\qquad
    \left|z\partial_{z}\phi_{k}\right|\lesssim \frac{k}{\epsilon}(-z)^{1-\epsilon},
\end{equation*}
where one recovers $k$-smallness for $\partial_{z}\phi_{k}$ at the cost of a loss in the $z$-weight. Moreover, on the right-hand side of~\eqref{eq: schematic equation for mphat}, the terms involving $\partial_{z}\phi_{k}$ always come with a compensating $z$-weight. Therefore, using the formulation introduced in Section~\ref{sec: formulation of the scattering analysis}, the ODEs for $\hat{r}_{p}$ and $\hat{\Psi}_{p}$ can be systematically written in the form
\begin{equation*}
    z\frac{d^{2}f}{dz^{2}}
    +\bigl(c_{1}+E_{1}(\sigma,z)\bigr)\frac{df}{dz}
    +\bigl(c_{2}+E_{2}(\sigma,z)\bigr)f=0,
\end{equation*}
with $\vert E_{1}\vert\lesssim k(-z)^{1-\epsilon}$ and $\vert E_{2}\vert\lesssim k(-z)^{1-\epsilon}$.

\paragraph{Observation 3: correct order of the lower-order term}
Consider a model ODE with a regular singularity of the form
\begin{equation}
    z\frac{d^{2}f}{dz^{2}}+(c_{1}+E_{1}(\sigma,z))\frac{df}{dz}+(c_{2}+E_{2}(\sigma,z))f  = 0.
    \label{eq: toy model single equation}
\end{equation}
The construction of outgoing and ingoing solutions for the linear wave equation in Section~\ref{sec: review on the linear wave} shows that, if $E_{1}\equiv 0$ and $\vert E_{2}\vert\lesssim k(-z)^{1-\epsilon}$, then the solutions to the approximate equation obtained by neglecting $E_{2}$ provide good approximations to the outgoing and ingoing solutions of the original equation via the Volterra iteration, throughout the region $\Re\sigma>-\frac{3}{2}$. If $E_{1}\neq 0$ and one assumes that a solution to \eqref{eq: toy model single equation} behaves asymptotically like $(-z)^{\alpha}$ as $z\rightarrow0$ for some constant $\alpha$, then one needs $\vert E_{1}\vert\lesssim k(-z)^{2-\epsilon}$ in order for the term $E_{1}\frac{df}{dz}$ to be of the same order as $E_{2}f$, and hence to be treatable as an error term in the Volterra iteration. However, the asymptotic behavior of the outgoing solution $f_{\mathrm{out}}$ to \eqref{eq: toy model single equation} is actually
\begin{equation*}
    f_{\mathrm{out}}\approx 1,\qquad \frac{df_{\mathrm{out}}}{dz}\approx 1.
\end{equation*}
Therefore, the same computation shows that $E_{1}$ is allowed to satisfy the rougher bound $\vert E_{1}\vert\lesssim k(-z)^{1-\epsilon}$ when one considers outgoing solutions.

Combining the ideas from the above three observations and carefully running the Volterra iteration for constructing outgoing solutions, we can construct the desired outgoing and ingoing solutions to the linearized Einstein--scalar field operator; see Section~\ref{sec: construction of outgoing solutions}. Finally, we conclude that the resolution of the difficulty caused by the problematic $\partial_{v}\phi_{k}$ terms ultimately relies on the favorable $\partial_{s}$-regularity of the background $k$-self-similar spacetime. Beyond the technical observations in this section, the analysis of the scattering resolvent and resonances is naturally associated with the $\partial_{s}$ vector field in self-similar coordinates, with respect to which the background quantities exhibit good regularity.
\subsubsection{Resolution to the coupling between the geometry and the scalar field: cancellation from the expansion at the singular horizon}
The second difficulty from the interaction between the geometric quantity $r$ and the scalar field $\phi$ in the scattering analysis can be resolved by a cancellation from the expansion of the $k$-self-similar naked singularity spacetime at the singular horizon. 

{For a system of decoupled equations, such as the toy model problem~\eqref{eq: decoupled toy model equations}, one cannot avoid the presence of multiple scattering resonances, since each equation directly contributes a scattering resonance. However, for a system of coupled equations with suitable coefficients, cancellations may occur, so that only a single scattering resonance remains. To illustrate this, we consider the following toy model problem with two coupled equations}: \begin{equation}
    q_{k}z\frac{d^{2}f}{dz^{2}}+(1+\sigma)\frac{df}{dz}+(1+\sigma)k^{2}g+(1+\sigma)k^{2}f = 0,\quad q_{k}z\frac{d^{2}g}{dz^{2}}+(q_{k}+\sigma)\frac{dg}{dz}+\alpha k^{2}f = 0.\label{eq: toy model for coupled equations}
\end{equation}
Assume that the corresponding outgoing solutions take the form of \begin{equation*}
    f = \sum a_{n}(-z)^{n},\quad g = \sum A_{n}(-z)^{n}.
\end{equation*}
Plugging the above expressions into~\eqref{eq: toy model for coupled equations}, we can get the following analogous recurrence relations as in~\eqref{eq: recurence relation in outline of the proof} \begin{equation*}
    n(q_{k}n+k^{2}+\sigma)a_{n} = (1+\sigma)k^{2}A_{n-1}+(1+\sigma)k^{2}a_{n-1},\quad n(q_{k}n+\sigma)A_{n} = \alpha k^{2}a_{n-1}.
\end{equation*}
{Therefore, each outgoing solution can be determined if one specifies the values of $a_{0}$ and $A_{0}$. Just as in~\eqref{eq: recurence relation in outline of the proof}, to make each $a_{n}$ and $A_{n}$ holomorphic functions of $\sigma$ for $\Re\sigma>-\frac{3}{2}$, one has to renormalize the values of $a_{0}$ and $A_{0}$. In this toy model problem, a solution basis of the outgoing solutions can be obtained by choosing $(a_{0},A_{0}) = (1+\sigma,0)$ and $(a_{0},A_{0}) = (0,1)$, respectively. One can easily check that for these choices of $(a_{0},A_{0})$, the resulting sequence $(a_{n},A_{n})$ depends holomorphically on $\sigma$ when $\Re\sigma>-\frac{3}{2}$. Moreover, as we have discussed in Section~\ref{sec: review on the linear wave}, this suggests that there is only one scattering resonance for this system. We remark that the choices of $(a_{0},A_{0})$ in this toy model problem crucially rely on the coefficients in~\eqref{eq: toy model for coupled equations}.}

Now, let us go back to the linearized Einstein-scalar field equations. Using the formulation of the scattering analysis mentioned above, we only need to deal with two coupled differential equations for $\hat{r}_{p}$ and $\hat{\Psi}_{p}$. In the study of the linear wave \cite{singh2}, the explicit expansion of $V_{k}$ does not matter, i.e., we do not need the explicit value of $\omega_{k}$ in the expansion \eqref{eq: the expansion of Vk}. However, to deal with the danger of producing multiple scattering resonances arising from the coupling of $r$ and $\phi$, we expand various gauge invariant quantities that appear in the equations \eqref{eq:scattering equation for rp}-\eqref{scattering equation for mp} at the singular horizon $\{z = 0\}$ and compute the leading order coefficients explicitly. Then a cancellation analogous to that in~\eqref{eq: toy model for coupled equations} occurs. In fact, this cancellation removes the need to introduce a renormalization factor in the construction of one of the outgoing solutions, thereby eliminating the aforementioned potential instability mechanism; see Proposition~\ref{prop: model argument for square coefficient}. Therefore, we conclude the presence of exactly one scattering resonance $\sigma = -1$ for the linearized Einstein-scalar field equations in Section~\ref{sec: location of the scattering resonance}

\label{mainresults:sec}
\section{Multiplier estimates for the linearized equations}
\label{sec:multiplier_estimates}
In this section, we will establish the non-sharp decay result for the solution $(\Psi_{p},r_{p},m_{p})$ to the linearized Einstein-scalar field equations $\mathcal{P}(r_{p},\Psi_{p},m_{p}) = 0$ using energy estimates. 

Precisely, we will prove the following theorem.
\begin{theorem}[Non-sharp decay estimate]
Let $(r_{p},\Psi_{p},m_{p})$ be the solution to the linearized equations $\mathcal{P}(r_{p},\Psi_{p},m_{p}) = 0$ with the initial data $(r_{p},\Psi_{p})|_{\Sigma_{-1}^{(in)}} = (r_{p}^{0},\Psi_{p}^{0})\in \mathcal{C}_{N}^{\beta}\times\mathcal{C}_{N}^{\alpha}$ imposed on $\Sigma_{-1}^{(in)}$ for $\alpha\in(p_{k},\frac{3}{2})$ and $\beta\in(\alpha+\frac{1}{4},2)$. Then for any $k\neq0$ sufficiently small, there exists a constant $C$ depending on $k$ and the initial data, such that for any $0\leq i+j\leq 1$, we have\begin{align}
\Vert \partial_{u}^{i}\partial_{v}^{j}\Psi_{p}\Vert_{L^{\infty}(\Sigma_{u}^{(in)})}&\leq C(-u)^{q_{k}-k^{\frac{1}{8}}-i-q_{k}j},\\
\Vert \partial_{u}^{i}\partial_{v}^{j}r_{p}\Vert_{L^{\infty}(\Sigma_{u}^{(in)})}&\leq C(-u)^{q_{k}-k^{\frac{1}{8}}-i-q_{k}j},\\
\Vert \partial_{u}^{i}\partial_{v}^{j}m_{p}\Vert_{L^{\infty}(\Sigma_{u}^{(in)})}&\leq C(-u)^{q_{k}-k^{\frac{1}{8}}-i-q_{k}j}.
\end{align}
Equivalently, under the self-similar coordinates $(s,z)$, for $0\leq i+j\leq 1$, we have the following decay estimates: \begin{align}
    \left\Vert\partial_{s}^{i}\partial_{z}^{j}\Psi_{p}(s,\cdot)\right\Vert_{L_{z}^{\infty}([-1,0])}&\leq Ce^{(-q_{k}+k^{\frac{1}{8}})s},\\\left\Vert\partial_{s}^{i}\partial_{z}^{j}r_{p}(s,\cdot)\right\Vert_{L^{\infty}_{z}([-1,0])}&\leq Ce^{(-q_{k}+k^{\frac{1}{8}})s},\\
    \left\Vert\partial_{s}^{i}\partial_{z}^{j}m_{p}(s,\cdot)\right\Vert_{L^{\infty}_{z}([-1,0])}&\leq Ce^{(-q_{k}+k^{\frac{1}{8}})s}.
\end{align}
\label{thm:a priori energy estimate}
\end{theorem}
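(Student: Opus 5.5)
We prove Theorem~\ref{thm:a priori energy estimate} with weighted energy estimates in the self-similar coordinates $(s,z)$. The multiplier is built from the null vector field $\partial_{z}$, weighted by powers of $(-z)$, $(1+z)$ and $e^{2\gamma s}$ with $\gamma = q_{k}-k^{1/8}$. The estimates are then upgraded to pointwise bounds by Sobolev embedding in $z$ and integration along characteristics.

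\textbf{Step 1: first-order energies for $r_{p}$ and $\Psi_{p}$.} We multiply \eqref{eq:wave eq for rp under self-similar coordinates} and \eqref{eq:wave eq for psip under self-similar coordinate} by $w(z)\partial_{z}r_{p}$ and $w(z)\partial_{z}\Psi_{p}$ respectively, with $w=(-z)^{-a}(1+z)^{b}$. For the principal part $\partial_{s}\partial_{z}f+q_{k}z\partial_{z}^{2}f+q_{k}\partial_{z}f$ this gives
\[
\tfrac12\partial_{s}\!\int w|\partial_{z}f|^{2}+\int\Big(q_{k}w-\tfrac{q_{k}}{2}\partial_{z}(zw)\Big)|\partial_{z}f|^{2},
\]
up to boundary terms at $z=-1$ and $z=0$. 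At $z=0$ the boundary term vanishes because of the factor $z$. At the axis the boundary term is controlled by the Dirichlet condition $f(s,-1)=0$. A suitable choice of $a$, in the range compatible with $\mathcal{C}^{\alpha}$ data, makes the bulk coefficient at least $(q_{k}-Ck)w$. The zeroth-order terms are controlled by Hardy's inequality from $z=-1$. All background coefficients other than $\partial_{z}\phi_{k}$ are $O(k^{2}/\epsilon^{2})$ by Proposition~\ref{prop: estimate on the exact k self similar spacetime}. Zeroth-order data such as $\Psi_{p}$ and $r_{p}$ are recovered from $\partial_{z}$ by integrating from the axis. We also commute with $\partial_{s}$, which preserves the equations since the coefficients are $s$-independent, and use the equations themselves to obtain $z$-weighted second derivatives. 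This gives control of $\partial_{s}$ and $\partial_{z}$ in $L^{2}$, hence $L^{\infty}$ via $H^{1}$ embedding.

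\textbf{Step 2: controlling $m_{p}$.} We solve the $z$-transport equation \eqref{eq:v transport eq for m under self-similar coordinates} from $m_{p}|_{z=-1}=0$. The source involves $\partial_{z}\phi_{k}$, which is only $O(k/\epsilon)(-z)^{-\epsilon}$, so it is still $k$-small in $L^{2}$ after the weight loss. For the terms $\frac{m_{p}}{(1+z)^{2}}$ in the $r_{p}$ and $\Psi_{p}$ equations, we write $m_{p}$ explicitly as an integral from the axis. The near-axis vanishing of the coefficients $\frac{r_{k}}{\partial_{z}r_{k}}(\partial_{z}\phi_{k})^{2}$, $\Psi_{k}$ and similar, at the rates $(1+z)^{2}$ and $(1+z)^{3}$, lets Fubini and a Hardy inequality bound $\|m_{p}/(1+z)^{2}\|_{L^{2}}$ by $\|\partial_{z}r_{p}\|_{L^{2}}+\|\partial_{z}\Psi_{p}\|_{L^{2}}$ without derivative loss. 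This is the step we expect to be the main technical obstacle. Coercivity must survive the $\frac{1}{k}$-size of $\partial_{z}\phi_{k}$ near $z=0$, and we plan to absorb it using the $z\partial_{z}\phi_{k}$ structure together with the weight $(-z)^{-a}$.

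\textbf{Step 3: Grönwall and decay.} Combining the two steps gives $\partial_{s}E+2(q_{k}-Ck^{1/2})E\le 0$ for a suitable $\epsilon\sim k^{1/2}$, since the losses are $k/\epsilon+k^{2}/\epsilon^{2}$. Hence $E(s)\lesssim e^{-2(q_{k}-k^{1/8})s}E(0)$, which is more than enough. The initial energy is finite because $\mathcal{C}^{\beta}_{N}\times\mathcal{C}^{\alpha}_{N}\subset H^{\gamma}_{N}$ for the chosen weights, using $\alpha>p_{k}$. Sobolev embedding in $z$, applied to $\partial_{s}^{i}\partial_{z}^{j}$ of the unknowns, gives the stated $L^{\infty}$ bounds in $(s,z)$. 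Converting via $\partial_{u}=e^{s}(\partial_{s}+q_{k}z\partial_{z})$ and $\partial_{v}=e^{q_{k}s}\partial_{z}$ then yields the $(u,v)$ bounds.
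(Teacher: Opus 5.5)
Your route differs from the paper's at its core, and at the $L^2$ level it is sound. The paper renormalizes by $e^{(q_k-\rho)s}$ with $\rho=k^{1/8}$ and uses the unweighted multipliers $\partial_z r_\rho,\partial_z\Psi_\rho$. Their bulk coefficient $\rho-\frac12 q_k$ has the wrong sign. To compensate, the paper needs two further ingredients:
\begin{itemize}
\item $\partial_z$-commuted energies with weights $(-z)^{2\gamma}$ and $(-z)^{2\gamma-\frac12}$, where the factor $3-2(-z)^{1/2}$ creates an extra absorbing bulk term;
\item the inequality $\int f^2\le\frac{1}{1-a}f^2(-1)+\frac{1}{a(1-a)}\int(-z)^2(f')^2$, which trades the bad bulk against the axis flux.
\end{itemize}
With $w=(-z)^{-a}$, $a<1$, you get the bulk $q_kw-\frac{q_k}{2}\partial_z(zw)=\frac{q_k(1+a)}{2}w$ outright. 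The flux vanishes at $z=0$, and the flux $\frac{q_k}{2}(\partial_z f)^2$ on $\Gamma$ is favorable (no Dirichlet condition is needed for it). So the first-order energy alone decays at a rate close to $q_k$. The cost is the factor $\int_{-1}^0(-z)^{-a}dz=(1-a)^{-1}$ in the lower-order terms, which the factors of $k$ absorb.

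Two points in your Steps 1--2 need correcting.
\begin{itemize}
\item Commuting with $\partial_s$ does not dispose of the $G_{k^2}\partial_s r_p$ terms, because the commuted system contains $\partial_s^2 r_p$. You must integrate by parts in $s$, substitute the equation for $\partial_s\partial_z$, and then integrate the resulting $z\partial_z^2$ term by parts in $z$. The boundary term at $z=0$ vanishes since $(-z)^{1-a}\to0$.
\item The $z$-transport equation for $m_p$ has no $z\partial_z\phi_k$ structure. What makes the $m_p$ terms harmless near $z=0$ is $\|\partial_z\mathring{\phi}\|_{L^2}\lesssim k$. This gives $\|m_p\|_{L^\infty}\lesssim k(\|\partial_z r_p\|_{L^2}+\|\partial_z\Psi_p\|_{L^2})$ and a loss $k(1-a)^{-1/2}$, which must be balanced against the deficit $\frac{q_k(1-a)}{2}$.
\end{itemize}

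The genuine gap is the upgrade in Step 3 to pointwise bounds on first derivatives. The theorem requires $\partial_z r_p$ and $\partial_z\Psi_p$ (i.e.\ the $\partial_v$-bounds) in $L^\infty$ up to the singular horizon. Your energies control $\partial_z f$ and $\partial_z\partial_s f$ in $L^2((-z)^{-a}dz)$, so embedding gives only $f,\partial_s f\in L^\infty$. The only second $z$-derivative you recover from the equation is $\int(-z)^{2-a}(\partial_z^2 f)^2$. Since $2-a>1$, $(-z)^{a-2}\notin L^1$ near $z=0$, and no bound on $\partial_z f$ there follows. You also cannot raise the weight on the $\partial_s$-commuted energy: $\partial_z\partial_s\Psi_p(s,0)=-q_k\partial_z\Psi_p(s,0)+\dots$ does not vanish in general, which forces $a<1$.

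You need one of two extra ingredients.
\begin{itemize}
\item A $\partial_z$-commuted energy with weight $(-z)^{2\gamma}$, $\gamma\in(\frac32-\alpha,\frac12)$. This is exactly the paper's second-order estimate. Its coercive margin is only $\rho+q_k(\frac12-\gamma)<\rho+q_k(\alpha-1)$, which is why the paper's modified multiplier is needed.
\item The ingoing transport structure $(\partial_s+q_kz\partial_z)\partial_z\Psi_\rho+\rho\,\partial_z\Psi_\rho=\dots$ and its analogue for $\partial_z r_\rho$, integrated backward along $z'=ze^{-q_k(s-s')}$. These characteristics stay inside $[-1,0]$. You would feed in your $L^\infty$ bounds on $r_\rho,\Psi_\rho,\partial_s r_\rho$ and the pointwise bound $|m_\rho/\mathring{r}^2|\lesssim k(\|\partial_z\Psi_\rho\|_{L^\infty}+\|\partial_z r_\rho\|_{L^\infty})$. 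The coupling through $k\partial_z r_\rho$ and $Km_\rho/\mathring{r}^2$ then closes with the small factor $k/\rho$.
\end{itemize}
You mention characteristics in passing, but Step 3 rests on Sobolev embedding, and that does not deliver the $\partial_v$-bounds.
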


To prove Theorem \ref{thm:a priori energy estimate}, we apply the energy estimates to the equations \eqref{eq:wave eq for rp under self-similar coordinates}-\eqref{eq:v transport eq for m under self-similar coordinates} under the self-similar coordinates $(s,z)$. Let \begin{equation}
\Psi_{\rho} = e^{(q_{k}-\rho)s}\Psi_{p},\quad r_{\rho} = e^{(q_{k}-\rho)s}r_{p},\quad m_{\rho} = e^{(q_{k}-\rho)s}m_{p}.\label{eq: def of psirho}
\end{equation}
The desired decay rate in Theorem \ref{thm:a priori energy estimate} will follow from the boundedness of $(r_{\rho},\Psi_{\rho},m_{\rho})$ for some suitable choice of $\rho$, while the boundedness of $(r_{\rho},\Psi_{\rho},m_{\rho})$ can be established by the standard multiplier estimates and the Sobolev embedding. One should note that the smaller values of $\rho$ will correspond to a sharper decay.

\begin{remark}
Under the change of variables \eqref{eq: def of psirho}, assuming the boundedness of $(r_{\rho},\Psi_{\rho},m_{\rho})$ and their derivatives, the case $\rho = 0$ corresponds to the blue-shift decay rate of $\Psi_{p}$; the case $\rho>0$ corresponds to a decay rate slower than the blue-shift rate for $\Psi_{p}$; the case $-k^{2}<\rho<0$ corresponds to a decay rate between the self-similar rate and the blue-shift rate for $\Psi_{p}$; and $\rho<-k^{2}$ corresponds to a rate faster than the self-similar rate. 
\end{remark}

\begin{remark}
\label{rmk: observation why we could not get sharp decay from energy}
If the stability of the $k$-self-similar spacetime is expected under the high-regularity perturbations, one may ask whether there exists an energy approach that pushes the value of $\rho$ below the threshold $\rho = -k^{2}$, thereby yielding a (linearized) stability result directly. We explain why the decay rate in Theorem \ref{thm:a priori energy estimate} is not sharp and cannot be made sharp solely by energy estimates. The Einstein-scalar field equations are invariant under the translation $(r,\phi,m)\rightarrow(r,\phi+c,m)$. Accordingly, the linearized equations $\mathcal{P}(r_{p},\Psi_{p},m_{p}) = 0$ are also invariant under the translation $(r_{p},\Psi_{p},m_{p})\rightarrow(r_{p},\Psi_{p}+cr_{k},m_{p})$. From the viewpoint of the evolution problem, this translation changes the initial data $(r_{p},\Psi_{p})\bigl|_{\Sigma_{-1}}$ by a multiple of $r_{k}$, which will not affect the regularity class of the initial data since $r_{k}$ is at least $C^{2}$. In other words, the regularity restriction of the initial data alone can not distinguish two essentially equivalent sets of initial data modulo this translation. Moreover, in self-similar coordinates, $r_{k}$ decays only at the rate $e^{-s}$. For general high-regularity perturbations, observing improved decay with $\rho<-k^{2}$ therefore requires considering the renormalized quantity $(r_{p},\Psi_{p}-cr_{k},m_{p})$ for a suitable choice of $c$. Hence, without a priori knowledge of the correct value of $c$, one cannot expect to close the energy estimate for $(r_{\rho},\Psi_{\rho},m_{\rho})$ when $\rho<-k^{2}$.
\end{remark}

\subsection{Schematic notation}
In the following energy estimate, we shall use the self-similar coordinates $(s,z)$. It turns out that precise structures of many tedious coefficients in \eqref{eq:wave eq for rp under self-similar coordinates}-\eqref{eq:v transport eq for m under self-similar coordinates} are irrelevant when running the energy estimates.  \textbf{In this section (Section \ref{sec:multiplier_estimates}) only}, we adopt the following schematic notation:
\begin{definition}
Let $\mathcal{F}_{a}$, $\mathcal{G}_{a}$, and $\mathcal{K}$ be three classes of functions independent of $s$ satisfying the following properties:
\begin{enumerate}
  \item[(a)] For any $F_{a}\in \mathcal{F}_{a}$, we have\begin{equation}
  \left\Vert F_{a}\right\Vert_{L^{\infty}([-1,-\frac{1}{2}])}\lesssim a,\quad \left\Vert z\vert^{\epsilon} F_{a}\right\Vert_{L^{\infty}\left([-1,0]\right)}\lesssim\frac{a}{\epsilon},\quad\epsilon>0.
  \end{equation}
  \item[(b)] For any $G_{a}\in\mathcal{G}_{a}$, we have $\frac{dG_{a}}{dz}\in\mathcal{F}_{a}$ and \begin{equation}
  \Vert G_{a}\Vert_{L^{\infty}([-1,0])}\lesssim a.
  \end{equation} 
  \item[(c)] For any $K\in\mathcal{K}$, we have $K(z)$ is bounded.
\end{enumerate}
In the following, without any confusion, we will simply use $F_{a}(z)$, $G_{a}(z)$, and $K(z)$ to denote any function in the function classes $\mathcal{F}_{a}$, $\mathcal{G}_{a}$, and $\mathcal{K}$, respectively. Moreover, $F_{a}^{n}$, $G_{a}^{n}$, and $K^{n}$ denote the product of any $n$ functions in $\mathcal{F}_{a}$, $\mathcal{G}_{a}$, and $\mathcal{K}$, respectively.
\label{def:schematic notation}
\end{definition}
\begin{proposition}
Adopting the notation in Definition \ref{def:schematic notation}, we can rewrite the equations \eqref{eq:wave eq for rp under self-similar coordinates}-\eqref{eq:v transport eq for m under self-similar coordinates} in the following forms:\begin{align}
&\partial_{s}\partial_{z}r_{\rho}+q_{k}z\partial_{z}^{2}r_{\rho}+(\rho+G_{k^{2}}(z))\partial_{z}r_{\rho}+G_{k^{2}}(z)\partial_{s}r_{\rho}+G_{k^{2}}(z)r_{\rho} = K(z)\frac{m_{\rho}}{\mr{r}^{2}},\label{eq:wave equation for r rho}\\&
\partial_{s}\partial_{z}\Psi_{\rho}+q_{k}z\partial_{z}^{2}\Psi_{\rho}+\rho\partial_{z}\Psi_{\rho}+G_{k^{2}}(z)\Psi_{\rho} = G_{k^{2}}(z)\partial_{s}r_{\rho}+(G_{k^{2}}(z)+k)\partial_{z}r_{\rho}+G_{k^{2}}(z)r_{\rho}\nonumber\\&\hspace{6.5cm}+\vert z\vert^{\frac{1}{2}}G_{k}(z)\frac{m_{\rho}}{\mr{r}^{2}},\label{eq:wave equation for Psi rho}\\&
\partial_{z}\left(m_{\rho}\right)+\left(\frac{\mr{r}}{\partial_{z}\mr{r}}(\partial_{z}\mr{\phi})^{2}\right)m_{\rho} = \mr{r}^{2}\left(F_{k^{2}}(z)\frac{r_{\rho}}{\mr{r}}+F_{k^{2}}(z)\partial_{z}r_{\rho}+F_{k}(z)\left(\partial_{z}\left(\frac{\Psi_{\rho}}{\mr{r}}\right)-\partial_{z}\left(\mr{\phi}\frac{r_{\rho}}{\mr{r}}\right)\right)\right).\label{eq:z-transport eq for mrho}
\end{align}
\end{proposition}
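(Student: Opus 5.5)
The plan is a direct substitution. Insert $r_{p}=e^{-(q_{k}-\rho)s}r_{\rho}$ (and likewise for $\Psi_{p}$, $m_{p}$) into \eqref{eq:wave eq for rp under self-similar coordinates}--\eqref{eq:v transport eq for m under self-similar coordinates}. Then express each background coefficient through the self-similar profiles of Proposition \ref{prop: estimate on the exact k self similar spacetime} and sort it into one of the classes $\mathcal{F}_{a},\mathcal{G}_{a},\mathcal{K}$.

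\textbf{Conjugation.} Since $\partial_{s}(e^{-(q_{k}-\rho)s}f)=e^{-(q_{k}-\rho)s}(\partial_{s}-(q_{k}-\rho))f$, every $\partial_{s}$ becomes $\partial_{s}-q_{k}+\rho$. In the $\Psi$-equation this turns $\partial_{s}\partial_{z}+q_{k}\partial_{z}$ into $\partial_{s}\partial_{z}+\rho\partial_{z}$, which is exactly the principal part of \eqref{eq:wave equation for Psi rho}. The same happens in the $r$-equation. The $z$-transport equation for $m_{p}$ contains no $\partial_{s}$, so only the overall factor changes. Because the coefficients are $s$-independent in the $(s,z)$ frame once written in profile form, the conjugation produces only shifts of constant coefficients. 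Those shifts multiply $G_{k^{2}}$-type terms and so remain of the same class.

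\textbf{Coefficients of the wave equations.} Using $r_{k}=e^{-s}\mr{r}$, $\partial_{s}r_{k}=-e^{-s}\mr{r}$ and $\partial_{z}r_{k}=e^{-s}\partial_{z}\mr{r}$, each coefficient becomes a rational expression in $\mr{\mu}$, $\mr{r}$, $\partial_{z}\mr{r}$ and $\mr{\nu}=-\mr{r}+q_{k}z\partial_{z}\mr{r}$. The typical term is $\frac{\mu_{k}}{1-\mu_{k}}\frac{\partial_{z}r_{k}}{r_{k}}$. By item (3) of Proposition \ref{prop: estimate on the exact k self similar spacetime}, $\mr{\mu}/\mr{r}^{2}\lesssim k^{2}$, and $\mr{r}\approx 1+z$, so this coefficient is bounded by $k^{2}$. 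Its $z$-derivative is controlled by item (5): $\partial_{z}\mr{\mu}$ is bounded by $(1+z)k^{2}\epsilon^{-2}|z|^{-\epsilon}$, and $\partial_{z}^{2}\mr{r}$ by $k^{2}\epsilon^{-2}|z|^{-\epsilon}$. This is where the $\epsilon$-weighted bounds defining $\mathcal{F}$ enter. Dividing by powers of $\mr{r}$ near the axis does not cost anything, since $\mr{\mu}$ vanishes to order $(1+z)^{2}$. Hence these coefficients lie in $\mathcal{G}_{k^{2}}$ (possibly with a $k^{2}\epsilon^{-2}$ loss, which is absorbed into the class definition).

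The $m_{p}$ coefficient $\frac{2\nu_{k}\partial_{z}r_{k}}{(1-\mu_{k})^{2}r_{k}^{2}}$, after the rescaling $m=e^{-s}\mr{m}$, is $K/\mr{r}^{2}$. In the $\Psi$-equation the extra factor $\Psi_{k}/r_{k}$ appears. With our gauge, $\Psi_{k}=r_{k}\mr{\phi}$, and $|\mr{\phi}|\lesssim k|z|^{1-\epsilon}/\epsilon$. Choosing $\epsilon=1/2$ yields the $|z|^{1/2}G_{k}$ factor, and also puts the $\Psi_{k}$-terms multiplying $r_{p}$ in $\mathcal{G}_{k^{2}}$. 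The explicit $+k\,\partial_{z}r_{p}$ term is carried over unchanged.

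\textbf{Transport equation for $m_{p}$.} Write $\partial_{z}\phi_{k}=\partial_{z}\mr{\phi}$ and $\phi_{k}-ks=\mr{\phi}$. Then rewrite the combination $-(1-\mu_{k})\frac{r_{k}}{\partial_{z}r_{k}}\partial_{z}\phi_{k}\,\partial_{z}\Psi_{p}+(1-\mu_{k})\partial_{z}\phi_{k}\Psi_{p}$ as $-\mr{r}^{2}\frac{(1-\mu)}{\partial_{z}\mr{r}}\partial_{z}\mr{\phi}\,\partial_{z}(\Psi_{\rho}/\mr{r})$ plus a remainder proportional to $\Psi_{\rho}\,\partial_{z}\mr{r}/\mr{r}$. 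The remaining piece is handled by adding and subtracting $\mr{\phi}\,r_{\rho}/\mr{r}$ so as to form $\partial_{z}(\mr{\phi}r_{\rho}/\mr{r})$, using the terms with $\mr{\phi}\,\partial_{z}r_{p}$ from the right-hand side. The coefficient $\partial_{z}\mr{\phi}$ lies in $\mathcal{F}_{k}$ by item (4). Quadratic terms $(\partial_{z}\mr{\phi})^{2}$ and products $\mr{\phi}\,\partial_{z}\mr{\phi}$ lie in $\mathcal{F}_{k^{2}}$ after adjusting $\epsilon$.

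\textbf{Main obstacle.} The hard step is this algebraic regrouping in the $m$-equation. It must leave no leftover term of the form $\partial_{z}\mr{\phi}\cdot\Psi_{\rho}$ without a derivative. I would verify the identity by direct expansion, checking that the $\mr{\phi}\,r_{p}$ and $\Psi_{p}$ leftovers cancel against the $r_{p}$ terms on the right-hand side, using $\Psi_{k}=r_{k}\mr{\phi}$.
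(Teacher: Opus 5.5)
Your route is the paper's own: its proof is just ``a straightforward computation using Proposition~\ref{prop: estimate on the exact k self similar spacetime}''. You conjugate by $e^{(q_k-\rho)s}$, observe that every background coefficient becomes an $s$-independent profile expression, and sort the coefficients into $\mathcal{F}_a,\mathcal{G}_a,\mathcal{K}$. Your treatment of the two wave equations is fine at the paper's level of precision. However, the step you single out as the main obstacle, the regrouping in the transport equation for $m_\rho$, is described incorrectly.

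First, the $\Psi$-combination leaves no remainder. Since $\mr{r}^2\partial_z(\Psi_\rho/\mr{r})=\mr{r}\,\partial_z\Psi_\rho-\partial_z\mr{r}\,\Psi_\rho$, one has exactly
\[
-(1-\mr{\mu})\frac{\mr{r}}{\partial_z\mr{r}}\partial_z\mr{\phi}\,\partial_z\Psi_\rho+(1-\mr{\mu})\partial_z\mr{\phi}\,\Psi_\rho=-(1-\mr{\mu})\frac{\partial_z\mr{\phi}}{\partial_z\mr{r}}\,\mr{r}^{2}\partial_z\Big(\frac{\Psi_\rho}{\mr{r}}\Big).
\]
Second, the mechanism you propose for disposing of a leftover cannot work. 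You suggest cancelling $\Psi_p$ leftovers against $r_p$ terms on the right-hand side, but $r_\rho$ and $\Psi_\rho$ are independent unknowns. Had an undifferentiated $\partial_z\mr{\phi}\,\Psi_\rho$ term survived, the stated form would simply be false. The identity $\Psi_k=r_k\mr{\phi}$ also plays no role here; this equation only contains $\phi_k-ks=\mr{\phi}$.

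The only regrouping actually needed is on the $r_\rho$ side:
\[
-(1-\mr{\mu})\frac{\partial_z\mr{\phi}}{\partial_z\mr{r}}\Big(\mr{r}\,\mr{\phi}\,\partial_z r_\rho-\partial_z\mr{r}\,\mr{\phi}\,r_\rho\Big)=-(1-\mr{\mu})\frac{\partial_z\mr{\phi}}{\partial_z\mr{r}}\,\mr{r}^{2}\partial_z\Big(\frac{\mr{\phi}\,r_\rho}{\mr{r}}\Big)+(1-\mr{\mu})\frac{\mr{r}}{\partial_z\mr{r}}(\partial_z\mr{\phi})^{2}r_\rho .
\]
This absorbs the undifferentiated $\mr{\phi}\,\partial_z\mr{\phi}\,r_\rho$ term exactly. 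The last term combines with $\tfrac12\frac{\mr{r}}{\partial_z\mr{r}}(\partial_z\mr{\phi})^2\mr{\mu}\,r_\rho$ to give $\frac{(1-\frac12\mr{\mu})(\partial_z\mr{\phi})^2}{\partial_z\mr{r}}\,\mr{r}\,r_\rho$. The remaining term $-\tfrac12(1-\mr{\mu})\frac{\mr{r}^2}{(\partial_z\mr{r})^2}(\partial_z\mr{\phi})^2\partial_z r_\rho$ is already of the form $\mr{r}^2F_{k^2}\partial_z r_\rho$. The bracket $\partial_z(\Psi_\rho/\mr{r})-\partial_z(\mr{\phi}r_\rho/\mr{r})$ then carries the single coefficient $F_k=(1-\mr{\mu})\partial_z\mr{\phi}/\partial_z\mr{r}$. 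Replace your remainder-and-cancellation argument with these two exact identities and the proof is complete.
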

\begin{proof}
Using Proposition \ref{prop: estimate on the exact k self similar spacetime}, a straightforward computation gives the proof of this proposition.
\end{proof}
Since we will also use the second-order energy estimate, we state the following schematic equations after taking one $\partial_{z}$-derivative of \eqref{eq:wave equation for r rho}-\eqref{eq:z-transport eq for mrho}:
\begin{proposition}
Adopting the notation in Definition \ref{def:schematic notation}, commuting the equations \eqref{eq:wave equation for r rho}-\eqref{eq:wave equation for Psi rho} with $\partial_{z}$, we have the following schematic forms:\begin{align}
&\begin{aligned}
&\partial_{s}\partial_{z}^{2}r_{\rho}+q_{k}z\partial_{z}^{3}r_{\rho}+\left(\rho+q_{k}+G_{k^{2}}(z)\right)\partial_{z}^{2}r_{\rho}\\ =& F_{k^{2}}(z)\partial_{z}r_{\rho}+F_{k^{2}}(z)\partial_{s}r_{\rho}+F_{k^{2}}(z)r_{\rho}+K(z)\partial_{z}\left(\frac{m_{\rho}}{\mr{r}^{2}}\right)+(K(z)+F_{k^{2}}(z))\frac{m_{\rho}}{\mr{r}^{2}},
\end{aligned}\label{eq:second order wave equation for r rho}\\[1em]
&\begin{aligned}
&\partial_{s}\partial_{z}^{2}\Psi_{\rho}+q_{k}z\partial_{z}^{3}\Psi_{\rho}+(\rho+q_{k})\partial_{z}^{2}\Psi_{\rho}+G_{k^{2}}(z)\partial_{z}\Psi_{\rho}+F_{k^{2}}(z)\Psi_{\rho}\\=& G_{k}(z)\partial_{z}^{2}r_{\rho}+F_{k^{2}}(z)\partial_{s}r_{\rho}+F_{k^{2}}(z)\partial_{z}r_{\rho}+F_{k^{2}}(z)r_{\rho}+\vert z\vert^{\frac{1}{2}}G_{k}(z)\partial_{z}\left(\frac{m_{\rho}}{\mr{r}^{2}}\right)+F_{k}(z)\frac{m_{\rho}}{\mr{r}^{2}}.
\end{aligned}\label{second order wave equation for psi rho}
\end{align}
\end{proposition}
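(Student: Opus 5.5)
This proposition is obtained by applying $\partial_{z}$ to \eqref{eq:wave equation for r rho}--\eqref{eq:wave equation for Psi rho} and sorting each resulting term into the schematic classes of Definition~\ref{def:schematic notation}. First we commute $\partial_{z}$ with the principal part:
\begin{equation*}
\partial_{z}\left(\partial_{s}\partial_{z}f+q_{k}z\partial_{z}^{2}f+\rho\partial_{z}f\right)=\partial_{s}\partial_{z}^{2}f+q_{k}z\partial_{z}^{3}f+(\rho+q_{k})\partial_{z}^{2}f .
\end{equation*}
This produces the shifted coefficient $\rho+q_{k}$ appearing on the left-hand sides of \eqref{eq:second order wave equation for r rho}--\eqref{second order wave equation for psi rho}. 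Each product term $G\cdot X$ with $G\in\mathcal{G}_{a}$ then splits by Leibniz as $G\,\partial_{z}X+(\partial_{z}G)X$. By property (b), $\partial_{z}G\in\mathcal{F}_{a}$, so every differentiated coefficient lands in $\mathcal{F}_{k^{2}}$ or $\mathcal{F}_{k}$, as claimed.

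For the $r_{\rho}$ equation we proceed term by term.
\begin{itemize}
\item $\partial_{z}(G_{k^{2}}\partial_{z}r_{\rho})$ yields $G_{k^{2}}\partial_{z}^{2}r_{\rho}$, which we keep on the left, plus $F_{k^{2}}\partial_{z}r_{\rho}$.
\item $\partial_{z}(G_{k^{2}}\partial_{s}r_{\rho})$ yields $G_{k^{2}}\partial_{s}\partial_{z}r_{\rho}+F_{k^{2}}\partial_{s}r_{\rho}$.
\item $\partial_{z}(G_{k^{2}}r_{\rho})$ yields $G_{k^{2}}\partial_{z}r_{\rho}+F_{k^{2}}r_{\rho}$.
\end{itemize}
The right-hand side gives $K\partial_{z}(m_{\rho}/\mr{r}^{2})+(\partial_{z}K)\,m_{\rho}/\mr{r}^{2}$.

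Two points are not immediate.
\begin{itemize}
\item The term $G_{k^{2}}\partial_{s}\partial_{z}r_{\rho}$ does not appear in \eqref{eq:second order wave equation for r rho}. We eliminate it by substituting the first-order equation \eqref{eq:wave equation for r rho} itself:
\begin{equation*}
\partial_{s}\partial_{z}r_{\rho}=-q_{k}z\partial_{z}^{2}r_{\rho}-(\rho+G_{k^{2}})\partial_{z}r_{\rho}-G_{k^{2}}\partial_{s}r_{\rho}-G_{k^{2}}r_{\rho}+K\frac{m_{\rho}}{\mr{r}^{2}} .
\end{equation*}
Multiplied by $G_{k^{2}}$, the $z\partial_{z}^{2}r_{\rho}$ piece is absorbed into the left coefficient $G_{k^{2}}\partial_{z}^{2}r_{\rho}$. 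The remaining pieces are bounded multiples of lower-order terms and go on the right, using $G\subset\mathcal{F}$ since bounded functions trivially satisfy (a).
\item The derivative $\partial_{z}K$ must be bounded, i.e.\ of the form $K+F_{k^{2}}$. Here we return to the explicit coefficient $\frac{2}{(1-\mu_{k})^{2}}\frac{\nu_{k}\lambda_{k}}{\mr{r}^{2}}\cdot\mr{r}^{2}$ and use Proposition~\ref{prop: estimate on the exact k self similar spacetime}: $\mr{r}\in C^{2}$ and $\partial_{z}\mr{\mu}\in\mathcal{F}_{k^{2}}$ by item (5). This gives $\partial_{z}K=K+F_{k^{2}}$.
\end{itemize}

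The $\Psi_{\rho}$ equation is handled the same way.
\begin{itemize}
\item $\partial_{z}(G_{k^{2}}\Psi_{\rho})$ stays on the left as $G_{k^{2}}\partial_{z}\Psi_{\rho}+F_{k^{2}}\Psi_{\rho}$.
\item On the right, $G_{k^{2}}\partial_{s}\partial_{z}r_{\rho}$ is again replaced using \eqref{eq:wave equation for r rho}.
\item $(G_{k^{2}}+k)\partial_{z}^{2}r_{\rho}$ is recorded as $G_{k}\partial_{z}^{2}r_{\rho}$, since $k$ is a constant in $\mathcal{G}_{k}$ (and $k^{2}\lesssim k$). The substitution above also contributes $(G_{k^{2}}+k)\cdot$ terms, which fit into the same classes.
\end{itemize}

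The main obstacle is the last term, $\partial_{z}\big(|z|^{1/2}G_{k}\,m_{\rho}/\mr{r}^{2}\big)$, since $\partial_{z}|z|^{1/2}\sim|z|^{-1/2}$ is unbounded at the horizon. We would handle it as follows.
\begin{itemize}
\item First check that $\partial_{z}(|z|^{1/2}G_{k})=\tfrac12|z|^{-1/2}G_{k}+|z|^{1/2}F_{k}$ belongs to $\mathcal{F}_{k}$. The weight $|z|^{-1/2}$ is fine near $z=-1$, and for $\epsilon>1/2$ we have $|z|^{\epsilon}|z|^{-1/2}$ bounded.
\item However, $\mathcal{F}_{a}$ requires the bound for every $\epsilon>0$, so this weight violates the class. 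To fix it, we trace back to where $|z|^{1/2}G_{k}$ came from in \eqref{eq:wave eq for psip under self-similar coordinate}. The actual coefficient is $\Psi_{k}(\partial_{s}r_{k}+q_{k}z\partial_{z}r_{k})\partial_{z}r_{k}/r_{k}^{3}$, which is in fact $G_{k}$ with a derivative in $\mathcal{F}_{k}$ thanks to $|\mr{\phi}|\lesssim \tfrac{k}{\epsilon}|z|^{1-\epsilon}$. This places the derivative honestly in $F_{k}$.
\end{itemize}
If this tracing fails, we would instead keep the $|z|^{-1/2}$ weight explicitly and absorb it later in the energy estimate via Hardy's inequality.

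This term also explains why the first-order equation was already written with the extra $|z|^{1/2}$: it is what makes $\partial_{z}$ of the coefficient fall into $\mathcal{F}_{k}$.
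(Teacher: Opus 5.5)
Your proposal is correct and follows the same route as the paper, which asserts this proposition without proof beyond commuting \eqref{eq:wave equation for r rho}--\eqref{eq:wave equation for Psi rho} with $\partial_z$ and sorting the terms into the schematic classes. Two of your details correctly fill in what that schematic statement glosses over: you eliminate the $G_{k^2}\partial_s\partial_z r_\rho$ terms by substituting the first-order $r_\rho$ equation, and you differentiate the explicit coefficient $2\mr{\phi}\,\mr{\nu}\mr{\lambda}/(1-\mr{\mu})^2$ (via $\Psi_k=r_k\mr{\phi}$ and the bound on $\mr{\phi}'$) instead of differentiating $|z|^{1/2}$ and $G_k$ separately. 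Only your closing remark is backwards: the explicit $|z|^{1/2}$ factor records the vanishing of $\mr{\phi}$ at $z=0$, which supplies a weight in the energy estimates, and it does not improve the derivative of the coefficient.
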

For the convenience of the integration, we define \begin{equation*}
    \mathcal{R}(s_{0}):=\left\{-1\leq z\leq 0,\ 0\leq s\leq s_{0}\right\},\quad \mathcal{H}(s_{0}): = \left\{v = 0,\ -s_{0}\leq u\leq0\right\}.
\end{equation*}

\subsection{Preliminary inequalities}

We prove some useful inequalities in this section. The following lemma can be directly proved by using integration by parts.
\begin{lemma}
Let $f\in C^{1}([-1,0])$ be given and $\omega>0$. Then we have:
\begin{equation}
\begin{aligned}
\int_{-1}^{0}(-z)^{2\omega-1}f^{2}(z)dz&\leq \frac{1}{\omega}f^{2}(-1)+\frac{1}{\omega^{2}}\int_{-1}^{0}(-z)^{2\omega+1}\left(\frac{df}{dz}\right)^{2}dz.
\end{aligned}
    \label{eq: fundamental theorem of calculus}
    \end{equation}
    If $f(-1) = 0$, then we have \begin{equation}
        \int_{-1}^{0}(-z)^{2\omega-1}f^{2}(z)dz\leq\frac{1}{\omega^{2}}\int_{-1}^{0}(-z)^{2\omega+1}\left(\frac{df}{dz}\right)^{2}dz.   \label{eq: standard hardy with zero boundary condition}
        \end{equation}
        Moreover, for $\omega\geq \omega_{0}>0$, we have \begin{equation}
        \label{eq: fundamental theorem of calculus for lesssim}
            \int_{-1}^{0}(-z)^{2\omega-1}f^{2}(z)dz\lesssim \frac{1}{\omega_{0}}f^{2}(-1)+\frac{1}{\omega_{0}^{2}}\int_{-1}^{0}(-z)^{2\omega+1}\left(\frac{df}{dz}\right)^{2}.
        \end{equation}
    \end{lemma}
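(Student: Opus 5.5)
The plan is to write $(-z)^{2\omega-1} = -\frac{1}{2\omega}\frac{d}{dz}(-z)^{2\omega}$, integrate by parts on $[-1,0]$, and absorb the cross term.

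\textbf{Integration by parts.} Since $f\in C^{1}([-1,0])$ and $\omega>0$, the boundary term at $z=0$ vanishes because $(-z)^{2\omega}\to 0$. At $z=-1$ it equals $\frac{1}{2\omega}f^{2}(-1)$. This gives
\begin{equation*}
\int_{-1}^{0}(-z)^{2\omega-1}f^{2}\,dz = \frac{1}{2\omega}f^{2}(-1)+\frac{1}{\omega}\int_{-1}^{0}(-z)^{2\omega}f\frac{df}{dz}\,dz .
\end{equation*}

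\textbf{Cauchy--Schwarz and absorption.} For the cross term, split $(-z)^{2\omega} = (-z)^{\omega-\frac12}(-z)^{\omega+\frac12}$ and use Young's inequality:
\begin{equation*}
\frac{1}{\omega}\left|\int(-z)^{2\omega}f f'\right| \leq \frac12\int(-z)^{2\omega-1}f^{2}+\frac{1}{2\omega^{2}}\int(-z)^{2\omega+1}(f')^{2}.
\end{equation*}
Let $I=\int_{-1}^{0}(-z)^{2\omega-1}f^{2}\,dz$, which is finite because $2\omega-1>-1$ and $f$ is bounded. Absorbing half of $I$ into the left-hand side and multiplying by $2$ yields
\begin{equation*}
I\leq \frac{1}{\omega}f^{2}(-1)+\frac{1}{\omega^{2}}\int_{-1}^{0}(-z)^{2\omega+1}\left(\frac{df}{dz}\right)^{2}dz,
\end{equation*}
which is \eqref{eq: fundamental theorem of calculus}.

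\textbf{The remaining two estimates.} Inequality \eqref{eq: standard hardy with zero boundary condition} follows by setting $f(-1)=0$. Inequality \eqref{eq: fundamental theorem of calculus for lesssim} follows because $\frac{1}{\omega}\leq\frac{1}{\omega_0}$ and $\frac{1}{\omega^2}\leq\frac{1}{\omega_0^2}$ whenever $\omega\geq\omega_0$.

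The only delicate point is justifying the integration by parts and the finiteness of $I$ near the singular endpoint $z=0$. To handle it, first integrate over $[-1,-\eta]$ and then let $\eta\to0$. The boundary term $\frac{1}{2\omega}\eta^{2\omega}f^{2}(-\eta)$ tends to $0$ since $f$ is bounded, and $I$ is finite since $2\omega-1>-1$. This justifies the absorption step.
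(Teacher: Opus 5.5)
Your proof is correct and is the same argument as the paper's: integrate by parts against $d(-z)^{2\omega}$, bound the cross term by Cauchy--Schwarz with the weight split as $(-z)^{\omega-\frac{1}{2}}(-z)^{\omega+\frac{1}{2}}$, and absorb half of $\int_{-1}^{0}(-z)^{2\omega-1}f^{2}\,dz$ into the left-hand side. Your truncation to $[-1,-\eta]$ makes explicit two points the paper leaves implicit: that the boundary term at $z=0$ vanishes and that the absorbed integral is finite. You also spell out the immediate third estimate, which the paper does not derive.
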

\begin{proof}
Applying integration by parts and the Cauchy--Schwarz inequality, we have \begin{align*}
    \int_{-1}^{0}(-z)^{2\omega-1}f^{2}(z)dz &= -\frac{1}{2\omega}\int_{-1}^{0}f^{2}(z)d(-z)^{2\omega}\\ &= \frac{1}{2\omega}f^{2}(-1)+\frac{1}{2\omega}\int_{-1}^{0}2(-z)^{2\omega}f(z)f^{\prime}(z)dz\\&\leq \frac{1}{2\omega}f^{2}(-1)+\frac{1}{2}\int_{-1}^{0}(-z)^{2\omega-1}f^{2}(z)dz+\frac{1}{2\omega^{2}}\int_{-1}^{0}(-z)^{2\omega+1}\left(f^{\prime}(z)\right)^{2}dz.
\end{align*}
Hence, we have \begin{equation*}
    \int_{-1}^{0}(-z)^{2\omega-1}f^{2}(z)dz\leq \frac{1}{\omega}f^{2}(-1)+\frac{1}{\omega^{2}}\int_{-1}^{0}(-z)^{2\omega+1}\left(f^{\prime}(z)\right)^{2}dz.
\end{equation*}
The inequality \eqref{eq: standard hardy with zero boundary condition} follows trivially from taking $f(-1) = 0$.
\end{proof}


Next, we recall the standard weighted Hardy inequality:
\begin{lemma}[Weighted Hardy inequality]
Let $f\in C^{1}([-1,0])$ with $f(-1) = 0$ and $\omega>0$. Then the following weighted Hardy inequality holds: 
\begin{equation}
\int_{-1}^{0}\frac{(-z)^{2\omega-1}}{(z+1)^{2}}f^{2}(z)dz\leq C(\omega)\int_{-1}^{0}(-z)^{2\omega+1}(\frac{df}{dz})^{2}dz,\label{standard hardy}
\end{equation}
where $C(\omega)$ is \begin{equation*}
    C(\omega) = \frac{4}{\omega^{2}}+2^{2\omega+4}. 
\end{equation*}
Moreover, if $0<\omega_{0}\leq\omega\leq 5$, then we have \begin{equation}
    \int_{-1}^{0}\frac{(-z)^{2\omega-1}}{(z+1)^{2}}f^{2}(z)dz\lesssim \frac{1}{\omega_{0}^{2}}\int_{-1}^{0}(-z)^{2\omega+1}\left(\frac{df}{dz}\right)^{2}dz.\label{eq: standard hardy with lesssim}
\end{equation}
\end{lemma}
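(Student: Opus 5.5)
The weighted Hardy inequality \eqref{standard hardy} can be proved by splitting the interval $[-1,0]$ into a near-axis part $[-1,-\frac{1}{2}]$ and a near-horizon part $[-\frac{1}{2},0]$, and using a different elementary estimate on each.

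\textbf{Near the horizon.} On $[-\frac{1}{2},0]$ the factor $(z+1)^{-2}$ is bounded by $4$. So this piece is controlled by $4\int_{-1}^{0}(-z)^{2\omega-1}f^{2}\,dz$. Since $f(-1)=0$, the already established inequality \eqref{eq: standard hardy with zero boundary condition} bounds it by
\begin{equation*}
\frac{4}{\omega^{2}}\int_{-1}^{0}(-z)^{2\omega+1}\Big(\frac{df}{dz}\Big)^{2}dz.
\end{equation*}
This produces the $\frac{4}{\omega^{2}}$ part of $C(\omega)$.

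\textbf{Near the axis.} On $[-1,-\frac{1}{2}]$ we have $\frac{1}{2}\leq -z\leq 1$, so the powers of $(-z)$ are comparable to constants. Precisely, $(-z)^{2\omega-1}\leq 2\,(-z)^{2\omega}$, and $(-z)^{2\omega}\leq 2^{2\omega+1}(-z)^{2\omega+1}$ there. It therefore suffices to prove the classical one-dimensional Hardy inequality with the singularity at the Dirichlet endpoint:
\begin{equation*}
\int_{-1}^{-1/2}\frac{f^{2}}{(z+1)^{2}}dz\leq 4\int_{-1}^{-1/2}\Big(\frac{df}{dz}\Big)^{2}dz.
\end{equation*}
To prove it, set $x=z+1$ and write $f^{2}/x^{2}=-f^{2}\,\frac{d}{dx}(1/x)$, then integrate by parts on $[0,\frac12]$. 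The boundary term at $x=0$ vanishes because $f(-1)=0$ and $f\in C^{1}$ give $f=O(x)$. The boundary term at $x=\frac{1}{2}$ equals $-2f^{2}(-\frac12)\leq 0$, so it can be dropped. What remains is the cross term $2\int ff'/x$. Cauchy--Schwarz bounds it by $2\big(\int f^{2}/x^{2}\big)^{1/2}\big(\int f'^{2}\big)^{1/2}$, and absorbing gives the constant $4$.

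Combining this with the weight comparisons, the near-axis piece is bounded by a constant of order $2^{2\omega+4}$ times $\int(-z)^{2\omega+1}(f')^{2}$. The exact power of $2$ comes from bookkeeping the factors above, and there is room to spare.

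\textbf{Conclusion and main obstacle.} Adding the two pieces gives \eqref{standard hardy} with $C(\omega)=\frac{4}{\omega^{2}}+2^{2\omega+4}$. For \eqref{eq: standard hardy with lesssim}, when $\omega_{0}\leq\omega\leq 5$ we have $2^{2\omega+4}\leq 2^{14}\lesssim \omega_0^{-2}$ (recall $\omega_0\le 5$) and $\frac{4}{\omega^{2}}\leq\frac{4}{\omega_{0}^{2}}$. The only delicate point is the vanishing of the boundary term at $z=-1$, which is exactly where the hypothesis $f(-1)=0$ is used. Everything else is routine bookkeeping of constants.
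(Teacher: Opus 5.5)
Your proof is correct and is essentially the paper's argument: the same split at $z=-\tfrac12$, the bound $(z+1)^{-2}\le 4$ combined with \eqref{eq: standard hardy with zero boundary condition} near the horizon, and the same integration-by-parts proof of $\int_{-1}^{-1/2}f^{2}/(z+1)^{2}\,dz\le 4\int_{-1}^{-1/2}(f')^{2}\,dz$ near the axis, with the nonpositive boundary term at $z=-\tfrac12$ dropped. One point of bookkeeping should be tidied. The two pointwise comparisons you wrote between powers of $(-z)$ do not feed directly into the unweighted Hardy step; instead use $(-z)^{2\omega-1}\le 2$ on $[-1,-\tfrac12]$ before that step and $1\le 2^{2\omega+1}(-z)^{2\omega+1}$ after it, which gives exactly $2\cdot 4\cdot 2^{2\omega+1}=2^{2\omega+4}$.
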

\begin{proof}
We split the integral into two parts:\begin{equation}
\int_{-1}^{0}(-z)^{2\omega-1}\frac{f^{2}(z)}{(z+1)^{2}}dz = \int_{-1}^{-\frac{1}{2}}(-z)^{2\omega-1}\frac{f^{2}(z)}{(z+1)^{2}}dz+\int_{-\frac{1}{2}}^{0}(-z)^{2\omega-1}\frac{f^{2}(z)}{(z+1)^{2}}dz.
\end{equation}
For the integration on $[-1,-\frac{1}{2}]$, using integration by parts, we have\begin{equation}
\begin{aligned}
\int_{-1}^{-\frac{1}{2}}\frac{(-z)^{2\omega-1}}{(z+1)^{2}}f^{2}(z)dz &\leq 2\int_{-1}^{-\frac{1}{2}}\frac{f^{2}(z)}{(z+1)^{2}}dz\\&=-2\int_{-1}^{-\frac{1}{2}}f^{2}(z)d\frac{1}{z+1}
\\&= -8f^{2}(-\frac{1}{2})+2\int_{-1}^{-\frac{1}{2}}\frac{2f\frac{df}{dz}}{(z+1)}dz\\&\leq
4\left(\int_{-1}^{-\frac{1}{2}}\frac{f^{2}(z)}{(z+1)^{2}}dz\right)^{\frac{1}{2}}\left(\int_{-1}^{-\frac{1}{2}}\left(\frac{df}{dz}\right)^{2}dz\right)^{\frac{1}{2}}.
\end{aligned}
\end{equation}
Hence, we have \begin{equation*}
    \int_{-1}^{-\frac{1}{2}}\frac{f^{2}(z)}{(z+1)^{2}}dz\leq 4\int_{-1}^{-\frac{1}{2}}\left(f^{\prime}\right)^{2}dz,
\end{equation*}
and \begin{equation}
    \int_{-1}^{-\frac{1}{2}}\frac{(-z)^{2\omega-1}}{(z+1)^{2}}f^{2}(z)dz\leq 8\int_{-1}^{-\frac{1}{2}}\left(f^{\prime}\right)^{2}dz\leq 2^{2\omega+4}\int_{-1}^{-\frac{1}{2}}(-z)^{2\omega+1}\left(f^{\prime}\right)^{2}dz.\label{first part of standard hardy}
\end{equation}
For the integration on $[-\frac{1}{2},0]$, we have\begin{equation}
\begin{aligned}
\int_{-\frac{1}{2}}^{0}(-z)^{2\omega-1}\frac{f^{2}(z)}{(z+1)^{2}}dz&\leq 4\int_{-\frac{1}{2}}^{0}(-z)^{2\omega-1}f^{2}(z)dz\\&\leq 4\int_{-1}^{0}(-z)^{2\omega-1}f^{2}(z)dz\\&\leq \frac{4}{\omega^{2}}\int_{-1}^{0}(-z)^{2\omega+1}\left(f^{\prime}\right)^{2}dz,
\end{aligned}
\label{eq: second part of standard hardy}
\end{equation}
where the last inequality follows from \eqref{eq: fundamental theorem of calculus}.

Combining \eqref{first part of standard hardy} and \eqref{eq: second part of standard hardy} gives \eqref{standard hardy}. The inequality \eqref{eq: standard hardy with lesssim} follows easily from \eqref{standard hardy}. This concludes the proof.
\end{proof}
Applying the standard weighted Hardy inequality \eqref{standard hardy}, we can prove the following version of the Hardy inequality:
\begin{lemma}
Let $F_{b}\in\mathcal{F}_{b}$ and $f\in C^{2}([-1,0])$ with $f(-1)= 0$. Then for any $5\geq \omega\geq\omega_{0}>0$ and $0<\epsilon\ll \omega_{0}$, we have
\begin{equation}
\int_{-1}^{0}(-z)^{2\omega-1}F_{b}^{2}(z)\left(\frac{d}{dz}\left(\frac{f(z)}{z+1}\right)\right)^{2}\lesssim \frac{1}{\omega_{0}^{2}}\frac{b^{2}}{\epsilon}\int_{-1}^{0}(-z)^{2(\omega-\epsilon)+1}\left(\frac{d^{2}f}{dz^{2}}\right)^{2}dz.
\label{eq: hardy with derivative}
\end{equation}
\end{lemma}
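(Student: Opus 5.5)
We first trade the weight $F_{b}^{2}$ for a power of $(-z)$. By Definition~\ref{def:schematic notation}(a), applied with exponent $\epsilon$, we have $F_{b}^{2}(z)\lesssim \frac{b^{2}}{\epsilon^{2}}(-z)^{-2\epsilon}$ on $[-1,0]$, while on $[-1,-\frac12]$ even $F_{b}^{2}\lesssim b^{2}$. To obtain the stated factor $b^{2}/\epsilon$ rather than $b^{2}/\epsilon^{2}$, we apply (a) with exponent $\epsilon/2$ only on $[-\frac12,0]$, and the bound $b^{2}$ on $[-1,-\frac12]$. On $[-\frac12,0]$ this produces a weight $(-z)^{2\omega-1-\epsilon}$, which is dominated by $(-z)^{2(\omega-\epsilon)-1}$ there. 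If that bookkeeping does not close, we accept the crude $\epsilon^{-2}$ first and then re-optimize the exponent split. In either case the problem reduces to bounding
\begin{equation*}
\int_{-1}^{0}(-z)^{2\omega'-1}\left(\frac{d}{dz}\left(\frac{f}{z+1}\right)\right)^{2}dz,\qquad \omega'=\omega-\epsilon\geq \omega_{0}/2,
\end{equation*}
by the right-hand side of \eqref{eq: hardy with derivative}.

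Next we handle the quotient. We write $g:=\frac{d}{dz}\left(\frac{f}{z+1}\right)=\frac{(z+1)f'-f}{(z+1)^{2}}$. Set $h(z):=(z+1)f'(z)-f(z)$. Then $h(-1)=-f(-1)=0$ and $h'(z)=(z+1)f''(z)$. Hence $g=h/(z+1)^{2}$, and the integrand becomes $(-z)^{2\omega'-1}h^{2}/(z+1)^{4}$.

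We split at $z=-\frac12$. On $[-\frac12,0]$ the factor $(z+1)^{-4}$ is bounded. So $\int(-z)^{2\omega'-1}h^{2}$ is controlled, via \eqref{eq: standard hardy with zero boundary condition} (or \eqref{eq: fundamental theorem of calculus for lesssim}) with $h(-1)=0$, by $\frac{1}{\omega_{0}^{2}}\int(-z)^{2\omega'+1}(h')^{2}\lesssim \frac{1}{\omega_{0}^{2}}\int(-z)^{2\omega'+1}(f'')^{2}$, since $|z+1|\leq1$.

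Near the axis, on $[-1,-\frac12]$, the weight $(-z)^{\cdot}$ is comparable to $1$, so we must show $\int_{-1}^{-1/2}h^{2}/(z+1)^{4}\lesssim\int_{-1}^{-1/2}(f'')^{2}$. We write $h(z)=\int_{-1}^{z}(y+1)f''(y)\,dy$. Since the weight at the endpoint vanishes to order one, a one-dimensional Hardy inequality of the form $\int_{0}^{a}x^{-4}\left(\int_{0}^{x}yF(y)dy\right)^{2}dx\lesssim\int_{0}^{a}F^{2}$ gives exactly this bound. We prove that inequality with the integration-by-parts argument of \eqref{first part of standard hardy}, now with $d(z+1)^{-3}$, or by Cauchy--Schwarz: $h^{2}\leq \int_{-1}^{z}(y+1)^{2}dy\int_{-1}^{z}(f'')^{2}$. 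Boundary terms at $-\frac12$ are discarded by sign or absorbed into the $[-\frac12,0]$ part.

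The main obstacle is this near-axis step together with constant tracking. The boundary term from integrating by parts at $z=-\frac12$ must have the right sign or be controlled. The Cauchy--Schwarz route gives only $h^{2}\lesssim (z+1)^{3}\int_{-1}^{z}(f'')^{2}$, which against $(z+1)^{-4}$ is log-divergent. So we will use the genuine Hardy inequality: integrate $h^{2}\,d\bigl(-\tfrac{1}{3}(z+1)^{-3}\bigr)$ by parts and absorb, exactly mirroring the proof of \eqref{standard hardy}. Finally, we combine the two regions and insert the $b^{2}/\epsilon$ factor from the first step.
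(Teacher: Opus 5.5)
Your Hardy-type argument is correct. It shares the paper's starting identity $\frac{d}{dz}\bigl(\frac{f}{z+1}\bigr)=\frac{h}{(z+1)^{2}}=\frac{1}{(z+1)^{2}}\int_{-1}^{z}(y+1)f''(y)\,dy$ and then takes a different route.

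\textbf{How the two routes compare.} The paper does not split the interval. It uses $(y+1)\le(z+1)$ to get the pointwise bound $\bigl|\frac{d}{dz}\bigl(\frac{f}{z+1}\bigr)\bigr|\le\frac{1}{z+1}\widetilde{F}(z)$ with $\widetilde{F}=\int_{-1}^{z}|f''|$. It then applies the already-proved weighted Hardy inequality \eqref{eq: standard hardy with lesssim} to $\widetilde{F}$ in one step.

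You instead keep $h$ exactly. You use \eqref{eq: standard hardy with zero boundary condition} for $h$ on $[-\frac12,0]$, and prove a separate $(z+1)^{-4}$ Hardy inequality on $[-1,-\frac12]$ by integrating against $d\bigl(-\frac13(z+1)^{-3}\bigr)$. That step closes:
\begin{itemize}
\item $|h|\le\frac12\|f''\|_{L^\infty}(z+1)^{2}$ kills the boundary term at $z=-1$, and it makes $\int h^{2}(z+1)^{-4}$ finite, so the absorption is legitimate.
\item The boundary term at $z=-\frac12$ equals $-\frac83 h(-\frac12)^{2}\le 0$ and can be discarded.
\item You obtain the constant $\frac49$.
\end{itemize}
You are also right that plain Cauchy--Schwarz is log-divergent there. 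The paper's version is shorter because it recycles the previous lemma. Yours is self-contained and exploits $h'=(z+1)f''$ exactly, at the cost of the region split.

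\textbf{The $\epsilon$-bookkeeping does not work.} Condition (a) with exponent $\epsilon/2$ gives $|F_b|\lesssim\frac{2b}{\epsilon}(-z)^{-\epsilon/2}$, i.e.\ $F_b^{2}\lesssim\frac{4b^{2}}{\epsilon^{2}}(-z)^{-\epsilon}$. Changing the exponent only changes constants; it never removes a power of $\epsilon$.

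No re-optimization can rescue the factor $b^{2}/\epsilon$, because it is false in general. The function $F_b=b\log\frac{1}{-z}$ lies in $\mathcal{F}_b$, since $t^{\epsilon}\log\frac1t\le\frac{1}{e\epsilon}$. Take $f''=(-z)^{-\omega-1+2\epsilon}$, regularized at $z=0$ and passed to the limit by monotone convergence, with $f(-1)=f'(-1)=0$. Then the right-hand integral equals $\frac{1}{2\epsilon}$, while the left-hand side is $\gtrsim\frac{b^{2}}{\omega^{2}\epsilon^{3}}$. So any admissible prefactor is $\gtrsim\frac{b^{2}}{\omega^{2}\epsilon^{2}}$, which beats $\frac{b^{2}}{\omega_0^{2}\epsilon}$ for small $\epsilon$ when $\omega=\omega_0$.

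The paper's proof makes the same slip: it writes $\frac{b^{2}}{\epsilon}$ immediately after invoking (a). Since $\epsilon$ is a fixed small number wherever the lemma is used, this is harmless. Take your fallback $\frac{b^{2}}{\omega_0^{2}\epsilon^{2}}$ as the actual conclusion and drop the ``re-optimize'' hedge.
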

\begin{proof}
First, we deal with $\left(\frac{1}{z+1}f\right)^{\prime}$. Using the fact that $f(-1) = 0$, we have \begin{align*}
    \left\vert\frac{d}{dz}\left(\frac{1}{z+1}f\right)\right\vert =& \left\vert\frac{d}{dz}\left(\frac{1}{z+1}\int_{-1}^{z}f^{\prime}(z^{\prime})d(z^{\prime}+1)\right)\right\vert \\=& \left\vert\frac{d}{dz}\left(f^{\prime}(z)-\frac{1}{z+1}\int_{-1}^{z}(z^{\prime}+1)\frac{d^{2}f}{dz^{2}}dz^{\prime}\right)\right\vert\\=&\left\vert\frac{d^{2}f}{dz^{2}}-\frac{d^{2}f}{dz^{2}}+\frac{1}{(z+1)^{2}}\int_{-1}^{z}(z^{\prime}+1)\frac{d^{2}f}{dz^{2}}dz^{\prime}\right\vert\\\leq&\frac{1}{(z+1)}\int_{-1}^{z}\left\vert\frac{d^{2}f}{dz^{2}}\right\vert dz^{\prime}.
\end{align*}
Hence, we have \begin{align*}
    \int_{-1}^{0}(-z)^{2\omega-1}F_{b}^{2}(z)\left(\frac{d}{dz}\left(\frac{1}{z+1}f(z)\right)\right)^{2}dz\lesssim&\frac{b^{2}}{\epsilon}\int_{-1}^{0}\frac{(-z)^{2(\omega-\epsilon)-1}}{(z+1)^{2}}\left(\int_{-1}^{z}\left\vert\frac{d^{2}f}{dz^{2}}\right\vert dz^{\prime}\right)^{2}dz.
\end{align*}
Let \begin{equation*}
    \widetilde{F} = \int_{-1}^{z}\left\vert\frac{d^{2}f}{dz^{2}}\right\vert dz^{\prime}.
\end{equation*}
Then clearly, we have $\widetilde{F}(-1) = 0$. Hence, applying \eqref{eq: standard hardy with lesssim}, we have \begin{equation*}
    \int_{-1}^{0}(-z)^{2\omega-1}F_{b}^{2}\left(\frac{d}{dz}\left(\frac{1}{z+1}f(z)\right)\right)^{2}dz\lesssim \frac{1}{\omega_{0}^{2}}\frac{b^{2}}{\epsilon}\int_{-1}^{0}(-z)^{2\left(\omega-\epsilon\right)+1}\left(\frac{d^{2}f}{dz^{2}}\right)dz.
\end{equation*}
This concludes the proof.
\end{proof}
\subsection{First order estimates for $m_{\rho}$}
\label{sec: first-order energy estimate for mp}
In this section, we establish the $L^{2}$-estimates for $\frac{m_{\rho}}{\mr{r}^{2}}$ and $\partial_{z}\left(\frac{m_{\rho}}{\mr{r}^{2}}\right)$. The main difficulty in estimating the $L^{2}$-norm of $\frac{m_{\rho}}{\mr{r}^{2}}$ lies in the potential loss of losing derivatives. Since the equation \eqref{eq:z-transport eq for mrho} contains terms $\partial_{z}\left(\frac{\Psi_{\rho}}{\mr{r}}\right)$ and $\partial_{z}\left(\frac{r_{\rho}}{\mr{r}}\right)$, if one views $\mr{r}$ as $(z+1)$ in the near-axis region, then these two terms can be heuristically thought as $\partial_{z}^{2}\Psi_{\rho}$ and $\partial_{z}^{2}r_{\rho}$, which suggests that $\left\Vert\frac{m_{\rho}}{\mr{r}^{2}}(s,\cdot)\right\Vert_{L_{z}^{2}([-1,0])}$ would naturally be bounded by $\left\Vert\partial_{z}^{2}\Psi_{\rho}\right\Vert_{L_{z}^{2}}$ and $\left\Vert\partial_{z}^{2}r_{\rho}\right\Vert_{L_{z}^{2}}$. However, by exploiting the $\mr{r}^{2}$-weight in front of these two potentially dangerous terms, we can avoid losing derivatives and instead estimate $\left\Vert\frac{m_{\rho}}{\mr{r}^{2}}(s,\cdot)\right\Vert_{L_{z}^{2}([-1,0])}$ in terms of $\left\Vert\partial_{z}r_{\rho}\right\Vert_{L_{z}^{2}}$ and $\left\Vert\partial_{z}\Psi_{\rho}\right\Vert_{L_{z}^{2}}$. More specifically, we can prove the following proposition.
\begin{proposition}
\label{prop: second order estimate of mrho}
    For any $5\geq\omega\geq\omega_{0}>0$ and a fixed small real number $\epsilon$, we have the following the estimate
    \begin{equation}
\begin{aligned}
    \int_{-1}^{0}(-z)^{2\omega-1}\left(\frac{m_{\rho}}{\mr{r}^{2}}\right)^{2}dz\lesssim& \frac{1}{\omega_{0}^{3}}\frac{k^{4}}{\epsilon}\int_{-1}^{0}(-z)^{2\omega-2\epsilon}(\partial_{z}r_{\rho})^{2}dz+\frac{1}{\omega_{0}^{3}}\frac{k^{2}}{\epsilon}\int_{-1}^{0}(-z)^{2\omega-2\epsilon}(\partial_{z}\Psi_{\rho})^{2}dz.
    \end{aligned}
    \label{eq: first order energy estimate for mrho}
    \end{equation}
    Moreover, using the second-order derivatives, we can further derive the following estimate: 
    \begin{equation}
        \begin{aligned}
            \int_{-1}^{0}(-z)^{2\omega-1}\left(\frac{m_{\rho}}{\mr{r}^{2}}\right)^{2}dz\lesssim&\frac{1}{\omega_{0}^{4}}\frac{k^{4}}{\epsilon}(\partial_{z}r_{\rho})^{2}(s,-1)+\frac{1}{\omega_{0}^{4}}\frac{k^{2}}{\epsilon}(\partial_{z}\Psi_{\rho})^{2}(s,-1)\\&+\frac{1}{\omega_{0}^{5}}\frac{k^{4}}{\epsilon}\int_{-1}^{0}(-z)^{2(\omega-\epsilon)+2}(\partial_{z}^{2}r_{\rho})^{2}+\frac{1}{\omega_{0}^{5}}\frac{k^{2}}{\epsilon}\int_{-1}^{0}(-z)^{2(\omega-\epsilon)+2}\left(\partial_{z}^{2}\Psi_{\rho}\right)^{2}.
        \end{aligned}
        \label{eq: first-order energy estimate using second order terms}
    \end{equation}
\end{proposition}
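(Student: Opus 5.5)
The plan is to integrate the $z$-transport equation \eqref{eq:z-transport eq for mrho} from the center $z=-1$, where $m_{\rho}(s,-1)=0$, and then convert the resulting pointwise bound into the weighted $L^{2}$ bound via the Hardy inequalities \eqref{standard hardy}--\eqref{eq: standard hardy with lesssim}.

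\textbf{Step 1: pointwise bound for $m_{\rho}$.} Set $A:=\frac{\mr{r}}{\partial_{z}\mr{r}}(\partial_{z}\mr{\phi})^{2}$. By Proposition~\ref{prop: estimate on the exact k self similar spacetime} we have $A\geq0$ and $A\lesssim \frac{k^{2}}{\epsilon^{2}}|z|^{-2\epsilon}$, so $\int_{-1}^{0}A\lesssim k^{2}/\epsilon^{2}\ll1$. Solving with the integrating factor $e^{\int_{-1}^{z}A}$ then gives
\begin{equation*}
|m_{\rho}(s,z)|\lesssim \int_{-1}^{z}\mr{r}^{2}\Big(|F_{k^{2}}|\tfrac{|r_{\rho}|}{\mr{r}}+|F_{k^{2}}||\partial_{z}r_{\rho}|+|F_{k}|\big|\partial_{z}(\tfrac{\Psi_{\rho}}{\mr{r}})\big|+|F_{k}|\big|\partial_{z}(\mr{\phi}\tfrac{r_{\rho}}{\mr{r}})\big|\Big)dz'.
\end{equation*}
The key point is to avoid differentiating the singular quotients. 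Write
\[
\mr{r}^{2}\partial_{z}(\Psi_{\rho}/\mr{r})=\mr{r}\,\partial_{z}\Psi_{\rho}-\partial_{z}\mr{r}\,\Psi_{\rho},
\]
and treat $\mr{r}^{2}\partial_{z}(\mr{\phi}r_{\rho}/\mr{r})$ in the same way. In the latter, $\partial_{z}\mr{\phi}$ is itself of class $F_{k}$, so the $r_{\rho}$-terms carry $F_{k}^{2}\sim F_{k^{2}}$ and hence an extra $k^{2}$. Because of the boundary conditions at the center, $r_{\rho}(s,-1)=\Psi_{\rho}(s,-1)=0$, we have $|\Psi_{\rho}|,|r_{\rho}|\leq\int_{-1}^{z}|\partial_{z}(\cdot)|$. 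Combined with $\mr{r}\approx 1+z$ and $(z'+1)\leq(z+1)$, this yields
\[
\frac{|m_{\rho}|}{\mr{r}^{2}}\lesssim\frac{1}{z+1}\widetilde{G}(z),\qquad \widetilde{G}(z):=\int_{-1}^{z}\Big(|F_{k^{2}}||\partial_{z}r_{\rho}|+|F_{k}||\partial_{z}\Psi_{\rho}|\Big)dz',
\]
up to double integrals of the same type, which are absorbed identically.

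\textbf{Step 2: Hardy.} Since $\widetilde{G}(-1)=0$, \eqref{eq: standard hardy with lesssim} gives
\[
\int_{-1}^{0}(-z)^{2\omega-1}\frac{\widetilde{G}^{2}}{(z+1)^{2}}\lesssim \omega_{0}^{-2}\int_{-1}^{0}(-z)^{2\omega+1}(F_{k^{2}}^{2}|\partial_{z}r_{\rho}|^{2}+F_{k}^{2}|\partial_{z}\Psi_{\rho}|^{2}).
\]
Now use $|z|^{2\epsilon}F_{a}^{2}\lesssim a^{2}/\epsilon^{2}$ (or the slightly sharper $L^{2}$ form giving $a^{2}/\epsilon$), together with $(-z)^{2\omega+1-2\epsilon}\leq(-z)^{2\omega-2\epsilon}$. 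This produces \eqref{eq: first order energy estimate for mrho}; the extra power of $\omega_{0}^{-1}$ accounts for the iterated integrals and the region where $\omega$-dependent constants enter. Away from the axis, on $[-1/2,0]$, $\mr{r}$ is bounded below and the same computation applies trivially.

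\textbf{Step 3: second-order version.} Apply \eqref{eq: fundamental theorem of calculus for lesssim} to $f=\partial_{z}r_{\rho}$ and $f=\partial_{z}\Psi_{\rho}$ with $\omega$ replaced by $\omega-\epsilon+\tfrac12\geq\omega_{0}/2$. This bounds $\int(-z)^{2\omega-2\epsilon}f^{2}$ by $\omega_{0}^{-1}f^{2}(-1)+\omega_{0}^{-2}\int(-z)^{2(\omega-\epsilon)+2}(\partial_{z}f)^{2}$, and inserting into \eqref{eq: first order energy estimate for mrho} gives \eqref{eq: first-order energy estimate using second order terms}.

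\textbf{Main obstacle.} The delicate step is Step 1 near the axis. The terms $\partial_{z}(\Psi_{\rho}/\mr{r})$ and $\partial_{z}(\mr{\phi}r_{\rho}/\mr{r})$ look like second derivatives, and $F_{k}$ is not differentiable, so integrating by parts is unavailable. The proof must rely on the $\mr{r}^{2}$ weight and on vanishing at $z=-1$ to reduce everything to $\frac{1}{z+1}\int_{-1}^{z}|\partial_{z}\cdot|$, which is exactly the form handled by Hardy's inequality. I would also double-check that no $\partial_{z}\mr{\phi}\sim 1/k$ factor appears without its $|z|^{\epsilon}$ compensation. This is guaranteed by the class-$F_{k}$ bounds, at the cost of the weight loss $2\epsilon$ visible in the statement.
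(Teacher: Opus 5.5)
Your proof is correct, but it is organized differently from the paper's.

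\textbf{The paper's route.} After the integrating-factor bound, the paper squares via Cauchy--Schwarz using $\int_{-1}^{z}\mr{r}^{p}\approx\mr{r}^{p+1}$, multiplies by $(-z)^{2\omega-1}$ and applies Fubini. The weight then sits in the kernel $\int_{z}^{0}(-\widetilde{z})^{2\omega-1}\mr{r}^{-2}(\widetilde{z})\,d\widetilde{z}$. This kernel is bounded by $\omega_{0}^{-1}\mr{r}(z)^{-1}$ on $[-1,-\frac{1}{2}]$ and by $\omega_{0}^{-1}(-z)^{2\omega}$ on $[-\frac{1}{2},0]$. Only then is $|\partial_{z}(\Psi_{\rho}/\mr{r})|^{2}$ expanded, and \eqref{standard hardy} (resp.\ the integration-by-parts inequality near the horizon) applied to $\Psi_{\rho}$ itself.

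\textbf{Your route.} You expand $\mr{r}^{2}\partial_{z}(\Psi_{\rho}/\mr{r})=\mr{r}\partial_{z}\Psi_{\rho}-\partial_{z}\mr{r}\,\Psi_{\rho}$ before squaring and use the vanishing at the axis pointwise. This dominates $m_{\rho}/\mr{r}^{2}$ by a Hardy average $(z+1)^{-1}\int_{-1}^{z}(\cdots)$, and one application of \eqref{eq: standard hardy with lesssim} to the primitive finishes. This is the same device the paper uses to prove \eqref{eq: hardy with derivative}.

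\textbf{Comparison.} Your argument is shorter and needs no kernel computation or region splitting. It even gives $\omega_{0}^{-2}$, which implies the stated $\omega_{0}^{-3}$ since $\omega_{0}\leq 5$. The paper's route instead keeps the $F_{k}^{2}$ factor attached pointwise throughout.

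\textbf{Step to write out.} Do not leave the double integral as ``absorbed identically.'' By Fubini,
\[
\int_{-1}^{z}|F_{k}|\,\partial_{z}\mr{r}\,|\Psi_{\rho}|\,dz'\leq\int_{-1}^{z}|\partial_{z}\Psi_{\rho}(z'')|\int_{z''}^{z}|F_{k}(z')|\,dz'\,dz''\lesssim\frac{k}{\epsilon}(z+1)\int_{-1}^{z}|\partial_{z}\Psi_{\rho}|.
\]
This uses $\partial_{z}\mr{r}\lesssim 1$, $|F_{k}|\lesssim k$ on $[-1,-\frac{1}{2}]$, and integrability of $|z|^{-\epsilon}$ near $z=0$. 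So in these terms the $F_{k}$ weight is replaced by a constant rather than carried inside $\widetilde{G}$, which is harmless. The $r_{\rho}$ terms work the same way, with $k^{2}$ in place of $k$.

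\textbf{On the $\epsilon$-dependence.} The class bound only gives $F_{a}^{2}\lesssim a^{2}\epsilon^{-2}|z|^{-2\epsilon}$, so your constants are really $k^{2}/\epsilon^{2}$ and $k^{4}/\epsilon^{2}$. Since $\epsilon$ is fixed this is immaterial, and the paper's own $\epsilon$-bookkeeping is no sharper.
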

\begin{proof}
    Using the integration factor for the $z$-transport equation \eqref{eq:z-transport eq for mrho} of $m_{\rho}$, we have\begin{equation}
\partial_{z}\left(e^{\int_{-1}^{z}\frac{\mr{r}}{\partial_{z}\mr{r}}(\partial_{z}\mr{\phi})^{2}d\widetilde{z}}m_{\rho}\right) = e^{\int_{-1}^{z}\frac{\mr{r}}{\partial_{z}\mr{r}}(\partial_{z}\mr{\phi})^{2}d\widetilde{z}}\left[\mr{r}F_{k^{2}}(z)r_{\rho}+\mr{r}^{2}F_{k^{2}}(z)\partial_{z}r_{\rho}+\mr{r}^{2}F_{k}(z)\partial_{z}\left(\frac{\Psi_{\rho}}{\mr{r}}-\frac{\mr{\phi}}{\mr{r}}r_{\rho}\right)\right].
\end{equation}
Directly integrating the above equation and using the fact that $$e^{\int_{-1}^{z}\frac{\mr{r}}{\partial_{z}\mr{r}}(\partial_{z}\mr{\phi})^{2}}d\widetilde{z}\leq e^{Ck^{2}},$$we have\begin{equation}
\begin{aligned}
\vert m_{\rho}\vert\lesssim& \int_{-1}^{z}\mr{r}^{2}F_{k^{2}}\left\vert\frac{r_{\rho}}{\mr{r}}\right\vert+\mr{r}^{2}F_{k^{2}}\vert\partial_{z}r_{\rho}\vert+\mr{r}^{2}F_{k}(z)\left\vert\partial_{z}\left(\frac{\Psi_{\rho}}{\mr{r}}-\frac{\mr{\phi}}{\mr{r}}r_{\rho}\right)\right\vert d\widetilde{z}\\\lesssim&
\int_{-1}^{z}\mr{r}^{2}F_{k^{2}}\left\vert\frac{r_{\rho}}{\mr{r}}\right\vert+\mr{r}^{2}F_{k^{2}}\vert\partial_{z}r_{\rho}\vert+\mr{r}^{2}F_{k}(z)\left\vert\partial_{z}\left(\frac{\Psi_{\rho}}{\mr{r}}\right)\right\vert+\mr{r}^{2}F_{k}^{2}\left\vert\partial_{z}\left(\frac{r_{\rho}}{\mr{r}}\right)\right\vert.
\end{aligned}
\label{eq: integral equation for mrho}
\end{equation}
Taking the square of the above equation, applying the Cauchy-Schwarz inequality, and using the fact that \begin{equation*}
\int_{-1}^{z}\mr{r}^{p}d\widetilde{z}\approx \mr{r}^{p+1},
\end{equation*}
we have\begin{equation}
\left(\frac{m_{\rho}}{\mr{r}^{2}}\right)^{2}(z)\lesssim \int_{-1}^{z}F_{k^{2}}^{2}\left\vert\frac{r_{\rho}}{\mr{r}}\right\vert^{2}+F_{k^{2}}^{2}(z)(\partial_{z}r_{\rho})^{2}d\widetilde{z}+\frac{1}{\mr{r}^{2}}\int_{-1}^{z}\mr{r}^{3}F_{k}^{2}\left\vert\partial_{z}\left(\frac{\Psi_{\rho}}{\mr{r}}\right)\right\vert^{2}+\mr{r}^{3}F_{k^{2}}^{2}\left\vert\partial_{z}\left(\frac{r_{\rho}}{\mr{r}}\right)\right\vert^{2}d\widetilde{z}.\label{eq: the pointwise estimate for mpr}
\end{equation}
Multiplying \eqref{eq: the pointwise estimate for mpr} by $(-z)^{2\omega-1}$, integrating over the region $z\in[-1,0]$, and applying Fubini Theorem, we have\begin{equation}
\begin{aligned}
\int_{-1}^{0}(-z)^{2\omega-1}\left(\frac{m_{\rho}}{\mr{r}^{2}}\right)^{2}dz\leq&\int_{-1}^{0} F_{k^{2}}^{2}\left\vert\frac{r_{\rho}}{\mr{r}}\right\vert^{2}+F_{k^{2}}^{2}(\partial_{z}r_{\rho})^{2}\int_{z}^{0}(-\widetilde{z})^{2\omega-1}d\widetilde{z}dz\\&+\int_{-1}^{0}\frac{(-z)^{2\omega-1}}{\mr{r}^{2}(z)}\int_{-1}^{z}\mr{r}^{3}(\widetilde{z})F_{k}^{2}(\widetilde{z})\left\vert\partial_{z}\left(\frac{\Psi_{\rho}}{\mr{r}}-\frac{\mr{\phi}}{\mr{r}}r_{\rho}\right)\right\vert^{2}d\widetilde{z}dz\\\leq&\frac{1}{2\omega}\int_{-1}^{0}(-z)^{2\omega}F_{k^{2}}^{2}\left\vert\frac{r_{\rho}}{\mr{r}}\right\vert^{2}+(-z)^{2\omega}F_{k^{2}}^{2}\left(\partial_{z}r_{\rho}\right)^{2}dz\\&+\int_{-1}^{0}\mr{r}^{3}(z)F_{k}^{2}(z)\left\vert\partial_{z}\left(\frac{\Psi_{\rho}}{\mr{r}}\right)\right\vert^{2}+\mr{r}^{3}F_{k^{2}}^{2}\left\vert\partial_{z}\left(\frac{r_{\rho}}{\mr{r}}\right)\right\vert^{2}\int_{z}^{0}\frac{(-\widetilde{z})^{2\omega-1}}{\mr{r}^{2}(\widetilde{z})}d\widetilde{z}dz.
\end{aligned}
\label{eq: backward energy estimate for mp; first version}
\end{equation}
For the first term on the right-hand side of \eqref{eq: backward energy estimate for mp; first version}, recall that \begin{equation*}
    \vert z\vert ^{2\epsilon}F_{k^{2}}^{2}\lesssim \frac{k^{4}}{\epsilon}.
\end{equation*}
Then by \eqref{standard hardy}, we have\begin{equation}
\label{eq: intermediate step to estimate the firt term in the backward of mp}
\begin{aligned}
\int_{-1}^{0}(-z)^{2\omega}F_{k^{2}}^{2}\left\vert\frac{r_{p}}{\mr{r}}\right\vert^{2}dz\lesssim&\frac{k^{4}}{\epsilon}\int_{-1}^{0}(-z)^{2\omega-2\epsilon}\left\vert\frac{r_{\rho}}{\mr{r}}\right\vert^{2}dz\\\lesssim&\frac{k^{4}}{\epsilon}\int_{-1}^{0}(-z)^{2\omega-2\epsilon}\frac{r_{\rho}^{2}}{(z+1)^{2}}dz\\\lesssim&\frac{k^{4}}{\epsilon}\frac{1}{\omega_{0}^{2}}\int_{-1}^{0}(-z)^{2\omega-2\epsilon+2}\left(\partial_{z}r_{\rho}\right)^{2}dz.
\end{aligned}
\end{equation}
The second term on the right-hand side of \eqref{eq: backward energy estimate for mp; first version} can be easily bounded by\begin{equation}
\int_{-1}^{0}(-z)^{2\omega}F_{k^{2}}^{2}(z)(\partial_{z}r_{p})^{2}dz\lesssim\frac{k^{4}}{\epsilon} \int_{-1}^{0}(-z)^{2\omega-2\epsilon}(\partial_{z}r_{p})^{2}dz.
\end{equation}
As we have mentioned above, the terms with $\partial_{z}\left(\frac{\Psi_{\rho}}{\mr{r}}\right)$ and $\partial_{z}\left(\frac{r_{\rho}}{\mr{r}}\right)$ on the right-hand side of \eqref{eq: backward energy estimate for mp; first version} are the most dangerous terms. For the term with $\partial_{z}\left(\frac{\Psi_{\rho}}{\mr{r}}\right)$, we have
\begin{equation}
    \begin{aligned}
        &\int_{-1}^{0}\mr{r}^{3}F_{k}^{2}(z)\left\vert\partial_{z}\left(\frac{\Psi_{\rho}}{\mr{r}}\right)\right\vert^{2}\int_{z}^{0}\frac{(-\widetilde{z})^{2\omega-1}}{\mr{r}^{2}(\widetilde{z})}d\widetilde{z}\\=&\int_{-1}^{-\frac{1}{2}}\mr{r}^{3}F_{k}^{2}(z)\left\vert\partial_{z}\left(\frac{\Psi_{\rho}}{\mr{r}}\right)\right\vert^{2}\int_{z}^{0}\frac{(-\widetilde{z})^{2\omega-1}}{\mr{r}^{2}(\widetilde{z})}d\widetilde{z}+\int_{-\frac{1}{2}}^{0}\mr{r}^{3}F_{k}^{2}(z)\left\vert\partial_{z}\left(\frac{\Psi_{\rho}}{\mr{r}}\right)\right\vert^{2}\int_{z}^{0}\frac{(-\widetilde{z})^{2\omega-1}}{\mr{r}^{2}(\widetilde{z})}d\widetilde{z}.
    \end{aligned}
    \label{eq: split the right-side of mp equation}
\end{equation}
To estimate the first term on the right-hand side of \eqref{eq: split the right-side of mp equation}, we first estimate \begin{align*}
    \int_{z}^{0}\frac{(-\widetilde{z})^{2\omega-1}}{\mr{r}^{2}(\widetilde{z})}d\widetilde{z} &= \int_{-\frac{1}{2}}^{0}\frac{(-z)^{2\omega-1}}{\mr{r}^{2}}dz+\int_{z}^{-\frac{1}{2}}\frac{(-\widetilde{z})^{2\omega-1}}{\mr{r}^{2}(\widetilde{z})}d\widetilde{z}\\&\lesssim\frac{1}{2\omega}+\frac{1}{\mr{r}(z)}\lesssim\frac{1}{\omega_{0}}\frac{1}{\mr{r}(z)}.
\end{align*}
Then we have \begin{align*}
    \int_{-1}^{-\frac{1}{2}}\mr{r}^{3}F_{k}^{2}(z)\left\vert\partial_{z}\left(\frac{\Psi_{\rho}}{\mr{r}}\right)\right\vert^{2}\int_{z}^{0}\frac{(-\widetilde{z})^{2\omega-1}}{\mr{r}^{2}(\widetilde{z})}d\widetilde{z}&\lesssim \frac{1}{\omega_{0}}\int_{-1}^{-\frac{1}{2}}\mr{r}^{2}F_{k}^{2}\left\vert\partial_{z}\left(\frac{\Psi_{\rho}}{\mr{r}}\right)\right\vert^{2}dz\\&\lesssim \frac{1}{\omega_{0}}k^{2}\int_{-1}^{-\frac{1}{2}}\mr{r}^{2}\left(\frac{(\partial_{z}\Psi_{\rho})^{2}}{\mr{r}^{2}}+\frac{\Psi_{\rho}^{2}}{\mr{r}^{4}}\right)dz\\&\lesssim\frac{1}{\omega_{0}}k^{2}\int_{-1}^{-\frac{1}{2}}(\partial_{z}\Psi_{\rho})^{2}dz\\&\lesssim \frac{1}{\omega_{0}}k^{2}\int_{-1}^{-\frac{1}{2}}(-z)^{2\omega-2\epsilon}(\partial_{z}\Psi_{\rho})^{2}dz\\&\lesssim \frac{1}{\omega_{0}}k^{2}\int_{-1}^{0}(-z)^{2\omega-2\epsilon}(\partial_{z}\Psi_{\rho})^{2}dz,
\end{align*}
where we have used the Hardy inequality \eqref{standard hardy}.

For the second term on the right-hand side of \eqref{eq: split the right-side of mp equation}, we have \begin{align*}
    \int_{-\frac{1}{2}}^{0}\mr{r}^{3}F_{k}^{2}\left\vert\partial_{z}\left(\frac{\Psi_{\rho}}{\mr{r}}\right)\right\vert^{2}\int_{z}^{0}\frac{(-\widetilde{z})^{2\omega-1}}{\mr{r}^{2}(\widetilde{z})}d\widetilde{z}&\lesssim\frac{1}{\omega_{0}}\int_{-\frac{1}{2}}^{0}(-z)^{2\omega}\mr{r}^{3}F_{k}^{2}\left\vert\partial_{z}\left(\frac{\Psi_{\rho}}{\mr{r}}\right)\right\vert^{2}dz\\&\lesssim\frac{1}{\omega_{0}}\frac{k^{2}}{\epsilon}\int_{-\frac{1}{2}}^{0}(-z)^{2\omega-2\epsilon}\left\vert\partial_{z}\left(\frac{\Psi_{\rho}}{\mr{r}}\right)\right\vert^{2}dz\\&\lesssim\frac{1}{\omega_{0}}\frac{k^{2}}{\epsilon}\int_{-\frac{1}{2}}^{0}(-z)^{2\omega-2\epsilon}\left((\partial_{z}\Psi_{\rho})^{2}+\Psi_{\rho}^{2}\right)dz\\&\lesssim\frac{1}{\omega_{0}}\frac{k^{2}}{\epsilon}\int_{-1}^{0}(-z)^{2\omega-2\epsilon}(\partial_{z}\Psi_{\rho})^{2}dz\\&\quad+\frac{1}{\omega_{0}}\frac{k^{2}}{\epsilon}\frac{1}{\omega_{0}^{2}}\int_{-1}^{0}(-z)^{2\omega-2\epsilon+2}(\partial_{z}\Psi_{\rho})^{2}dz\\&\lesssim\frac{1}{\omega_{0}^{3}}\frac{k^{2}}{\epsilon}\int_{-1}^{0}(-z)^{2\omega-2\epsilon}\left(\partial_{z}\Psi_{\rho}\right)^{2}dz.
\end{align*}

Hence, we have \begin{equation}
\begin{aligned}
    \int_{-1}^{0}\mr{r}^{3}F_{k}^{2}(z)\left\vert\partial_{z}\left(\frac{\Psi_{\rho}}{\mr{r}}\right)\right\vert^{2}\int_{z}^{0}\frac{(-\widetilde{z})^{2\omega-1}}{\mr{r}^{2}(\widetilde{z})}d\widetilde{z}\lesssim\frac{1}{\omega_{0}^{3}}\frac{k^{2}}{\epsilon}\int_{-1}^{0}(-z)^{2\omega-2\epsilon}\left(\partial_{z}\Psi_{\rho}\right)^{2}dz.
    \end{aligned}
\end{equation}

Similarly, we have \begin{equation}
    \begin{aligned}
\int_{-1}^{0}\mr{r}^{3}F_{k}^{2}\left\vert\partial_{z}\left(\mr{\phi}\frac{r_{\rho}}{\mr{r}}\right)\right\vert^{2}\int_{z}^{0}\frac{(-\widetilde{z})^{2\omega-1}}{\mr{r}^{2}(\widetilde{z})}d\widetilde{z}\lesssim\frac{1}{\omega_{0}^{3}}\frac{k^{4}}{\epsilon}\int_{-1}^{0}(-z)^{2\omega-2\epsilon}\left(\partial_{z}r_{\rho}\right)^{2}dz.
\end{aligned}
\end{equation}
Putting everything together, we have \begin{equation}
\begin{aligned}
    \int_{-1}^{0}(-z)^{2\omega-1}\left(\frac{m_{\rho}}{\mr{r}^{2}}\right)^{2}dz\lesssim& \frac{1}{\omega_{0}^{3}}\frac{k^{4}}{\epsilon}\int_{-1}^{0}(-z)^{2\omega-2\epsilon}(\partial_{z}r_{\rho})^{2}dz+\frac{1}{\omega_{0}^{3}}\frac{k^{2}}{\epsilon}\int_{-1}^{0}(-z)^{2\omega-2\epsilon}(\partial_{z}\Psi_{\rho})^{2}dz.
    \end{aligned}
\end{equation}
The equation \eqref{eq: first-order energy estimate using second order terms} follows from \eqref{eq: fundamental theorem of calculus}. This concludes the proof.
\end{proof}
\subsection{Second order estimate for $m_{\rho}$}
In this section, we establish the $L^{2}$-estimate for $\frac{m_{\rho}}{\mr{r}^{2}}$. We rewrite \eqref{eq:z-transport eq for mrho} as: \begin{equation}
    \partial_{z}\left(\frac{m_{\rho}}{\mr{r}^{2}}\right)+\left(\frac{2}{\mr{r}}\frac{d\mr{r}}{dz}+\frac{\mr{r}}{\partial_{z}\mr{r}}(\partial_{z}\mr{\phi})^{2}\right)\frac{m_{\rho}}{\mr{r}^{2}} = F_{k^{2}}\frac{r_{\rho}}{\mr{r}}+F_{k^{2}}\partial_{z}r_{\rho}+F_{k}\partial_{z}\left(\frac{\Psi_{\rho}}{\mr{r}}\right)+F_{k^{2}}\partial_{z}\left(\frac{r_{\rho}}{\mr{r}}\right).\label{eq: z transport equation for mrho for second order estimate}
\end{equation}
We can prove the following proposition.

\begin{proposition}
\label{prop: true second-order energy for mrho}
    For any $5\geq\omega\geq\omega_{0}>0$ and a fixed small real number $\epsilon$, we have the following estimate
  \begin{equation}
        \begin{aligned}
            &\int_{-1}^{0}(-z)^{2\omega-1}\left(\partial_{z}\left(\frac{m_{\rho}}{\mr{r}^{2}}\right)\right)^{2}(s,z)dz\\\lesssim&\frac{k^{4}}{\epsilon}\frac{1}{\omega_{0}^{5}}(\partial_{z}r_{\rho})^{2}(s,-1)+\frac{k^{2}}{\epsilon}\frac{1}{\omega_{0}^{5}}(\partial_{z}\Psi_{\rho})^{2}(s,-1)+\frac{k^{4}}{\epsilon}\frac{1}{\omega_{0}^{6}}\int_{-1}^{0}(-z)^{2(\omega-\epsilon)+1}\left(\partial_{z}^{2}r_{\rho}\right)^{2}\\&+\frac{k^{2}}{\epsilon}\frac{1}{\omega_{0}^{6}}\int_{-1}^{0}(-z)^{2(\omega-\epsilon)+1}\left(\partial_{z}^{2}\Psi_{\rho}\right)^{2}.
        \end{aligned}
    \end{equation}
\end{proposition}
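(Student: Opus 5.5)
We will use the transport equation \eqref{eq: z transport equation for mrho for second order estimate} directly as an identity for $\partial_{z}(m_{\rho}/\mr{r}^{2})$. Solving for the derivative gives
\begin{equation*}
\partial_{z}\left(\frac{m_{\rho}}{\mr{r}^{2}}\right) = -\left(\frac{2}{\mr{r}}\frac{d\mr{r}}{dz}+\frac{\mr{r}}{\partial_{z}\mr{r}}(\partial_{z}\mr{\phi})^{2}\right)\frac{m_{\rho}}{\mr{r}^{2}}+F_{k^{2}}\frac{r_{\rho}}{\mr{r}}+F_{k^{2}}\partial_{z}r_{\rho}+F_{k}\partial_{z}\left(\frac{\Psi_{\rho}}{\mr{r}}\right)+F_{k^{2}}\partial_{z}\left(\frac{r_{\rho}}{\mr{r}}\right).
\end{equation*}
We square this, multiply by $(-z)^{2\omega-1}$, and integrate over $[-1,0]$. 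We then estimate each of the resulting five terms separately. The point is that every term should be reduced to weighted $L^{2}$ norms of $\partial_{z}^{2}r_{\rho}$ and $\partial_{z}^{2}\Psi_{\rho}$, together with the boundary values $\partial_{z}r_{\rho}(s,-1)$ and $\partial_{z}\Psi_{\rho}(s,-1)$.

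The source terms come first. For $F_{k}\partial_{z}(\Psi_{\rho}/\mr{r})$ and $F_{k^{2}}\partial_{z}(r_{\rho}/\mr{r})$ we write $\mr{r}\approx (z+1)$. Near the axis we apply the derivative Hardy inequality \eqref{eq: hardy with derivative}; this is legitimate because $\Psi_{\rho}(s,-1)=r_{\rho}(s,-1)=0$ by the Dirichlet condition at the center. Away from the axis we simply expand the quotient. This yields $\frac{k^{2}}{\epsilon}\omega_{0}^{-2}\int(-z)^{2(\omega-\epsilon)+1}(\partial_{z}^{2}\Psi_{\rho})^{2}$, and the analogous bound with $k^{4}$ for $r_{\rho}$. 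The terms $F_{k^{2}}\partial_{z}r_{\rho}$ and $F_{k^{2}}r_{\rho}/\mr{r}$ are handled using $|z|^{2\epsilon}F_{k^{2}}^{2}\lesssim k^{4}/\epsilon$. We then apply \eqref{eq: fundamental theorem of calculus for lesssim} to $\partial_{z}r_{\rho}$, which produces $\omega_{0}^{-1}(\partial_{z}r_{\rho})^{2}(s,-1)$ plus $\omega_{0}^{-2}\int(-z)^{2(\omega-\epsilon)+1}(\partial_{z}^{2}r_{\rho})^{2}$. Before doing so, $r_{\rho}/\mr{r}$ is first reduced to $\partial_{z}r_{\rho}$ by the Hardy inequality \eqref{eq: standard hardy with lesssim}.

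The most delicate term is the zeroth-order one, $\frac{2}{\mr{r}}\frac{d\mr{r}}{dz}\frac{m_{\rho}}{\mr{r}^{2}}$, because its coefficient is singular like $(z+1)^{-1}$ at the center. The other zeroth-order coefficient, $\frac{\mr{r}}{\partial_{z}\mr{r}}(\partial_{z}\mr{\phi})^{2}$, is bounded by $k^{-2}$ only up to a weight loss; we treat it through $|z|^{2\epsilon}$ bounds and absorb it into the Proposition \ref{prop: second order estimate of mrho} terms. For the singular term we do not use \eqref{eq: first-order energy estimate using second order terms} directly, since it carries no $(z+1)^{-2}$ weight. Instead we return to the integral representation \eqref{eq: integral equation for mrho}, which shows that $m_{\rho}$ vanishes to high order at $z=-1$. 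More precisely, $m_{\rho}/\mr{r}^{3}$ is bounded by an average of the integrand. Redoing the Fubini argument of Proposition \ref{prop: second order estimate of mrho} with the extra factor $\mr{r}^{-2}$ on $[-1,-\tfrac12]$ costs one more power of $\mr{r}$. That loss is compensated by integrating by parts once more in the integrand, in exactly the way \eqref{eq: hardy with derivative} trades $\partial_{z}(f/(z+1))$ for $\partial_{z}^{2}f$. On $[-\tfrac12,0]$ the factor $\mr{r}^{-1}$ is harmless, and we invoke \eqref{eq: first-order energy estimate using second order terms} directly; this accounts for the boundary terms and the extra powers of $\omega_{0}^{-1}$.

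Collecting the pieces gives the stated bound, with the worst constant $\omega_{0}^{-6}$ coming from the chain of Hardy applications. I expect the near-axis treatment of $m_{\rho}/\mr{r}^{3}$ to be the main obstacle: controlling it by second derivatives without a further loss of derivatives. If the direct Fubini approach does not close, I would fall back on a pointwise bound $|m_{\rho}|\lesssim \mr{r}^{3}\sup|\partial_{z}^{2}(\cdot)|$-type averages, together with a one-dimensional Hardy inequality for averages.
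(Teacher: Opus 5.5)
Your proposal is correct and follows essentially the paper's proof. You use the same term-by-term treatment of the squared transport identity, and you handle the singular $\frac{2}{\mr{r}}\frac{d\mr{r}}{dz}$ term by returning to \eqref{eq: integral equation for mrho}; a Fubini/Hardy argument then reduces it to the source integrals already controlled by \eqref{standard hardy} and \eqref{eq: hardy with derivative}. The obstacle you anticipate does not arise: $m_{\rho}/\mr{r}^{3}$ is a weighted average of the integrand, so Hardy's inequality bounds its $L^{2}$ norm by those same source terms with no further loss, and no fallback is needed.
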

\begin{proof}
    Using \eqref{eq: z transport equation for mrho for second order estimate}, we have \begin{align*}
        (-z)^{2\omega-1}\left(\partial_{z}\left(\frac{m_{\rho}}{\mr{r}^{2}}\right)\right)^{2}\lesssim& (-z)^{2\omega-1}F_{k^{2}}^{2}\left(\frac{r_{\rho}}{\mr{r}}\right)^{2}+(-z)^{2\omega-1}F_{k^{2}}^{2}(\partial_{z}r_{\rho})^{2}+(-z)^{2\omega-1}F_{k}^{2}\left(\partial_{z}\left(\frac{\Psi_{\rho}}{\mr{r}}\right)\right)^{2}\\&+(-z)^{2\omega-1}F_{k^{2}}^{2}\left(\partial_{z}\left(\frac{r_{\rho}}{\mr{r}}\right)\right)^{2}+(-z)^{2\omega-1}\left(\frac{2}{\mr{r}}\frac{d\mr{r}}{dz}+\frac{\mr{r}}{\partial_{z}\mr{r}}(\partial_{z}\mr{\phi})^{2}\right)^{2}\left(\frac{m_{\rho}}{\mr{r}^{2}}\right)^{2}.
    \end{align*}
    For the term with $\frac{r_{\rho}}{\mr{r}}$, we have \begin{equation}
        \begin{aligned}
            &\int_{-1}^{0}(-z)^{2\omega-1}F_{k^{2}}^{2}(z)\left(\frac{r_{\rho}}{\mr{r}}\right)^{2}dz\\\lesssim& \frac{k^{4}}{\epsilon}\int_{-1}^{0}(-z)^{2(\omega-\epsilon)-1}\left(\frac{r_{\rho}}{\mr{r}}\right)^{2}dz\\\lesssim&\frac{k^{4}}{\epsilon}\frac{1}{\omega_{0}^{2}}\int_{-1}^{0}(-z)^{2(\omega-\epsilon)+1}(\partial_{z}r_{\rho})^{2}dz\\\lesssim&\frac{k^{4}}{\epsilon}\frac{1}{\omega_{0}^{2}}\left(\frac{1}{\omega_{0}}(\partial_{z}r_{\rho})^{2}(s,-1)+\frac{1}{\omega_{0}^{2}}\int_{-1}^{0}(-z)^{2(\omega-\epsilon)+3}\left(\partial_{z}^{2}r_{\rho}\right)^{2}dz\right).
        \end{aligned}
    \end{equation}
    For the term with $\partial_{z}r_{\rho}$, we have \begin{equation}
        \begin{aligned}
            \int_{-1}^{0}(-z)^{2\omega-1}F_{k^{2}}^{2}(\partial_{z}r_{\rho})^{2}dz\lesssim&\frac{k^{4}}{\epsilon}\int_{-1}^{0}(-z)^{2(\omega-\epsilon)-1}(\partial_{z}r_{\rho})^{2}dz\\\lesssim&
            \frac{k^{4}}{\epsilon}\frac{1}{\omega_{0}}(\partial_{z}r_{\rho})^{2}(s,-1)+\frac{k^{4}}{\epsilon}\frac{1}{\omega_{0}^{2}}\int_{-1}^{0}(-z)^{2(\omega-\epsilon)+1}(\partial_{z}^{2}r_{\rho})^{2}dz.
        \end{aligned}
    \end{equation}
    For the term with $\partial_{z}\left(\frac{\Psi_{\rho}}{\mr{r}}\right)$, by \eqref{eq: hardy with derivative}, we have \begin{equation}
        \begin{aligned}
            \int_{-1}^{0}(-z)^{2\omega-1}F_{k}^{2}\left(\partial_{z}\left(\frac{\Psi_{\rho}}{\mr{r}}\right)\right)^{2}(s,z)dz\lesssim&\frac{k^{2}}{\epsilon}\frac{1}{\omega_{0}^{2}}(\partial_{z}\Psi_{\rho})^{2}(s,-1)\\&+\frac{k^{2}}{\epsilon}\frac{1}{\omega_{0}^{2}}\int_{-1}^{0}(-z)^{2\omega-2\epsilon+1}(\partial_{z}^{2}\Psi_{\rho})^{2}dz.
        \end{aligned}
    \end{equation}
    
    Similarly, for the term with $\partial_{z}\left(\frac{r_{\rho}}{\mr{r}}\right)$, we have \begin{equation}
        \begin{aligned}
            \int_{-1}^{0}(-z)^{2\omega-1}F_{k}^{2}\left(\partial_{z}\left(\frac{r_{\rho}}{\mr{r}}\right)\right)^{2}(s,z)dz\lesssim&\frac{k^{4}}{\epsilon}\frac{1}{\omega_{0}^{2}}(\partial_{z}r_{\rho})^{2}(s,-1)\\&+\frac{k^{4}}{\epsilon}\frac{1}{\omega_{0}^{2}}\int_{-1}^{0}(-z)^{2(\omega-\epsilon)+1}(\partial_{z}^{2}r_{\rho})^{2}dz.
        \end{aligned}
    \end{equation}
    
    For the term with $\frac{\mr{r}}{\partial_{z}\mr{r}}(\partial_{z}\mr{\phi})^{2}\frac{m_{\rho}}{\mr{r}^{2}}$, by \eqref{eq: first order energy estimate for mrho}, we have \begin{equation}
        \begin{aligned}
            &\int_{-1}^{0}(-z)^{2\omega-1}\left(\frac{\mr{r}}{\partial_{z}\mr{r}}(\partial_{z}\mr{\phi})^{2}\right)^{2}\left(\frac{m_{\rho}}{\mr{r}^{2}}\right)^{2}dz\\\lesssim&\frac{k^{4}}{\epsilon}\int_{-1}^{0}(-z)^{2\omega-\epsilon-1}\left(\frac{m_{\rho}}{\mr{r}^{2}}\right)^{2}dz\\\lesssim&\frac{k^{4}}{\epsilon}\frac{k^{4}}{\epsilon}\frac{1}{\omega_{0}^{3}}\int_{-1}^{0}(-z)^{2(\omega-\epsilon)}\left(\partial_{z}r_{\rho}\right)^{2}+\frac{k^{4}}{\epsilon}\frac{k^{2}}{\epsilon}\frac{1}{\omega_{0}^{3}}\int_{-1}^{0}(-z)^{2(\omega-\epsilon)}(\partial_{z}\Psi_{\rho})^{2}
            \\\lesssim&\frac{k^{8}}{\epsilon^{2}}\frac{1}{\omega_{0}^{4}}(\partial_{z}r_{\rho})^{2}(s,-1)+\frac{k^{6}}{\epsilon^{2}}\frac{1}{\omega_{0}^{4}}(\partial_{z}\Psi_{\rho})^{2}(s,-1)+\frac{k^{8}}{\epsilon^{2}}\frac{1}{\omega_{0}^{5}}\int_{-1}^{0}(-z)^{2(\omega-\epsilon)+2}\left(\partial_{z}^{2}r_{\rho}\right)^{2}\\&+\frac{k^{6}}{\epsilon^{2}}\frac{1}{\omega_{0}^{5}}\int_{-1}^{0}(-z)^{2(\omega-\epsilon)+2}\left(\partial_{z}^{2}\Psi_{\rho}\right)^{2}.
        \end{aligned}
    \end{equation}

    Lastly, for the term with $\frac{2}{\mr{r}}\frac{d\mr{r}}{dz}\frac{m_{\rho}}{\mr{r}^{2}}$, we have \begin{equation}
        \begin{aligned}
            \int_{-1}^{0}(-z)^{2\omega-1}\left(\frac{2}{\mr{r}}\frac{d\mr{r}}{dz}\right)^{2}\left(\frac{m_{\rho}}{\mr{r}^{2}}\right)^{2}dz\lesssim\int_{-1}^{0}(-z)^{2\omega-1}\frac{1}{(1+z)^{2}}\left(\frac{m_{\rho}}{\mr{r}^{2}}\right)^{2}dz.
        \end{aligned}
        \label{eq: first step in estimating mrho term in the second order estimate}
    \end{equation}
    Using \eqref{eq: integral equation for mrho}, we have \begin{equation}
        \left(\frac{m_{\rho}}{\mr{r}^{2}}\right)^{2}(s,z)\lesssim \int_{-1}^{z}\mr{r}F_{k^{2}}^{2}\left\vert\frac{r_{\rho}}{\mr{r}}\right\vert^{2}+\mr{r}F_{k^{2}}^{2}(\partial_{z}r_{\rho})^{2}+\mr{r}F_{k}^{2}\left(\partial_{z}\left(\frac{\Psi_{\rho}}{\mr{r}}\right)\right)^{2}+\mr{r}F_{k^{2}}^{2}\left(\partial_{z}\left(\frac{r_{\rho}}{\mr{r}}\right)\right)^{2}d\widetilde{z}.
        \label{eq: second step in estimaing mrho term in the second order estimates}
    \end{equation}
    Hence, combining \eqref{eq: first step in estimating mrho term in the second order estimate} and \eqref{eq: second step in estimaing mrho term in the second order estimates}, by \eqref{standard hardy} and \eqref{eq: hardy with derivative}, we have \begin{equation}
        \begin{aligned}
            &\int_{-1}^{0}(-z)^{2\omega-1}\left(\frac{2}{\mr{r}}\frac{d\mr{r}}{dz}\right)^{2}\left(\frac{m_{\rho}}{\mr{r}^{2}}\right)^{2}dz\\\lesssim &\frac{1}{\omega_{0}^{2}}\int_{-1}^{0}(-z)^{2\omega-1}F_{k^{2}}^{2}\left(\frac{r_{\rho}}{\mr{r}}\right)^{2}+(-z)^{2\omega-1}F_{k^{2}}^{2}(\partial_{z}r_{\rho})^{2}+(-z)^{2\omega-1}F_{k}^{2}\left(\partial_{z}\left(\frac{\Psi_{\rho}}{\mr{r}}\right)\right)^{2}\\&\quad\quad\quad\quad+(-z)^{2\omega-1}F_{k^{2}}^{2}\left(\partial_{z}\left(\frac{r_{\rho}}{\mr{r}}\right)\right)^{2}dz\\\lesssim&\frac{1}{\omega_{0}^{4}}\frac{k^{4}}{\epsilon}\int_{-1}^{0}(-z)^{2(\omega-\epsilon)-1}(\partial_{z}r_{\rho})^{2}+\frac{1}{\omega_{0}^{4}}\frac{k^{4}}{\epsilon}\int_{-1}^{0}(-z)^{2(\omega-\epsilon)+1}(\partial_{z}^{2}r_{\rho})^{2}\\&+\frac{1}{\omega_{0}^{4}}\frac{k^{2}}{\epsilon}\int_{-1}^{0}(-z)^{2(\omega-\epsilon)+1}(\partial_{z}^{2}\Psi_{\rho})^{2}\\\lesssim& \frac{1}{\omega_{0}^{5}}\frac{k^{4}}{\epsilon}(\partial_{z}r_{\rho})^{2}(s,-1)+\frac{1}{\omega_{0}^{6}}\frac{k^{4}}{\epsilon}\int_{-1}^{0}(-z)^{2(\omega-\epsilon)+1}(\partial_{z}^{2}r_{\rho})^{2}+\frac{1}{\omega_{0}^{4}}\frac{k^{2}}{\epsilon}\int_{-1}^{0}(-z)^{2(\omega-\epsilon)+1}(\partial_{z}^{2}\Psi_{\rho})^{2}.
        \end{aligned}
    \end{equation}
    Hence, putting everything together, we have\begin{equation}
        \begin{aligned}
            &\int_{-1}^{0}(-z)^{2\omega-1}\left(\partial_{z}\left(\frac{m_{\rho}}{\mr{r}^{2}}\right)\right)^{2}(s,z)dz\\\lesssim&\frac{k^{4}}{\epsilon}\frac{1}{\omega_{0}^{5}}(\partial_{z}r_{\rho})^{2}(s,-1)+\frac{k^{2}}{\epsilon}\frac{1}{\omega_{0}^{5}}(\partial_{z}\Psi_{\rho})^{2}(s,-1)+\frac{k^{4}}{\epsilon}\frac{1}{\omega_{0}^{6}}\int_{-1}^{0}(-z)^{2(\omega-\epsilon)+1}\left(\partial_{z}^{2}r_{\rho}\right)^{2}\\&+\frac{k^{2}}{\epsilon}\frac{1}{\omega_{0}^{6}}\int_{-1}^{0}(-z)^{2(\omega-\epsilon)+1}\left(\partial_{z}^{2}\Psi_{\rho}\right)^{2}.
        \end{aligned}
    \end{equation}
\end{proof}

\subsection{First order energy estimates for $r_{\rho}$ and $\Psi_{\rho}$}
In this section, we shall prove the following proposition:
\begin{proposition}
There exist $k$ sufficiently small and constant $C_{0}>0$ depending on the background spacetime $(\mr{r},\mr{\phi},\mr{m})$, such that the following estimate holds in $\mathcal{R}(s_{0})$:\begin{equation}
\begin{aligned}
&\left(\frac{1}{2}-C_{0}k^{2}\right)\int_{s = s_{0}}(\partial_{z}r_{\rho})^{2}+\left(\partial_{z}\Psi_{\rho}\right)^{2}+\frac{1}{2}q_{k}\int_{\Gamma}(\partial_{z}r_{\rho})^{2}+(\partial_{z}\Psi_{\rho})^{2}\\\leq&\left(\frac{1}{2}+C_{0}k^{2}\right)\int_{s = 0}(\partial_{z}r_{\rho})^{2}+(\partial_{z}\Psi_{\rho})^{2}+\left(\frac{1}{2}q_{k}-\rho+C_{0}k\right)\iint_{\mathcal{R}(s_{0})}(\partial_{z}r_{\rho})^{2}+(\partial_{z}\Psi_{\rho})^{2}\\&+C_{0}k^{2}\iint_{\mathcal{R}(s_{0})}(-z)\left(\partial_{z}^{2}r_{\rho}\right)^{2}+(-z)^{2}\left(\partial_{z}^{2}\Psi_{\rho}\right)^{2}.
\end{aligned}
\label{first order energy estimate for the system}.
\end{equation}
\end{proposition}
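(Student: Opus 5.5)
The plan is to use the multiplier $\partial_{z}r_{\rho}$ for the schematic equation \eqref{eq:wave equation for r rho} and $\partial_{z}\Psi_{\rho}$ for \eqref{eq:wave equation for Psi rho}, integrate over $\mathcal{R}(s_{0})$, and absorb the coupling and $m_{\rho}$ terms with the estimates of Proposition~\ref{prop: second order estimate of mrho}. For a model equation $\partial_{s}\partial_{z}f+q_{k}z\partial_{z}^{2}f+\rho\partial_{z}f=\text{RHS}$, multiplying by $\partial_{z}f$ gives the pointwise identity
\begin{equation*}
\tfrac{1}{2}\partial_{s}(\partial_{z}f)^{2}+\tfrac{1}{2}q_{k}\partial_{z}\bigl(z(\partial_{z}f)^{2}\bigr)+\bigl(\rho-\tfrac{1}{2}q_{k}\bigr)(\partial_{z}f)^{2}=\partial_{z}f\cdot\text{RHS}.
\end{equation*}
Integrating over $\mathcal{R}(s_{0})$, the $\partial_{s}$ term produces the two slices $s=s_{0}$ and $s=0$ with coefficient $\tfrac{1}{2}$. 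The $\partial_{z}$ term vanishes at $z=0$ because of the weight $z$. At $z=-1$ it equals $-\tfrac{1}{2}q_{k}(-1)(\partial_{z}f)^{2}$ taken with a minus sign, which is the good boundary term $\tfrac{1}{2}q_{k}\int_{\Gamma}(\partial_{z}f)^{2}$ on the left. The bulk term $(\tfrac{1}{2}q_{k}-\rho)\iint(\partial_{z}f)^{2}$ then sits on the right, exactly as in \eqref{first order energy estimate for the system}. I will do this for $f=r_{\rho}$ and $f=\Psi_{\rho}$ separately and add the two identities.

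Next come the lower-order terms. For $r_{\rho}$, the term $G_{k^{2}}\partial_{z}r_{\rho}\cdot\partial_{z}r_{\rho}$ is bounded directly by $Ck^{2}\iint(\partial_{z}r_{\rho})^{2}$. For $G_{k^{2}}\partial_{s}r_{\rho}\,\partial_{z}r_{\rho}$ I will not estimate $\partial_{s}r_{\rho}$ directly. Instead I write $G\,\partial_{s}r_{\rho}\,\partial_{z}r_{\rho}=\partial_{s}(G\,r_{\rho}\partial_{z}r_{\rho})-G\,r_{\rho}\,\partial_{s}\partial_{z}r_{\rho}$ and replace $\partial_{s}\partial_{z}r_{\rho}$ using the equation itself, which produces $-q_{k}z\partial_{z}^{2}r_{\rho}$ plus lower-order terms. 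The boundary contributions in $s$ are $\lesssim k^{2}\int_{s}(\partial_{z}r_{\rho})^{2}$, using Hardy \eqref{eq: standard hardy with zero boundary condition} with $r_{\rho}(s,-1)=0$ (the Dirichlet condition at the center). These are what shift the slice coefficients to $\tfrac{1}{2}\pm C_{0}k^{2}$. The remaining term $k^{2}z\,r_{\rho}\partial_{z}^{2}r_{\rho}$ is handled by Cauchy--Schwarz and Hardy, giving $k^{2}\iint(-z)(\partial_{z}^{2}r_{\rho})^{2}+k^{2}\iint(\partial_{z}r_{\rho})^{2}$. This step is why the second-order bulk term $C_{0}k^{2}\iint(-z)(\partial_{z}^{2}r_{\rho})^{2}$ appears. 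The terms $G_{k^{2}}r_{\rho}\partial_{z}r_{\rho}$ are handled by Hardy in the same way.

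For $\Psi_{\rho}$, the potential term $G_{k^{2}}\Psi_{\rho}\partial_{z}\Psi_{\rho}$ and the $r_{\rho}$ source terms are treated identically. The one term that is not $k^{2}$-small is the coupling $(G_{k^{2}}+k)\partial_{z}r_{\rho}\,\partial_{z}\Psi_{\rho}$, which contributes at order $k$. By Cauchy--Schwarz it is $\le k\iint\bigl((\partial_{z}r_{\rho})^{2}+(\partial_{z}\Psi_{\rho})^{2}\bigr)$, and this is the source of the $C_{0}k$ loss in the bulk coefficient. The term $G_{k^{2}}\partial_{s}r_{\rho}\,\partial_{z}\Psi_{\rho}$ is handled by the same integration by parts in $s$ together with the $r_{\rho}$ equation. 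I expect a $\partial_{z}^{2}\Psi_{\rho}$ term to arise here, or from integrating by parts in $z$. It will carry a $(-z)^{2}$ weight coming from the vanishing of the relevant $G$-coefficients' structure near the horizon, which matches the last term of the statement.

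The main obstacle is the $m_{\rho}$ terms: $K\frac{m_{\rho}}{\mr{r}^{2}}\partial_{z}r_{\rho}$, whose coefficient $K$ is not small, and $|z|^{1/2}G_{k}\frac{m_{\rho}}{\mr{r}^{2}}\partial_{z}\Psi_{\rho}$. For these I apply Cauchy--Schwarz with weights and then Proposition~\ref{prop: second order estimate of mrho}. I take $\omega=\tfrac{1}{2}+\epsilon$ so that the right side carries the weight $(-z)^{2\omega-2\epsilon}=(-z)$. The smallness then comes from \eqref{eq: first order energy estimate for mrho}, whose right side has gains of $k^{2}$ (for $\Psi_{\rho}$) and $k^{4}$ (for $r_{\rho}$). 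Concretely, I bound $K\frac{m_{\rho}}{\mr{r}^{2}}\partial_{z}r_{\rho}\le\eta^{-1}(m_{\rho}/\mr{r}^{2})^{2}+\eta(\partial_{z}r_{\rho})^{2}$ and choose $\eta\sim k$, so both pieces are $O(k)$. The delicate point is the mismatch in weights: the $L^{2}$ bound for $m_{\rho}/\mr{r}^{2}$ comes with the weight $(-z)^{2\omega-1}$, which forces $\omega\ge\tfrac{1}{2}$. To handle this I may use instead the version \eqref{eq: first-order energy estimate using second order terms}, which trades the weight for the boundary terms $(\partial_{z}r_{\rho})^{2}(s,-1)$ and $(\partial_{z}\Psi_{\rho})^{2}(s,-1)$ with small coefficients. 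After integrating in $s$, those boundary terms are absorbed into $\tfrac{1}{2}q_{k}\int_{\Gamma}$, and the second-order pieces go into the $C_{0}k^{2}$ second-order bulk term. Collecting all the terms then gives \eqref{first order energy estimate for the system}.
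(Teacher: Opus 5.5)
Your scheme is the paper's. You multiply \eqref{eq:wave equation for r rho} and \eqref{eq:wave equation for Psi rho} by $\partial_z r_\rho$ and $\partial_z\Psi_\rho$, which gives the good centre term $\tfrac12 q_k\int_\Gamma$ and the bulk term $(\tfrac12 q_k-\rho)$. You integrate $G_{k^2}\partial_s r_\rho\,\partial_z(\cdot)$ by parts in $s$ and substitute the equation for $\partial_s\partial_z(\cdot)$. You use Hardy with the Dirichlet condition at $z=-1$ for the slice terms, which produces $\tfrac12\pm C_0k^2$. The coupling $k\,\partial_z r_\rho\partial_z\Psi_\rho$ gives the $C_0k$ loss in the bulk. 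One correction: the $(-z)^2$ weight on $(\partial_z^2\Psi_\rho)^2$ does not come from any vanishing of the $G$-coefficients. It comes from the term $q_kz\partial_z^2\Psi_\rho$ that you substitute for $\partial_s\partial_z\Psi_\rho$, via $|G_{k^2}\,z\,r_\rho\,\partial_z^2\Psi_\rho|\lesssim k^2\bigl(z^2(\partial_z^2\Psi_\rho)^2+r_\rho^2\bigr)$ and Hardy for $r_\rho^2$.

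The $m_\rho$ step does not work as you wrote it. After Cauchy--Schwarz with weight $k$ you need the \emph{unweighted} quantity $\iint(m_\rho/\mr{r}^2)^2$. This requires $2\omega-1\le 0$, i.e. $\omega\le\tfrac12$, not $\omega\ge\tfrac12$. With $\omega=\tfrac12+\epsilon$, \eqref{eq: first order energy estimate for mrho} only controls $\int(-z)^{2\epsilon}(m_\rho/\mr{r}^2)^2$, which is weaker than what you need.

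The fix is to take $\omega=\tfrac12$. Then the left side carries weight $(-z)^0$ and the right side carries $(-z)^{1-2\epsilon}\le 1$, so there is no mismatch. This gives $\tfrac1k\iint(m_\rho/\mr{r}^2)^2\lesssim k^3\iint(\partial_z r_\rho)^2+k\iint(\partial_z\Psi_\rho)^2$, which goes directly into the $C_0k$ bulk coefficient. This is what the paper does, and it is also where the $C_0k\iint(\partial_z\Psi_\rho)^2$ in the $r_\rho$-estimate comes from.

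Your fallback through \eqref{eq: first-order energy estimate using second order terms} would not give the stated inequality. To keep the bulk loss $O(k)$, the Cauchy--Schwarz weight on $(\partial_z r_\rho)^2$ must be $\lesssim k$. The boundary and second-order $\Psi_\rho$ pieces then enter with coefficient $\sim k^2/k=k$. You would end with $(\tfrac12 q_k-Ck)$ in front of $\int_\Gamma(\partial_z\Psi_\rho)^2$ and $Ck\iint(-z)^{2}(\partial_z^2\Psi_\rho)^2$ on the right, instead of exactly $\tfrac12 q_k$ and $C_0k^2$.
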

\begin{proof}
Multiplying \eqref{eq:wave equation for r rho} by $\partial_{z}r_{\rho}$, integrating over $\mathcal{R}(s_{0})$, and using \eqref{standard hardy}, we have\begin{equation}
\begin{aligned}
&\frac{1}{2}\int_{s = s_{0}}(\partial_{z}r_{\rho})^{2}dz+\frac{1}{2}q_{k}\int_{\Gamma}(\partial_{z}r_{\rho})^{2}ds+\left(\rho-\frac{1}{2}q_{k}-C_{0}k^{2}\right)\iint_{\mathcal{R}(s_{0})}(\partial_{z}r_{\rho})^{2}dsdz\\\leq&
\frac{1}{2}\int_{s = 0}(\partial_{z}r_{\rho})^{2}dz+\iint_{\mathcal{R}(s_{0})}G_{k^{2}}(z)\partial_{s}r_{\rho}\partial_{z}r_{\rho}+G_{k^{2}}(z)r_{\rho}\partial_{z}r_{\rho}+K(z)\frac{m_{\rho}}{\mr{r}^{2}}\partial_{z}r_{\rho}dsdz\\\leq&
\frac{1}{2}\int_{s = 0}(\partial_{z}r_{\rho})^{2}dz+C_{0}k\iint_{\mathcal{R}(s_{0})}(\partial_{z}r_{\rho})^{2}dsdz+\frac{1}{k}\iint_{\mathcal{R}(s_{0})}\left(\frac{m_{\rho}}{\mr{r}^{2}}\right)^{2}dsdz\\&+\left\vert\iint_{\mathcal{R}(s_{0})}G_{k^{2}}(z)\partial_{s}r_{\rho}\partial_{z}r_{\rho}dsdz\right\vert.
\end{aligned}
\label{eq: crude version of the first-order energy estimate for rrho}
\end{equation}
For the term with $G_{k^{2}}\partial_{s}r_{\rho}\partial_{z}r_{\rho}$ on the right-hand side of \eqref{eq: crude version of the first-order energy estimate for rrho}, we have\begin{equation}
\begin{aligned}
&\left\vert\iint_{\mathcal{R}(s_{0})}G_{k^{2}}(z)\partial_{s}r_{\rho}\partial_{z}r_{\rho}dsdz\right\vert\\\lesssim& \left\vert\int_{s = s_{0}}G_{k^{2}}r_{\rho}\partial_{z}r_{\rho}\right\vert+\left\vert\int_{s = 0}G_{k^{2}}r_{\rho}\partial_{z}r_{\rho}\right\vert+\left\vert\iint_{\mathcal{R}(s_{0})}G_{k^{2}}r_{\rho}\partial_{s}\partial_{z}r_{\rho}dsdz\right\vert\\\lesssim& k^{2}\int_{s = s_{0}}(\partial_{z}r_{\rho})^{2}dz+k^{2}\int_{s = 0}(\partial_{z}r_{\rho})^{2}dz+k^{2}\iint_{\mathcal{R}(s_{0})}(\partial_{z}r_{\rho})^{2}dsdz\\&+k^{2}\iint_{\mathcal{R}(s_{0})}(-z)\left(\partial_{z}^{2}r_{\rho}\right)^{2}dsdz+k^{2}\iint_{\mathcal{R}(s_{0})}\left(\frac{m_{\rho}}{\mr{r}^{2}}\right)^{2}dsdz+\left\vert\iint_{\mathcal{R}(s_{0})}G_{k^{2}}r_{\rho}\partial_{s}r_{\rho}dsdz\right\vert\\\lesssim&
k^{2}\int_{s = s_{0}}\left(\partial_{z}r_{\rho}\right)^{2}dz+k^{2}\int_{s = 0}(\partial_{z}r_{\rho})^{2}dz+k^{2}\iint_{\mathcal{R}(s_{0})}(\partial_{z}r_{\rho})^{2}dsdz\\&
+k^{2}\iint_{\mathcal{R}(s_{0})}(-z)(\partial_{z}^{2}r_{\rho})^{2}dsdz+k^{2}\iint_{\mathcal{R}(s_{0})}\left(\frac{m_{\rho}}{\mr{r}^{2}}\right)^{2}dsdz.
\end{aligned}
\label{eq: integration by parts in the s direction}
\end{equation}
We can apply \eqref{standard hardy} to bound the bulk term of $\partial_{z}r_{\rho}$ on the right-hand side of \eqref{eq: integration by parts in the s direction}. Thus, we have \begin{equation}
    \begin{aligned}
        \left\vert\iint_{\mathcal{R}(s_{0})}G_{k^{2}}(z)\partial_{s}r_{\rho}\partial_{z}r_{\rho}\right\vert\lesssim& k^{2}\int_{s = s_{0}}\left(\partial_{z}r_{\rho}\right)^{2}+k^{2}\int_{s = 0}(\partial_{z}r_{\rho})^{2}+k^{2}\int_{\Gamma}\left(\partial_{z}r_{\rho}\right)^{2}\\&+k^{2}\iint_{\mathcal{R}(s_{0})}(-z)\left(\partial_{z}^{2}r_{\rho}\right)^{2}+k^{2}\iint_{\mathcal{R}(s_{0})}\left(\frac{m_{\rho}}{\mr{r}^{2}}\right)^{2}.
    \end{aligned}
\end{equation}
Using \eqref{eq: first order energy estimate for mrho} to estimate the terms with $\frac{m_{\rho}}{\mr{r}^{2}}$ and putting everything together, we have \begin{equation}
\begin{aligned}
    &\left(\frac{1}{2}-C_{0}k^{2}\right)\int_{s = s_{0}}(\partial_{z}r_{\rho})^{2}+\frac{1}{2}q_{k}\int_{\Gamma}(\partial_{z}r_{\rho})^{2}\\\leq& \left(\frac{1}{2}+C_{0}k^{2}\right)\int_{s = 0}(\partial_{z}r_{\rho})^{2}+C_{0}k\iint_{\mathcal{R}(s_{0})}(\partial_{z}\Psi_{\rho})^{2}+C_{0}k^{2}\iint_{\mathcal{R}(s_{0})}(-z)\left(\partial_{z}^{2}r_{\rho}\right)^{2}\\&+\left(\frac{1}{2}q_{k}-\rho+C_{0}k^{2}\right)\iint_{\mathcal{R}(s_{0})}(\partial_{z}r_{\rho})^{2}.
    \end{aligned}
    \label{first order energy estimate for r rho}
\end{equation}

Next, we do the first-order energy estimate for $\Psi_{\rho}$. Multiplying \eqref{eq:wave equation for Psi rho} by $\partial_{z}\Psi_{\rho}$ and integrating over the region $\mathcal{R}(s_{0})$, we have\begin{equation}
\begin{aligned}
&\frac{1}{2}\int_{s = s_{0}}(\partial_{z}\Psi_{\rho})^{2}dz+\frac{1}{2}q_{k}\int_{\Gamma}(\partial_{z}\Psi_{\rho})^{2}ds+\left(\rho-\frac{1}{2}q_{k}\right)\iint_{\mathcal{R}(s_{0})}(\partial_{z}\Psi_{\rho})^{2}dsdz\\\leq& \frac{1}{2}\int_{s = 0}(\partial_{z}\Psi_{\rho})^{2}dz+\iint_{\mathcal{R}(s_{0})}G_{k^{2}}(z)\Psi_{\rho}\partial_{z}\Psi_{\rho}dsdz+\iint_{\mathcal{R}(s_{0})}G_{k^{2}}\partial_{s}r_{\rho}\partial_{z}\Psi_{\rho}dsdz\\&+\iint_{\mathcal{R}(s_{0})}\left(G_{k^{2}}(z)+k\right)\partial_{z}r_{\rho}\partial_{z}\Psi_{\rho}+G_{k^{2}}(z)r_{\rho}\partial_{z}\Psi_{\rho}+\vert z\vert^{\frac{1}{2}}G_{k}(z)\frac{m_{\rho}}{\mr{r}^{2}}\partial_{z}\Psi_{\rho}dsdz\\\leq&
\frac{1}{2}\int_{s = 0}(\partial_{z}\Psi_{\rho})^{2}dz+C_{0}k\iint_{\mathcal{R}(s_{0})}(\partial_{z}\Psi_{\rho})^{2}dsdz+C_{0}k\iint_{\mathcal{R}(s_{0})}(\partial_{z}r_{\rho})^{2}dsdz\\&+\iint_{\mathcal{R}(s_{0})}\left(\frac{m_{\rho}}{\mr{r}^{2}}\right)^{2}dsdz+\left\vert\iint_{\mathcal{R}(s_{0})}G_{k^{2}}(z)\partial_{s}r_{\rho}\partial_{z}\Psi_{\rho}dsdz\right\vert.
\end{aligned}
\label{expression of first order energy estimate for Psi rho}
\end{equation}
Applying integration by parts, we can control the term with $\partial_{s}r_{\rho}\partial_{z}\Psi_{\rho}$ by\begin{equation}
\begin{aligned}
&\left\vert\iint_{\mathcal{R}(s_{0})}G_{k^{2}}(z)\partial_{s}r_{\rho}\partial_{z}\Psi_{\rho}dsdz\right\vert\\\lesssim&k^{2} \int_{s = s_{0}}(\partial_{z}r_{\rho})^{2}dz+k^{2}\int_{s = s_{0}}(\partial_{z}\Psi_{\rho})^{2}dz+k^{2}\int_{s = 0}(\partial_{z}r_{\rho})^{2}dz+k^{2}\int_{s = 0}(\partial_{z}\Psi_{\rho})^{2}dz\\&+k^{2}\left\vert\iint_{\mathcal{R}(s_{0})}r_{\rho}\partial_{s}\partial_{z}\Psi_{\rho}dsdz\right\vert
\\\lesssim& k^{2}\int_{s = s_{0}}(\partial_{z}r_{\rho})^{2}dz+k^{2}\int_{s = s_{0}}(\partial_{z}\Psi_{\rho})^{2}dz+k^{2}\int_{s = 0}(\partial_{z}r_{\rho})^{2}dz+k^{2}\int_{s = 0}(\partial_{z}\Psi_{\rho})^{2}dz\\&+k^{2}\iint_{\mathcal{R}(s_{0})}(\partial_{z}r_{\rho})^{2}dsdz+k^{2}\iint_{\mathcal{R}(s_{0})}(-z)^{2}\left(\partial_{z}^{2}\Psi_{\rho}\right)^{2}dsdz+k^{2}\iint_{\mathcal{R}(s_{0})}(\partial_{z}\Psi_{\rho})^{2}dsdz\\&+k^{2}\iint_{\mathcal{R}(s_{0})}\left(\frac{m_{\rho}}{\mr{r}^{2}}\right)^{2}dsdz+k^{4}\left\vert\iint_{\mathcal{R}(s_{0})}r_{\rho}\partial_{s}r_{\rho}dsdz\right\vert\\\leq& k^{2}\int_{s = s_{0}}(\partial_{z}r_{\rho})^{2}dz+k^{2}\int_{s = s_{0}}(\partial_{z}\Psi_{\rho})^{2}dz+k^{2}\int_{s = 0}(\partial_{z}r_{\rho})^{2}dz+k^{2}\int_{s = 0}(\partial_{z}\Psi_{\rho})^{2}dz\\&+k^{2}\iint_{\mathcal{R}(s_{0})}(\partial_{z}r_{\rho})^{2}dsdz+k^{2}\iint_{\mathcal{R}(s_{0})}\left(\partial_{z}\Psi_{\rho}\right)^{2}+k^{2}\iint_{\mathcal{R}(s_{0})}(-z)^{2}\left(\partial_{z}^{2}\Psi_{\rho}\right)^{2}dsdz\\&+k^{2}\iint_{\mathcal{R}(s_{0})}\left(\frac{m_{\rho}}{\mr{r}^{2}}\right)^{2}dsdz,
\end{aligned}
\end{equation}
where in the second inequality we have used the expression of $\partial_{s}\partial_{z}\Psi_{\rho}$ \eqref{eq:wave equation for Psi rho} and used the Cauchy--Schwarz to control each term.

Using \eqref{eq: first order energy estimate for mrho} to bound the term with $\frac{m_{\rho}}{\mr{r}^{2}}$ in \eqref{expression of first order energy estimate for Psi rho}, we have\begin{equation}
\begin{aligned}
&\left(\frac{1}{2}-C_{0}k^{2}\right)\int_{s = s_{0}}(\partial_{z}\Psi_{\rho})^{2}dz+\frac{1}{2}q_{k}\int_{\Gamma}(\partial_{z}\Psi_{\rho})^{2}ds\\\leq&
\left(\frac{1}{2}+C_{0}k^{2}\right)\int_{s =0}(\partial_{z}\Psi_{\rho})^{2}dz+C_{0}k^{2}\int_{s = 0}(\partial_{z}r_{\rho})^{2}dz+C_{0}k^{2}\int_{s = s_{0}}(\partial_{z}r_{\rho})^{2}dz\\&
+\left(\frac{1}{2}q_{k}-\rho+C_{0}k\right)\iint_{\mathcal{R}(s_{0})}(\partial_{z}\Psi_{\rho})^{2}dsdz+C_{0}k\iint_{\mathcal{R}(s_{0})}\left(\partial_{z}r_{\rho}\right)^{2}dsdz\\&+C_{0}k^{2}\iint_{\mathcal{R}(s_{0})}(-z)^{2}\left(\partial_{z}^{2}\Psi_{\rho}\right)^{2}dsdz.
\end{aligned}
\label{first order energy estimate for psi rho}
\end{equation}

Combining \eqref{first order energy estimate for r rho} and \eqref{first order energy estimate for psi rho}, we have\begin{equation}
\begin{aligned}
&\left(\frac{1}{2}-C_{0}k^{2}\right)\int_{s = s_{0}}(\partial_{z}r_{\rho})^{2}+\left(\partial_{z}\Psi_{\rho}\right)^{2}+\frac{1}{2}q_{k}\int_{\Gamma}(\partial_{z}r_{\rho})^{2}+(\partial_{z}\Psi_{\rho})^{2}\\\leq&\left(\frac{1}{2}+C_{0}k^{2}\right)\int_{s = 0}(\partial_{z}r_{\rho})^{2}+(\partial_{z}\Psi_{\rho})^{2}+\left(\frac{1}{2}q_{k}-\rho+C_{0}k\right)\iint_{\mathcal{R}(s_{0})}(\partial_{z}r_{\rho})^{2}+(\partial_{z}\Psi_{\rho})^{2}\\&+C_{0}k^{2}\iint_{\mathcal{R}(s_{0})}(-z)\left(\partial_{z}^{2}r_{\rho}\right)^{2}+(-z)^{2}\left(\partial_{z}^{2}\Psi_{\rho}\right)^{2}.
\end{aligned}
\end{equation}
This concludes the proof.
\end{proof}

\subsection{A quick overview of the second-order energy estimate}
As we can see from \eqref{first order energy estimate for the system}, the first order energy estimate itself is not enough to even get the $L^{2}$-boundedness of the solutions. The main obstacle here is the presence of the bad sign of the bulk term when $\rho$ is sufficiently close to the blue-shift value. To trace the origin of the bad sign, expressing the principal part of the wave operator $\partial_{u}\partial_{v}$ under the self-similar coordinates $(s,z)$ and using $(r_{\rho},\Psi_{\rho})$ as variables, we can schematically write the equations \eqref{eq:wave equation for r rho}-\eqref{eq:wave equation for Psi rho} as \begin{equation}
    \partial_{s}\partial_{z}\Xi+q_{k}z\partial_{z}^{2}\Xi+\rho\partial_{z}\Xi = \text{Error terms},\quad \Xi\in\{r_{\rho},\Psi_{\rho}\}.\label{eq: quick overview of the motivation of the second order estimate}
\end{equation}
Multiplying by $\partial_{z}\Xi$ and applying integration by parts, we can derive: \begin{equation*}
    \frac{1}{2}\int_{s = s_{0}}(\partial_{z}\Xi)^{2}dz+\frac{1}{2}q_{k}\int_{\Gamma}(\partial_{z}\Xi)^{2}ds+\left(\rho-\frac{1}{2}q_{k}\right)\iint_{\mathcal{R}(s_{0})}(\partial_{z}\Xi)^{2}dsdz = \text{Initial data}+\text{Error terms}.
\end{equation*}
To get as sharp a decay rate as possible, the value of $\rho$ shall be chosen to be at least close to the blue-shift value, thus creating the bad sign in the coefficients of the bulk terms in the first-order energy estimates.

Nonetheless, this bad sign can be compensated for by the second-order energy estimate. Commuting the equation \eqref{eq: quick overview of the motivation of the second order estimate} with $\partial_{z}$, we can derive \begin{equation*}
    \partial_{s}\partial_{z}^{2}\Xi+q_{k}z\partial_{z}^{3}\Xi+\left(\rho+q_{k}\right)\partial_{z}^{2}\Xi = \text{Error terms},\quad \Xi\in\{r_{\rho},\Psi_{\rho}\},
\end{equation*}
where we have gained $q_{k}$ in the coefficient of the terms with second-order $z$-derivative. However, due to the low-regularity nature of $\Psi_{\rho}$, we have to add a suitable $z$-weight in the choice of the multiplier. A natural choice of the multiplier might be $(-z)^{2\gamma}\partial_{z}^{2}\Psi_{\rho}$, where $\gamma\in[0,\frac{1}{2})$ depending on the regularity of $\Psi_{\rho}$. If $\Psi_{\rho}\in\mathcal{C}^{\alpha}$ for $\alpha\in(1,2)$, one can roughly view $\Psi_{\rho}$ as $(-z)^{\alpha}$. Then the correct weight $\gamma$ should be $\frac{3}{2}-\alpha+\delta$ for $\delta$ arbitrarily close to $0$. Multiplying $(-z)^{2\gamma}\partial_{z}^{2}\Psi_{\rho}$, we have 
\begin{equation}
    \begin{aligned}
    &\frac{1}{2}\int_{s = s_{0}}(-z)^{2\gamma}\left(\partial_{z}^{2}\Psi_{\rho}\right)^{2}dz+\frac{1}{2}q_{k}\int_{\Gamma}\left(\partial_{z}^{2}\Psi_{\rho}\right)^{2}ds+\left(\rho+q_{k}\left(\frac{1}{2}-\gamma\right)\right)\iint_{\mathcal{R}(s_{0})}(-z)^{2\gamma}\left(\partial_{z}^{2}\Psi_{\rho}\right)^{2}\\=&\text{Initial data}+\text{Error terms}.
\end{aligned}
\label{eq: heuristics of the second-order energy estimate}
\end{equation}
If the regularity of $\Psi_{\rho}$ is sufficiently close to the threshold, i.e., $\alpha\rightarrow p_{k}$, then the coefficient for the bulk term \begin{equation*}
    \rho+q_{k}\left(\frac{1}{2}-\gamma\right)\rightarrow \rho+k^{2}-q_{k}\delta
\end{equation*}
will only be positive for $\rho>-k^{2}$. Recall that $\rho = -k^{2}$ corresponds to the self-similar value $\rho$. Hence, only in view of this bulk term, second-order energy may be closed for $\rho$ arbitrarily close to the self-similar value. However, the estimate for the error terms will generate several bulk terms with coefficients of size $k$ on the left-hand side of \eqref{eq: heuristics of the second-order energy estimate}, which can only be absorbed by the bulk term on the right-hand side of \eqref{eq: heuristics of the second-order energy estimate} if $\rho> Bk$ for some constant $B$.

One key novelty of our second-order energy estimate is that, instead of using $(-z)^{2\gamma}\partial_{z}^{2}\Psi_{\rho}$ as a multiplier, we use $(-z)^{2\gamma}(\mr{r}+1)\partial_{z}^{2}\Psi_{\rho}$. As a result, the weight in the bulk term on the right-hand side of \eqref{eq: heuristics of the second-order energy estimate} will become \begin{equation}
    \left(\rho+q_{k}\left(\frac{1}{2}-\gamma\right)\right)(-z)^{2\gamma}+\frac{1}{2}q_{k}\frac{d\mr{r}}{dz}(-z)^{2\gamma+1}.\label{eq: heuristic bulk term coefficient}
\end{equation}
The advantage of \eqref{eq: heuristic bulk term coefficient} is that we obtain a ``large" term \begin{equation*}
    \frac{1}{2}q_{k}\frac{d\mr{r}}{dz}(-z)^{2\gamma+1}
\end{equation*}
with a faster degenerate rate near the singular horizon $\{z = 0\}$. By exploiting some Hardy-type inequalities near the singular horizon carefully, we can bound all the error terms by \begin{equation*}
    \iint_{\mathcal{R}(s_{0})}\frac{1}{4}q_{k}\frac{d\mr{r}}{dz}(-z)^{2\gamma+1}\left(\partial_{z}^{2}\Psi_{\rho}\right)^{2}.
\end{equation*}
Hence, this new multiplier will allow the value of $\rho$ to be arbitrarily close to the self-similar value.

However, we remark that even if eventually we can get the positivity of bulk terms for some $\rho<-k^{2}$ when $\alpha>p_{k}$, we still cannot close the energy estimates for $\rho<-k^{2}$, which is consistent with the observation in Remark \ref{rmk: observation why we could not get sharp decay from energy}. The reason lies in the first-order energy estimate \eqref{first order energy estimate for the system}. The coefficient of the first-order bulk term on the right-hand side of \eqref{first order energy estimate for the system} can be viewed as $\frac{1}{2}q_{k}-\rho\approx \frac{1}{2}q_{k}$, while the fundamental theorem of calculus can only give \begin{equation*}
    \left(\frac{1}{2}q_{k}-\rho\right)\iint_{\mathcal{R}(s_{0})}(\partial_{z}r_{\rho})^{2}+\left(\partial_{z}\Psi_{\rho}\right)^{2}\leq (1+\epsilon)\left(\frac{1}{2}q_{k}-\rho\right)\int_{\Gamma}(\partial_{z}r_{\rho})^{2}+\text{Second-order terms}.
\end{equation*}
Hence, if $\rho<0$, then we lose control on the first-order terms on the axis, since \begin{equation*}
    (1+\epsilon)\left(\frac{1}{2}q_{k}-\rho\right)>\frac{1}{2}q_{k}.
\end{equation*}

\subsection{Second-order energy estimate for $\Psi_{\rho}$}
\label{sec: second-order energy estimate for psirho}
In this section, we deal with the second-order energy estimate of $\Psi_{\rho}$. We can prove the following proposition.
\begin{proposition}
\label{prop: second-order energy estimate for psirho}
    There exist $k$ sufficiently small, a constant $C_{1}>0$ depending on the background spacetime $(\mr{r},\mr{\phi},\mr{m})$, and a real number $\frac{3}{2}-\alpha<\gamma<\frac{1}{2}$, such that the following estimate holds in $\mathcal{R}(s_{0})$: \begin{equation}
        \begin{aligned}
            &\int_{s = s_{0}}(-z)^{2\gamma}\left(\partial_{z}^{2}\Psi_{\rho}\right)^{2}+\int_{\Gamma}\left(\partial_{z}^{2}\Psi_{\rho}\right)^{2}\\&+\iint_{\mathcal{R}(s_{0})}\left[\left(\rho+q_{k}\left(\frac{1}{2}-\gamma\right)\right)(-z)^{2\gamma}+\frac{1}{4}q_{k}(-z)^{2\gamma+\frac{1}{2}}\right]\left(\partial_{z}^{2}\Psi_{\rho}\right)^{2}\\\leq&C_{1}\Biggl(\int_{s = 0}(-z)^{2\gamma}\left(\partial_{z}^{2}\Psi_{\rho}\right)^{2}+k^{2}\int_{\Gamma}(\partial_{z}r_{\rho})^{2}+k^{2}\int_{\Gamma}(\partial_{z}\Psi_{\rho})^{2}+k\iint_{\mathcal{R}(s_{0})}(-z)^{2\gamma-\frac{1}{2}}\left(\partial_{z}^{2}r_{\rho}\right)^{2}\Biggr).
        \end{aligned}
        \label{eq: second-order energy estimate for psirho}
    \end{equation}
\end{proposition}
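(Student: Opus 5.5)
We commute \eqref{eq:wave equation for Psi rho} with $\partial_z$ to obtain \eqref{second order wave equation for psi rho}. We then multiply by the modified multiplier $(-z)^{2\gamma}(\mr{r}+1)\partial_{z}^{2}\Psi_{\rho}$ announced in the overview and integrate over $\mathcal{R}(s_0)$, with $\gamma\in(\frac32-\alpha,\frac12)$ fixed so that the initial term $\int_{s=0}(-z)^{2\gamma}(\partial_z^2\Psi_\rho)^2$ is finite for $\Psi_p^0\in\mathcal{C}^\alpha_N$.

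The principal part yields three contributions. The $\partial_s$ term gives the boundary flux $\frac12\int_{s=s_0}(-z)^{2\gamma}(\mr r+1)(\partial_z^2\Psi_\rho)^2$ minus its $s=0$ analogue. The term $q_kz\partial_z^3\Psi_\rho$, after integrating by parts in $z$, gives the axis flux $\frac12 q_k\int_\Gamma(\partial_z^2\Psi_\rho)^2$; here $(\mr r+1)=1$ and $(-z)=1$ at $z=-1$, and the boundary term at $z=0$ vanishes because of the weight $(-z)^{2\gamma+1}$. The same integration by parts produces the bulk weight
\[
\Bigl(\rho+q_k\bigl(\tfrac12-\gamma\bigr)\Bigr)(-z)^{2\gamma}(\mr r+1)+\tfrac12 q_k\frac{d\mr r}{dz}(-z)^{2\gamma+1}.
\]
By Proposition~\ref{prop: estimate on the exact k self similar spacetime} we have $\frac{d\mr r}{dz}\approx 1$ and $1\le\mr r+1\le2$. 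Since $(-z)^{2\gamma+1}\ge(-z)^{2\gamma+\frac12}\cdot(-z)^{1/2}$ fails near $z=0$, we first check the intended lower bound $\frac12 q_k\frac{d\mr r}{dz}(-z)^{2\gamma+1}\ge \frac14 q_k(-z)^{2\gamma+\frac12}$ on $[-1,0]$ as written. If that normalization does not hold, we redistribute using $(-z)^{1/2}\le1$ together with the positive $(-z)^{2\gamma}$ term, which is where $\rho$ near the self-similar value enters. Either way, after dividing out harmless constants we keep a positive good bulk term $\frac14 q_k(-z)^{2\gamma+\frac12}(\partial_z^2\Psi_\rho)^2$ plus the $(\rho+q_k(\frac12-\gamma))(-z)^{2\gamma}$ term.

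The error terms are treated one by one, with Cauchy--Schwarz against the good bulk and the schematic bounds of Definition~\ref{def:schematic notation}.
\begin{enumerate}
\item[(i)] $G_{k^2}\partial_z\Psi_\rho$ and $F_{k^2}\Psi_\rho$: we use $|z|^\epsilon F_{k^2}\lesssim k^2/\epsilon$ and bound the lower-order quantities by $\partial_z^2\Psi_\rho$ via \eqref{eq: fundamental theorem of calculus} and \eqref{standard hardy}. The boundary values produced at $z=-1$ give the $k^2\int_\Gamma(\partial_z\Psi_\rho)^2$ term.
\item[(ii)] $G_k\partial_z^2 r_\rho$: by Cauchy--Schwarz with weights $(-z)^{2\gamma+\frac12}$ and $(-z)^{2\gamma-\frac12}$, this is bounded by a small multiple of the good bulk plus $k\iint(-z)^{2\gamma-\frac12}(\partial_z^2 r_\rho)^2$.
\item[(iii)] $F_{k^2}\partial_s r_\rho$: we integrate by parts in $s$ (as in \eqref{eq: integration by parts in the s direction}) and substitute \eqref{eq:second order wave equation for r rho} for $\partial_s\partial_z^2$-type terms. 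The resulting boundary terms on $\{s=s_0\}$ and $\{s=0\}$ carry a factor $k^2$ and are absorbed into the left-hand side or the data.
\item[(iv)] $\frac{m_\rho}{\mr r^2}$ and $|z|^{1/2}G_k\partial_z(\frac{m_\rho}{\mr r^2})$: we apply \eqref{eq: first-order energy estimate using second order terms} and Proposition~\ref{prop: true second-order energy for mrho} with $\omega=\gamma+\frac12$ or $\omega=\gamma+\frac34$ (the extra $|z|^{1/2}$ raises the power). Choosing $\epsilon$ small relative to $\omega_0\sim\gamma+\frac12$, these give $k^2/\epsilon$ times the axis values $(\partial_zr_\rho)^2,(\partial_z\Psi_\rho)^2$ at $z=-1$ together with second-order bulks with weight $(-z)^{2\gamma+\frac12-2\epsilon+\ldots}$. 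The latter are dominated by the good bulk, with $k^2/\epsilon\ll1$.
\end{enumerate}

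The main obstacle is the weight bookkeeping near $z=0$. Every $F_a$ coefficient costs $(-z)^{-\epsilon}$, and every Hardy application needs the power to stay above the threshold. All error bulks must land on weights no more singular than $(-z)^{2\gamma+\frac12}$ (or be $\rho$-independent and $k$-small against $(-z)^{2\gamma}$), so that the degenerate good term $\frac14q_k(-z)^{2\gamma+\frac12}$ can absorb them without requiring $\rho\gtrsim k$. I would first verify this for the worst term, $|z|^{1/2}G_k\partial_z(m_\rho/\mr r^2)$ (where the explicit $|z|^{1/2}$ is exactly what saves the weight), and then fix $\epsilon$ and $k$ accordingly and collect constants into $C_1$.
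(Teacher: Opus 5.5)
Your proposal has a genuine gap, and it is in the choice of multiplier. With $(-z)^{2\gamma}(\mr{r}+1)\partial_z^2\Psi_\rho$, integrating $q_kz\partial_z^3\Psi_\rho$ by parts produces the extra bulk term $\frac12 q_k\frac{d\mr{r}}{dz}(-z)^{2\gamma+1}$. Since $(-z)^{2\gamma+1}=(-z)^{1/2}\cdot(-z)^{2\gamma+\frac12}$, this is strictly more degenerate at $z=0$ than the $\frac14q_k(-z)^{2\gamma+\frac12}$ the statement requires, so your ``check as written'' fails.

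Your fallback, borrowing from $\bigl(\rho+q_k(\frac12-\gamma)\bigr)(-z)^{2\gamma}$, would use up an amount of order $q_k$ of that coefficient near $z=0$; in effect it requires $\rho\gtrsim q_k$. The proposition is needed precisely when that coefficient is small. For $\alpha$ near $p_k$, the window $\frac32-\alpha<\gamma<\frac12$ forces $q_k(\frac12-\gamma)=O(k^2)$. Moreover $\rho=k^{1/8}$ in Theorem \ref{thm:a priori energy estimate}, and $\rho$ may even be negative in Proposition \ref{prop: key to close the linearized result}.

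The deficit is real, not a matter of normalization. The coupling $G_k\partial_z^2r_\rho\,\partial_z^2\Psi_\rho$ has a coefficient of size $k$ that does not vanish at $z=0$; it comes from $k\partial_zr_\rho$ in \eqref{eq:wave equation for Psi rho}. Cauchy--Schwarz against a $(-z)^{2\gamma+1}$ good term therefore leaves $k\iint(-z)^{2\gamma-1}(\partial_z^2r_\rho)^2$, which is not bounded by the $(-z)^{2\gamma-\frac12}$ term on the right-hand side. Controlling that weight in the $r_\rho$ estimate would need, essentially, $r_p^0\in\mathcal{C}_N^{\beta}$ with $\beta>\alpha+\frac12$ rather than $\beta>\alpha+\frac14$. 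Absorbing into the $(-z)^{2\gamma}$ term instead forces $\rho\gtrsim k$.

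The paper replaces $\mr{r}+1$ by $w(z)=3-2(-z)^{1/2}$. Since $1\le w\le 3$ and $w'(z)=(-z)^{-1/2}$, the same integration by parts yields exactly $\bigl(\rho+q_k(\frac12-\gamma)\bigr)w(-z)^{2\gamma}+\frac12 q_k(-z)^{2\gamma+\frac12}$. All error terms are then absorbed into half of the last term with $k$-small constants, uniformly in $\rho$. Once this weight is in place, your steps (i), (ii) and (iv) are essentially the paper's; indeed you had already written them assuming a $(-z)^{2\gamma+\frac12}$ good term.

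Your step (iii) also has to be replaced. Integrating by parts in $s$ puts $\partial_s$ on $\partial_z^2\Psi_\rho$, so you must substitute \eqref{second order wave equation for psi rho}, not the $r_\rho$ equation. That brings in $z\partial_z^3\Psi_\rho$, and removing it needs a further $z$-integration by parts that differentiates $F_{k^2}$, a coefficient class with no derivative control in Definition \ref{def:schematic notation}. It also creates the boundary term $\int_{s=s_0}(-z)^{2\gamma}wF_{k^2}r_\rho\,\partial_z^2\Psi_\rho$, whose factor $r_\rho(s_0,\cdot)$ is not controlled by the left-hand side of this proposition.

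The paper instead reads \eqref{eq:wave equation for r rho} as a $z$-transport equation for $\partial_sr_\rho$, which vanishes on $\Gamma$. This gives $|\partial_sr_\rho|^2\lesssim\int_{-1}^{z}\bigl(\tilde z^2|\partial_z^2r_\rho|^2+|\partial_zr_\rho|^2+r_\rho^2+|m_\rho/\mr{r}^2|^2\bigr)d\tilde z$. The estimate then closes with Fubini, \eqref{eq: fundamental theorem of calculus} and \eqref{eq: first-order energy estimate using second order terms}.
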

\begin{proof}
Let $w(z) = -2(-z)^{\frac{1}{2}}+3$. Multiplying \eqref{second order wave equation for psi rho} by $(-z)^{2\gamma+1}w(z)\partial_{z}^{2}\Psi_{\rho}$ and integrating over the region $\mathcal{R}(s_{0})$ we have \begin{equation}
        \begin{aligned}
            &\frac{1}{2}\int_{s = s_{0}}(-z)^{2\gamma}w(z)\left(\partial_{z}^{2}\Psi_{\rho}\right)^{2}+\frac{1}{2}q_{k}\int_{\Gamma}\left(\partial_{z}^{2}\Psi_{\rho}\right)^{2}\\&+\iint_{\mathcal{R}(s_{0})}\left[\left(\rho+q_{k}\left(\frac{1}{2}-\gamma\right)\right)w(z)(-z)^{2\gamma}+\frac{1}{2}q_{k}(-z)^{2\gamma+\frac{1}{2}}\right]\left(\partial_{z}^{2}\Psi_{\rho}\right)^{2}dzds\\&\lesssim\int_{s = 0}(-z)^{2\gamma}w(z)\left(\partial_{z}^{2}\Psi_{\rho}\right)^{2}+\iint_{\mathcal{R}(s_{0})}(-z)^{2\gamma}G_{k^{2}}\partial_{z}\Psi_{\rho}\partial_{z}^{2}\Psi_{\rho}+(-z)^{2\gamma}F_{k^{2}}\Psi_{\rho}\partial_{z}^{2}\Psi_{\rho}\\&+\iint_{\mathcal{R}(s_{0})}(-z)^{2\gamma}G_{k}\partial_{z}^{2}r_{\rho}\partial_{z}^{2}\Psi_{\rho}+(-z)^{2\gamma}F_{k^{2}}\partial_{s}r_{\rho}\partial_{z}^{2}\Psi_{\rho}+(-z)^{2\gamma}F_{k^{2}}\partial_{z}r_{\rho}\partial_{z}^{2}\Psi_{\rho}+(-z)^{2\gamma}F_{k^{2}}r_{\rho}\partial_{z}^{2}\Psi_{\rho}\\&+\iint_{\mathcal{R}(s_{0})}(-z)^{2\gamma}(-z)^{\frac{1}{2}}G_{k}(z)\partial_{z}\left(\frac{m_{\rho}}{\mr{r}^{2}}\right)\partial_{z}^{2}\Psi_{\rho}+(-z)^{2\gamma}F_{k}(z)\frac{m_{\rho}}{\mr{r}^{2}}\partial_{z}^{2}\Psi_{\rho}.
        \end{aligned}
        \label{eq: crude version of psirho second order energy estimate}
    \end{equation}
    To estimate the term with $\partial_{z}\Psi_{\rho}$ on the right-hand side of \eqref{eq: crude version of psirho second order energy estimate}, we have \begin{equation}
        \begin{aligned}
            \left\vert\iint_{\mathcal{R}(s_{0})}(-z)^{2\gamma}G_{k^{2}}\partial_{z}\Psi_{\rho}\partial_{z}^{2}\Psi_{\rho}\right\vert\lesssim& k^{2}\iint_{\mathcal{R}(s_{0})}(-z)^{2\gamma-\frac{1}{2}}\left(\partial_{z}\Psi_{\rho}\right)^{2}+k^{2}\iint_{\mathcal{R}(s_{0})}(-z)^{2\gamma+\frac{1}{2}}\left(\partial_{z}^{2}\Psi_{\rho}\right)^{2}\\\lesssim&k^{2}\int_{\Gamma}\left(\partial_{z}\Psi_{\rho}\right)^{2}+k^{2}\iint_{\mathcal{R}(s_{0})}(-z)^{2\gamma+\frac{1}{2}}\left(\partial_{z}^{2}\Psi_{\rho}\right)^{2},
        \end{aligned}
    \end{equation}
    where the last inequality follows from \eqref{eq: fundamental theorem of calculus} by choosing $\omega = \gamma+\frac{1}{4}$ and the fact that $\gamma\geq0$.

    To estimate the term with $\Psi_{\rho}$ on the right-hand side of \eqref{eq: crude version of psirho second order energy estimate}, we have \begin{equation}
        \begin{aligned}
            \left\vert\iint_{\mathcal{R}(s_{0})}(-z)^{2\gamma}F_{k^{2}}\Psi_{\rho}\partial_{z}^{2}\Psi_{\rho}\right\vert\lesssim& k^{2}\iint_{\mathcal{R}(s_{0})}(-z)^{2\gamma+\frac{1}{2}}\left(\partial_{z}^{2}\Psi_{\rho}\right)^{2}+\frac{1}{k^{2}}\iint_{\mathcal{R}(s_{0})}(-z)^{2\gamma-\frac{1}{2}}F_{k^{2}}^{2}\Psi_{\rho}^{2}\\\lesssim&k^{2}\iint_{\mathcal{R}(s_{0})}(-z)^{2\gamma+\frac{1}{2}}\left(\partial_{z}^{2}\Psi_{\rho}\right)^{2}+k^{2}\iint_{\mathcal{R}(s_{0})}(-z)^{2\gamma-1}\Psi_{\rho}^{2}\\\lesssim&k^{2}\iint_{\mathcal{R}(s_{0})}(-z)^{2\gamma+\frac{1}{2}}\left(\partial_{z}^{2}\Psi_{\rho}\right)^{2}+k^{2}\iint_{\mathcal{R}(s_{0})}(-z)^{2\gamma+1}(\partial_{z}\Psi_{\rho})^{2}\\\lesssim&
            k^{2}\int_{\Gamma}(\partial_{z}\Psi_{\rho})^{2}+
            k^{2}\iint_{\mathcal{R}(s_{0})}(-z)^{2\gamma+\frac{1}{2}}\left(\partial_{z}^{2}\Psi_{\rho}\right)^{2},
        \end{aligned}
        \label{eq: second order energy estimate with psirho}
    \end{equation}
    where the last two inequalities follow from \eqref{eq: fundamental theorem of calculus}.

    To estimate the term with $\partial_{z}^{2}r_{\rho}$ on the right-hand side of \eqref{eq: crude version of psirho second order energy estimate}, we have \begin{equation}
    \begin{aligned}
        \left\vert\iint_{\mathcal{R}(s_{0})}(-z)^{2\gamma}G_{k}\partial_{z}^{2}r_{\rho}\partial_{z}^{2}\Psi_{\rho}\right\vert\lesssim& k\iint_{\mathcal{R}(s_{0})}(-z)^{2\gamma-\frac{1}{2}}\left(\partial_{z}^{2}r_{\rho}\right)^{2}+k\iint_{\mathcal{R}(s_{0})}(-z)^{2\gamma+\frac{1}{2}}\left(\partial_{z}^{2}\Psi_{\rho}\right)^{2}\\\lesssim& k\iint_{\mathcal{R}(s_{0})}(-z)^{2\gamma-\frac{1}{2}}\left(\partial_{z}^{2}r_{\rho}\right)^{2}+k\iint_{\mathcal{R}(s_{0})}(-z)^{2\gamma+\frac{1}{2}}\left(\partial_{z}^{2}\Psi_{\rho}\right)^{2}.
        \end{aligned}
    \end{equation}

    To estimate the term with $\partial_{z}r_{\rho}$ on the right-hand side of \eqref{eq: crude version of psirho second order energy estimate}, we have \begin{equation}
        \begin{aligned}
            \left\vert\iint_{\mathcal{R}(s_{0})}(-z)^{2\gamma}F_{k^{2}}\partial_{z}r_{\rho}\partial_{z}^{2}\Psi_{\rho}\right\vert\lesssim&k^{2}\iint_{\mathcal{R}(s_{0})}(-z)^{2\gamma+\frac{1}{2}}\left(\partial_{z}^{2}\Psi_{\rho}\right)^{2}+\frac{1}{k^{2}}\iint_{\mathcal{R}(s_{0})}(-z)^{2\gamma-\frac{1}{2}}F_{k^{2}}^{2}\left(\partial_{z}r_{\rho}\right)^{2}\\\lesssim&k^{2}\iint_{\mathcal{R}(s_{0})}(-z)^{2\gamma+\frac{1}{2}}\left(\partial_{z}^{2}\Psi_{\rho}\right)^{2}+k^{2}\iint_{\mathcal{R}(s_{0})}(-z)^{2\gamma-\frac{3}{4}}\left(\partial_{z}r_{\rho}\right)^{2}\\\lesssim&k^{2}\int_{\Gamma}\left(\partial_{z}r_{\rho}\right)^{2}+k^{2}\iint_{\mathcal{R}(s_{0})}(-z)^{2\gamma+\frac{1}{2}}\left(\partial_{z}^{2}\Psi_{\rho}\right)^{2}\\&+k^{2}\iint_{\mathcal{R}(s_{0})}(-z)^{2\gamma-\frac{1}{2}}(\partial_{z}^{2}r_{\rho})^{2},
        \end{aligned}
    \end{equation}
    where the last inequality follows from \eqref{eq: fundamental theorem of calculus} by choosing $\omega = \gamma+\frac{1}{8}$.

To estimate the term with $r_{\rho}$ on the right-hand side of \eqref{eq: crude version of psirho second order energy estimate}, similar to \eqref{eq: second order energy estimate with psirho}, we have \begin{equation}
    \begin{aligned}
        \left\vert\iint_{\mathcal{R}(s_{0})}(-z)^{2\gamma}F_{k^{2}}r_{\rho}\partial_{z}^{2}\Psi_{\rho}\right\vert\lesssim&k^{2}\int_{\Gamma}\left(\partial_{z}r_{\rho}\right)^{2}+k^{2}\iint_{\mathcal{R}(s_{0})}(-z)^{2\gamma+\frac{1}{2}}\left(\partial_{z}^{2}\Psi_{\rho}\right)^{2}\\&+k^{2}\iint_{\mathcal{R}(s_{0})}(-z)^{2\gamma-\frac{1}{2}}\left(\partial_{z}^{2}r_{\rho}\right)^{2}.
    \end{aligned}
\end{equation}
To estimate the term with $\partial_{s}r_{\rho}$ on the right-hand side of \eqref{eq: crude version of psirho second order energy estimate}, we have \begin{equation}
\begin{aligned}
\left\vert\iint_{\mathcal{R}(s_{0})}(-z)^{2\gamma}F_{k^{2}}\partial_{s}r_{\rho}\partial_{z}^{2}\Psi_{\rho}\right\vert\lesssim k^{2}\iint_{\mathcal{R}(s_{0})}(-z)^{2\gamma+\frac{1}{2}}\left(\partial_{z}^{2}\Psi_{\rho}\right)^{2}+k^{2}\iint_{\mathcal{R}(s_{0})}(-z)^{2\gamma-\frac{1}{2}}(\partial_{s}r_{\rho})^{2}.
\end{aligned}
\end{equation}
To estimate $\partial_{s}r_{\rho}$, we view \eqref{eq:wave equation for r rho} as a transport equation for $\partial_{s}r_{\rho}$: \begin{equation*}
    \partial_{z}\left(e^{\int_{-1}^{z}G_{k^{2}}(\widetilde{z})d\widetilde{z}}\partial_{s}r_{\rho}\right) = e^{\int_{-1}^{z}G_{k^{2}}(\widetilde{z})d\widetilde{z}}\left(q_{k}(-z)\partial_{z}^{2}r_{\rho}+\left(\rho+G_{k^{2}}(z)\right)\partial_{z}r_{\rho}+G_{k^{2}}r_{\rho}+K\frac{m_{\rho}}{\mr{r}^{2}}\right).
\end{equation*}
Then we have \begin{equation*}
    \vert\partial_{s}r_{\rho}(s,z)\vert^{2}\lesssim \int_{-1}^{z}\vert \widetilde{z}\vert^{2}\left\vert\partial_{z}^{2}r_{\rho}\right\vert^{2}+\vert\partial_{z}r_{\rho}\vert^{2}+\vert r_{\rho}\vert^{2}+\vert K\vert^{2}\left\vert\frac{m_{\rho}}{\mr{r}^{2}}\right\vert^{2} d\widetilde{z}.
\end{equation*}
Applying the Fubini Theorem, we have \begin{align*}
    \int_{-1}^{0}(-z)^{2\gamma-\frac{1}{2}}\left(\partial_{s}r_{\rho}\right)^{2}dz\lesssim &\int_{-1}^{0}(-z)^{2\gamma-\frac{1}{2}}\left(\partial_{z}^{2}r_{\rho}\right)^{2}+(-z)^{2\gamma-\frac{3}{4}}\left((\partial_{z}r_{\rho})^{2}+r_{\rho}^{2}\right)\\&+\int_{-1}^{0}(-z)^{2\gamma-1}\left\vert K(z)\frac{m_{\rho}}{\mr{r}}\right\vert^{2}.
\end{align*}
Hence, similar to the estimates we have shown above, using the first-order estimate for $\frac{m_{\rho}}{\mr{r}^{2}}$ \eqref{eq: first-order energy estimate using second order terms} we have \begin{equation}
    \begin{aligned}
        &\left\vert\iint_{\mathcal{R}(s_{0})}(-z)^{2\gamma}F_{k^{2}}\partial_{s}r_{\rho}\partial_{z}^{2}\Psi_{\rho}\right\vert\\\lesssim& k^{2}\int_{\Gamma}(\partial_{z}r_{\rho})^{2}+k^{2}\iint_{\mathcal{R}(s_{0})}(-z)^{2\gamma+\frac{1}{2}}\left(\partial_{z}^{2}\Psi_{\rho}\right)^{2}\\&+k^{2}\iint_{\mathcal{R}(s_{0})}(-z)^{2\gamma-\frac{1}{2}}\left(\partial_{z}^{2}r_{\rho}\right)^{2}+\iint_{\mathcal{R}(s_{0})}(-z)^{2\gamma-\frac{1}{2}}\left\vert \frac{m_{\rho}}{\mr{r}^{2}}\right\vert^{2}\\\lesssim&
        k^{2}\int_{\Gamma}(\partial_{z}r_{\rho})^{2}+k^{2}\int_{\Gamma}(\partial_{z}\Psi_{\rho})^{2}+k^{2}\iint_{\mathcal{R}(s_{0})}(-z)^{2\gamma-\frac{1}{2}}\left(\partial_{z}^{2}r_{\rho}\right)^{2}+k^{2}\iint_{\mathcal{R}(s_{0})}(-z)^{2\gamma+\frac{1}{2}}\left(\partial_{z}^{2}\Psi_{\rho}\right)^{2}.
    \end{aligned}
\end{equation}
    To estimate the term with $\partial_{z}\left(\frac{m_{\rho}}{\mr{r}^{2}}\right)$ on the right-hand side of \eqref{eq: crude version of psirho second order energy estimate}, using Proposition \ref{prop: second order estimate of mrho} and taking $\omega = \gamma+\frac{1}{2}$ , we have \begin{equation}
        \begin{aligned}
            &\left\vert\iint_{\mathcal{R}(s_{0})}(-z)^{2\gamma+1}G_{k}(z)\partial_{z}\left(\frac{m_{\rho}}{\mr{r}^{2}}\right)\partial_{z}^{2}\Psi_{\rho}\right\vert\\\lesssim& k\iint_{\mathcal{R}(s_{0})}(-z)^{2\gamma+\frac{1}{2}}\left(\partial_{z}^{2}\Psi_{\rho}\right)^{2}+k\iint_{\mathcal{R}(s_{0})}(-z)^{2\gamma-1}\left\vert\partial_{z}\left(\frac{m_{\rho}}{\mr{r}^{2}}\right)\right\vert^{2}
            \\\lesssim&
            k^{5}\int_{\Gamma}(\partial_{z}r_{\rho})^{2}+k^{3}\int_{\Gamma}(\partial_{z}\Psi_{\rho})^{2}+k^{5}\iint_{\mathcal{R}(s_{0})}(-z)^{2\gamma-\frac{1}{2}}\left(\partial_{z}^{2}r_{\rho}\right)^{2}+k\iint_{\mathcal{R}(s_{0})}(-z)^{2\gamma+\frac{1}{2}}\left(\partial_{z}^{2}\Psi_{\rho}\right)^{2}.
        \end{aligned}
    \end{equation}
    To estimate the term with $\frac{m_{\rho}}{\mr{r}^{2}}$, by \eqref{eq: first-order energy estimate using second order terms}, we have \begin{equation}
    \begin{aligned}
      \left\vert\iint_{\mathcal{R}(s_{0})}(-z)^{2\gamma}F_{k}\frac{m_{\rho}}{\mr{r}^{2}}\partial_{z}^{2}\Psi_{\rho}\right\vert\lesssim&k\iint_{\mathcal{R}(s_{0})}(-z)^{2\gamma+\frac{1}{2}}\left(\partial_{z}^{2}\Psi_{\rho}\right)^{2}+k\iint_{\mathcal{R}(s_{0})}(-z)^{2\gamma-\frac{1}{2}}\left(\frac{m_{\rho}}{\mr{r}^{2}}\right)^{2}\\\lesssim&k^{5}\int_{\Gamma}(\partial_{z}r_{\rho})^{2}+k^{3}\int_{\Gamma}(\partial_{z}\Psi_{\rho})^{2}+k\iint_{\mathcal{R}(s_{0})}(-z)^{2\gamma+\frac{1}{2}}\left(\partial_{z}^{2}\Psi_{\rho}\right)^{2}\\&+k^{5}\iint_{\mathcal{R}(s_{0})}(-z)^{2\gamma-\frac{1}{2}}\left(\partial_{z}^{2}r_{\rho}\right)^{2}.
      \end{aligned}
    \end{equation}
    Putting everything together, for $k$ sufficiently small, we have 
    \begin{equation}
        \begin{aligned}
            &\int_{s = s_{0}}(-z)^{2\gamma}\left(\partial_{z}^{2}\Psi_{\rho}\right)^{2}+\int_{\Gamma}\left(\partial_{z}^{2}\Psi_{\rho}\right)^{2}\\&+\iint_{\mathcal{R}(s_{0})}\left[\left(\rho+q_{k}\left(\frac{1}{2}-\gamma\right)\right)(-z)^{2\gamma}+\frac{1}{4}q_{k}(-z)^{2\gamma+\frac{1}{2}}\right]\left(\partial_{z}^{2}\Psi_{\rho}\right)^{2}\\\lesssim&\int_{s = 0}(-z)^{2\gamma}\left(\partial_{z}^{2}\Psi_{\rho}\right)^{2}+k^{2}\int_{\Gamma}(\partial_{z}r_{\rho})^{2}+k^{2}\int_{\Gamma}(\partial_{z}\Psi_{\rho})^{2}+k\iint_{\mathcal{R}(s_{0})}(-z)^{2\gamma-\frac{1}{2}}\left(\partial_{z}^{2}r_{\rho}\right)^{2}.
        \end{aligned}
    \end{equation}
    This concludes the proof.
\end{proof}

\subsection{Second order energy estimates for $r_{\rho}$}
\label{sec: second-order energy estimate for rrho}
 In this section, we establish the second-order energy estimate for $r_{\rho}$. Due to the presence of $(-z)^{2\gamma-\frac{1}{2}}(\partial_{z}^{2}r_{\rho})^{2}$ in the second-order energy estimate \eqref{eq: second-order energy estimate for psirho} for $\Psi_{\rho}$, we shall prove the following weighted energy estimate for $r_{\rho}$:
\begin{proposition}
\label{prop: second-order energy estimate for rrho}
There exist $k$ sufficiently small, a constant $C_{1}>0$ depending on the background spacetime $(\mr{r},\mr{\phi},\mr{m})$, and a real number $\frac{3}{2}-\alpha<\gamma<\frac{1}{2}$, such that the following estimate holds in $\mathcal{R}(s_{0})$:\begin{equation}
    \begin{aligned}
        &\int_{s = s_{0}}(-z)^{2\gamma-\frac{1}{2}}\left(\partial_{z}^{2}r_{\rho}\right)^{2}+\int_{\Gamma}(\partial_{z}^{2}r_{\rho})+\left(\rho+q_{k}\left(\frac{3}{4}-\gamma\right)-C_{1}k\right)\iint_{\mathcal{R}(s_{0})}(-z)^{2\gamma-\frac{1}{2}}\left(\partial_{z}^{2}r_{\rho}\right)^{2}\\\leq&C_{1}\left(\int_{s = 0}(-z)^{2\gamma-\frac{1}{2}}(\partial_{z}r_{\rho})^{2}+k\int_{\Gamma}(\partial_{z}r_{\rho})^{2}+k\int_{\Gamma}(\partial_{z}\Psi_{\rho})^{2}+k\iint_{\mathcal{R}(s_{0})}(-z)^{2\gamma+\frac{1}{2}}(\partial_{z}^{2}\Psi_{\rho})^{2}\right).
    \end{aligned}
\label{second order energy estimate for the system}
\end{equation}
\end{proposition}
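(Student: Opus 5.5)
The plan is to mirror the proof of Proposition~\ref{prop: second-order energy estimate for psirho}, now using the commuted equation \eqref{eq:second order wave equation for r rho}. We multiply it by $(-z)^{2\gamma-\frac{1}{2}}\partial_{z}^{2}r_{\rho}$ and integrate over $\mathcal{R}(s_{0})$. The $r$-equation is a full derivative more regular than the $\Psi$-equation, so the plain power weight should suffice and the auxiliary factor $w(z)$ should not be needed.

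The principal part $\partial_{s}\partial_{z}^{2}r_{\rho}+q_{k}z\partial_{z}^{3}r_{\rho}+(\rho+q_{k})\partial_{z}^{2}r_{\rho}$ gives three contributions after integrating by parts in $s$ and in $z$. First, the boundary term $\frac12\int_{s=s_{0}}(-z)^{2\gamma-\frac12}(\partial_{z}^{2}r_{\rho})^{2}$ on the top slice. Second, an axis term on $\Gamma$. Third, a bulk term with coefficient
\[
\rho+q_{k}-\tfrac12 q_{k}\bigl(2\gamma-\tfrac12+1\bigr)=\rho+q_{k}\bigl(\tfrac34-\gamma\bigr),
\]
which is exactly the coefficient in \eqref{second order energy estimate for the system}. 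There is no boundary contribution at $z=0$, since the weight $(-z)^{2\gamma+\frac12}$ vanishes there. The term $G_{k^{2}}\partial_{z}^{2}r_{\rho}$ from the left-hand side only perturbs this coefficient by $O(k^{2})$.

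Next we treat the right-hand side of \eqref{eq:second order wave equation for r rho} term by term, by Cauchy--Schwarz with weights. Each $F_{k^{2}}$-term is split as $k\,(-z)^{2\gamma-\frac12}(\partial_{z}^{2}r_{\rho})^{2}$ plus $k^{-1}(-z)^{2\gamma-\frac12}F_{k^{2}}^{2}X^{2}$, with $X\in\{\partial_{z}r_{\rho},r_{\rho},\partial_{s}r_{\rho}\}$.

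1. We use $|z|^{2\epsilon}F_{k^{2}}^{2}\lesssim k^{4}/\epsilon$ to absorb an $\epsilon$-loss in the weight.
2. The terms in $r_{\rho}$ and $\partial_{z}r_{\rho}$ are then reduced via \eqref{eq: fundamental theorem of calculus}, with $\omega$ slightly less than $\gamma+\frac14$, to $k\int_{\Gamma}(\partial_{z}r_{\rho})^{2}$ plus a small multiple of the second-order bulk. This requires $\gamma>\frac14$ or so; otherwise one uses the bulk on $\Gamma$ directly.
3. For $\partial_{s}r_{\rho}$, we use the transport representation from \eqref{eq:wave equation for r rho} exactly as in the $\Psi$ proof.

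The genuinely new difficulty is the $K(z)$-terms: $K\,\partial_{z}(m_{\rho}/\mr{r}^{2})$ and $K\,m_{\rho}/\mr{r}^{2}$. These come with an $O(1)$ coefficient, not a $k$-small one.

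1. We split $K\,\partial_{z}(m_{\rho}/\mr{r}^{2})\,\partial_{z}^{2}r_{\rho}$ as $k(-z)^{2\gamma-\frac12}(\partial_{z}^{2}r_{\rho})^{2}+k^{-1}(-z)^{2\gamma-\frac12}|\partial_{z}(m_{\rho}/\mr{r}^{2})|^{2}$.
2. We apply Proposition~\ref{prop: true second-order energy for mrho} with $\omega=\gamma+\frac14$. Its right-hand side carries factors $k^{4}/\epsilon$ on the $r$-terms and $k^{2}/\epsilon$ on the $\Psi$-terms.
3. After dividing by $k$, we obtain $k^{3}$ times second-order $r$-bulk, and $k$ times $\int_{\Gamma}(\partial_{z}\Psi_{\rho})^{2}$ and $\iint(-z)^{2\gamma+\frac12-2\epsilon}(\partial_{z}^{2}\Psi_{\rho})^{2}$.
4. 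The lower-order $K\,m_{\rho}/\mr{r}^{2}$ term is handled the same way with \eqref{eq: first-order energy estimate using second order terms}.

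The main obstacle is the weight loss $-2\epsilon$ in the $\Psi$ bulk term from step 3. The target $k\iint(-z)^{2\gamma+\frac12}(\partial_{z}^{2}\Psi_{\rho})^{2}$ requires exactly weight $2\gamma+\frac12$. We would first try to re-run the proof of Proposition~\ref{prop: true second-order energy for mrho} with the slack already present there: the $\Psi$-terms in it arise with weight $(-z)^{2\omega+1-2\epsilon}$, and choosing $\omega=\gamma+\frac14+\epsilon$ recovers $2\gamma+\frac12$ while the left side only needs $(-z)^{2\gamma-\frac12+2\epsilon}\le(-z)^{2\gamma-\frac12}$ bounds. Failing that, we would shrink the $\frac14$ gap, which the strict inequality $\gamma>\frac32-\alpha$ permits.

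Finally, we collect all terms and take $k$ small so that the $O(k)$ multiples of the $r$-bulk move into $-C_{1}k$ in the coefficient. This yields \eqref{second order energy estimate for the system}.
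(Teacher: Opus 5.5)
Your plan is the paper's proof. It uses the same multiplier; the paper's text misprints the exponent, but its displayed identity corresponds to your $(-z)^{2\gamma-\frac12}\partial_z^2 r_\rho$, with bulk coefficient $\rho+q_k(\frac34-\gamma)$. The remaining steps also coincide: the $k$/$k^{-1}$ Cauchy--Schwarz split, \eqref{eq: fundamental theorem of calculus} for the $r_\rho$ and $\partial_z r_\rho$ terms, the transport form of \eqref{eq:wave equation for r rho} for $\partial_s r_\rho$, and Propositions~\ref{prop: second order estimate of mrho} and~\ref{prop: true second-order energy for mrho} for the $K$-terms. The one thing to correct is your ``main obstacle'' paragraph.

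That obstacle does not exist; it comes from dropping the $+1$ in the exponent. Apply Proposition~\ref{prop: true second-order energy for mrho} with $\omega=\gamma+\frac14$, so that $2\omega-1=2\gamma-\frac12$ matches your split. The second-order terms on its right-hand side then carry the weight $(-z)^{2(\omega-\epsilon)+1}=(-z)^{2\gamma+\frac32-2\epsilon}$, not $(-z)^{2\gamma+\frac12-2\epsilon}$. Since $1-2\epsilon>0$, this weight is bounded by $(-z)^{2\gamma+\frac12}$ on $[-1,0]$. So your step~3 directly gives $k\iint_{\mathcal{R}(s_0)}(-z)^{2\gamma+\frac12}(\partial_z^2\Psi_\rho)^2$, and the accompanying $r$-bulk has weight $(-z)^{2\gamma+\frac32-2\epsilon}\le(-z)^{2\gamma-\frac12}$. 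The $K\,m_\rho/\mr{r}^2$ term works the same way via \eqref{eq: first-order energy estimate using second order terms}, whose weights are $(-z)^{2\gamma+\frac52-2\epsilon}$.

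You should drop the proposed remedy, because it would not work anyway. With $\omega=\gamma+\frac14+\epsilon$, the proposition bounds only $\int(-z)^{2\gamma-\frac12+2\epsilon}|\partial_z(m_\rho/\mr{r}^2)|^2$. Your split produces $\int(-z)^{2\gamma-\frac12}|\partial_z(m_\rho/\mr{r}^2)|^2$, which has the larger weight. A bound on the smaller-weight integral does not control the larger one, so $(-z)^{2\gamma-\frac12+2\epsilon}\le(-z)^{2\gamma-\frac12}$ points the wrong way for your purpose.

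Similarly, step~2 does not need $\gamma>\frac14$. Using \eqref{eq: fundamental theorem of calculus} with $\omega=\gamma+\frac14-\epsilon$ only requires $\omega>0$, which holds because $\gamma>\frac32-\alpha>0$.
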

\begin{proof}
Multiplying \eqref{eq:second order wave equation for r rho} by $(-z)^{2\gamma-1}\partial_{z}^{2}r_{\rho}$ and integrating over $\mathcal{R}(s_{0})$, there exists a constant $C_{1}$, such that we have\begin{equation}
\begin{aligned}
&\frac{1}{2}\int_{s = s_{0}}(-z)^{2\gamma-\frac{1}{2}}\left(\partial_{z}^{2}r_{\rho}\right)^{2}+\frac{1}{2}q_{k}\int_{\Gamma}\left(\partial_{z}^{2}r_{\rho}\right)^{2}+\left(\rho+q_{k}\left(\frac{3}{4}-\gamma\right)-C_{1}k^{2}\right)\iint_{\mathcal{R}(s_{0})}(-z)^{2\gamma-\frac{1}{2}}\left(\partial_{z}^{2}r_{\rho}\right)^{2}\\\lesssim&
\frac{1}{2}\int_{s = 0}(-z)^{2\gamma-\frac{1}{2}}\left(\partial_{z}^{2}r_{\rho}\right)^{2}dz+\iint_{\mathcal{R}(s_{0})}(-z)^{2\gamma-\frac{1}{2}}F_{k^{2}}(z)\partial_{z}r_{\rho}\partial_{z}^{2}r_{\rho}+(-z)^{2\gamma-\frac{1}{2}}F_{k^{2}}(z)\partial_{s}r_{\rho}\partial_{z}^{2}r_{\rho}\\&+\iint_{\mathcal{R}(s_{0})}(-z)^{2\gamma-\frac{1}{2}}F_{k^{2}}(z)r_{\rho}\partial_{z}^{2}r_{\rho} dsdz+\iint_{\mathcal{R}(s_{0})}(-z)^{2\gamma-\frac{1}{2}}K(z)\partial_{z}\left(\frac{m_{\rho}}{\mr{r}^{2}}\right)\partial_{z}^{2}r_{\rho}\\&+\iint_{\mathcal{R}(s_{0})}(-z)^{2\gamma-\frac{1}{2}}\left(K(z)+F_{k^{2}}(z)\right)\frac{m_{\rho}}{\mr{r}^{2}}\partial_{z}^{2}r_{\rho}.
\end{aligned}
\label{rough version 1 of second order r rho estimate}
\end{equation}
To control the terms with $r_{\rho}$ and its derivatives on the right-hand side of \eqref{rough version 1 of second order r rho estimate}, arguing similarly to the proof of Proposition \ref{prop: second-order energy estimate for psirho} and using \eqref{standard hardy} repeatedly, we have \begin{align}
    \left\vert\iint_{\mathcal{R}(s_{0})}(-z)^{2\gamma-\frac{1}{2}}F_{k^{2}}\partial_{z}r_{\rho}\partial_{z}^{2}r_{\rho}\right\vert\lesssim& k^{2}\int_{\Gamma}(\partial_{z}r_{\rho})^{2}+k^{2}\iint_{\mathcal{R}(s_{0})}(-z)^{2\gamma-\frac{1}{2}}\left(\partial_{z}^{2}r_{\rho}\right)^{2},\\
    \left\vert\iint_{\mathcal{R}(s_{0})}(-z)^{2\gamma-\frac{1}{2}}F_{k^{2}}r_{\rho}\partial_{z}^{2}r_{\rho}\right\vert\lesssim& k^{2}\int_{\Gamma}(\partial_{z}r_{\rho})^{2}+k^{2}\iint_{\mathcal{R}(s_{0})}(-z)^{2\gamma-\frac{1}{2}}\left(\partial_{z}^{2}r_{\rho}\right)^{2},\\
    \left\vert\iint_{\mathcal{R}(s_{0})}(-z)^{2\gamma-\frac{1}{2}}F_{k^{2}}\partial_{s}r_{\rho}\partial_{z}^{2}r_{\rho}\right\vert\lesssim& k^{2}\int_{\Gamma}(\partial_{z}r_{\rho})^{2}+k^{2}\int_{\Gamma}(\partial_{z}\Psi_{\rho})^{2}+k^{2}\iint_{\mathcal{R}(s_{0})}(-z)^{2\gamma-\frac{1}{2}}\left(\partial_{z}^{2}r_{\rho}\right)^{2}\nonumber\\&+k^{2}\iint_{\mathcal{R}(s_{0})}(-z)^{2\gamma+\frac{1}{2}}\left(\partial_{z}^{2}\Psi_{\rho}\right)^{2}.
\end{align} 
To control the terms with $\frac{m_{\rho}}{\mr{r}^{2}}$ and its derivatives on the right-hand side of \eqref{rough version 1 of second order r rho estimate}, using Propositions \eqref{prop: second order estimate of mrho} and \eqref{prop: true second-order energy for mrho}, we have \begin{align}
    \left\vert\iint_{\mathcal{R}(s_{0})}(-z)^{2\gamma-\frac{1}{2}}K\partial_{z}\left(\frac{m_{\rho}}{\mr{r}^{2}}\right)\partial_{z}^{2}r_{\rho}\right\vert\lesssim& k^{3}\int_{\Gamma}(\partial_{z}r_{\rho})^{2}+k\int_{\Gamma}(\partial_{z}\Psi_{\rho})^{2}+k\iint_{\mathcal{R}(s_{0})}(-z)^{2\gamma-\frac{1}{2}}\left(\partial_{z}^{2}r_{\rho}\right)^{2}\nonumber\\&+k\iint_{\mathcal{R}(s_{0})}(-z)^{2\gamma+\frac{1}{2}}\left(\partial_{z}^{2}\Psi_{\rho}\right)^{2},\\\left\vert\iint_{\mathcal{R}(s_{0})}(-z)^{2\gamma-\frac{1}{2}}\left(K+F_{k^{2}}\right)\frac{m_{\rho}}{\mr{r}^{2}}\partial_{z}^{2}r_{\rho}\right\vert\lesssim&k^{3}\int_{\Gamma}(\partial_{z}r_{\rho})^{2}+k\int_{\Gamma}(\partial_{z}\Psi_{\rho})^{2}+k\iint_{\mathcal{R}(s_{0})}(-z)^{2\gamma-\frac{1}{2}}(\partial_{z}^{2}r_{\rho})^{2}\\&+k\iint_{\mathcal{R}(s_{0})}(-z)^{2\gamma+\frac{1}{2}}\left(\partial_{z}^{2}\Psi_{\rho}\right)^{2}.
\end{align}
Putting everything together, there exists a constant $C_{1}$, such that we have \begin{equation}
    \begin{aligned}
        &\int_{s = s_{0}}(-z)^{2\gamma-\frac{1}{2}}\left(\partial_{z}^{2}r_{\rho}\right)^{2}+\int_{\Gamma}(\partial_{z}^{2}r_{\rho})+\left(\rho+q_{k}\left(\frac{3}{4}-\gamma\right)-C_{1}k\right)\iint_{\mathcal{R}(s_{0})}(-z)^{2\gamma-\frac{1}{2}}\left(\partial_{z}^{2}r_{\rho}\right)^{2}\\\leq&C_{1}\left(\int_{s = 0}(-z)^{2\gamma-\frac{1}{2}}(\partial_{z}r_{\rho})^{2}+k\int_{\Gamma}(\partial_{z}r_{\rho})^{2}+k\int_{\Gamma}(\partial_{z}\Psi_{\rho})^{2}+k\iint_{\mathcal{R}(s_{0})}(-z)^{2\gamma+\frac{1}{2}}(\partial_{z}^{2}\Psi_{\rho})^{2}\right).
    \end{aligned}
\end{equation}
This concludes the proof.
\end{proof}
Combining Propositions \ref{prop: second-order energy estimate for psirho} and \ref{prop: second-order energy estimate for rrho}, we can immediately get the following energy estimate. \begin{proposition}
\label{prop: total second order energy estimate}
    There exist $k$ sufficiently small, a constant $C_{1}>0$ depending on the background spacetime $(\mr{r},\mr{\phi},\mr{m})$, and a real number $\frac{3}{2}-\alpha<\gamma<\frac{1}{2}$, such that the following second-order energy estimate holds in $\mathcal{R}(s_{0})$: \begin{equation}
        \begin{aligned}
            &\int_{s = s_{0}}(-z)^{2\gamma-\frac{1}{2}}\left(\partial_{z}^{2}r_{\rho}\right)^{2}+(-z)^{2\gamma}\left(\partial_{z}^{2}\Psi_{\rho}\right)^{2}+\int_{\Gamma}(\partial_{z}^{2}r_{\rho})^{2}+\int_{\Gamma}(\partial_{z}^{2}\Psi_{\rho})^{2}\\&+\iint_{\mathcal{R}(s_{0})}\left[\rho+q_{k}\left(\frac{5}{8}-\gamma\right)\right](-z)^{2\gamma-\frac{1}{2}}(\partial_{z}^{2}r_{\rho})^{2}+\left[\rho+q_{k}\left(\frac{1}{2}-\gamma\right)\right](-z)^{2\gamma}(\partial_{z}^{2}\Psi_{\rho})^{2}\\\lesssim&C_{1}\left(\int_{s = 0}(-z)^{2\gamma-\frac{1}{2}}(\partial_{z}^{2}r_{\rho})^{2}+(-z)^{2\gamma}(\partial_{z}^{2}\Psi_{\rho})^{2}+k\int_{\Gamma}(\partial_{z}r_{\rho})^{2}+(\partial_{z}\Psi_{\rho})^{2}\right).
        \end{aligned}
    \end{equation}
\end{proposition}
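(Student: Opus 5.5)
The plan is to add a suitable multiple of the estimate in Proposition~\ref{prop: second-order energy estimate for rrho} to the estimate in Proposition~\ref{prop: second-order energy estimate for psirho}, with the same $\gamma\in(\frac{3}{2}-\alpha,\frac{1}{2})$ in both. The $\Psi_{\rho}$ estimate \eqref{eq: second-order energy estimate for psirho} has the cross term $C_{1}k\iint(-z)^{2\gamma-\frac{1}{2}}(\partial_{z}^{2}r_{\rho})^{2}$ on its right-hand side. The $r_{\rho}$ estimate \eqref{second order energy estimate for the system} has the cross term $C_{1}k\iint(-z)^{2\gamma+\frac{1}{2}}(\partial_{z}^{2}\Psi_{\rho})^{2}$. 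Each of these must be absorbed by a bulk term on the left-hand side of the other estimate.

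\textbf{Step 1 (adding the estimates).} I add \eqref{eq: second-order energy estimate for psirho} to \eqref{second order energy estimate for the system}. The left-hand side then contains
\begin{equation*}
\Bigl(\rho+q_{k}\bigl(\tfrac{3}{4}-\gamma\bigr)-C_{1}k\Bigr)(-z)^{2\gamma-\frac{1}{2}}(\partial_{z}^{2}r_{\rho})^{2}+\Bigl[\Bigl(\rho+q_{k}\bigl(\tfrac{1}{2}-\gamma\bigr)\Bigr)(-z)^{2\gamma}+\tfrac{1}{4}q_{k}(-z)^{2\gamma+\frac{1}{2}}\Bigr](\partial_{z}^{2}\Psi_{\rho})^{2}.
\end{equation*}
The $r$-cross term has coefficient $C_{1}k$. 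Since $k$ is small, it is absorbed by the margin $q_{k}(\frac{3}{4}-\gamma)-q_{k}(\frac{5}{8}-\gamma)=\frac{q_{k}}{8}$ in the $r_{\rho}$ bulk coefficient. This is where the stated coefficient $\frac{5}{8}-\gamma$ comes from.

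\textbf{Step 2 ($\Psi$ cross term).} The $\Psi$-cross term $C_{1}k(-z)^{2\gamma+\frac{1}{2}}(\partial_{z}^{2}\Psi_{\rho})^{2}$ is absorbed by the extra positive bulk term $\frac{1}{4}q_{k}(-z)^{2\gamma+\frac{1}{2}}(\partial_{z}^{2}\Psi_{\rho})^{2}$, which comes from the weight $w$. This is the whole purpose of that multiplier, and it works once $C_{1}k<\frac{1}{8}q_{k}$. After absorbing, the remaining positive term $\frac{1}{8}q_{k}(-z)^{2\gamma+\frac{1}{2}}(\partial_{z}^{2}\Psi_{\rho})^{2}$ can simply be dropped. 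This leaves the $\Psi_{\rho}$ bulk coefficient $\rho+q_{k}(\frac{1}{2}-\gamma)$ as stated.

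\textbf{Step 3 (bookkeeping).} The boundary terms on $\Gamma$ for the first-order quantities have coefficients $C_{1}k^{2}$ and $C_{1}k$, which combine into $C_{1}k\int_{\Gamma}(\partial_{z}r_{\rho})^{2}+(\partial_{z}\Psi_{\rho})^{2}$. The initial data terms are collected as they stand. In the $r_{\rho}$ estimate the initial term should read $(\partial_{z}^{2}r_{\rho})^{2}$, consistent with its proof. Finally, the constant $C_{1}$ is enlarged so that it covers both propositions.

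\textbf{Main obstacle.} The main point requiring care is that the choices are compatible. The same $\gamma$ must be admissible in both propositions, and the smallness threshold on $k$ must be uniform in $\rho$ over the range considered. In particular, $C_{1}$ must not depend on $\rho$, since $\rho$ only enters through the explicit bulk coefficients. Given the earlier propositions, no new estimate is needed.
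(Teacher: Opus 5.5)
Your proposal is correct and is exactly the paper's argument. The paper obtains this proposition by adding Propositions \ref{prop: second-order energy estimate for psirho} and \ref{prop: second-order energy estimate for rrho}, and your bookkeeping is what that one-line combination tacitly uses: the $r_{\rho}$ cross term is absorbed into the $\frac{q_{k}}{8}$ margin, the $\Psi_{\rho}$ cross term into the extra $\frac{1}{4}q_{k}(-z)^{2\gamma+\frac{1}{2}}$ bulk term from the weight $w$, and the initial-data term is corrected to $(\partial_{z}^{2}r_{\rho})^{2}$. One small point: the $r_{\rho}$ bulk coefficient already carries $-C_{1}k$, so you actually need $2C_{1}k\leq\frac{q_{k}}{8}$, which again follows from smallness of $k$.
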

\subsection{Proof of Theorem \ref{thm:a priori energy estimate}}
In this section, we conclude the proof of Theorem \ref{thm:a priori energy estimate}.
\begin{proof}
To show the boundedness of $\partial_{z}r_{\rho}$ and $\partial_{z}\Psi_{\rho}$, by Sobolev embedding, it suffices to prove the $L^{2}$-boundedness of $\partial_{z}r_{\rho}$, $\partial_{z}\Psi_{\rho}$, $\partial_{z}^{2}r_{\rho}$, and $\partial_{z}^{2}\Psi_{\rho}$. First, we refine the inequality \eqref{eq: fundamental theorem of calculus}. For a function $f\in C^{1}([-1,0])$, using integration by parts, for any $a\in(0,1)$, we have 
    \begin{align*}
        \int_{-1}^{0}f^{2}(z)dz =& f^{2}(-1)+2\int_{-1}^{0}(-z)f(z)f^{\prime}(z)dz\\\leq&f^{2}(-1)+a\int_{-1}^{0}f^{2}(z)dz+\frac{1}{a}\int_{-1}^{0}(-z)^{2}\left(\frac{df}{dz}\right)^{2}.
    \end{align*}
Hence, we have \begin{equation}
    \int_{-1}^{0}f^{2}(z)dz\leq \frac{1}{1-a}f^{2}(-1)+\frac{1}{a(1-a)}\int_{-1}^{0}(-z)^{2}\left(\frac{df}{dz}\right)^{2}dz.
\end{equation}
Taking $\gamma = \frac{3}{2}-\alpha+\delta$ for $0<\delta\ll 1$ and using the second order energy estimate \eqref{second order energy estimate for the system} to control the bulk term on the right-hand side of the first-order energy estimate \eqref{first order energy estimate for the system}, we have\begin{equation}
\begin{aligned}
&\left(\frac{1}{2}-C_{0}k^{2}\right)\int_{s = s_{0}}(\partial_{z}r_{\rho})^{2}+(\partial_{z}\Psi_{\rho})^{2}+\frac{1}{2}q_{k}\int_{\Gamma}(\partial_{z}r_{\rho})^{2}+(\partial_{z}\Psi_{\rho})^{2}\\\leq&\left(\frac{1}{2}+C_{0}k^{2}\right)\int_{s =0}(\partial_{z}r_{\rho})^{2}+(\partial_{z}\Psi_{\rho})^{2}+\frac{\left(\frac{1}{2}q_{k}-\rho+C_{0}k\right)}{1-a}\int_{\Gamma}(\partial_{z}r_{\rho})^{2}+(\partial_{z}\Psi_{\rho})^{2}\\&+\left(\frac{\frac{1}{2}q_{k}-\rho+C_{0}k}{a(1-a)}+C_{0}k\right)\iint_{\mathcal{R}(s_{0})}(-z)^{2\gamma-\frac{1}{2}}(\partial_{z}^{2}r_{\rho})^{2}+(-z)^{2\gamma}\left(\partial_{z}^{2}\Psi_{\rho}\right)^{2}\\\leq&\left(\frac{1}{2}+C_{0}k^{2}\right)\int_{s = 0}(\partial_{z}r_{\rho})^{2}+(\partial_{z}\Psi_{\rho})^{2}+\frac{\frac{1}{2}q_{k}-\rho+C_{0}k}{1-a}\int_{\Gamma}(\partial_{z}r_{\rho})^{2}+(\partial_{z}\Psi_{\rho})^{2}\\&+\left(\frac{\frac{1}{2}q_{k}-\rho+C_{0}k}{a(1-a)}+C_{0}k\right)\frac{C_{1}k}{\rho+q_{k}\left(\frac{1}{2}-\gamma\right)}\int_{\Gamma}(\partial_{z}r_{\rho})^{2}+(\partial_{z}\Psi_{\rho})^{2}\\&+\left(\frac{\frac{1}{2}q_{k}-\rho+C_{0}k}{a(1-a)}+C_{0}k\right)\frac{C_{1}}{\rho+q_{k}\left(\frac{1}{2}-\gamma\right)}\int_{s = 0}(-z)^{2\gamma-\frac{1}{2}}(\partial_{z}^{2}r_{\rho})^{2}+(-z)^{2\gamma}(\partial_{z}^{2}\Psi_{\rho})^{2}.
\end{aligned}
\end{equation}
We can choose $\delta = k^{\frac{1}{100}}$, $\rho = k^{\frac{1}{8}}$, and $a = k^{\frac{1}{4}}$. Then, for $k$ sufficiently small independent of the choice of $\alpha$, we have 
\begin{equation}
    \frac{\frac{1}{2}q_{k}-\rho+C_{0}k}{1-a}+\left(\frac{\frac{1}{2}q_{k}-\rho+C_{0}k}{a(1-a)}+C_{0}k\right)\frac{C_{1}k}{\rho+q_{k}\left(\frac{1}{2}-\gamma\right)}<\frac{1}{2}q_{k}.\label{eq: the reason why we could not get sharp decay}
\end{equation}
Hence, we have \begin{equation*}
    \left\Vert\partial_{z}r_{\rho}(s,\cdot)\right\Vert_{L_{z}^{2}},\ \left\Vert\partial_{z}\Psi_{\rho}(s,\cdot)\right\Vert_{L^{2}_{z}},\ \left\Vert(-z)^{\gamma-\frac{1}{4}}\partial_{z}^{2}r_{\rho}(s,\cdot)\right\Vert_{L^{2}_{z}},\ \left\Vert(-z)^{\gamma}\partial_{z}^{2}\Psi_{\rho}(s,\cdot)\right\Vert_{L^{2}_{z}}
\end{equation*}
are uniformly bounded for any $s\geq 0$.

Hence, in the near-axis region $-1\leq z\leq-\frac{1}{2}$, by Sobolev embedding, we have\begin{equation}
\begin{aligned}
&\sup_{z\in[-1,-\frac{1}{2}]}(\partial_{z}r_{\rho})^{2}(s_{0},z)+(\partial_{z}\Psi_{\rho})^{2}(s_{0},z)\\\leq&\int_{-1}^{-\frac{1}{2}}(\partial_{z}r_{\rho})^{2}(s_{0},z)+(\partial_{z}\Psi_{\rho})^{2}(s_{0},z)+(\partial_{z}^{2}r_{\rho})^{2}(s_{0},z)+(-z)^{2\gamma}(\partial_{z}^{2}\Psi_{\rho})^{2}(s_{0},z)dz\\\lesssim&_{k} C,
\end{aligned}
\end{equation}
where $C$ depends on the initial data. For the near-horizon region $-\frac{1}{2}\leq z\leq0$, we have\begin{equation}
\begin{aligned}
&\sup_{-\frac{1}{2}\leq z\leq0} \vert\partial_{z}r_{\rho}\vert(s,z)+\vert\partial_{z}\Psi_{\rho}\vert(s,z)\\\leq&
\vert\partial_{z}r_{\rho}\vert\left(s,-\frac{1}{2}\right)+\vert\partial_{z}\Psi_{\rho}\vert\left(s,-\frac{1}{2}\right)+\int_{-\frac{1}{2}}^{0}\vert\partial_{z}r_{\rho}\vert(s,z)+\vert\partial_{z}\Psi_{\rho}\vert(s,z)dz\\\lesssim&
\vert\partial_{z}r_{\rho}\vert\left(s,-\frac{1}{2}\right)+\vert\partial_{z}\Psi_{\rho}\vert\left(s,-\frac{1}{2}\right)+\left(\int_{-\frac{1}{2}}^{0}(\partial_{z}^{2}r_{\rho})^{2}dz\right)^{\frac{1}{2}}\\&+\left(\int_{-\frac{1}{2}}^{0}(-z)^{-2\gamma}dz\right)^{\frac{1}{2}}\left(\int_{-\frac{1}{2}}^{0}(-z)^{2\gamma}(\partial_{z}^{2}\Psi_{\rho})^{2}dz\right)^{\frac{1}{2}}\\\lesssim&_{k} C,
\end{aligned}
\end{equation}
where $C$ depends on the initial data. Hence we have\begin{equation}
\Vert\partial_{z}r_{\rho}(s,\cdot)\Vert_{L^{\infty}([-1,0])}+\Vert\partial_{z}\Psi_{\rho}(s,\cdot)\Vert_{L^{\infty}([-1,0])}\lesssim_{k} C.
\end{equation}
Using the boundary condition $r_{\rho}(s,-1) = 0$ and $\Psi_{\rho}(s,-1) = 0$, and integrating from the axis $\Gamma$, we have\begin{equation}
\Vert r_{\rho}(s,\cdot)\Vert_{L^{\infty}([-1,0])}+\Vert\Psi_{\rho}(s,\cdot)\Vert_{L^{\infty}([-1,0])}\lesssim_{k} C.
\end{equation}

Next, we bound $\frac{m_{\rho}}{\mr{r}^{2}}$. By Sobolev embedding, Proposition \ref{prop: second order estimate of mrho}, and Proposition \ref{prop: true second-order energy for mrho}, we have\begin{equation}
\begin{aligned}
\sup_{z\in[-1,0]}\left(\frac{m_{\rho}}{\mr{r}^{2}}\right)^{2}(s_{0},z)\leq&\int_{-1}^{0}\left(\frac{m_{\rho}}{\mr{r}^{2}}\right)^{2}(s_{0},z)dz+\int_{-1}^{0}\left(\partial_{z}\left(\frac{m_{\rho}}{\mr{r}^{2}}\right)\right)^{2}(s_{0},z)dz\\\lesssim&k^{4}\left(\int_{\Gamma}(\partial_{z}r_{\rho})^{2}+(\partial_{z}^{2}r_{\rho})^{2}ds+\int_{s = s_{0}}(\partial_{z}^{2}r_{\rho})^{2}dz\right)\\&+k^{2}\left(\int_{\Gamma}(\partial_{z}\Psi_{\rho})^{2}+(\partial_{z}^{2}\Psi_{\rho})^{2}ds+\int_{s = s_{0}}(-z)^{2\gamma}(\partial_{z}^{2}\Psi_{\rho})^{2}dz\right)\\\lesssim&_{k} C.
\end{aligned}
\end{equation}
To estimate $\partial_{s}r_{\rho}$, we use the equation \eqref{eq:wave equation for r rho}:\begin{equation} \partial_{z}\partial_{s}r_{\rho}+G_{k^{2}}(z)\partial_{s}r_{\rho} = q_{k}(-z)\partial_{z}^{2}r_{\rho}+(-\rho+G_{k^{2}}(z))\partial_{z}r_{\rho}+G_{k^{2}}(z)r_{\rho}+K(z)\frac{m_{\rho}}{\mr{r}^2}.
\end{equation}
Using the boundary condition $\partial_{s}r_{\rho}(s,-1) = 0$ and integrating the above equation from the boundary, we have\begin{equation}
\Vert\partial_{s}r_{\rho}(s,\cdot)\Vert_{L^{\infty}([-1,0])}\lesssim_{k} C.
\end{equation}
Similarly, we have\begin{equation}
\Vert \partial_{s}\Psi_{\rho}(s,\cdot)\Vert_{L^{\infty}([-1,0])}\lesssim_{k} C.
\end{equation}
It remains to estimate $\partial_{z}m_{\rho}$ and $\partial_{s}m_{\rho}$. Using the equation \eqref{eq:z-transport eq for mrho}, we have\begin{equation}
\partial_{z}m_{\rho} = -\left(\frac{\mr{r}}{\partial_{z}\mr{r}}(\partial_{z}\mr{\phi})^{2}\right)m_{\rho}+F_{k}(z)r_{\rho}+F_{k}(z)\partial_{z}r_{\rho}+F_{k}(z)\partial_{z}\Psi_{\rho}+F_{k}(z)\Psi_{\rho}.
\end{equation}
Then we have\begin{equation}
\Vert \partial_{z}m_{\rho}(s,\cdot)\Vert_{L^{\infty}([-1,0])}\lesssim_{k}C
\end{equation}
Using the equation \eqref{eq:u transport eq for m under self-similar coordinates}, we have\begin{equation}
\Vert \partial_{s}m_{\rho}(s,\cdot)\Vert_{L^{\infty}([-1,0])}\lesssim_{k} C.
\end{equation}
Lastly, we use the relation \begin{equation}
\partial_{u} = e^{s}\left(\partial_{s}+q_{k}z\partial_{z}\right),\quad \partial_{v} = e^{q_{k}s}\partial_{z}
\end{equation}
to convert the bounds under self-similar coordinates to the bounds under double-null coordinates. Then we can conclude the proof of Theorem \ref{thm:a priori energy estimate}.
\end{proof}
\subsection{Some further remarks}
As one can see from the previous sections, by our carefully designed second-order energy estimates, we can have positive bulk terms for any $\rho>-q_{k}(\alpha-1)$. In particular, when $\alpha>p_{k}$, this allows the value of $\rho$ to be smaller than the self-similar value $\rho = -k^{2}$. However, to prove the decay estimate, due to the presence of the boundary terms $\int_{\Gamma}(\partial_{z}r_{\rho})^{2}$ and $\int_{\Gamma}(\partial_{z}\Psi_{\rho})^{2}$, the inequality \eqref{eq: the reason why we could not get sharp decay} essentially requires that the value of $\rho\geq0$. Nonetheless, by the observation we made in Remark \ref{rmk: observation why we could not get sharp decay from energy} and the second-order energy estimates, we have the following proposition:
\begin{proposition}
\label{prop: key to close the linearized result}
    Fix $k\neq0$ to be sufficiently small and assume $\alpha\in(p_{k},\frac{3}{2})$. If there exist constants $c_{\infty}$ and $C$, and a real number $\alpha^{\prime}\in(p_{k},\alpha)$, such that the following estimates hold on the axis:
    \begin{equation}
    \begin{aligned}
        \sum_{0\leq i+j\leq 2}\left\vert\partial_{u}^{i}\partial_{v}^{j}r_{p}\right\vert\Bigl|_{\Gamma}\leq& C(-u)^{\alpha^{\prime}q_{k}-i-q_{k}j},\\ \sum_{0\leq i+j\leq2}\left\vert\partial_{u}^{i}\partial_{v}^{j}\left(\Psi_{p}-c_{\infty}r_{k}\right)\right\vert\Bigl|_{\Gamma}\leq& C(-u)^{\alpha^{\prime}q_{k}-i-q_{k}j},\\\sum_{0\leq i+j\leq 1}\left\vert\partial_{u}^{i}\partial_{v}^{j}m_{p}\right\vert\Bigl|_{\Gamma}\leq& C(-u)^{\alpha^{\prime}q_{k}-i-q_{k}j},
    \end{aligned}
    \label{eq: good assumption on the axis}
    \end{equation}
    then \eqref{eq: good assumption on the axis} also holds in the whole region of the interior $\mathcal{Q}^{(in)}$ with $\alpha^{\prime}$ replaced by any $\widetilde{\alpha}<\alpha^{\prime}$.
\end{proposition}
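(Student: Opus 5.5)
The plan is to rerun the multiplier estimates of Section~\ref{sec:multiplier_estimates} for the renormalized quantities, now at a rate faster than self-similar. The axis bounds will play the role of the boundary terms that obstructed closing at $\rho<0$ in \eqref{eq: the reason why we could not get sharp decay}.

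\textbf{Step 1: subtract the translation mode.} Set $\widetilde{\Psi}_{p}:=\Psi_{p}-c_{\infty}r_{k}$. By Remark~\ref{rmk: observation why we could not get sharp decay from energy}, $(r_{p},\widetilde{\Psi}_{p},m_{p})$ solves $\mathcal{P}=0$ with initial data in the same class $\mathcal{C}^{\beta}_{N}\times\mathcal{C}^{\alpha}_{N}$, since $r_{k}\in C^{2}$. I then define $(r_{\rho},\widetilde{\Psi}_{\rho},m_{\rho})$ by \eqref{eq: def of psirho} with the target value
\[
\rho=q_{k}-\widetilde{\alpha}q_{k}+\text{(small)},
\]
chosen so that $e^{(q_{k}-\rho)s}$ times the axis rate $e^{-\alpha' q_{k}s}$ still decays like $e^{-(\alpha'-\widetilde\alpha)q_{k}s}$. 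In self-similar coordinates the hypothesis \eqref{eq: good assumption on the axis} becomes
\[
|\partial_{z}^{j}\partial_{s}^{i}r_{\rho}|_{\Gamma}+|\partial_{z}^{j}\partial_{s}^{i}\widetilde{\Psi}_{\rho}|_{\Gamma}\lesssim e^{-(\alpha'-\widetilde{\alpha})q_{k}s}.
\]
Consequently $\int_{\Gamma}(\partial_{z}r_{\rho})^{2}+(\partial_{z}\widetilde\Psi_{\rho})^{2}+(\partial_{z}^{2}r_{\rho})^{2}+(\partial_{z}^{2}\widetilde\Psi_{\rho})^{2}\,ds$ is finite and bounded by $C^{2}$, uniformly in $s_{0}$.

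\textbf{Step 2: close the energy estimates with the axis terms treated as data.} The point is that the first-order estimate \eqref{first order energy estimate for the system} and the second-order estimate of Proposition~\ref{prop: total second order energy estimate} hold for every $\rho$. The only reason they failed for $\rho<0$ was that the axis flux on the right-hand side had to be absorbed by the left-hand side. Here that flux is a priori bounded, so it simply moves to the data side.

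First apply Proposition~\ref{prop: total second order energy estimate} with $\gamma=\frac{3}{2}-\alpha+\delta$. Since $\widetilde\alpha<\alpha'<\alpha$, one checks that $\rho+q_{k}(\frac12-\gamma)=q_{k}(\alpha-\widetilde\alpha)-q_{k}\delta-C_{1}k>0$ for $\delta,k$ small, where $\widetilde\alpha$ is fixed first. This gives uniform bounds on the second-order flux and on the bulk in terms of the initial data and $C$.

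Next, the bulk term $(\frac12q_{k}-\rho+C_{0}k)\iint(\partial_{z}r_{\rho})^{2}+\dots$ in \eqref{first order energy estimate for the system} is bounded by the refined fundamental theorem of calculus from the proof of Theorem~\ref{thm:a priori energy estimate}. That inequality controls it by the axis flux, which is known, plus the second-order bulk, which was just controlled. The sign of $\frac12q_{k}-\rho$ is then irrelevant.

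This yields uniform bounds on $\|\partial_{z}r_{\rho}\|_{L^{2}}$, $\|\partial_{z}\widetilde\Psi_{\rho}\|_{L^{2}}$ and the weighted second-order norms. The Sobolev/integration argument from the proof of Theorem~\ref{thm:a priori energy estimate} then gives $L^{\infty}$ bounds on $r_{\rho},\widetilde\Psi_{\rho},m_{\rho}$ and their first $s,z$ derivatives. For $m_{\rho}$ this uses Propositions~\ref{prop: second order estimate of mrho} and \ref{prop: true second-order energy for mrho}. For $\partial_{s}$ derivatives it uses integration of \eqref{eq:wave equation for r rho} from $\Gamma$, with the axis data now nonzero but bounded, and \eqref{eq:u transport eq for m under self-similar coordinates} for $\partial_{s}m$.

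\textbf{Step 3: second derivatives.} For the $i+j=2$ part of the conclusion, commute once more with $\partial_{s}$, a good derivative in the sense of Observation 1. The equations for $\partial_{s}r_{\rho},\partial_{s}\widetilde\Psi_{\rho}$ have the same schematic form, and the axis hypothesis covers second derivatives, so the same argument applies. Converting with $\partial_{u}=e^{s}(\partial_{s}+q_{k}z\partial_{z})$ and $\partial_{v}=e^{q_{k}s}\partial_{z}$ gives the rates $(-u)^{\widetilde\alpha q_{k}-i-q_{k}j}$.

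\textbf{Main obstacle.} I expect the main obstacle to be Step 2: the error terms in Propositions~\ref{prop: second-order energy estimate for psirho}--\ref{prop: second-order energy estimate for rrho} must still be absorbable when $\rho$ is negative of size $O(1)$ rather than $k^{1/8}$. The positivity margin is now only $q_{k}(\alpha-\widetilde\alpha)$, and the errors carry factors $Ck/(\rho+q_{k}(\frac12-\gamma))$. I would first fix $\widetilde\alpha<\alpha'$ and then take $k$ small depending on $\alpha-\widetilde\alpha$; the resulting loss from $\alpha'$ to $\widetilde\alpha$ is exactly what the statement permits.
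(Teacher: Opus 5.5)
Your Steps 1 and 2 are essentially the paper's proof. You subtract the translation mode $c_{\infty}r_{k}$ and observe that the axis fluxes $\int_{\Gamma}(\partial_{z}r_{\rho})^{2}$, $\int_{\Gamma}(\partial_{z}\widetilde{\Psi}_{\rho})^{2}$ are bounded once $\rho>(1-\alpha^{\prime})q_{k}$. You then apply Proposition~\ref{prop: total second order energy estimate} with these fluxes moved to the data side, and close with the first-order estimate, the fundamental theorem of calculus and the Sobolev argument of Theorem~\ref{thm:a priori energy estimate}.

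One correction to your ``main obstacle''. Proposition~\ref{prop: total second order energy estimate} already has bulk coefficients $\rho+q_{k}(\frac{1}{2}-\gamma)$ and $\rho+q_{k}(\frac{5}{8}-\gamma)$ with no $-C_{1}k$ loss. The $O(k)$ errors are absorbed by the degenerate term $\frac{1}{4}q_{k}(-z)^{2\gamma+\frac{1}{2}}$ coming from the weight $w(z)$, and by the extra margin $\frac{3}{4}-\gamma$ in the $r_{\rho}$ estimate. With $\gamma=\frac{3}{2}-\alpha+\delta$ and $\rho=q_{k}(1-\widetilde{\alpha})$ you therefore only need $\delta<\alpha-\widetilde{\alpha}$. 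Your fallback of taking $k$ small depending on $\alpha-\widetilde{\alpha}$ would not be legitimate: $k$ is fixed before $\alpha$, and $\alpha-p_{k}$ may be much smaller than $k$.

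The genuine gap is Step 3. Commuting with $\partial_{s}$ is legitimate, since the system is autonomous in $s$. Together with the wave equations solved for $q_{k}z\partial_{z}^{2}$, it plausibly controls $\partial_{s}^{2}$, $\partial_{s}\partial_{z}$ and $z\partial_{z}^{2}$, hence $\partial_{u}^{2}$ and $\partial_{u}\partial_{v}$. It gives no pointwise control of $\partial_{z}^{2}$ itself, i.e.\ of $\partial_{v}^{2}=e^{2q_{k}s}\partial_{z}^{2}$.

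No argument can supply this. Already on $\Sigma_{-1}^{(in)}$, generic data in $\mathcal{C}^{\beta}_{N}\times\mathcal{C}^{\alpha}_{N}$ have $\partial_{z}^{2}r_{p}^{0}\sim|z|^{\beta-2}$ and $\partial_{z}^{2}\Psi_{p}^{0}\sim|z|^{\alpha-2}$. These are unbounded as $z\to0$ because $\alpha,\beta<2$. So the $j=2$ part of the conclusion cannot hold uniformly up to the singular horizon.

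In the whole interior the energy method yields pointwise bounds for $i+j\leq1$, together with the weighted bounds on $\|(-z)^{\gamma-\frac{1}{4}}\partial_{z}^{2}r_{\rho}\|_{L^{2}_{z}}$ and $\|(-z)^{\gamma}\partial_{z}^{2}\widetilde{\Psi}_{\rho}\|_{L^{2}_{z}}$. Pointwise $\partial_{v}^{2}$ bounds are only available away from $z=0$, for instance in the near-axis region of Proposition~\ref{prop: near axis decay}. This is exactly what the paper's proof delivers (``concluded by the first-order energy estimate and the fundamental theorem of calculus''). It is also all that Theorem~\ref{thm: linearized result} uses, since that theorem's conclusion only involves $0\leq i+j\leq1$.

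You should restrict the whole-region second-order claim accordingly: keep the $\partial_{u}^{2}$ and $\partial_{u}\partial_{v}$ bounds if you carry out the commutation, or state the second-order information in weighted form. Do not assert unweighted $\partial_{v}^{2}$ bounds.
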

\begin{proof}
By the translation invariance of the linearized Einstein-scalar field equations, we have that $$\left(r_{p},\widetilde{\Psi}_{p},m_{p}\right): = (r_{p},\Psi_{p}-c_{\infty}r_{k},m_{p})$$ also solves the linearized equation \eqref{eq:linearized r equation}-\eqref{eq:linearized dvm equation}. 

Under the self-similar coordinates $(s,z)$, the estimates \eqref{eq: good assumption on the axis} can be written as \begin{align*}
    \sum_{0\leq i+j\leq 1}\left\vert\partial_{s}^{i}\partial_{z}^{j}r_{p}\right\vert\Bigl|_{\Gamma}\leq& Ce^{-\alpha^{\prime}q_{k}s},\\\sum_{0\leq i+j\leq 1}\left\vert\partial_{s}^{i}\partial_{z}^{j}\left(\Psi_{p}-c_{\infty}r_{k}\right)\right\vert\Bigl|_{\Gamma}\leq &Ce^{-\alpha^{\prime}q_{k}s},\\\sum_{0\leq i+j\leq 1}\left\vert\partial_{s}^{i}\partial_{z}^{j}m_{p}\right\vert\Bigl|_{\Gamma}\leq& Ce^{-\alpha^{\prime}q_{k}s}.
\end{align*}
Hence, we have the boundedness of $
    \int_{\Gamma}(\partial_{z}r_{\rho})^{2}$ and $ \int_{\Gamma}(\partial_{z}\widetilde{\Psi}_{\rho})^{2}$ for $\rho>(1-\alpha^{\prime})q_{k}$.

Applying Proposition \ref{prop: total second order energy estimate} to $(r_{\rho},\widetilde{\Psi}_{\rho},m_{\rho})$ and using the assumption \eqref{eq: good assumption on the axis} to control the boundary terms on the right-hand side for $\rho>-q_{k}(\alpha^{\prime}-1)$, we have the weighted $L^{2}$-boundedness of $\partial_{z}^{2}r_{\rho}$ and $\partial_{z}^{2}\widetilde{\Psi}_{\rho}$ for any $\rho>-q_{k}(\alpha-1)$. The proof can be concluded by the first-order energy estimate and the fundamental theorem of calculus.
\end{proof}
The remaining part of the linear analysis in this paper will be devoted to the proof of \eqref{eq: good assumption on the axis}.

\section{Scattering theory for the linearized system on the naked singularity background}
\label{sec: scattering theory for the linearized system}
In Section \ref{leading order expansion}, we shall take the Fourier--Laplace transform of $(r_{p},\Psi_{p},m_{p})$ under the self-similar:\begin{equation}
\hat{r}_{p}(\sigma,z):=\int_{0}^{\infty}e^{-\sigma s}r_{p}(s,z)ds,\quad \hat{\Psi}_{p}(\sigma,z): = \int_{0}^{\infty}e^{-\sigma s}\Psi_{p}(s,z)ds,\quad \hat{m}_{p}(\sigma,z): = \int_{0}^{\infty}e^{-\sigma s}m_{p}(s,z)ds.
\label{eq: taking the fourier-laplace transform}
\end{equation}
This motivates the study of the following equations:
\begin{align}
&q_{k}z\partial_{z}^{2}\hat r_{p}+\left(q_{k}+\sigma-\frac{\mu_{k}}{1-\mu_{k}}\frac{\partial_{s}r_{k}+2q_{k}z\partial_{z}r_{k}}{r_{k}}\right)\partial_{z}\hat{r}_{p}+\left(-\frac{\mu_{k}}{1-\mu_{k}}\frac{\partial_{z}r_{k}}{r_{k}}\sigma+\frac{\mu_{k}(2-\mu_{k})}{(1-\mu_{k})^{2}}\frac{(\partial_{s}r_{k}+q_{k}z\partial_{z}r_{k})\partial_{z}r_{k}}{r_{k}^{2}}\right)\hat{r}_{p}\nonumber\\=&\frac{2}{(1-\mu_{k})^{2}}\frac{(\partial_{s}r_{k}+q_{k}z\partial_{z}r_{k})\partial_{z}r_{k}}{r_{k}^{2}}\hat{m}_{p},
\label{eq:scattering equation for rp}\allowdisplaybreaks\\[2em]
&q_{k}z\partial_{z}^{2}\hat{\Psi}_{p}+(q_{k}+\sigma)\partial_{z}\hat{\Psi}_{p}-\frac{\mu_{k}}{1-\mu_{k}}\frac{(\partial_{s}r_{k}+q_{k}z\partial_{z}r_{k})\partial_{z}r_{k}}{r_{k}^{2}}\hat{\Psi}_{p}-\left(\frac{1}{r_{k}^{2}}\frac{\mu_{k}}{1-\mu_{k}}\Psi_{k}(\partial_{s}r_{k}+q_{k}z\partial_{z}r_{k})+k\right)\partial_{z}\hat{r}_{p}\nonumber\\=&-\left(\frac{\mu_{k}(3-2\mu_{k})}{r_{k}^{3}(1-\mu_{k})^{2}}(\partial_{s}r_{k}+q_{k}z\partial_{z}r_{k})\partial_{z}r_{k}-\frac{\partial_{z}r_{k}}{r_{k}^{2}}\frac{\mu_{k}}{1-\mu_{k}}\sigma\right)\Psi_{k}\hat{r}_{p}+\frac{2\Psi_{k}(\partial_{s}r_{k}+q_{k}z\partial_{z}r_{k})\partial_{z}r_{k}}{r_{k}^{3}(1-\mu_{k})^{2}}\hat{m}_{p}
\label{eq:scattering equation for psip}\allowdisplaybreaks\\[2em]
&\sigma \hat{m}_{p}+\left(\frac{r_{k}}{\partial_{s}r_{k}+q_{k}z\partial_{z}r_{k}}\left((\partial_{s}+q_{k}z\partial_{z})\phi_{k}\right)^{2}-q_{k}z\frac{r_{k}}{\partial_{z}r_{k}}(\partial_{z}\phi_{k})^{2}\right)\hat{m}_{p}\nonumber\\=&(1-\mu_{k})\frac{r_{k}}{(\partial_{s}+q_{k}z\partial_{z})r_{k}}(\partial_{s}+q_{k}z\partial_{z})\phi_{k}(\sigma+q_{k}z\partial_{z})\hat\Psi_{p}-q_{k}z(1-\mu_{k})\frac{r_{k}}{\partial_{z}r_{k}}\partial_{z}\phi_{k}\partial_{z}\hat\Psi_{p}\nonumber\\&-(1-\mu_{k})(\partial_{s}+q_{k}z\partial_{z})\phi_{k}\hat\Psi_{p}+(1-\mu_{k})q_{k}z\partial_{z}\phi_{k}\hat\Psi_{p}\nonumber\\&-(1-\mu_{k})\left(\frac{1}{2}r_{k}^{2}\left(\frac{(\partial_{s}+q_{k}z\partial_{z})\phi_{k}}{(\partial_{s}+q_{k}z\partial_{z})r_{k}}\right)^{2}+\frac{r_{k}(\partial_{s}+q_{k}z\partial_{z})\phi_{k}}{(\partial_{s}+q_{k}z\partial_{z})r_{k}}\mr{\phi}\right)(\sigma+q_{k}z\partial_{z})\hat r_{p}\nonumber\\&+q_{k}z\left(\frac{1}{2}\left(\frac{r_{k}}{\partial_{z}r_{k}}\right)^{2}(\partial_{z}\phi_{k})^{2}+\frac{r_{k}}{\partial_{z}r_{k}}\partial_{z}\phi_{k}\mr{\phi}\right)(1-\mu_{k})\partial_{z}\hat{r}_{p}\nonumber\\&+\left(\frac{r_{k}\mu_{k}}{2}\frac{((\partial_{s}+q_{k}z\partial_{z})\phi_{k})^{2}}{(\partial_{s}+q_{k}z\partial_{z})r_{k}}+(1-\mu_{k})\mr{\phi}+(1-\mu_{k})\frac{kr_{k}(\partial_{s}+q_{k}z\partial_{z})\phi_{k}}{(\partial_{s}+q_{k}z\partial_{z})r_{k}}\right)\hat{r}_{p}\nonumber\\&-q_{k}z\left(\frac{1}{2}\frac{r_{k}}{\partial_{z}r_{k}}(\partial_{z}\phi_{k})^{2}\mu_{k}+(1-\mu_{k})\partial_{z}\phi_{k}\mr{\phi}\right)\hat{r}_{p}.\label{scattering equation for mp}
\end{align}
\begin{remark}
The above equations can be obtained informally by taking the Fourier--Laplace transform \eqref{eq: taking the fourier-laplace transform} of equations \eqref{eq:wave eq for rp under self-similar coordinates}-\eqref{eq:v transport eq for m under self-similar coordinates} and applying integration by parts with the assumption that $(r_{p},\Psi_{p},m_{p})|_{s\leq0} = 0$. However, since the initial data $(r_{p},\Psi_{p},m_{p})|_{s = 0}$ is indeed non-trivial, in reality, we shall commute the equations \eqref{eq:wave eq for rp under self-similar coordinates}-\eqref{eq:v transport eq for m under self-similar coordinates} with a cut-off function supported away from $s = 0$ first. This procedure will generate some inhomogeneous terms on the right-hand side of \eqref{eq:scattering equation for rp}-\eqref{scattering equation for mp}. The corresponding equations will be studied in Section \ref{leading order expansion}. In this section, we will only focus on the linear homogeneous equations \eqref{eq:scattering equation for rp}-\eqref{scattering equation for mp}.
\end{remark}
 Since the last equation \eqref{scattering equation for mp} for $\hat{m}_{p}$ can be viewed as an algebraic equation, we can substitute $\hat{m}_{p}$ into the first two equations \eqref{eq:scattering equation for rp}-\eqref{eq:scattering equation for psip}. Schematically, the above linearized system for $(\hat{r}_{p},\hat{\Psi}_{p},\hat{m}_{p})$ can be written as \begin{equation}
\mathcal{L}(\hat{r}_{p},\hat{\Psi}_{p}) = 0,
\end{equation}
where $\mathcal{L}$ is a linear differential operator.

In the following subsections, we will establish the scattering theory for the linear operator $\mathcal{L}(\hat{r}_{p},\hat{\Psi}_{p})$.
\subsection{Preliminary estimates for $\mathcal{L}$}
In this section, we will focus on the estimates for the coefficients of the linear operator $\mathcal{L}(\hat{r}_{p},\hat{\Psi}_{p})$.
\begin{proposition}
Let:$$
V(z): =-\left(\frac{r_{k}}{\partial_{s}r_{k}+q_{k}z\partial_{z}r_{k}}\left((\partial_{s}+q_{k}z\partial_{z})\phi_{k}\right)^{2}-q_{k}z\frac{r_{k}}{\partial_{z}r_{k}}(\partial_{z}\phi_{k})^{2}\right).$$Then $V(z)>0$ for all $z\in[-1,0]$. Moreover, we have the following estimate for the size of $V(z)$:\begin{equation}
0\leq V(z)<O(k^{2}).
\end{equation}
\end{proposition}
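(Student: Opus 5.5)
The plan is to rewrite $V$ in gauge-invariant double-null quantities, where the $O(k^2)$ upper bound becomes an algebraic identity and only the lower bound needs real work.

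\textbf{Rewriting $V$.} Since $\partial_{s}+q_{k}z\partial_{z}=(-u)\partial_{u}$ and $q_{k}z\partial_{z}=q_{k}v\partial_{v}$, the two terms in $V$ are
\begin{equation*}
    V=\frac{(-u)\,r_{k}(\partial_{u}\phi_{k})^{2}}{(-\partial_{u}r_{k})}-\frac{q_{k}(-v)\,r_{k}(\partial_{v}\phi_{k})^{2}}{\partial_{v}r_{k}} .
\end{equation*}
Set $X:=(-u)(-\partial_{u}r_{k})$, $Y:=q_{k}(-v)\partial_{v}r_{k}\geq0$ and $\theta:=r_{k}\partial_{v}\phi_{k}/\partial_{v}r_{k}$; all are functions of $z$ alone. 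The $k$-self-similarity of Proposition~\ref{prop: estimate on the exact k self similar spacetime} gives $\partial_{s}r_{k}=-r_{k}$ and $\partial_{s}\phi_{k}=k$, which read
\begin{equation*}
    X=r_{k}+Y,\qquad (-u)\partial_{u}\phi_{k}=k-\frac{Y\theta}{r_{k}}.
\end{equation*}
Substituting, a short computation (to be rechecked carefully) gives
\begin{equation*}
    V=\frac{r_{k}k^{2}-Y\theta(\theta+2k)}{X}=k^{2}-\frac{Y(\theta+k)^{2}}{X},
\end{equation*}
where the second form uses $\theta(\theta+2k)=(\theta+k)^{2}-k^{2}$ and $r_{k}+Y=X$. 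Since $X\approx1$ by Proposition~\ref{prop: estimate on the exact k self similar spacetime}(3), the function $V$ is well defined and continuous on $[-1,0]$.

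\textbf{Upper bound.} The identity gives $V\leq k^{2}$ directly, because $Y\geq0$ and $X>0$. This is the claimed $O(k^{2})$ bound with no further input. As a check at the endpoints: at the singular horizon $Y=0$, so $V(0)=k^{2}$; at the center $r_{k}=0$ and $\theta=0$, so $X=Y$ and $V(-1)=0$. Hence the lower bound is sharp at $\Gamma$ and cannot come from crude size estimates.

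\textbf{Lower bound and the main obstacle.} The lower bound $V\geq0$ is equivalent to
\begin{equation*}
    F(z):=k^{2}X-Y(\theta+k)^{2}\geq0 ,
\end{equation*}
and this is the step I expect to be hardest. The available bounds $|z|^{\epsilon}|\partial_{z}\mr{\phi}|\lesssim k/\epsilon$ only give $Y(\theta+k)^{2}\lesssim k^{2}/\epsilon^{2}$, which has the right order but no sign. I would therefore use the background equations.

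\emph{Near the horizon ($-\tfrac12\leq z\leq0$).} Here $Y\lesssim(-z)$ while $X\approx1$. Combined with $|\theta|\lesssim(k/\epsilon)(-z)^{-\epsilon}$, this gives $Y(\theta+k)^{2}\lesssim (k^{2}/\epsilon^{2})(-z)^{1-2\epsilon}$. So $F>0$ for $-z$ below a fixed small threshold. On the rest of this region I would first try the sign information $\theta\in(-2k,0)$. If it holds, $\theta(\theta+2k)<0$ and the first form of $V$ is positive outright. The sign should be extractable from Christodoulou's analysis of the self-similar ODE.

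\emph{Near the axis ($-1\leq z\leq-\tfrac12$).} Here $\|\mr{\phi}'\|_{L^{\infty}}\lesssim k^{2}$, so $\theta=O(k^{2}r_{k})$ and $Y(\theta+k)^{2}=k^{2}Y(1+O(k r_{k}))$. Then $F=k^{2}\bigl(r_{k}-Y\,O(k r_{k})\bigr)\geq\tfrac12k^{2}r_{k}\geq0$, with equality only at $z=-1$.

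If the global sign of $\theta$ is not directly available, the fallback is to differentiate $F$ in $z$. I would use the transport form of the scalar wave equation for $\theta$, namely $\partial_{u}\theta=-\partial_{u}\phi_{k}+\frac{\mu_{k}(-\partial_{u}r_{k})}{(1-\mu_{k})r_{k}}\theta$, written in $z$ via self-similarity. Together with the wave equation for $r_{k}$, this gives a linear ODE for $F$ whose source has a definite sign, and that ODE excludes an interior zero of $F$ between the two regimes above.
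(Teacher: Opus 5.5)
Your algebra is correct. With your $X$, $Y$, $\theta$, the relations $\partial_s r_k=-r_k$ and $\partial_s\phi_k=k$ give $V=\bigl(k^2r_k-Y\theta(\theta+2k)\bigr)/X=k^2-Y(\theta+k)^2/X$. Hence $V\le k^2$, with equality on $\{z=0\}$. You are also right that $V(-1)=0$, so strict positivity can only hold on $(-1,0]$.

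The gap is the lower bound.

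\emph{Near $\Gamma$.} You rely on the bound $\Vert\mathring\phi'\Vert_{L^\infty([-1,-\frac12])}\lesssim k^2$ from Proposition~\ref{prop: estimate on the exact k self similar spacetime}(4), but it cannot hold as printed. At $r_k=0$ the wave equation \eqref{eq:wave equation for phi} reduces to $\partial_u r_k\partial_v\phi_k+\partial_v r_k\partial_u\phi_k=0$. Since $\Gamma=\{v=-(-u)^{q_k}\}$, this says $(-u)\partial_u\phi_k=q_k(-v)\partial_v\phi_k$ on $\Gamma$. In your notation this reads $k-Y\theta/r_k=Y\theta/r_k$, so $Y\theta/r_k\to k/2$ and $\mathring\phi'(-1)=k/(2q_k)$, which is of order $k$, not $k^2$. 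Consequently $\theta\approx kr_k/(2X)$ near $\Gamma$, and the middle term of $F=k^2r_k-2kY\theta-Y\theta^2$ equals $k^2r_k+O(r_k^2)$. So $F=O(r_k^2)$: the first-order terms cancel exactly. At leading order in $k$ one finds $V\approx k^2(1+z)^2/12$ near $\Gamma$. Your conclusion $F\ge\frac12k^2r_k$ is therefore false, and the sign near $\Gamma$ is decided at second order, which no size bound can capture.

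\emph{Intermediate region.} Here you have no argument, and the sign you hope for is wrong. $\theta$ has the sign of $k$ near $\Gamma$ and size $\sim 1/k$ near $z=0$. At leading order in $k$, where $\mathring\phi=-kz\log(-z)/(1+z)$, one has $\theta/k=(w-1-\log w)/(1-w)>0$ with $w=-z$. So $\theta(\theta+2k)>0$ on all of $(-1,0)$. The fallback ODE for $F$ with a definite-sign source is only asserted, not exhibited.

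\emph{The missing idea.} $V$ is a known background quantity. Write \eqref{eq:u-equ for m}--\eqref{eq:v-equ for m} as $\partial_u m=\frac{(1-\mu)r^2(\partial_u\phi)^2}{2\partial_u r}$ and $\partial_v m=\frac{(1-\mu)r^2(\partial_v\phi)^2}{2\partial_v r}$. Apply $\partial_s=(-u)\partial_u+q_k(-v)\partial_v$ and compare with your first formula for $V$; this gives $\partial_s m_k=-\frac12(1-\mu_k)r_kV$. Self-similarity ($m_k=e^{-s}\mathring m(z)$) gives $\partial_s m_k=-m_k$. Hence $V=\frac{2m_k}{(1-\mu_k)r_k}=\frac{\mu_k}{1-\mu_k}$.

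From this, $V(-1)=0$, $V>0$ on $(-1,0]$, and $V\lesssim k^2$ follow at once from Proposition~\ref{prop: estimate on the exact k self similar spacetime}(3). Positivity also follows directly from $\partial_v m_k\ge0$ and $m_k|_{\Gamma}=0$; this is the definite-sign transport equation your fallback was looking for. This is presumably the ``straightforward computation'' the paper intends. Combined with your identity, it even gives the sharp bound $\mu_k\le k^2/(1+k^2)$ in the interior, with equality on the singular horizon.
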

\begin{proof}
Straightforward computation using Proposition \ref{prop: estimate on the exact k self similar spacetime}.
\end{proof}
Note that for $\sigma\in Im{V}([-1,0])$, the coefficient of $\hat{m}_{p}$ will be degenerate for some $z$. To avoid this degeneracy, we restrict $\sigma$ in the region $\Re\sigma<-\eta$, for some $\eta\in(0,\frac{1}{2})$ which will be chosen later. We define \begin{equation*}
    \mathbb{I}_{[-1-\eta,-\eta]}: = \{\sigma\in\mathbb{C}| -1-\eta\leq\Re{\sigma}\leq -\eta\}.
\end{equation*}
Recall that $\lambda_{0}:= \frac{d\mr{r}}{dz}(0)$. Then we can reduce the equations \eqref{eq:scattering equation for rp}-\eqref{scattering equation for mp} in a convenient schematic form in the following proposition:

\begin{proposition}
\label{prop: schematic form of equations}
The equations \eqref{eq:scattering equation for rp}-\eqref{scattering equation for mp} can be written as:\begin{align}
q_{k}z\frac{d^{2}\hat{r}_{p}}{dz^{2}}+(1+\sigma)\frac{d\hat{r}_{p}}{dz}+d_{1}\hat{r}_{p}+d_{2}\hat{\Psi}_{p}=&N_1(\sigma,z)\frac{d\hat{r}_{p}}{dz}+N_{2}(\sigma,z)\frac{d\hat{\Psi}_{p}}{dz}+h_{1}(\sigma,z)\hat{r}_{p}+h_{2}(\sigma,z)\hat{\Psi}_{p},\label{schematic form of rp}\\
q_{k}z\frac{d^{2}\hat\Psi_{p}}{dz^{2}}+k\frac{d\hat{r}_{p}}{dz}+(q_{k}+\sigma)\frac{d\hat{\Psi}_{p}}{dz}+d_{3}\hat{\Psi}_{p}=&N_{3}(\sigma,z)\frac{d\hat{r}_{p}}{dz}+N_{4}(\sigma,z)\frac{d\hat{\Psi}_{p}}{dz}+h_{3}(\sigma,z)\hat{r}_{p}+h_{4}(\sigma,z)\hat{\Psi}_{p}.
\label{schematic form of psip}
\end{align}
where $d_{i}$ are constants \begin{align}
    &\label{expression of d1}d_{1} = -\left(k^{2}(2+k^{2})+k^{2}\sigma\right)\lambda_{0}\frac{\sigma+1}{\sigma-k^{2}},
    \\& d_{2} =-\frac{2k\lambda_{0}(1+k^{2})}{\sigma-k^{2}}(\sigma+1),\label{expression of d2}\\&
    d_{3} = k^{2}\lambda_{0},\label{expression of d3}
\end{align}
and $N_{i}(\sigma,z)$ are complex-valued functions depending on $\sigma,k$ and the background spacetime, satisfying the following estimate:\begin{align}
    \vert N_{i}(\sigma,z)\vert\lesssim_{\eta} \frac{k}{\epsilon}(-z)^{1-\epsilon}.\label{estimate for Ni}
\end{align}
Here $h_{i}(\sigma,z)$ are holomorphic functions of $\sigma$ and $z$ for $\sigma\in\mathbb{I}_{[-1-\eta,-\eta]}$, satisfying:\begin{equation}
    \label{eq: key estimate on Ni and hi}
    \vert h_{1}(\sigma,z)\vert\lesssim_{\eta} \frac{k\vert\sigma\vert}{\epsilon}(-z)^{1-\epsilon},\quad \vert h_{2}(\sigma,z)\vert\lesssim_{\eta} \frac{k}{\epsilon}(-z)^{1-\epsilon},\quad \vert h_{3}(\sigma,z)\vert\lesssim_{\eta} \frac{k\vert\sigma\vert}{\epsilon}(-z)^{1-\epsilon},\quad \vert h_{4}(\sigma,z)\vert\lesssim_{\eta} \frac{k}{\epsilon}(-z)^{1-\epsilon}.
\end{equation}
\end{proposition}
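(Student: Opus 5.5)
The plan is to prove Proposition~\ref{prop: schematic form of equations} by direct algebraic manipulation: solve the algebraic equation \eqref{scattering equation for mp} for $\hat m_p$, substitute into \eqref{eq:scattering equation for rp}--\eqref{eq:scattering equation for psip}, and expand every background coefficient at the singular horizon $z=0$. Write all background quantities through the self-similar profiles of Proposition~\ref{prop: estimate on the exact k self similar spacetime}. This makes each coefficient a function of $z$ alone (this is why $\Psi$ was renormalized). Equation \eqref{scattering equation for mp} then reads $(\sigma - V(z))\hat m_p = \mathcal{A}(\sigma,z)[\hat r_p,\hat\Psi_p]$. On $\mathbb{I}_{[-1-\eta,-\eta]}$ we have $\Re\sigma\le-\eta$ and $0\le V<O(k^2)$, so $|\sigma-V|\gtrsim\eta$. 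Hence $(\sigma-V)^{-1}$ is holomorphic in $\sigma$ and bounded by $\eta^{-1}$, which accounts for the $\lesssim_\eta$ in all the estimates.

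Next, in the coefficients of $\mathcal A$, separate the good-derivative factors from the bad ones. Factors of $(\partial_s+q_kz\partial_z)\phi_k$ and $(\partial_s+q_kz\partial_z)r_k$ are smooth and $O(k)$, resp.\ $\approx -\mr\nu$, at $z=0$. Every factor $\partial_z\phi_k$ comes with a $z$, and by property (4), $|z\partial_z\mr\phi|\lesssim \frac{k}{\epsilon}(-z)^{1-\epsilon}$. Likewise $|\mr\phi|\lesssim\frac{k}{\epsilon}|z|^{1-\epsilon}$, using the gauge $\phi_k(u,0)=-k\log(-u)$. Thus $\hat m_p$ equals an explicit leading part, built from the values at $z=0$ of the smooth coefficients times $\hat r_p,\hat\Psi_p$ with polynomial dependence on $\sigma$, plus remainders of size $\frac{k}{\epsilon}(-z)^{1-\epsilon}$ times $\hat r_p,\hat\Psi_p,\partial_z\hat r_p,\partial_z\hat\Psi_p$. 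These remainders are divided by $(\sigma-V)$, and the terms carrying $\sigma\hat r_p$ are what produce the $|\sigma|$ in the bounds for $h_1,h_3$. Note that the $z\partial_z\hat\Psi_p$ and $z\partial_z\hat r_p$ pieces go into $N_i$.

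Then substitute into \eqref{eq:scattering equation for rp}--\eqref{eq:scattering equation for psip} and Taylor-expand each smooth coefficient at $z=0$, using $\mu_k(0)$, $\lambda_0$ and $\nu_k(0)=-1$ from the gauge $r(u,0)=-u$. The leading first-order coefficient of $\hat r_p$ should be $q_k+\sigma+\frac{\mu_k}{1-\mu_k}$ evaluated at the horizon, and this must equal $1+\sigma$. That requires the horizon identity $\mu_k/(1-\mu_k)|_{z=0}=k^2$, which follows from the Raychaudhuri equation/self-similar ODE at $z=0$ (equivalently, the blue-shift coefficient $k^2$ in \eqref{eq: intro: blue-shift on the ingoing cone}). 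The constant $k$ in front of $\partial_z\hat r_p$ in the $\Psi$-equation is already explicit. Collecting the zeroth-order constants and simplifying the $\frac{1}{\sigma-V(0)}=\frac{1}{\sigma-k^2}$-type factor should give $d_1,d_2,d_3$. All $z$-dependent deviations are $O(z)$ for smooth coefficients and $O(\frac{k}{\epsilon}(-z)^{1-\epsilon})$ for $\phi$-dependent ones, which gives \eqref{estimate for Ni} and \eqref{eq: key estimate on Ni and hi}. Holomorphy in $\sigma$ is immediate because everything is rational in $\sigma$ with denominator $\sigma-V$.

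The main obstacle is the exact evaluation of the constants. To see the factor $(\sigma+1)$ in $d_1,d_2$, and the denominator $\sigma-k^2$ (which suggests $V(0)=k^2$), one needs the horizon values of $\mu_k$, $\mr\Omega$, $(\partial_s+q_kz\partial_z)\phi_k=-k$-type quantities, and $\mr m$ via the self-similar ODE system at $z=0$. I would compute these first, and then check the cancellation by verifying that the translation mode $(\hat r_p,\hat\Psi_p)=(0,\mr r)$ at $\sigma=-1$ is consistent with the leading system. A second delicate point is ensuring the $O(z)$ remainders from smooth coefficients and the $\sigma$-dependent ones are genuinely bounded by $k$ rather than $1$. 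For this I would use $\mu_k,\partial_z\mr r-\lambda_0=O(k^2)$ from property (5).
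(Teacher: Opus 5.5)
Your approach is the paper's own: its proof is the single sentence ``straightforward computation using Proposition~\ref{prop: estimate on the exact k self similar spacetime} and the null structure''. Your ingredients do reproduce the constants. These are the elimination of $\hat m_p$ using $|\sigma-V|\ge\eta$, the $z$-weight carried by every $\partial_z\phi_k$, and the horizon values $\frac{\mu_k}{1-\mu_k}(0)=k^2$, $V(0)=k^2$, $(\partial_s+q_kz\partial_z)\phi_k|_{z=0}=k$, $\mr\nu(0)=-1$, $\mr\phi(0)=0$.

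\begin{itemize}
\item After substitution, the $\hat\Psi_p$-coefficient on the right of the $\hat r_p$-equation is $\frac{2\partial_z\mr r}{(1-\mu_k)(\sigma-V)}\bigl(\frac{\sigma(\partial_s+q_kz\partial_z)\phi_k}{\mr r}-\frac{k\mr\nu}{\mr r^2}\bigr)$. At $z=0$ this equals $-d_2$.
\item One finds $d_1=-\lambda_0k^2(\sigma+2+k^2)\bigl(1+\frac{1+k^2}{\sigma-k^2}\bigr)$, which is where the factor $\frac{\sigma+1}{\sigma-k^2}$ comes from.
\item The terms $z(\partial_z\phi_k)^2$ carry one $z$ for two factors of $\partial_z\phi_k$. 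Using the $|z|^{\epsilon}$ bound twice still gives $\frac{k^2}{\epsilon^2}(-z)^{1-2\epsilon}$, so this is harmless.
\end{itemize}

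The gap is the center $z=-1$, which your plan never examines. ``Taylor-expand at $z=0$, remainders $O(z)$'' presupposes coefficients bounded on $[-1,0]$, and after eliminating $\hat m_p$ they are not.
\begin{itemize}
\item The $\frac{k\mr\nu}{\mr r^2}$ term above comes from the factor $\frac{2\mr\nu\,\partial_z\mr r}{(1-\mu_k)^2\mr r^2}\approx-\frac{2q_k}{(1+z)^2}$ in front of $\hat m_p$ in \eqref{eq:scattering equation for rp}.
\item That factor multiplies the terms of \eqref{scattering equation for mp} that carry no factor $\mr r$. These combine to $-(1-\mu_k)\bigl((\partial_s+q_kz\partial_z)\phi_k-q_kz\partial_z\phi_k\bigr)\hat\Psi_p=-(1-\mu_k)k\hat\Psi_p$, which does not vanish at the center.
\item Add the $q_kz\partial_z\hat\Psi_p$ part of the first term of \eqref{scattering equation for mp}. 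The $\hat\Psi_p$-dependent right-hand side of the $\hat r_p$-equation is then, to leading order at the center, $-\frac{2kq_k}{\sigma}\partial_z\bigl(\frac{\hat\Psi_p}{1+z}\bigr)$.
\item Hence $N_2\sim\frac{k}{1+z}$ and $h_2\sim\frac{k}{(1+z)^2}$ there. The $\hat\Psi_p$-equation inherits the same $\mr r^{-2}$ through its $\hat m_p$-coefficient $\frac{2\mr\phi\,\mr\nu\,\partial_z\mr r}{(1-\mu_k)^2\mr r^2}$.
\end{itemize}

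This is the Fourier-side form of $\frac{m_p}{\mr r^2}\sim\partial_z(\Psi_p/\mr r)$, the derivative-loss structure the paper handles in Section~\ref{sec: first-order energy estimate for mp}. The center is a singular point of the reduced system, so \eqref{estimate for Ni}--\eqref{eq: key estimate on Ni and hi} cannot hold up to $z=-1$.

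What your computation does prove is the statement on a region where $\mr r\gtrsim1$, e.g.\ $[-\tfrac12,0]$. There every deviation from the $z=0$ value indeed carries a factor $k$, as you anticipate. Near the center a separate argument is needed, for instance a Frobenius analysis at $z=-1$ that selects the branch regular there. The paper's one-line proof does not address this either, so the defect lies in the uniformity the statement claims. Still, as written, your argument does not establish it.
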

\begin{proof}
The proof follows from straightforward computation using Proposition \ref{prop: estimate on the exact k self similar spacetime} and the null structure of the Einstein-scalar field equations.
\end{proof}
Motivated by the schematic form of the equations \eqref{schematic form of rp}-\eqref{schematic form of psip}, in order to keep track of the $k$-smallness of the solutions, we have the following definition:
\begin{definition}
Define $\mathcal{L}_{p}(\hat{r}_{p},\hat{\Psi}_{p})$ to be the differential operator\begin{equation}
\begin{aligned}
\mathcal{L}_{p}(\hat{r}_{p},\hat{\Psi}_{p}): =q_{k}z\frac{d^{2}}{dz^{2}}\begin{bmatrix}
    \hat{r}_{p}\\\hat{\Psi}_{p}
\end{bmatrix}
+\begin{bmatrix}
    1+\sigma&0\\k&q_{k}+\sigma
\end{bmatrix}\frac{d}{dz}\begin{bmatrix}
    \hat{r}_{p}\\\hat{\Psi}_{p}
\end{bmatrix}+\begin{bmatrix}
    d_{1}&d_{2}\\0&d_{3}
\end{bmatrix}
\begin{bmatrix}
    \hat{r}_{p}\\\hat{\Psi}_{p}
\end{bmatrix}
\end{aligned}\label{defintion of mathcal Lp}
\end{equation}
\end{definition}
We have the following relation between $\mathcal{L}_{p}$ and $\mathcal{L}$:
\begin{proposition}
\label{prop: relation between Lp and L}
The equation $\mathcal{L}(\hat{r}_{p},\hat{\Psi}_{p}) = 0$ is equivalent to the following form\begin{equation}
\label{eq: perturb+rhs form of scattering equations}
\begin{aligned}
\mathcal{L}_{p}(\hat{r}_{p},\hat{\Psi}_{p}) =\begin{bmatrix}
N_{1}(\sigma,z)&N_{2}(\sigma,z)\\
N_{3}(\sigma,z)&N_{4}(\sigma,z)
\end{bmatrix}\frac{d}{dz}
\begin{bmatrix}
    \hat{r}_{p}\\\hat{\Psi}_{p}
\end{bmatrix}
+\begin{bmatrix}
    h_{1}&h_{2}\\h_{3}&h_{4}
\end{bmatrix}
\begin{bmatrix}
    \hat{r}_{p}\\\hat{\Psi}_{p}
\end{bmatrix}.
\end{aligned}
\end{equation}

\end{proposition}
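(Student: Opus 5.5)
This is essentially a bookkeeping statement, and the plan is to derive it directly from Proposition~\ref{prop: schematic form of equations}. First I will show that the algebraic equation \eqref{scattering equation for mp} can be solved for $\hat{m}_{p}$ whenever $\sigma\in\mathbb{I}_{[-1-\eta,-\eta]}$. Its left-hand side is $(\sigma-V(z))\hat m_p$. Since $0\le V(z)\lesssim k^2$ and $\Re\sigma\le-\eta$, we get $|\sigma-V(z)|\ge\eta$ uniformly in $z\in[-1,0]$. So $\hat m_p=(\sigma-V)^{-1}\cdot(\text{RHS of }\eqref{scattering equation for mp})$, which is a linear expression in $\hat r_p,\partial_z\hat r_p,\hat\Psi_p,\partial_z\hat\Psi_p$ with coefficients holomorphic in $\sigma$.

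Next I substitute this expression into \eqref{eq:scattering equation for rp}--\eqref{eq:scattering equation for psip}. This gives the system $\mathcal{L}(\hat r_p,\hat\Psi_p)=0$, consisting of two second-order ODEs in $(\hat r_p,\hat\Psi_p)$ alone. Conversely, given any solution $(\hat r_p,\hat\Psi_p)$ of that system, defining $\hat m_p$ by the same formula recovers a solution of the full triple. This establishes the equivalence at the level of the original system.

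Finally, Proposition~\ref{prop: schematic form of equations} already rewrites the substituted equations as \eqref{schematic form of rp}--\eqref{schematic form of psip}. I will move every term collected into $\mathcal{L}_p$ to the left-hand side: the principal part $q_kz\,d^2/dz^2$, the constant first-order matrix $\begin{bmatrix}1+\sigma&0\\k&q_k+\sigma\end{bmatrix}$, and the constant zeroth-order matrix $\begin{bmatrix}d_1&d_2\\0&d_3\end{bmatrix}$. Comparing with the definition \eqref{defintion of mathcal Lp}, the left-hand side is exactly $\mathcal{L}_p(\hat r_p,\hat\Psi_p)$. The right-hand sides, grouped as matrices acting on $\frac{d}{dz}(\hat r_p,\hat\Psi_p)^T$ and on $(\hat r_p,\hat\Psi_p)^T$, give \eqref{eq: perturb+rhs form of scattering equations}.

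The only point that needs care is checking that the $d_i$ really are the $z=0$ values of the coefficients. This reduces to evaluating the background quantities at $z=0$ via Proposition~\ref{prop: estimate on the exact k self similar spacetime}, which Proposition~\ref{prop: schematic form of equations} already asserts. All remaining terms then go into $N_i,h_i$, with the bounds stated there. Nothing beyond matching terms should be required.
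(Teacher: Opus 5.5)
Your proposal is correct and follows the paper's own (implicit) argument. You eliminate $\hat{m}_{p}$ through the algebraic equation \eqref{scattering equation for mp}, which is legitimate because $|\sigma-V(z)|\geq\eta$ on $\mathbb{I}_{[-1-\eta,-\eta]}$, and then rewrite \eqref{schematic form of rp}--\eqref{schematic form of psip} from Proposition~\ref{prop: schematic form of equations} in matrix form against the definition \eqref{defintion of mathcal Lp}. Checking that the $d_{i}$ are the $z=0$ values of the coefficients is part of Proposition~\ref{prop: schematic form of equations}, not of this statement, so nothing further is needed.
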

It is straightforward to check that $z =0$ is a regular singularity of the linear differential equations $\mathcal{L} = 0$. However, due to the low regularity nature of the coefficients, one cannot directly apply the standard theory of regular singularities. We construct explicit solutions to $\mathcal{L} = 0$ in the next several sections.
\subsection{Model construction of the outgoing solutions and ingoing solutions for a system of ODEs}
To illustrate the main idea and key structures in constructing the outgoing and ingoing solutions and to avoid tedious computation and notations, in this section, we consider the following system of ODEs\begin{align}
&q_{k}z\frac{d^{2}f}{dz^{2}}+(1+\sigma)\frac{df}{dz}+\alpha_{1}f+\alpha_{2}g = 0\label{eq:toy model equation1}\\
&q_{k}z\frac{d^{2}g}{dz^{2}}+k\frac{df}{dz}+(q_{k}+\sigma)\frac{dg}{dz}+\alpha_{3}g = 0,\label{eq:toy model equation2}
\end{align}
where each $\alpha_{i}$ is allowed to depend on $k,\sigma$. Note that the equations \eqref{eq:toy model equation1}-\eqref{eq:toy model equation2} can be reduced to $\mathcal{L}_{p}(f,g) = 0$ for some specific values of $\alpha_{i}$.\begin{proposition}
    The equations \eqref{eq:toy model equation1}-\eqref{eq:toy model equation2} are reduced to $\mathcal{L}_{p}(f,g) = 0$ if \begin{equation}
    \begin{aligned}
    \alpha_{i} = d_{i},\quad i = 1,2,3.
    \end{aligned}\label{eq: special structure of alphai}
\end{equation}
\end{proposition}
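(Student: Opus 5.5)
This is a direct comparison of coefficients. I will write out $\mathcal{L}_{p}(f,g)$ from the definition \eqref{defintion of mathcal Lp} componentwise and match it against \eqref{eq:toy model equation1}--\eqref{eq:toy model equation2}.

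The first row of the matrix identity \eqref{defintion of mathcal Lp} reads
\[
q_{k}z f''+(1+\sigma)f'+d_{1}f+d_{2}g .
\]
The first-order coefficient $(1+\sigma)$ and the second-order coefficient $q_{k}z$ already agree with those in \eqref{eq:toy model equation1}. Hence this row coincides with the left-hand side of \eqref{eq:toy model equation1} precisely when $\alpha_{1}=d_{1}$ and $\alpha_{2}=d_{2}$.

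The second row reads
\[
q_{k}zg''+kf'+(q_{k}+\sigma)g'+0\cdot f+d_{3}g .
\]
The lower-left entry of the zeroth-order matrix vanishes, and \eqref{eq:toy model equation2} likewise contains no undifferentiated $f$ term. The coupling term $kf'$ and the coefficient $(q_{k}+\sigma)$ of $g'$ appear identically in both, so this row agrees with the left-hand side of \eqref{eq:toy model equation2} exactly when $\alpha_{3}=d_{3}$.

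With the choice \eqref{eq: special structure of alphai}, the system \eqref{eq:toy model equation1}--\eqref{eq:toy model equation2} is therefore literally $\mathcal{L}_{p}(f,g)=0$. The only point needing care is that $d_{1},d_{2},d_{3}$ from \eqref{expression of d1}--\eqref{expression of d3} depend only on $k$ and $\sigma$, not on $z$, so they are admissible values for the $\alpha_{i}$, which are allowed to depend on $k,\sigma$. They are also well-defined on $\mathbb{I}_{[-1-\eta,-\eta]}$, because the factor $(\sigma-k^{2})^{-1}$ has no pole there: in that strip $\Re\sigma\le-\eta<0<k^{2}$. There is no real obstacle here; the statement is a bookkeeping identification whose role is to let the later model construction of outgoing solutions be transferred to $\mathcal{L}_{p}$.
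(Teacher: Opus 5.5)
Your proposal is correct. Reading off the two rows of the definition of $\mathcal{L}_{p}$ and matching coefficients against \eqref{eq:toy model equation1}--\eqref{eq:toy model equation2} is exactly the identification the paper relies on; it states this proposition without proof. Your side remark that $d_{1},d_{2}$ have no pole on $\mathbb{I}_{[-1-\eta,-\eta]}$, since $\Re\sigma\leq-\eta<0<k^{2}$, is accurate but not needed for the statement itself.
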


Dropping the terms with $\alpha_{i}$ in \eqref{eq:toy model equation1}-\eqref{eq:toy model equation2}, the system admits an explicit solution basis:\begin{equation}
    \left\{\begin{bmatrix}
        1\\0
    \end{bmatrix},\ 
    \begin{bmatrix}
        0\\1
    \end{bmatrix},\ 
    (-z)^{-\frac{k^{2}+\sigma}{q_{k}}}\begin{bmatrix}
        1\\\frac{1}{k}
    \end{bmatrix},\ 
    (-z)^{-\frac{\sigma}{q_{k}}}\begin{bmatrix}
        0\\1
    \end{bmatrix}\right\}.\label{eq: solution basis}
\end{equation}
Motivated by this, we have the following definition:\begin{definition}
    We call a solution $(f,g)$ to \eqref{eq:toy model equation1}-\eqref{eq:toy model equation2} with the leading order term when $z\rightarrow 0$ spanned by $(1,0)$ and $(0,1)$ an outgoing solution; $(f,g)$ with leading order term when $z\rightarrow 0$ spanned by $$((-z)^{-\frac{k^{2}+\sigma}{q_{k}}},\frac{1}{k}(-z)^{-\frac{k^{2}+\sigma}{q_{k}}}),\quad (0,(-z)^{-\frac{\sigma}{q_{k}}})$$ an ingoing solution.
\end{definition}

\begin{remark}
    Since $k$ will be taken to be sufficiently small in this paper, one might ask whether in the limit $k\rightarrow 0$, the asymptotic behavior of the solutions to the system \eqref{eq:toy model equation1}-\eqref{eq:toy model equation2} is governed by the solutions of \begin{equation}
        z\frac{d^{2}f}{dz^{2}}+(1+\sigma)\frac{df}{dz} = 0,\quad z\frac{d^{2}g}{dz^{2}}+(1+\sigma)\frac{dg}{dz},\label{eq: naive toy model}
    \end{equation}
    which are obtained by taking $k = 0$ and $\alpha_{i} = 0$ in \eqref{eq:toy model equation1}-\eqref{eq:toy model equation2}. This is indeed the situation in \cite{singh2}, where only a single wave equation is considered. However, a crucial difference arises in the present setting: for the coupled system of two equations, the solution basis \eqref{eq: solution basis} fails to converge, as $k\rightarrow 0$, to the solution basis of \eqref{eq: naive toy model}.
\end{remark}
Let the solution $(f,g)$ to \eqref{eq:toy model equation1}-\eqref{eq:toy model equation2} take the following form:\begin{align}
f &=\sum_{n = 0}^{\infty}a_{n}(-z)^{n}+(-z)^{-\frac{\sigma}{q_{k}}}\sum_{n = 0}^{\infty}b_{n}(-z)^{n}+(-z)^{-\frac{k^{2}+\sigma}{q_{k}}}\sum_{n = 0}^{\infty}c_{n}(-z)^{n},\label{informal form of the solution 1}\\
g & = \sum_{n = 0}^{\infty}A_{n}(-z)^{n}+(-z)^{-\frac{\sigma}{q_{k}}}\sum_{n=0}^{\infty}B_{n}(-z)^{n}+(-z)^{-\frac{k^{2}+\sigma}{q_{k}}}\sum_{n = 0}^{\infty}C_{n}(-z)^{n}
.\label{informal form of the solutions 2}
\end{align}
Substituting the forms \eqref{informal form of the solution 1}-\eqref{informal form of the solutions 2} into the equations \eqref{eq:toy model equation1}-\eqref{eq:toy model equation2}, we can obtain the following algebraic equations for $(a_{n},b_{n},c_{n},A_{n},B_{n},C_{n})$ for $n\geq 0$:\begin{align}
    \left(q_{k}n+1+\sigma\right)(n+1)a_{n+1} = &\alpha_{1}a_{n}+\alpha_{2}A_{n},\label{eq:algebraic 1}\\
    (q_{k}(n+1)+\sigma)(n+1)A_{n+1}+k(n+1)a_{n+1} = &\alpha_{3}A_{n},\label{eq:An albegraic equation}\\b_{0} = 0,\quad \left(n+1-\frac{\sigma}{q_{k}}\right)\left(q_{k}n+1\right)b_{n+1} = &\alpha_{1}b_{n}+\alpha_{2}B_{n},\label{algebraic relation for bn}\\
    \left(n+1-\frac{\sigma}{q_{k}}\right)(q_{k}n+q_{k})B_{n+1}+k\left(n+1-\frac{\sigma}{q_{k}}\right)b_{n+1} =& \alpha_{3}B_{n},\label{algebraic relation for Bn}\\\left(n+1-\frac{k^{2}+\sigma}{q_{k}}\right)\left(q_{k}n+q_{k}\right)c_{n+1}=&\alpha_{1}c_{n}+\alpha_{2}C_{n}\label{algebraic relation for cn}\\C_{0} = \frac{1}{k}c_{0},\quad \left(n+1-\frac{k^{2}+\sigma}{q_{k}}\right)\left(q_{n}n+1-2k^{2}\right)C_{n+1}+k\left(n+1-\frac{k^{2}+\sigma}{q_{k}}\right)c_{n+1} = &\alpha_{3}C_{n}.\label{eq:algebraic 2}
\end{align}
Following the above expansion and algebraic relations, we can construct a solution basis for ingoing solutions to \eqref{eq:toy model equation1}-\eqref{eq:toy model equation2}.\begin{proposition}
\label{prop: construction of model ingoing solutions}
Define $\hat{d}$ to be\begin{equation}
    \hat{d} = \sum_{i = 1}^{8}\vert \alpha_{i}\vert.
\end{equation}
    For $\sigma\in\mathbb{I}_{[-1-\eta,-\eta]}$, there exist two linearly independent ingoing solutions \begin{equation*}
        \begin{bmatrix}
        f_{3,\sigma}\\g_{3,\sigma}
        \end{bmatrix}
        ,\begin{bmatrix}
            f_{4,\sigma}\\g_{4,\sigma}
        \end{bmatrix}
    \end{equation*} 
    to the equations \eqref{eq:toy model equation1}-\eqref{eq:toy model equation2}, having the following expansions \begin{align}
    \begin{bmatrix}
        f_{3,\sigma}\\g_{3,\sigma}
    \end{bmatrix}
    & = (-z)^{-\frac{k^{2}+\sigma}{q_{k}}}\begin{bmatrix}
        1\\\frac{1}{k}
    \end{bmatrix}
    +(-z)^{-\frac{k^{2}+\sigma}{q_{k}}+1}\left(\sum_{n = 1}^{\infty}\begin{bmatrix}
        \bar{c}_{n,\sigma}\\\bar{C}_{n,\sigma}
    \end{bmatrix}(-z)^{n-1}\right),\label{general ingoing solutions for r}\\
    \begin{bmatrix}
        f_{4,\sigma}\\g_{4,\sigma}
    \end{bmatrix}
    & = (-z)^{-\frac{\sigma}{q_{k}}}
    \begin{bmatrix}
        0\\1
    \end{bmatrix}
    +(-z)^{-\frac{\sigma}{q_{k}}+1}\left(
    \sum_{n = 1}^{\infty}\begin{bmatrix}
        \bar{b}_{n,\sigma}\\\bar{B}_{n,\sigma}
    \end{bmatrix}(-z)^{n-1}\right),\label{eq: general ingoing solutions for psi}
    \end{align}
    where the coefficients $\bar{c}_{n,\sigma},\bar{C}_{n,\sigma},\bar{b}_{n,\sigma},\bar{B}_{n,\sigma}$ are holomorphic functions in $\sigma$ for $\sigma\in\mathbb{I}_{[-1-\eta,-\eta]}$ and satisfy the estimates:\begin{align}
        \vert\bar{c}_{n,\sigma}\vert\leq C\frac{1}{kn!}\hat{d}^{n},\quad \vert \bar{C}_{n,\sigma}\vert\leq C\frac{1}{kn!}\hat{d}^{n},\quad \vert \bar{b}_{n,\sigma}\vert\leq C\frac{1}{n!}\hat{d}^{n},\quad \vert \bar{B}_{n,\sigma}\vert\leq C\frac{1}{n!}\hat{d}^{n}.\label{eq: estimate for the toy model ingoing solutions}
    \end{align}
\end{proposition}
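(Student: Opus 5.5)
The plan is to construct both ingoing solutions by an explicit Frobenius expansion. The coefficients are determined from the recurrences \eqref{algebraic relation for bn}--\eqref{eq:algebraic 2}, with all $a_n$, $A_n$ set to zero. I then prove holomorphy and factorial decay of the coefficients by induction on $n$, which gives an entire power series in $(-z)$ multiplying the prefactor. Since the model system \eqref{eq:toy model equation1}--\eqref{eq:toy model equation2} has constant coefficients apart from the $z$ in front of the second derivative, termwise substitution is justified once the series converge absolutely and uniformly on $[-1,0]$ together with their first two derivatives. Factorial decay guarantees this.

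For $(f_{4,\sigma},g_{4,\sigma})$, take $c_n=C_n=0$, $b_0=0$ and $B_0=1$. The recurrences become
\[
\Big(n+1-\tfrac{\sigma}{q_k}\Big)(q_kn+1)\,b_{n+1}=\alpha_1b_n+\alpha_2B_n,\qquad
\Big(n+1-\tfrac{\sigma}{q_k}\Big)q_k(n+1)\,B_{n+1}=\alpha_3B_n-k\Big(n+1-\tfrac{\sigma}{q_k}\Big)b_{n+1}.
\]
For $\sigma\in\mathbb{I}_{[-1-\eta,-\eta]}$ we have $\Re(-\sigma/q_k)\geq \eta/q_k>0$, so $|n+1-\sigma/q_k|\geq n+1$ and the factor never vanishes. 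Both recurrences can therefore be solved uniquely, and every coefficient is a rational function of $\sigma$ with no poles in the strip, hence holomorphic there. Write $\bar b_{n}=b_n$ and $\bar B_n=B_n$ after the index shift. The induction hypothesis is $|b_n|,|B_n|\leq C\hat d^n/n!$. It propagates because
\[
|b_{n+1}|\leq \frac{\hat d(|b_n|+|B_n|)}{(n+1)}\lesssim \frac{\hat d^{n+1}}{(n+1)!}.
\]
In the equation for $B_{n+1}$, the $kb_{n+1}$ term is divided by $q_k(n+1)$, which only costs an absolute constant. The constants are absorbed by choosing $C$ large or by renormalizing $\hat d$ by a fixed factor.

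For $(f_{3,\sigma},g_{3,\sigma})$, take $b_n=B_n=0$, $c_0=1$ and $C_0=1/k$. The exponent $-(k^2+\sigma)/q_k$ has real part at least $(\eta-k^2)/q_k>0$ for $k$ small, so again $|n+1-(k^2+\sigma)/q_k|\geq n+1$. The other factors, $q_k(n+1)$ in \eqref{algebraic relation for cn} and $q_kn+1-2k^2$ in \eqref{eq:algebraic 2}, are bounded below by a constant. The same induction then gives $|c_n|,|C_n|\leq C\hat d^n/(k\,n!)$, where the loss of $1/k$ comes from $C_0=1/k$. Linear independence of the two solutions follows from their leading terms: the first component of $f_3$ behaves like $(-z)^{-(k^2+\sigma)/q_k}$ while $f_4=O((-z)^{1-\sigma/q_k})$. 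Alternatively, the exponents differ by $k^2/q_k\notin\mathbb{Z}$, so the leading vectors already separate the two solutions.

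The main obstacle is the $n=0$ compatibility step, namely verifying that the leading terms actually solve the indicial system. For $f_4$ one needs $b_0=0$ to be consistent. For $f_3$, the choice $C_0=c_0/k$ must make the $(-z)^{-(k^2+\sigma)/q_k-1}$ coefficient vanish in the $g$ equation. This is a direct check of the indicial relations $(\nu)(q_k\nu-q_k+1+\sigma)=0$ and the coupled relation $k\nu c_0+\nu(q_k\nu-q_k+q_k+\sigma)C_0=0$ with $\nu=-(k^2+\sigma)/q_k$. I would carry this out first, because the slightly inconsistent constants in the displayed recurrences (for example $q_nn+1-2k^2$) need to be rederived carefully to confirm that the denominators are nonvanishing on the strip.
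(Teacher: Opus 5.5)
Your proposal is correct and follows the same route as the paper: a Frobenius ansatz at the two ingoing exponents, the recurrences \eqref{algebraic relation for bn}--\eqref{eq:algebraic 2} with leading factors bounded away from zero on the strip, and an induction giving factorial decay. Your normalization $b_0=0$, $B_0=1$, $c_0=1$, $C_0=1/k$ is the consistent one; the paper's proof writes $c_0=b_0=1$, $c_1=b_1=0$, which clashes with the forced $b_0=0$ and with the recurrence itself.

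One small tightening: your displayed step loses a factor $2$ per iteration, and choosing $C$ large cannot absorb a per-step factor. Induct instead on $\max(|b_n|,|B_n|)$ (resp.\ $\max(|c_n|,|C_n|)$) and use the extra denominators $(q_kn+1)$, $q_k(n+1)$, $q_kn+1-2k^2$. Then the per-step factor is at most $1$ for $n\geq 1$, so a single constant $C$ gives exactly $C\hat d^{\,n}/n!$ (resp.\ $C\hat d^{\,n}/(k\,n!)$) without rescaling $\hat d$.
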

\begin{proof}
Using the algebraic relations \eqref{algebraic relation for bn}-\eqref{eq:algebraic 2}, taking \begin{equation*}
    c_{0} = b_{0} = 1,\quad c_{1} = b_{1} = 0,
\end{equation*} 
and letting \begin{equation*}
    \bar{c}_{n,\sigma} = c_{n},\quad \bar{C}_{n,\sigma} = C_{n},\quad \bar{b}_{n,\sigma} = \bar{b}_{n},\quad \bar{B}_{n,\sigma} = B_{n},
\end{equation*}
the estimate \eqref{eq: estimate for the toy model ingoing solutions} follows inductively from the algebraic relations \eqref{algebraic relation for bn}-\eqref{eq:algebraic 2}. The convergence of the series is ensured by the $n!$ factor in the estimate \eqref{eq: estimate for the toy model ingoing solutions}. This concludes the proof.
\end{proof}
An immediate corollary of Proposition \ref{prop: construction of model ingoing solutions} is the construction of ingoing solutions to $\mathcal{L}_{p}(\hat{r}_{p},\hat{\Psi}_{p}) = 0$.

\begin{corollary}
\label{coro: construction of ingoing solution for Lp}
    For $\sigma\in\mathbb{I}_{[-1-\eta,-\eta]}$, there exist two linearly independent ingoing solutions \begin{equation*}
        \mathbf{r}_{-,\sigma} = \begin{bmatrix}
\mathbf{r}_{-,\sigma}^{(1)}\\[.5em]\mathbf{r}_{-,\sigma}^{(2)}
        \end{bmatrix},\ 
        \mathbf{\Psi}_{-,\sigma} = 
        \begin{bmatrix}
            \mathbf{\Psi}_{-,\sigma}^{(1)}\\[.5em]\mathbf{\Psi}_{-,\sigma}^{(2)}
        \end{bmatrix}
    \end{equation*}
    to the equation $\mathcal{L}_{p}(\hat{r}_{p},\hat{\Psi}_{p}) = 0$, having the following expansions:\begin{align}
       \mathbf{r}_{-,\sigma} &=  \begin{bmatrix}
            \mathbf{r}_{-,\sigma}^{(1)}\\[.5em]\mathbf{r}_{-,\sigma}^{(2)}
        \end{bmatrix} = (-z)^{-\frac{k^{2}+\sigma}{q_{k}}}\begin{bmatrix}
            1\\\frac{1}{k}
        \end{bmatrix}
        +(-z)^{-\frac{k^{2}+\sigma}{q_{k}}+1}\left(\sum_{n = 1}^{\infty}\begin{bmatrix}
            \bar{c}_{n,\sigma}\\\bar{C}_{n,\sigma}
        \end{bmatrix}(-z)^{n-1}\right),\\\mathbf{\Psi}_{-,\sigma}& = 
        \begin{bmatrix}
            \mathbf{\Psi}_{-,\sigma}^{(1)}\\[.5em]\mathbf{\Psi}_{-,\sigma}^{(2)}
        \end{bmatrix} = (-z)^{-\frac{\sigma}{q_{k}}}\begin{bmatrix}
            0\\1
        \end{bmatrix}
        +(-z)^{-\frac{\sigma}{q_{k}}+1}\left(\sum_{n=1}^{\infty}\begin{bmatrix}
            \bar{b}_{n,\sigma}\\\bar{B}_{n,\sigma}
        \end{bmatrix}(-z)^{n-1}\right),
    \end{align}
    where coefficients $\bar{c}_{n,\sigma},\bar{C}_{n,\sigma},\bar{b}_{n,\sigma},\bar{B}_{n,\sigma}$ are holomorphic functions in $\sigma$ for $\sigma\in\mathbb{I}_{[-1-\eta,-\eta]}$ and satisfy the estimates:\begin{equation}
        \vert\bar{c}_{n,\sigma}\vert\leq C\frac{k^{n-1}}{n!},\quad \vert \bar{C}_{n,\sigma}\vert\leq C\frac{k^{n-1}}{n!},\quad \vert\bar{b}_{n,\sigma}\vert\leq C\frac{k^{n}}{n!},\quad \vert\bar{B}_{n,\sigma}\vert\leq C\frac{k^{n}}{n!}.\label{estimate on the ingoing solution of Lp}
    \end{equation}
\end{corollary}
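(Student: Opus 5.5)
The plan is to specialize Proposition~\ref{prop: construction of model ingoing solutions} to $\alpha_i=d_i$, $i=1,2,3$. By the proposition preceding it, the toy system then becomes exactly $\mathcal{L}_p(f,g)=0$. So the only work is to check that the constants $d_i$ are holomorphic and suitably small on $\mathbb{I}_{[-1-\eta,-\eta]}$, and to redo the inductive bounds so they track powers of $k$ rather than powers of $\hat d$.

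First, the size of the coefficients. In the strip $\Re\sigma\le-\eta<0$ we have $|\sigma-k^2|\ge\eta$, so $d_1,d_2$ in \eqref{expression of d1}--\eqref{expression of d2} are holomorphic there. Each is a polynomial in $\sigma$ of degree at most $1$ times $k^2$ or $k$, divided by $\sigma-k^2$. Hence $|d_1|\lesssim_\eta k^2$ uniformly, using $|(\sigma+1)/(\sigma-k^2)|\lesssim_\eta1$, and $|d_2|\lesssim_\eta k$; also $d_3=k^2\lambda_0$. So $\hat d\lesssim k$.

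Second, I would run the recurrences \eqref{algebraic relation for bn}--\eqref{eq:algebraic 2} explicitly.
\begin{itemize}
\item The denominators $(n+1-\sigma/q_k)$ and $(n+1-(k^2+\sigma)/q_k)$ have real parts $\ge n+1+\eta/q_k>0$, so they never vanish on the strip. The factors $q_kn+1$, $q_kn+q_k$ and $q_kn+1-2k^2$ are bounded below by $c(n+1)$.
\item Consequently each coefficient is a finite rational expression in $\sigma$ with nonvanishing denominators, hence holomorphic, and each step gains a factor $\lesssim \hat d/(n+1)$.
\item For $\mathbf\Psi_{-,\sigma}$ I would take $b_0=0$, $B_0=1$ (the statement's "$b_0=1$" should read $B_0=1$, consistent with the leading term $(0,1)$). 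Then $b_{n+1}$ is driven by $d_1b_n+d_2B_n$, of size $\lesssim k\cdot k^n/n!$, and $B_{n+1}$ is driven by $d_3B_n$ and $k\,b_{n+1}$. Induction gives $|b_n|,|B_n|\le Ck^n/n!$.
\item For $\mathbf r_{-,\sigma}$ I would take $c_0=1$, $C_0=1/k$. The $1/k$ enters only through $C_0$, and $d_2C_0=O(1)$. So $|c_1|,|C_1|\lesssim1=k^0/1!$, and thereafter each step gains $k/(n+1)$, giving $|c_n|,|C_n|\le Ck^{n-1}/n!$.
\end{itemize}
The $n!$ decay makes the series converge absolutely and locally uniformly in $(\sigma,z)$, which gives holomorphy in $\sigma$ by Weierstrass. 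Termwise differentiation then shows the sums satisfy $\mathcal{L}_p=0$.

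Third, linear independence is read off from the leading terms as $z\to0$. The exponents $-(k^2+\sigma)/q_k$ and $-\sigma/q_k$ differ by $k^2/q_k\notin\mathbb{Z}$, and the leading vectors $(1,1/k)$ and $(0,1)$ are independent.

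The main obstacle is the bookkeeping of the $k$-powers in the $\mathbf r_{-,\sigma}$ chain. The large $C_0=1/k$ must be shown to be compensated by the $k$ in $d_2$ and by the coupling term $k\,c_{n+1}$ in the $C_{n+1}$ equation, so that no $1/k$ loss propagates beyond the first step. I would handle this with a joint induction hypothesis on the pair $(c_n,C_n)$.
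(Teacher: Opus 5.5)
Your route is the paper's. There, the corollary is read off from Proposition~\ref{prop: construction of model ingoing solutions} with $\alpha_i=d_i$, by converting $\hat d^{\,n}$ into powers of $k$. Your normalization $b_0=0$, $B_0=1$, $c_0=1$, $C_0=1/k$, with every higher coefficient produced by the recursions, is the correct reading of that construction.

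One step is false as written. The numerator of $d_1$ is $k^2\lambda_0(2+k^2+\sigma)(\sigma+1)$, which is quadratic in $\sigma$, not linear. So on the unbounded strip $\mathbb{I}_{[-1-\eta,-\eta]}$ you only have $|d_1|\lesssim_\eta k^2(1+|\sigma|)$, and in fact $|d_1|\sim k^2\lambda_0|\sigma|$ as $|\Im\sigma|\to\infty$. Hence $\hat d\lesssim k$ holds only on bounded parts of the strip. A per-step gain of $\hat d/(n+1)$ then gives coefficient and series bounds that grow with $|\Im\sigma|$. This is not cosmetic: the later analysis of $Y_\sigma$, $Y_\sigma^{-1}$ and the resolvent uses bounds uniform in $\sigma$ and integrates along lines $\Re\sigma=\mathrm{const}$. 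The paper's one-line deduction via $\hat d$ passes over the same point.

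The repair fits inside your scheme. The coefficient $d_1$ enters only the $b_{n+1}$- and $c_{n+1}$-recursions, and there it is always divided by $n+1-\sigma/q_k$, resp.\ $n+1-(k^2+\sigma)/q_k$. On the strip these have real part at least $n+1+(\eta-k^2)/q_k$ and imaginary part $-\Im\sigma/q_k$, so their modulus is $\gtrsim n+1+|\sigma|$. Therefore
\[
\frac{|d_1|}{\bigl|n+1-\frac{\sigma}{q_k}\bigr|}+\frac{|d_1|}{\bigl|n+1-\frac{k^2+\sigma}{q_k}\bigr|}\lesssim_\eta k^2
\]
uniformly in $\sigma$ and $n$. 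The other coefficients, $|d_2|\lesssim_\eta k$ and $d_3=k^2\lambda_0$, are already uniform.

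With this, take $A=A(\eta)$. Each $b$- or $c$-update multiplies by at most $A\bigl(\frac{k^2}{n+1}+\frac{k}{(n+1)^2}\bigr)$. Each $B$- or $C$-update costs at most $A\frac{k^2}{(n+1)^2}$, plus the coupling $\frac{k|b_{n+1}|}{q_k(n+1)}$, resp.\ $\frac{k|c_{n+1}|}{q_kn+1-2k^2}$.

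These extra inverse powers of $n+1$ are also what give $Ck^n/n!$ (resp.\ $Ck^{n-1}/n!$) with $C$ independent of $n$, rather than $(Ak)^n/n!$. The inductive hypothesis propagates once $n$ is large and $k$ is small in terms of $A$, and the finitely many remaining $n$ are absorbed into $C$.
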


One can directly apply the same argument to construct the outgoing solutions to \eqref{eq:toy model equation1}-\eqref{eq:toy model equation2}. However, note that the coefficient before $a_{1}$ in the algebraic equation \eqref{eq:algebraic 1} will degenerate when $\sigma = -1$. Similarly, the coefficient before $A_{1}$ in \eqref{eq:An albegraic equation} will degenerate when $\sigma = -q_{k}$. Viewing $a_{n}$ and $A_{n}$ as meromorphic functions of $\sigma$, this degeneracy will lead to poles at $\sigma = -1$ and $\sigma = -q_{k}$ and will make the above construction fail at $\sigma= -1,-q_{k}$, since $a_{1}$ and $A_{1}$ are not well-defined at $\sigma = -1,-q_{k}$. To make things precise, we have the following proposition constructing the outgoing solutions for $\sigma\in\mathbb{I}_{[-1-\eta,-\eta]}\backslash\{-1,-q_{k}\}$:
\begin{proposition}
    For $\sigma\in\mathbb{I}_{[-1-\eta,-\eta]}\backslash\{-1,-q_{k}\}$, there exist two linearly independent outgoing solutions\begin{equation*}
        \begin{bmatrix}
            \bar{f}_{1,\sigma}\\\bar{g}_{1,\sigma}
        \end{bmatrix},\begin{bmatrix}
            \bar{f}_{2,\sigma}\\\bar{g}_{2,\sigma}
        \end{bmatrix}
    \end{equation*}
    to the equations \eqref{eq:toy model equation1}-\eqref{eq:toy model equation1}, having the following expansions\begin{align}
        \begin{bmatrix}
            \bar{f}_{1,\sigma}\\\bar{g}_{1,\sigma}
        \end{bmatrix}
        &=\begin{bmatrix}
            1\\0
        \end{bmatrix}
        +(-z)\left(
        \sum_{n = 1}^{\infty}\begin{bmatrix} 
    \grave{a}_{n,\sigma}\\\grave{A}_{n,\sigma}
        \end{bmatrix}(-z)^{n-1}\right),\\
        \begin{bmatrix}
            \bar{f}_{2,\sigma}\\\bar{g}_{2,\sigma}
        \end{bmatrix}
        & = \begin{bmatrix}
            0\\1
        \end{bmatrix}+(-z)\left(\sum_{n = 1}^{\infty}
        \begin{bmatrix}
            \acute{a}_{n,\sigma}\\\acute{A}_{n,\sigma}
        \end{bmatrix}(-z)^{n-1}\right),
    \end{align}
    where the coefficients $\grave{a}_{n,\sigma},\grave{A}_{n,\sigma},\acute{a}_{n,\sigma},\acute{A}_{n,\sigma}$ are meromorphic functions of $\sigma$ for $\sigma\in\mathbb{I}_{[-1-\eta,-\eta]}$ with poles at $\sigma = -1,-q_{k}$, satisfying the following estimate:\begin{equation}
        \vert \grave{a}_{n,\sigma}\vert\lesssim_{\sigma}\frac{\hat{d}^{n}}{n!},\quad \vert \grave{A}_{n,\sigma}\vert\lesssim_{\sigma}\frac{\hat{d}^{n}}{n!},\quad \vert\acute{a}_{n,\sigma}\vert\lesssim_{\sigma}\frac{\hat{d}^{n}}{n!},\quad \vert \acute{A}_{n,\sigma}\vert\lesssim_{\sigma}\frac{\hat{d}^{n}}{n!}.\label{estimate on the toy toy model outgoing solutions}
    \end{equation}
\end{proposition}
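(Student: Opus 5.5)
Since the outgoing solutions are characterised by having their leading-order terms at $z=0$ spanned by $(1,0)$ and $(0,1)$, I will take a pure power-series ansatz with no $(-z)^{-\sigma/q_k}$ or $(-z)^{-(k^2+\sigma)/q_k}$ components. That is, in \eqref{informal form of the solution 1}--\eqref{informal form of the solutions 2} I set $b_n=B_n=c_n=C_n=0$ and keep only $f=\sum a_n(-z)^n$ and $g=\sum A_n(-z)^n$. The coefficients are then fixed by the recurrences \eqref{eq:algebraic 1}--\eqref{eq:An albegraic equation}, starting from $(a_0,A_0)=(1,0)$ for $(\bar f_{1,\sigma},\bar g_{1,\sigma})$ and from $(a_0,A_0)=(0,1)$ for $(\bar f_{2,\sigma},\bar g_{2,\sigma})$. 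The recurrences can be solved sequentially in $n$. From \eqref{eq:algebraic 1} we get
\[
a_{n+1}=\frac{\alpha_1a_n+\alpha_2A_n}{(n+1)(q_kn+1+\sigma)},
\]
and substituting this into \eqref{eq:An albegraic equation} gives
\[
A_{n+1}=\frac{\alpha_3A_n-k(n+1)a_{n+1}}{(n+1)(q_k(n+1)+\sigma)}.
\]
Both expressions are rational in $\sigma$. Since the $\alpha_i$ are holomorphic in the strip, each $a_n$ and $A_n$ is meromorphic in $\mathbb{I}_{[-1-\eta,-\eta]}$.

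The next step is to locate the poles. The denominators vanish when $q_kn+1+\sigma=0$ or when $q_k(n+1)+\sigma=0$. For $\Re\sigma\in[-1-\eta,-\eta]$ with $\eta<\tfrac12$ and $k$ small, the first condition holds only for $n=0$, giving $\sigma=-1$. For $n\ge1$ we would need $\Re\sigma=-1-q_kn\le-1-q_k<-1-\eta$, which is outside the strip. The second condition holds only for $n=0$, giving $\sigma=-q_k$, because $-2q_k<-1-\eta$. Hence all poles come from $a_1$ and $A_1$, and they propagate to higher $n$ only through the recursion, so the set of poles is contained in $\{-1,-q_k\}$.

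The main step is the factorial bound \eqref{estimate on the toy toy model outgoing solutions}. Fix $\sigma$ in the strip away from $\{-1,-q_k\}$ and set $\delta_\sigma=\min(|1+\sigma|,|q_k+\sigma|)>0$. For $n\ge1$ the denominators satisfy $|q_kn+1+\sigma|\gtrsim n$ and $|q_k(n+1)+\sigma|\gtrsim n$, since $\Re(q_kn+1+\sigma)\ge q_kn-\eta\ge c\,n$. At $n=0$ the two denominators are bounded below by $\delta_\sigma$. I will prove by induction the statement $|a_n|+|A_n|\le C_\sigma\hat d^n/n!$.

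There is one complication in the induction. The term $k(n+1)a_{n+1}$ in \eqref{eq:An albegraic equation} has an extra factor $n+1$, but after dividing by $(n+1)(q_k(n+1)+\sigma)$ its contribution is $O(k|a_{n+1}|/n)$. This is harmless, because $a_{n+1}$ has already been bounded by $(\hat d^{n}/(n+1)!)\cdot\hat d$ up to constants. Choosing the constant in the inductive bound large enough, depending on $\sigma$ through $\delta_\sigma^{-1}$ and on the implicit constant in $n\gtrsim$ (replacing $\hat d$ by a fixed multiple if needed), closes the induction. This part, namely keeping the denominators uniformly of order $n$ and absorbing the coupling term $k\,a_{n+1}$, is where I expect most of the care to be needed. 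Once it is done, the bound $\hat d^n/n!$ gives uniform convergence of the series on $[-1,0]$ together with term-by-term differentiation, so $(\bar f_{i,\sigma},\bar g_{i,\sigma})$ are genuine solutions of \eqref{eq:toy model equation1}--\eqref{eq:toy model equation2}.

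Finally, the two solutions are linearly independent, because their values at $z=0$ are $(1,0)$ and $(0,1)$. Both are outgoing by construction, since they are analytic at $z=0$ and carry no $(-z)^{-\sigma/q_k}$-type components.
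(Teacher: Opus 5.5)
Your proposal is correct and takes essentially the paper's route. The paper's proof simply points to the ingoing construction in Proposition~\ref{prop: construction of model ingoing solutions}: a power-series ansatz in $(-z)$ with $(a_0,A_0)=(1,0)$ or $(0,1)$, the recurrences \eqref{eq:algebraic 1}--\eqref{eq:An albegraic equation}, and an inductive factorial bound. Your check that the only vanishing denominators in the strip occur at $n=0$ (giving $\sigma=-1,-q_k$), and your handling of the coupling term $k(n+1)a_{n+1}$, supply exactly the details the paper leaves implicit.
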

\begin{proof}
    The proof follows similarly to the proof of Proposition \ref{prop: construction of model ingoing solutions}.
\end{proof}
\begin{remark}
The above proposition will not play a role in our proof of the main result of this paper. We state the proposition to highlight the key role played by the special structure of the background $k$-self-similar spacetime. Without some additional structures, the construction of the outgoing solution will just fail at $\sigma = -1,\ -q_{k}$.
\end{remark}
\begin{remark}
    Note that in the above proposition, we did not track the $\sigma$-dependence of the estimate \eqref{estimate on the toy toy model outgoing solutions}. However, since $\grave{a}_{n,\sigma},\grave{A}_{n,\sigma},\acute{a}_{n,\sigma},\acute{A}_{n,\sigma}$ have poles at $\sigma = -1,-q_{k}$, the $\sigma$-dependence will become singular at those points.
\end{remark}
\subsection{Construction of outgoing solutions to $\mathcal{L}_{p} = 0$}
As mentioned in the previous section, for a general system of ODEs \eqref{eq:toy model equation1}-\eqref{eq:toy model equation2}, in the region $\mathbb{I}_{[-1-\eta,-\eta]}$, while the construction of ingoing solutions holds true in the whole region $\mathbb{I}_{[-1-\eta,-\eta]}$, the construction of outgoing solutions is only valid for $\sigma$ away from $\sigma = -1,-q_{k}$, due to the degeneracy of some coefficients.

To avoid the presence of this degeneracy, one approach is to renormalize the values of $a_{0}$ and $A_{0}$ to be \begin{equation*}
    a_{0} = A_{0} = \left(1+\sigma\right)\left(1+\frac{\sigma}{q_{k}}\right).
\end{equation*}
However, this renormalization will cause a severe problem in determining the scattering resonance in Section \ref{sec: location of the scattering resonance}. Nonetheless, this issue can be resolved when considering the equations $\mathcal{L}_{p}(f,g) = 0$, i.e., taking $\alpha_{i}$ to be as in \eqref{eq: special structure of alphai}, due to a special cancellation from the coefficient of the Taylor expansion of background $k$-self similar spacetime at the singular horizon $z = 0$. This cancellation allows us to construct outgoing solutions at $\sigma = -1$. Hence, the only renormalization arises from the degeneracy of the coefficients at $\sigma = -q_{k}$. More specifically, we have: \begin{proposition}
\label{prop: model argument for square coefficient}
For $\sigma\in\mathbb{I}_{[-1-\eta,-\eta]}\backslash\{-q_{k}\}$, there exist two linearly independent outgoing solutions
\begin{equation*}
    \mathbf{r}_{+,\sigma} = \begin{bmatrix}
        \mathbf{r}_{+,\sigma}^{(1)}\\[.5em]\mathbf{r}_{+,\sigma}^{(2)}
    \end{bmatrix},\ \mathbf{\Psi}_{+,\sigma} = \begin{bmatrix}
        \mathbf{\Psi}_{+,\sigma}^{(1)}\\[.5em]\mathbf{\Psi}_{+,\sigma}^{(2)}
    \end{bmatrix}
\end{equation*}
to the equation $\mathcal{L}_{p}(f,g) = 0$, having the following expansions:\begin{align}
\mathbf{r}_{+,\sigma} = \begin{bmatrix}
    \mathbf{r}_{+,\sigma}^{(1)}\\[.5em]\mathbf{r}_{+,\sigma}^{(2)}
\end{bmatrix}
& = \begin{bmatrix}
    1\\\frac{kq_{k}}{\sigma}
\end{bmatrix}
+(-z)\left(\sum_{n = 1}^{\infty}\begin{bmatrix}
    \bar{a}_{n,\sigma}\\\bar{A}_{n,\sigma}
\end{bmatrix}(-z)^{n-1}\right),\\
\mathbf{\Psi}_{+,\sigma} = \begin{bmatrix}
    \mathbf{\Psi}_{+,\sigma}^{(1)}\\[.5em]\mathbf{\Psi}_{+,\sigma}^{(2)}
\end{bmatrix}
&  = \begin{bmatrix}
    0\\1+\frac{\sigma}{q_{k}}
\end{bmatrix}
+(-z)\left(\sum_{n = 1}^{\infty}\begin{bmatrix}
    \widetilde{a}_{n,\sigma}\\\widetilde{A}_{n,\sigma}
\end{bmatrix}(-z)^{n-1}\right),
\end{align}
where $\bar{a}_{n,\sigma},\bar{A}_{n,\sigma},\widetilde{a}_{n,\sigma},\widetilde{A}_{n,\sigma}$ are holomorphic functions in $\sigma$ for $\sigma\in\mathbb{I}_{[-1-\eta,-\eta]}$ and satisfy the estimates:
\begin{equation}
    \vert\bar{a}_{n,\sigma}\vert\leq C\frac{k^{n}}{n!},\quad \vert \bar{A}_{n,\sigma}\vert\leq C\frac{k^{n}}{n!},\quad \vert \widetilde{a}_{n,\sigma}\vert\leq C\frac{k^{n}}{n!},\quad \vert\widetilde{A}_{n,\sigma}\vert\leq C\frac{k^{n}}{n!}.\label{eq: estimate on the expansion of the outgoing solutions}
\end{equation}
\end{proposition}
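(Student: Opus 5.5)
The plan is to construct both solutions as convergent power series in $(-z)$, exactly as in the model construction, using the recurrences \eqref{eq:algebraic 1}--\eqref{eq:An albegraic equation} with $\alpha_i=d_i$ given by \eqref{expression of d1}--\eqref{expression of d3}. For $\mathbf{r}_{+,\sigma}$ I take $(a_0,A_0)=(1,\tfrac{kq_k}{\sigma})$, and for $\mathbf{\Psi}_{+,\sigma}$ I take $(a_0,A_0)=(0,1+\tfrac{\sigma}{q_k})$. I then set $\bar a_{n,\sigma}=a_n$, $\bar A_{n,\sigma}=A_n$ (and similarly for the tilde coefficients). The leading vectors $(1,\tfrac{kq_k}{\sigma})^T$ and $(0,1+\tfrac{\sigma}{q_k})^T$ have determinant $1+\tfrac{\sigma}{q_k}$, so they are independent off $\sigma=-q_k$. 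This gives linear independence once the series are shown to converge. Since $\Re\sigma\le-\eta$, the quantities $\sigma$ and $\sigma-k^2$ are bounded away from $0$ uniformly on $\mathbb{I}_{[-1-\eta,-\eta]}$. Hence $d_1=O(k^2)$, $d_2=O(k)$, $d_3=O(k^2)$, and $A_0$ is bounded, all holomorphically and uniformly in $\sigma$.

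The only possible singularities occur at the $n=0$ step. In \eqref{eq:algebraic 1}, the coefficient $1+\sigma$ of $a_1$ vanishes at $\sigma=-1$. However, both $d_1$ and $d_2$ contain the factor $(\sigma+1)$. Therefore
\[
a_1=\frac{d_1a_0+d_2A_0}{1+\sigma}=-\frac{\lambda_0}{\sigma-k^2}\Bigl(\bigl(k^2(2+k^2)+k^2\sigma\bigr)a_0+2k(1+k^2)A_0\Bigr)
\]
is holomorphic on the strip and of size $O(k)$. This is the cancellation coming from the background expansion at $z=0$: no renormalisation at $\sigma=-1$ is needed.

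In \eqref{eq:An albegraic equation} with $n=0$, we have $(q_k+\sigma)A_1=d_3A_0-ka_1$, and the coefficient vanishes at $\sigma=-q_k$. I will check that the numerator vanishes there in both cases.
\begin{itemize}
\item For $\mathbf{\Psi}_{+,\sigma}$, both $a_1$ and $A_0$ are multiples of $A_0=1+\tfrac{\sigma}{q_k}$, so the numerator carries the factor $(q_k+\sigma)$.
\item For $\mathbf{r}_{+,\sigma}$, at $\sigma=-q_k$ we have $\sigma-k^2=-1$ and $A_0=-k$. Then $a_1=\lambda_0\bigl(k^2(1+2k^2)+2k(1+k^2)A_0\bigr)$, and the condition $ka_1=k^2\lambda_0A_0$ reduces to $A_0\,k(1+2k^2)=-k^2(1+2k^2)$, i.e.\ $A_0=-k$. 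This holds, which is exactly why $A_0=kq_k/\sigma$ was chosen.
\end{itemize}
Thus $A_1$ has a removable singularity and is holomorphic. Its bound $|A_1|\lesssim k$ follows from the maximum principle/Cauchy estimate on a fixed small circle around $-q_k$, where the division is harmless.

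For $n\ge1$ the coefficients satisfy the lower bounds
\[
|q_kn+1+\sigma|\ge q_k-\eta>0,\qquad |q_k(n+1)+\sigma|\ge 2q_k-1-\eta>0
\]
for $\eta<\tfrac12$. They grow like $n$, so every subsequent $a_{n+1},A_{n+1}$ is a holomorphic combination divided by a factor $\gtrsim n+1$. An induction then gives $|a_n|,|A_n|\le Ck^n/n!$ uniformly in $\sigma$:
\begin{itemize}
\item $a_{n+1}$ comes from $d_1a_n+d_2A_n=O(k)\cdot Ck^n/n!$;
\item $A_{n+1}$ comes from $d_3A_n-ka_{n+1}$, again of size $O(k^{n+1}/n!)$;
\item dividing by $\gtrsim n+1$ and absorbing the implicit constants (using $k$ small) closes the induction.
\end{itemize}
The factorial bound gives locally uniform convergence of the series on $[-1,0]$, together with their $z$-derivatives. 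Term-by-term substitution then shows that $\mathcal{L}_p=0$ holds, and holomorphy in $\sigma$ passes to the limit.

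The main obstacle is the $n=0$ step, in particular verifying that the numerator $d_3A_0-ka_1$ vanishes at $\sigma=-q_k$ for $\mathbf{r}_{+,\sigma}$, while simultaneously keeping uniform $k$-dependent bounds for $A_1$ near $\sigma=-q_k$. I would handle this by the explicit algebra above and a Cauchy-integral bound on a fixed circle around $-q_k$.
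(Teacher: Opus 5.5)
Your construction is the paper's. It uses the same recurrences with $\alpha_i=d_i$, the same seeds $(a_0,A_0)=(1,kq_k/\sigma)$ and $(0,1+\sigma/q_k)$, the same cancellation of $1+\sigma$ against the factor in $d_1,d_2$, and the same vanishing of $d_3A_0-ka_1$ at $\sigma=-q_k$. The paper handles the last point with the closed-form identity
\[
(q_k+\sigma)A_1=\frac{\lambda_0(\sigma+2+k^2)}{\sigma-k^2}\bigl(k^2A_0+k^3a_0\bigr).
\]
This shows at once that the singularity is removable exactly when $A_0=-ka_0$ at $\sigma=-q_k$, and it gives $A_1$ explicitly, so your Cauchy-estimate step is not needed. (The value $A_0=-\frac{k(1+2k^2)}{2(2+k^2)}$ quoted at the start of the paper's proof fails this condition, since then $k^2A_0+k^3\neq 0$ at $-q_k$. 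Your seed $kq_k/\sigma$ is the one the paper actually computes with and the one matching the statement.)

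One part of your write-up is wrong as stated and needs repair: uniformity in $\sigma$. The strip $\mathbb{I}_{[-1-\eta,-\eta]}$ is unbounded in $\Im\sigma$.
\begin{itemize}
\item For large $|\sigma|$ one has $d_1\approx -k^2\lambda_0\sigma$, so $d_1$ is not $O(k^2)$ uniformly.
\item For $\mathbf{\Psi}_{+,\sigma}$ the seed $A_0=1+\sigma/q_k$ is unbounded.
\item With only $|q_kn+1+\sigma|\ge q_k-\eta$, your induction therefore gives constants depending on $|\sigma|$. The $\sigma$-independence of $C$ is exactly what is used later, for $Y_\sigma^{-1}$ and for the outgoing solutions of $\mathcal{L}$.
\end{itemize}
The fix is short. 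At $n=0$ the explicit formulas show $a_1$ and $A_1$ are uniformly bounded, because the growing quantities enter only through the bounded ratios $\frac{k^2\sigma}{\sigma-k^2}$ and $\frac{1+\sigma/q_k}{\sigma-k^2}$. For $n\ge 1$, use $|q_kn+1+\sigma|\gtrsim n+|\Im\sigma|$, which gives $|d_1|/|q_kn+1+\sigma|\lesssim k^2$ uniformly.

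Relatedly, ``absorbing the constants using $k$ small'' does not close the bare hypothesis $|a_n|,|A_n|\le Ck^n/n!$. The reason is that $d_2A_n$ gains only one power of $k$, with a constant of size roughly $\lambda_0/\eta$ that is not small. There are two ways to close it.
\begin{itemize}
\item Induct on the sharper hypothesis $|a_n|\le Ck^n/n!$ and $|A_n|\le Ck^{n+1}/n!$ for $n\ge 1$. This holds for both seeds, because cross-coupling between $a$ and $A$ costs a factor $k$ and self-coupling costs $k^2$, and with it smallness of $k$ does close the induction.
\item Alternatively, use the quadratic growth of $(q_kn+1+\sigma)(n+1)$ and enlarge $C$ to cover the finitely many small $n$.
\end{itemize}
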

\begin{proof}
The proof will follow similarly to the proof of Proposition \ref{prop: construction of model ingoing solutions}. To construct $\mathbf{r}_{+,\sigma}$, we take $(a_{0},A_{0}) = \left(1,-\frac{k(1+2k^{2})}{2(2+k^{2})}\right)$ and $(\bar{a}_{n},\bar{A}_{n}) = (a_{2n},A_{2n})$, where $(a_{2n},A_{2n})$ is determined inductively by the algebraic relations \eqref{eq:algebraic 1}-\eqref{eq:An albegraic equation}. To construct $\mathbf{\Psi}_{+,\sigma}$, we take $(a_{0},A_{0}) = (0,1)$ and $(\widetilde{a}_{n},\widetilde{A}_{n}) = (a_{2n},A_{2n})$, where $(a_{2n},A_{2n})$ again is determined inductively by the algebraic relations \eqref{eq:algebraic 1}-\eqref{eq:An albegraic equation}.

Since in the region $\sigma\in\mathbb{I}_{[-1-\eta,-\eta]}$, the coefficients \begin{equation*}
(q_{k}n+1+\sigma)(n+1),\quad (q_{k}(n+1)+\sigma)(n+1)
\end{equation*}
of $a_{n+1}$ and $A_{n+1}$ in \eqref{eq:algebraic 1}-\eqref{eq:An albegraic equation} can only be zero if $n = 0$, we write down the algebraic equations for $a_{1}$ and $A_{1}$ explicitly:\begin{align*}
    (1+\sigma)a_{1} &= d_{1}a_{0}+d_{2}A_{0},\\
    (q_{k}+\sigma)A_{1}+ka_{1}& = d_{3}A_{0}.
\end{align*}
Substituting the values of $\alpha_{i}$ \eqref{eq: special structure of alphai} in $\mathcal{L}_{p}$, we have\begin{align}
&(1+\sigma)a_{1} = -\left(k^{2}\sigma+k^{2}(2+k^{2})\right)\lambda_{0}\frac{\sigma+1}{\sigma-k^{2}}a_{0}-\frac{2k\lambda_{0}(1+k^{2})}{\sigma-k^{2}}(\sigma+1)A_{0},\label{eq: key cancellation for algebraic equation of a2}\\&
(q_{k}+\sigma)A_{1}+ka_{1} = k^{2}\lambda_{0}A_{0}.
\end{align}
By \eqref{eq: key cancellation for algebraic equation of a2}, the dangerous coefficient $1+\sigma$ will be canceled. Hence, we have\begin{align}
a_{1} &= -\left(k^{2}\sigma+k^{2}(2+k^{2})\right)\frac{\lambda_{0}}{\sigma-k^{2}}a_{0}-\frac{2k\lambda_{0}(1+k^{2})}{\sigma-k^{2}}A_{0},\\
A_{1} &= \frac{\lambda_{0}(k^{2}A_{0}+k^{3}a_{0})}{(\sigma-k^{2})(q_{k}+\sigma)}\left(\sigma+(2+k^{2})\right).
\end{align}
Hence, if $(a_{0},A_{0}) = \left(1
,\frac{kq_{k}}{\sigma}\right)$, we have \begin{equation*}
    \bar{a}_{1,\sigma} = a_{1} =-k^{2}\lambda_{0}\frac{\sigma+2+k^{2}}{\sigma-k^{2}}-\frac{2k^{2}\lambda_{0}(1+k^{2})q_{k}}{(\sigma-k^{2})\sigma},\quad\bar{A}_{1,\sigma} = A_{1} = \frac{\lambda_{0}k^{3}}{(\sigma-k^{2})\sigma}(\sigma+k^{2}+2),
\end{equation*}
which shows that $\bar{a}_{1,\sigma}$ and $\bar{A}_{1,\sigma}$ are holomorphic functions for $\sigma\in\mathbb{I}_{[-1-\eta,-\eta]}$. Similarly, if $(a_{0},A_{0}) = \left(0,1+\frac{\sigma}{q_{k}}\right)$, we have\begin{equation*}
    \widetilde{a}_{1,\sigma} = a_{1} =-\frac{2k\lambda_{0}(1+k^{2})}{\sigma-k^{2}}\left(1+\frac{\sigma}{q_{k}}\right),\quad \widetilde{A}_{1,\sigma} = A_{1} =\frac{k^{2}\lambda_{0}(\sigma+k^{2}+2)}{q_{k}(\sigma-k^{2})},
\end{equation*}
which shows that $\widetilde{a}_{1,\sigma}$ and $\widetilde{A}_{1,\sigma}$ are holomorphic functions for $\sigma\in\mathbb{I}_{[-1-\eta,-\eta]}$. The holomorphicity of $\bar{a}_{n,\sigma},\bar{A}_{n,\sigma},\widetilde{a}_{n,\sigma},\widetilde{A}_{n,\sigma}$ and the estimate \eqref{eq: estimate on the expansion of the outgoing solutions} follow inductively. This concludes the proof.
\end{proof}

\subsection{Construction of ingoing solutions to the linearized Einstein-scalar field system}
In this section, we derive the ingoing solutions to the equation $\mathcal{L}(f,g) = 0$. We consider the following equation \begin{equation}
\label{eq: general equation in the construction of ingoing solution}
\widetilde{L}_{p}(f,g) = E_{1,\sigma}(z)\frac{d}{dz}\begin{bmatrix}
    f\\g
\end{bmatrix}+E_{2,\sigma}(z)\begin{bmatrix}
    f\\g
\end{bmatrix},
\end{equation}
where $E_{1,\sigma}$ and $E_{2,\sigma}$ are $2\times2$ matrix-valued functions of $z$ of order $O_{k}(z)$ and $O(k)$ respectively, and $\widetilde{\mathcal{L}}_{p}$ takes the form of \begin{equation}
    \widetilde{\mathcal{L}}_{p}(f,g) = q_{k}z\frac{d^{2}}{dz^{2}}\begin{bmatrix}
        f\\g
    \end{bmatrix}+\begin{bmatrix}
        1+\sigma&0\\
        k&q_{k}+\sigma
    \end{bmatrix}\frac{d}{dz}\begin{bmatrix}
        f\\g
    \end{bmatrix}.
\end{equation}
\begin{remark}
The equation \eqref{eq: general equation in the construction of ingoing solution} can be reduced to $\mathcal{L} = 0$ by moving all the zeroth-order terms on the left-hand side of \eqref{schematic form of rp}-\eqref{schematic form of psip} to the right-hand side of \eqref{schematic form of rp}-\eqref{schematic form of psip} and defining $E_{1,\sigma}$ and $E_{2,\sigma}$ accordingly. However, since the ingoing solutions are insensitive to the precise structure of the equations, we provide a general framework here.
\end{remark}
\begin{remark}
    Note that $E_{1}$ and $E_{2}$ are not necessarily analytic. Hence, one can not directly apply the standard theory of constructing outgoing and ingoing solutions for a system of ODEs with regular singularities. In fact, since the coefficients on the right-hand side of \eqref{schematic form of rp}-\eqref{schematic form of psip} are from the $k$-self-similar naked singularity spacetime, $E_{1}$ and $E_{2}$ should be thought of as Hölder-continuous matrix-valued functions.
\end{remark}
We adopt the following convention: \begin{convention}
Let $N$ be a $n\times m$ matrix. We denote by $N_{ij}$ the element of $N$ in the $i$-th row and $j$-th column. For $m = 1$, we abbreviate the notation by writing $N_{i}$ the element of $N$ in the $i$-th row. For $m = n$, we denote the co-factor of $N_{ij}$ as $[N]_{ij}$.
\end{convention}
 We have the following proposition constructing ingoing solutions:
\begin{proposition}
\label{prop: general construction of outgoing and ingoing solutions}
For $\sigma\in\mathbb{I}_{[-1-\eta,-\eta]}$, there exist two linearly independent solutions \begin{equation*}
    \mathbf{f}_{in,\sigma} = \begin{bmatrix}
       \mathbf{f}_{in,\sigma}^{(1)}\\[.5em]\mathbf{f}_{in,\sigma}^{(2)} 
    \end{bmatrix},\mathbf{g}_{in,\sigma} = \begin{bmatrix}
        \mathbf{g}_{in,\sigma}^{(1)}\\[.5em]\mathbf{g}_{in,\sigma}^{(2)}
    \end{bmatrix}
\end{equation*}
to \eqref{eq: general equation in the construction of ingoing solution}, having the following expansions:\begin{align}
    \mathbf{f}_{in,\sigma} &= \begin{bmatrix} \mathbf{f}_{in,\sigma}^{(1)}\\[.5em]\mathbf{f}_{in,\sigma}^{(2)}
    \end{bmatrix}
     = (-z)^{-\frac{k^{2}+\sigma}{q_{k}}}\begin{bmatrix}
         1\\\frac{1}{k}
     \end{bmatrix}+O_{k,\sigma}\left((-z)^{-\frac{k^{2}+\sigma}{q_{k}}+1}\right),\label{expansion of real ingoing solution 1}\\[.5em]
     \mathbf{g}_{in,\sigma}& = \begin{bmatrix}
         \mathbf{g}_{in,\sigma}^{(1)}\\\mathbf{g}_{in,\sigma}^{(2)}
     \end{bmatrix}
      = (-z)^{-\frac{\sigma}{q_{k}}}\begin{bmatrix}
          0\\1
      \end{bmatrix}+O_{k,\sigma}((-z)^{-\frac{\sigma}{q_{k}}+1}).\label{expansion of real ingoing solution 2}
\end{align}
\end{proposition}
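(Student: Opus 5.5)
The plan is to treat \eqref{eq: general equation in the construction of ingoing solution} as a perturbation of $\widetilde{\mathcal{L}}_{p}(f,g)=0$. We use the explicit ingoing basis of the unperturbed operator and set up a Volterra iteration starting from $z=0$, as in Step 3 of Section~\ref{sec: review on the linear wave}. Dropping the right-hand side, the system $\widetilde{\mathcal{L}}_{p}=0$ has the exact solution basis \eqref{eq: solution basis}:
\begin{equation*}
\begin{bmatrix}1\\0\end{bmatrix},\quad \begin{bmatrix}0\\1\end{bmatrix},\quad (-z)^{-\frac{k^{2}+\sigma}{q_{k}}}\begin{bmatrix}1\\ \frac{1}{k}\end{bmatrix},\quad (-z)^{-\frac{\sigma}{q_{k}}}\begin{bmatrix}0\\1\end{bmatrix}.
\end{equation*}
Since $\widetilde{\mathcal{L}}_{p}$ is lower triangular, I will solve the inhomogeneous problem $\widetilde{\mathcal{L}}_{p}(f,g)=(H_{1},H_{2})$ by variation of constants, one row at a time.

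The first row is the scalar equation $q_{k}zf''+(1+\sigma)f'=H_{1}$. Writing $\beta_{1}=\frac{1+\sigma}{q_{k}}$, the identity $\frac{d}{dz}\bigl((-z)^{-\beta_{1}}f'\bigr)=-(-z)^{-\beta_{1}-1}H_{1}/q_{k}$ gives $f$ through a double integral. The second row is the scalar equation $q_{k}zg''+(q_{k}+\sigma)g'=H_{2}-kf'$, which is treated the same way. This yields an explicit Green's kernel $P_{\sigma}(z,z')$ for $z'\in(z,0)$, consisting of terms of the form $\bigl(1-((-z')/(-z))^{\beta}\bigr)$ divided by $\beta$. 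In the ingoing case the integration constants are fixed by prescribing the leading behaviour $(-z)^{-\frac{k^{2}+\sigma}{q_{k}}}(1,\frac{1}{k})^{T}$, respectively $(-z)^{-\frac{\sigma}{q_{k}}}(0,1)^{T}$, and integrating the error from $z'=0$. For $\mathbf{f}_{in,\sigma}$ I then write
\begin{equation*}
\mathbf{f}=\mathbf{f}^{(0)}+\int_{z}^{0}P_{\sigma}(z,z')\bigl(E_{1,\sigma}\mathbf{f}'+E_{2,\sigma}\mathbf{f}\bigr)(z')\,dz',
\end{equation*}
and similarly for $\mathbf{g}_{in,\sigma}$.

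It is convenient to renormalize by setting $\mathbf{f}=(-z)^{-\frac{k^{2}+\sigma}{q_{k}}}\mathbf{w}$. After this substitution the equation for $\mathbf{w}$ has lowest-order coefficients of size $E_{2}$, and the $E_{1}$ term contributes $E_{1}(-z)^{-1}\mathbf{w}$ plus $E_{1}\mathbf{w}'$. Because $E_{1}=O_{k}(z)$, the factor $E_{1}(-z)^{-1}$ is bounded, so every error term is integrable at $z=0$. Moreover, on $\mathbb{I}_{[-1-\eta,-\eta]}$ the ratios $((-z')/(-z))^{\beta}$ that appear have exponents with real part ensuring $|P|\lesssim 1$ or $|P|\lesssim (-z')/(-z)$ up to constants depending on $\eta$. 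I then run a contraction, or equivalently a Volterra series, in the norm $\sup_{z}\bigl(|\mathbf{w}|+|z||\mathbf{w}'|\bigr)$ on $[z_{0},0]$. The $n$-th iterate is bounded by $C^{n}|z|^{n}/n!$, so the series converges. This gives $\mathbf{w}=\mathbf{w}^{(0)}+O_{k,\sigma}(z)$, which is exactly \eqref{expansion of real ingoing solution 1}. Holomorphy in $\sigma$ follows because every iterate is holomorphic and the convergence is uniform on compact sets. The solution is then extended from $[z_{0},0]$ to all of $[-1,0]$ as a regular linear ODE solution, since $z\neq 0$ there and the coefficients are continuous.

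Linear independence follows from the distinct leading asymptotics: $\frac{k^{2}+\sigma}{q_{k}}\neq\frac{\sigma}{q_{k}}$, and both solutions are also independent of the outgoing behaviours. The main obstacle is the $E_{1}\mathbf{f}'$ term combined with the coupling entry $k$ of the first-order matrix. The factor $\frac{1}{k}$ in $\mathbf{f}_{in,\sigma}$ produces constants of size $k^{-1}$, which is why the result is stated only with $O_{k,\sigma}$. The other delicate point is the kernel for the $g$-row when its exponent $\frac{\sigma}{q_{k}}$ coincides with one arising from the $f$-row, namely $\frac{k^{2}+\sigma}{q_{k}}$ shifted by an integer. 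Since the real parts differ by less than one, resonance between these exponents cannot occur for $k$ small, and this is the fact I would rely on.
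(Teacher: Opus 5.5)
This is essentially the paper's proof. Both arguments run a Volterra iteration from $z=0$ around the explicit basis of $\widetilde{\mathcal{L}}_{p}=0$, with the outgoing coefficients set to zero, and obtain $1/n!$ bounds on the iterates. Your row-by-row scalar Green's kernel, with the leading power factored out, is just the componentwise version of the paper's variation of parameters through $F_{\sigma}^{-1}Q_{\sigma}F_{\sigma}$.

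There are two small slips, neither of which affects the argument.
\begin{enumerate}
\item The first-row identity should read $\frac{d}{dz}\bigl((-z)^{\beta_{1}}f'\bigr)=-(-z)^{\beta_{1}-1}H_{1}/q_{k}$; the exponent sign in your integrating factor is reversed.
\item For $\mathbf{f}_{in,\sigma}$, the renormalized $g$-row kernel is not bounded. It is only $\lesssim\bigl((-z)/(-z')\bigr)^{k^{2}/q_{k}}$. This is harmless because $k^{2}/q_{k}<1$ gives $(-z)^{k^{2}/q_{k}}\int_{0}^{|z|}t^{n-k^{2}/q_{k}}\,dt\leq 2|z|^{n+1}/(n+1)$, so the factorial bounds survive. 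This is the same $(-z)^{-k^{2}/q_{k}}$ factor that appears in the paper's bound on the fourth component of the iterates.
\end{enumerate}
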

\begin{proof}
We can write down an explicit solution basis for $\widetilde{\mathcal{L}}_{p}(f,g) = 0$:\begin{equation*}
    \left\{\begin{bmatrix}
        1\\0
    \end{bmatrix},\ \begin{bmatrix}
        0\\1
    \end{bmatrix},\ (-z)^{-\frac{k^{2}+\sigma}{q_{k}}}\begin{bmatrix}
        1\\\frac{1}{k}
    \end{bmatrix},\ (-z)^{-\frac{\sigma}{q_{k}}}\begin{bmatrix}
        0\\1
    \end{bmatrix}\right\} = :\left\{\begin{bmatrix}
        \widetilde{f}_{1,\sigma}\\\widetilde{g}_{1,\sigma}
    \end{bmatrix},\ \begin{bmatrix}
        \widetilde{f}_{2,\sigma}\\\widetilde{g}_{2,\sigma}
    \end{bmatrix},\ \begin{bmatrix}
        \widetilde{f}_{3,\sigma}\\\widetilde{g}_{3,\sigma}
    \end{bmatrix},\ \begin{bmatrix}
        \widetilde{f}_{4,\sigma}\\\widetilde{g}_{4,\sigma}
    \end{bmatrix}\right\}.
\end{equation*}
\paragraph{Deriving the Volterra integral equation}
Assume the solution to \eqref{eq: general equation in the construction of ingoing solution} takes the form of \begin{equation}
\begin{bmatrix}
f\\g
\end{bmatrix}
 = A_{\sigma}(z)\begin{bmatrix}
 1\\0
 \end{bmatrix}
 +B_{\sigma}(z)\begin{bmatrix}
 0\\1
 \end{bmatrix}
 +C_{\sigma}(z)\begin{bmatrix}
 (-z)^{-\frac{k^{2}+\sigma}{q_{k}}}\\\frac{1}{k}(-z)^{-\frac{k^{2}+\sigma}{q_{k}}}
 \end{bmatrix}
 +D_{\sigma}(z)\begin{bmatrix}
 0\\(-z)^{-\frac{\sigma}{q_{k}}}
 \end{bmatrix}.
\end{equation}
Then one can compute the derivatives of $f$ and $g$:\begin{equation}
\begin{aligned}
\frac{d}{dz}\begin{bmatrix}
f\\g
\end{bmatrix}
 = &A_{\sigma}(z)\begin{bmatrix}
 \widetilde{f}_{1,\sigma}^{\prime}\\\widetilde{g}_{1,\sigma}^{\prime}
\end{bmatrix}
+B_{\sigma}(z)\begin{bmatrix}
\widetilde{f}^{\prime}_{2,\sigma}\\\widetilde{g}_{2,\sigma}^{\prime}
\end{bmatrix}
+C_{\sigma}(z)\begin{bmatrix}
\widetilde{f}_{3,\sigma}^{\prime}\\\widetilde{g}_{3,\sigma}^{\prime}
\end{bmatrix}
+D_{\sigma}(z)\begin{bmatrix}
\widetilde{f}_{4,\sigma}^{\prime}\\\widetilde{g}_{4,\sigma}^{\prime}
\end{bmatrix}
\\&+A_{\sigma}^{\prime}(z)\begin{bmatrix}
\widetilde{f}_{1,\sigma}\\\widetilde{g}_{1,\sigma}
\end{bmatrix}
+B_{\sigma}^{\prime}(z)\begin{bmatrix}
\widetilde{f}_{2,\sigma}\\\widetilde{g}_{2,\sigma}
\end{bmatrix}
+C_{\sigma}^{\prime}(z)\begin{bmatrix}
\widetilde{f}_{3,\sigma}\\\widetilde{g}_{3,\sigma}
\end{bmatrix}
+D_{\sigma}^{\prime}(z)\begin{bmatrix}
\widetilde{f}_{4,\sigma}\\\widetilde{g}_{4,\sigma}
\end{bmatrix}.
\end{aligned}
\label{eq: change of variable first order equation}
\end{equation}
We assume that \begin{equation}
A_{\sigma}^{\prime}(z)\begin{bmatrix}
\widetilde{f}_{1,\sigma}\\\widetilde{g}_{1,\sigma}
\end{bmatrix}
+B_{\sigma}^{\prime}(z)\begin{bmatrix}
\widetilde{f}_{2,\sigma}\\\widetilde{g}_{2,\sigma}
\end{bmatrix}
+C_{\sigma}^{\prime}(z)\begin{bmatrix}
\widetilde{f}_{3,\sigma}\\\widetilde{g}_{3,\sigma}
\end{bmatrix}
+D_{\sigma}^{\prime}(z)\begin{bmatrix}
\widetilde{f}_{4,\sigma}\\\widetilde{g}_{4,\sigma}
\end{bmatrix}
  =0.\label{eq: first order assumption}
\end{equation} 
Taking the second derivatives of $f$ and $g$, we have\begin{equation}
\begin{aligned}
\frac{d^{2}}{dz^{2}}\begin{bmatrix}
f\\g
\end{bmatrix}
 =& A_{\sigma}(z)\frac{d^{2}}{dz^{2}}\begin{bmatrix}
 \widetilde{f}_{1,\sigma}\\\widetilde{g}_{1,\sigma}
 \end{bmatrix}
 +B_{\sigma}(z)\frac{d^{2}}{dz^{2}}\begin{bmatrix}
 \widetilde{f}_{2,\sigma}\\\widetilde{g}_{2,\sigma}
 \end{bmatrix}
 +C_{\sigma}(z)\frac{d^{2}}{dz^{2}}\begin{bmatrix}
 \widetilde{f}_{3,\sigma}\\\widetilde{g}_{3,\sigma}
 \end{bmatrix}
 +D_{\sigma}(z)\frac{d^{2}}{dz^{2}}\begin{bmatrix}
\widetilde{f}_{4,\sigma}\\\widetilde{g}_{4,\sigma}
 \end{bmatrix}
 \\&+A_{\sigma}^{\prime}(z)\frac{d}{dz}\begin{bmatrix}
 \widetilde{f}_{1,\sigma}\\\widetilde{g}_{1,\sigma}
 \end{bmatrix}+B_{\sigma}^{\prime}(z)\frac{d}{dz}\begin{bmatrix}
 \widetilde{f}_{2,\sigma}\\\widetilde{g}_{2,\sigma}
 \end{bmatrix}
 +C_{\sigma}^{\prime}(z)\frac{d}{dz}\begin{bmatrix}
 \widetilde{f}_{3,\sigma}\\\widetilde{g}_{3,\sigma}
 \end{bmatrix}
 +D_{\sigma}^{\prime}(z)\frac{d}{dz}\begin{bmatrix}
\widetilde{f}_{4,\sigma}\\\widetilde{g}_{4,\sigma}
 \end{bmatrix}.
\end{aligned}
\label{eq: change of variable second order equation}
\end{equation}
Combining \eqref{eq: first order assumption}, \eqref{eq: change of variable first order equation}, \eqref{eq: change of variable second order equation} and using the equation \eqref{eq: general equation in the construction of ingoing solution}, we can deduce that {\footnotesize\begin{equation}
\begin{bmatrix}
\widetilde{f}_{1,\sigma}&\widetilde{f}_{2,\sigma}&\widetilde{f}_{3,\sigma}&\widetilde{f}_{4,\sigma}\\
\widetilde{g}_{1,\sigma}&\widetilde{g}_{2,\sigma}&\widetilde{g}_{3,\sigma}&\widetilde{g}_{4,\sigma}\\
\widetilde{f}_{1,\sigma}^{\prime}&\widetilde{f}_{2,\sigma}^{\prime}&\widetilde{f}_{3,\sigma}^{\prime}&\widetilde{f}_{4,\sigma}^{\prime}\\\widetilde{g}_{1,\sigma}^{\prime}&\widetilde{g}_{2,\sigma}^{\prime}&\widetilde{g}_{3,\sigma}^{\prime}&\widetilde{g}_{4,\sigma}^{\prime}
\end{bmatrix}
\frac{d}{dz}\begin{bmatrix}
A_{\sigma}(z)\\B_{\sigma}(z)\\C_{\sigma}(z)\\D_{\sigma}(z)
\end{bmatrix}
 = \begin{bmatrix}
 0&0\\p_{k}(-z)^{-1}E_{2,\sigma}(z)&p_{k}(-z)^{-1}E_{1,\sigma}(z)
 \end{bmatrix}
 \begin{bmatrix}
 \widetilde{f}_{1,\sigma}&\widetilde{f}_{2,\sigma}&\widetilde{f}_{3,\sigma}&\widetilde{f}_{4,\sigma}\\
 \widetilde{g}_{1,\sigma}&\widetilde{g}_{2,\sigma}&\widetilde{g}_{3,\sigma}&\widetilde{g}_{4,\sigma}\\
 \widetilde{f}_{1,\sigma}^{\prime}&\widetilde{f}_{2,\sigma}^{\prime}&\widetilde{f}_{3,\sigma}^{\prime}&\widetilde{f}_{4,\sigma}^{\prime}\\\widetilde{g}_{1,\sigma}^{\prime}&\widetilde{g}_{2,\sigma}^{\prime}&\widetilde{g}_{3,\sigma}^{\prime}&\widetilde{g}_{4,\sigma}^{\prime}
 \end{bmatrix}
 \begin{bmatrix}
 A_{\sigma}(y)\\B_{\sigma}(y)\\C_{\sigma}(y)\\D_{\sigma}(y)
 \end{bmatrix}.
 \label{eq: differential form of the volterra equation}
\end{equation}
}

Denote \begin{equation}
F_{\sigma}: = \begin{bmatrix}
\widetilde{f}_{1,\sigma}&\widetilde{f}_{2,\sigma}&\widetilde{f}_{3,\sigma}&\widetilde{f}_{4,\sigma}\\\widetilde{g}_{1,\sigma}&\widetilde{g}_{2,\sigma}&\widetilde{g}_{3,\sigma}&\widetilde{g}_{4,\sigma}\\\widetilde{f}_{1,\sigma}^{\prime}&\widetilde{f}_{2,\sigma}^{\prime}&\widetilde{f}_{3,\sigma}^{\prime}&\widetilde{f}_{4,\sigma}^{\prime}\\
\widetilde{g}_{1,\sigma}^{\prime}&\widetilde{g}_{2,\sigma}^{\prime}&\widetilde{g}_{3,\sigma}^{\prime}&\widetilde{g}_{4,\sigma}^{\prime}
\end{bmatrix},\quad Q_{\sigma}:=\begin{bmatrix}
0&0\\p_{k}(-z)^{-1}E_{2,\sigma}(z)&p_{k}(-z)^{-1}E_{1,\sigma}(z)
\end{bmatrix}
,\quad \widetilde{Q}_{\sigma} = F_{\sigma}^{-1}Q_{\sigma}F_{\sigma}.
\label{eq: definition of Fsigma and Qsigma}
\end{equation}
Then we can write the equation \eqref{eq: differential form of the volterra equation} in the form of the Volterra integral equation: \begin{equation}
\begin{bmatrix}
A_{\sigma}(z)\\B_{\sigma}(z)\\C_{\sigma}(z)\\D_{\sigma}(z)
\end{bmatrix}
 = \begin{bmatrix}
 A_{\sigma}(0)\\B_{\sigma}(0)\\C_{\sigma}(0)\\D_{\sigma}(0)
 \end{bmatrix}
 +\int_{0}^{z}F_{\sigma}^{-1}Q_{\sigma}
 F_{\sigma}\begin{bmatrix}
 A(y)\\B(y)\\C(y)\\D(y)
 \end{bmatrix}
 dy.\label{Voltera equation}
\end{equation}
Substituting $(A_{\sigma},B_{\sigma},C_{\sigma},D_{\sigma})$ on the right-hand side of \eqref{Voltera equation} using \eqref{Voltera equation} again, we have the following sum:\begin{align}
\begin{bmatrix}
A(z)\\B(z)\\C(z)\\D(z)
\end{bmatrix}
 =& \begin{bmatrix}
 A(0)\\B(0)\\C(0)\\D(0)
 \end{bmatrix}
 +\sum_{n = 0}^{\infty}\int_{0}^{z}\int_{0}^{y_{1}}\cdots\int_{0}^{y_{n}}\widetilde{Q}(y_{1})\cdots\widetilde{Q}(y_{n+1})\begin{bmatrix}
 A(0)\\B(0)\\C(0)\\D(0)
 \end{bmatrix}dy_{1}\cdots dy_{n+1}\\
=&:\begin{bmatrix}
A(0)\\B(0)\\C(0)\\D(0)
\end{bmatrix}
+\sum_{n = 0}^{\infty}\begin{bmatrix}
Q_{n,\sigma}^{(1)}(z)\\Q_{n,\sigma}^{(2)}(z)\\Q_{n,\sigma}^{(3)}(z)\\Q_{n,\sigma}^{(4)}(z)
\end{bmatrix}
.
\end{align}
\paragraph{Estimates on $F_{\sigma},Q_{\sigma}$, and $\widetilde{Q}_{\sigma}$}
To estimate $$\begin{bmatrix}
Q^{(1)}_{n,\sigma}\\Q^{(2)}_{n,\sigma}\\Q^{(3)}_{n,\sigma}\\Q^{(4)}_{n,\sigma}
\end{bmatrix},$$ we first establish the estimate for $F_{\sigma}$ and $F_{\sigma}^{-1}$. The matrix $F_{\sigma}$ takes the form of \begin{equation}
F_{\sigma}=
\begin{bmatrix}
 1&0&(-z)^{-\frac{k^{2}+\sigma}{q_{k}}}&0\\[1em]
 0&1&\frac{1}{k}(-z)^{-\frac{k^{2}+\sigma}{q_{k}}}&(-z)^{-\frac{\sigma}{q_{k}}}\\[1em]
 0&0&\frac{k^{2}+\sigma}{q_{k}}(-z)^{-\frac{k^{2}+\sigma}{q_{k}}-1}&0\\[1em]
 0&0&\frac{1}{k}\frac{k^{2}+\sigma}{q_{k}}(-z)^{-\frac{k^{2}+\sigma}{q_{k}}-1}&\frac{\sigma}{q_{k}}(-z)^{-\frac{\sigma}{q_{k}}-1}
 \end{bmatrix}.
\end{equation}

We can compute the inverse of $F_{\sigma}$:\begin{equation}
F_{\sigma}^{-1} = \begin{bmatrix}
1&0&\frac{q_{k}}{\sigma+k^{2}}z&0\\[1em]
0&1&\frac{k^{2}q_{k}}{k\sigma(\sigma+k^{2})}(-z)&\frac{q_{k}}{\sigma}z\\[1em]
0&0&\frac{q_{k}}{\sigma+k^{2}}(-z)^{\frac{k^{2}+\sigma}{q_{k}}+1}&0\\[1em]
0&0&-\frac{q_{k}}{k\sigma}(-z)^{\frac{\sigma}{q_{k}}+1}&\frac{q_{k}}{\sigma}(-z)^{\frac{\sigma}{q_{k}}+1}
\end{bmatrix}.
\end{equation}

Then we have for $\sigma\in\mathbb{I}_{[-1-\eta,-\eta]}$, $F_{\sigma}$ is invertible for $z\neq0$. For $F_{\sigma}^{-1}Q_{\sigma}F_{\sigma}$, we have the estimate \begin{equation*}
    \left\vert F_{\sigma}^{-1}Q_{\sigma}F_{\sigma}\right\vert = \begin{bmatrix}
        O(1)&O(1)&O((-z)^{-\frac{k^{2}+\sigma}{q_{k}}})&O((-z)^{-\frac{\sigma}{q_{k}}})\\
        O(1)&O(1)&O((-z)^{-\frac{k^{2}+\sigma}{q_{k}}})&O((-z)^{-\frac{\sigma}{q_{k}}})\\O((-z)^{\frac{k^{2}+\sigma}{q_{k}}})&O((-z)^{\frac{k^{2}+\sigma}{q_{k}}})&O(1)&O((-z)^{\frac{k^{2}}{q_{k}}})\\O((-z)^{\frac{\sigma}{q_{k}}})&O((-z)^{\frac{\sigma}{q_{k}}})&O(1)&O(1)
    \end{bmatrix}.
\end{equation*}

For $A(0) = B(0) = 0$, we have \begin{equation*}
    \widetilde{Q}_{\sigma}\begin{bmatrix}
        A(0)\\B(0)\\C(0)\\D(0)
    \end{bmatrix}=F_{\sigma}^{-1}Q_{\sigma}F_{\sigma}\begin{bmatrix}
        A(0)\\B(0)\\C(0)\\D(0)
    \end{bmatrix}
     = \begin{bmatrix}
         C(0)O_{k,\sigma}((-z)^{-\frac{k^{2}+\sigma}{q_{k}}})+D(0)(-z)^{-\frac{\sigma}{q_{k}}}\\
         C(0)(-z)^{-\frac{k^{2}+\sigma}{q_{k}}}+D(0)(-z)^{-\frac{\sigma}{q_{k}}}\\C(0)\\C(0)(-z)^{-\frac{k^{2}}{q_{k}}}+D(0)
     \end{bmatrix}.
\end{equation*}
Hence, we have \begin{equation*}
    \begin{bmatrix}
        \left\vert Q_{1,\sigma}^{(1)}\right\vert\\\left\vert Q_{1,\sigma}^{(2)}\right\vert\\\left\vert Q_{1,\sigma}^{(3)}\right\vert\\\left\vert Q_{1,\sigma}^{(4)}\right\vert
    \end{bmatrix}
    \lesssim_{k,\sigma}\begin{bmatrix}
       |C(0)|(-z)^{1-\frac{k^{2}+\sigma}{q_{k}}}+|D(0)|(-z)^{1-\frac{\sigma}{q_{k}}}\\
       |C(0)|(-z)^{1-\frac{k^{2}+\sigma}{q_{k}}}+|D(0)|(-z)^{1-\frac{\sigma}{q_{k}}}\\|C(0)|(-z)\\|C(0)|(-z)^{1-\frac{k^{2}}{q_{k}}}+|D(0)|(-z)
    \end{bmatrix}.
\end{equation*}

\paragraph{Convergence of the infinite sum}
Arguing inductively, we have \begin{equation*}
    \left\vert\begin{bmatrix}
        Q_{n,\sigma}^{(1)}\\Q_{n,\sigma}^{(2)}\\Q_{n,\sigma}^{(3)}\\Q_{n,\sigma}^{(4)}
    \end{bmatrix}\right\vert\lesssim_{k,\sigma}\frac{1}{n!}\begin{bmatrix}
       |C(0)|(-z)^{n-\frac{k^{2}+\sigma}{q_{k}}}+|D(0)|(-z)^{n-\frac{\sigma}{q_{k}}}\\
       |C(0)|(-z)^{n-\frac{k^{2}+\sigma}{q_{k}}}+|D(0)|(-z)^{n-\frac{\sigma}{q_{k}}}\\|C(0)|(-z)^{n}\\|C(0)|(-z)^{n-\frac{k^{2}}{q_{k}}}+|D(0)|(-z)^{n}
    \end{bmatrix}.
\end{equation*}
Taking $\mathbf{f}_{in,\sigma}$ to be the solution with $$(A(0),B(0),C(0),D(0)) = (0,0,1,0);$$ $\mathbf{g}_{in,\sigma}$ to be the solution with $$(A(0),B(0),C(0),D(0)) = \left(0,0,0,1\right),$$we can conclude the proof.
\end{proof}
\begin{remark}
Note that we did not track carefully the $k$-dependence of the expansions of the ingoing solutions constructed in Proposition \ref{prop: general construction of outgoing and ingoing solutions}. In fact, the terms $O_{k,\sigma}$ in \eqref{expansion of real ingoing solution 1}-\eqref{expansion of real ingoing solution 2} will be singular when $k\rightarrow0$.
\end{remark}

We can immediately prove the following corollary:
\begin{corollary}
\label{coro: construction of the true ingoing solution}
    For $\sigma\in\mathbb{I}_{[-1-\eta,-\eta]}$, there exist two linearly independent solutions\begin{equation*}
        \hat{\mathbf{r}}_{in,\sigma} = \begin{bmatrix}
            \hat{\mathbf{r}}_{in,\sigma}^{(1)}\\\hat{\mathbf{r}}_{in,\sigma}^{(2)}
        \end{bmatrix},\quad \hat{\mathbf{\Psi}}_{in,\sigma} = \begin{bmatrix}
            \hat{\mathbf{\Psi}}_{in,\sigma}^{(1)}\\\hat{\mathbf{\Psi}}_{in,\sigma}^{(2)}
        \end{bmatrix}
    \end{equation*}
    to the equation $\mathcal{L}(\hat{r}_{p},\hat{\Psi}_{p}) = 0$, having the following expansions:\begin{align}
        \hat{\mathbf{r}}_{in,\sigma} &= (-z)^{-\frac{k^{2}+\sigma}{q_{k}}}\begin{bmatrix}
            1\\\frac{1}{k}
        \end{bmatrix}
        +O_{k,\sigma}((-z)^{-\frac{k^{2}+\sigma}{q_{k}}+1}),\\
        \hat{\mathbf{\Psi}}_{in,\sigma}&=(-z)^{-\frac{\sigma}{q_{k}}}\begin{bmatrix}
            0\\1
        \end{bmatrix}
        +O_{k,\sigma}((-z)^{-\frac{\sigma}{q_{k}}+1}).
    \end{align}
\end{corollary}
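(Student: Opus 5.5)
The plan is to obtain Corollary~\ref{coro: construction of the true ingoing solution} by specializing Proposition~\ref{prop: general construction of outgoing and ingoing solutions} to the system $\mathcal{L}(\hat r_p,\hat\Psi_p)=0$. The work consists of rewriting $\mathcal{L}=0$ in the form \eqref{eq: general equation in the construction of ingoing solution} and checking that the resulting error matrices are admissible.

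\textbf{Step 1: rewrite the system.} Start from Proposition~\ref{prop: relation between Lp and L}, which gives
\[
\mathcal{L}_p(\hat r_p,\hat\Psi_p)=N\,\tfrac{d}{dz}(\hat r_p,\hat\Psi_p)^T+h\,(\hat r_p,\hat\Psi_p)^T .
\]
Move the zeroth-order terms $d_1\hat r_p+d_2\hat\Psi_p$ and $d_3\hat\Psi_p$ to the right-hand side. This gives $\widetilde{\mathcal{L}}_p(f,g)=E_{1,\sigma}\tfrac{d}{dz}(f,g)^T+E_{2,\sigma}(f,g)^T$ with
\[
E_{1,\sigma}=N,\qquad E_{2,\sigma}=h-\begin{bmatrix} d_1&d_2\\ 0&d_3\end{bmatrix}.
\]

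\textbf{Step 2: check the bounds on $E_{1,\sigma}$ and $E_{2,\sigma}$.} By \eqref{estimate for Ni}, $E_{1,\sigma}$ is $O_k((-z)^{1-\epsilon})$. From \eqref{eq: key estimate on Ni and hi} and the explicit constants \eqref{expression of d1}--\eqref{expression of d3}, $E_{2,\sigma}$ is bounded and $O(k)$ uniformly on $\mathbb{I}_{[-1-\eta,-\eta]}$, since $\sigma-k^{2}$ stays away from $0$ for $\Re\sigma\le-\eta$. Both matrices are holomorphic in $\sigma$ there.

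The proposition was phrased with $E_{1,\sigma}=O_k(z)$, but its Volterra argument only uses that $Q_\sigma=p_k(-z)^{-1}(E_{2,\sigma},E_{1,\sigma})$ is integrable against the weights in $F_\sigma^{-1}Q_\sigma F_\sigma$ near $z=0$. So I will re-run that argument with $E_{1,\sigma}=O((-z)^{1-\epsilon})$.

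\textbf{Step 3: redo the iteration estimate.} The term $E_{1,\sigma}$ multiplies the derivative rows of $F_\sigma$, which carry an extra factor $(-z)^{-1}$. Combined with the $(-z)^{-1}$ in $Q_\sigma$, the worst entry of $\widetilde Q_\sigma$ acting on the ingoing columns behaves like $(-z)^{-1-\epsilon}$ times the corresponding entry of the previously displayed matrix.

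This is the main obstacle: a naive bound $(-z)^{-1-\epsilon}$ is not integrable at $0$. The saving point is the structure of $F_\sigma^{-1}$. Its third and fourth rows carry $(-z)^{\frac{k^2+\sigma}{q_k}+1}$ and $(-z)^{\frac{\sigma}{q_k}+1}$, which gain a full power of $(-z)$. In the first two rows, $(-z)^{-\frac{k^{2}+\sigma}{q_k}}$ has exponent with real part at least $\eta/q_k>0$, so the integrals $\int_0^z$ of $(-y)^{-\Re\sigma/q_k-\epsilon}$-type integrands converge.

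I will verify entrywise that each $\widetilde Q_\sigma$ entry, multiplied by the power carried by the column it acts on, is $O((-y)^{-1+\delta})$ for some $\delta>0$ (with $\epsilon<\min(\eta,1)/2$). The iterated integrals then satisfy
\[
|Q^{(j)}_{n,\sigma}|\lesssim \frac{C^n(-z)^{n\delta}}{n!}\times(\text{the leading power}),
\]
exactly as in the inductive bound of the proposition. This gives convergence and holomorphy in $\sigma$, and a remainder of order $(-z)^{-\frac{k^2+\sigma}{q_k}+\delta}$ rather than $+1$. If the full $+1$ in the stated expansions is needed, I would first absorb the $(-z)^{1-\epsilon}$ errors more carefully using the explicit $\sigma$-dependent columns. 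Otherwise I would read $O_{k,\sigma}$ as a $\delta$-gain, which suffices for the later Wronskian computation.

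\textbf{Step 4: conclude.} Take initial data $(A,B,C,D)(0)=(0,0,1,0)$ to get $\hat{\mathbf r}_{in,\sigma}$, and $(0,0,0,1)$ to get $\hat{\mathbf\Psi}_{in,\sigma}$. These are linearly independent because their leading terms $(-z)^{-\frac{k^2+\sigma}{q_k}}(1,\tfrac1k)$ and $(-z)^{-\frac{\sigma}{q_k}}(0,1)$ are independent as $z\to0$.
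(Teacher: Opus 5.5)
Your overall route is the paper's. You write $\mathcal{L}=0$ in the form \eqref{eq: general equation in the construction of ingoing solution} with $E_{1,\sigma}=N$ and $E_{2,\sigma}=h-\left[\begin{smallmatrix}d_1&d_2\\0&d_3\end{smallmatrix}\right]$, then take Volterra data $(0,0,1,0)$ and $(0,0,0,1)$ in Proposition~\ref{prop: general construction of outgoing and ingoing solutions}. The gap is in your Steps 2--3.

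You only use $|N_i|\lesssim\frac{k}{\epsilon}(-z)^{1-\epsilon}$, and that hypothesis alone cannot give the stated remainder $O_{k,\sigma}\big((-z)^{-\frac{k^2+\sigma}{q_k}+1}\big)$. If $E_{1,\sigma}$ really were of size $(-z)^{1-\epsilon}$, the $(3,3)$ entry of $F_\sigma^{-1}Q_\sigma F_\sigma$ would be of size $(-z)^{-\epsilon}$. Then $C_\sigma(z)-1\sim\int_0^z(-y)^{-\epsilon}\,dy\sim(-z)^{1-\epsilon}$, so no more careful bookkeeping with the explicit columns restores the full power. Reading $O_{k,\sigma}$ as a $\delta$-gain changes the statement.

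Your accounting is also off. Relative to the matrix displayed in the paper, replacing $O_k(z)$ by $O((-z)^{1-\epsilon})$ costs only a factor $(-z)^{-\epsilon}$, not $(-z)^{-1-\epsilon}$. So your route actually gives a relative gain of $(-z)^{1-\epsilon}$, not just some $\delta$.

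The missing observation is that the proposition's hypothesis $E_{1,\sigma}=O_k(z)$ is in fact satisfied, so nothing needs to be rerun. The bound $\frac{k}{\epsilon}(-z)^{1-\epsilon}$ in Proposition~\ref{prop: schematic form of equations} buys $k$-smallness at the price of a $(-z)^{-\epsilon}$ weight. If you give that up:
\begin{itemize}
\item The $\partial_z\phi_k$ terms entering $N_i$ always carry a factor $z$.
\item $\partial_z\phi_k=\mr{\phi}'$ is continuous on $[-1,0]$ with $|\mr{\phi}'|\lesssim\frac1k$ (Proposition~\ref{prop: estimate on the exact k self similar spacetime}).
\item The remaining background coefficients in $N_i$ are Lipschitz and vanish at $z=0$.
\end{itemize}
Hence $|N_i|\lesssim\frac1k(-z)$. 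This is exactly the bound $|E_{1,\sigma}|\lesssim\frac1k(-z)$ the paper records when constructing the Dirichlet solutions.

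The constants in $O_{k,\sigma}$ may depend on both $k$ and $\sigma$. This also covers the fact that $d_1$, $h_1$, $h_3$ grow like $|\sigma|$, so $E_{2,\sigma}$ is not uniformly $O(k)$ on the unbounded strip, contrary to your Step 2. With these bounds the proposition applies verbatim and gives the full $+1$ gain. Your Step 4 (choice of data and linear independence) is fine.
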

\begin{remark}
In fact, one can obtain much more refined information on the ingoing solutions using the approach in Section \ref{sec: construction of outgoing solutions}. However, in the later analysis of the scattering resonance, it suffices to know the asymptotic behavior of the ingoing solutions. Therefore, we only provide crude estimates on the ingoing solutions in this section.
\end{remark}
\subsection{Construction of outgoing solutions for the linearized system and its expansions}
\label{sec: construction of outgoing solutions}
Now we are ready to construct outgoing solutions to $\mathcal{L}(\hat{r}_{p},\hat{\Psi}_{p}) = 0$. Since we can write the equation $\mathcal{L} = 0$ in the form of \eqref{eq: general equation in the construction of ingoing solution} , the proof in  Proposition \ref{prop: general construction of outgoing and ingoing solutions} might also be used to construct outgoing solutions. However, the proof of Proposition \ref{prop: general construction of outgoing and ingoing solutions} has two limitations. First, the argument in Proposition \ref{prop: general construction of outgoing and ingoing solutions} relies on the fact that $A(0) = B(0) = 0$, while for the outgoing solutions, due to the presence of non-trivial values of $A(0)$ and $B(0)$, the same approach will fail to give a convergent Volterra sequence. Second, we need to track the $k$-smallness and $\sigma$-dependence of the expansion of the outgoing solutions while the proof of Proposition \ref{prop: general construction of outgoing and ingoing solutions} fails to give a detailed information on the $(k,\sigma)$-dependence.

To conclude, the construction of the outgoing solutions has three challenges \begin{itemize}
    \item The existence of the outgoing solutions in the whole region $\mathbb{I}_{[-1-\eta,-\eta]}$.
    \item The $k$-smallness in the expansion of the outgoing solutions.
    \item The $\sigma$-boundedness in the expansion of the outgoing solutions.
\end{itemize}

In this section, we exploit the extra structures $\frac{d\hat{r}_{p}}{dz}$ and $\frac{d\hat\Psi_{p}}{dz}$ appearing on the right-hand side of the equation \eqref{eq: perturb+rhs form of scattering equations} to construct the outgoing solutions and track the leading order behaviors and $k$-smallness of the sub-leading terms. Moreover, we will achieve the uniform $\sigma$-boundedness of the coefficients in the expansions.

Note that the equation $\mathcal{L}_{p}(\hat{r}_{p},\hat{\Psi}_{p})$ satisfies the form of \eqref{eq:toy model equation1}-\eqref{eq:toy model equation2} with $d_{i} = \alpha_{i}(\sigma)$ in Proposition \ref{prop: schematic form of equations}. Using Corollary \ref{coro: construction of ingoing solution for Lp} and Proposition \ref{prop: model argument for square coefficient}, we can construct a solution basis of the equation $\mathcal{L}_{p}(\hat{r}_{p},\hat{\Psi}_{p}) = 0$ in the region $\mathbb{I}_{[-1-\eta,-\eta]}$:\begin{equation}
\left\{\mathbf{r}_{+,\sigma},\mathbf{\Psi}_{+,\sigma},\mathbf{r}_{-,\sigma},\mathbf{\Psi}_{-,\sigma}
\right\},
\end{equation}
where each solution in the basis has the following expansion:\begin{align}
\mathbf{r}_{+,\sigma} &= \begin{bmatrix}
\mathbf{r}_{+,\sigma}^{(1)}\\[.5em]\mathbf{r}_{+,\sigma}^{(2)}
\end{bmatrix}
 = \begin{bmatrix}
 1+\sum_{n = 1}^{\infty}\bar{a}_{n,\sigma}(-z)^{n}\\[.5em]\frac{kq_{k}}{\sigma}+
 \sum_{n = 1}^{\infty}\bar{A}_{n,\sigma}(-z)^{n}
 \end{bmatrix},\label{eq: expansion of r-outgoing solution}\\
\mathbf{r}_{-,\sigma} &= \begin{bmatrix}
\mathbf{r}_{-,\sigma}^{(1)}\\[.5em]\mathbf{r}_{-,\sigma}^{(2)}
\end{bmatrix}
 = (-z)^{-\frac{\sigma+k^{2}}{q_{k}}}\begin{bmatrix}
 1+\sum_{n = 1}^{\infty}\bar{c}_{n,\sigma}(-z)^{n}\\[.5em]
 \frac{1}{k}+\sum_{n = 1}^{\infty}\bar{C}_{n,\sigma}(-z)^{n}
 \end{bmatrix},
 \\
 \mathbf{\Psi}_{+,\sigma}& = \begin{bmatrix}
 \mathbf{\Psi}_{+,\sigma}^{(1)}\\[.5em]\mathbf{\Psi}_{+,\sigma}^{(2)}
 \end{bmatrix}
  = \begin{bmatrix}
  \sum_{n = 1}^{\infty}\widetilde{a}_{n,\sigma}(-z)^{n}\\[.5em]
  1+\frac{\sigma}{q_{k}}+\sum_{n = 1}^{\infty}\widetilde{A}_{n,\sigma}(-z)^{n}
  \end{bmatrix},\label{eq: expansion of psi-outgoing solution}\\
  \mathbf{\Psi}_{-,\sigma}& = \begin{bmatrix}
  \mathbf{\Psi}_{-,\sigma}^{(1)}\\[.5em]\mathbf{\Psi}_{-,\sigma}^{(2)}
  \end{bmatrix}
   = (-z)^{-\frac{\sigma}{q_{k}}}\begin{bmatrix}
   \sum_{n = 1}^{\infty}\bar{b}_{n,\sigma}(-z)^{n}\\[.5em]
   1+\sum_{n = 1}^{\infty}\bar{B}_{n,\sigma}(-z)^{n}
   \end{bmatrix}.
\end{align}
Hence, the fundamental matrix of this solution basis takes the form of \begin{equation}
    Y_{\sigma} = \begin{bmatrix}
        \mathbf{r}_{+,\sigma}&\mathbf{\Psi}_{+,\sigma}&\mathbf{r}_{-,\sigma}&\mathbf{\Psi}_{-,\sigma}\\
        \mathbf{r}_{+,\sigma}^{\prime}&\mathbf{\Psi}_{+,\sigma}^{\prime}&\mathbf{r}_{-,\sigma}^{\prime}&\mathbf{\Psi}_{-,\sigma}^{\prime}
    \end{bmatrix}.
\end{equation}
\subsubsection{Analysis on the fundamental matrix $Y_{\sigma}$}
To facilitate our later analysis, we first analyze the fundamental matrix $Y_{\sigma}$. Recall that in the construction of $\mathbf{\Psi}_{+,\sigma}$, we need to renormalize the value of $(a_{0}, A_{0})$ to be $(0,1+\frac{\sigma}{q_{k}})$ to avoid the poles for the coefficients in the expansion. However, this renormalization will result in the degeneracy of the coefficient of the leading order term of $\mathbf{\Psi}_{+,\sigma}$ when $\sigma = -q_{k}$. The following lemma illustrates the consequence of this degeneracy.
\begin{lemma}
    The construction of $\mathbf{r}_{+,\sigma}$ and $\mathbf{\Psi}_{+,\sigma}$ can be extended holomorphically to the whole region $\sigma\in\mathbb{I}_{[-1-\eta,-\eta]}$. When $\sigma = -q_{k}$, $\mathbf{\Psi}_{+,\sigma}$ is linearly dependent on $\mathbf{\Psi}_{-,\sigma}$.
\end{lemma}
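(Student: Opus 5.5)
The plan is to read both claims off the explicit recursions behind Proposition~\ref{prop: model argument for square coefficient} and Corollary~\ref{coro: construction of ingoing solution for Lp}, rather than off the Volterra machinery.

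\textbf{Holomorphic extension.} First I isolate where $\sigma=-q_{k}$ and $\sigma=0$ can enter the coefficients $\bar a_{n,\sigma},\bar A_{n,\sigma},\widetilde a_{n,\sigma},\widetilde A_{n,\sigma}$. The recursions \eqref{eq:algebraic 1}--\eqref{eq:An albegraic equation} divide by $(q_{k}n+1+\sigma)(n+1)$ and $(q_{k}(n+1)+\sigma)(n+1)$. For $n\ge1$ these factors do not vanish on $\mathbb{I}_{[-1-\eta,-\eta]}$, once $\eta<\tfrac12$ and $k$ is small. The only possible singularities therefore come from the explicit formulas for $a_{1},A_{1}$ in the proof of Proposition~\ref{prop: model argument for square coefficient}. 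There:
\begin{itemize}
\item the factor $1+\sigma$ cancels by \eqref{eq: key cancellation for algebraic equation of a2};
\item $\sigma-k^{2}\neq0$ since $\Re\sigma<0$;
\item $\sigma\neq0$ on the strip, so $kq_{k}/\sigma$ is harmless;
\item the factor $(q_{k}+\sigma)^{-1}$ in $A_{1}$ is absorbed, for $\mathbf{\Psi}_{+,\sigma}$, by the choice $A_{0}=1+\sigma/q_{k}$. For $\mathbf{r}_{+,\sigma}$ it is absorbed by the combination $k^{2}A_{0}+k^{3}a_{0}$ with $A_{0}=kq_{k}/\sigma$, which gives $k^{3}(\sigma+q_{k})/\sigma$. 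This matches the displayed formula for $\bar A_{1,\sigma}$, which contains no $(q_{k}+\sigma)$ factor.
\end{itemize}
Hence $a_{1},A_{1}$ extend holomorphically across $\sigma=-q_{k}$, and by induction so do all higher coefficients. The bounds $|\cdot|\le Ck^{n}/n!$ hold uniformly on compact subsets of the strip. So the series converge locally uniformly in $(\sigma,z)$, and by Weierstrass they define functions holomorphic in $\sigma$ that satisfy $\mathcal{L}_{p}=0$ on the whole strip.

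\textbf{Linear dependence at $\sigma=-q_{k}$.} At $\sigma=-q_{k}$ the leading term of $\mathbf{\Psi}_{+,\sigma}$ vanishes. Its first component starts with $\widetilde a_{1,-q_{k}}=0$, because $\widetilde a_{1,\sigma}$ carries the factor $1+\sigma/q_{k}$. Its second component starts with $\widetilde A_{1,-q_{k}}(-z)$, where $\widetilde A_{1,-q_{k}}=k^{2}\lambda_{0}(2-q_{k}+k^{2})/(q_{k}(-q_{k}-k^{2}))$.

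Meanwhile $-\sigma/q_{k}=1$, so
\[
\mathbf{\Psi}_{-,-q_{k}}=(-z)\begin{bmatrix}0\\1\end{bmatrix}+O((-z)^{2}).
\]
Both vectors solve the same system, which is analytic away from $z=0$, and both lie in the Frobenius family generated by the exponent-$1$ branch.

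To compare them I would use the recursion itself. At $\sigma=-q_{k}$, the relations for $(a_{n},A_{n})$ with $(a_{0},A_{0})=(0,0)$, starting from index $1$, coincide with those for $(b_{n},B_{n})$ from \eqref{algebraic relation for bn}--\eqref{algebraic relation for Bn} after the shift $n\mapsto n-1$. Indeed, $n+1-\sigma/q_{k}=n+2$ and $q_{k}n+1+\sigma=q_{k}(n+1)+\sigma$ match termwise. Both sequences are then determined by their first entry via the same nondegenerate recursion. It follows that
\[
\mathbf{\Psi}_{+,-q_{k}}=\widetilde A_{1,-q_{k}}\,\mathbf{\Psi}_{-,-q_{k}},
\]
including the degenerate case $\widetilde A_{1,-q_{k}}=0$, where $\mathbf{\Psi}_{+,-q_{k}}\equiv0$.

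\textbf{Main obstacle.} The delicate step is checking that the shifted recursions really coincide, in particular that the first component $a_{2}$ vs. $b_{1}$ matches given $b_{0}=0$. It is also where the index conventions in \eqref{algebraic relation for bn} have to be read carefully. If a direct match fails, I would fall back on a Wronskian argument: $\mathbf{r}_{\pm}$ and $\mathbf{\Psi}_{+}$ at $\sigma=-q_{k}$ are all $O(1)$ with non-proportional leading terms except $\mathbf{\Psi}_{+}$. The solution space of $\mathcal{L}_{p}=0$ consisting of $O(-z)$ solutions is one-dimensional, spanned by $\mathbf{\Psi}_{-,-q_{k}}$. This is because exponents $0$ (with two-dimensional multiplicity), $1$, and $1+k^{2}/q_{k}$ exhaust the local behaviours. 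Hence $\mathbf{\Psi}_{+,-q_{k}}$ must be a multiple of $\mathbf{\Psi}_{-,-q_{k}}$.
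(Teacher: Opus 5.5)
Your proposal is correct and follows the paper's proof: holomorphy is read off from the explicit coefficient formulas of Proposition~\ref{prop: model argument for square coefficient}, and linear dependence at $\sigma=-q_k$ comes from matching the index-shifted $(a_n,A_n)$ recursion with the $(b_n,B_n)$ recursion; your check that the continued initial pair is $(a_1,A_1)=(0,\widetilde A_{1,-q_k})$ supplies the initial-value match that the paper leaves implicit. The fallback is not needed and contains a sign slip: at $\sigma=-q_k$ the exponent of $\mathbf{r}_{-,\sigma}$ is $-(k^2+\sigma)/q_k=1-k^2/q_k$, not $1+k^2/q_k$, and it is precisely because this exponent is below $1$ that the $O(-z)$ solutions form a one-dimensional space.
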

\begin{proof}
The holomorphic extension follows from the fact that in the expansions of $\mathbf{r}_{+,\sigma}$ and $\mathbf{\Psi}_{+,\sigma}$, the coefficients $\bar{a}_{n,\sigma},\bar{A}_{n,\sigma},\widetilde{a}_{n,\sigma},\widetilde{A}_{n,\sigma}$ are holomorphic functions for $\sigma\in\mathbb{I}_{[-1-\eta,-\eta]}$. For $\sigma = -q_{k}$, we can compare the algebraic relations \eqref{eq:algebraic 1} and \eqref{eq:An albegraic equation} with \eqref{algebraic relation for bn} and \eqref{algebraic relation for Bn}:\begin{align*}
    &(q_{k}n+k^{2})(n+1)a_{n+1} = \alpha_{1}a_{n}+\alpha_{2}A_{n},\\&
    q_{k}n(n+1)A_{n+1}+k(n+1)a_{n+1} = \alpha_{3}A_{n},\\&
    (n+2)(q_{k}n+1)b_{n+1} = \alpha_{1}b_{n}+\alpha_{2}B_{n},\\&
    (n+2)(q_{k}n+q_{k})B_{n+1}+k(n+2)b_{n+1} = \alpha_{3}B_{n}
\end{align*}
Note that the equations for $a_{n+1}$ and $A_{n+1}$ take the same forms as that for $b_{n}$ and $B_{n}$ when $\sigma = -q_{k}$, respectively. Hence, we can conclude that when $\sigma = -q_{k}$, $\mathbf{\Psi}_{+,\sigma}$ is linearly dependent to $\mathbf{\Psi}_{-,\sigma}$.
\end{proof}
Next, we analyze the determinant of $Y_{\sigma}$. We can decompose $Y_{\sigma}$ as:\begin{equation}
\begin{aligned}
    Y_{\sigma} &= J_{\sigma}+\sum_{n = 1}^{\infty}J_{n,\sigma}(z)(-z)^{n}\\
    &=:\begin{bmatrix}
1&0&(-z)^{-\frac{k^{2}+\sigma}{q_{k}}}&0\\
\frac{kq_{k}}{\sigma}&1+\frac{\sigma}{q_{k}}&\frac{1}{k}(-z)^{-\frac{k^{2}+\sigma}{q_{k}}}&(-z)^{-\frac{\sigma}{q_{k}}}\\0&0&\frac{k^{2}+\sigma}{q_{k}}(-z)^{-\frac{k^{2}+\sigma}{q_{k}}-1}&0\\0&0&\frac{k^{2}+\sigma}{kq_{k}}(-z)^{-\frac{k^{2}+\sigma}{q_{k}}-1}&\frac{\sigma}{q_{k}}(-z)
^{-\frac{\sigma}{q_{k}}-1}\end{bmatrix}+\sum_{n = 1}^{\infty}J_{n,\sigma}(-z)^{n},
    \end{aligned}
\end{equation}
where $J_{n,\sigma}$ takes the form of {\footnotesize\begin{equation*}
    \begin{bmatrix}
        \bar{a}_{n,\sigma}&\widetilde{a}_{n,\sigma}&\bar{c}_{n,\sigma}(-z)^{-\frac{\sigma+k^{2}}{q_{k}}}&\bar{b}_{n,\sigma}(-z)^{-\frac{\sigma}{q_{k}}}\\[1em]
        \bar{A}_{n,\sigma}&\widetilde{A}_{n,\sigma}&\bar{C}_{n,\sigma}(-z)^{-\frac{\sigma+k^{2}}{q_{k}}}&\bar{B}_{n,\sigma}(-z)^{-\frac{\sigma}{q_{k}}}\\[1em]
        (-n)\bar{a}_{n,\sigma}(-z)^{-1}&(-n)\widetilde{a}_{n,\sigma}(-z)^{-1}&\left(\frac{\sigma+k^{2}}{q_{k}}-n\right)\bar{c}_{n,\sigma}(-z)^{-1-\frac{\sigma+k^{2}}{q_{k}}}&\left(\frac{\sigma}{q_{k}}-n\right)\bar{b}_{n,\sigma}(-z)^{-1-\frac{\sigma}{q_{k}}}\\[1em]
        (-n)\bar{A}_{n,\sigma}(-z)^{-1}&(-n)\widetilde{A}_{n,\sigma}(-z)^{-1}&\left(\frac{\sigma+k^{2}}{q_{k}}-n\right)\bar{C}_{n,\sigma}(-z)^{-1-\frac{\sigma+k^{2}}{q_{k}}}&\left(\frac{\sigma}{q_{k}}-n\right)\bar{B}_{n,\sigma}(-z)^{-1-\frac{\sigma}{q_{k}}}
    \end{bmatrix}.
\end{equation*}}
The role of $J_{\sigma}$ is to determine the leading order behavior of $Y_{\sigma}$. We have the following lemma computing the determinant of $Y_{\sigma}$ explicitly:
\begin{lemma}
\label{lemma: determinant of Y}
    For $\sigma\in\mathbb{I}_{[-1-\eta,-\eta]}$, the determinant of $Y_{\sigma}$ is: \begin{equation}
       \det{Y}_{\sigma} = \left(1+\frac{\sigma}{q_{k}}\right)\frac{k^{2}+\sigma}{q_{k}}\frac{\sigma}{q_{k}}(-z)^{-\frac{k^{2}+2\sigma}{q_{k}}-2}.
    \end{equation}
\end{lemma}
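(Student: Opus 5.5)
The plan is to combine Liouville's (Abel's) formula, which fixes the $z$-dependence of $\det Y_{\sigma}$ exactly, with a computation of the leading-order behaviour as $z\to 0$, which fixes the multiplicative constant.

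\textbf{Step 1: the $z$-dependence.} Write $\mathcal{L}_{p}(f,g)=0$ from \eqref{defintion of mathcal Lp} as a first-order system $\Phi'=M_{\sigma}(z)\Phi$, with $\Phi=(f,g,f',g')^{T}$ and
\begin{equation*}
M_{\sigma}(z)=\begin{bmatrix}0&I\\ -\frac{1}{q_{k}z}D&-\frac{1}{q_{k}z}B_{\sigma}\end{bmatrix},\qquad B_{\sigma}=\begin{bmatrix}1+\sigma&0\\k&q_{k}+\sigma\end{bmatrix},\quad D=\begin{bmatrix}d_{1}&d_{2}\\0&d_{3}\end{bmatrix}.
\end{equation*}
The columns of $Y_{\sigma}$ are solutions of this system on $-1\le z<0$, by Corollary \ref{coro: construction of ingoing solution for Lp} and Proposition \ref{prop: model argument for square coefficient}, the latter extended holomorphically by the preceding lemma. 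Liouville's formula then gives $\frac{d}{dz}\det Y_{\sigma}=\operatorname{tr}M_{\sigma}\,\det Y_{\sigma}$. Since
\begin{equation*}
\operatorname{tr}M_{\sigma}=-\frac{2+2\sigma-k^{2}}{q_{k}z},
\end{equation*}
we obtain
\begin{equation*}
\det Y_{\sigma}=c(\sigma)(-z)^{-\frac{2+2\sigma-k^{2}}{q_{k}}}=c(\sigma)(-z)^{-\frac{k^{2}+2\sigma}{q_{k}}-2},
\end{equation*}
where the last equality uses $2q_{k}=2-2k^{2}$. So only the constant $c(\sigma)$ remains to be found.

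\textbf{Step 2: the constant.} Multiply column 3 of $Y_{\sigma}$ by $(-z)^{\frac{k^{2}+\sigma}{q_{k}}}$, column 4 by $(-z)^{\frac{\sigma}{q_{k}}}$, and rows 3 and 4 by $(-z)$. This multiplies the determinant by exactly $(-z)^{\frac{k^{2}+2\sigma}{q_{k}}+2}$. The rescaled matrix is $\widetilde{J}_{\sigma}+\sum_{n\ge1}\widetilde{J}_{n,\sigma}(-z)^{n}$, where $\widetilde{J}_{\sigma}$ is $z$-independent and the $\widetilde{J}_{n,\sigma}$ are bounded. Indeed, each entry of $J_{n,\sigma}$ carries at most a factor $(-z)^{-1}$ in the derivative rows, together with the power already absorbed into the column scaling, and the coefficients are bounded by $Ck^{n}n/n!$ by \eqref{estimate on the ingoing solution of Lp} and \eqref{eq: estimate on the expansion of the outgoing solutions}. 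Hence the series converges uniformly and contributes $O(z)$.

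Letting $z\to0$ gives $c(\sigma)=\det\widetilde{J}_{\sigma}$. In $\widetilde{J}_{\sigma}$, columns 1 and 2 vanish in rows 3 and 4, so the matrix is block upper triangular and
\begin{equation*}
\det\widetilde{J}_{\sigma}=\det\begin{bmatrix}1&0\\ \frac{kq_{k}}{\sigma}&1+\frac{\sigma}{q_{k}}\end{bmatrix}\det\begin{bmatrix}\frac{k^{2}+\sigma}{q_{k}}&0\\ \frac{k^{2}+\sigma}{kq_{k}}&\frac{\sigma}{q_{k}}\end{bmatrix}=\Bigl(1+\frac{\sigma}{q_{k}}\Bigr)\frac{k^{2}+\sigma}{q_{k}}\frac{\sigma}{q_{k}}.
\end{equation*}
This yields the stated formula.

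\textbf{Expected obstacle.} The only delicate point is the bookkeeping in Step 2: one must check that after the rescaling every correction entry really carries at least one positive power of $(-z)$, uniformly for $\sigma\in\mathbb{I}_{[-1-\eta,-\eta]}$. This covers the derivative rows, where the factors $-n$ and $\frac{\sigma}{q_{k}}-n$ appear, and it is exactly where the factorial bounds on the coefficients are used.

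For a given $\sigma$ the limit argument only needs convergence. At $\sigma=-q_{k}$ the formula still holds, with both sides equal to zero, consistent with the preceding lemma's linear dependence of $\mathbf{\Psi}_{+,\sigma}$ and $\mathbf{\Psi}_{-,\sigma}$. The value there can also be obtained by holomorphic continuation in $\sigma$, since both sides are holomorphic on $\mathbb{I}_{[-1-\eta,-\eta]}$ for fixed $z$.
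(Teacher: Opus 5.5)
Your proposal is correct and is essentially the paper's proof. The paper also applies the Jacobi (Liouville) formula with $\text{Tr}\,L_{p}=\frac{1+q_{k}+2\sigma}{q_{k}(-z)}$ to get $\det Y_{\sigma}=C(-z)^{-\frac{k^{2}+2\sigma}{q_{k}}-2}$, then identifies $C$ as $\lim_{z\to0}(-z)^{2+\frac{k^{2}+2\sigma}{q_{k}}}\det J_{\sigma}$. Your rescaling of columns 3--4 and rows 3--4 is a cleaner justification of the paper's remark that the leading order of $\det Y_{\sigma}$ is $\det J_{\sigma}$: after rescaling, every correction is a convergent power series in $(-z)$ with no constant term.
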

In particular, this means the fundamental matrix $Y_{\sigma}$ is not invertible if and only if $\sigma = -q_{k}$.
\begin{proof}
We can rewrite the equation $\mathcal{L}_{p}(f,g) = 0$ as \begin{equation*}
    \frac{d}{dz}\begin{bmatrix}
        f\\g\\[.5em]\frac{df}{dz}\\[.5em]\frac{dg}{dz}
    \end{bmatrix} = L_{p}(z)\begin{bmatrix}
        f\\g\\[.5em]\frac{df}{dz}\\[.5em]\frac{dg}{dz}
    \end{bmatrix}.
\end{equation*}
Then by the Jacobi formula, we have\begin{equation*}
    \frac{d}{dz}\det{Y}_{\sigma}(z) = \det{Y}_{\sigma}(z)\text{Tr}L_{p}.
\end{equation*}
For $\text{Tr}L_{p}$, we can compute:\begin{equation*}
    \text{Tr}L_{p} = \frac{1+q_{k}+2\sigma}{q_{k}(-z)} = \frac{2}{(-z)}+\frac{k^{2}+2\sigma}{q_{k}(-z)}.
\end{equation*}
Directly integrating the Jacobi formula, we have\begin{equation*}
    \det{Y}_{\sigma}(z) = C(-z)^{-2-\frac{k^{2}+2\sigma}{q_{k}}},
\end{equation*}
where $C$ is a constant. To compute the constant $C$, viewing each element of $Y_{\sigma}$ as an infinite sum, we have that the leading order of $\det{Y}_{\sigma}$ is $\det{J}_{\sigma}$. Then we have\begin{equation*}
    \det{J}_{\sigma} = \left(1+\frac{\sigma}{q_{k}}\right)\frac{k^{2}+\sigma}{q_{k}}\frac{\sigma}{q_{k}}(-z)^{-\frac{k^{2}+2\sigma}{q_{k}}-2}.
\end{equation*}
Then we have\begin{equation*}
    C = \lim_{z\rightarrow0}(-z)^{2+\frac{k^{2}+2\sigma}{q_{k}}}\det{Y}_{\sigma}(z) = \lim_{z\rightarrow0}(-z)^{2+\frac{k^{2}+2\sigma}{q_{k}}}\det{J}_{\sigma}(z) = \left(1+\frac{\sigma}{q_{k}}\right)\frac{k^{2}+\sigma}{q_{k}}\frac{\sigma}{q_{k}}.
\end{equation*}
Hence, we have \begin{equation*}
    \det{Y}_{\sigma} = \left(1+\frac{\sigma}{q_{k}}\right)\frac{k^{2}+\sigma}{q_{k}}\frac{\sigma}{q_{k}}(-z)^{-\frac{k^{2}+2\sigma}{q_{k}}-2}.
\end{equation*}
\end{proof}
Since the fundamental matrix $Y_{\sigma}$ is not invertible when $\sigma = -q_{k}$, viewing each element in $Y_{\sigma}^{-1}$ as a function of $\sigma$ for $\sigma\in\mathbb{I}_{[-1-\eta,-\eta]}$, we conduct a careful analysis on the $\sigma$-dependence of each element in $Y_{\sigma}^{-1}$ in the following lemma.
\begin{lemma}
\label{lemma: carefully study of the inverse}
    For $\sigma = -q_{k}$, we have \begin{align*}
        [Y_{-q_{k}}]_{j1} = [Y_{-q_{k}}]_{j3} &= 0,\\(Y_{-q_{k}})_{l2}(z)[Y_{-q_{k}}]_{j2}(z^{\prime})+(Y_{-q_{k}})_{l4}(z)[Y_{-q_{k}}]_{j4}(z^{\prime}) &=0,\quad  j,l = 1,2,3,4. 
    \end{align*}
Moreover, the functions $$(Y_{\sigma}^{-1})_{ij},\quad i = 1,3,\quad j = 1,2,3,4$$are holomorphic functions of $\sigma$ in the region $\mathbb{I}_{[-1-\eta,-\eta]}$, and the functions $$(Y_{\sigma}^{-1})_{2j},\ (Y_{\sigma}^{-1})_{4j},\quad j = 1,2,3,4$$
are meromorphic functions of $\sigma$ in the region $\mathbb{I}_{[-1-\eta,-\eta]}$ with a pole at $\sigma = -q_{k}$. Finally, we have the following expansion of $Y_{\sigma}^{-1}$: \begin{equation}
    Y_{\sigma}^{-1} = J_{\sigma}^{-1}+\sum_{n = 1}^{\infty}k^{n-1}(Y_{\sigma}^{-1})_{n}(-z)^{n},
\end{equation}
where $J_{\sigma}^{-1}$ is the inverse matrix of $J_{\sigma}$, taking the form of \begin{equation}
    J_{\sigma}^{-1} = \begin{bmatrix}
        1&0&\frac{q_{k}}{\sigma+k^{2}}z&0\\[1em]-\frac{kq_{k}}{\sigma\left(1+\frac{\sigma}{q_{k}}\right)}&\frac{1}{1+\frac{\sigma}{q_{k}}}&\frac{kq_{k}^{2}+kq_{k}}{\sigma\left(\sigma+k^{2}\right)\left(1+\frac{\sigma}{q_{k}}\right)}(-z)&\frac{q_{k}}{\sigma}\frac{1}{1+\frac{\sigma}{q_{k}}}z\\[1em]
        0&0&\frac{q_{k}}{\sigma+k^{2}}(-z)^{\frac{k^{2}+\sigma}{q_{k}}+1}&0\\[1em]
        0&0&-\frac{q_{k}}{k\sigma}(-z)^{\frac{\sigma}{q_{k}}+1}&\frac{q_{k}}{\sigma}(-z)^{\frac{\sigma}{q_{k}}+1}
    \end{bmatrix}
\end{equation}
and $(Y_{\sigma}^{-1})_{n}$ is a matrix, taking the form of {\footnotesize\begin{equation}
    (Y_{\sigma}^{-1})_{n} = \begin{bmatrix}
        \bar{Y}_{11,n}(\sigma,k)&\bar{Y}_{12,n}(\sigma,k)&\bar{Y}_{13,n}(\sigma,k)(-z)&\bar{Y}_{14,n}(\sigma,k)(-z)\\[1em]
        \bar{Y}_{21,n}(\sigma,k)&\bar{Y}_{22,n}(\sigma,k)&\bar{Y}_{23,n}(\sigma,k)(-z)&\bar{Y}_{24,n}(\sigma,k)(-z)\\[1em]
        \bar{Y}_{31,n}(\sigma,k)(-z)^{\frac{\sigma+k^{2}}{q_{k}}}&\bar{Y}_{32,n}(\sigma,k)(-z)^{\frac{\sigma+k^{2}}{q_{k}}}&\bar{Y}_{33,n}(\sigma,k)(-z)^{\frac{\sigma+k^{2}}{q_{k}}+1}&\bar{Y}_{34,n}(\sigma,k)(-z)^{\frac{\sigma+k^{2}}{q_{k}}+1}\\[1em]
        \bar{Y}_{41,n}(\sigma,k)(-z)^{\frac{\sigma}{q_{k}}}&\bar{Y}_{42,n}(\sigma,k)(-z)^{\frac{\sigma}{q_{k}}}&\bar{Y}_{43,n}(\sigma,k)(-z)^{\frac{\sigma}{q_{k}}+1}&\bar{Y}_{44,n}(\sigma,k)(-z)^{\frac{\sigma}{q_{k}}+1}
    \end{bmatrix},
\end{equation}}
where $\bar{Y}_{1i,n},\bar{Y}_{3i,n}$ are holomorphic functions of $\sigma$ in the region $\mathbb{I}_{[-1-\eta,-\eta]}$ and are uniformly bounded independent of $\sigma$ and $k$, $\bar{Y}_{2i,n},\bar{Y}_{4i,n}$ are meromorphic functions of $\sigma$ in the region $\mathbb{I}_{[-1-\eta,-\eta]}$ with a single pole at $\sigma = -q_{k}$, and $(q_{k}+\sigma)\bar{Y}_{2i,n},(q_{k}+\sigma)\bar{Y}_{4i,n}$ are uniformly bounded independent of $\sigma$ and $k$.
\end{lemma}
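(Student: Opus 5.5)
We plan to use the cofactor formula $Y_{\sigma}^{-1}=\frac{1}{\det Y_{\sigma}}\,\mathrm{adj}(Y_{\sigma})$, with $(Y_\sigma^{-1})_{ij}=[Y_\sigma]_{ji}/\det Y_\sigma$, together with Lemma~\ref{lemma: determinant of Y}. That lemma gives $\det Y_\sigma=(1+\frac{\sigma}{q_k})\frac{k^2+\sigma}{q_k}\frac{\sigma}{q_k}(-z)^{-\frac{k^2+2\sigma}{q_k}-2}$. On $\mathbb{I}_{[-1-\eta,-\eta]}$ the factors $\sigma$ and $\sigma+k^2$ stay uniformly away from zero (recall $\eta>0$ and $k^2\ll\eta$). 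So the only possible pole of $Y_\sigma^{-1}$ is the simple zero at $\sigma=-q_k$. Every cofactor is a polynomial in entries of $Y_\sigma$, and these entries are holomorphic in $\sigma$ by Corollary~\ref{coro: construction of ingoing solution for Lp}, Proposition~\ref{prop: model argument for square coefficient}, and the preceding lemma on holomorphic extension. Hence each $(Y^{-1}_\sigma)_{ij}$ is meromorphic with at most a simple pole at $-q_k$. Holomorphy of rows $1$ and $3$ of $Y_\sigma^{-1}$ is then equivalent to the vanishing at $\sigma=-q_k$ of the cofactors in columns $1$ and $3$ of $Y$, namely $[Y_{-q_k}]_{j1}=[Y_{-q_k}]_{j3}=0$.

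To prove this vanishing, use the preceding lemma: at $\sigma=-q_k$ the second column of $Y$, namely $(\mathbf{\Psi}_{+},\mathbf{\Psi}_{+}')$, is a constant multiple $c\,(\mathbf{\Psi}_-,\mathbf{\Psi}_-')$ of the fourth column. The cofactor $[Y]_{j1}$ is the determinant of the $3\times 3$ minor obtained by deleting row $j$ and column $1$. That minor still contains both column $2$ and column $4$ (restricted to the remaining rows), and these remain parallel, so the minor vanishes. The same argument gives $[Y]_{j3}=0$. For the identity $(Y)_{l2}(z)[Y]_{j2}(z')+(Y)_{l4}(z)[Y]_{j4}(z')=0$, we set $(Y)_{l2}=c(Y)_{l4}$ and use the relation $[Y]_{j2}=-c^{-1}$-type, i.e.\ swapping the parallel columns $2$ and $4$ inside the cofactor expansion gives $[Y]_{j4}=-c\,[Y]_{j2}$ up to orientation sign. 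We will check this sign directly by expanding the minor: deleting column $2$ leaves column $4$ in position $3$ of the minor, while deleting column $4$ leaves column $2$ in position $2$, and the factor $(-1)^{j+2}$ versus $(-1)^{j+4}$ fixes the rest. If $c=0$ or the identity must hold for $z\neq z'$, we plan to argue with $\sigma$-residues instead. Concretely, we will show that the Laurent coefficient at $-q_k$ of the Green-kernel combination $\sum_i Y_{li}(z)(Y^{-1})_{ij}(z')$ vanishes.

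For the expansion, first invert $J_\sigma$ explicitly; it is block upper triangular after permuting, and a direct computation gives the displayed $J_\sigma^{-1}$. Then write $Y_\sigma=J_\sigma(I+J_\sigma^{-1}\sum_n J_{n,\sigma}(-z)^n)$ and expand as a Neumann series. Alternatively, and more robustly near $\sigma=-q_k$, expand the adjugate entrywise as a power series using the expansions \eqref{eq: expansion of r-outgoing solution}--\eqref{eq: expansion of psi-outgoing solution} and divide by $\det Y_\sigma$. Tracking the powers $(-z)^{\pm\sigma/q_k}$ and $(-z)^{\pm(\sigma+k^2)/q_k}$ row by row reproduces the stated column structure of $(Y_\sigma^{-1})_n$.

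\textbf{Main obstacle.} The step we expect to be hardest is the uniform $k$-bounds. The ingoing coefficients $\bar c_n$ and $\bar C_n$ are only $O(k^{n-1})$, and the entry $1/k$ appears in $J_\sigma$. We must verify that every product in the adjugate loses at most one factor $k^{-1}$, which is absorbed by the $k^{n-1}$ normalization, and that the $1/k$ entries in column $3$ cancel against row combinations. We intend to handle this by a column operation that replaces column $3$ by column $3$ minus $\frac1k$ times a suitable combination, which leaves the inverse structure computable. After multiplying by $(q_k+\sigma)$, uniform boundedness in $\sigma$ follows from the factorial bounds $k^n/n!$, which sum uniformly.
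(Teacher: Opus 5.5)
Your proposal is correct and is essentially the paper's argument. It uses the cofactor formula together with the simple zero of $\det Y_\sigma$ at $\sigma=-q_k$. The parallelism of columns $2$ and $4$ of $Y_{-q_k}$ gives $[Y_{-q_k}]_{j1}=[Y_{-q_k}]_{j3}=0$ and $[Y_{-q_k}]_{j4}=-\kappa[Y_{-q_k}]_{j2}$; since $\kappa$ does not depend on $z$, this already covers $z\neq z'$ and $\kappa=0$, so no residue fallback is needed. The expansion is then read off from $J_\sigma^{-1}$ and from the entrywise expansion of the adjugate divided by the exact monomial $\det Y_\sigma$.

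Some adjustments:
\begin{itemize}
\item Use the adjugate route, not the Neumann series. The individual terms of the Neumann series carry higher-order poles at $-q_k$ and factors $k^{-1}$ that cancel only after resummation.
\item Drop the column operation. Every adjugate product outside row $3$ of $Y_\sigma^{-1}$ contains exactly one column-$3$ entry, of size $O(k^{m-1})$ at order $m$, which gives the $k^{n-1}$ normalization directly.
\item Uniformity in $\sigma$ does not follow from the factorial bounds alone. It comes from counting powers of $\sigma$ in the cofactors against the cubic prefactor of $\det Y_\sigma$.
\end{itemize}
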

\begin{proof}
Since $\mathbf{\Psi}_{+,-q_{k}}$ is linearly dependent to $\mathbf{\Psi}_{-,-q_{k}}$, immediately we have $$[Y_{-q_{k}}]_{j1} = [Y_{-q_{k}}]_{j3} = 0.$$

Assume $(Y_{-q_{k}})_{i2} = \kappa (Y_{-q_{k}})_{i4}$, then the co-factors\begin{equation*}
    -\kappa[Y_{-q_{k}}]_{i2}(z) = [Y_{-q_{k}}]_{i4}(z).
\end{equation*}
Hence, we have \begin{align*}
    &(Y_{-q_{k}})_{l2}(z)[Y_{-q_{k}}]_{j2}(z^{\prime})+(Y_{-q_{k}})_{l4}(z)[Y_{-q_{k}}]_{j4}(z^{\prime}) \\=&\kappa (Y_{-q_{k}})_{l4}(z)[Y_{-q_{k}}]_{j2}(z^{\prime})-\kappa(Y_{-q_{k}})_{l4}(z)[Y_{-q_{k}}]_{j2}(z^{\prime})\\=&0.
\end{align*}
In view of the cofactor formulation of $Y_{\sigma}^{-1}$ \begin{equation*}
    (Y_{\sigma}^{-1})_{ij} = \frac{[Y_{\sigma}]_{ji}}{\det{Y}_{\sigma}},
\end{equation*}
since $[Y_{\sigma}]_{j1}$ and $[Y_{\sigma}]_{j3}$ are degenerate at $\sigma = -q_{k}$, and $\det{Y}_{\sigma}$ has a single zero at $\sigma = -q_{k}$, we have \begin{equation*}
    (Y_{-q_{k}}^{-1})_{1i} = \frac{\partial_{\sigma}[Y_{\sigma}]_{i1}}{\partial_{\sigma}\det{Y}_{\sigma}}\Bigg|_{\sigma=-q_{k}}<\infty,\quad (Y_{-q_{k}}^{-1})_{3i} = \frac{\partial_{\sigma}[Y_{\sigma}]_{i3}}{\partial_{\sigma}\det{Y}_{\sigma}}\Biggl|_{\sigma = -q_{k}}<\infty.
\end{equation*}
Hence, the functions $(Y_{\sigma}^{-1})_{1i},(Y_{\sigma}^{-1})_{3i}$ can be extended to holomorphic functions in $\mathbb{I}_{[-1-\eta,-\eta]}$. By the co-factor formulation of the inverse again, we have $J_{\sigma}^{-1}$ is the leading order behavior of $Y_{\sigma}^{-1}$. By the expansion property of $\{\mathbf{r}_{+,\sigma},\mathbf{\Psi}_{+,\sigma},\mathbf{r}_{-,\sigma},\mathbf{\Psi}_{-,\sigma}\}$, we can show the expansion of $Y_{\sigma}^{-1}$. This concludes the proof. 
\end{proof}
Although many of the elements in $Y_{\sigma}^{-1}$ are only meromorphic functions of $\sigma$, noticing that $Y_{\sigma}(z)Y_{\sigma}^{-1}(z) = I$, we can still prove the holomorphy of the elements in $Y_{\sigma}(z)Y_{\sigma}^{-1}(z^{\prime})$. 
\begin{lemma}
\label{lemma: estimate on YYinverse}
Let $-1\leq z\leq z^{\prime}< 0$. Each element in $Y_{\sigma}(z)Y_{\sigma}^{-1}(z^{\prime})$ can be extended to a holomorphic function of $\sigma$ in the region $\mathbb{I}_{[-1-\eta,-\eta]}$. The element $\left(Y_{\sigma}(z)Y_{\sigma}^{-1}(z^{\prime})\right)_{ij}$ can be written as \begin{align}
    \left(Y_{\sigma}(z)Y_{\sigma}^{-1}(z^{\prime})\right)_{ij} = &T_{ij}(\sigma,k,z,z^{\prime})+\widetilde{T}_{ij}(\sigma,k,z,z^{\prime})(-z)^{-\frac{\sigma}{q_{k}}}(-z^{\prime})^{\frac{\sigma}{q_{k}}+1},\quad  i = 1,2,\label{estimate for Yyinverse}\\\left(Y_{\sigma}(z)Y_{\sigma}^{-1}(z^{\prime})\right)_{ij} = &T_{ij}(\sigma,k,z,z^{\prime})+\widetilde{T}_{ij}(\sigma,k,z,z^{\prime})(-z)^{-\frac{\sigma}{q_{k}}-1}(-z^{\prime})^{\frac{\sigma}{q_{k}}+1},\quad  i = 3,4,\label{estimate on yyinverse 2}.
\end{align}
where $T_{ij}$ and $\widetilde{T}_{ij}$ are holomorphic functions of $\sigma$ in the region $\mathbb{I}_{[-1-\eta,-\eta]}$ and smooth functions of $z$ and $z^{\prime}$ for fixed $\sigma$, with the estimate: \begin{align}
    &\left\vert\partial_{z^{\prime}}^{p}T_{ij}\right\vert\lesssim \left(1+\vert\log(-z^{\prime})\vert\right),\quad \left\vert\partial_{z^{\prime}}^{p}\widetilde{T}_{ij}\right\vert\lesssim \frac{1}{\vert\sigma\vert}\left(1+\vert\log(-z^{\prime})\vert\right) ,\quad i = 1,2,\ j = 1,2,\\&
   \left\vert\partial_{z^{\prime}}^{p}T_{ij}\right\vert\lesssim \frac{1}{\vert\sigma\vert}\left(1+\vert\log(-z^{\prime})\vert\right),\quad \left\vert\partial_{z^{\prime}}^{p}\widetilde{T}_{ij}\right\vert\lesssim \frac{1}{\vert\sigma\vert}\left(1+\vert\log(-z^{\prime})\vert\right) ,\quad i = 1,2,\ j = 3,4, \\&\left\vert\partial_{z^{\prime}}^{p}T_{ij}\right\vert\lesssim \left(1+\vert\log(-z^{\prime})\vert\right),\quad \left\vert\partial_{z^{\prime}}^{p}\widetilde{T}_{ij}\right\vert\lesssim \left(1+\vert\log(-z^{\prime})\vert\right) ,\quad i = 3,4,\ j = 1,2,\\&\left\vert\partial_{z^{\prime}}^{p}T_{ij}\right\vert\lesssim \frac{1}{\vert\sigma\vert}\left(1+\vert\log(-z^{\prime})\vert\right),\quad \left\vert\partial_{z^{\prime}}^{p}\widetilde{T}_{ij}\right\vert\lesssim \left(1+\vert\log(-z^{\prime})\vert\right) ,\quad i = 3,4,\ j = 3,4.
\end{align}
\end{lemma}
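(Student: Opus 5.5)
The plan is to write the product column by column, $Y_{\sigma}(z)Y_{\sigma}^{-1}(z^{\prime})=\sum_{l=1}^{4}(Y_{\sigma})_{\cdot l}(z)\,(Y_{\sigma}^{-1})_{l\cdot}(z^{\prime})$, and split the sum into the \emph{regular} pairs $l=1,3$ and the \emph{singular} pairs $l=2,4$. By Lemma~\ref{lemma: carefully study of the inverse}, the rows $(Y_{\sigma}^{-1})_{1j},(Y_{\sigma}^{-1})_{3j}$ are holomorphic on $\mathbb{I}_{[-1-\eta,-\eta]}$. The columns $(Y_\sigma)_{\cdot 1},(Y_\sigma)_{\cdot 3}$ are built from $\mathbf{r}_{\pm,\sigma}$, which are holomorphic there by Proposition~\ref{prop: model argument for square coefficient} and Corollary~\ref{coro: construction of ingoing solution for Lp}. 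Hence the $l=1,3$ contributions are holomorphic.

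For $l=2,4$, the rows $(Y_{\sigma}^{-1})_{2j},(Y_{\sigma}^{-1})_{4j}$ have at most a simple pole at $\sigma=-q_{k}$, since $\det Y_{\sigma}$ has a simple zero there by Lemma~\ref{lemma: determinant of Y}. Using the cofactor formula $(Y_\sigma^{-1})_{lj}=[Y_\sigma]_{jl}/\det Y_\sigma$, the residue at $-q_k$ of the $l=2,4$ part equals, up to the constant $1/\partial_\sigma\det Y_\sigma|_{-q_k}$ and the explicit $z^{\prime}$-power of $\det Y$,
\[
(Y_{-q_{k}})_{i2}(z)[Y_{-q_{k}}]_{j2}(z^{\prime})+(Y_{-q_{k}})_{i4}(z)[Y_{-q_{k}}]_{j4}(z^{\prime}).
\]
This combination vanishes by the second identity of Lemma~\ref{lemma: carefully study of the inverse}, which comes from $\mathbf{\Psi}_{+,-q_k}\parallel\mathbf{\Psi}_{-,-q_k}$. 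The singularity is therefore removable, and $Y_\sigma(z)Y^{-1}_\sigma(z^{\prime})$ extends holomorphically. One should check that the factor $(-z^{\prime})^{\frac{k^2+2\sigma}{q_k}+2}$ from $1/\det Y$ does not spoil this. It does not, because it is entire in $\sigma$ for fixed $z^{\prime}<0$.

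For the structural form \eqref{estimate for Yyinverse}--\eqref{estimate on yyinverse 2}, substitute the expansions $Y_\sigma=J_\sigma+\sum_n J_{n,\sigma}(-z)^n$ and $Y_\sigma^{-1}=J_\sigma^{-1}+\sum_n k^{n-1}(Y_\sigma^{-1})_n(-z)^n$, and track the powers of $(-z)$. The $\mathbf{r}_{-}$ column carries $(-z)^{-\frac{k^2+\sigma}{q_k}}$, which pairs with row 3 of $Y^{-1}$ carrying $(-z^{\prime})^{\frac{k^2+\sigma}{q_k}+1}$. Likewise $\mathbf{\Psi}_-$ pairs with row 4, carrying $(-z^{\prime})^{\frac{\sigma}{q_k}+1}$. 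Since $z\le z^{\prime}$, the $\mathbf{r}_-$ product $(-z^{\prime}/-z)^{\frac{k^2+\sigma}{q_k}}(-z^{\prime})$ can be rewritten as $(-z)^{-\sigma/q_k}(-z^{\prime})^{\sigma/q_k+1}$ times the bounded factor $(-z^{\prime}/-z)^{k^2/q_k}\le 1$, and absorbed into $\widetilde T_{ij}$. The outgoing columns times rows 1 and 2 give $T_{ij}$. For $i=3,4$ one extra $(-z)^{-1}$ appears from differentiation. The factors $1/|\sigma|$ come from the explicit entries $q_k/\sigma$ and $q_k/(\sigma+k^2)$ in $J_\sigma^{-1}$ in columns $3,4$ and rows $3,4$. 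The series converge by the factorial bounds \eqref{eq: estimate on the expansion of the outgoing solutions} and \eqref{estimate on the ingoing solution of Lp}.

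The main obstacle is \textbf{uniformity near $\sigma=-q_k$} after the pole cancellation. Separately, $T$ and $\widetilde T$ each inherit the pole of rows $2,4$; only their combination is regular. Moreover, at $\sigma=-q_k$ the exponents of the $\mathbf{\Psi}_\pm$ contributions coincide up to an integer, so the difference quotient in $\sigma$ produces $\partial_\sigma(-z^{\prime})^{\sigma/q_k}\sim\log(-z^{\prime})$. My approach is to estimate on a small circle $|\sigma+q_k|=\delta$ using the maximum principle and the bounds on $(q_k+\sigma)\bar Y_{2i,n},(q_k+\sigma)\bar Y_{4i,n}$. Inside that circle, I write the singular part as a difference quotient of $(-z)^{-\sigma/q_k}(-z^{\prime})^{\sigma/q_k+1}$ against the outgoing counterpart, which yields the $1+|\log(-z^{\prime})|$ loss. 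The $\partial_{z^{\prime}}^{p}$ bounds then follow by differentiating the convergent series termwise.
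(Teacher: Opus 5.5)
Your treatment of holomorphy is the paper's. You split the cofactor sum into $l=1,3$ and $l=2,4$, and you remove the simple pole at $\sigma=-q_k$ using $(Y_{-q_k})_{i2}(z)[Y_{-q_k}]_{j2}(z')+(Y_{-q_k})_{i4}(z)[Y_{-q_k}]_{j4}(z')=0$. The difference quotient of $(-z)^{-\frac{\sigma}{q_k}}(-z')^{\frac{\sigma}{q_k}+1}$ that produces $\log(z'/z)$ is also what the paper does.

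The genuine gap is uniformity in $k$. The bounds in the lemma are written with $\lesssim$, not $\lesssim_k$, and must hold with constants independent of $k$. However, your basis is nearly degenerate as $k\to0$. The second component of $\mathbf{r}_{-,\sigma}$ is $\frac1k(-z)^{-\frac{k^2+\sigma}{q_k}}$. Also, $J_\sigma^{-1}$ has the entry $-\frac{q_k}{k\sigma}(-z)^{\frac{\sigma}{q_k}+1}$ in position $(4,3)$, which is missing from your list of sources of $1/|\sigma|$. Hence the entries $(i,j)=(2,3),(4,3)$ of $Y_\sigma(z)Y_\sigma^{-1}(z')$ each contain two contributions of size $1/k$. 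For $(2,3)$ the leading part is
\[
\frac{q_k}{k}(-z)^{-\frac{\sigma}{q_k}}(-z^{\prime})^{\frac{\sigma}{q_k}+1}\Bigl[\frac{1}{\sigma+k^2}\Bigl(\frac{-z^{\prime}}{-z}\Bigr)^{\frac{k^2}{q_k}}-\frac{1}{\sigma}\Bigr].
\]
If you bound $(\frac{-z'}{-z})^{k^2/q_k}\le 1$ and absorb the $\mathbf{r}_-$ piece into $\widetilde T_{ij}$ as you propose, you only get $|\widetilde T_{23}|\lesssim\frac{1}{k|\sigma|}$. The lemma requires $\frac{1}{|\sigma|}(1+|\log(-z')|)$.

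The missing step is to keep the two $1/k$ terms together and exploit a cancellation. Since $Y_\sigma(z)Y_\sigma^{-1}(z)=I$, the sum of the two $1/k$ coefficients is bounded independently of $k$; here $\frac{1}{\sigma+k^2}-\frac1\sigma=O(k^2)$. In the paper these coefficients are $l_0^{(ij)}$ and $\check\gamma_{00}^{(ij)}/\bigl((1+\frac{\sigma}{q_k})\frac{k^2+\sigma}{q_k}\frac{\sigma}{q_k}\bigr)$. The remainder $l_0^{(ij)}\bigl(e^{\frac{k^2}{q_k}\log(z'/z)}-1\bigr)$ is then $O\bigl(\frac1k\cdot k^2|\log(-z')|\bigr)=O(k|\log(-z')|)$.

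This is a second source of the logarithmic loss, independent of the pole at $-q_k$. Indeed $\check\gamma_{00}^{(23)}$ and $\check\gamma_{00}^{(43)}$ carry the factor $1+\frac{\sigma}{q_k}$, so the $1/k$ terms never enter the pole removal. Your plan attributes the logarithm only to the pole and never sees this mechanism.

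The $k$-uniformity matters for the rest of the paper. The lemma is combined with $N_i,h_i=O(k)$ to obtain $|Z_{i,\sigma}|\lesssim k(-z)^{1-2\epsilon}$ in Proposition~\ref{prop: construction of the true outgoing solution}. That $O(k)$ smallness is what drives the Rouch\'e argument in Lemma~\ref{lemma: only one zero point}. A $1/k$ loss in the $(i,3)$ entries would multiply the $O(k)$ forcing into an $O(1)$ error, and the resonance could no longer be localized.
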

\begin{proof}
Using the co-factor formulation of the inverse, we have \begin{align*}
&\left(Y_{\sigma}(z)Y_{\sigma}^{-1}(z^{\prime})\right)_{ij} \\=&
\sum_{l = 1}^{4}\left(Y_{
\sigma
}\right)_{il}(z)\left(Y_{\sigma}^{-1}\right)_{lj}(z^{\prime})\\=& \sum_{l = 1}^{4}\frac{\left(Y_{\sigma}\right)_{il}(z)[Y_{\sigma}]_{jl}(z^{\prime})}{\det{Y}_{\sigma}(z^{\prime})}\\=&
\frac{(Y_{\sigma})_{i1}(z)[Y_{\sigma}]_{j1}(z^{\prime})+(Y_{\sigma})_{i3}(z)[Y_{\sigma}]_{j3}(z^{\prime})+(Y_{\sigma})_{i2}(z)[Y_{\sigma}]_{j2}(z^{\prime})+(Y_{\sigma})_{i4}(z)[Y_{\sigma}]_{j4}(z^{\prime})}{\det{Y}_{\sigma}(z^{\prime})}\\=&(Y_{\sigma})_{i1}(z)(Y_{\sigma}^{-1})_{1j}(z^{\prime})+(Y_{\sigma})_{i3}(z)(Y_{\sigma}^{-1})_{3j}(z^{\prime})+\frac{(Y_{\sigma})_{i2}(z)[Y_{\sigma}]_{j2}(z^{\prime})+(Y_{\sigma})_{i4}(z)[Y_{\sigma}]_{j4}(z^{\prime})}{\det Y_{\sigma}(z^{\prime})}.
\end{align*}
By Lemma \ref{lemma: carefully study of the inverse}, we have\begin{align*}
    [Y_{-q_{k}}]_{j1}  = [Y_{-q_{k}}]_{j3} &= 0,\\
    (Y_{-q_{k}})_{i2}(z)[Y_{-q_{k}}]_{j2}(z^{\prime})+(Y_{-q_{k}})_{i4}(z)[Y_{-q_{k}}]_{j4}(z^{\prime})&= 0.
\end{align*}
Hence, we have $\left(Y_{\sigma}(z)Y_{\sigma}^{-1}(z^{\prime})\right)_{ij}$ is well-defined at $\sigma=-q_{k}$ for any $i,j = 1,2,3,4$. Then, we have that $(Y_{\sigma}(z)Y_{\sigma}^{-1}(z^{\prime}))_{ij}$ can be extended to a holomorphic function of $\sigma$ in the region $\sigma\in\mathbb{I}_{[-1-\eta,-\eta]}$.

To estimate $\left(Y_{\sigma}(z)Y_{\sigma}^{-1}(z^{\prime})\right)_{ij}$, we first estimate $(Y_{\sigma})_{i1}(Y_{\sigma}^{-1})_{1j}$ and $(Y_{\sigma})_{i3}(Y_{\sigma}^{-1})_{3j}$. Since $[Y_{\sigma}]_{j1}$ and $[Y_{\sigma}]_{j3}$ are degenerate at $\sigma = -q_{k}$, using the expansions of $\{\mathbf{r}_{+,\sigma},\mathbf{\Psi}_{+,\sigma},\mathbf{r}_{-,\sigma},\mathbf{\Psi}_{-,\sigma}\}$, we have the following expansions of $\left(Y_{\sigma}^{-1}\right)_{1j}$ and $\left(Y_{\sigma}^{-1}\right)_{3j}$:\begin{align*}
\left(Y_{\sigma}^{-1}\right)_{1j}(z) &= \left(J_{\sigma}^{-1}\right)_{1j}(z)+\sum_{n = 1}^{\infty}k^{n-1}\bar{Y}_{1j,n}(\sigma,k)(-z)^{n},\quad j = 1,2,\\
\left(Y_{\sigma}^{-1}\right)_{1j}(z) &=(J_{\sigma}^{-1})_{1j}(z)+\sum_{n = 1}^{\infty}k^{n-1}\bar{Y}_{1j,n}(\sigma,k)(-z)^{n+1},\quad j = 3,4, \\
\left(Y_{\sigma}^{-1}\right)_{3j}(z)& = \left(J_{\sigma}^{-1}\right)_{3j}(z)+\sum_{n = 1}^{\infty}k^{n-1}\bar{Y}_{3j,n}(\sigma,k)(-z)^{\frac{\sigma+k^{2}}{q_{k}}+n},\quad j = 1,2,\\
(Y_{\sigma}^{-1})_{3j}(z)& = (J_{\sigma}^{-1})_{3j}(z)+\sum_{n = 1}^{\infty}k^{n-1}\bar{Y}_{3j,n}(\sigma,k)(-z)^{\frac{\sigma+k^{2}}{q_{k}}+1+n},\quad j = 3,4.
\end{align*}
Then, for $\left(Y_{\sigma}\right)_{i1}(z)\left(Y_{\sigma}^{-1}\right)_{1j}(z^{\prime})$, we have 
\begin{align*}
    (Y_{\sigma})_{i1}(z)\left(Y_{\sigma}^{-1}\right)_{1j}(z^{\prime}) =&T_{1}^{(ij)}(\sigma,k,z,z^{\prime}),\quad i = 1,2,3,4,\quad j = 1,2,\\
    (Y_{\sigma})_{i1}(z)\left(Y_{\sigma}^{-1}\right)_{1j}(z^{\prime}) = &T_{1}^{(ij)}(\sigma,k,z,z^{\prime})(-z^{\prime}),\quad i = 1,2,3,4,\quad j = 3,4.
\end{align*}
For $\left(Y_{\sigma}\right)_{i3}(z)\left(Y_{\sigma}^{-1}\right)_{3j}(z^{\prime})$, we have\begin{equation}
\begin{aligned}
    &\left(Y_{\sigma}\right)_{i3}(z)\left(Y_{\sigma}^{-1}\right)_{3j}(z^{\prime}) = T_{3}^{(ij)}(\sigma,k,z,z^{\prime})(-z)^{-\frac{k^{2}+\sigma}{q_{k}}}(-z^{\prime})^{\frac{k^{2}+\sigma}{q_{k}}+1},\quad i = 1,2,\quad j = 1,2,3,4,\\&\left(Y_{\sigma}\right)_{i3}(z)\left(Y_{\sigma}^{-1}\right)_{3j}(z^{\prime}) = T_{3}^{(ij)}(\sigma,k,z,z^{\prime})(-z)^{-\frac{k^{2}+\sigma}{q_{k}}-1}(-z^{\prime})^{\frac{k^{2}+\sigma}{q_{k}}+1},\quad i = 3,4,\quad j = 1,2,3,4.
\end{aligned}
\label{def of T2}
\end{equation}
Tracking the expansions, we have that $T_{l}^{(ij)}(\sigma,k,z,z^{\prime})$ above are holomorphic functions of $\sigma$ in the region $\mathbb{I}_{[-1-\eta,-\eta]}$ and smooth functions in $z$ and $z^{\prime}$, with the following estimates for $p\geq 0$
\begin{align*}
&\left\vert\partial_{z^{\prime}}^{p}T_{1}^{(i1)}(\sigma,k,z,z^{\prime})\right\vert\leq C,\quad i = 1,3,\\&
\left\vert\partial_{z^{\prime}}^{p}T_{1}^{(ij)}(\sigma,k,z,z^{\prime})\right\vert\leq \frac{C}{\vert\sigma\vert},\quad i = 1,2,3,4,\ j = 2,3,4 \text{ or } i = 2,4,\ j = 1,2,3,4
\\&
\left\vert\partial_{z^{\prime}}^{p}T_{3}^{(ij)}(\sigma,k,z,z^{\prime})\right\vert\leq \frac{C}{k},\quad i = 3,4,\quad j = 1,2,3,4,\\&
\left\vert\partial_{z^{\prime}}^{p}T_{3}^{(ij)}(\sigma,k,z,z^{\prime})\right\vert\leq\frac{C}{k\vert\sigma\vert},\quad i = 1,2,\ j =1,2,3,4.
\end{align*}
Next, we estimate $(Y_{\sigma})_{i2}(Y_{\sigma}^{-1})_{2j}+(Y_{\sigma})_{i4}(Y_{\sigma}^{-1})_{4j}$. Similar to the above argument, we can write down the expansions of $(Y_{\sigma})_{i2}(z)[Y_{\sigma}]_{j2}(z^{\prime})$:
\begin{align*}
    \left(Y_{\sigma}\right)_{i2}(z)\left[Y_{\sigma}\right]_{j2}(z^{\prime}) =\left(Y_{\sigma}\right)_{i2}(z)\left(Y_{\sigma}^{-1}\right)_{2j}(z^{\prime})\det{Y}_{\sigma}(z^{\prime})=:\sum_{n,m = 0}^{\infty}\hat{\gamma}_{nm}^{(ij)}(\sigma,k)(-z)^{n}(-z^{\prime})^{-\frac{k^{2}+2\sigma}{q_{k}}-2+m},
\end{align*}
 and the expansions of $(Y_{\sigma})_{i4}(z)\left[Y_{\sigma}\right]_{j4}(z^{\prime})$:\begin{align*}
     &\left(Y_{\sigma}\right)_{i4}(z)\left[Y_{\sigma}\right]_{j4}(z^{\prime}) = \sum_{n,m = 0}^{\infty}\check{\gamma}_{nm}^{(ij)}(\sigma,k)(-z)^{-\frac{\sigma}{q_{k}}+n}(-z^{\prime})^{-\frac{k^{2}+\sigma}{q_{k}}+m-1},\quad i = 1,2,\\&\left(Y_{\sigma}\right)_{i4}(z)\left[Y_{\sigma}\right]_{j4}(z^{\prime})= \sum_{n,m = 0}^{\infty}\check{\gamma}_{nm}^{(ij)}(\sigma,k)(-z)^{-\frac{\sigma}{q_{k}}+n-1}(-z^{\prime})^{-\frac{k^{2}+\sigma}{q_{k}}+m-1},\quad i = 3,4,
 \end{align*}
where $\hat{\gamma}_{nm}^{(ij)}(\sigma,k)$ and $\check{\gamma}_{nm}^{(ij)}(\sigma,k)$ are holomorphic functions of $\sigma$ in the region $\mathbb{I}_{[-1-\eta,-\eta]}$ with the following estimates for $p = 0,1$:\begin{equation}
\label{eq: estimate on the singular components}
\begin{aligned}
&\left\vert\partial_{\sigma}^{p}\hat{\gamma}_{nm}^{(ij)}\right\vert\lesssim \vert\sigma\vert^{2-p}k^{n+m},\quad (i,j)\neq (2,2),\quad
\left\vert\partial_{\sigma}\hat{\gamma}_{nm}^{(22)}\right\vert\lesssim \vert\sigma\vert^{3-p}k^{n+m},\\&
\hat{\gamma}_{00}^{(ij)} = 0,\quad i = 1,3,4,\quad \check{\gamma}_{00}^{(23)} = -\frac{q_{k}}{k\sigma}\left(1+\frac{\sigma}{q_{k}}\right)\frac{k^{2}+\sigma}{q_{k}}\frac{\sigma}{q_{k}},\quad \check{\gamma}_{00}^{(43)} = -\frac{1}{k}\left(1+\frac{\sigma}{q_{k}}\right)\frac{k^{2}+\sigma}{q_{k}}\frac{\sigma}{q_{k}},\\&
\left\vert\partial_{\sigma}^{p}\check{\gamma}_{nm}^{(23)}\right\vert\lesssim \vert\sigma\vert^{2-p}k^{n+m-1},\qquad
\left\vert\partial_{\sigma}^{p}\check{\gamma}_{nm}^{(43)}\right\vert\lesssim \vert\sigma\vert^{3-p}k^{n+m-1},\\&\check{\gamma}_{00}^{(ij)} = 0,\quad (i,j)\neq(2,3)\ \text{and}\ (i,j)\neq (4,3),\\&\vert\partial_{\sigma}^{p}\check{\gamma}_{nm}^{(ij)}\vert\lesssim \vert\sigma\vert^{2-p}k^{n+m-1},\quad i = 1,2,\quad \vert\partial_{\sigma}^{p}\check{\gamma}_{nm}^{(ij)}\vert\lesssim \vert\sigma\vert^{3-p}k^{n+m-1},\quad i = 3,4.
\end{aligned}
\end{equation}
Since $(Y_{\sigma})_{i2}(z)[Y_{\sigma}]_{j2}(z^{\prime})+(Y_{\sigma})_{i4}(z)[Y_{\sigma}]_{j4}(z^{\prime}) = 0$ when $\sigma = -q_{k}$, we have \begin{align*}
    &\hat{\gamma}_{0m}^{(ij)}(-q_{k},k) = 0,\quad \hat{\gamma}_{(n+1)m}^{(ij)}(-q_{k},k)+\check{\gamma}_{nm}^{(ij)}(-q_{k},k) = 0,\quad i = 1,2,\\&\hat{\gamma}_{nm}^{(ij)}(-q_{k},k)+\check{\gamma}_{nm}^{(ij)}(-q_{k},k) = 0,\quad i = 3,4.
\end{align*}
Hence, for $i = 3,4$, we have \begin{align*}
&(Y_{\sigma})_{i2}(Y_{\sigma}^{-1})_{2j}+(Y_{\sigma})_{i4}(Y_{\sigma}^{-1})_{4j}\\[1em]=
&\frac{(Y_{\sigma})_{i2}(z)[Y_{\sigma}]_{j2}(z^{\prime})+(Y_{\sigma})_{i4}(z)[Y_{\sigma}]_{j4}(z^{\prime})}{\det{Y}_{\sigma}(z^{\prime})}\\=&
\frac{\sum_{n,m = 0}^{\infty}\hat{\gamma}_{nm}^{(ij)}(\sigma,k)(-z)^{n}(-z^{\prime})^{m}+\check{\gamma}_{nm}^{(ij)}(\sigma,k)(-z)^{-\frac{\sigma}{q_{k}}+n-1}(-z^{\prime})^{\frac{\sigma}{q_{k}}+1+m}}{\left(1+\frac{\sigma}{q_{k}}\right)\left(\frac{k^{2}+\sigma}{q_{k}}\right)\left(\frac{\sigma}{q_{k}}\right)}\\=&:
\frac{1}{1+\frac{\sigma}{q_{k}}}\left(\widetilde{T}_{2}^{(ij)}(\sigma,k,z,z^{\prime})+\widetilde{T}_{4}^{(ij)}(\sigma,k,z,z^{\prime})(-z)^{-\frac{\sigma}{q_{k}}-1}(-z^{\prime})^{\frac{\sigma}{q_{k}}+1}\right)
\end{align*}
with \begin{align*}
&\widetilde{T}_{2}^{(ij)} =\frac{1}{\frac{k^{2}+\sigma}{q_{k}}\frac{\sigma}{q_{k}}} \sum_{n,m = 0}\hat{\gamma}_{nm}^{(ij)}(-z)^{n}(-z^{\prime})^{m},\\& \widetilde{T}_{4}^{(ij)} = \frac{1}{\frac{k^{2}+\sigma}{q_{k}}\frac{\sigma}{q_{k}}}\sum_{n,m = 0}^{\infty}\check{\gamma}_{nm}^{(ij)}(-z)^{n}(-z^{\prime})^{m},\\&\widetilde{T}_{2}^{(ij)}(-q_{k},k,z,z^{\prime})+\widetilde{T}_{4}^{(ij)}(-q_{k},k,z,z^{\prime}) = 0,\\&
\left\vert\partial_{z^{\prime}}^{p^{\prime}}\partial_{\sigma}^{p}\widetilde{T}_{2}^{(ij)}\right\vert\leq C\vert\sigma\vert^{-p},\quad
\left\vert\partial_{z^{\prime}}^{p^{\prime}}\partial_{\sigma}^{p}\widetilde{T}_{4}^{(ij)}\right\vert\leq C\vert\sigma\vert^{1-p}k^{-1},\quad p = 0,1,\quad p^{\prime}\geq0.
\end{align*}
When $\sigma\rightarrow-q_{k}$, we have \begin{align*}
    &\lim_{\sigma\rightarrow-q_{k}}\frac{\widetilde{T}_{2}^{(ij)}+\widetilde{T}_{4}^{(ij)}}{1+\frac{\sigma}{q_{k}}} = \frac{1}{\frac{k^{2}+\sigma}{q_{k}}\frac{\sigma}{q_{k}}}\sum_{n,m = 0}\left({\partial_{\sigma}\hat{\gamma}_{nm}^{(ij)}+\partial_{\sigma}\check{\gamma}_{nm}^{(ij)}}\right)(-z)^{n}(-z^{\prime})^{m}<\infty,\\&
    \lim_{\sigma\rightarrow-q_{k}}\frac{\widetilde{T}_{4}^{(ij)}\left((-z)^{-\frac{\sigma}{q_{k}}-1}(-z^{\prime})^{\frac{\sigma}{q_{k}}+1}-1\right)}{1+\frac{\sigma}{q_{k}}}  = \widetilde{T}_{4}^{(ij)}(-q_{k},k,z,z^{\prime})(-z)^{-\frac{\sigma}{q_{k}}-1}(-z^{\prime})^{\frac{\sigma}{q_{k}}+1}\log\left(\frac{z^{\prime}}{z}\right).
\end{align*}
Similarly, for $i = 1,2$, we have \begin{align*}
    &(Y_{\sigma})_{i2}(Y_{\sigma}^{-1})_{2j}+(Y_{\sigma})_{i4}(Y_{\sigma}^{-1})_{4j}\\=&
    \frac{\sum_{n,m = 0}^{\infty}\hat{\gamma}_{nm}^{(ij)}(-z)^{n}(-z^{\prime})^{m}+\check{\gamma}_{nm}^{(ij)}(-z)^{-\frac{\sigma}{q_{n}}+n}(-z^{\prime})^{\frac{\sigma}{q_{k}}+1+m}}{\left(1+\frac{\sigma}{q_{k}}\right)\frac{k^{2}+\sigma}{q_{k}}\frac{\sigma}{q_{k}}}\\=&:\frac{1}{1+\frac{\sigma}{q_{k}}}\left(\widetilde{T}_{2}^{(ij)}(\sigma,k,z,z^{\prime})+\widetilde{T}_{4}^{(ij)}(\sigma,k,z,z^{\prime})(-z)^{-\frac{\sigma}{q_{k}}}(-z^{\prime})^{\frac{\sigma}{q_{k}}+1}\right)
\end{align*}
with\begin{align*}
    &\widetilde{T}_{2}^{(ij)}(-q_{k},k,z,z^{\prime})+\widetilde{T}_{4}^{(ij)}(-q_{k},k,z,z^{\prime}) = 0,\\&
    \left\vert\partial_{z^{\prime}}^{p^{\prime}}\partial_{\sigma}^{p}\widetilde{T}_{2}^{(ij)}\right\vert\lesssim \vert\sigma\vert^{1-p},\quad \left\vert\partial_{z^{\prime}}^{p^{\prime}}\partial_{\sigma}^{p}\widetilde{T}_{4}^{(ij)}\right\vert\lesssim \vert\sigma\vert^{2-p}k^{-1},\quad p = 0,1,\quad p^{\prime}\geq 0.
\end{align*}
Hence, we have\begin{align*}
\left(Y_{\sigma}(z)Y_{\sigma}^{-1}(z^{\prime})\right)_{ij} = &\left(T_{3}^{(ij)}(\sigma,k,z,z^{\prime})(-z)^{-\frac{k^{2}}{q_{k}}}(-z^{\prime})^{\frac{k^{2}}{q_{k}}}+\frac{\widetilde{T}_{4}^{(ij)}(\sigma,k,z,z^{\prime})}{1+\frac{\sigma}{q_{k}}}\right)(-z)^{-\frac{\sigma}{q_{k}}}(-z^{\prime})^{\frac{\sigma}{q_{k}}+1}\\&+T_{1}^{(ij)}(\sigma,k,z,z^{\prime})+\frac{\widetilde{T}_{2}^{(ij)}(\sigma,k,z,z^{\prime})}{1+\frac{\sigma}{q_{k}}},\quad  i = 1,2.\\
    \left(Y_{\sigma}(z)Y_{\sigma}^{-1}(z^{\prime})\right)_{ij} = &\left(T_{3}^{(ij)}(\sigma,k,z,z^{\prime})(-z)^{-\frac{k^{2}}{q_{k}}}(-z^{\prime})^{\frac{k^{2}}{q_{k}}}+\frac{\widetilde{T}_{4}^{(ij)}(\sigma,k,z,z^{\prime})}{1+\frac{\sigma}{q_{k}}}\right)(-z)^{-\frac{\sigma}{q_{k}}-1}(-z^{\prime})^{\frac{\sigma}{q_{k}}+1}\\&+T_{1}^{(ij)}(\sigma,k,z,z^{\prime})+\frac{\widetilde{T}_{2}^{(ij)}(\sigma,k,z,z^{\prime})}{1+\frac{\sigma}{q_{k}}},\quad  i = 3,4.
\end{align*}
Since $T_{3}^{(ij)}$ and $\widetilde{T}_{4}^{(ij)}$ can only be bounded by $\frac{1}{k}$, to improve the bound, we carefully track the presence of $\frac{1}{k}$-factor in the above expansions. Noticing that only when $(i,j) = (2,3),\ (4,3)$, the $\frac{1}{k}$-factor will appear. We can expand $T_{3}^{(ij)}$ and $\widetilde{T}_{4}^{(ij)}$ as 
\begin{align*}
    T_{3}^{(ij)}(\sigma,k,z,z^{\prime}) &= l_{0}^{(ij)}(\sigma,k)+\sum_{n,m = 0,(n,m)\neq(0,0)}^{\infty}l_{nm}^{(ij)}(\sigma,k)(-z)^{n}(-z^{\prime})^{m},\\
    \frac{\widetilde{T}_{4}^{(ij)}(\sigma,k,z,z^{\prime})}{1+\frac{\sigma}{q_{k}}}&=\frac{\check{\gamma}_{00}^{(ij)}(\sigma,k)}{\left(1+\frac{\sigma}{q_{k}}\right)\frac{\sigma+k^{2}}{q_{k}}\frac{\sigma}{q_{k}}}+\frac{1}{\left(1+\frac{\sigma}{q_{k}}\right)\frac{\sigma+k^{2}}{q_{k}}\frac{\sigma}{q_{k}}}\sum_{n,m = 0,(n,m)\neq(0,0)}^{\infty}\check{\gamma}_{nm}^{(ij)}(\sigma,k)(-z)^{n}(-z^{\prime})^{m}.
\end{align*}
By the estimate on $T_{2}^{(ij)}$ \eqref{def of T2} and the estimate on $\check{\gamma}_{nm}^{(ij)}$ in the above decomposition, only the terms $$l_{0}^{(ij)}(\sigma,k),\quad 
\frac{\check{\gamma}_{00}^{(ij)}(\sigma,k)}{\left(1+\frac{\sigma}{q_{k}}\right)\frac{k^{2}+\sigma}{q_{k}}\frac{\sigma}{q_{k}}}$$ are bounded by $\frac{1}{k}$, while all the rest terms are uniformly bounded independently of $k$. Notice that the above two quantities do not have a pole at $\sigma = -q_{k}$ due to \eqref{eq: estimate on the singular components} for $(i,j) = (2,3)$ or $(4,3)$. Since when $z = z^{\prime}$, the matrix $Y_{\sigma}(z)Y_{\sigma}^{-1}(z^{\prime})$ is reduced to the identity matrix, we have that \begin{equation*}
    l_{0}^{(ij)}(\sigma,k)+\frac{\check{\gamma}_{00}^{(ij)}(\sigma,k)}{\left(1+\frac{\sigma}{q_{k}}\right)\frac{k^{2}+\sigma}{q_{k}}\frac{\sigma}{q_{k}}}
\end{equation*}
is bounded independently of $k$. Then we have\begin{align*}
&l_{0}^{(ij)}(\sigma,k)(-z)^{-\frac{k^{2}}{q_{k}}}(-z^{\prime})^{\frac{k^{2}}{q_{k}}}+\frac{\check{\gamma}_{00}^{(ij)}(\sigma,k)}{\left(1+\frac{\sigma}{q_{k}}\right)\frac{k^{2}+\sigma}{q_{k}}\frac{\sigma}{q_{k}}}\\ =& l_{0}^{(ij)}(\sigma,k)+\frac{\check{\gamma}_{00}^{(ij)}(\sigma,k)}{\left(1+\frac{\sigma}{q_{k}}\right)\frac{k^{2}+\sigma}{q_{k}}\frac{\sigma}{q_{k}}}+l_{0}^{(ij)}(\sigma,k)\left(e^{\frac{k^{2}}{q_{k}}\log\left(\frac{z^{\prime}}{z}\right)}-1\right).
\end{align*}
Since $-1\leq z\leq z^{\prime}\leq 0$, we have \begin{align*}
    \left\vert l_{0}^{(ij)}(\sigma,k)(-z)^{-\frac{k^{2}}{q_{k}}}(-z^{\prime})^{\frac{k^{2}}{q_{k}}}+\frac{\check{\gamma}_{00}^{(ij)}(\sigma,k)}{\left(1+\frac{\sigma}{q_{k}}\right)\frac{k^{2}+\sigma}{q_{k}}\frac{\sigma}{q_{k}}}\right\vert\leq C+Ck(\log(-z)-\log(-z^{\prime}))\leq C(1-\log(-z^{\prime})).
\end{align*}
Putting everything together, we can conclude the proof of this proposition.
\end{proof}

\subsubsection{Deriving the Volterra equation}
\label{sec: deriving the volterra equation}
To construct outgoing solutions to $\mathcal{L}(\hat{r}_{p},\hat{\Psi}_{p}) = 0$, we use the special structures $\frac{d\hat{r}_{p}}{dz}$ and $\frac{d\hat{\Psi}_{p}}{dz}$ to redo the argument in Proposition \ref{prop: general construction of outgoing and ingoing solutions}. In this section, we derive the corresponding Volterra equation.

Let\begin{align*}
\begin{bmatrix}
\hat{r}_{p}\\\hat{\Psi}_{p}
\end{bmatrix}
 =& A_{\sigma}(z)\mathbf{r}_{+,\sigma}+B_{\sigma}(z)\mathbf{\Psi}_{+,\sigma}+C_{\sigma}(z)\mathbf{r}_{-,\sigma}+D_{\sigma}(z)\mathbf{\Psi}_{-,\sigma}\\=&
 \begin{bmatrix}
     \mathbf{r}_{+,\sigma}&\mathbf{\Psi}_{+,\sigma}&\mathbf{r}_{-,\sigma}&\mathbf{\Psi}_{-,\sigma}
 \end{bmatrix}\begin{bmatrix}
     A_{\sigma}\\B_{\sigma}\\C_{\sigma}\\D_{\sigma}
 \end{bmatrix}(z).
\end{align*}
Similarly as in Proposition \ref{prop: general construction of outgoing and ingoing solutions}, we have\begin{equation}
\label{eq: voltella for outgoing}
\begin{aligned}
&\begin{bmatrix}
 \mathbf{r}_{+,\sigma}&\mathbf{\Psi}_{+,\sigma}&\mathbf{r}_{-,\sigma}&\mathbf{\Psi}_{-,\sigma}\\[.5em]
\mathbf{r}_{+,\sigma}^{\prime}&\mathbf{\Psi}_{+,\sigma}^{\prime}&\mathbf{r}_{-,\sigma}^{\prime}&\mathbf{\Psi}_{-,\sigma}^{\prime}
 \end{bmatrix}\frac{d}{dz}\begin{bmatrix}
A_{\sigma}(z)\\B_{\sigma}(z)\\C_{\sigma}(z)\\D_{\sigma}(z)
\end{bmatrix}
 \\=& \frac{1}{q_{k}z}\begin{bmatrix}
0&0&0&0\\
0&0&0&0\\
h_{1}&h_{2}&N_{1}&N_{2}\\h_{3}&h_{4}&N_{3}&N_{4}
 \end{bmatrix}
 \begin{bmatrix}
     \mathbf{r}_{+,\sigma}&\mathbf{\Psi}_{+,\sigma}&\mathbf{r}_{-,\sigma}&\mathbf{\Psi}_{-,\sigma}\\[.5em]\mathbf{r}^{\prime}_{+,\sigma}&\mathbf{\Psi}_{+,\sigma}^{\prime}&\mathbf{r}_{-,\sigma}^{\prime}&\mathbf{\Psi}_{-,\sigma}^{\prime}
     \end{bmatrix}
 \begin{bmatrix}
 A_{\sigma}(z)\\B_{\sigma}(z)\\C_{\sigma}(z)\\D_{\sigma}(z)
 \end{bmatrix}
 \end{aligned}
\end{equation}
To simplify our notation, we denote
\begin{align*}
\mathbf{P}:=\frac{1}{q_{k}z}Y_{\sigma}^{-1}\begin{bmatrix}
    0&0&0&0\\
    0&0&0&0\\
    h_{1}&h_{2}&N_{1}&N_{2}\\h_{3}&h_{4}&N_{3}&N_{4}
\end{bmatrix}
Y_{\sigma}
\end{align*}
Let \begin{equation*}
    \lim_{z\rightarrow0}\begin{bmatrix}
        A_{\sigma}\\B_{\sigma}\\C_{\sigma}\\D_{\sigma}
    \end{bmatrix}(z)  = \begin{bmatrix}
        a_{\sigma}\\b_{\sigma}\\c_{\sigma}\\d_{\sigma}
    \end{bmatrix}.
\end{equation*}

Then integrating \eqref{eq: voltella for outgoing}, we have\begin{equation}
\begin{aligned}
\begin{bmatrix}
A_{\sigma}\\B_{\sigma}\\C_{\sigma}\\D_{\sigma}
\end{bmatrix}(z)
 = \begin{bmatrix}
a_{\sigma}\\b_{\sigma}\\c_{\sigma}\\d_{\sigma}
 \end{bmatrix}
 +\int_{0}^{z}\mathbf{P}
 \begin{bmatrix}
 A_{\sigma}\\B_{\sigma}\\C_{\sigma}\\D_{\sigma}
 \end{bmatrix}(z^{\prime})dz^{\prime}.
\end{aligned}
\label{volterra equation for studying the small expansion}
\end{equation}
Iterating the equation \eqref{volterra equation for studying the small expansion}, we have\begin{equation}
\begin{bmatrix}
A_{\sigma}\\B_{\sigma}\\C_{\sigma}\\D_{\sigma}
\end{bmatrix}(z) = \begin{bmatrix}
a_{\sigma}\\b_{\sigma}\\c_{\sigma}\\d_{\sigma}
\end{bmatrix}
+\sum_{n = 1}^{\infty}\begin{bmatrix}
A_{n,\sigma}\\B_{n,\sigma}\\C_{n,\sigma}\\D_{n,\sigma}
\end{bmatrix}(z)
\end{equation}
where $(A_{n,\sigma},B_{n,\sigma},C_{n,\sigma},D_{n,\sigma})$ takes the form of
\begin{equation}
\begin{bmatrix}
A_{n,\sigma}\\B_{n,\sigma}\\C_{n,\sigma}\\D_{n,\sigma}
\end{bmatrix}(z)
 = \int_{0}^{z}\cdots\int_{0}^{z_{n}}\mathbf{P}(z_{1})\cdots \mathbf{P}(z_{n+1})\begin{bmatrix}
 a_{\sigma}\\b_{\sigma}\\c_{\sigma}\\d_{\sigma}
 \end{bmatrix}dz_{1}\cdots dz_{n+1},
\end{equation}
where $z_{0}: = z$.
If the infinite sum converges, then we have\begin{align*}
    \begin{bmatrix}
    \hat{r}_{p}\\\hat{\Psi}_{p}
    \end{bmatrix}
     &= \begin{bmatrix}
         \mathbf{r}_{+,\sigma}&\mathbf{\Psi}_{+,\sigma}&\mathbf{r}_{-,\sigma}&\mathbf{\Psi}_{-,\sigma}
     \end{bmatrix}
     \begin{bmatrix}
         a_{\sigma}\\b_{\sigma}\\c_{\sigma}\\d_{\sigma}
     \end{bmatrix}
     +\sum_{n = 1}^{\infty}\begin{bmatrix}
         \mathbf{r}_{+,\sigma}&\mathbf{\Psi}_{+,\sigma}&\mathbf{r}_{-,\sigma}&\mathbf{\Psi}_{-,\sigma}
     \end{bmatrix}
     \begin{bmatrix}
     A_{n,\sigma}\\B_{n,\sigma}\\C_{n,\sigma}\\D_{n,\sigma}
     \end{bmatrix}\\&=:
     \sum_{n = 0}^{\infty}\begin{bmatrix}
         \hat{r}_{p,n}\\\hat{\Psi}_{p,n}
     \end{bmatrix},
\end{align*}
where $(\hat{r}_{p,n},\hat{\Psi}_{p,n})$ is defined to be \begin{equation*}
    \begin{bmatrix}
        \hat{r}_{p,n}\\\hat{\Psi}_{p,n}
    \end{bmatrix}
     = \begin{bmatrix}
         \mathbf{r}_{+,\sigma}&\mathbf{\Psi}_{+,\sigma}&\mathbf{r}_{-,\sigma}&\mathbf{\Psi}_{-,\sigma}
     \end{bmatrix}
     \begin{bmatrix}
         A_{n,\sigma}\\B_{n,\sigma}\\C_{n,\sigma}\\D_{n,\sigma}
     \end{bmatrix}.
\end{equation*}
Then, we have \begin{equation*}
    \begin{bmatrix}
        \hat{r}_{p,n}\\\hat{\Psi}_{p,n}\\\hat{r}_{p,n}^{\prime}\\\hat{\Psi}_{p,n}^{\prime}
    \end{bmatrix}
     = Y_{\sigma}(z)\begin{bmatrix}
         A_{n,\sigma}\\B_{n,\sigma}\\C_{n,\sigma}\\D_{n,\sigma}
     \end{bmatrix}(z).
\end{equation*}
\subsubsection{Estimates on the expansion}

First, we estimate \begin{align*}
    \begin{bmatrix}
        \hat{r}_{p,1}\\\hat{\Psi}_{p,1}\\\hat{r}_{p,1}^{\prime}\\\hat{\Psi}_{p,1}^{\prime}
    \end{bmatrix}(z) =& Y_{\sigma}(z)
    \int_{0}^{z}\mathbf{P}(z_{1})\begin{bmatrix}
        a_{\sigma}\\b_{\sigma}\\c_{\sigma}\\d_{\sigma}
    \end{bmatrix}
dz_{1} = \int_{0}^{z}Y_{\sigma}(z)Y_{\sigma}^{-1}(z_{1})\frac{1}{q_{k}z_{1}}\begin{bmatrix}
    0&0&0&0\\0&0&0&0\\h_{1}&h_{2}&N_{1}&N_{2}\\h_{3}&h_{4}&N_{3}&N_{4}
\end{bmatrix}
Y_{\sigma}(z_{1})\begin{bmatrix}
    a_{\sigma}\\b_{\sigma}\\c_{\sigma}\\d_{\sigma}
\end{bmatrix}.
\end{align*}
Recall that for $N_{i}$ and $h_{i}$, we have the estimates \begin{equation*}
    \vert N_{i}\vert\lesssim_{\eta}\frac{k}{\epsilon}(-z)^{1-\epsilon},\quad \vert h_{i}\vert\lesssim_{\eta} \frac{k\vert\sigma\vert}{\epsilon}(-z)^{1-\epsilon},\quad  i  = 1,3,\quad \vert h_{i}\vert\lesssim_{\eta} \frac{k}{\epsilon}(-z)^{1-\epsilon},\quad  i = 2,4.
\end{equation*}
In the construction of the outgoing solutions, we only consider the cases where $(c_{\sigma},d_{\sigma})=(0,0)$ and $(a_{\sigma},b_{\sigma}) = (1,0)$ or $(0,1)$. In both cases, we have \begin{equation*}
    Y_{\sigma}(z_{1})\begin{bmatrix}
        a_{\sigma}\\b_{\sigma}\\0\\0
    \end{bmatrix} = \begin{bmatrix}
        O(1)\\O(\sigma)\\O(1)\\O(1)
    \end{bmatrix}.
\end{equation*} 
Then, we have \begin{equation*}
   \frac{1}{q_{k}z_{1}} \begin{bmatrix}
        0&0&0&0\\0&0&0&0\\h_{1}&h_{2}&N_{1}&N_{2}\\h_{3}&h_{4}&N_{3}&N_{4}
    \end{bmatrix}(z_{1})Y_{\sigma}(z_{1})\begin{bmatrix}
        a_{\sigma}\\b_{\sigma}\\0\\0
    \end{bmatrix}
     = \begin{bmatrix}
         0\\0\\O(k\sigma(-z)^{-\epsilon})\\O(k\sigma(-z)^{-\epsilon})
     \end{bmatrix}.
\end{equation*}
 Then by Lemma \ref{lemma: estimate on YYinverse}, for $i = 1,2$, we have \begin{align*}
    &\left\vert \left(Y_{\sigma}(z)Y_{\sigma}^{-1}(z_{1})\frac{1}{q_{k}z_{1}}\begin{bmatrix}
        0&0&0&0\\0&0&0&0\\h_{1}&h_{2}&N_{1}&N_{2}\\h_{3}&h_{4}&N_{3}&N_{4}
    \end{bmatrix}(z_{1})Y_{\sigma}(z_{1})\begin{bmatrix}
        a_{\sigma}\\b_{\sigma}\\0\\0
    \end{bmatrix}\right)_{i}\right\vert\\\lesssim&\left(1+\left\vert\log(-z_{1})\right\vert\right)(-z_{1})^{-\epsilon}+\left(1+\vert\log(-z^{\prime})\vert\right)(-z)^{-\frac{\sigma}{q_{k}}}(-z_{1})^{\frac{\sigma}{q_{k}}+1-\epsilon}\\\lesssim&
    (-z_{1})^{-2\epsilon}+(-z)^{-\frac{\sigma}{q_{k}}}(-z_{1})^{\frac{\sigma}{q_{k}}+1-2\epsilon}.
\end{align*}
Similarly, for $i = 3,4$, we have \begin{align*}
    &\left\vert \left(Y_{\sigma}(z)Y_{\sigma}^{-1}(z_{1})\frac{1}{q_{k}z_{1}}\begin{bmatrix}
        0&0&0&0\\0&0&0&0\\h_{1}&h_{2}&N_{1}&N_{2}\\h_{3}&h_{4}&N_{3}&N_{4}
    \end{bmatrix}(z_{1})Y_{\sigma}(z_{1})\begin{bmatrix}
        a_{\sigma}\\b_{\sigma}\\0\\0
    \end{bmatrix}\right)_{i}\right\vert\\\lesssim&k\left(1+\left\vert\log(-z_{1})\right\vert\right)(-z_{1})^{-\epsilon}+k\vert\sigma\vert\left(1+\vert\log(-z^{\prime})\vert\right)(-z)^{-\frac{\sigma}{q_{k}}-1}(-z_{1})^{\frac{\sigma}{q_{k}}+1-\epsilon}\\\lesssim&
    k(-z_{1})^{-2\epsilon}+k\vert\sigma\vert(-z)^{-\frac{\sigma}{q_{k}}-1}(-z_{1})^{\frac{\sigma}{q_{k}}+1-2\epsilon}.
\end{align*}
Therefore, integrating the above expression, we have \begin{align*}
   \left\vert \begin{bmatrix}
        \hat{r}_{p,1}\\\hat{\Psi}_{p,1}\\\hat{r}_{p,1}^{\prime}\\\hat{\Psi}_{p,1}^{\prime}
\end{bmatrix}\right\vert\lesssim k(-z)^{1-2\epsilon}.
\end{align*}

To estimate $(\hat{r}_{p,n},\hat{\Psi}_{p,n})$ for $n>1$, since the presence of $\sigma$ in $\mathbf{P}(z)$, we need to carefully track the $\sigma$-dependence in the estimate. We first consider \begin{equation*}
    Y_{\sigma}(z)Y_{\sigma}^{-1}(z_{1})\frac{1}{q_{k}z_{1}}\begin{bmatrix}
        0&0&0&0\\0&0&0&0\\h_{1}&h_{2}&N_{1}&N_{2}\\h_{3}&h_{4}&N_{3}&N_{4}
    \end{bmatrix}\begin{bmatrix}
       p_{1}(z_{1},\sigma)\\ p_{2}(z_{1},\sigma)\\ p_{3}(z_{1},\sigma)\\ p_{4}(z_{1},\sigma)
    \end{bmatrix},
\end{equation*}
where $p_{i}(z,\sigma)$ are functions which are uniformly bounded with bounds independent of $\sigma$. Let $$p = \vert p_{1}\vert+\vert p_{2}\vert+\vert p_{3}\vert+\vert p_{4}\vert.$$ Then we have \begin{align*}
   \left\vert Y_{\sigma}(z)Y_{\sigma}^{-1}(z_{1})\frac{1}{q_{k}z_{1}}\begin{bmatrix}
        0&0&0&0\\0&0&0&0\\
        h_{1}&h_{2}&N_{1}&N_{2}\\h_{3}&h_{4}&N_{3}&N_{4}
    \end{bmatrix}
    \begin{bmatrix}
        p_{1}\\p_{2}\\p_{3}\\p_{4}
    \end{bmatrix}\right\vert = &\left\vert Y_{\sigma}(z)Y_{\sigma}^{-1}(z_{1})\begin{bmatrix}
        0\\0\\O(k\vert\sigma\vert p(-z_{1})^{-\epsilon})\\O(k\vert\sigma\vert p(-z_{1})^{-\epsilon})
    \end{bmatrix}\right\vert\\\lesssim&\begin{bmatrix}
        k p(-z_{1})^{-2\epsilon}+kp(-z)^{-\frac{\sigma}{q_{k}}}(-z_{1})^{\frac{\sigma}{q_{k}}+1-2\epsilon}\\kp(-z_{1})^{-2\epsilon}+kp(-z)^{-\frac{\sigma}{q_{k}}}(-z_{1})^{\frac{\sigma}{q_{k}}+1-2\epsilon}\\
        kp(-z_{1})^{-2\epsilon}+kp(-z)^{-\frac{\sigma}{q_{k}}-1}(-z_{1})^{\frac{\sigma}{q_{k}}+1-2\epsilon}\\
        kp(-z_{1})^{-2\epsilon}+kp(-z)^{-\frac{\sigma}{q_{k}}-1}(-z_{1})^{\frac{\sigma}{q_{k}}+1-2\epsilon}
    \end{bmatrix}.
\end{align*} 
Therefore, arguing inductively, for $-1\leq z\leq z_{1}\leq z_{2}\leq\cdots\leq z_{n}$, we have \begin{align*}
   \left\vert \int_{0}^{z_{1}}\cdots\int_{0}^{z_{n}}Y_{\sigma}(z)\mathbf{P}(z_{1})\mathbf{P}(z_{2})\cdots\mathbf{P}(z_{n})\begin{bmatrix}
        a_{\sigma}\\b_{\sigma}\\0\\0
    \end{bmatrix}dz_{1}\cdots dz_{n}\right\vert\lesssim \frac{1}{n!}k^{n}(-z)^{n(1-\epsilon)}.
\end{align*}
Putting everything together and taking $(a_{\sigma},b_{\sigma}) = (1,0)$ and $(0,1)$, respectively, we have the following proposition constructing outgoing solutions to $\mathcal{L}(\hat{r}_{p},\hat{\Psi}_{p}) = 0$:
\begin{proposition}
\label{prop: construction of the true outgoing solution}
    For $\sigma\in\mathbb{I}_{[-1-\eta,-\eta]}\backslash\{-q_{k}\}$, there exist two linearly independent solutions \begin{equation*}
        \hat{\mathbf{r}}_{out,\sigma} = \begin{bmatrix}
            \hat{\mathbf{r}}_{out,\sigma}^{(1)}\\\hat{\mathbf{r}}_{out,\sigma}^{(2)}
        \end{bmatrix},\quad \hat{\mathbf{\Psi}}_{out,\sigma} = \begin{bmatrix}
            \hat{\mathbf{\Psi}}_{out,\sigma}^{(1)}\\\hat{\mathbf{\Psi}}_{out,\sigma}^{(2)}
        \end{bmatrix}
    \end{equation*}
    to the equation $\mathcal{L}(\hat{r}_{p},\hat{\Psi}_{p}) = 0$, having the following expansions:\begin{align}
        \hat{\mathbf{r}}_{out,\sigma} &= \begin{bmatrix}
            \hat{\mathbf{r}}_{out,\sigma}^{(1)}\\\hat{\mathbf{r}}_{out,\sigma}^{(2)}
        \end{bmatrix}
         = \begin{bmatrix}
             1\\\frac{kq_{k}}{\sigma}
         \end{bmatrix}+\begin{bmatrix}
             Z_{1,\sigma}\\Z_{2,\sigma}
         \end{bmatrix}(z),\label{eq: expansion of the r outgoing solutions to the whole system}\\
        \hat{\mathbf{\Psi}}_{out,\sigma}&=\begin{bmatrix}
            \hat{\mathbf{\Psi}}^{(1)}_{out,\sigma}\\\hat{\mathbf{\Psi}}_{out,\sigma}^{(2)}
        \end{bmatrix} = \begin{bmatrix}
            0\\1+\frac{\sigma}{q_{k}}
        \end{bmatrix}+\begin{bmatrix}
            Z_{3,\sigma}\\Z_{4,\sigma}
        \end{bmatrix}(z),\label{eq: expansion of the psi outgoing solutions to the whole system}
    \end{align}
    where $Z_{i,\sigma}(z)$ are holomorphic functions of $\sigma$ in the region $\mathbb{I}_{[-1-\eta,-\eta]}$ with the estimate:\begin{equation}
        \left\vert Z_{i,\sigma}(z)\right\vert\leq \frac{C}{\epsilon}k(-z)^{1-2\epsilon}.\label{eq: estimate on Z}
    \end{equation}
    Moreover, we have the following estimates for $\hat{\mathbf{r}}_{out,\sigma}$ and $\hat{\mathbf{\Psi}}_{out,\sigma}$:\begin{align}
        &\left\vert\partial_{z}^{p}\hat{\mathbf{r}}_{out,\sigma}\right\vert\leq C\vert\sigma\vert^{p},\quad 0\leq p\leq 2,\label{eq: estimates on the r outgoing solutions}\\&
        \left\vert\partial_{z}^{p}\hat{\mathbf{\Psi}}_{out,\sigma}\right\vert\leq C\vert\sigma\vert^{p}\left(\left\vert1+\frac{\sigma}{q_{k}}\right\vert +k\vert z\vert^{1-2\epsilon}\right),\quad 0\leq p\leq2.\label{eq: estimates on the psi outgoing solutions}
    \end{align}
\end{proposition}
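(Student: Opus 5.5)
The construction uses the Volterra iteration of Section \ref{sec: deriving the volterra equation}, run with the base basis $\{\mathbf{r}_{+,\sigma},\mathbf{\Psi}_{+,\sigma},\mathbf{r}_{-,\sigma},\mathbf{\Psi}_{-,\sigma}\}$ of $\mathcal{L}_{p}=0$ from Proposition \ref{prop: model argument for square coefficient} and Corollary \ref{coro: construction of ingoing solution for Lp}. Take $(a_{\sigma},b_{\sigma},c_{\sigma},d_{\sigma})=(1,0,0,0)$ to produce $\hat{\mathbf{r}}_{out,\sigma}$, and $(0,1,0,0)$ to produce $\hat{\mathbf{\Psi}}_{out,\sigma}$. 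Each solution is defined as $[\mathbf{r}_{+,\sigma},\mathbf{\Psi}_{+,\sigma},\mathbf{r}_{-,\sigma},\mathbf{\Psi}_{-,\sigma}](A_{\sigma},B_{\sigma},C_{\sigma},D_{\sigma})^{T}$, with the coefficient vector given by the series $\sum_{n}(A_{n,\sigma},\dots)$. The leading terms $(1,kq_{k}/\sigma)$ and $(0,1+\sigma/q_{k})$ are the $z=0$ values of $\mathbf{r}_{+,\sigma}$ and $\mathbf{\Psi}_{+,\sigma}$. The remainder $Z_{i,\sigma}$ consists of the $(-z)$-tails of these base solutions, which are $O(k(-z))$ by \eqref{eq: estimate on the expansion of the outgoing solutions}, plus $\sum_{n\ge1}(\hat r_{p,n},\hat\Psi_{p,n})$.

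\textbf{Step 1: term-by-term bounds.} Write each iterate as $Y_{\sigma}(z)\int_{0}^{z}\cdots\int_{0}^{z_{n}}\mathbf{P}(z_{1})\cdots\mathbf{P}(z_{n+1})(a_{\sigma},b_{\sigma},0,0)^{T}$. Insert $Y_{\sigma}(z_{j-1})Y_{\sigma}^{-1}(z_{j-1})$ between consecutive factors so that every kernel has the form $Y_{\sigma}(z_{j-1})Y_{\sigma}^{-1}(z_{j})\frac{1}{q_{k}z_{j}}\mathcal{N}(z_{j})$, where $\mathcal{N}$ is the matrix containing $h_{i}$ and $N_{i}$. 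Lemma \ref{lemma: estimate on YYinverse} shows these kernels are holomorphic on all of $\mathbb{I}_{[-1-\eta,-\eta]}$, even though $Y_{\sigma}^{-1}$ alone has a pole at $-q_{k}$. Combined with \eqref{eq: key estimate on Ni and hi}, this gives the inductive bound $|(\hat r_{p,n},\hat\Psi_{p,n},\hat r_{p,n}',\hat\Psi_{p,n}')|\lesssim \frac{(Ck/\epsilon)^{n}}{n!}(-z)^{n(1-2\epsilon)}$, exactly as computed just before the statement.

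The mechanism behind this bound is the following. The lower half of $\mathcal{N}$ only sees $h_{i}$ multiplied by $\hat r,\hat\Psi$ and $N_{i}$ multiplied by their derivatives. The $h_{1},h_{3}$ entries carry a factor $|\sigma|$, which is paid for by the $1/|\sigma|$ in the $\widetilde T_{ij}$ bounds. The logarithmic losses $1+|\log(-z')|$ are absorbed by passing from $(-z)^{-\epsilon}$ to $(-z)^{-2\epsilon}$. The oscillatory factor $(-z)^{-\sigma/q_{k}}(-z_{1})^{\sigma/q_{k}+1-2\epsilon}$ is integrable from $0$ on the range $z\le z_{1}\le 0$, since $\Re\sigma/q_{k}+1-2\epsilon>-1$ when $\Re\sigma\ge -1-\eta$ and $\eta<1/2$. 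It therefore integrates to $O((-z)^{1-2\epsilon})$ with a constant uniform in $\sigma$.

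\textbf{Step 2: summation and holomorphy.} Summing the bounds from Step 1 gives \eqref{eq: estimate on Z}. Each iterate is holomorphic in $\sigma$ and the series converges locally uniformly, so $Z_{i,\sigma}$ is holomorphic on the whole strip. Linear independence for $\sigma\neq-q_{k}$ follows from the $z\to0$ limits $(1,kq_{k}/\sigma)$ and $(0,1+\sigma/q_{k})$, which are independent exactly when $\sigma\neq-q_{k}$.

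\textbf{Step 3: derivative estimates.} For $p=1$, the bound comes from the $\hat r',\hat\Psi'$ components of the same Volterra vector together with the base solutions: the base $z$-derivatives contribute factors of $\sigma$ through the coefficients in the recurrence. For $p=2$, solve the equation $\mathcal{L}=0$ for $q_{k}z\,\partial_{z}^{2}$. Near $z=0$, use the structure of the recurrence (each coefficient $\bar a_{n},\widetilde A_{n}$ grows at most like $|\sigma|$ per derivative) rather than dividing by $z$.

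For $\hat{\mathbf{\Psi}}_{out,\sigma}$, the factor $|1+\sigma/q_{k}|+k|z|^{1-2\epsilon}$ is handled separately for the two pieces. The leading base term $\mathbf{\Psi}_{+,\sigma}$ is proportional to $1+\sigma/q_{k}$ except for the tails $\widetilde a_{n},\widetilde A_{n}$, which are $O(k)$. The Volterra corrections carry a factor $k(-z)^{1-2\epsilon}$.

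\textbf{Main obstacle.} The step I expect to be hardest is keeping all the Volterra bounds uniform in $\sigma$, both near $\sigma=-q_{k}$ and for large $|\Im\sigma|$. Near $-q_{k}$ this depends on the cancellation in Lemma \ref{lemma: estimate on YYinverse}. For large $|\Im\sigma|$, the $|\sigma|$ growth of $h_{1},h_{3}$ must be matched against the $1/|\sigma|$ gains in $\widetilde T_{ij}$, and the entries $i=3,4$ in Lemma \ref{lemma: estimate on YYinverse} lack that gain. To handle those entries I would first try an integration by parts in $z_{1}$ using $\partial_{z_{1}}(-z_{1})^{\sigma/q_{k}+1}$, which produces the needed $1/\sigma$. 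Only after that would I derive the $|\sigma|^{p}$ bounds for the derivatives.
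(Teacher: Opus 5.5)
Your proposal is essentially the paper's argument. Both run the Volterra iteration around the $\mathcal{L}_p$ basis with data $(1,0,0,0)$ and $(0,1,0,0)$ at $z=0$, telescope each iterate into kernels $Y_\sigma(z_{j-1})Y_\sigma^{-1}(z_j)\frac{1}{q_kz_j}\mathcal{N}(z_j)$ controlled by Lemma~\ref{lemma: estimate on YYinverse} and \eqref{eq: key estimate on Ni and hi}, sum the $k^n/n!$ bounds, and read off holomorphy and linear independence from the $z=0$ limits. The obstacle you flag needs no integration by parts: run the induction with value rows $O(1)$ and derivative rows $O(|\sigma|)$. It closes because the derivative rows only multiply the $N_i$, which carry no $\sigma$, while the $(i=1,2;\,j=3,4)$ entries of $Y_\sigma(z)Y_\sigma^{-1}(z_1)$ carry a $1/|\sigma|$ that absorbs the $O(k|\sigma|)$ input. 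This bookkeeping produces exactly the $|\sigma|^{p}$ loss allowed in \eqref{eq: estimates on the r outgoing solutions}--\eqref{eq: estimates on the psi outgoing solutions}.
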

\begin{proof}
The proof follows from the fact that, by choosing $(a_{\sigma},b_{\sigma}) = (1,0), (0,1)$, we have that $(\hat{r}_{p,n},\hat{\Psi}_{p,n})$ constructed above is absolutely summable.
\end{proof}

\subsection{Construction of the Dirichlet solutions}
In this section, we construct the Dirichlet solutions to the equation \eqref{eq: general equation in the construction of ingoing solution}. First, we have the following definition on the Dirichlet solutions:\begin{definition}
For any $\sigma\in\mathbb{I}_{[-1-\eta,-\eta]}$, the Dirichlet solution $\mathbf{f}_{dir,\sigma}$ is the unique solution to \eqref{eq: general equation in the construction of ingoing solution} with the boundary condition\begin{equation*}
(f,g,,f^{\prime},g^{\prime})|_{z = -1} = (0,0,1,0);
\end{equation*}
the Dirichlet solution $\mathbf{g}_{dir,\sigma}$ is the unique solution to \eqref{eq: general equation in the construction of ingoing solution} with the boundary condition \begin{equation*}
(f,g,f^{\prime},g^{\prime})|_{z = -1}= (0,0,0,1).
\end{equation*}

\end{definition} 
The existence of the Dirichlet solutions in the whole region $z\in[-1,0]$ is given by the following proposition:
\begin{proposition}
\label{prop: general construction of the Dirichlet solution}
For $\sigma\in\mathbb{I}_{[-1-\eta,-\eta]}$, the Dirichlet solutions $(\mathbf{f}_{dir,\sigma},\mathbf{g}_{dir,\sigma})$ to \eqref{eq: general equation in the construction of ingoing solution} satisfies the following form\begin{align}
\mathbf{f}_{dir,\sigma} &= A_{f}(\sigma,z)\begin{bmatrix}
1\\0
\end{bmatrix}
+B_{f}(\sigma,z)\begin{bmatrix}
0\\1
\end{bmatrix}
+C_{f}(\sigma,z)\begin{bmatrix}
(-z)^{-\frac{k^{2}+\sigma}{q_{k}}}\\\frac{1}{k}(-z)^{-\frac{k^{2}+\sigma}{q_{k}}}
\end{bmatrix}
+D_{f}(\sigma,z)\begin{bmatrix}
0\\(-z)^{-\frac{\sigma}{q_{k}}}
\end{bmatrix},\\
\mathbf{g}_{dir,\sigma}& = A_{g}(\sigma,z)\begin{bmatrix}
1\\0
\end{bmatrix}
+B_{g}(\sigma,z)\begin{bmatrix}
0\\1
\end{bmatrix}
+C_{g}(\sigma,z)\begin{bmatrix}
(-z)^{-\frac{k^{2}+\sigma}{q_{k}}}\\\frac{1}{k}(-z)^{-\frac{k^{2}+\sigma}{q_{k}}}
\end{bmatrix}
+D_{g}(\sigma,z)\begin{bmatrix}
0\\(-z)^{-\frac{\sigma}{q_{k}}}
\end{bmatrix},
\end{align}
where $\left(A_{f}(\sigma,z),B_{f}(\sigma,z),C_{f}(\sigma,z),D_{f}(\sigma,z)\right)$ and $\left(A_{g}(\sigma,z),B_{g}(\sigma,z),C_{g}(\sigma,z),D_{g}(\sigma,z)\right)$ are holomorphic functions of $\sigma$ in the region $\mathbb{I}_{[-1-\eta,-\eta]}$ with the initial condition \begin{align*}
    &\left(A_{f}(\sigma,-1),B_{f}(\sigma,-1),C_{f}(\sigma,-1),D_{f}(\sigma,-1)\right) = \left(-\frac{q_{k}}{k^{2}+\sigma},\frac{kq_{k}}{\sigma(k^{2}+\sigma)},\frac{q_{k}}{k^{2}+\sigma},-\frac{q_{k}}{k\sigma}\right),\\[1em]&
    \left(A_{g}(\sigma,-1),B_{g}(\sigma,-1),C_{g}(\sigma,-1),D_{g}(\sigma,-1)\right) = \left(0,-\frac{q_{k}}{\sigma},0,\frac{q_{k}}{\sigma}\right),
\end{align*}
solving the following Volterra equation:
\begin{equation}
    \frac{d}{dz}\begin{bmatrix}
        A\\B\\C\\D
    \end{bmatrix}(z) = F_{\sigma}^{-1}Q_{\sigma}F_{\sigma}\begin{bmatrix}
        A\\B\\C\\D
    \end{bmatrix}(z),
\end{equation}
where the $4\times4$ matrices $F_{\sigma}$ and $Q_{\sigma}$ are given in \eqref{eq: definition of Fsigma and Qsigma}.
\end{proposition}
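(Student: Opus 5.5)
The plan is to use variation of parameters with respect to the explicit basis $\{\widetilde{f}_{i,\sigma},\widetilde{g}_{i,\sigma}\}_{i=1}^{4}$ of $\widetilde{\mathcal{L}}_{p}=0$. The difference from Proposition~\ref{prop: general construction of outgoing and ingoing solutions} is that the Volterra equation is now integrated from the regular endpoint $z=-1$, not from the singular point $z=0$. Writing $\mathbf{f}_{dir,\sigma}$ (or $\mathbf{g}_{dir,\sigma}$) as $A\,\widetilde{\mathbf f}_1+B\,\widetilde{\mathbf f}_2+C\,\widetilde{\mathbf f}_3+D\,\widetilde{\mathbf f}_4$, we impose the gauge condition \eqref{eq: first order assumption}. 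Then exactly the computation leading to \eqref{eq: differential form of the volterra equation} gives $\frac{d}{dz}(A,B,C,D)^{T}=F_{\sigma}^{-1}Q_{\sigma}F_{\sigma}(A,B,C,D)^{T}$, with $F_\sigma,Q_\sigma$ as in \eqref{eq: definition of Fsigma and Qsigma}.

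The initial values come from solving $F_{\sigma}(-1)(A,B,C,D)^{T}(-1)=(0,0,1,0)^{T}$, respectively $(0,0,0,1)^{T}$. Since $(-z)=1$ at $z=-1$, this amounts to reading off the third and fourth columns of the explicit $F_{\sigma}^{-1}$ evaluated at $z=-1$. The third column gives $\bigl(-\frac{q_k}{\sigma+k^2},\frac{kq_k}{\sigma(\sigma+k^2)},\frac{q_k}{\sigma+k^2},-\frac{q_k}{k\sigma}\bigr)$, and the fourth gives $\bigl(0,-\frac{q_k}{\sigma},0,\frac{q_k}{\sigma}\bigr)$, which are the stated data. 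These are holomorphic on $\mathbb{I}_{[-1-\eta,-\eta]}$, because $\sigma$ and $\sigma+k^2$ have real part at most $-\eta+k^2<0$ there.

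Next comes existence on $[-1,0)$. On any compact subinterval $[-1,z_0]$ with $z_0<0$, the matrix $F_{\sigma}^{-1}Q_{\sigma}F_{\sigma}$ is continuous in $z$ and holomorphic in $\sigma$ (its entries are powers $(-z)^{\pm(\sigma+k^2)/q_k}$, $(-z)^{\pm\sigma/q_k}$ times $E_{i,\sigma}$, which are holomorphic in $\sigma$), and it is bounded uniformly for $\sigma$ in compact subsets. So the Picard/Volterra iteration $\sum_n\int_{-1}^{z}\int_{-1}^{z_1}\cdots\widetilde{Q}(z_1)\cdots\widetilde{Q}(z_n)\,dz$ converges with the usual $\frac{(M|z+1|)^n}{n!}$ bound. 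A locally uniform limit of holomorphic functions is holomorphic, so $(A,B,C,D)$ is holomorphic in $\sigma$. Uniqueness follows from Gronwall, and this identifies the result with the unique solution of \eqref{eq: general equation in the construction of ingoing solution} having the prescribed Cauchy data. That equation is a nondegenerate linear ODE near $z=-1$, since $q_kz\neq0$ there.

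The main obstacle is extending up to $z=0$, where $F_{\sigma}^{-1}Q_{\sigma}F_{\sigma}$ is singular. Here I would use the entrywise size estimate on $F_\sigma^{-1}Q_\sigma F_\sigma$ already recorded in the proof of Proposition~\ref{prop: general construction of outgoing and ingoing solutions}, together with $E_1=O(z)$, $E_2=O(k)$ and the factor $(-z)^{-1}$ in $Q_\sigma$. I expect each entry to be integrable near $0$ on $\mathbb{I}_{[-1-\eta,-\eta]}$. The dangerous entries carry $(-z)^{-(\sigma+k^2)/q_k}$ or $(-z)^{-\sigma/q_k}$, and these have positive real exponent, hence vanish at $0$. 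The entries with $(-z)^{(\sigma+k^2)/q_k}$ or $(-z)^{\sigma/q_k}$ have exponent with real part above $-(1+\eta)/q_k$. These only appear multiplied against $A,B$, so they have to be balanced by rescaling the $C,D$ components. Concretely, one works with the weighted unknowns $(A,B,(-z)^{-\frac{\sigma+k^2}{q_k}}C,(-z)^{-\frac{\sigma}{q_k}}D)$, i.e.\ the components of $F_\sigma$ applied to the vector, for which the system becomes $\frac{d}{dz}W=(-z)^{-1}\mathcal{O}(1)W$ plus integrable terms. On this system a Gronwall argument gives at most polynomial growth $(-z)^{-Ck}$, which suffices for existence as $z\to0^-$. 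Holomorphy again passes to the limit by locally uniform bounds in $\sigma$. If the naive entry bounds turn out insufficient, the fallback is to solve only on $[-1,z_0]$ and continue from $z_0$ using the outgoing/ingoing basis of Proposition~\ref{prop: construction of the true outgoing solution} and Corollary~\ref{coro: construction of the true ingoing solution}.
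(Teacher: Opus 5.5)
Your argument for the statement itself is the paper's proof. It uses the same ansatz with the gauge condition \eqref{eq: first order assumption}, giving $\frac{d}{dz}(A,B,C,D)^{T}=F_{\sigma}^{-1}Q_{\sigma}F_{\sigma}(A,B,C,D)^{T}$. It gets the same initial data: reading off the third and fourth columns of $F_{\sigma}^{-1}(-1)$ is equivalent to the paper's solving the $4\times4$ Dirichlet system at $z=-1$. It then uses the same iterated-integral series started at $z=-1$, and your locally uniform convergence on $[-1,z_0]$, $z_0<0$, supplies the holomorphy in $\sigma$ that the paper leaves implicit.

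Your final paragraph, however, is not needed and is wrong as written, so drop it rather than rely on it.

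\begin{itemize}
\item \textbf{Not needed.} The proposition concerns $z\in[-1,0)$, where $F_{\sigma}$ is invertible, and the paper's proof stops at the series from $z=-1$.
\item \textbf{Continuation of $(A,B,C,D)$ to $z=0$ fails in general.} The entries $(3,1),(3,2)$ and $(4,1),(4,2)$ of $F_{\sigma}^{-1}Q_{\sigma}F_{\sigma}$ are of order $(-z)^{\frac{\sigma+k^{2}}{q_{k}}}$ and $(-z)^{\frac{\sigma}{q_{k}}}$. These are not integrable at $0$ once $\Re\sigma\le-1$ (resp.\ $\Re\sigma\le-q_{k}$). Correspondingly, $C=\frac{q_{k}}{\sigma+k^{2}}(-z)^{\frac{\sigma+k^{2}}{q_{k}}+1}f'$ blows up as $z\to0^{-}$ for $\Re\sigma<-1$ unless $f'$ vanishes at $0$ at a matching rate. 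A solution with a nonzero outgoing component does not do this in general.
\item \textbf{The rescaled system is misidentified.} The weighted vector $(A,B,(-z)^{-\frac{\sigma+k^{2}}{q_{k}}}C,(-z)^{-\frac{\sigma}{q_{k}}}D)$ is not $F_{\sigma}(A,B,C,D)^{T}$; the latter is $(f,g,f',g')$.
\item \textbf{The Gronwall exponent is wrong.} Gronwall for a system with $(-z)^{-1}\mathcal{O}(1)$ coefficients gives growth $(-z)^{-C}$ with $C=O(1)$, not $(-z)^{-Ck}$. Indeed, the ingoing components make $f'$ grow like $(-z)^{-\frac{1+\Re\sigma}{q_{k}}}$. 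In any case, an upper bound would not produce a limit at $z=0$.
\end{itemize}

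The behaviour at $z=0$ is correctly handled by your fallback: expand the Dirichlet solution in the outgoing/ingoing basis. This is exactly how the paper proceeds later, in the resolvent analysis.
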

\begin{proof}
We adopt the same approach as in the proof of Proposition \ref{prop: general construction of outgoing and ingoing solutions}. Assuming the solution to \eqref{eq: general equation in the construction of ingoing solution} takes the form of \begin{equation*}
    \begin{bmatrix}
        f\\g
    \end{bmatrix}
     = A(\sigma,z)\begin{bmatrix}
         1\\0
     \end{bmatrix}+B(\sigma,z)\begin{bmatrix}
         0\\1
     \end{bmatrix}
     +C(\sigma,z)\begin{bmatrix}
         (-z)^{-\frac{k^{2}+\sigma}{q_{k}}}\\\frac{1}{k}(-z)^{-\frac{k^{2}+\sigma}{q_{k}}}
     \end{bmatrix}
     +D(\sigma,z)\begin{bmatrix}
         0\\(-z)^{-\frac{\sigma}{q_{k}}}
     \end{bmatrix}
\end{equation*}

We have \begin{equation*}
    \frac{d}{dz}\begin{bmatrix}
        A\\B\\C\\D
    \end{bmatrix}
     = F_{\sigma}^{-1}Q_{\sigma}F_{\sigma}\begin{bmatrix}
         A\\B\\C\\D
     \end{bmatrix}
\end{equation*}

From the Dirichlet boundary condition for $\mathbf{f}_{dir}$, we have\begin{equation}
\begin{aligned}
&A(\sigma,-1)+C(\sigma,-1) = 0,\quad B(\sigma,-1)+\frac{1}{k}C(\sigma,-1)+D(\sigma,-1) = 0,\\&
\frac{k^{2}+\sigma}{q_{k}}C(\sigma,-1) = 1,\quad \frac{1}{k}\frac{k^{2}+\sigma}{q_{k}}C(\sigma,-1)+\frac{\sigma}{q_{k}}D(\sigma,-1) = 0
\end{aligned}
\end{equation}
Solving the above equations, we have \begin{equation*}
    \left(A_{f}(\sigma,-1),B_{f}(\sigma,-1),C_{f}(\sigma,-1),D_{f}(\sigma,-1)\right) = \left(-\frac{q_{k}}{k^{2}+\sigma},\frac{kq_{k}}{\sigma(k^{2}+\sigma)},\frac{q_{k}}{k^{2}+\sigma},-\frac{q_{k}}{\sigma}\right)
\end{equation*}
Similarly, from the Dirichlet boundary condition for $\mathbf{g}_{dir,\sigma}$, we have \begin{equation*}
    \left(A_{g}(\sigma,-1),B_{g}(\sigma,-1),C_{g}(\sigma,-1),D_{g}(\sigma,-1)\right) = \left(0,-\frac{q_{k}}{\sigma},0,\frac{q_{k}}{\sigma}\right).
\end{equation*}

Then we have
\begin{equation}
\begin{aligned}
\begin{bmatrix}
A\\B\\C\\D
\end{bmatrix}(z)
=& e^{\int_{-1}^{z}F_{\sigma}^{-1}Q_{\sigma}F_{\sigma}}\begin{bmatrix}
    A\\B\\C\\D
\end{bmatrix}(-1)\\ =&\sum_{n = 1}^{\infty}\int_{-1}^{z}\int_{-1}^{z_{1}}\cdots\int_{-1}^{z_{n-1}}\left(F_{\sigma}^{-1}Q_{\sigma}F_{\sigma}\right)(z_{1})\cdots\left(F_{\sigma}^{-1}Q_{\sigma}F_{\sigma}\right)(z_{n})\begin{bmatrix}
    A\\B\\C\\D
\end{bmatrix}(-1)dz_{n}\cdots dz_{1}+\begin{bmatrix}
    A\\B\\C\\D
\end{bmatrix}(-1)\\=&:
\begin{bmatrix}
    A\\B\\C\\D
\end{bmatrix}(-1)+\sum_{n = 1}^{\infty}\begin{bmatrix}
    A_{n}\\B_{n}\\C_{n}\\D_{n}
\end{bmatrix}(\sigma,z).
\end{aligned}
\label{eq: volterra equation for dirichlet solution}
\end{equation}

This concludes the proof.

\end{proof}
Since the equations \eqref{schematic form of rp}-\eqref{schematic form of psip} satisfy the form of \eqref{eq: general equation in the construction of ingoing solution}, we can apply the above construction to \eqref{schematic form of rp}-\eqref{schematic form of psip} to obtain the following construction on the Dirichlet solutions to \eqref{schematic form of rp}-\eqref{schematic form of psip}.
\begin{corollary}
For $\sigma\in\mathbb{I}_{[-1-\eta,-\eta]}$, there exist two Dirichlet solutions $\hat{\mathbf{r}}_{dir,\sigma}$ and $\hat{\mathbf{\Psi}}_{dir,\sigma}$ to \eqref{schematic form of rp}-\eqref{schematic form of psip} with the boundary conditions \begin{equation*}
    \left(\hat{\mathbf{r}}^{(1)}_{dir,\sigma},\hat{\mathbf{r}}_{dir,\sigma}^{(2)},\hat{\mathbf{r}}^{\prime(1)}_{dir,\sigma},\hat{\mathbf{r}}_{dir,\sigma}^{\prime(2)}\right)|_{z = -1} = (0,0,1,0),\quad \left(\hat{\mathbf{\Psi}}^{(1)}_{dir,\sigma},\hat{\mathbf{\Psi}}_{dir,\sigma}^{(2)},\hat{\mathbf{\Psi}}^{\prime(1)}_{dir,\sigma},\hat{\mathbf{\Psi}}_{dir,\sigma}^{\prime(2)}\right)|_{z = -1} = (0,0,0,1).
\end{equation*}
    Moreover, we have the following estimates for $\hat{\mathbf{r}}_{dir,\sigma}$ and $\hat{\mathbf{\Psi}}_{dir,\sigma}$:\begin{align}
\left\vert\partial_{z}^{p}\hat{\mathbf{r}}_{dir,\sigma}\right\vert\lesssim_{k} \vert\sigma\vert^{p-1},\quad \left\vert\partial_{z}^{p}\hat{\mathbf{\Psi}}_{dir,\sigma}\right\vert\lesssim_{k}\vert\sigma\vert^{p-1},\quad p\geq 0.
\label{eq: estimates on the dirichlet solutions}
\end{align}
\end{corollary}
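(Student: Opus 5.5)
The corollary follows from Proposition~\ref{prop: general construction of the Dirichlet solution}, once the system \eqref{schematic form of rp}--\eqref{schematic form of psip} is put in the form \eqref{eq: general equation in the construction of ingoing solution}. We move the constant zeroth-order terms $d_{1}\hat r_{p}+d_{2}\hat\Psi_{p}$ and $d_{3}\hat\Psi_{p}$ to the right-hand side. This gives $E_{1,\sigma}=(N_{ij})$ and $E_{2,\sigma}=(h_{ij})-(d_{ij})$. By \eqref{estimate for Ni} and \eqref{eq: key estimate on Ni and hi} these are bounded on $[-1,0]$, holomorphic in $\sigma\in\mathbb{I}_{[-1-\eta,-\eta]}$, and $E_{1,\sigma}=O((-z)^{1-\epsilon})$. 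The proposition then yields $\hat{\mathbf r}_{dir,\sigma}$ and $\hat{\mathbf\Psi}_{dir,\sigma}$ with the stated boundary data, holomorphic in $\sigma$.

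\textbf{Existence on all of $[-1,0]$.} On any $[-1,z_{0}]$ with $z_{0}<0$ the equation is a regular linear ODE with continuous coefficients, since the factor $(-z)^{-1}$ in $Q_{\sigma}$ is harmless away from $0$. So the Volterra series \eqref{eq: volterra equation for dirichlet solution} converges there. To reach $z=0$, we note that $\widetilde Q_{\sigma}=F_{\sigma}^{-1}Q_{\sigma}F_{\sigma}$ has entries that are $O((-z)^{-\epsilon})$ times powers $(-z)^{\pm\Re\sigma/q_{k}}$ and $(-z)^{\pm(k^{2}+\Re\sigma)/q_{k}}$, which pair off as in the ingoing construction. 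Because $\Re\sigma\in[-1-\eta,-\eta]$, the worst exponent is integrable near $0$. Alternatively we can skip $z\to0$: the Dirichlet solution is only needed on $[-1,0)$, and its behavior near $0$ follows by expressing it in the outgoing/ingoing basis at some $z=-1/2$.

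\textbf{The estimates \eqref{eq: estimates on the dirichlet solutions}.} We rescale the variable near the axis to get $|\sigma|$-uniform bounds. On $[-1,-1/2]$ we write the system as $q_{k}z\,Y''+(M+\sigma)Y'+EY=0$, where $M$ is the constant matrix. Since $\sigma$ multiplies only $Y'$, the solution with $Y(-1)=0$ and $Y'(-1)=e_{i}$ behaves like $Y'\approx e^{-\sigma\int dz/(q_{k}z)}e_{i}$, which is $O(1)$ in size (oscillatory for $\Re\sigma$ bounded). Integrating once then gains a factor $1/\sigma$, giving $|Y|\lesssim|\sigma|^{-1}$. Each further $z$-derivative, taken via the equation, costs one factor of $|\sigma|$, hence the bound $|\sigma|^{p-1}$. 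Concretely, we run a Gr\"onwall/energy argument for $W=Y'$ on $[-1,-\tfrac12]$ with $Y=\int_{-1}^{z}W$. The zeroth-order term $EY$ is then of size $|\sigma|^{-1}|W|$ plus an $O(1)$ contribution from the $\sigma$-dependence of $h_{1},h_{3}$, and it is absorbed by Gr\"onwall with $k$-dependent constants. Near $z=0$ we continue in the $(A,B,C,D)$ variables, whose initial values at $z=-1$ are $O(|\sigma|^{-1})$, as the explicit formulas show. The Volterra kernel $\widetilde Q_{\sigma}$ has entries bounded by $|\sigma|$ times integrable weights, and the integrals over $[-1,z]$ of the oscillatory factors $(-z)^{-\sigma/q_{k}}$ gain a further $|\sigma|^{-1}$.

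\textbf{Main obstacle.} The delicate step is this uniformity in large $|\Im\sigma|$. The coefficients $h_{1},h_{3}\sim k|\sigma|$ grow with $\sigma$, so a naive Volterra bound gives $e^{C|\sigma|}$. The remedy we would try first is to pair each $h_{1,3}$ term with a factor $\sigma^{-1}$ coming from $\hat r$ itself, since $|\hat{\mathbf r}_{dir}|\lesssim|\sigma|^{-1}$. We would also exploit the oscillation of the $\sigma$-dependent factors, integrating by parts in $z$, to close the Gr\"onwall estimate with $\sigma$-independent constants (which may still depend on $k$).
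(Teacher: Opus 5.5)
\paragraph{Verdict.} The reduction to Proposition~\ref{prop: general construction of the Dirichlet solution} and the existence part are fine. There is a genuine gap, however, in the $\sigma$-uniform estimates \eqref{eq: estimates on the dirichlet solutions}, which are the real content of the corollary.

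\paragraph{Why your Gr\"onwall step does not close.} Your step on $[-1,-\tfrac12]$ asserts that $E_{2,\sigma}Y$ has size $|\sigma|^{-1}|W|$. But $Y=\int_{-1}^{z}W$, and $h_{1},h_{3},d_{1}=O(k|\sigma|)$, so all you actually get is $|E_{2,\sigma}Y|\lesssim k|\sigma|\int_{-1}^{z}|W|$. The $|\sigma|^{-1}$ gain in $\int W$ is an oscillation effect that absolute-value bounds on $W$ cannot see. As written, the argument therefore yields constants growing with $|\sigma|$, which is exactly the obstacle you flag at the end.

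Your proposed remedies do not fill this gap:
\begin{itemize}
\item Pairing $h_{1},h_{3}$ with $|\hat{\mathbf{r}}_{dir,\sigma}|\lesssim|\sigma|^{-1}$ assumes the conclusion. You would at least need a bootstrap that improves itself.
\item Integrating by parts against $(-z)^{-\sigma/q_{k}}$ inside every iterated Volterra integral requires $z$-derivative control of $h_{i}$ and $N_{i}$. The paper does not provide this near $z=0$, where these coefficients are only H\"older-type.
\end{itemize}

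\paragraph{The missing observation.} Your claim that $\widetilde{Q}_{\sigma}=F_{\sigma}^{-1}Q_{\sigma}F_{\sigma}$ has entries of size $|\sigma|$ is an overestimate. In fact $\widetilde{Q}_{\sigma}=O_{k}(1)$ uniformly for $\sigma\in\mathbb{I}_{[-1-\eta,-\eta]}$, up to the power weights, which pair off as in the ingoing construction. The reasons are:
\begin{itemize}
\item $Q_{\sigma}$ is nonzero only in rows $3,4$, so only columns $3,4$ of $F_{\sigma}^{-1}$ enter the product.
\item Every entry of those columns carries a factor $(\sigma+k^{2})^{-1}$ or $\sigma^{-1}$.
\item That factor absorbs the $O(k|\sigma|)$ growth of $h_{1},h_{3},d_{1}$, which act through $E_{2,\sigma}$ on the $O(1)$ top rows of $F_{\sigma}$.
\item It also absorbs the $O(|\sigma|)$ size of the derivative rows of $F_{\sigma}$, on which $E_{1,\sigma}=(N_{ij})$ acts.
\end{itemize}
The variation-of-parameters frame thus already encodes the oscillation gain you were trying to extract by hand.

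\paragraph{How the paper concludes.} With this bound, the Volterra series \eqref{eq: volterra equation for dirichlet solution} converges with $\sigma$-independent constants; the $n$-th term is $\lesssim_{k}\frac{1}{n!\,k|\sigma|}$ times the weights. The factor $|\sigma|^{-1}$ comes purely from the explicit initial values $(A,B,C,D)(\sigma,-1)=O(|\sigma|^{-1})$. Reconstructing via $F_{\sigma}$, whose top rows are $O(1)$ and bottom rows $O(|\sigma|)$, gives $|\hat{\mathbf{r}}_{dir,\sigma}|\lesssim_{k}|\sigma|^{-1}$ and $|\hat{\mathbf{r}}_{dir,\sigma}^{\prime}|\lesssim_{k}1$. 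Higher derivatives then follow from the equation, each costing a factor of $|\sigma|$, and likewise for $\hat{\mathbf{\Psi}}_{dir,\sigma}$. This is the paper's argument; no oscillatory integration by parts or bootstrap is needed.
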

\begin{proof}
The existence of Dirichlet solutions follows from Proposition \ref{prop: general construction of the Dirichlet solution}. To estimate the Dirichlet solutions $\hat{\mathbf{r}}_{dir,\sigma}$ and $\hat{\mathbf{\Psi}}_{dir,\sigma}$, we first estimate $F_{\sigma}^{-1}Q_{\sigma}F_{\sigma}$. For \eqref{schematic form of rp}-\eqref{schematic form of psip}, we have that \begin{equation*}
    \left\vert E_{1,\sigma}\right\vert\lesssim \frac{1}{k}(-z),\quad \left\vert (E_{2,\sigma})_{ij}\right\vert\lesssim k\vert\sigma\vert,\quad i = 1,3,\quad \left\vert(E_{2,\sigma})_{ij}\right\vert\lesssim k,\quad i = 2,4.
\end{equation*}
Computing $F_{\sigma}^{-1}Q_{\sigma}F_{\sigma}$ and tracking the $\sigma$-dependence, we have \begin{equation*}
    \left\vert F_{\sigma}^{-1}Q_{\sigma}F_{\sigma}\right\vert = \begin{bmatrix}
        O_{k}(1)&O_{k}\left(\frac{1}{\sigma}\right)&O_{k}((-z)^{-\frac{k^{2}+\sigma}{q_{k}}})&O_{k}((-z)^{-\frac{\sigma}{q_{k}}})\\O_{k}(1)&O_{k}(1)&O_{k}((-z)^{-\frac{k^{2}+\sigma}{q_{k}}})&O_{k}((-z)^{-\frac{\sigma}{q_{k}}})\\
        O_{k}((-z)^{\frac{k^{2}+\sigma}{q_{k}}})&O_{k}\left(\frac{1}{\sigma}(-z)^{\frac{k^{2}+\sigma}{q_{k}}}\right)&O_{k}(1)&O_{k}\left(\frac{1}{\sigma}(-z)^{\frac{k^{2}}{q_{k}}}\right)\\O_{k}((-z)^{\frac{\sigma}{q_{k}}})&O_{k}(\frac{1}{\sigma}(-z)^{\frac{\sigma}{q_{k}}})&O_{k}(1)&O_{k}(1)
    \end{bmatrix}.
\end{equation*}
Hence, we have that \begin{equation*}
    \left\vert\begin{bmatrix}
        A_{1}\\B_{1}\\C_{1}\\D_{1}
    \end{bmatrix}\right\vert\lesssim\frac{1}{k\vert \sigma\vert}\begin{bmatrix}
        (-z)+1\\(-z)+1\\(-z)^{\frac{k^{2}+\sigma}{q_{k}}+1}+1\\(-z)^{\frac{\sigma}{q_{k}}+1}+1
    \end{bmatrix}.
\end{equation*}
The additional $\frac{1}{\vert\sigma\vert}$ comes from the choice of $(A(\sigma,-1),B(\sigma,-1),C(\sigma,-1),D(\sigma,-1))$ for $\hat{\mathbf{r}}_{dir,\sigma}$ and $\hat{\mathbf{\Psi}}_{dir,\sigma}$. Arguing inductively, we have \begin{equation*}
    \left\vert\begin{bmatrix}
        A_{n}\\B_{n}\\C_{n}\\D_{n}
    \end{bmatrix}\right\vert\lesssim_{k}\frac{1}{n!k\vert\sigma\vert}\begin{bmatrix}
        (-z)^{n}+1\\(-z)^{n}+1\\(-z)^{\frac{k^{2}+\sigma}{q_{k}}+n}+(-z)^{\frac{k^{2}+\sigma}{q_{k}}+1}+1\\
        (-z)^{\frac{\sigma}{q_{k}}+n}+(-z)^{\frac{k^{2}+\sigma}{q_{k}}+1}+1
    \end{bmatrix}.
\end{equation*}
Since \begin{equation*}
    \begin{bmatrix}
\hat{\mathbf{r}}_{dir,\sigma}\\\hat{\mathbf{r}}_{dir,\sigma}^{\prime}
\end{bmatrix}= F_{\sigma}\sum_{n = 1}^{\infty}\begin{bmatrix}
    A_{n}\\B_{n}\\C_{n}\\D_{n}
\end{bmatrix}
+F_{\sigma}\begin{bmatrix}
    A\\B\\C\\D
\end{bmatrix}(\sigma,-1),
\end{equation*}
we can derive
\begin{align*}
\left\vert\partial_{z}^{p}\hat{\mathbf{r}}_{dir,\sigma}\right\vert\lesssim_{k} \vert\sigma\vert^{p-1},\quad \left\vert\partial_{z}^{p}\hat{\mathbf{\Psi}}_{dir,\sigma}\right\vert\lesssim_{k}\vert\sigma\vert^{p-1}.
\end{align*}
This completes the proof.
\end{proof}
\subsection{Scattering resolvent and resonance}
Given the outgoing solutions $\hat{\mathbf{r}}_{out,\sigma}$, $\hat{\mathbf{\Psi}}_{out,\sigma}$ and the Dirichlet solutions ${\hat{\mathbf{r}}_{dir,\sigma},\hat{\mathbf{\Psi}}_{dir,\sigma}}$, we are particularly interested in the solutions to $\mathcal{L}(\hat{r}_{p},\hat{\Psi}_{p}) = F$ which satisfy the Dirichlet boundary conditions at the center $z = -1$ and asymptote to the outgoing behavior near the singular horizon $z= 0$. 

We assume the solution to the equation $\mathcal{L}(\hat{r}_{p},\hat{\Psi}_{p}) = F =(F_{1},F_{2})$ takes the form of \begin{equation}
\begin{bmatrix}
\hat{r}_{p}\\\hat{\Psi}_{p}
\end{bmatrix}
 = A(z)\hat{\mathbf{r}}_{dir,\sigma}+B(z)\hat{\mathbf{\Psi}}_{dir,\sigma}+C(z)\hat{\mathbf{r}}_{out,\sigma}+D(z)\hat{\mathbf{\Psi}}_{out,\sigma}.
\end{equation}
We can further assume that \begin{equation}
A^{\prime}(z)\hat{\mathbf{r}}_{dir,\sigma}(z)+B^{\prime}(z)\hat{\mathbf{\Psi}}_{dir,\sigma}(z)+C^{\prime}(z)\hat{\mathbf{r}}_{out,\sigma}(z)+D^{\prime}(z)\hat{\mathbf{\Psi}}_{out,\sigma}(z) = 0.
\end{equation}
Then we have \begin{align*}
\frac{d}{dz}\begin{bmatrix}
\hat{r}_{p}\\\hat{\Psi}_{p}
\end{bmatrix}
 =& A(z)\hat{\mathbf{r}}_{dir,\sigma}^{\prime}+B(z)\hat{\mathbf{\Psi}}_{dir,\sigma}^{\prime}+C(z)\hat{\mathbf{r}}_{out,\sigma}^{\prime}+D(z)\hat{\mathbf{\Psi}}_{out,\sigma}^{\prime},\\
 \frac{d^{2}}{dz^{2}}\begin{bmatrix}
 \hat{r}_{p}\\\hat{\Psi}_{p}
 \end{bmatrix}
 =& 
 A(z)\hat{\mathbf{r}}_{dir,\sigma}^{\prime\prime}+B(z)\hat{\mathbf{\Psi}}_{dir,\sigma}^{\prime\prime}+C(z)\hat{\mathbf{r}}_{out,\sigma}^{\prime\prime}+D(z)\hat{\mathbf{\Psi}}_{out,\sigma}^{\prime\prime}\\&+A^{\prime}(z)\hat{\mathbf{r}}_{dir,\sigma}^{\prime}+B^{\prime}(z)\hat{\mathbf{\Psi}}_{dir,\sigma}^{\prime}+C^{\prime}(z)\hat{\mathbf{r}}_{out,\sigma}^{\prime}+D^{\prime}(z)\hat{\mathbf{\Psi}}_{out,\sigma}^{\prime}.
\end{align*}
Substituting the above expression into $\mathcal{L}(\hat{r}_{p},\hat{\Psi}_{p}) = F = (F_{1},F_{2})$, we have\begin{align}
\label{eq: resolvent variation}
\begin{bmatrix}
\hat{\mathbf{r}}_{dir,\sigma}&\hat{\mathbf{\Psi}}_{dir,\sigma}&\hat{\mathbf{r}}_{out,\sigma}&\hat{\mathbf{\Psi}}_{out,\sigma}\\
\hat{\mathbf{r}}_{dir,\sigma}^{\prime}&\hat{\mathbf{\Psi}}_{dir,\sigma}^{\prime}&\hat{\mathbf{r}}_{out,\sigma}^{\prime}&\hat{\mathbf{\Psi}}_{out,\sigma}^{\prime}
\end{bmatrix}
\begin{bmatrix}
A^{\prime}(z)\\B^{\prime}(z)\\C^{\prime}(z)\\D^{\prime}(z)
\end{bmatrix}
 = \frac{1}{q_{k}z}\begin{bmatrix}
 0\\0\\F_{1}\\F_{2}
 \end{bmatrix}.
\end{align}
Let \begin{equation*}
M_{\sigma}: = \begin{bmatrix}
\hat{\mathbf{r}}_{dir,\sigma}&\hat{\mathbf{\Psi}}_{dir,\sigma}&\hat{\mathbf{r}}_{out,\sigma}&\hat{\mathbf{\Psi}}_{out,\sigma}\\\hat{\mathbf{r}}_{dir,\sigma}^{\prime}&\hat{\mathbf{\Psi}}_{dir,\sigma}^{\prime}&\hat{\mathbf{r}}_{out,\sigma}^{\prime}&\hat{\mathbf{\Psi}}_{out,\sigma}^{\prime}
\end{bmatrix}.
\end{equation*}
First, we study the invertibility of $M_{\sigma}$.
\begin{definition}
For fixed $\sigma\in\mathbb{I}_{[-1-\eta,-\eta]}$, we define the Wronskian of the Dirichlet solutions and outgoing solutions to be\begin{equation}
\mathcal{W}(\sigma,z) = \det\begin{bmatrix}
\hat{\mathbf{r}}_{dir,\sigma}&\hat{\mathbf{\Psi}}_{dir,\sigma}&\hat{\mathbf{r}}_{out,\sigma}&\hat{\mathbf{\Psi}}_{out,\sigma}\\
\hat{\mathbf{r}}_{dir,\sigma}^{\prime}&\hat{\mathbf{\Psi}}_{dir,\sigma}^{\prime}&\hat{\mathbf{r}}_{out,\sigma}^{\prime}&\hat{\mathbf{\Psi}}_{out,\sigma}^{\prime}
\end{bmatrix}.
\end{equation}
\end{definition}
By the Jacobi formula of the determinant of the fundamental matrix to ODEs, we have\begin{proposition}
$M_{\sigma}$ is invertible if and only if $\mathcal{W}(\sigma,-1)\neq0$. Moreover, we have the following estimates for $\mathcal{W}(\sigma,z)$:\begin{equation}
\frac{1}{C}\mathcal{W}(\sigma,-1)(-z)^{-\frac{k^{2}+2\sigma}{q_{k}}-2}\leq \mathcal{W}(\sigma,z)\leq C \mathcal{W}(\sigma,-1)(-z)^{-\frac{k^{2}+2\sigma}{q_{k}}-2}.\label{estimate for the wronskian}
\end{equation}
\end{proposition}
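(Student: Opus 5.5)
The plan is to apply the Jacobi (Liouville) formula to $M_{\sigma}$, exactly as in the proof of Lemma~\ref{lemma: determinant of Y}. Rewrite $\mathcal{L}(\hat r_p,\hat\Psi_p)=0$ as a first order system
\begin{equation*}
\frac{d}{dz}\begin{bmatrix}f\\ g\\ f'\\ g'\end{bmatrix}=L(z)\begin{bmatrix}f\\ g\\ f'\\ g'\end{bmatrix}.
\end{equation*}
Since the four columns of $M_\sigma$ solve this system, $\mathcal{W}(\sigma,z)=\det M_\sigma(z)$ satisfies
\begin{equation*}
\frac{d}{dz}\mathcal{W}(\sigma,z)=\operatorname{Tr}L(z)\,\mathcal{W}(\sigma,z),
\end{equation*}
and hence
\begin{equation*}
\mathcal{W}(\sigma,z)=\mathcal{W}(\sigma,-1)\exp\int_{-1}^{z}\operatorname{Tr}L.
\end{equation*}
Invertibility is then immediate. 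If $\mathcal{W}(\sigma,-1)=0$, the Wronskian vanishes identically and $M_\sigma$ is singular for every $z$. If $\mathcal{W}(\sigma,-1)\neq0$, it is nonzero for every $z\in[-1,0)$, because the exponential factor never vanishes.

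For the quantitative bound, I compute $\operatorname{Tr}L$ from the form \eqref{eq: perturb+rhs form of scattering equations}. Dividing the second-order part by $q_kz$, the diagonal entries of the lower-right block of $L$ are
\begin{equation*}
-\frac{(1+\sigma)-N_1}{q_kz},\qquad -\frac{(q_k+\sigma)-N_4}{q_kz}.
\end{equation*}
The zeroth-order coefficients $d_i,h_i$ sit off the diagonal of the $4\times4$ block structure, so they do not contribute to the trace. Therefore
\begin{equation*}
\operatorname{Tr}L=\frac{1+q_k+2\sigma}{q_k(-z)}+\frac{N_1+N_4}{q_kz}=\frac{2}{-z}+\frac{k^2+2\sigma}{q_k(-z)}+\frac{N_1+N_4}{q_kz}.
\end{equation*}
Integrating the first two terms from $-1$ to $z$ gives exactly the factor $(-z)^{-\frac{k^2+2\sigma}{q_k}-2}$. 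The remaining term satisfies $|N_1+N_4|/|z|\lesssim_\eta \frac{k}{\epsilon}(-z)^{-\epsilon}$ by \eqref{estimate for Ni}, so it is integrable on $[-1,0]$. Its integral is bounded in modulus by $Ck/\epsilon$ uniformly in $z$ and $\sigma\in\mathbb{I}_{[-1-\eta,-\eta]}$, and the exponential of it therefore lies between $e^{-Ck/\epsilon}$ and $e^{Ck/\epsilon}$. This yields \eqref{estimate for the wronskian} with $C=e^{Ck/\epsilon}$.

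The main point needing care is the interpretation of the two-sided inequality, since $\mathcal{W}$ and the power of $(-z)$ are complex-valued. I would state the estimate as
\begin{equation*}
\mathcal{W}(\sigma,z)=\mathcal{W}(\sigma,-1)(-z)^{-\frac{k^2+2\sigma}{q_k}-2}e^{\Theta(\sigma,z)},\qquad |\Theta|\le Ck/\epsilon,
\end{equation*}
which gives the bound in modulus. I would also note that $N_1,N_4$, which come from eliminating $\hat m_p$, have no pole in $\mathbb{I}_{[-1-\eta,-\eta]}$, because $\Re\sigma<-\eta$ avoids the fake modes by Proposition~\ref{prop: schematic form of equations}. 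This makes the constant uniform in $\sigma$.
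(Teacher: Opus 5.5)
Your proposal is correct and follows essentially the same route as the paper: the Jacobi/Liouville formula for the first-order system, a leading part of the trace equal to $\frac{1+q_k+2\sigma}{q_k(-z)}$, and an integrable remainder. The only differences are minor. You read the trace off the reduced form \eqref{eq: perturb+rhs form of scattering equations} and bound the remainder by \eqref{estimate for Ni}, whereas the paper recomputes the trace directly from \eqref{eq:scattering equation for rp}--\eqref{scattering equation for mp} with $\hat m_p$ eliminated. Your reading of the complex two-sided bound as a statement in modulus, via $e^{\Theta}$ with $|\Theta|\lesssim k/\epsilon$, is a useful clarification.
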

\begin{proof}
We can write the equations \eqref{eq:scattering equation for rp}-\eqref{scattering equation for mp} as:\begin{equation}
\frac{d}{dx}\begin{bmatrix}
\hat{r}_{p}\\\hat{\Psi}_{p}\\\hat{r}_{p}^{\prime}\\\hat{\Psi}_{p}^{\prime}
\end{bmatrix}
 = L(x)\begin{bmatrix}
 \hat{r}_{p}\\\hat{\Psi}_{p}\\\hat{r}_{p}^{\prime}\\\hat{\Psi}_{p}^{\prime}.
 \end{bmatrix}
\end{equation}
Then by the Jacobi formula, we have\begin{equation}
\frac{d}{dz}\mathcal{W}(\sigma,z) = \mathcal{W}(\sigma,z)\text{Tr}(L)(z).\label{Jacobi formula}
\end{equation}
Integrating \eqref{Jacobi formula}, we have\begin{equation}
\mathcal{W}(\sigma,z) = \mathcal{W}(\sigma,-1)e^{\int_{-1}^{z}\text{Tr}(L)(y)dy}=:\mathcal{W}(\sigma,0)W(\sigma,z),
\end{equation}
where $W(\sigma,-1)$ is a non-vanishing function. Hence, $M_{\sigma}$ is invertible if and only if $\mathcal{W}(\sigma,0)\neq 0$. From \eqref{eq:scattering equation for rp}-\eqref{scattering equation for mp}, we can compute $\text{Tr}L$:\begin{align*}
-\text{Tr}L(z) =&\frac{2(q_{k}+\sigma)}{q_{k}z}+\frac{\mu_{k}}{1-\mu_{k}}\frac{1}{q_{k}z}-\frac{\mu_{k}}{1-\mu_{k}}\frac{2\partial_{z}r_{k}}{r_{k}}\\&+\frac{(\partial_{s}+q_{k}z\partial_{z})r_{k}\partial_{z}r_{k}}{1-\mu_{k}}\frac{1}{\sigma-V_{k}}\left[\left(\frac{(\partial_{s}+q_{k}z\partial_{z})\phi_{k}}{(\partial_{s}+q_{k}z\partial_{z})r_{k}}\right)^{2}-\left(\frac{\partial_{z}\phi_{k}}{\partial_{z}r_{k}}\right)^{2}\right]\\=&
\frac{2(q_{k}+\sigma)+k^{2}}{q_{k}z}+\left(\frac{\mu_{k}}{1-\mu_{k}}-k^{2}\right)\frac{1}{q_{k}z}-\frac{\mu_{k}}{1-\mu_{k}}\frac{2\partial_{z}r_{k}}{r_{k}}\\&+\frac{(\partial_{s}+q_{k}z\partial_{z})r_{k}\partial_{z}r_{k}}{1-\mu_{k}}\frac{1}{\sigma-V_{k}}\left[\left(\frac{(\partial_{s}+q_{k}z\partial_{z})\phi_{k}}{(\partial_{s}+q_{k}z\partial_{z})r_{k}}\right)^{2}-\left(\frac{\partial_{z}\phi_{k}}{\partial_{z}r_{k}}\right)^{2}\right]
\end{align*}
By the estimates on the background solution $(r_{k},\phi_{k},\mu_{k})$, we have\begin{equation*}
\frac{1}{C}\leq (-z)^{\frac{2(q_{k}+\sigma)+k^{2}}{q_{k}}}e^{\int_{-1}^{z}\text{Tr}L(y)dy}\leq C.
\end{equation*}
Hence, we have the estimate \eqref{estimate for the wronskian}. This concludes the proof.
\end{proof}
Next, we derive the scattering resolvent operator, which can be viewed as an inverse of the linear operator $\mathcal{L}(\hat{r}_{p},\hat{\Psi}_{p})$, satisfying the Dirichlet boundary condition and outgoing boundary condition. By \eqref{eq: resolvent variation}, we have \begin{equation*}
    \frac{d}{dz}\begin{bmatrix}
        A\\B\\C\\D
    \end{bmatrix}(z) =\frac{1}{q_{k}z} M_{\sigma}^{-1}(z)\begin{bmatrix}
        0\\0\\F_{1}\\F_{2}
    \end{bmatrix}.
\end{equation*}
Let $G_{\sigma}(z)$ be a $4\times4$ matrix with $G_{\sigma}^{\prime}(z) = \frac{1}{q_{k}z}M_{\sigma}^{-1}(z)$. Then we have \begin{align*}
    &\frac{d}{dz}\begin{bmatrix}
        A\\B
    \end{bmatrix}(z) = \begin{bmatrix}
        \left(G_{\sigma}^{\prime}\right)_{13}&\left(G_{\sigma}^{\prime}\right)_{14}\\\left(G_{\sigma}^{\prime}\right)_{23}&\left(G_{\sigma}^{\prime}\right)_{24}
    \end{bmatrix}\begin{bmatrix}
        F_{1}\\F_{2}
    \end{bmatrix} = \frac{d}{dz}\left(\begin{bmatrix}
        \left(G_{\sigma}\right)_{13}&\left(G_{\sigma}\right)_{14}\\\left(G_{\sigma}\right)_{23}&\left(G_{\sigma}\right)_{24}
    \end{bmatrix}\begin{bmatrix}
        F_{1}\\F_{2}
    \end{bmatrix}\right)-\begin{bmatrix}
        \left(G_{\sigma}\right)_{13}&\left(G_{\sigma}\right)_{14}\\\left(G_{\sigma}\right)_{23}&\left(G_{\sigma}\right)_{24}
    \end{bmatrix}\begin{bmatrix}
        F_{1}^{\prime}\\F_{2}^{\prime}
    \end{bmatrix},\\[1em]&
    \frac{d}{dz}\begin{bmatrix}
        C\\D
    \end{bmatrix}(z) = \begin{bmatrix}
        \left(G_{\sigma}^{\prime}\right)_{33}&\left(G_{\sigma}^{\prime}\right)_{34}\\
        \left(G_{\sigma}^{\prime}\right)_{43}&\left(G_{\sigma}^{\prime}\right)_{44}
    \end{bmatrix}
    \begin{bmatrix}
        F_{1}\\F_{2}
    \end{bmatrix}.
    \end{align*}
Using the Dirichlet boundary condition, we have \begin{equation*}
    \begin{bmatrix}
        C\\D
    \end{bmatrix}(-1) = 0.
\end{equation*}
Then, we have \begin{align*}
   \begin{bmatrix}
       (M_{\sigma})_{13}&(M_{\sigma})_{14}\\
       (M_{\sigma})_{23}&(M_{\sigma})_{24}\\
   \end{bmatrix} \begin{bmatrix}
        C\\D
    \end{bmatrix}(z) =  \begin{bmatrix}
       (M_{\sigma})_{13}&(M_{\sigma})_{14}\\
       (M_{\sigma})_{23}&(M_{\sigma})_{24}\\
   \end{bmatrix}(z)\int_{-1}^{z}\begin{bmatrix}
        \left(G_{\sigma}^{\prime}\right)_{33}&\left(G_{\sigma}^{\prime}\right)_{34}\\
        \left(G_{\sigma}^{\prime}\right)_{43}&\left(G_{\sigma}^{\prime}\right)_{44}
    \end{bmatrix}
    \begin{bmatrix}
        F_{1}\\F_{2}
    \end{bmatrix}(z^{\prime})dz^{\prime}.
\end{align*}
By the definition of $G_{\sigma}$, we have \begin{align*}
    \begin{bmatrix}
        (M_{\sigma})_{11}&(M_{\sigma})_{12}\\(M_{\sigma})_{21}&(M_{\sigma})_{22}
    \end{bmatrix}\begin{bmatrix}
        \left(G_{\sigma}^{\prime}\right)_{13}&\left(G_{\sigma}^{\prime}\right)_{14}\\\left(G_{\sigma}^{\prime}\right)_{23}&\left(G_{\sigma}^{\prime}\right)_{24}
    \end{bmatrix}+\begin{bmatrix}
        (M_{\sigma})_{13}&(M_{\sigma})_{14}\\(M_{\sigma})_{23}&(M_{\sigma})_{24}
    \end{bmatrix}\begin{bmatrix}
        \left(G_{\sigma}^{\prime}\right)_{33}&\left(G_{\sigma}^{\prime}\right)_{34}\\\left(G_{\sigma}^{\prime}\right)_{43}&\left(G_{\sigma}^{\prime}\right)_{44}
    \end{bmatrix} = 0.
\end{align*}
We have \begin{align*}
    \begin{bmatrix}
        \left(G_{\sigma}^{\prime}\right)_{33}&\left(G_{\sigma}^{\prime}\right)_{34}\\\left(G_{\sigma}^{\prime}\right)_{43}&\left(G_{\sigma}^{\prime}\right)_{44}
    \end{bmatrix} \begin{bmatrix}
        F_{1}\\F_{2}
    \end{bmatrix}=& -\begin{bmatrix}
        (M_{\sigma})_{13}&(M_{\sigma})_{14}\\(M_{\sigma})_{23}&(M_{\sigma})_{24}
    \end{bmatrix}^{-1}\begin{bmatrix}
        (M_{\sigma})_{11}&(M_{\sigma})_{12}\\(M_{\sigma})_{21}&(M_{\sigma})_{22}
    \end{bmatrix}\begin{bmatrix}
        \left(G_{\sigma}^{\prime}\right)_{13}&\left(G_{\sigma}^{\prime}\right)_{14}\\\left(G_{\sigma}^{\prime}\right)_{23}&\left(G_{\sigma}^{\prime}\right)_{24}
    \end{bmatrix}\begin{bmatrix}
        F_{1}\\F_{2}
    \end{bmatrix}\\=&
    -\left(\begin{bmatrix}
        (M_{\sigma})_{13}&(M_{\sigma})_{14}\\(M_{\sigma})_{23}&(M_{\sigma})_{24}
    \end{bmatrix}^{-1}\begin{bmatrix}
        (M_{\sigma})_{11}&(M_{\sigma})_{12}\\(M_{\sigma})_{21}&(M_{\sigma})_{22}
    \end{bmatrix}\begin{bmatrix}
        \left(G_{\sigma}\right)_{13}&\left(G_{\sigma}\right)_{14}\\\left(G_{\sigma}\right)_{23}&\left(G_{\sigma}\right)_{24}
    \end{bmatrix}\begin{bmatrix}
        F_{1}\\F_{2}
    \end{bmatrix}\right)^{\prime}\\&+\left(\begin{bmatrix}
        (M_{\sigma})_{13}&(M_{\sigma})_{14}\\(M_{\sigma})_{23}&(M_{\sigma})_{24}
    \end{bmatrix}^{-1}\begin{bmatrix}
        (M_{\sigma})_{11}&(M_{\sigma})_{12}\\(M_{\sigma})_{21}&(M_{\sigma})_{22}
    \end{bmatrix}\right)^{\prime}\begin{bmatrix}
        \left(G_{\sigma}\right)_{13}&\left(G_{\sigma}\right)_{14}\\\left(G_{\sigma}\right)_{23}&\left(G_{\sigma}\right)_{24}
    \end{bmatrix}\begin{bmatrix}
        F_{1}\\F_{2}
    \end{bmatrix}\\&+\left(\begin{bmatrix}
        (M_{\sigma})_{13}&(M_{\sigma})_{14}\\(M_{\sigma})_{23}&(M_{\sigma})_{24}
    \end{bmatrix}^{-1}\begin{bmatrix}
        (M_{\sigma})_{11}&(M_{\sigma})_{12}\\(M_{\sigma})_{21}&(M_{\sigma})_{22}
    \end{bmatrix}\begin{bmatrix}
        \left(G_{\sigma}\right)_{13}&\left(G_{\sigma}\right)_{14}\\\left(G_{\sigma}\right)_{23}&\left(G_{\sigma}\right)_{24}
    \end{bmatrix}\right)\begin{bmatrix}
        F_{1}^{\prime}\\F_{2}^{\prime}
    \end{bmatrix}.
\end{align*}
Therefore, we have \begin{align*}
    &\begin{bmatrix}
        (M_{\sigma})_{13}& (M_{\sigma})_{14}\\ (M_{\sigma})_{23}& (M_{\sigma})_{24}
    \end{bmatrix}
    \begin{bmatrix}
        C\\D
    \end{bmatrix}(z) \\=& -\begin{bmatrix}
        (M_{\sigma})_{11}&(M_{\sigma})_{12}\\(M_{\sigma})_{21}&(M_{\sigma})_{22}
    \end{bmatrix}\begin{bmatrix}
        \left(G_{\sigma}\right)_{13}&\left(G_{\sigma}\right)_{14}\\\left(G_{\sigma}\right)_{23}&\left(G_{\sigma}\right)_{24}
    \end{bmatrix}
    \begin{bmatrix}
        F_{1}\\F_{2}
    \end{bmatrix}\\&+\begin{bmatrix}
        (M_{\sigma})_{13}& (M_{\sigma})_{14}\\ (M_{\sigma})_{23}& (M_{\sigma})_{24}
    \end{bmatrix}\int_{-1}^{z}\left(\begin{bmatrix}
        (M_{\sigma})_{13}&(M_{\sigma})_{14}\\(M_{\sigma})_{23}&(M_{\sigma})_{24}
    \end{bmatrix}^{-1}\begin{bmatrix}
        (M_{\sigma})_{11}&(M_{\sigma})_{12}\\(M_{\sigma})_{21}&(M_{\sigma})_{22}
    \end{bmatrix}\right)^{\prime}\begin{bmatrix}
        \left(G_{\sigma}\right)_{13}&\left(G_{\sigma}\right)_{14}\\\left(G_{\sigma}\right)_{23}&\left(G_{\sigma}\right)_{24}
    \end{bmatrix}\begin{bmatrix}
        F_{1}\\F_{2}
    \end{bmatrix}(z^{\prime})dz^{\prime}\\&+\begin{bmatrix}
        (M_{\sigma})_{13}& (M_{\sigma})_{14}\\ (M_{\sigma})_{23}& (M_{\sigma})_{24}
    \end{bmatrix}\int_{-1}^{z}\left(\begin{bmatrix}
        (M_{\sigma})_{13}&(M_{\sigma})_{14}\\(M_{\sigma})_{23}&(M_{\sigma})_{24}
    \end{bmatrix}^{-1}\begin{bmatrix}
        (M_{\sigma})_{11}&(M_{\sigma})_{12}\\(M_{\sigma})_{21}&(M_{\sigma})_{22}
    \end{bmatrix}\begin{bmatrix}
        \left(G_{\sigma}\right)_{13}&\left(G_{\sigma}\right)_{14}\\\left(G_{\sigma}\right)_{23}&\left(G_{\sigma}\right)_{24}
    \end{bmatrix}\right)\begin{bmatrix}
        F_{1}^{\prime}\\F_{2}^{\prime}
    \end{bmatrix}(z^{\prime})dz^{\prime}.
\end{align*}
For $(A,B)$, we have \begin{align*}
    \begin{bmatrix}
        (M_{\sigma})_{13}&(M_{\sigma})_{14}\\(M_{\sigma})_{23}&(M_{\sigma})_{24}
    \end{bmatrix}\begin{bmatrix}
        A\\B
    \end{bmatrix}(z) =&  \begin{bmatrix}
        (M_{\sigma})_{11}&(M_{\sigma})_{12}\\(M_{\sigma})_{21}&(M_{\sigma})_{22}
    \end{bmatrix}\begin{bmatrix}
        (G_{\sigma})_{13}&(G_{\sigma})_{14}\\(G_{\sigma})_{23}&(G_{\sigma})_{24}
    \end{bmatrix}\begin{bmatrix}
        F_{1}\\F_{2}
    \end{bmatrix}(z)\\&-\begin{bmatrix}
        (M_{\sigma})_{11}&(M_{\sigma})_{12}\\(M_{\sigma})_{21}&(M_{\sigma})_{22}
    \end{bmatrix}(z)\int_{0}^{z}\begin{bmatrix}
        (G_{\sigma})_{13}&(G_{\sigma})_{14}\\(G_{\sigma})_{23}&(G_{\sigma})_{24}
    \end{bmatrix}\begin{bmatrix}
        F_{1}^{\prime}\\F_{2}^{\prime}
    \end{bmatrix}(z^{\prime})dz^{\prime}.
\end{align*}
Therefore, solving $\mathcal{L}(\hat{r}_{p},\hat{\Psi}_{p}) = (F_{1},F_{2})$ with Dirichlet and outgoing boundary conditions, we have \begin{align*}
    \begin{bmatrix}
        \hat{r}_{p}\\\hat{\Psi}_{p}
    \end{bmatrix}(z)
     =& \begin{bmatrix}
         \hat{\mathbf{r}}_{out,\sigma}&\hat{\mathbf{\Psi}}_{out,\sigma}
     \end{bmatrix}(z)\int_{-1}^{z}\left(\begin{bmatrix}
         \hat{\mathbf{r}}_{out,\sigma}&\hat{\mathbf{\Psi}}_{out,\sigma}
     \end{bmatrix}^{-1}\begin{bmatrix}
         \hat{\mathbf{r}}_{dir,\sigma}&\hat{\mathbf{\Psi}}_{dir,\sigma}
     \end{bmatrix}\right)^{\prime}\begin{bmatrix}
         (G_{\sigma})_{13}&(G_{\sigma})_{14}\\(G_{\sigma})_{23}&(G_{\sigma})_{24}
     \end{bmatrix}
     \begin{bmatrix}
         F_{1}\\F_{2}
     \end{bmatrix}dz^{\prime}\\&+\begin{bmatrix}
         \hat{\mathbf{r}}_{out,\sigma}&\hat{\mathbf{\Psi}}_{out,\sigma}
     \end{bmatrix}(z)\int_{-1}^{z}\begin{bmatrix}
         \hat{\mathbf{r}}_{out,\sigma}&\hat{\mathbf{\Psi}}_{out,\sigma}
     \end{bmatrix}^{-1}\begin{bmatrix}
         \hat{\mathbf{r}}_{dir,\sigma}&\hat{\mathbf{\Psi}}_{dir,\sigma}
     \end{bmatrix}\begin{bmatrix}
         (G_{\sigma})_{13}&(G_{\sigma})_{14}\\(G_{\sigma})_{23}&(G_{\sigma})_{24}
     \end{bmatrix}
     \begin{bmatrix}
         F_{1}^{\prime}\\F_{2}^{\prime}
     \end{bmatrix}dz^{\prime}\\&+\begin{bmatrix}
         \hat{\mathbf{r}}_{dir,\sigma}&\hat{\mathbf{\Psi}}_{dir,\sigma}
     \end{bmatrix}(z)\int_{z}^{0}\begin{bmatrix}
         (G_{\sigma})_{13}&(G_{\sigma})_{14}\\(G_{\sigma})_{23}&(G_{\sigma})_{24}
     \end{bmatrix}\begin{bmatrix}
         F_{1}^{\prime}\\F_{2}^{\prime}
     \end{bmatrix}dz^{\prime}.
\end{align*}
This motivates the following definition.
\begin{definition}
\label{def: scattering resolvent}
    Let $p_{k}<\beta<\frac{3}{2}$ and $\beta+\frac{1}{4}<\widetilde{\beta}<2$ be given parameters, and define $\beta_{*} = \min\{\beta q_{k},1+\eta\}$. For fixed $\sigma\in \mathbb{I}_{[-\beta_{*},-\eta]}$, assuming that $\mathcal{W}(\sigma,0)\neq0$, we can define the scattering resolvent $R(\sigma):\partial_{z} \mathcal{C}^{\widetilde{\beta}}_{N}\times \partial_{z}\mathcal{C}_{N}^{\beta}\rightarrow L_{loc}^{2}\times L_{loc}^{2}$ as \begin{equation}
    \begin{aligned}
    R(\sigma)(F_{1},F_{2})(z) = & \begin{bmatrix}
         \hat{\mathbf{r}}_{out,\sigma}&\hat{\mathbf{\Psi}}_{out,\sigma}
     \end{bmatrix}(z)\int_{-1}^{z}\left(\begin{bmatrix}
         \hat{\mathbf{r}}_{out,\sigma}&\hat{\mathbf{\Psi}}_{out,\sigma}
     \end{bmatrix}^{-1}\begin{bmatrix}
         \hat{\mathbf{r}}_{dir,\sigma}&\hat{\mathbf{\Psi}}_{dir,\sigma}
     \end{bmatrix}\right)^{\prime}\begin{bmatrix}
         (G_{\sigma})_{13}&(G_{\sigma})_{14}\\(G_{\sigma})_{23}&(G_{\sigma})_{24}
     \end{bmatrix}
     \begin{bmatrix}
         F_{1}\\F_{2}
     \end{bmatrix}dz^{\prime}\\&+\begin{bmatrix}
         \hat{\mathbf{r}}_{out,\sigma}&\hat{\mathbf{\Psi}}_{out,\sigma}
     \end{bmatrix}(z)\int_{-1}^{z}\begin{bmatrix}
         \hat{\mathbf{r}}_{out,\sigma}&\hat{\mathbf{\Psi}}_{out,\sigma}
     \end{bmatrix}^{-1}\begin{bmatrix}
         \hat{\mathbf{r}}_{dir,\sigma}&\hat{\mathbf{\Psi}}_{dir,\sigma}
     \end{bmatrix}\begin{bmatrix}
         (G_{\sigma})_{13}&(G_{\sigma})_{14}\\(G_{\sigma})_{23}&(G_{\sigma})_{24}
     \end{bmatrix}
     \begin{bmatrix}
         F_{1}^{\prime}\\F_{2}^{\prime}
     \end{bmatrix}dz^{\prime}\\&+\begin{bmatrix}
         \hat{\mathbf{r}}_{dir,\sigma}&\hat{\mathbf{\Psi}}_{dir,\sigma}
     \end{bmatrix}(z)\int_{z}^{0}\begin{bmatrix}
         (G_{\sigma})_{13}&(G_{\sigma})_{14}\\(G_{\sigma})_{23}&(G_{\sigma})_{24}
     \end{bmatrix}\begin{bmatrix}
         F_{1}^{\prime}\\F_{2}^{\prime}
     \end{bmatrix}dz^{\prime}\\=&:\int_{-1}^{0}R_{\sigma}(z,z^{\prime})\begin{bmatrix}
         F_{1}\\F_{2}
     \end{bmatrix}+\widetilde{R}_{\sigma}(z,z^{\prime})\begin{bmatrix}
         F_{1}^{\prime}\\F_{2}^{\prime}
     \end{bmatrix}.
\end{aligned}
    \end{equation}
    where the kernel $R_{\sigma}(z,z^{\prime})$ and $\widetilde{R}_{\sigma}(z,z^{\prime})$ are $2\times2$ matrices taking the form of \begin{equation}
        \begin{aligned}
            &R_{\sigma}(z,z^{\prime}) = 0,\quad \widetilde{R}_{\sigma}(z,z^{\prime}) = \begin{bmatrix}
                \hat{\mathbf{r}}_{dir,\sigma}&\hat{\mathbf{\Psi}}_{dir,\sigma}
            \end{bmatrix}(z)\begin{bmatrix}
                (G_{\sigma})_{13}&(G_{\sigma})_{14}\\(G_{\sigma})_{23}&(G_{\sigma})_{24}
            \end{bmatrix}(z^{\prime}),\quad z<z^{\prime},\\&
            R_{\sigma}(z,z^{\prime}) = \begin{bmatrix}
                \hat{\mathbf{r}}_{out,\sigma}&\hat{\mathbf{\Psi}}_{out,\sigma}
            \end{bmatrix}(z)\left(\left(\begin{bmatrix}
                \hat{\mathbf{r}}_{out,\sigma}&\hat{\mathbf{\Psi}}_{out,\sigma}
            \end{bmatrix}^{-1}\begin{bmatrix}
                \hat{\mathbf{r}}_{dir,\sigma}&\hat{\mathbf{\Psi}}_{out,\sigma}
            \end{bmatrix}\right)^{\prime}\begin{bmatrix}
                (G_{\sigma})_{13}&(G_{\sigma})_{14}\\(G_{\sigma})_{23}&(G_{\sigma})_{24}
            \end{bmatrix}\right)(z^{\prime}),\quad z^{\prime}<z,\\&
            \widetilde{R}_{\sigma}(z,z^{\prime}) = \begin{bmatrix}
                \hat{\mathbf{r}}_{out,\sigma}&\hat{\mathbf{\Psi}}_{out,\sigma}
            \end{bmatrix}(z)\left(\begin{bmatrix}
                \hat{\mathbf{r}}_{out,\sigma}&\hat{\mathbf{\Psi}}_{out,\sigma}
            \end{bmatrix}^{-1}\begin{bmatrix}
                \hat{\mathbf{r}}_{dir,\sigma}&\hat{\mathbf{\Psi}}_{dir,\sigma}
            \end{bmatrix}\begin{bmatrix}
                (G_{\sigma})_{13}&(G_{\sigma})_{14}\\(G_{\sigma})_{23}&(G_{\sigma})_{24}
            \end{bmatrix}\right)(z^{\prime}),\quad z^{\prime}<z.
        \end{aligned}
    \end{equation}

    Let $\rho(z)$ be a decreasing cut-off function which is equal to $1$ for $-1\leq z\leq \frac{1}{4}$ and identically zero for $-1\leq z\leq 0$. Define the cut-off resolvent $\rho(z)R(\sigma)(F_{1},F_{2}):\partial_{z} \mathcal{C}_{N}^{\widetilde{\beta}}\times\partial_{z}\mathcal{C}_{N}^{\beta}\rightarrow L^{2}\times L^{2}$ with the integral kernel $\rho(z)R(\sigma)(z,z^{\prime})$ and $\rho(z)\widetilde{R}(\sigma)(z,z^{\prime})$. 
\end{definition}
Note that by the construction of the outgoing solutions, the matrix \begin{equation*}
    \begin{bmatrix}
    \hat{\mathbf{r}}_{out,\sigma}&\hat{\mathbf{\Psi}}_{out,\sigma}
    \end{bmatrix}
\end{equation*}
is always invertible. Therefore, the scattering resolvent $R(\sigma)(F_{1},F_{2})$ can be viewed as a meromorphic function with poles at the zero set of $\mathcal{W}(\sigma,-1)$, i.e., points where $M_{\sigma}$ is not invertible. We have the following definition.

\begin{definition}
    For any $(F_{1},F_{2})\in\partial_{z}\mathcal{C}_{N}^{\widetilde{\beta}}\times\partial_{z}\mathcal{C}_{N}^{\beta}$, $\rho(z)R(\sigma)(F_{1},F_{2})$ is a meromorphic function for $\sigma\in\mathbb{I}_{[-\beta_{*},-\eta]}$ with poles at the zero set of $\mathcal{W}(\sigma,-1)$.

    We define any $\sigma_{*}\in\mathbb{I}_{[-\beta_{*},-\eta]}$ such that $\mathcal{W}(\sigma_{*},-1) = 0$ to be a scattering resonance. The order of vanishing is called the multiplicity of the resonance.
\end{definition}

\subsection{Location of the scattering resonance}
\label{sec: location of the scattering resonance}
In this section, we consider the equation $\mathcal{W}(\sigma,-1) = 0$, the solution of which corresponds to the scattering resonance. First, we show that for $\sigma\in \mathbb{I}_{[-1-\eta,-\eta]}\cap\{\vert\Im\sigma\vert\geq1\}$, $\mathcal{W}(\sigma,-1)$ can never be zero.
\begin{lemma}
For $k$ sufficiently small, the region $\mathbb{I}_{[-1-\eta,-\eta]}\cap\{\vert\Im\sigma\vert\geq1\}$ is free of scattering resonance.
\end{lemma}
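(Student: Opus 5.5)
Since $\mathcal{W}(\sigma,z)=\mathcal{W}(\sigma,-1)W(\sigma,z)$ with $W$ nonvanishing, it suffices to evaluate the Wronskian at the singular horizon, or equivalently to show that no nontrivial combination of the two Dirichlet solutions can coincide with a combination of the two outgoing solutions. The plan is to argue by contradiction via an energy identity in $z$ for the ODE system $\mathcal{L}(\hat r_p,\hat\Psi_p)=0$, in which $|\Im\sigma|$ appears as a large coefficient. So suppose $\sigma\in\mathbb{I}_{[-1-\eta,-\eta]}$ with $|\Im\sigma|\ge 1$ and $\mathcal{W}(\sigma,-1)=0$. Then there is a nontrivial solution $(f,g)$ which vanishes at $z=-1$ and is an outgoing combination $a\hat{\mathbf r}_{out,\sigma}+b\hat{\mathbf \Psi}_{out,\sigma}$. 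By Proposition~\ref{prop: construction of the true outgoing solution}, $(f,g)$ and $(f',g')$ are bounded up to $z=0$.

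\textbf{Step 1: the multiplier identity.} Multiply \eqref{schematic form of rp} by $\bar f'$ and \eqref{schematic form of psip} by $\bar g'$, integrate over $[-1,0]$, and take the real part. For the principal part we use
\[
\Re\int q_k z f''\bar f' = \tfrac{q_k}{2}\big[z|f'|^2\big]_{-1}^{0}-\tfrac{q_k}{2}\int|f'|^2 .
\]
The boundary term at $0$ vanishes because $f'$ is bounded, and the one at $-1$ equals $\tfrac{q_k}{2}|f'(-1)|^2$. The first-order terms give $\Re(1+\sigma)\int|f'|^2$ and $\Re(q_k+\sigma)\int|g'|^2$. Separately, take the imaginary part. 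This isolates $\Im\sigma\,(\int|f'|^2+\int|g'|^2)$ plus the imaginary parts of the zeroth-order and coupling terms.

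\textbf{Step 2: bounding the coupling terms.} These terms are $d_i f\bar f'$ together with the $N_i,h_i$ terms, bounded as in Proposition~\ref{prop: schematic form of equations}. They are controlled by $k$ times $\int|f'|^2+|g'|^2$, using the Hardy inequality \eqref{eq: standard hardy with zero boundary condition} with $f(-1)=g(-1)=0$. The cross term $k g'\bar f'$ is also $O(k)$. The difficulty is that $d_1$ and $h_1,h_3$ carry factors of $|\sigma|$ up to $k^2|\sigma|$ and $k|\sigma|$. The $d_1,d_2$ terms are bounded uniformly because $(\sigma+1)/(\sigma-k^2)$ is bounded for $\Re\sigma<-\eta$.

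\textbf{Step 3: conclusion and main obstacle.} The main obstacle is the $|\sigma|$-growth of $h_1,h_3$. My plan is to exploit the fact that this growth enters through $\hat m_p$, i.e. through the factor $\sigma/(\sigma-V)$ in \eqref{scattering equation for mp}, which is bounded for large $|\sigma|$. I would re-express those terms before estimating, so that each contributes $k\cdot O(1)\cdot\int(|f'|^2+|g'|^2)$. Failing that, I would divide the imaginary-part identity by $|\Im\sigma|$ and use $|\sigma|\lesssim|\Im\sigma|$ in the strip to absorb them. Either way the imaginary-part identity gives
\[
|\Im\sigma|\int(|f'|^2+|g'|^2)\le Ck\,(1+|\Im\sigma|)\int(|f'|^2+|g'|^2).
\]
For $k$ small and $|\Im\sigma|\ge1$ this forces $f'=g'=0$. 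Then $f=g=0$ by the Dirichlet condition, contradicting nontriviality. Hence $\mathcal{W}(\sigma,-1)\ne0$ in this region.
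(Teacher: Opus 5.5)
\textbf{Gap in Step~1.} Your imaginary-part identity is not valid, and the argument cannot be repaired inside this multiplier framework. The real part works because $2\Re(f''\bar f')=(|f'|^2)'$, so $\int q_k z f''\bar f'$ can be integrated by parts. The imaginary part, however, is not a boundary term: $\Im(f''\bar f')$ is not a derivative, and you dropped $\Im\int_{-1}^{0}q_k z f''\bar f'\,dz$ without justification.

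The dropped term is not small. By the equation itself, $q_k z f''\bar f'=-(1+\sigma)|f'|^2-\cdots$ pointwise, so it equals $-\Im\sigma\int|f'|^2$ up to the coupling terms. Once it is kept, the imaginary-part identity collapses to $0=0$. Here $\Im\sigma$ produces the oscillation $(-z)^{-i\Im\sigma/q_k}$, not any coercivity.

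You can see that the argument proves too much. Take $\Re\sigma\in[-1-\eta,-1]$ with $|\Im\sigma|\ge 1$, and consider
\[
(f,g)=\Big((-z)^{-\frac{k^{2}+\sigma}{q_{k}}}-1\Big)\Big(1,\tfrac{1}{k}\Big).
\]
This pair solves the model system $\widetilde{\mathcal{L}}_{p}(f,g)=0$, i.e. the system with all $d_i,N_i,h_i$ set to zero. It vanishes at $z=-1$. Its derivatives $f',g'$ are bounded up to $z=0$, because $\Re\big(-\tfrac{1+\sigma}{q_k}\big)\ge 0$. These are the only properties your argument uses. The single remaining coupling term contributes $k\,\Im\int f'\bar g'=\Im\int|f'|^2=0$. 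So your Step~3 inequality would force this nontrivial solution to vanish.

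\textbf{What the paper does instead.} The missing idea is to exploit large $|\Im\sigma|$ through the explicit leading-order structure rather than through an energy identity. At $z=-1$ the Dirichlet columns of $M_\sigma$ are $(0,0,1,0)^{T}$ and $(0,0,0,1)^{T}$. The paper therefore reduces $\mathcal{W}(\sigma,-1)$ to the $2\times 2$ determinant $\det\,[\,\hat{\mathbf r}_{out,\sigma}\ \ \hat{\mathbf{\Psi}}_{out,\sigma}\,](-1)$.

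The $\sigma$-uniform bound \eqref{eq: estimate on Z} on the Volterra corrections $Z_{i,\sigma}$ then gives $\mathcal{W}(\sigma,-1)=\big(1+\tfrac{\sigma}{q_k}\big)(1+O(k))+O(k)$. This cannot vanish, since $\big|1+\tfrac{\sigma}{q_k}\big|\ge|\Im\sigma|/q_k\ge 1$.

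The substantive input is the uniform-in-$\sigma$ $O(k)$ control of the outgoing solutions from Proposition~\ref{prop: construction of the true outgoing solution}, together with the renormalization factor $1+\frac{\sigma}{q_k}$ in $\hat{\mathbf{\Psi}}_{out,\sigma}$.
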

\begin{proof}
By the definition of the Dirichlet boundary conditions, we have\begin{equation}
\mathcal{W}(\sigma,-1) = \det\begin{bmatrix}
\hat{\mathbf{r}}_{out,\sigma}&\hat{\mathbf{\Psi}}_{out,\sigma}
\end{bmatrix}
(-1).
\end{equation}
By \eqref{eq: expansion of the r outgoing solutions to the whole system}, \eqref{eq: expansion of the psi outgoing solutions to the whole system}, and \eqref{eq: estimate on Z}, we have the estimates \begin{align}
&\begin{aligned}
\hat{\mathbf{r}}_{out,\sigma} = \begin{bmatrix}
    \hat{\mathbf{r}}_{out,\sigma}^{(1)}\\\hat{\mathbf{r}}_{out,\sigma}^{(2)}
\end{bmatrix} = \begin{bmatrix}
    1\\\frac{kq_{k}}{\sigma}
\end{bmatrix}+O(k(-z)^{1-2\epsilon}),
 \end{aligned}\\[1em]
 &\begin{aligned}
 \hat{\mathbf{\Psi}}_{out,\sigma} = &\begin{bmatrix}
\hat{\mathbf{\Psi}}_{out,\sigma}^{(1)}\\\hat{\mathbf{\Psi}}_{out,\sigma}^{(2)}
 \end{bmatrix}
  = \begin{bmatrix}
      0\\1+\frac{\sigma}{q_{k}}
  \end{bmatrix}+O(k(-z)^{1-2\epsilon}),
 \end{aligned}
 \label{key estimate for determining the scattering resonance of L}
\end{align}
Taking $k$ sufficiently small, we have \begin{align*}
    \mathcal{W}(\sigma,-1) &= \det\begin{bmatrix}
        1&0\\\frac{kq_{k}}{\sigma}&\left(1+\frac{\sigma}{q_{k}}\right)
    \end{bmatrix}+O(k)\\&=\left(1+\frac{\sigma}{q_{k}}\right)+O(k)\neq0,\quad \sigma\in\mathbb{I}_{[-1-\eta,-\eta]}\cap\{\vert\Im\sigma\vert\geq 1\}.
\end{align*}
\end{proof}
Moreover, for $\sigma\in\mathbb{I}_{[-1-\eta,-\eta]}$, we can show that $\mathcal{W}(\sigma,-1)$ has exactly one zero point.
\begin{lemma}
\label{lemma: only one zero point}
For $k$ sufficiently small and $\vert\eta\vert\leq\frac{1}{10}$, in the region $\mathbb{I}_{[-1-\eta,-\eta]}\cap\{\vert\Im\sigma\vert\leq 1\}$, $\mathcal{W}(\sigma,-1)$ has only one solution $\sigma = \sigma_{*}$.
\end{lemma}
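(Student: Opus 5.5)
Let me write $\mathcal{W}_0(\sigma):=1+\frac{\sigma}{q_{k}}$ for the model determinant. The plan is a Rouché-type argument comparing $\mathcal{W}(\sigma,-1)$ with $\mathcal{W}_0(\sigma)$, which has exactly one simple zero at $\sigma=-q_{k}$.

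First, as in the previous lemma, the Dirichlet boundary conditions reduce the Wronskian at the center to the $2\times2$ determinant $\mathcal{W}(\sigma,-1)=\det[\hat{\mathbf{r}}_{out,\sigma}\ \hat{\mathbf{\Psi}}_{out,\sigma}](-1)$. By Proposition \ref{prop: construction of the true outgoing solution}, the outgoing solutions extend holomorphically to all of $\mathbb{I}_{[-1-\eta,-\eta]}$, with the expansions \eqref{eq: expansion of the r outgoing solutions to the whole system}--\eqref{eq: expansion of the psi outgoing solutions to the whole system} and the uniform bound \eqref{eq: estimate on Z} on $Z_{i,\sigma}$. Expanding the determinant then gives
\begin{equation*}
\mathcal{W}(\sigma,-1)=\Bigl(1+\tfrac{\sigma}{q_{k}}\Bigr)(1+Z_{1,\sigma}(-1))+Z_{4,\sigma}(-1)-Z_{3,\sigma}(-1)\Bigl(\tfrac{kq_{k}}{\sigma}+Z_{2,\sigma}(-1)\Bigr).
\end{equation*}
This is holomorphic in $\sigma$ and equals $\mathcal{W}_0(\sigma)+E(\sigma)$ with $|E(\sigma)|\le Ck/\epsilon$, uniformly on the strip. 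The factor $1/\sigma$ is harmless because $\Re\sigma\le-\eta$.

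Next, I take the closed contour $\partial\Omega$, where $\Omega=\{-1-\eta<\Re\sigma<-\eta,\ |\Im\sigma|<1\}$ (shrinking slightly if needed to stay inside the holomorphy domain). On $\partial\Omega$ I have $|\mathcal{W}_0(\sigma)|=|\sigma+q_k|/q_k\ge c(\eta)>0$. On the vertical sides this holds because $\Re\sigma+q_k$ equals either $q_k-\eta$ or $-(1-q_k)-\eta=-k^{2}-\eta$, both of modulus at least $\eta$. On the horizontal sides it holds because $|\Im\sigma|=1$. Choosing $k$ small so that $Ck/\epsilon<c(\eta)$, Rouché's theorem gives that $\mathcal{W}(\sigma,-1)$ has exactly one zero in $\Omega$, counted with multiplicity. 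Combined with the previous lemma (no zeros for $|\Im\sigma|\ge1$), this yields exactly one zero $\sigma_*$, and it is simple.

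The main obstacle is making sure the error $E$ is uniformly small, including near $\sigma=-q_k$, where the renormalization of $\hat{\mathbf{\Psi}}_{out,\sigma}$ degenerates. Here the key input is that the bound \eqref{eq: estimate on Z} on $Z_{i,\sigma}$ is independent of $\sigma$. That independence rests on Lemma \ref{lemma: estimate on YYinverse} (the removable singularity of $Y_\sigma Y_\sigma^{-1}$ at $-q_k$) and the cancellation in Proposition \ref{prop: model argument for square coefficient}, which removes the pole at $\sigma=-1$.

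Optionally, I would identify $\sigma_*=-1$ using the translation mode. The triple $(r_p,\Psi_p,m_p)=(0,r_k,0)=(0,e^{-s}\mr{r}(z),0)$ solves the linearized system. It vanishes at $z=-1$ and its profile is regular at $z=0$, hence outgoing, so $\mathcal{W}(-1,-1)=0$ and uniqueness forces $\sigma_*=-1$. This step needs care only in checking that $\mr{r}$ lies in the outgoing class, which follows from its $C^{2}$ regularity.
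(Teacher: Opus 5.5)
Your argument is correct and is essentially the paper's proof: reduce $\mathcal{W}(\sigma,-1)$ to $\det\bigl[\hat{\mathbf{r}}_{out,\sigma}\ \hat{\mathbf{\Psi}}_{out,\sigma}\bigr](-1)=\bigl(1+\frac{\sigma}{q_{k}}\bigr)+O(k)$ via Proposition~\ref{prop: construction of the true outgoing solution}, bound $\bigl|1+\frac{\sigma}{q_{k}}\bigr|\gtrsim\eta$ on the boundary of the rectangle, and apply Rouché's theorem against the single zero of $1+\frac{\sigma}{q_{k}}$ at $\sigma=-q_{k}$; the paper identifies $\sigma_{*}=-1$ separately, exactly as in your optional step. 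There are two harmless slips. The cross term in your determinant expansion should be $Z_{4,\sigma}(-1)\bigl(1+Z_{1,\sigma}(-1)\bigr)$ rather than $Z_{4,\sigma}(-1)$. Also, the $O(k/\epsilon)$ error bound holds on the bounded rectangle $\{|\Im\sigma|\le 1\}$ rather than uniformly on the whole strip, since $\bigl(1+\frac{\sigma}{q_{k}}\bigr)Z_{1,\sigma}$ grows with $|\sigma|$; the bounded rectangle is all Rouché's theorem needs here.
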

\begin{proof}
By the estimate \eqref{key estimate for determining the scattering resonance of L}, on the boundary of the region $\mathbb{I}_{[-1-\eta,-\eta]}\cap\{\vert\Im\sigma\vert\leq1\}$, we have\begin{equation}
\det\begin{bmatrix}
\hat{\mathbf{r}}_{out}&\hat{\mathbf{\Psi}}_{out}
\end{bmatrix}(-1)
 = \left(1+\frac{\sigma}{q_{k}}\right)+O(k).
\end{equation}
Since for $\sigma\in\partial\left( \mathbb{I}_{[-1-\eta,-\eta]}\cap\{\vert\Im\sigma\vert\leq1\}\right)$ and $\vert\eta\vert\leq\frac{1}{10}$, we have\begin{equation}
\left\vert1+\frac{\sigma}{q_{k}}\right\vert\geq \frac{1}{2}\eta.
\end{equation}
Hence, we have\begin{equation}
\left\vert \mathcal{W}(\sigma,-1)-\left(1+\frac{\sigma}{q_{k}}\right)\right\vert\leq O(k)\leq\frac{1}{2}\eta\leq\left\vert1+\frac{\sigma}{q_{k}}\right\vert,\quad \sigma\in\partial\left(\mathbb{I}_{[-1-\eta,-\eta]}\cap\{\vert\Im\sigma\vert\leq 1\}\right).
\end{equation}
It is obvious that $\mathcal{W}(\sigma,-1)$ is a holomorphic function of $\sigma$. Hence, according to Rouché Theorem, in the region $\mathbb{I}_{[-1-\eta,-\eta]}\cap\{\vert\Im\sigma\vert\leq1\}$, $\mathcal{W}(\sigma,-1)$ has the same number of zero point as $1+\frac{\sigma}{q_{k}}$. Since $1+\frac{\sigma}{q_{k}}$ has a single zero point at $\sigma = -q_{k}$, we can conclude the proof of this lemma.
\end{proof}
Now, we determine the exact location of the zero of $\mathcal{W}(\sigma,-1)$. We need two more preliminaries. 

First, by our construction of the outgoing and ingoing solutions to the equation $\mathcal{L}(\hat{r}_{p},\hat{\Psi}_{p}) = 0$, any solution $(\hat{r}_{p},\hat{\Psi}_{p})$ will be a linear combination of $\hat{\mathbf{r}}_{out,\sigma},\hat{\mathbf{\Psi}}_{out,\sigma},\hat{\mathbf{r}}_{in,\sigma},\hat{\mathbf{\Psi}}_{in,\sigma}$. In particular, near  $z = 0$, the asymptotic behavior of $(\hat{r}_{p},\hat{\Psi}_{p})$ will be given by the linear combination of the asymptotic behaviors of the outgoing and ingoing solutions. More precisely, the asymptotic behavior of $(\hat{r}_{p},\hat{\Psi}_{p})$ will be\begin{align*}
  \hat{r}_{p}&\approx A_{1} +B_{1}(-z)^{-\frac{\sigma}{q_{k}}}+C_{1}(-z)^{-\frac{k^{2}+\sigma}{q_{k}}},\\
  \hat{\Psi}_{p}&\approx A_{2}+B_{2}(-z)^{-\frac{\sigma}{q_{k}}}+C_{2}(-z)^{-\frac{k^{2}+\sigma}{q_{k}}}.
\end{align*}
Since for $\sigma\in\mathbb{I}_{[-1-\eta,-\eta]}$, if $A_{1}\neq0,A_{2}\neq 0$, the asymptotic behavior of the outgoing branch will be the leading order. In other words, it is not enough to tell whether a solution $(\hat{r}_{p},\hat{\Psi}_{p})$ to the equation $\mathcal{L}(\hat{r}_{p},\hat{\Psi}_{p}) = 0$ is outgoing or not by only looking at the leading order asymptotic behavior at $z = 0$.  In the analysis of the scattering resonance, to show $\mathcal{W}(\sigma,-1) = 0$ for some $\sigma$, our approach is to construct a special solution to $\mathcal{L}(\hat{r}_{p},\hat{\Psi}_{p}) = 0$ that satisfies the Dirichlet boundary condition at $z  = -1$ and the outgoing boundary condition at $z = 0$. In view of this purpose, we shall establish a sharp criterion for functions $(f,g)$ to satisfy the outgoing boundary condition. Therefore, we provide the following lemma, which connects the outgoing solution with its regularity on the outgoing cone $\{u = -1\}$ under self-similar coordinates.
\begin{lemma}[Regularity perspective on outgoing boundary conditions]
\label{lemma: regularity perspective}
Fix $\sigma\in\mathbb{I}_{[-1-\eta,-\eta]}\backslash\{-q_{k}\}$. Then $(\hat{r}_{p},\hat{\Psi}_{p})$ satisfies the outgoing boundary condition if and only if $(\hat{r}_{p},\hat{\Psi}_{p})\in C_{z}^{\alpha}$ up to $z = 0$, where $\alpha$ is the smallest integer larger than $-\frac{\sigma+k^{2}}{q_{k}}$.
\end{lemma}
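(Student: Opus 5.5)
The plan is to decompose an arbitrary solution of $\mathcal{L}(\hat r_p,\hat\Psi_p)=0$ near $z=0$ in the basis built earlier. By Proposition~\ref{prop: construction of the true outgoing solution} and Corollary~\ref{coro: construction of the true ingoing solution}, for $\sigma\in\mathbb{I}_{[-1-\eta,-\eta]}\backslash\{-q_k\}$ the four solutions $\hat{\mathbf r}_{out,\sigma},\hat{\mathbf\Psi}_{out,\sigma},\hat{\mathbf r}_{in,\sigma},\hat{\mathbf\Psi}_{in,\sigma}$ are linearly independent. Their leading behaviours $(1,kq_k/\sigma)$, $(0,1+\sigma/q_k)$, $(-z)^{-\frac{k^2+\sigma}{q_k}}(1,1/k)$ and $(-z)^{-\frac{\sigma}{q_k}}(0,1)$ give a nondegenerate leading-order Wronskian, exactly as in Lemma~\ref{lemma: determinant of Y}, where $\det J_\sigma\neq0$ off $\sigma=-q_k$. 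Hence any solution can be written as
\[
(\hat r_p,\hat\Psi_p)=a\hat{\mathbf r}_{out,\sigma}+b\hat{\mathbf\Psi}_{out,\sigma}+c\hat{\mathbf r}_{in,\sigma}+d\hat{\mathbf\Psi}_{in,\sigma},
\]
and the outgoing condition means exactly $c=d=0$.

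For the forward direction, I will show that the outgoing solutions have the stated regularity. I will write them as $Y_\sigma(z)$ applied to the iterated Volterra sum from Section~\ref{sec: deriving the volterra equation}. The leading part is a convergent power series in $(-z)$. The corrections $Z_{i,\sigma}$ come from the integrals $\int_0^z Y_\sigma(z)Y_\sigma^{-1}(z_1)(\cdots)$, and Lemma~\ref{lemma: estimate on YYinverse} splits each into a smooth kernel part and a part carrying $(-z)^{-\sigma/q_k}$. Differentiating the integral equation repeatedly, together with the coefficient bounds (the $N_i,h_i$ are in $\mathcal C^{p_k}$-type classes with $z$-weights), should show that $\partial_z^j$ of the outgoing solutions stays bounded up to $z=0$ for $j$ up to the first integer exceeding $-\frac{\sigma+k^2}{q_k}$. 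The singular exponents generated in the iteration are always at least $\Re(-\sigma/q_k)+1$, or come from the Hölder expansion of the background, and both exceed that threshold.

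For the converse, suppose the solution lies in $C^\alpha$. The combination $a\hat{\mathbf r}_{out}+b\hat{\mathbf\Psi}_{out}$ is already $C^\alpha$ by the forward direction, so $c\hat{\mathbf r}_{in}+d\hat{\mathbf\Psi}_{in}\in C^\alpha$. I will then differentiate this combination $\alpha$ times and look at the leading terms. The $\hat{\mathbf r}_{in}$ term contributes $c(-z)^{-\frac{k^2+\sigma}{q_k}-\alpha}(1,1/k)$, whose real exponent is negative by the choice of $\alpha$, so it blows up. The $\hat{\mathbf\Psi}_{in}$ term contributes $d(-z)^{-\frac{\sigma}{q_k}-\alpha}(0,1)$, which blows up or at least oscillates non-continuously, since $\Re(-\sigma/q_k)<\alpha$ as well. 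The case of an integer exponent requires separate treatment: these are the Frobenius resonances, where log terms appear and must be excluded via the $\sigma$ range. Comparing the first components forces $c=0$, because only $\hat{\mathbf r}_{in}$ contributes to the first component at leading order; the $\hat{\mathbf\Psi}_{in}$ first component has an extra factor $(-z)$. The second components then force $d=0$.

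The main obstacle is the forward regularity. The coefficients $N_i,h_i$ are only Hölder, with bounds like $k(-z)^{1-\epsilon}$, so the naive estimate gives only a $(-z)^{1-2\epsilon}$ remainder, which is not $C^\alpha$ when $\alpha\ge2$. To handle this I would first try using the finer expansions of the background from Proposition~\ref{prop: estimate on the exact k self similar spacetime}: $\partial_z^j$ of the coefficients is bounded by $|z|^{1+p_kk^2-j}$. Combining this with differentiating the Volterra equation inductively should give $|\partial_z^j Z|\lesssim |z|^{1+p_kk^2-j}$ plus terms in $(-z)^{-\sigma/q_k+1-j}$. This would make the remainders lie in the $\mathcal C$-type space, which is the sense of ``$C^\alpha$'' intended here. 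If that fails, I would interpret $C^\alpha$ in the weighted sense of Definition~\ref{def: definition of the localized holder space}, where the dichotomy still holds because the ingoing exponents are strictly smaller.
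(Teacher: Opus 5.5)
Your route is the paper's: expand in $\hat{\mathbf r}_{out,\sigma},\hat{\mathbf\Psi}_{out,\sigma},\hat{\mathbf r}_{in,\sigma},\hat{\mathbf\Psi}_{in,\sigma}$, observe that the outgoing pieces are regular, and argue that any ingoing component destroys $C^\alpha$. For the forward direction you do not need the inductive differentiation of the Volterra series. On $\mathbb{I}_{[-1-\eta,-\eta]}$ one always has $\alpha\in\{1,2\}$, and the paper simply reads the regularity off the bounds \eqref{eq: estimates on the r outgoing solutions}--\eqref{eq: estimates on the psi outgoing solutions} with $p\le 2$.

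The gap is in the converse, at the step where you claim the $\alpha$-th derivative of each ingoing branch blows up.
\begin{itemize}
\item Your claim $\Re(-\sigma/q_k)<\alpha$ is false in general. Since $\Re(-\sigma/q_k)=\Re\bigl(-\frac{\sigma+k^2}{q_k}\bigr)+\frac{k^2}{q_k}$, for $-1<\Re\sigma<-q_k$ one has $\alpha=1$ but $\Re(-\sigma/q_k)\in(1,p_k)$. Then $\hat{\mathbf\Psi}_{in,\sigma}=(-z)^{-\sigma/q_k}\bigl[(0,1)+O(z)\bigr]$ is itself $C^1$.
\item At $\sigma=-1$ the exponent $\gamma=-\frac{k^2+\sigma}{q_k}$ equals $1<\alpha=2$. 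So $\partial_z^2$ of the leading term $(-z)$ of $\hat{\mathbf r}_{in,-1}$ vanishes; your formula dropped the factor $\gamma(\gamma-1)$. Because $d_1=d_2=0$ at $\sigma=-1$, no logarithm appears either. This case cannot be ``excluded via the $\sigma$ range'': $\sigma=-1$ lies in the range, and it is exactly where the lemma is invoked in Proposition~\ref{prop: only one zero for the wronski}.
\end{itemize}

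In fact the ``if'' direction is false at $\sigma=-1$. Let $z\to0$ in \eqref{schematic form of rp}, using $N_i,h_i\to0$ and the bounded second derivatives of the outgoing solutions. Every outgoing solution with $\sigma\neq-1$ then satisfies $(1+\sigma)\hat r_p'(0)+d_1\hat r_p(0)+d_2\hat\Psi_p(0)=0$. Dividing by $1+\sigma$ and letting $\sigma\to-1$ gives $\hat r_p'(0)=-k^2\lambda_0\hat r_p(0)-2k\lambda_0\hat\Psi_p(0)$ on the span of $\hat{\mathbf r}_{out,-1},\hat{\mathbf\Psi}_{out,-1}$. The solution $(0,\mr r)$ is $C^2$ and, with $\mr r(0)=1$, would force $0=-2k\lambda_0$. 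Since $k\neq0$ and $\lambda_0\neq0$, it is $C^\alpha$ but not outgoing.

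The paper's own proof has the same omission: it only asserts that the ingoing branch lowers regularity below $C^\alpha$. So you have not missed an idea it supplies, but neither argument proves the statement on the full strip. Your converse does go through on $-q_k<\Re\sigma\le-\eta$ and on $-1-\eta\le\Re\sigma<-1$, where both ingoing exponents have real part in $(\alpha-1,\alpha)$. The first range covers the use in Proposition~\ref{prop: near axis decay}. The use at $\sigma=-1$ needs a separate justification.
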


\begin{proof}
By our construction of the outgoing and ingoing solutions to $\mathcal{L}(\hat{r}_{p},\hat{\Psi}_{p}) = 0$, we can expand the solution $(\hat{r}_{p},\hat{\Psi}_{p})$ as\begin{equation}
\begin{bmatrix}
\hat{r}_{p}\\\hat{\Psi}_{p}
\end{bmatrix}
= A\hat{\mathbf{r}}_{out,\sigma}+B\hat{\mathbf{\Psi}}_{out,\sigma}+C\hat{\mathbf{r}}_{in,\sigma}+D\hat{\mathbf{\Psi}}_{in,\sigma}.
\end{equation}
Using the expansion of the outgoing and ingoing solutions, for $\sigma\neq-q_{k}$, we have\begin{align*}
    &\hat{\mathbf{r}}_{out} = \begin{bmatrix}
    \mathcal{E}_{1}(z)\\\mathcal{O}_{1}(z)
    \end{bmatrix},
    \quad \hat{\mathbf{\Psi}}_{out} = \begin{bmatrix}
    \mathcal{E}_{2}(z)\\\mathcal{O}_{2}(z)
    \end{bmatrix},\\[2em]&
    \hat{\mathbf{r}}_{in} = (-z)^{-\frac{k^{2}+\sigma}{q_{k}}}\begin{bmatrix}
    \mathcal{E}_{3}(z)\\\mathcal{O}_{3}(z)
    \end{bmatrix},\quad 
    \hat{\mathbf{\Psi}}_{in} = (-z)^{-\frac{\sigma}{q_{k}}}\begin{bmatrix}
    \mathcal{E}_{4}(z)\\\mathcal{O}_{4}(z)
    \end{bmatrix},
\end{align*}
where $\mathcal{E}_{i}(z)$ and $\mathcal{O}_{i}(z)$ are $C^{2}$ functions up to $z = 0$. It is immediate to see that if $(\hat{r}_{p},\hat{\Psi}_{p})$ is outgoing, then $(\hat{r}_{p},\hat{\Psi}_{p})$ is $C^{2}$ up to $z = 0$. However, if $(\hat{r}_{p},\hat{\Psi}_{p})$ has the ingoing branch, then due to the presence of $(-z)^{-\frac{\sigma}{q_{k}}}$, the regularity of $(\widetilde{r}_{p},\widetilde{\Psi}_{p})$ at $z = 0$ will be strictly less than $C_{z}^{\alpha}$. Hence, we can conclude the proof of this lemma.
\end{proof}
Second, as we have mentioned above, to show that the solutions $\hat{\mathbf{r}}_{out,\sigma},\hat{\mathbf{\Psi}}_{out,\sigma},\hat{\mathbf{r}}_{dir,\sigma},\hat{\mathbf{\Psi}}_{dir,\sigma}$ are linearly dependent for some $\sigma\in\mathbb{I}_{[-1-\eta,-\eta]}$, we use some special solutions to the equation $\mathcal{L}(\hat{r}_{p},\hat{\Psi}_{p}) = 0$. The following lemma provides a family of special solutions we will need.
\begin{lemma}
For any $c\in\mathbb{R}$, $(r_{p},\Psi_{p},m_{p}) = (0,cr_{k},0)$ is a special solution to the linearized Einstein-scalar field equations \eqref{eq:linearized r equation}-\eqref{eq:linearized dvm equation}.
\end{lemma}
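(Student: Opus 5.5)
The natural route is to use the translation symmetry of the nonlinear system and then linearize it. For $c\in\mathbb{R}$, the family $(r_{k},\phi_{k}+\varepsilon c,m_{k})$ solves the spherically symmetric Einstein-scalar field equations \eqref{eq:u-Ray equation}--\eqref{eq:wave equation for phi} for every $\varepsilon$. This holds because $\phi$ enters these equations only through its derivatives. The Hawking mass is defined purely from $r$ and $g$, and $g$ is unchanged, so $m$ stays $m_{k}$. In the renormalized variables we have $\Psi=r\phi+k\log(-u)r$, so the family corresponds to $(r,\Psi,m)=(r_{k},\Psi_{k}+\varepsilon c\,r_{k},m_{k})$. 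It therefore lies in the kernel of $\mathcal{F}$ for all $\varepsilon$. Differentiating $\mathcal{F}(r_{k},\Psi_{k}+\varepsilon c r_{k},m_{k})=0$ at $\varepsilon=0$ gives $\mathcal{P}(0,cr_{k},0)=0$, because $\mathcal{P}$ is by construction the first-order part of the Taylor expansion of $\mathcal{F}$ at the background.

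As a sanity check, and to avoid relying on the formal derivation of \eqref{eq:linearized r equation}--\eqref{eq:linearized dvm equation}, I would also verify each equation directly with $r_{p}=m_{p}=0$ and $\Psi_{p}=cr_{k}$.

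\textbf{The $r_{p}$ equation.} Equation \eqref{eq:linearized r equation} contains only $r_{p}$ and $m_{p}$, so it holds trivially.

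\textbf{The $\Psi_{p}$ equation.} Equation \eqref{eq:linearized psi equation} reduces to
\[
c\Big(\partial_{u}\partial_{v}r_{k}-\frac{\mu_{k}}{1-\mu_{k}}\frac{\partial_{u}r_{k}\partial_{v}r_{k}}{r_{k}}\Big)\frac{1}{r_{k}}\cdot r_{k}=0,
\]
which is exactly the background equation \eqref{eq:new formulation first}.

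\textbf{The transport equations for $m_{p}$.} This is the only step needing care. In \eqref{eq:linerized dum equation} the left side becomes
\[
-(1-\mu_{k})\frac{r_{k}}{\partial_{u}r_{k}}\partial_{u}\phi_{k}\,c\,\partial_{u}r_{k}+(1-\mu_{k})\partial_{u}\phi_{k}\,c\,r_{k},
\]
which cancels identically, and the right side vanishes because $r_{p}=0$. The $v$-equation \eqref{eq:linearized dvm equation} cancels in the same way. This cancellation is precisely the statement that the mass equations see $\Psi_{p}$ only through $\partial(\Psi_{p}/r_{k})$, which is the perturbation $\phi_{p}$ of the scalar field.

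I expect no real obstacle here. The only point to get right is that the linearization was taken in the $(r,\Psi,m)$ variables while the $m$-equations retain $\phi$; the identification $\phi_{p}=(\Psi_{p}-\phi_{k}r_{p}-k\log(-u)r_{p})/r_{k}$ confirms that $\phi_{p}=c$ is constant.
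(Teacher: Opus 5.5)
Your proposal is correct and follows the paper's argument. The paper likewise derives the lemma from the invariance of the nonlinear system under $\phi\mapsto\phi+c$ (with $r$ and $m$ unchanged), then uses $\Psi_{p}=r_{k}\phi_{p}+\frac{\Psi_{k}}{r_{k}}r_{p}$ to turn $(r_{p},\phi_{p},m_{p})=(0,c,0)$ into $(0,cr_{k},0)$. Your additional direct check of \eqref{eq:linearized r equation}--\eqref{eq:linearized dvm equation} is also correct: the $\Psi_{p}$ equation reduces to the background equation \eqref{eq:new formulation first}, and the $\Psi_{p}$-terms in the mass equations cancel identically. It has the added benefit of not relying on the formal Taylor-expansion derivation of $\mathcal{P}$.
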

\begin{proof}
We observe that the Einstein-scalar field equations \eqref{eq:u-Ray equation}-\eqref{eq:wave equation for phi} are translation invariant. More precisely, since only the derivatives of $\phi$ appear in the equations, if $(r,\phi,m)$ is a solution, then for any $c\in\mathbb{R}$, $(r,\phi+c,m)$ is also a solution. Thus, in the $(r_{p},\phi_{p},m_{p})$-formulation of the linearized system, $(r_{p},\phi_{p},m_{p}) = (0,c,0)$ is a special solution. Since $\Psi_{p} = r_{k}\phi_{p}+\frac{\Psi_{k}}{r_{k}}r_{p}$, we have $(r_{p},\Psi_{p},m_{p}) = (0,cr_{k},0)$ is a special solution to the linearized equations \eqref{eq:linearized r equation}-\eqref{eq:linearized dvm equation}.
\end{proof}

Now, we can determine the precise location of the zero of $\mathcal{W}(\sigma,-1)$. We have the following proposition:
\begin{proposition}
\label{prop: only one zero for the wronski}
    Let $k$ be sufficiently small. In the region $\mathbb{I}_{[-1-\eta,-\eta]}$, $\mathcal{W}(\sigma,-1)$ has only one zero at $\sigma = -1$. 
\end{proposition}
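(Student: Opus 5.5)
By Lemma \ref{lemma: only one zero point} and the preceding lemma on $\{\vert\Im\sigma\vert\geq1\}$, $\mathcal{W}(\sigma,-1)$ has exactly one zero $\sigma_*$ (counted with multiplicity) in $\mathbb{I}_{[-1-\eta,-\eta]}$. So it suffices to show $\mathcal{W}(-1,-1)=0$; uniqueness then forces $\sigma_*=-1$, and the Rouché count shows this zero is simple. The plan is to exhibit, at $\sigma=-1$, a nontrivial solution of $\mathcal{L}(\hat r_p,\hat\Psi_p)=0$ that is both Dirichlet at $z=-1$ and outgoing at $z=0$. Such a solution forces the columns of $M_{-1}$ to be linearly dependent, hence $\mathcal{W}(-1,-1)=0$.

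The candidate comes from the translation mode of the preceding lemma, $(r_p,\Psi_p,m_p)=(0,cr_k,0)$. In self-similar coordinates $r_k=e^{-s}\mr{r}(z)$, so this is a pure mode with $\sigma=-1$, with profile $(\hat r_p,\hat\Psi_p,\hat m_p)=(0,\mr{r}(z),0)$.

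\textbf{Step 1: it solves the reduced system at $\sigma=-1$.} I will substitute the ansatz $e^{-s}(0,\mr{r},0)$ into \eqref{eq:wave eq for rp under self-similar coordinates}--\eqref{eq:v transport eq for m under self-similar coordinates}. The $s$-derivatives become multiplication by $-1$, and the $\hat m_p$ equation \eqref{scattering equation for mp} is satisfied with $\hat m_p=0$. Two points need care:
\begin{itemize}
\item the algebraic coefficient $\sigma-V$ in front of $\hat m_p$ does not vanish at $\sigma=-1$, since $0\le V=O(k^2)$;
\item since $\hat m_p=0$, eliminating $\hat m_p$ yields exactly the system $\mathcal{L}(0,\mr{r})=0$.
\end{itemize}
I would check this directly from the linearized equations rather than through the Fourier--Laplace transform, since the mode is an exact separated solution.

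\textbf{Step 2: boundary conditions.} Dirichlet holds because $\mr{r}(-1)=0$ (Proposition \ref{prop: estimate on the exact k self similar spacetime}), so $(0,\mr{r})$ is a nonzero multiple of $\hat{\mathbf{\Psi}}_{dir,-1}$: it vanishes with nonzero derivative $\mr{r}'(-1)$ in the second component. For the outgoing condition I invoke Lemma \ref{lemma: regularity perspective} with $\sigma=-1\neq -q_k$. There $-\frac{\sigma+k^2}{q_k}=1$, so the required regularity is $C^{2}$ up to $z=0$; this is exactly why the threshold integer must be handled carefully. Now $\mr{r}\in C^{2,p_kk^2}$, whereas an ingoing component would contribute $(-z)^{1/q_k}=(-z)^{p_k}$ or $(-z)^{1}$ terms. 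The $(-z)^{1}$ term from $\hat{\mathbf{r}}_{in}$ is smooth, so the $C^2$ criterion is too weak to detect it; in that case I would check directly which branches are present.

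\textbf{Step 3 (main obstacle): excluding an ingoing component.} The step I expect to be hardest is showing that the ingoing coefficients vanish, given that the exponents $1$ and $p_k$ nearly collide. I would argue as follows.
\begin{itemize}
\item The first component of $(0,\mr{r})$ is identically zero, and $\hat{\mathbf{r}}_{in,-1}$ has leading term $(-z)^{(1-k^2)/q_k}(1,1/k)=(-z)(1,1/k)$. Its first component is therefore nonzero at order $(-z)$, and this term cannot be cancelled by the outgoing solutions. The outgoing solutions have first components $1+O(k(-z)^{1-2\epsilon})$ and $O(k(-z)^{1-2\epsilon})$, and their $(-z)^1$ coefficients are analytic. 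Since they are fixed, a comparison of the $(-z)^1$ coefficients in the first component has to be made carefully.
\item Cleaner route: $\hat{\mathbf{\Psi}}_{in,-1}\sim(-z)^{p_k}(0,1)$ is not $C^{2}$ at $z=0$, so its coefficient must vanish because $\mr{r}\in C^2$. For $\hat{\mathbf{r}}_{in}$, the first-component exponent $(-z)^{1-\frac{k^2+\sigma}{q_k}}$ with $\sigma=-1$ is exactly $1$, so I would instead perturb. Consider $\sigma$ near $-1$: $\mathcal{W}(\sigma,-1)$ is holomorphic with a unique zero $\sigma_*$ near $-q_k$, i.e. within $O(k)$ of $-1$.
\item Alternatively, I would use the exact relation $\lambda_k=\partial_v r_k$ together with the explicit leading coefficients $d_i$ and the cancellation in Proposition \ref{prop: model argument for square coefficient} to verify that $(0,\mr{r})$ matches $b\,\hat{\mathbf{\Psi}}_{out,-1}+a\,\hat{\mathbf{r}}_{out,-1}$ at the level of Taylor coefficients to first order. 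Uniqueness of the Volterra construction would then give equality.
\end{itemize}
With this in hand, $\mathcal{W}(-1,-1)=0$ and the uniqueness from Lemma \ref{lemma: only one zero point} concludes the proof.
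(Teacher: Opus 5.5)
Your Steps 1 and 2 are the paper's proof in full. The paper stops there: it gets the outgoing condition directly from Lemma~\ref{lemma: regularity perspective} because $\mathring{r}\in C^{2}_{z}$. You are right that this is delicate at $\sigma=-1$, a point the paper's proof does not address. There $-\frac{\sigma+k^{2}}{q_{k}}=1$, so the $C^{2}$ test cannot detect an $\hat{\mathbf{r}}_{in,-1}$ admixture; the lemma's proof only discusses the $(-z)^{-\sigma/q_{k}}$ branch.

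Having raised this, your proof stands or falls with Step~3, and Step~3 does not close.
\begin{itemize}
\item The claim in your first bullet, that the $(-z)$ term cannot be cancelled by outgoing solutions, already fails for the model operator. At $\sigma=-1$ one has $1+\sigma=d_{1}=d_{2}=0$, so the first equation of $\mathcal{L}_{p}=0$ reduces to $q_{k}zf''=0$. Hence $\mathbf{r}^{(1)}_{-,-1}=(-z)$ and $\mathbf{\Psi}^{(1)}_{+,-1}=\widetilde{a}_{1,-1}(-z)=-\frac{2k^{3}\lambda_{0}}{q_{k}}(-z)\neq0$. The $C^{2}$ model solution with vanishing first component is therefore $\mathbf{\Psi}_{+,-1}+\frac{2k^{3}\lambda_{0}}{q_{k}}\mathbf{r}_{-,-1}$, which does carry an ingoing component.
\item For the true operator, the corrections $Z_{3,-1}=O(k(-z)^{1-2\epsilon})$ are larger than this $k^{3}$ coefficient. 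So neither a leading-order comparison nor your first-order Taylor matching (third bullet) can decide the question. The $(-z)^{1}$ coefficient of the first component is exactly the datum separating $\mathrm{span}\{\hat{\mathbf{r}}_{out,-1},\hat{\mathbf{\Psi}}_{out,-1}\}$ from $\hat{\mathbf{r}}_{in,-1}$. It is fixed only implicitly by holomorphic continuation in $\sigma$, and with H\"older coefficients $N_{i},h_{i}$ there is no Taylor uniqueness to invoke.
\item Your second bullet only restates the Rouch\'e bound $|\sigma_{*}+q_{k}|\lesssim k$; it does not locate $\sigma_{*}$.
\end{itemize}

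The missing idea is to use the exactness of the translation mode away from the degenerate value $\sigma=-1$.
\begin{itemize}
\item Take a cut-off $\chi(s)$ with $\chi=0$ near $s=0$ and $\chi=1$ for $s\geq1$. Then $u=\chi(s)e^{-s}(0,\mathring{r},0)$ solves the inhomogeneous linearized problem with zero data and forcing supported in $\frac{1}{2}\leq s\leq1$.
\item Its transformed forcing $F(\sigma)$ is holomorphic on the strip, since $\sigma-V\neq0$ there. Its transform is $\hat{u}(\sigma)=c(\sigma)(0,\mathring{r})$ with $c(\sigma)=\int_{0}^{\infty}\chi e^{-(\sigma+1)s}ds=\frac{1}{\sigma+1}+(\text{entire})$.
\item For $-\frac{1}{2}<\Re\sigma\leq-\eta$, both ingoing exponents have real part in $(0,1)$, so Lemma~\ref{lemma: regularity perspective} applies without ambiguity. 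As in the proof of Proposition~\ref{prop: near axis decay}, $\hat{u}(\sigma)=R(\sigma)F(\sigma)$ there; the unique zero of $\mathcal{W}(\cdot,-1)$ lies within $O(k)$ of $-q_{k}$.
\item By meromorphic continuation the identity persists on the whole strip. The left side has a pole at $\sigma=-1$ with residue $(0,\mathring{r})\neq0$, while the right side is holomorphic wherever $\mathcal{W}(\cdot,-1)\neq0$. Hence $\mathcal{W}(-1,-1)=0$, and Lemma~\ref{lemma: only one zero point} gives $\sigma_{*}=-1$.
\end{itemize}
Because the zero is simple, the residue of $R(\sigma)F(\sigma)$ is a multiple of the element (unique up to scale) of the intersection of the Dirichlet and outgoing spans at $\sigma=-1$. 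So this argument also shows a posteriori that $(0,\mathring{r})$ is outgoing, which is what you and the paper tried to assert directly.
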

\begin{proof}
Recall that $r_{k} = (-u)\mr{r}(z) =e^{-s}\mr{r}(z)$. Substituting $(r_{p},\Psi_{p},m_{p}) = (0,r_{k},0)$ in the equation \eqref{eq:wave eq for rp under self-similar coordinates}-\eqref{eq:v transport eq for m under self-similar coordinates}, we have\begin{equation}
q_{k}z\frac{d^{2}\mr{r}}{dz^{2}}+(q_{k}-1)\frac{d\mr{r}}{dz}-\frac{\mu_{k}}{1-\mu_{k}}\frac{(\partial_{s}r_{k}+q_{k}z\partial_{z}r_{k})\partial_{z}r_{k}}{r_{k}^{2}}\mr{r} = 0
\end{equation}
Therefore, $(\hat{r}_{p},\hat{\Psi}_{p},\hat{m}_{p}) = (0,\mr{r}(z),0)$ solves the equations \eqref{eq:scattering equation for rp}-\eqref{scattering equation for mp} with $\sigma = -1$. Then $(\hat{r}_{p},\hat{\Psi}_{p}) = (0,\mr{r}(z))$ is a special solution to the equation $\mathcal{L}(\hat{r}_{p},\hat{\Psi}_{p}) = 0$. Moreover, since $z = -1$ corresponds to the center of the spacetime, we have $(\hat{r}_{p},\hat{\Psi}_{p})(-1) = (0,\mr{r}(-1)) = (0,0)$, which means that $(0,\mr{r}(z))$ satisfies the Dirichlet boundary condition. Since $\mr{r}\in C^{2}_{z}$, we have that $(0,\mr{r})$ also satisfies the outgoing boundary condition by Lemma \ref{lemma: regularity perspective}. 

Then by our construction, $\begin{bmatrix}
0\\\mr{r}(z)
\end{bmatrix}$
is a multiple of $\hat{\mathbf{\Psi}}_{out,-1}$ and a multiple of $\hat{\mathbf{\Psi}}_{dir,-1}$. Therefore, $\hat{\mathbf{\Psi}}_{out,\sigma}$ and $\hat{\mathbf{\Psi}}_{dir,\sigma}$ is linearly dependent at $\sigma = -1$. Then we have $\mathcal{W}(-1,-1) = 0$. Since by Lemma \ref{lemma: only one zero point}, $\mathcal{W}(\sigma,-1)$ has only one zero in $\mathbb{I}_{[-1-\eta,-\eta]}$, we can conclude the proof of this Proposition.
\end{proof}
The above proof motivates the following definition:
\begin{definition}
For $\sigma = -1$, define two constants $a_{*}$ and $b_{*}$ such that $\hat{\mathbf{\Psi}}_{out,-1} = (0,a_{*}\mr{r})$ and $\hat{\mathbf{\Psi}}_{dir,-1} = (0,b_{*}\mr{r})$.  
\end{definition}

\subsection{Expansion of scattering resolvent}
In this section, we establish the expansion of the scattering resolvent $R(\sigma)(F_{1},F_{2})$ in terms of $\sigma$ for $(F_{1},F_{2})\in\partial_{z}\mathcal{C}_{N}^{\widetilde{\beta}}\times\partial_{z}\mathcal{C}_{N}^{\beta}$ with $p_{k}<\beta<\frac{3}{2}$ and $\beta+\frac{1}{4}<\widetilde{\beta}<2$.

We have the following proposition.
\begin{proposition}
\label{prop: high regularity resolvent estimate}
Let $k$ be sufficiently small, $p_{k}<\beta<\frac{3}{2}$, and $\beta+\frac{1}{4}<\widetilde{\beta}<2$. For fixed $(F_{1},F_{2})\in\mathcal{C}_{N}^{\widetilde{\beta}}\times \mathcal{C}_{N}^{\beta}$, the cut-off resolvent $\rho(z)R(\sigma)(\partial_{z}F_{1},\partial_{z}F_{2})(z)$ is smooth in $(\sigma,z)$, meromorphic in $\sigma$ for fixed $z$. We have the following expansion of $\rho(z)R(\sigma)(\partial_{z}F_{1},\partial_{z}F_{2})$:\begin{equation}
\rho(z)R(\sigma)(\partial_{z}F_{1},\partial_{z}F_{2})(z) = \frac{\mathcal{R}(\sigma,z)}{\sigma+1},
\end{equation}
where $\mathcal{R}$ is smooth in $(\sigma,z)$, holomorphic in $\sigma$ for fixed $z$, and bounded uniformly in terms of the $\mathcal{C}_{N}^{\widetilde{\beta}}\times\mathcal{C}_{N}^{\beta}$-norm of $(F_{1},F_{2})$. 

Moreover, $\mathcal{R}(-1,z)$ can be computed explicitly. There exist a constant $c_{*}$ and two functions $(\widetilde{G}_{-1})_{23}$ and $(\widetilde{G}_{-1})_{24}\neq0$ such that\begin{align}
\bigl(\mathcal{R}_{1}(-1,z)\bigr)_{1} &= 0,\label{eq: expansion of R}\\
\bigl(\mathcal{R}_{1}(-1,z)\bigr)_{2}& = c_{*}\rho(z)\mr{r}(z)\int_{-1}^{0}(\widetilde{G}_{-1})_{23}\partial_{z}^{2}F_{1}+(\widetilde{G}_{-1})_{24}\partial_{z}^{2}F_{2}dz^{\prime}.\label{eq: expansion of R2}
\end{align}
Finally, we have the following estimate for $\rho(z)R(\sigma)(\partial_{z}F_{1},\partial_{z}F_{2})(z)$:\begin{equation}
\left\vert\frac{d^{j}}{dz^{j}}\left(\rho(z)R(\sigma)(\partial_{z}F_{1},\partial_{z}F_{2})\right)\right\vert\lesssim_{k}\vert\sigma\vert^{-1+j}\left(1+\frac{1}{\vert\sigma+1\vert}\right)\frac{1}{\beta_{*}+\Re\sigma}\left(\Vert F_{1}\Vert_{\mathcal{C}_{N}^{\widetilde{\beta}}}+\Vert F_{2}\Vert_{\mathcal{C}_{N}^{\beta}}\right),\label{eq: resolvent estimate}
\end{equation}
where $\beta_{*}: = \beta q_{k}$.
\end{proposition}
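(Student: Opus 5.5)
The plan is to work directly with the explicit representation of $R(\sigma)$ from Definition \ref{def: scattering resolvent}, written through the kernels $R_\sigma(z,z')$ and $\widetilde R_\sigma(z,z')$. All $\sigma$-dependence will be factored through the Wronskian $\mathcal{W}(\sigma,-1)$. The inverse of $M_\sigma$, and hence the entries $(G_\sigma)_{ij}$, are given by cofactors divided by $\mathcal{W}(\sigma,z)$. By \eqref{estimate for the wronskian} we have $\mathcal{W}(\sigma,z)=\mathcal{W}(\sigma,-1)W(\sigma,z)$ with $W$ holomorphic and nonvanishing. Proposition \ref{prop: only one zero for the wronski} together with Lemma \ref{lemma: only one zero point} (Rouché) shows that $\sigma=-1$ is a simple zero of $\mathcal{W}(\sigma,-1)$ in $\mathbb{I}_{[-\beta_*,-\eta]}$. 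Setting $\mathcal{R}:=(\sigma+1)\rho R(\sigma)$, the cofactors are holomorphic because $\hat{\mathbf r}_{out},\hat{\mathbf\Psi}_{out},\hat{\mathbf r}_{dir},\hat{\mathbf\Psi}_{dir}$ are holomorphic (Proposition \ref{prop: construction of the true outgoing solution} and its Dirichlet corollary). Therefore $\mathcal{R}$ is holomorphic once the $z'$-integrals converge.

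Next comes convergence and uniformity near $z'=0$, for both the forcing regularity and the $\sigma$-weights. For $z<z'$ the kernel involves $\int_z^0 G_\sigma\,\partial_z^2F\,dz'$. Because $(M_\sigma^{-1})$ carries ingoing behaviour, $G_\sigma'\sim z^{-1}M_\sigma^{-1}$ has entries of size $(-z')^{\sigma/q_k}$ and $(-z')^{(\sigma+k^2)/q_k}$. Integrating gives $G_\sigma\sim (-z')^{1+\sigma/q_k}/(\sigma+q_k)$ plus bounded parts. Meanwhile $\partial_z^2F_2\lesssim (-z')^{\beta-2}$ and $\partial_z^2 F_1\lesssim(-z')^{\widetilde\beta-2}$. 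The integrand is therefore bounded by $(-z')^{\beta-1+\Re\sigma/q_k}$, which is integrable exactly when $\Re\sigma>-\beta q_k=-\beta_*$. This is where the factor $1/(\beta_*+\Re\sigma)$ in \eqref{eq: resolvent estimate} comes from. The $1/4$ gap $\widetilde\beta>\beta+\tfrac14$ absorbs the extra $(-z)^{-k^2/q_k}$ and $\log$ losses in the $r$-components, in the spirit of Lemma \ref{lemma: estimate on YYinverse}.

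The parts with $z'<z$ are supported where $\rho\neq0$, i.e.\ away from $z=0$, where everything is smooth. The factor $|\sigma|^{-1+j}$ comes from the Dirichlet estimates \eqref{eq: estimates on the dirichlet solutions} and the outgoing estimates \eqref{eq: estimates on the r outgoing solutions}--\eqref{eq: estimates on the psi outgoing solutions}. The cofactor of $M_\sigma$ acquires compensating powers of $\sigma$, and each $z$-derivative costs at most $|\sigma|$. The factor $(1+|\sigma+1|^{-1})$ records the simple pole.

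I expect the main obstacle to be the residue formula \eqref{eq: expansion of R}--\eqref{eq: expansion of R2}. At $\sigma=-1$ the matrix $M_{-1}$ has one-dimensional kernel, spanned by the relation $b_*\hat{\mathbf\Psi}_{out,-1}=a_*\hat{\mathbf\Psi}_{dir,-1}$ with both proportional to $(0,\mr r)$. So the adjugate of $M_{-1}$ has rank one, and its range is spanned by this kernel vector. My first attempt is to write $(\sigma+1)M_\sigma^{-1}\to \operatorname{adj}M_{-1}/\partial_\sigma\mathcal{W}(-1,\cdot)$ and observe that every column of $\operatorname{adj}M_{-1}$ is a multiple of the null vector $(0,b_*,0,-a_*)^T$, up to sign conventions. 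Substituting into the three terms of $R(\sigma)$, the leading coefficient then collapses to a combination of the solutions $\hat{\mathbf\Psi}_{dir,-1}$ and $\hat{\mathbf\Psi}_{out,-1}$. Both have first component $0$, which gives \eqref{eq: expansion of R}. The second component is $\mr r(z)$ times a scalar linear functional of $(\partial_z^2F_1,\partial_z^2F_2)$, obtained after the integration by parts already built into the definition. The weights $(\widetilde G_{-1})_{23}$ and $(\widetilde G_{-1})_{24}$ are read off from the corresponding row of the limiting adjugate. The remaining checks are that the first-term contributions on $z'<z$ and $z'>z$ combine into a single integral over $[-1,0]$, which follows because both reduce to the same row of the adjugate, and that $(\widetilde G_{-1})_{24}\neq0$. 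For the latter, the leading behaviour $J_\sigma^{-1}$ in Lemma \ref{lemma: carefully study of the inverse} shows that the $(2,4)$ entry is nonzero up to $O(k)$.
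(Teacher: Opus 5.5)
Your overall route is the paper's. You factor out the simple zero of $\mathcal W(\sigma,-1)$ at $\sigma=-1$, using Rouch\'e for simplicity, and renormalise the kernels by $(\sigma+1)\mathcal W_r(\sigma)$. You then use $|\partial_z^2F|\lesssim(-z')^{\beta-2}$ with dominated convergence near $z'=0$, and read off the residue from the cofactors of $M_{-1}$. Your adjugate/null-vector formulation is an abstract version of the paper's observation that $[M_{-1}]_{j1}=[M_{-1}]_{j3}=0$, which holds because the columns $\hat{\mathbf{\Psi}}_{dir,-1}$ and $\hat{\mathbf{\Psi}}_{out,-1}$ are proportional. The null vector is $(0,a_*,0,-b_*)^T$, not $(0,b_*,0,-a_*)^T$.

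You should also say why no $\partial_zF$-term survives in the residue. At $\sigma=-1$ the second column of $[\hat{\mathbf r}_{out,\sigma},\hat{\mathbf{\Psi}}_{out,\sigma}]^{-1}[\hat{\mathbf r}_{dir,\sigma},\hat{\mathbf{\Psi}}_{dir,\sigma}]$ is the constant $(0,b_*/a_*)^T$. Its $z'$-derivative therefore kills the second row of $\widetilde G_{-1}$, and the first row is already zero; this is the paper's $R^{(nor)}_{-1}\equiv0$.

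The step that fails is your proof of $(\widetilde G_{-1})_{24}\neq0$. The matrix $J_\sigma^{-1}$ of Lemma \ref{lemma: carefully study of the inverse} is the $z\to0$ leading part of $Y_\sigma^{-1}$. That is the inverse fundamental matrix of the model operator $\mathcal L_p$ in the outgoing/ingoing basis, and it contains no information about the Dirichlet solutions. By contrast, $(\widetilde G'_{-1})_{24}$ is built from the cofactor $[M_{-1}]_{42}$ of the Dirichlet/outgoing matrix of $\mathcal L$ at the resonance. Moreover $(J^{-1}_{-1})_{24}=\frac{q_k^2}{k^2}z$ has size $k^{-2}$, so ``nonzero up to $O(k)$'' is not a controlled statement.

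What is needed is the paper's direct computation. Expanding along the column $\hat{\mathbf{\Psi}}_{out,-1}=a_*(0,\mr{r},0,\mr{r}')^T$ gives
$[M_{-1}]_{42}=-a_*\mr{r}\,\bigl(\hat{\mathbf r}^{(1)}_{dir,-1}\hat{\mathbf r}^{\prime(1)}_{out,-1}-\hat{\mathbf r}^{(1)}_{out,-1}\hat{\mathbf r}^{\prime(1)}_{dir,-1}\bigr)$.
At $z=-1$ the bracket equals $-\hat{\mathbf r}^{(1)}_{out,-1}(-1)=-(1+O(k))$, by the Dirichlet data and \eqref{eq: expansion of the r outgoing solutions to the whole system}. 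Since $a_*\neq0$ and $\mr{r}>0$ for $z>-1$, this gives $Z_2\not\equiv0$ near the axis.

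Your bookkeeping for $G_\sigma$ also leaves two $\sigma$-singular factors unaccounted for.

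\begin{enumerate}
\item The factor $1/(\sigma+q_k)$ you write down would put a second pole at $\sigma=-q_k=-1+k^2$, which lies inside $\mathbb{I}_{[-\beta_*,-\eta]}$. That contradicts both the holomorphy of $\mathcal R$ and \eqref{eq: resolvent estimate}, so you must show it cancels. The natural mechanism is this: the coefficient of $(-z')^{\sigma/q_k}$ in $G'_\sigma$ comes from the most singular (derivative) entry of the $\hat{\mathbf r}_{in}$-component in the cofactors. Its complementary minor is $\det[\hat{\mathbf r}_{out,\sigma},\hat{\mathbf{\Psi}}_{out,\sigma}](0)=1+\sigma/q_k$, which absorbs the $1/(\sigma+q_k)$.
\item The $(-z')^{(\sigma+k^2)/q_k}$ entries you mention integrate to $-\frac{q_k}{\sigma+1}(-z')^{(\sigma+1)/q_k}$. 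So your ``bounded parts'' carry an extra $(\sigma+1)^{-1}$, in addition to $1/\mathcal W(\sigma,-1)$, in the entries $(G_\sigma)_{13}$ and $(G_\sigma)_{23}$ paired with the first forcing component. Their coefficients are proportional to the $\hat{\mathbf{\Psi}}_{in}$-components of the Dirichlet solutions. Unless you show these vanish at $\sigma=-1$, this is incompatible with a simple pole whose residue has vanishing first component, as claimed in \eqref{eq: expansion of R}.
\end{enumerate}

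The paper's proof does not settle either point; it only asserts that $G_\sigma$ can be chosen with $|\widetilde G_\sigma|\lesssim(-z)^{\sigma/q_k+1}$. So this is something your argument would have to supply itself.
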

\begin{proof}
By Proposition \ref{prop: only one zero for the wronski}, we have the following expansion of $\mathcal{W}(\sigma,z)$:\begin{equation}
\mathcal{W}(\sigma,z) = (\sigma+1)\mathcal{W}_{r}(\sigma)W(\sigma,z),
\end{equation}
where $\mathcal{W}_{r}(\sigma)$ is an analytic function in $\sigma$ and non-vanishing for $\sigma\in\mathbb{I}_{[-1-\eta,-\eta]}$. Then $G_{\sigma}$ can be expressed as \begin{equation*}
    \begin{bmatrix}
        \left(G_{\sigma}^{\prime}\right)_{13}&\left(G_{\sigma}^{\prime}\right)_{14}\\\left(G_{\sigma}^{\prime}\right)_{23}&\left(G_{\sigma}^{\prime}\right)_{24}
    \end{bmatrix}
     = \frac{1}{q_{k}z}\frac{1}{(\sigma+1)\mathcal{W}_{r}(\sigma)W(\sigma,z)}
     \begin{bmatrix}
         [M_{\sigma}]_{31}&[M_{\sigma}]_{41}\\
         [M_{\sigma}]_{32}&[M_{\sigma}]_{42}
     \end{bmatrix}=:\frac{1}{(\sigma+1)\mathcal{W}_{r}(\sigma)}\begin{bmatrix}
         \left(\widetilde{G}_{\sigma}^{\prime}\right)_{13}&\left(\widetilde{G}_{\sigma}^{\prime}\right)_{14}\\[1em]\left(\widetilde{G}_{\sigma}^{\prime}\right)_{23}&\left(\widetilde{G}_{\sigma}^{\prime}\right)_{24}
     \end{bmatrix}.
\end{equation*}
Hence, we can renormalize the resolvent kernel $R_{\sigma}(z,z^{\prime})$ and $\widetilde{R}_{\sigma}(z,z^{\prime})$ as \begin{equation*}
    R_{\sigma}^{(nor)}(z,z^{\prime}) = (\sigma+1)\mathcal{W}_{r}(\sigma)R_{\sigma}(z,z^{\prime}),\quad \widetilde{R}_{\sigma}^{(nor)}(z,z^{\prime}) = (\sigma+1)\mathcal{W}_{r}(\sigma)\widetilde{R}_{\sigma}(z,z^{\prime}).
\end{equation*}
Then we have \begin{equation}
\begin{aligned}
&\lim_{\sigma\rightarrow-1}(\sigma+1)\rho(z)R(\sigma)(\partial_{z}F_{1},\partial_{z}F_{2})(z)\\ = &\frac{1}{\mathcal{W}_{r}(-1)}\lim_{\sigma\rightarrow-1}\int_{-1}^{0}\rho(z)R_{\sigma}^{(nor)}(z,z^{\prime})\begin{bmatrix}
    \partial_{z}F_{1}\\
    \partial_{z}F_{2}
\end{bmatrix}(z^{\prime})+\rho(z)\widetilde{R}_{\sigma}^{(nor)}(z,z^{\prime})\begin{bmatrix}
    \partial_{z}^{2}F_{1}\\
    \partial_{z}^{2}F_{2}
\end{bmatrix}dz^{\prime}\\=&\frac{1}{\mathcal{W}_{r}(-1)}\left(\lim_{\sigma\rightarrow-1}\int_{-1}^{z} \rho(z)R_{\sigma}^{(nor)}(z,z^{\prime})\begin{bmatrix}
    \partial_{z}F_{1}\\\partial_{z}F_{2}
\end{bmatrix}(z^{\prime})+\rho(z)\widetilde{R}_{\sigma}^{(nor)}(z,z^{\prime})\begin{bmatrix}
    \partial_{z}^{2}F_{1}\\\partial_{z}^{2}F_{2}
\end{bmatrix}dz^{\prime}\right)\\&+\frac{1}{\mathcal{W}_{r}(-1)}\left(\lim_{\sigma\rightarrow-1}\int_{z}^{0}\rho(z)\widetilde{R}_{\sigma}^{(nor)}(z,z^{\prime})\begin{bmatrix}
\partial_{z}^{2}F_{1}\\\partial_{z}^{2}F_{2}
\end{bmatrix}(z^{\prime})dz^{\prime}\right).
\end{aligned}
\label{eq: limiting equation of sigma}
\end{equation}
By the dominated convergence theorem, on the compact region $[-1,z]$, we have\begin{align*}
    &\lim_{\sigma\rightarrow-1}\int_{-1}^{z}\rho(z)R_{\sigma}^{(nor)}(z,z^{\prime})\begin{bmatrix}
    \partial_{z}F_{1}\\\partial_{z}F_{2}
    \end{bmatrix}(z^{\prime})+\rho(z)\widetilde{R}_{\sigma}^{(nor)}(z,z^{\prime})\begin{bmatrix}
        \partial_{z}^{2}F_{1}\\ \partial_{z}^{2}F_{2}
    \end{bmatrix}dz^{\prime} \\=& \int_{-1}^{z}\rho(z)R_{-1}^{(nor)}(z,z^{\prime})\begin{bmatrix}
    \partial_{z}F_{1}\\\partial_{z}F_{2}
    \end{bmatrix}(z^{\prime})+\rho(z)\widetilde{R}_{-1}^{(nor)}(z,z^{\prime})\begin{bmatrix}
        \partial_{z}^{2}F_{1}\\\partial_{z}^{2}F_{2}
    \end{bmatrix}dz^{\prime}.
\end{align*}
For the last term on the right-hand side of \eqref{eq: limiting equation of sigma}, since the resolvent kernel might be singular when $z^{\prime}\rightarrow 0$ for $\sigma\in\mathbb{I}_{[-1-\eta,-\eta]}$, we first analyze the behavior of $\widetilde{R}_{\sigma}^{(nor)}$. Assume that \begin{align*}
   &\hat{\mathbf{r}}_{dir,\sigma} = \bar{\zeta}_{1}(\sigma)\hat{\mathbf{r}}_{out,\sigma}+\bar{\zeta}_{2}(\sigma)\hat{\mathbf{\Psi}}_{out,\sigma}+\bar{\zeta}_{3}(\sigma)\hat{\mathbf{r}}_{in,\sigma}+\bar{\zeta}_{4}(\sigma)\hat{\mathbf{\Psi}}_{in,\sigma},\\&
\hat{\mathbf{\Psi}}_{dir,\sigma} = \widetilde{\zeta}_{1}(\sigma)\hat{\mathbf{r}}_{out,\sigma}+\widetilde{\zeta}_{2}(\sigma)\hat{\mathbf{\Psi}}_{out,\sigma}+\widetilde{\zeta}_{3}(\sigma)\hat{\mathbf{r}}_{in,\sigma}+\widetilde{\zeta}_{4}(\sigma)\hat{\mathbf{\Psi}}_{in,\sigma},
\end{align*} 
where $\zeta_{i}$ and $\widetilde{\zeta}_{i}$ are uniformly bounded in the compact region of $\sigma$. Moreover, due to the estimates on $\hat{\mathbf{r}}_{dir,\sigma},\hat{\mathbf{\Psi}}_{dir,\sigma}$ \eqref{eq: estimates on the dirichlet solutions} and the estimates on $\hat{\mathbf{r}}_{out,\sigma},\hat{\mathbf{\Psi}}_{out,\sigma}$ \eqref{eq: estimates on the r outgoing solutions}-\eqref{eq: estimates on the psi outgoing solutions}, we have the following estimates on $\bar{\zeta}_{i}$ and $\widetilde{\zeta}_{i}$: \begin{equation*}
    \left\vert\bar{\zeta}_{2}\right\vert\lesssim \frac{1}{\vert\sigma\vert^{2}},\quad \left\vert\widetilde{\zeta}_{2}\right\vert\lesssim \frac{1}{\vert\sigma\vert^{2}}\quad,\left\vert\bar{\zeta}_{i}\right\vert\lesssim \frac{1}{\vert\sigma\vert},\quad \left\vert\widetilde{\zeta}_{i}\right\vert\lesssim\frac{1}{\vert\sigma\vert}, i = 1,3,4.
\end{equation*}
Then for $[M_{\sigma}]_{31}$, we have
\begin{align*}
    [M_{\sigma}]_{31} (z^{\prime})=& \det\begin{bmatrix}
        \hat{\mathbf{\Psi}}_{dir,\sigma}^{(1)}&\hat{\mathbf{r}}_{out,\sigma}^{(1)}&\hat{\mathbf{\Psi}}_{out,\sigma}^{(1)}\\
    \hat{\mathbf{\Psi}}_{dir,\sigma}^{(2)}&\hat{\mathbf{r}}_{out,\sigma}^{(2)}&\hat{\mathbf{\Psi}}_{out,\sigma}^{(2)}\\
    \hat{\mathbf{\Psi}}_{dir,\sigma}^{\prime(2)}&\hat{\mathbf{r}}_{out,\sigma}^{\prime(2)}&\hat{\mathbf{\Psi}}_{out,\sigma}^{\prime(2)}
    \end{bmatrix}(z^{\prime})\\=&
    \widetilde{\zeta}_{3}(\sigma)\det\begin{bmatrix}
\hat{\mathbf{r}}_{in,\sigma}^{(1)}&\hat{\mathbf{r}}_{out,\sigma}^{(1)}&\hat{\mathbf{\Psi}}_{out,\sigma}^{(1)}\\\hat{\mathbf{r}}_{in,\sigma}^{(2)}&\hat{\mathbf{r}}_{out,\sigma}^{(2)}&\hat{\mathbf{\Psi}}_{out,\sigma}^{(2)}\\\hat{\mathbf{r}}_{in,\sigma}^{\prime(2)}&\hat{\mathbf{r}}_{out,\sigma}^{\prime(2)}&\hat{\mathbf{\Psi}}_{out,\sigma}^{\prime(2)}
    \end{bmatrix}(z^{\prime})+\widetilde{\zeta}_{4}(\sigma)\det\begin{bmatrix}
        \hat{\mathbf{\Psi}}_{in,\sigma}^{(1)}&\hat{\mathbf{r}}^{(1)}_{out,\sigma}&\hat{\mathbf{\Psi}}_{out,\sigma}^{(1)}\\
        \hat{\mathbf{\Psi}}_{in,\sigma}^{(2)}&\hat{\mathbf{r}}_{out,\sigma}^{(2)}&\hat{\mathbf{\Psi}}_{out,\sigma}^{(2)}\\
        \hat{\mathbf{\Psi}}_{in,\sigma}^{\prime(2)}&\hat{\mathbf{r}}_{out,\sigma}^{\prime(2)}&\hat{\mathbf{\Psi}}_{out,\sigma}^{\prime(2)}
        \end{bmatrix}(z^{\prime}).
\end{align*}
Then, by Corollary \ref{coro: construction of the true ingoing solution} and Proposition \ref{prop: construction of the true outgoing solution}, we have \begin{equation*}
\left\vert[M_{\sigma}]_{31}\right\vert\lesssim_{k}\vert\sigma\vert (-z^{\prime})^{-\frac{\sigma+k^{2}}{q_{k}}-1}.
\end{equation*}
Similarly, we can show that 
\begin{equation}
\begin{aligned}
    \left\vert [M_{\sigma}]_{ij}\right\vert&\lesssim_{k}\vert\sigma\vert (-z^{\prime})^{-\frac{\sigma+k^{2}}{q_{k}}-1},\quad i = 3,4,\ j = 1,2,\\
    \left\vert[M_{\sigma}]_{i3}\right\vert&\lesssim_{k} (-z^{\prime})^{-\frac{\sigma+k^{2}}{q_{k}}-1},\ i =3,4,\quad \left\vert[M_{\sigma}]_{i4}\right\vert\lesssim_{k}\frac{1}{\vert\sigma\vert}(-z^{\prime})^{-\frac{\sigma+k^{2}}{q_{k}}-1},\ i = 3,4.
\end{aligned}
\label{eq: estimates on M}
\end{equation}
Hence, by the estimates \eqref{eq: estimates on the r outgoing solutions}, \eqref{eq: estimates on the psi outgoing solutions}, \eqref{eq: estimates on the dirichlet solutions}, and \eqref{estimate for the wronskian}, we have \begin{equation*}
    \left\vert\frac{1}{q_{k}zW(\sigma,z)}\begin{bmatrix}
        [M_{\sigma}]_{31}&[M_{\sigma}]_{41}\\ [M_{\sigma}]_{32}&[M_{\sigma}]_{42}
    \end{bmatrix}(z)\right\vert\lesssim 
    \vert\sigma\vert(-z)^{\frac{\sigma}{q_{k}}}.
\end{equation*}
Then, we can choose $G_{\sigma}$ in such a way that we have \begin{equation*}
    \left\vert\begin{bmatrix}
        (\widetilde{G}_{\sigma})_{13}&(\widetilde{G}_{\sigma})_{14}\\(\widetilde{G}_{\sigma})_{23}&(\widetilde{G}_{\sigma})_{24}
    \end{bmatrix}\right\vert\lesssim (-z)^{\frac{\sigma}{q_{k}}+1}.
\end{equation*}
Since $(F_{1},F_{2})\in\mathcal{C}_{N}^{\widetilde{\beta}}\times\mathcal{C}^{\beta}_{N}$, we have that \begin{equation*}
    \vert \partial_{z}^{2}F_{1}\vert+\vert \partial_{z}^{2}F_{2}\vert\lesssim (-z)^{\beta-2}.
\end{equation*}
Hence, we have that \begin{equation*}
    \begin{bmatrix}
        (\widetilde{G}_{\sigma})_{13}&(\widetilde{G}_{\sigma})_{14}\\(\widetilde{G}_{\sigma})_{23}&(\widetilde{G}_{\sigma})_{24}
    \end{bmatrix}\begin{bmatrix}
        \partial_{z}^{2}F_{1}\\\partial_{z}^{2}F_{2}
    \end{bmatrix}
    \lesssim (-z)^{\beta+\frac{\sigma}{q_{k}}-1}
\end{equation*}
is integrable for $\sigma\in\mathbb{I}_{[-\beta_{*},-\eta]}$. Then, by the dominated convergence theorem, we have \begin{align*}
    \lim_{\sigma\rightarrow-1}\int_{z}^{0}\rho(z)\widetilde{R}_{\sigma}^{(nor)}(z,z^{\prime})\begin{bmatrix}
    \partial_{z}^{2}F_{1}\\\partial_{z}^{2}F_{2}
    \end{bmatrix}dz^{\prime} = &\lim_{\sigma\rightarrow-1}\int_{z}^{0}\rho(z)\left(\begin{bmatrix}
        \hat{\mathbf{r}}_{dir,\sigma}&\hat{\mathbf{\Psi}}_{dir,\sigma}
    \end{bmatrix}\begin{bmatrix}
        (\widetilde{G}_{\sigma})_{13}&(\widetilde{G}_{\sigma})_{14}\\(\widetilde{G}_{\sigma})_{23}&(\widetilde{G}_{\sigma})_{24}
    \end{bmatrix}\begin{bmatrix}
        \partial_{z}^{2}F_{1}\\\partial_{z}^{2}F_{2}
    \end{bmatrix}\right)(z^{\prime})dz^{\prime}\\=&\int_{z}^{0}\rho(z)\left(\begin{bmatrix}
        \hat{\mathbf{r}}_{dir,-1}&\hat{\mathbf{\Psi}}_{dir,-1}
    \end{bmatrix}\begin{bmatrix}
        (\widetilde G_{-1})_{13}&(\widetilde{G}_{-1})_{14}\\(\widetilde{G}_{-1})_{23}&(\widetilde{G}_{-1})_{24}
    \end{bmatrix}\begin{bmatrix}
        \partial_{z}^{2}F_{1}\\\partial_{z}^{2}F_{2}
    \end{bmatrix}\right)(z^{\prime})dz^{\prime}.
\end{align*}
Then we have
\begin{align*}
&\lim_{\sigma\rightarrow-1}(\sigma+1)\rho(z)R(\sigma)(\partial_{z}F_{1},\partial_{z}F_{2})(z)\\=&
\frac{\rho(z)}{\mathcal{W}_{r}(-1)}\int_{-1}^{z}R_{-1}^{(nor)}(z,z^{\prime})\begin{bmatrix}
    \partial_{z}F_{1}\\\partial_{z}F_{2}
\end{bmatrix}(z^{\prime})+\widetilde{R}_{-1}^{(nor)}(z,z^{\prime})\begin{bmatrix}
    \partial_{z}^{2}F_{1}\\\partial_{z}^{2}F_{2}
\end{bmatrix}(z^{\prime})dz^{\prime}+\frac{\rho(z)}{\mathcal{W}_{r}(-1)}\int_{z}^{0}\widetilde{R}_{-1}^{(nor)}(z,z^{\prime})\begin{bmatrix}
    \partial_{z}^{2}F_{1}\\\partial_{z}^{2}F_{2}
\end{bmatrix}(z^{\prime})dz^{\prime}.
\end{align*}
Therefore, we can express the resolvent operator $\rho(z)R(\sigma)(\partial_{z}F_{1},\partial_{z}F_{2})$ as $\rho(z)R(\sigma)(\partial_{z}F_{1},\partial_{z}F_{2}) = \frac{\mathcal{R}(\sigma,z)}{(\sigma+1)}$.

Next, we compute the explicit structure of $\mathcal{R}(-1,z)$. When $\sigma = -1$, since $\hat{\mathbf{\Psi}}_{dir,-1}$ is linearly dependent to $\hat{\mathbf{\Psi}}_{out,-1}$, we have \begin{equation*}
    [M_{-1}]_{31} = [M_{-1}]_{41} = 0.
\end{equation*}
We can also compute $[M_{-1}]_{32}$ and $[M_{-1}]_{42}$ explicitly: \begin{align*}
    [M_{-1}]_{32} = \det\begin{bmatrix}
    \hat{\mathbf{r}}_{dir,-1}^{(1)}&\hat{\mathbf{r}}_{out,-1}^{(1)}&0\\
    \hat{\mathbf{r}}_{dir,-1}^{(2)}&\hat{\mathbf{r}}_{out,-1}^{(2)}&a_{*}\mr{r}(z)\\
    \hat{\mathbf{r}}^{\prime(2)}_{dir,-1}&\hat{\mathbf{r}}^{\prime(2)}_{out,-1}&a_{*}\mr{r}^{\prime}(z)
    \end{bmatrix},\quad 
    [M_{-1}]_{42} = \det\begin{bmatrix}
\hat{\mathbf{r}}_{dir,-1}^{(1)}&\hat{\mathbf{r}}_{out,-1}^{(1)}&0\\
\hat{\mathbf{r}}_{dir,-1}^{(2)}&\hat{\mathbf{r}}_{out,-1}^{(2)}&a_{*}\mr{r}(z)\\
\hat{\mathbf{r}}^{\prime(1)}_{dir,-1}&\hat{\mathbf{r}}_{out,-1}^{\prime(1)}&0
    \end{bmatrix}.
\end{align*}
Let $Z_{1} = [M_{-1}]_{32}$ and $Z_{2} = [M_{-1}]_{42}$. We show that $Z_{2}$ is a non-trivial function. For $[M_{-1}]_{42}$, we have \begin{equation*}
    [M_{-1}]_{42} = -a_{*}\mr{r}(z)\det\begin{bmatrix}
        \hat{\mathbf{r}}_{dir,-1}^{(1)}&\hat{\mathbf{r}}_{out,-1}^{(1)}\\\hat{\mathbf{r}}_{dir,-1}^{\prime(1)}&\hat{\mathbf{r}}_{out,-1}^{\prime(1)}
    \end{bmatrix}.
\end{equation*}
By \eqref{eq: expansion of the r outgoing solutions to the whole system}, we have $\hat{\mathbf{r}}_{out,\sigma}^{(1)}(-1)\neq0$. Then we have that $(\hat{\mathbf{r}}_{out,-1}^{(1)},\hat{\mathbf{r}}_{out,-1}^{\prime(1)})$ is linearly independent of $(\hat{\mathbf{r}}_{dir,-1}^{(1)},\hat{\mathbf r}_{dir,-1}^{\prime(1)})$ in a neighborhood of $z = -1$. Hence, the function $Z_{2}$ is non-trivial. Then we have
\begin{equation*}
\begin{bmatrix}
    (\widetilde{G}_{-1}^{\prime})_{13}&(\widetilde G_{-1}^{\prime})_{14}\\(\widetilde G_{-1}^{\prime})_{23}&(\widetilde G_{-1}^{\prime})_{24}
\end{bmatrix} = \frac{1}{q_{k}z}\frac{1}{W(-1,z)}\begin{bmatrix}
    0&0\\Z_{1}&Z_{2}
\end{bmatrix}.
\end{equation*}
Moreover, considering the asymptotic expansion of $\hat{\mathbf{r}}_{dir,-1}$ at $z = 0$, we have \begin{equation}
    \vert Z_{2}\vert\lesssim_{k} 1. \label{estimate on Z2}
\end{equation}
Similarly, we can derive the following bound for $Z_{1}$: \begin{equation}
    \label{eq: estimate on Z1}
    \left\vert Z_{1}\right\vert\lesssim_{k}1.
\end{equation}
Therefore, for $z\leq z^{\prime}<0$, we have \begin{equation*}
    \widetilde{R}^{(nor)}_{-1}(z,z^{\prime}) = \begin{bmatrix}
        0&0\\b_{*}\mr{r}(z)(\widetilde{G}_{-1})_{23}&b_{*}\mr{r}(z)(\widetilde{G}_{-1})_{24}
    \end{bmatrix}.
\end{equation*}
Similarly, when $-1\leq z^{\prime}\leq z$, for $R_{-1}$ we can deduce \begin{align*}
    &R_{-1}^{(nor)}(z,z^{\prime})\\= &\begin{bmatrix}
        \hat{\mathbf{r}}_{out,-1}^{(1)}&0\\
        \hat{\mathbf{r}}_{out,-1}^{(2)}&a_{*}\mr{r}
    \end{bmatrix}(z)\left(\frac{1}{a_{*}\hat{\mathbf{r}}^{(1)}_{out,-1}\mr{r}}\begin{bmatrix}
        a_{*}\mr{r}&0\\-\hat{\mathbf{r}}_{out,-1}^{(2)}&\hat{\mathbf{r}}_{out,-1}^{(2)}
    \end{bmatrix}\begin{bmatrix}
        \hat{\mathbf{r}}_{dir,-1}^{(1)}&0\\\hat{\mathbf{r}}_{dir,-1}^{(2)}&b_{*}\mr{r}\end{bmatrix}(z^{\prime})\right)^{\prime}\begin{bmatrix}
            0&0\\(\widetilde{G}_{-1})_{23}&(\widetilde{G}_{-1})_{24}(z^{\prime})
        \end{bmatrix}\\=&
        0.
\end{align*}
For $\widetilde{R}_{-1}^{(nor)}(z,z^{\prime})$ when $-1\leq z^{\prime}\leq z$, we have \begin{align*}
    \widetilde{R}_{-1}(z,z^{\prime}) = \begin{bmatrix}
        0&0\\b_{*}\mr{r}(\widetilde{G}_{-1})_{23}&b_{*}\mr{r}(\widetilde{G}_{-1})_{24}
    \end{bmatrix}.
\end{align*}
Letting $c_{*} = \frac{b_{*}}{\mathcal{W}_{r}(-1)}$ concludes the proof of \eqref{eq: expansion of R}-\eqref{eq: expansion of R2}.

Lastly, we establish the resolvent estimate \eqref{eq: resolvent estimate}. Taking derivatives, we have \begin{align*}
    &\frac{d^{j}}{dz^{j}}\int_{-1}^{0}\rho(z)R_{\sigma}(z,z^{\prime})\begin{bmatrix}
        \partial_{z}F_{1}\\\partial_{z}F_{2}
    \end{bmatrix}(z^{\prime})+\rho(z)\widetilde{R}_{\sigma}(z,z^{\prime})\begin{bmatrix}
        \partial_{z}^{2}F_{1}\\\partial_{z}^{2}F_{2}
    \end{bmatrix}dz^{\prime}\\=&
    \int_{-1}^{z}\frac{d^{j}}{dz^{j}}\left(\rho(z)R_{\sigma}(z,z^{\prime})\begin{bmatrix}
       \partial_{z} F_{1}\\\partial_{z}F_{2}
    \end{bmatrix}(z^{\prime})\right)+\frac{d^{j}}{dz^{j}}\left(\rho(z)\widetilde{R}_{\sigma}(z,z^{\prime})\right)\begin{bmatrix}
        \partial_{z}^{2}F_{1}\\\partial_{z}^{2}F_{2}\end{bmatrix}dz^{\prime}\\&+\int_{z}^{0}\frac{d^{j}}{dz^{j}}\left(\rho(z)\widetilde{R}_{\sigma}(z,z^{\prime})\begin{bmatrix}
        \partial_{z}^{2}F_{1}\\\partial_{z}^{2}F_{2}
    \end{bmatrix}(z^{\prime})\right)dz^{\prime}.
\end{align*}
We have the following estimate on $G_{\sigma}$\begin{equation}
    \left\vert
    \begin{bmatrix}(G_{\sigma})_{13}&(G_{\sigma})_{14}\\(G_{\sigma})_{23}&(G_{\sigma})_{24}
    \end{bmatrix}\right\vert\lesssim_{k} \frac{1}{\vert\sigma+1\vert}(-z)^{\frac{\sigma}{q_{k}}+1}.
    \label{eq: estimate on Gsigma}
\end{equation}

For $-1\leq z^{\prime}\leq z$, we have \begin{align*}
    &\frac{d^{j}}{dz^{j}}\left(\rho(z)\widetilde R_{\sigma}(z,z^{\prime})\begin{bmatrix}
        \partial_{z}^{2}F_{1}\\\partial_{z}^{2}F_{2}
    \end{bmatrix}\right)\\ =& \frac{d^{j}}{dz^{j}}\left(\rho(z)\begin{bmatrix}
        \hat{\mathbf{r}}_{out,\sigma}&\hat{\mathbf{\Psi}}_{out,\sigma}
    \end{bmatrix}(z)\right)\begin{bmatrix}
        \hat{\mathbf{r}}_{out,\sigma}&\hat{\mathbf{\Psi}}_{out,\sigma}
    \end{bmatrix}^{-1}\begin{bmatrix}
        \hat{\mathbf{r}}_{dir,\sigma}&\hat{\mathbf{\Psi}}_{dir,\sigma}
    \end{bmatrix}\begin{bmatrix}
        (G_{\sigma})_{13}&(G_{\sigma})_{14}\\(G_{\sigma})_{23}&(G_{\sigma})_{24}
    \end{bmatrix}\begin{bmatrix}
        \partial_{z}^{2}F_{1}\\\partial_{z}^{2}F_{2}
    \end{bmatrix}(z^{\prime}).
\end{align*}
Hence, using \eqref{eq: estimates on M}, \eqref{eq: estimates on the r outgoing solutions}, \eqref{eq: estimates on the psi outgoing solutions}, and \eqref{eq: estimates on the dirichlet solutions}, for $-1\leq z^{\prime}\leq z$ and $0\leq j\leq2$, we have \begin{equation*}
    \left\vert\frac{d^{j}}{dz^{j}}\left(\rho(z)\widetilde{R}_{\sigma}(z,z^{\prime})\begin{bmatrix}
        \partial_{z}^{2}F_{1}\\\partial_{z}^{2}F_{2}
    \end{bmatrix}\right)\right\vert\lesssim_{k}\left(\left\Vert F_{1}\right\Vert_{\mathcal{C}_{N}^{\widetilde{\beta}}}+\left\Vert F_{2}\right\Vert_{\mathcal{C}_{N}^{\beta}}\right)\frac{\vert\sigma\vert^{j}}{\vert\sigma+1\vert}(-z^{\prime})^{\frac{\sigma}{q_{k}}+\beta-1},\quad 0\leq j\leq 2.
\end{equation*}
For $R_{\sigma}$ when $-1\leq z^{\prime}\leq z$, similarly we have \begin{align*}
    \left\vert\frac{d^{j}}{dz^{j}}\left(\rho(z)R_{\sigma}(z,z^{\prime})\begin{bmatrix}
        \partial_{z}F_{1}\\\partial_{z}F_{2}
    \end{bmatrix}\right)\right\vert\lesssim_{k} \left(\left\Vert F_{1}\right\Vert_{\mathcal{C}_{N}^{\widetilde{\beta}}}+\left\Vert F_{2}\right\Vert_{\mathcal{C}_{N}^{\beta}}\right)\frac{\vert\sigma\vert^{j}}{\vert\sigma+1\vert}(-z^{\prime})^{\frac{\sigma}{q_{k}}+\beta-1},\quad 0\leq j\leq 2.
\end{align*}
For $z\leq z^{\prime}$, we have \begin{align*}
   \frac{d^{j}}{dz^{j}}\left(\rho(z)\widetilde R_{\sigma}(z,z^{\prime})\begin{bmatrix}
       \partial_{z}^{2}F_{1}\\\partial_{z}^{2}F_{2}
   \end{bmatrix}\right)\lesssim_{k}\left(\left\Vert F_{1}\right\Vert_{\mathcal{C}_{N}^{\widetilde{\beta}}}+\left\Vert F_{2}\right\Vert_{\mathcal{C}_{N}^{\beta}}\right)\frac{\vert\sigma\vert^{j}}{\vert\sigma+1\vert}(-z^{\prime})^{\frac{\sigma}{q_{k}}+\beta-1},\quad 0\leq j\leq 2.
\end{align*}
Hence, for $p_{k}<\beta<\frac{3}{2}$ and $\beta+\frac{1}{4}<\widetilde{\beta}<2$, we have the resolvent estimate \eqref{eq: resolvent estimate}.
\end{proof}

\section{Leading expansion for the linearized system in the near-axis region}
\label{leading order expansion}
\subsection{Setup of the problem}
In this section, we establish decay results for the inhomogeneous linearized equations: \begin{equation}
    \mathcal{P}(r_{p},\Psi_{p},m_{p}) = \mathcal{I}(s,z),\quad s\geq 0,\ -1\leq z\leq0,\label{eq: inhomogeneous problem}
\end{equation}
with the initial and boundary conditions:\begin{equation}
\label{eq: trivial initial and boundary conditions}
\begin{aligned}
    &\left(r_{p},\Psi_{p},m_{p}\right)(s,-1) = (0,0,0),\quad 
    \left(r_{p},\Psi_{p},m_{p}\right)(0,z) = 0,
    \end{aligned}
\end{equation}
where $\mathcal{I}(s,z) = \left(\partial_{z}I_{1},\partial_{z}I_{2},I_{3},0\right)$ satisfies
\begin{equation}
\label{eq: assumption on the inhomogeneous terms}
\begin{aligned}
    &\left(I_{1},I_{2},I_{3}\right)(s,\cdot)\in  C_{s}^{3}\left(\mathbb{R}_{+}\right)\times C_{s}^{3}\left(\mathbb{R}_{+}\right)\times C_{s}^{3}\left(\mathbb{R}_{+}\right)\\&
    \textup{supp}\ (I_{1},I_{2},I_{3})(s,\cdot)\subset [1,\infty)\times[1,\infty)\times[1,\infty),\\&
   \sum_{j = 0}^{3} \sup_{s>0}e^{\frac{3}{2}s}\left(\Vert \partial_{s}^{j}I_{1}\Vert_{\mathcal{C}_{N}^{\widetilde{\beta}}}+\Vert \partial_{s}^{j}I_{2}\Vert_{\mathcal{C}_{N}^{\beta}}+\Vert \partial_{s}^{j}I_{3}\Vert_{\mathcal{C}_{N}^{p_{k}}}\right)\leq C,
\end{aligned}
\end{equation}
for some $p_{k}<\beta<\frac{3}{2}$, $\beta+\frac{1}{4}<\widetilde{\beta}<2$, and $C>0$. For the linearized operator $\mathcal{P}$, we further assume $\mathcal{P}^{(1)}$ denotes the wave equation for $r_{p}$; $\mathcal{P}^{(2)}$ denotes the wave equation for $\Psi_{p}$; $\mathcal{P}^{(3)}$ denotes the $\partial_{s}$-transport equation for $m_{p}$; and $\mathcal{P}^{(4)}$ denotes the $\partial_{z}$-transport equation for $m_{p}$.

Moreover, we assume that the solution $(r_{p},\Psi_{p},m_{p})$ to \eqref{eq: inhomogeneous problem} has the following decay rate under the self-similar coordinates:\begin{equation}
    \left\vert\xi\right\vert(s,z)\lesssim e^{-2\eta s},\quad \xi\in\{r_{p},\Psi_{p},m_{p}\}.\label{eq: decay rate assumption} 
\end{equation}
where $\eta$ is defined in Section \ref{sec: scattering theory for the linearized system}.
\begin{remark}
The motivations for studying the inhomogeneous problem \eqref{eq: inhomogeneous problem} are as follows. First, when studying the linearized Einstein-scalar field equations with non-trivial initial data, to facilitate the Fourier--Laplace transform, we will truncate the solution to be supported away from $\{s = 0\}$. This truncation will produce some inhomogeneous terms. One should think of the truncation function as only a function of $s$. Hence, the truncation procedure will only produce some inhomogeneous terms for $(\mathcal{P}^{(1)},\mathcal{P}^{(2)},\mathcal{P}^{(3)})$. Second, in the final study of the nonlinear stability, we will write the Einstein-scalar field equations as \begin{equation*}
    \text{Linearized part} = \textup{Nonlinearity}
\end{equation*}
and use the fixed point argument to prove some decay rates for the solutions. In the fixed point argument, we will treat the nonlinearity as some known functions, which motivates the study of the inhomogeneous problem \eqref{eq: inhomogeneous problem}.
\end{remark}

\subsection{Freedom of the inhomogeneous terms}
\label{sec: freedom of the inhomogeneous terms}
However, since the inhomogeneous $\mathcal{P} = \mathcal{I}$ originates from the nonlinear Einstein-scalar field equations, to have well-posedness, we lose some freedom to choose the inhomogeneous functions $\mathcal{I} = (\partial_{z}I_{1},\partial_{z}I_{2},I_{3},0)$. In this section, we discuss one formulation for the inhomogeneous terms.

Due to the $v$-transport equation \eqref{eq:linearized dvm equation} (equivalently \eqref{eq:v transport eq for m under self-similar coordinates}), integrating from the center in the $v$-direction, we can write $m$ as a function of $r$ and $\Psi$. Formally, we have that \begin{equation*}
    m = m[r,\Psi],\quad \mu = \mu[r,\Psi].
\end{equation*}
Hence, schematically we can write the Einstein-scalar field equations as \begin{align*}
    &\partial_{u}\partial_{v}r = \frac{\mu[r,\Psi]}{1-\mu[r,\Psi]}\frac{\partial_{u}r\partial_{v}r}{r},\\&
    \partial_{u}\partial_{v}\Psi+\frac{k}{(-u)}\partial_{v}r = \frac{\mu[r,\Psi]}{1-\mu[r,\Psi]}\frac{\partial_{u}r\partial_{v}r}{r^{2}}\Psi.
\end{align*}
Hence, we can only choose the inhomogeneous terms $I_{1}$ and $I_{2}$ freely and, depending on the exact formulation of the problem, $I_{3}$ will be determined accordingly once we fix $(I_{1},I_{2})$. For now, we use the word ``compatible'' to refer to any reasonable inhomogeneous terms.

\subsection{Leading order expansion of the inhomogeneous problem}
Since $r_{p}$, $\Psi_{p}$, and $m_{p}$ have trivial initial values, for each fixed $z$, we can view $r_{p}$, $\Psi_{p}$, and $m_{p}$ as real-valued functions defined on $s\in\mathbb{R}$ with $(r_{p},\Psi_{p},m_{p}) = (0,0,0)$ for $s\leq 0$. Taking the Fourier--Laplace transform of $(r_{p},\Psi_{p},m_{p})$ \begin{equation*}
    \hat{r}_{p}(\sigma,z): = \int_{0}^{\infty}e^{-\sigma s}r_{p}(s,z)ds,\quad \hat{\Psi}_{p}(\sigma,z) = \int_{0}^{\infty}e^{-\sigma s}\Psi_{p}(s,z)ds,\quad \hat{m}_{p}(\sigma,z) = \int_{0}^{\infty}e^{-\sigma s}m_{p}(s,z)ds,
\end{equation*}
we have that $(\hat{r}_{p},\hat{\Psi}_{p},\hat{m}_{p})$ is well-defined for $\Re\sigma>-2\eta$ by the assumption \eqref{eq: decay rate assumption}. Then the equation \eqref{eq: inhomogeneous problem} is reduced to \begin{equation*}
    \mathcal{L}(\hat{r}_{p},\hat{\Psi}_{p}) = F(\sigma,z),
\end{equation*}
where $F(\sigma,z)$ depends on $\hat{I}_{1}$, $\hat{I}_{2}$, $\hat{I}_{3}$, and the background spacetime. More specifically, we have \begin{proposition}
\label{prop: properties on the inhomogeneous equations}
Taking the Fourier--Laplace transform of $(r_{p},\Psi_{p},m_{p})$, the inhomogeneous term $F(\sigma,z)$ in the reduced equation $\mathcal{L}(\hat{r}_{p},\hat{\Psi}_{p}) = F(\sigma,z)$ takes the form of \begin{align}
F(\sigma,z)& = \begin{bmatrix}
    \partial_{z}F_{1}\\\partial_{z}F_{2}
\end{bmatrix},\\
    \partial_{z}F_{1}(\sigma,z) &= \partial_{z}\hat{I}_{1}(\sigma,z)+\frac{2}{(1-\mu_{k})^{2}}\frac{(\partial_{s}+q_{k}z\partial_{z})r_{k}\partial_{z}r_{k}}{r_{k}^{2}}\frac{\hat{I}_{3}(\sigma,z)}{\sigma+\left(\frac{r_{k}}{\partial_{s}r_{k}+q_{k}z\partial_{z}r_{k}}((\partial_{s}+q_{k}z\partial_{z})\phi_{k})^{2}-q_{k}z\frac{r_{k}}{\partial_{z}r_{k}}(\partial_{z}\phi_{k})^{2}\right)},\label{eq: expression of F1}\\
   \partial_{z} F_{2}(\sigma,z)& = \partial_{z}\hat{I}_{2}(\sigma,z)+\frac{2\Psi_{k}}{r_{k}^{3}}\frac{(\partial_{s}+q_{k}z\partial_{z})r_{k}\partial_{z}r_{k}}{(1-\mu_{k})^{2}}\frac{\hat{I}_{3}(\sigma,z)}{\sigma+\left(\frac{r_{k}}{\partial_{s}r_{k}+q_{k}z\partial_{z}r_{k}}((\partial_{s}+q_{k}z\partial_{z})\phi_{k})^{2}-q_{k}z\frac{r_{k}}{\partial_{z}r_{k}}(\partial_{z}\phi_{k})^{2}\right)}\label{eq: expression of F2},
\end{align}
where $\left(\hat{I}_{1},\hat{I}_{2},\hat{I}_{3}\right)$ satisfies \begin{equation*}
    \hat{I}_{1}(\sigma,\cdot)\in\mathcal{C}_{N}^{\widetilde{\beta}},\quad \hat{I}_{2}(\sigma,\cdot)\in\mathcal{C}_{N}^{\beta},\quad \hat{I}_{3}(\sigma,\cdot)\in\mathcal{C}_{N}^{p_{k}},
\end{equation*}
and is well-defined for $\Re\sigma>-\frac{3}{2}$, and $(F_{1},F_{2})$ satisfies\begin{equation*}
    F_{1}(\sigma,\cdot)\in\mathcal{C}_{N}^{\widetilde{\beta}},\quad F_{2}(\sigma,\cdot)\in\mathcal{C}_{N}^{\beta}.
\end{equation*}
\end{proposition}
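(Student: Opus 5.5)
We take the Fourier--Laplace transform in $s$ of each component of $\mathcal{P}(r_{p},\Psi_{p},m_{p})=\mathcal{I}$. Because of the trivial initial data \eqref{eq: trivial initial and boundary conditions} and the decay assumption \eqref{eq: decay rate assumption}, we have $\widehat{\partial_{s}\xi}=\sigma\hat{\xi}$ for $\Re\sigma>-2\eta$, with no boundary terms. Since $(I_{1},I_{2},I_{3})$ are supported in $s\geq1$ and decay like $e^{-\frac{3}{2}s}$ together with three $s$-derivatives, their transforms $\hat{I}_{j}(\sigma,\cdot)$ are holomorphic for $\Re\sigma>-\frac32$. The spatial regularity carries over: $\|\hat I_{j}(\sigma)\|\le\int e^{-\Re\sigma s}\|I_j(s)\|ds<\infty$ in $\mathcal{C}_{N}^{\widetilde{\beta}}$, $\mathcal{C}_{N}^{\beta}$ and $\mathcal{C}_{N}^{p_{k}}$ respectively, because these are Banach spaces and the Bochner integral converges. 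Integrating by parts three times in $s$ gives the bound $|\sigma|^{-3}$ for large $|\Im\sigma|$, though that is not needed here.

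Next we transform $\mathcal{P}^{(3)}$, the $\partial_{s}$-transport equation \eqref{eq:u transport eq for m under self-similar coordinates}. As in \eqref{scattering equation for mp}, it becomes algebraic in $\hat m_{p}$:
\[
(\sigma-V(z))\hat{m}_{p}=\mathcal{M}_{\sigma}[\hat r_{p},\hat\Psi_{p}]+\hat{I}_{3},
\]
where $\mathcal{M}_\sigma$ is the right-hand side of \eqref{scattering equation for mp}. Here the $q_{k}z\partial_{z}m_{p}$ term has been removed using $\mathcal{P}^{(4)}$, which has zero inhomogeneity. By the earlier proposition $0\le V=O(k^{2})$, so for $\Re\sigma\le-\eta$ (and also on the strip $\Re\sigma>-2\eta$ after shrinking $k$) the factor $\sigma-V$ is bounded away from zero. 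We solve for $\hat m_{p}$ and substitute into the transforms of $\mathcal{P}^{(1)}$ and $\mathcal{P}^{(2)}$, i.e. \eqref{eq:scattering equation for rp} and \eqref{eq:scattering equation for psip} with $\partial_{z}\hat I_{1}$ and $\partial_{z}\hat I_{2}$ added. The homogeneous part reproduces $\mathcal{L}(\hat r_{p},\hat\Psi_{p})$ by the definition of $\mathcal{L}$. The leftover terms are exactly the $\hat{m}_{p}$-coefficients of the $r$- and $\Psi$-equations multiplied by $\hat I_{3}/(\sigma-V)$, which is \eqref{eq: expression of F1}--\eqref{eq: expression of F2} once the $\hat{m}_{p}$ terms are moved to the left with the sign convention of \eqref{eq:scattering equation for rp}. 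The main work is bookkeeping: check that the denominator is $\sigma+(\dots)$, i.e. $\sigma-V$, consistent with \eqref{scattering equation for mp}.

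The substantive step is showing that these added terms are $z$-derivatives of functions in the stated spaces. We define
\[
F_{1}=\hat I_{1}+\int_{-1}^{z}c_{1}(z')\,\frac{\hat I_{3}(\sigma,z')}{\sigma-V(z')}\,dz',\qquad c_{1}=\frac{2}{(1-\mu_{k})^{2}}\frac{(\partial_{s}+q_{k}z\partial_{z})r_{k}\,\partial_{z}r_{k}}{r_{k}^{2}},
\]
and $F_{2}$ analogously with $c_{2}=\Psi_{k}c_{1}/r_{k}$. By Proposition \ref{prop: estimate on the exact k self similar spacetime}, $c_{1}$ and $c_{2}$ are $s$-independent functions of $z$, though $c_{2}$ also carries a factor $s$ through $\Psi_{k}$ that must be checked to cancel. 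Their singular behaviour near the axis is like $(1+z)^{-2}$ times $\mu_k\sim r^2$, so they are bounded. Near $z=0$ they lie in $\mathcal{C}_{N}^{p_{k}}$-type classes, with $j$-th derivatives $O(|z|^{p_{k}-j})$ for $j\ge2$. $V$ has the same structure. Hence $g=c_{i}\hat I_{3}/(\sigma-V)$ satisfies $|\partial_z^{j}g|\lesssim|z|^{p_{k}-j}$ for $j\ge2$, with $g$ and $g'$ bounded. Integrating once gains a power: $\partial_z^{j}\int g=\partial_z^{j-1}g=O(|z|^{p_{k}+1-j})$. This is $\mathcal{C}^{p_{k}+1}_{N}$ regularity, which embeds into $\mathcal{C}_{N}^{\beta}$ and $\mathcal{C}_{N}^{\widetilde\beta}$ for $\beta,\widetilde\beta<2\le p_k+1$; the weights $|z|^{j-\beta}$ are smaller than $|z|^{j-p_k-1}$.

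The expected obstacle is verifying this regularity for the background coefficients at $z=0$, especially the $\phi_{k}$-dependent pieces in $V$. Those involve $(\partial_{z}\phi_{k})^{2}$, which is only $C^{0,p_{k}k^{2}}$, but always with a factor $z$, and $z(\partial_z\phi_k)^2$ has derivatives of order $|z|^{p_k-j}$ by Proposition \ref{prop: estimate on the exact k self similar spacetime}(2). We will also check the $\Psi_{k}$ factor: $\Psi_{k}=r_{k}\mr\phi$ up to $s$-dependent terms that cancel in the renormalization. The Hölder-$\delta$ parts of the norms follow from the same product rules.
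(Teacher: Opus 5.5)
\textbf{Where the argument fails: the axis $z=-1$.} Your reduction is the computation the paper has in mind: transform, eliminate $q_kz\partial_z\hat m_p$ using $\mathcal{P}^{(4)}$ (which has no source), solve $(\sigma-V)\hat m_p=\mathcal{M}_\sigma+\hat I_3$, and substitute. Your near-horizon step is also correct, and more explicit than the paper, which only transfers the bounds from $I_j$ to $\hat I_j$: integrating $g=c_i\hat I_3/(\sigma-V)$ once gains a power and lands in $\mathcal{C}_N^{p_k+1}\subset\mathcal{C}_N^{\widetilde\beta},\mathcal{C}_N^{\beta}$.

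The failure is your claim that $c_1,c_2$ are bounded at the axis. The $\hat m_p$-coefficient in \eqref{eq:scattering equation for rp} comes from differentiating $\frac{\mu}{1-\mu}\frac{\nu\lambda}{r}$ with respect to $m$. It therefore carries no factor $\mu_k$:
$c_1=\frac{2}{(1-\mu_k)^2}\frac{\mr\nu\,\mr r'}{\mr r^{2}}\approx-\frac{2q_k}{(1+z)^{2}}$ as $z\to-1$. This is the $K(z)\,m_\rho/\mr r^{2}$ term in \eqref{eq:wave equation for r rho}, and $c_2=\mr\phi\,c_1$.

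The hypothesis $\hat I_3(\sigma,\cdot)\in\mathcal{C}_N^{p_k}$ imposes nothing at $z=-1$, so $g$ can behave like $(1+z)^{-2}$ there. Then your antiderivative $\int_{-1}^z g$ diverges. Changing the base point does not help, because $\partial_zF_i=\partial_z\hat I_i+g$ would still be unbounded. But $\mathcal{C}_N^{\widetilde\beta}$ and $\mathcal{C}_N^{\beta}$ require $\frac{dF}{dz}\in L^\infty$, and, since the weights are harmless at $z=-1$, bounded derivatives up to order $N$ there.

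\textbf{What is missing, and a smaller error.} You need second-order vanishing of the source at the axis: $I_3=\mr r^{2}J_3$ with $J_3$ of class $C^{N-1}$ up to $z=-1$. Then $g=(\mr r^{2}c_i)\hat J_3/(\sigma-V)$, and $\mr r^{2}c_i$ is smooth and bounded there. This does not follow from the stated hypotheses and must be added as an assumption. It is what the paper's word ``compatible'' in Section~\ref{sec: freedom of the inhomogeneous terms} tacitly covers; the paper's own proof is silent on this point as well.

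The condition does hold for the source actually produced in Section~\ref{sec: decay in the interior region}. There $I_3$ is built from $(\partial_s\xi)m_p=O(\mr r^{3})$, together with $\mr r\cdot(\text{bounded})\cdot(\partial_s\xi)\Psi_p$ and $\mr r\cdot(\text{bounded})\cdot(\partial_s\xi)r_p$, where $\Psi_p$ and $r_p$ vanish on $\Gamma$.

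Separately, $\sigma-V$ is not bounded away from zero on the strip $\Re\sigma>-2\eta$ for small $k$. For every $k$, the values $\sigma=V(z)\in[0,Ck^{2}]$ lie in that strip; these are the fake modes. Your identities hold only off that segment, e.g.\ for $\Re\sigma\le-\eta$, which is all the inversion along $\Re\sigma=-\eta$ uses.
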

\begin{proof}
The relations \eqref{eq: expression of F1}-\eqref{eq: expression of F2} follow from a straightforward algebraic computation. By \eqref{eq: assumption on the inhomogeneous terms}, for $\Re\sigma>-\frac{3}{2}$, we have \begin{equation*}
    \left\Vert\hat{I}_{1}(\sigma,\cdot)\right\Vert_{L_{z}^{\infty}} = \left\Vert\int_{0}^{\infty}e^{-\sigma s}I_{1}(s,z)ds\right\Vert_{L_{z}^{\infty}} = \left\Vert\int_{0}^{\infty}e^{-\left(\frac{3}{2}+\sigma\right)s}e^{\frac{3}{2}s}I_{1}(s,z)ds\right\Vert_{L_{z}^{\infty}}\lesssim \sup_{s\geq 0}\left\Vert e^{\frac{3}{2}s}I_{1}(s,z)\right\Vert_{L_{z}^{\infty}}\lesssim C.
\end{equation*}
Similarly, for higher-order $\partial_{z}$-derivatives of $\hat{I}_{1}$, we can derive \begin{equation*}
    \left\Vert \hat{I}_{1}(\sigma,\cdot)\right\Vert_{\mathcal{C}_{N}^{\widetilde{\beta}}}\lesssim \sup_{s\geq0}e^{\frac{3}{2}s}\left\Vert I_{1}(s,\cdot)\right\Vert_{\mathcal{C}_{N}^{\widetilde{\beta}}}\lesssim C.
\end{equation*}
Hence, we have that $\hat{I}_{1}(\sigma,z)$ is well-defined for $\Re\sigma>-\frac{3}{2}$ and \begin{equation*}
    \hat{I}_{1}(\sigma,\cdot)\in\mathcal{C}_{N}^{\widetilde{\beta}}.
\end{equation*}
Similarly, we can prove the conclusions for $\hat{I}_{2}(\sigma,z)$ and $\hat{I}_{3}(\sigma,z)$. This concludes the proof.
\end{proof}
To facilitate the later estimates, we prove the following lemma for $F_{1}(\sigma,z)$ and $F_{2}(\sigma,z)$.
\begin{lemma}
    Assume that $F_{1}(\sigma,z)$ and $F_{2}(\sigma,z)$ are functions given in Proposition \ref{prop: properties on the inhomogeneous equations}. Then we have \begin{align}
        &\sup_{\sigma\in\mathbb{I}_{(-\beta_{*},-\eta]}}\vert\sigma\vert^{3}\left\Vert F_{1}\right\Vert_{\mathcal{C}_{N}^{\widetilde{\beta}}}\leq \sup_{s\geq0}\sum_{j = 0}^{3}e^{\frac{3}{2}s}\left(\left\Vert \partial_{s}^{j}I_{1}\right\Vert_{\mathcal{C}_{N}^{\widetilde{\beta}}}+\left\Vert\partial_{s}^{j}I_{3}\right\Vert_{\mathcal{C}_{N}^{p_{k}}}\right),\label{eq: losing derivative on F1}\\&
        \sup_{\sigma\in\mathbb{I}_{(-\beta_{*},-\eta]}}\vert\sigma\vert^{3}\left\Vert F_{2}\right\Vert_{\mathcal{C}_{N}^{\beta}}\leq \sup_{s\geq 0}\sum_{j = 0}^{3}e^{\frac{3}{2}s}\left(\left\Vert\partial_{s}^{j}I_{2}\right\Vert_{\mathcal{C}_{N}^{\beta}}+\left\Vert\partial_{s}^{j}I_{3}\right\Vert_{\mathcal{C}_{N}^{p_{k}}}\right).\label{eq: losing derivative on F2}
    \end{align}
\end{lemma}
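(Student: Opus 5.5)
The plan is to trade $s$-regularity of the inhomogeneous terms for $\sigma$-decay of their Fourier--Laplace transforms by integrating by parts in $s$. By \eqref{eq: assumption on the inhomogeneous terms}, each $I_{i}(s,\cdot)$ is $C^{3}$ in $s$ and supported in $s\geq 1$. The boundary terms at $s=0$ therefore vanish, and the terms at $s=\infty$ vanish because of the weight $e^{-\frac{3}{2}s}$, since $\Re\sigma>-\frac{3}{2}$. Integrating by parts three times gives
\begin{equation*}
\sigma^{3}\hat{I}_{i}(\sigma,z)=\int_{0}^{\infty}e^{-\sigma s}\partial_{s}^{3}I_{i}(s,z)\,ds .
\end{equation*}
Each norm in $\mathcal{C}_{N}^{\widetilde{\beta}}$, $\mathcal{C}_{N}^{\beta}$ or $\mathcal{C}_{N}^{p_{k}}$ is built from $z$-derivatives with $z$-weights, so it commutes with the $s$-integral. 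Arguing exactly as in the proof of Proposition \ref{prop: properties on the inhomogeneous equations}, we then get
\begin{equation*}
|\sigma|^{3}\|\hat{I}_{i}(\sigma,\cdot)\|\lesssim \frac{1}{\frac{3}{2}+\Re\sigma}\sup_{s}e^{\frac{3}{2}s}\|\partial_{s}^{3}I_{i}\|\lesssim\sup_{s}e^{\frac{3}{2}s}\|\partial_{s}^{3}I_{i}\|,
\end{equation*}
uniformly for $\sigma\in\mathbb{I}_{(-\beta_{*},-\eta]}$. This uses $\beta_{*}\leq 1+\eta<\frac{3}{2}$, so $\frac{3}{2}+\Re\sigma$ is bounded below. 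For small $|\sigma|$ we combine this with the $j=0$ bound, which is why the right-hand side sums over $j=0,\dots,3$.

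Next, $F_{1}$ and $F_{2}$ are recovered from \eqref{eq: expression of F1}--\eqref{eq: expression of F2} by integrating in $z$ from the center. Concretely, $F_{1}=\hat{I}_{1}+\int_{-1}^{z}\mathcal{K}_{1}(\sigma,z')\hat{I}_{3}(\sigma,z')\,dz'$, and $F_{2}$ has the same form with a kernel $\mathcal{K}_{2}$. Here
\begin{equation*}
\mathcal{K}_{1}=\frac{2}{(1-\mu_{k})^{2}}\frac{(\partial_{s}+q_{k}z\partial_{z})r_{k}\partial_{z}r_{k}}{r_{k}^{2}}\,\frac{1}{\sigma-V(z)},
\end{equation*}
and $\mathcal{K}_{2}$ carries the extra factor $\Psi_{k}/r_{k}$. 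The factor $1/(\sigma-V)$ is the one singled out in the Proposition defining $V$: since $0\leq V<O(k^{2})$ and $\Re\sigma\leq-\eta$, we have $|\sigma-V|\geq\eta$, and also $|\sigma-V|\gtrsim|\sigma|$. So this factor is holomorphic and bounded by $C_{\eta}$, and even decays in $|\sigma|$. The $|\sigma|^{3}$ gain from the previous step then passes through the quotient unchanged.

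It remains to show that the map $\hat{I}_{3}\mapsto\int_{-1}^{z}\mathcal{K}_{i}\hat{I}_{3}$ sends $\mathcal{C}_{N}^{p_{k}}$ boundedly into $\mathcal{C}_{N}^{\widetilde{\beta}}$ for $i=1$, and into $\mathcal{C}_{N}^{\beta}$ for $i=2$. I expect this step to be the main obstacle. Two features must be handled.

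\textbf{Behavior at the center $z=-1$.} There $r_{k}^{-2}\sim(1+z)^{-2}$, but $\mu_{k}\sim\mr{r}^{2}k^{2}$ (Proposition \ref{prop: estimate on the exact k self similar spacetime}). I would check that the relevant combination is bounded. If it is not, I would use the Hardy-type argument from Section \ref{sec: first-order energy estimate for mp}, exploiting that $\hat{I}_{3}$ should vanish to the appropriate order at $\Gamma$ by compatibility.

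\textbf{Regularity at $z=0$.} The coefficients are at least $C^{1}$ with the weighted higher-derivative bounds of Proposition \ref{prop: estimate on the exact k self similar spacetime}(2). Integrating once in $z$ raises the weighted regularity index by one. Thus $\hat{I}_{3}\in\mathcal{C}_{N}^{p_{k}}$ gives an antiderivative in $\mathcal{C}_{N}^{p_{k}+1}$, and $p_{k}+1\geq 2>\widetilde{\beta}>\beta$. The $j$-th derivative bound follows from Leibniz' rule applied to $\mathcal{K}_{i}\hat{I}_{3}$, using $|z|^{j-1-p_{k}k^{2}}$-type bounds on the background derivatives together with the weights in the definition of $\mathcal{C}_{N}^{\alpha}$.

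Combining these estimates with the $\hat{I}_{1}$ and $\hat{I}_{2}$ bounds from the first step yields \eqref{eq: losing derivative on F1}--\eqref{eq: losing derivative on F2}, up to an implicit constant depending on $k$ and $\eta$.
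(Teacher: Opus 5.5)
Your core mechanism is the paper's. Its proof is the triangle inequality \eqref{eq: triangle inequality} applied to \eqref{eq: expression of F1}, followed by three integrations by parts in $s$. Three parts of your write-up are correct and more careful than the paper:
the boundary terms;
the factor $(\sigma-V)^{-1}$, which is bounded by $\eta^{-1}$ since $V\geq0$ and $\Re\sigma\leq-\eta$;
the weighted regularity at $z=0$, where the antiderivative of $\mathcal{K}_{i}\hat{I}_{3}$ lands in $\mathcal{C}_{N}^{p_{k}+1}\subset\mathcal{C}_{N}^{\widetilde{\beta}}$.

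The step you flagged at the center does not work as you first hope. The coefficient of $\hat{m}_{p}$ in \eqref{eq:scattering equation for rp} carries $(1-\mu_{k})^{-2}$, not a factor $\mu_{k}$. In self-similar variables $\mathcal{K}_{1}=\frac{2\mr{\nu}\,\mr{r}'}{(1-\mu_{k})^{2}\mr{r}^{2}}\cdot\frac{1}{\sigma-V}$. Since $\mr{r}(-1)=0$ forces $\mr{\nu}(-1)=-q_{k}\mr{r}'(-1)\neq0$, you get $\mathcal{K}_{1}\approx-\frac{2q_{k}}{(\sigma-V)(1+z)^{2}}$ near $\Gamma$. $\mathcal{K}_{2}$ has the same singularity times $\mr{\phi}$, which need not vanish at $z=-1$. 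So the smallness of $\mu_{k}$ is irrelevant, and for a general $\hat{I}_{3}\in\mathcal{C}_{N}^{p_{k}}$ the integral $\int_{-1}^{z}\mathcal{K}_{1}\hat{I}_{3}$ diverges.

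Your fallback is therefore mandatory, and it has to be stated as a hypothesis: $\hat{I}_{3}=O((1+z)^{2})$ at $\Gamma$. This is not in the lemma. It does hold for the $I_{3}$ produced by the cutoff in Section~\ref{sec: decay in the interior region}: each term there contains $m_{p}=O(\mr{r}^{3})$ or one of $r_{k}\Psi_{p}$, $r_{k}\mr{\phi}r_{p}$, $r_{k}^{2}r_{p}$, all of which are $O(\mr{r}^{2})$.

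The right tool is then the $L^{\infty}$ Taylor formula $\hat{I}_{3}/(1+z)^{2}=\int_{0}^{1}(1-t)\hat{I}_{3}''(-1+t(1+z))\,dt$, not the $L^{2}$ Hardy argument of Section~\ref{sec: first-order energy estimate for mp}. This formula bounds $\partial_{z}^{m}(\mathcal{K}_{1}\hat{I}_{3})$ near $\Gamma$ only by $\partial_{z}^{m+2}\hat{I}_{3}$. So the top-order term $\partial_{z}^{N-1}(\mathcal{K}_{1}\hat{I}_{3})$ in $\partial_{z}^{N}F_{1}$ needs one more derivative than $\mathcal{C}_{N}^{p_{k}}$ supplies; take $\hat{I}_{3}=(1+z)^{N+\frac{1}{2}}$ near $\Gamma$ to see the loss is real. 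You therefore get the estimate with $\mathcal{C}_{N-1}^{\widetilde{\beta}}$ on the left, which is harmless for the application since $N\geq5$.

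The paper's proof has the same omission. In \eqref{eq: triangle inequality} it bounds the $\hat{I}_{3}$ contribution by $|\hat{I}_{3}|$, silently dropping the factor $\mr{r}^{-2}$.
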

\begin{proof}
    We only prove \eqref{eq: losing derivative on F1}, and the proof of \eqref{eq: losing derivative on F2} will follow similarly. Using \eqref{eq: expression of F1} and the properties on the background spacetime,
    we have \begin{equation}
        \vert\sigma\vert^{3}\left\vert \partial_{z}F_{1}(\sigma,z)\right\vert\lesssim \vert\sigma\vert^{3}\left\vert\partial_{z}\hat{I}_{1}(\sigma,z)\right\vert+\vert\sigma\vert^{3}\vert \hat{I}_{3}(\sigma,z)\vert.\label{eq: triangle inequality}
    \end{equation}
    For the term with $\partial_{z}\hat{I}_{1}$ on the right-hand side of \eqref{eq: triangle inequality}, applying the integration by parts, we have \begin{equation}
    \begin{aligned}
    \vert\sigma\vert^{3}\vert\partial_{z}\hat{I}_{1}(\sigma,z)\vert\lesssim \vert\sigma\vert^{3}\left\vert\int_{0}^{\infty}e^{-\sigma s}\partial_{z}I_{1}(s,z)ds\right\vert\lesssim \left\vert\int_{0}^{\infty}e^{-\sigma s}\partial_{s}^{3}\partial_{z}I_{1}(s,z)ds\right\vert\lesssim \sup_{s\geq 0}e^{\frac{3}{2}s}\left\Vert \partial_{s}^{3}\partial_{z}I_{1}(s,\cdot)\right\Vert_{L_{z}^{\infty}}.
    \end{aligned}
    \label{eq: estimate on I1}
    \end{equation}
    Similarly, for the second term on the right-hand side of \eqref{eq: triangle inequality}, we have \begin{equation}
        \begin{aligned}
            \vert\sigma\vert^{3}\vert\hat{I}_{3}(\sigma,z)\vert\lesssim \sup_{s\geq0}e^{\frac{3}{2}s}\left\Vert\partial_{s}^{3}I_{3}(s,\cdot)\right\Vert_{L_{z}^{\infty}}.
        \end{aligned}
        \label{eq: estimate on I3}
    \end{equation}
    The higher-order estimates follow similarly. Hence, combining \eqref{eq: estimate on I1}, \eqref{eq: estimate on I3}, and the corresponding higher-order estimates, we can conclude the proof of \eqref{eq: losing derivative on F1}.
\end{proof}

Using the scattering theory established in Section \ref{sec: scattering theory for the linearized system}, we can get the following estimate for the solution to \eqref{eq: inhomogeneous problem} with trivial initial data.
\begin{proposition}[Near-axis decay for the regularity above the threshold]
\label{prop: near axis decay}
For $k$ sufficiently small, let $p_{k}<\beta<\frac{3}{2}$, $\beta+\frac{1}{4}<\widetilde{\beta}<2$, and $\eta\in(0,\frac{1}{10})$ be a fixed real number. Define $\beta_{*} = \min\{\beta q_{k},1+\eta\}$. Let $(r_{p},\Psi_{p},m_{p})$ be the solution to \eqref{eq: inhomogeneous problem} with the initial and boundary conditions given by \eqref{eq: trivial initial and boundary conditions} and the inhomogeneous term $\mathcal{I}$ satisfying \eqref{eq: assumption on the inhomogeneous terms}. Then there exist constants $\epsilon_{*}$ sufficiently small, $c_{\infty}$ depending on $\mathcal{I}$, and $\beta_{*}^{\prime}: = \beta_{*}-\epsilon_{*}$, such that for any $0\leq i+j\leq 2$, $-1\leq z\leq \frac{1}{4}$, and $N\geq 5$, we have \begin{align}
   &\sum_{0\leq i+j\leq 2}\left\vert \partial_{s}^{i}\partial_{z}^{j}r_{p}\right\vert\lesssim_{k,\epsilon_{*}} e^{-\beta_{*}^{\prime}s}\sum_{l = 0}^{3}\sup_{s\geq0}e^{\frac{3}{2}s}\left(\left\Vert\partial_{s}^{l}I_{1}\right\Vert_{\mathcal{C}_{N}^{\widetilde{\beta}}}+\left\Vert\partial_{s}^{l}I_{2}\right\Vert_{\mathcal{C}_{N}^{\beta}}+\left\Vert\partial_{s}^{l}I_{3}\right\Vert_{\mathcal{C}_{N}^{p_{k}}}\right),\label{eq: near-axis decay estimate for rp}\\&
   \sum_{0\leq i+j\leq 2}\left\vert \partial_{s}^{i}\partial_{z}^{j}\left(\Psi_{p}-c_{\infty}r_{k}\right)\right\vert\lesssim_{k,\epsilon_{*}} e^{-\beta_{*}^{\prime}s}\sum_{l = 0}^{3}\sup_{s\geq0}e^{\frac{3}{2}s}\left(\left\Vert\partial_{s}^{l}I_{1}\right\Vert_{\mathcal{C}_{N}^{\widetilde{\beta}}}+\left\Vert\partial_{s}^{l}I_{2}\right\Vert_{\mathcal{C}_{N}^{\beta}}+\left\Vert\partial_{s}^{l}I_{3}\right\Vert_{\mathcal{C}_{N}^{p_{k}}}\right),\label{eq: near-axis decay estimate for psip}\\&
   \sum_{0\leq i+j\leq 2}\left\vert\partial_{s}^{i}\partial_{z}^{j}m_{p}\right\vert\lesssim_{k,\epsilon_{*}}e^{-\beta_{*}^{\prime}s}\sum_{l = 0}^{3}\sup_{s\geq0}e^{\frac{3}{2}s}\left(\left\Vert\partial_{s}^{l}I_{1}\right\Vert_{\mathcal{C}_{N}^{\widetilde{\beta}}}+\left\Vert\partial_{s}^{l}I_{2}\right\Vert_{\mathcal{C}_{N}^{\beta}}+\left\Vert \partial_{s}^{l}I_{3}\right\Vert_{\mathcal{C}_{N}^{p_{k}}}\right).\label{eq: near-axis decay estimate for mp}
\end{align}
Moreover, we have the following estimate on $c_{\infty}$:
\begin{equation}
    \vert c_{\infty}\vert\lesssim_{k} \sup_{s\geq0}e^{\frac{3}{2}s}\left(\left\Vert I_{1}(s,\cdot)\right\Vert_{\mathcal{C}_{N}^{\widetilde{\beta}}}+\left\Vert I_{2}(s,\cdot)\right\Vert_{\mathcal{C}_{N}^{\beta}}+\left\Vert I_{3}(s,\cdot)\right\Vert_{\mathcal{C}_{N}^{p_{k}}}\right).
\end{equation}
\end{proposition}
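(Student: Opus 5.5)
The plan is to represent the solution by the inverse Fourier--Laplace transform, shift the contour from a vertical line where the transform is a priori well-defined into the left half-plane, and pick up the single pole at $\sigma=-1$.

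By the decay assumption \eqref{eq: decay rate assumption}, $(\hat r_p,\hat\Psi_p,\hat m_p)$ is holomorphic for $\Re\sigma>-2\eta$. On the line $\Re\sigma=-\eta$ it solves $\mathcal{L}(\hat r_p,\hat\Psi_p)=(\partial_zF_1,\partial_zF_2)$, where $F_1,F_2$ are given by Proposition~\ref{prop: properties on the inhomogeneous equations}. The line $\Re\sigma=-\eta$ lies to the left of the fake modes coming from the $\hat m_p$ algebraic equation, since $0\le V<O(k^2)$.

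\textbf{Step 1: identify the transform with the resolvent.} For $\Re\sigma=-\eta$, $\hat r_p,\hat\Psi_p$ vanish at $z=-1$, because the boundary condition is preserved under the transform. Since $(r_p,\Psi_p)(s,\cdot)$ has the regularity inherited from the inhomogeneous terms, the transform lies in $C^{\alpha}_z$ up to $z=0$ in the sense of Lemma~\ref{lemma: regularity perspective}, and is therefore outgoing. By uniqueness (the homogeneous Dirichlet--outgoing problem has only the trivial solution away from the zeros of $\mathcal W(\sigma,-1)$), we get $\rho\,(\hat r_p,\hat\Psi_p)=\rho R(\sigma)(\partial_zF_1,\partial_zF_2)$.

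\textbf{Step 2: contour deformation.} By Proposition~\ref{prop: high regularity resolvent estimate}, this extends meromorphically to $\mathbb{I}_{[-\beta_*,-\eta]}$ with a single simple pole at $\sigma=-1$ (Proposition~\ref{prop: only one zero for the wronski}). The lemma \eqref{eq: losing derivative on F1}--\eqref{eq: losing derivative on F2} gives $|\sigma|^{3}$ decay of $\|F_i\|$. Combined with \eqref{eq: resolvent estimate}, the integrand's $z$-derivatives up to order $2$ are $O(|\sigma|^{-2})$, hence integrable along vertical lines; $\partial_s$ derivatives cost powers of $\sigma$, which are absorbed in the same way. We write
\[
\rho(r_p,\Psi_p)(s,z)=\frac{1}{2\pi i}\int_{\Re\sigma=-\eta}e^{\sigma s}\rho R(\sigma)(\partial_zF_1,\partial_zF_2)\,d\sigma
\]
and shift the contour to $\Re\sigma=-\beta_*'$, using the decay of the horizontal segments $\Gamma^\pm_{2,R}$ as $R\to\infty$. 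The shifted integral is bounded by $e^{-\beta_*'s}\cdot\frac{1}{\epsilon_*}\times$ the norms, where the factor $1/\epsilon_*$ comes from $1/(\beta_*+\Re\sigma)$. The residue at $\sigma=-1$ is $e^{-s}\mathcal{R}(-1,z)$, which by \eqref{eq: expansion of R}--\eqref{eq: expansion of R2} equals $(0,\,c_\infty e^{-s}\mathring r(z))=(0,c_\infty r_k)$, with $c_\infty=c_*\int(\widetilde G_{-1})_{23}\partial_z^2F_1+(\widetilde G_{-1})_{24}\partial_z^2F_2$ evaluated at $\sigma=-1$. The bound on $c_\infty$ follows from integrability of $(-z)^{\beta-1-1/q_k}$ for $\beta>p_k$.

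\textbf{Step 3: recover $m_p$.} Use the algebraic equation \eqref{scattering equation for mp}: its coefficient $\sigma+\cdots$ is nonvanishing for $\Re\sigma\le-\eta$, and the translation mode carries $m_p=0$, so $\hat m_p$ has no pole. The same contour argument then gives the bound for $m_p$.

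The main obstacle I expect is Step 1, specifically justifying the identification of the transform with the resolvent on $\Re\sigma=-\eta$. This requires checking that the transformed solution is genuinely outgoing, i.e.\ has no ingoing component at $z=0$, using only the a priori regularity; I would use the second-order weighted energy bounds of Section~\ref{sec:multiplier_estimates} to control $\partial_z^2$ with the weight $(-z)^{\gamma}$, which excludes the $(-z)^{-\sigma/q_k}$ branches.
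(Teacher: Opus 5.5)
Your Steps 1 and 2 are the paper's argument. You transform on $\Re\sigma=-\eta$, identify the transform with $\rho R(\sigma)(\partial_zF_1,\partial_zF_2)$ through the Dirichlet and outgoing conditions, shift the contour to $\Re\sigma=-\beta_*'$ using the $|\sigma|^{-3}$ gain on $F$ together with \eqref{eq: resolvent estimate}, and pick up the residue $c_\infty(0,r_k)$ at $\sigma=-1$.

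For the outgoing condition you do not need the weighted second-order energy. On $\Re\sigma=-\eta$ both ingoing exponents have real part in $(0,1)$ (for $k^2<\eta$), so Lemma~\ref{lemma: regularity perspective} applies with $\alpha=1$, and a bounded $\partial_z(\hat r_p,\hat\Psi_p)$ already excludes the ingoing branches. That boundedness follows from the $L^\infty$ bound on $\partial_z(r_p,\Psi_p)$ in Theorem~\ref{thm:a priori energy estimate}, whose rate beats $e^{-\eta s}$. This is essentially the $C^1_z$ argument the paper invokes.

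The genuine difference is your Step 3. The paper recovers $m_p$ in physical space by integrating the source-free $\partial_z$-transport equation \eqref{eq:v transport eq for m under self-similar coordinates} from the axis, feeding in the bounds already obtained for $r_p$ and $\phi_p$. There the translation mode drops out structurally, because $\Psi_p$ enters only through $\partial_z(\Psi_p/r_k)$, i.e.\ through $\partial_z\phi_p$.

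You instead solve the transformed $\partial_s$-transport equation \eqref{scattering equation for mp} as $\hat m_p=(\sigma-V)^{-1}(\cdots)$. This is the same elimination already used to build $\mathcal{L}$ and $F$, so it needs no new input. It is valid, but the absence of a pole is not automatic; it rests on the check you indicate. The $(\hat r_p,\hat\Psi_p)$-part of the right-hand side of \eqref{scattering equation for mp} is affine in $\sigma$ and vanishes at $\sigma=-1$ on $(0,\mathring{r})$, so it cancels the simple pole of $\hat\Psi_p$. Meanwhile $(\sigma-V)^{-1}$ and $\hat I_3$ are holomorphic there.

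The two routes also distribute derivatives differently. Your formula handles $\partial_s$-derivatives of $m_p$ without loss, since the factor $(\sigma-V)^{-1}$ absorbs one power of $\sigma$. However, it needs one more $z$-derivative of the resolvent for $\partial_z^2m_p$; this is harmless away from $z=0$, since the ODE supplies it. The $\partial_z$-transport route gives $z$-derivatives of $m_p$ for free, but costs one extra derivative of $(r_p,\phi_p)$ for the pure $\partial_s^i m_p$ terms.
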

\begin{proof}
Taking the Fourier--Laplace transform of $(r_{p},\Psi_{p},m_{p})$, by \eqref{eq: decay rate assumption}, we have $\left(\hat{r}_{p},\hat{\Psi}_{p},\hat{m}_{p}\right)$ is well-defined for $\Re\sigma>-2\eta$. Then the equation \eqref{eq: inhomogeneous problem} is reduced to $\mathcal{L}(\hat{r}_{p},\hat{\Psi}_{p}) = F(\sigma,z)$ with $F = (\partial_{z}F_{1},\partial_{z}F_{2})$ satisfying the properties in Proposition \ref{prop: properties on the inhomogeneous equations}. Since $r_{p}$, $\Psi_{p}$, and $m_{p}$ vanish on the axis $\Gamma$, we have \begin{equation*}
    \hat{r}_{p}(\sigma,-1) = \hat{\Psi}_{p}(\sigma,-1) = \hat{m}_{p}(\sigma,-1) = 0.
\end{equation*}
In other words, $(\hat{r}_{p},\hat{\Psi}_{p})$ satisfies the Dirichlet boundary condition. On the other hand, for $\Re\sigma = -\eta$, since $(r_{p},\Psi_{p})$ satisfies the decay estimate in the assumption \eqref{eq: decay rate assumption}, we have that\begin{align*}
    \hat{r}_{p} = \int_{0}^{\infty}e^{-\sigma s}r_{p}(s,z)ds\in C_{z}^{2}, \quad \hat{\Psi}_{p} = \int_{0}^{\infty}e^{-\sigma s}\Psi_{p}(s,z)ds\in C_{z}^{1}\text{ for } \Re\left(\frac{-\sigma}{q_{k}}\right) = \frac{\eta}{q_{k}}<1.
\end{align*}
Hence, by Lemma \ref{lemma: regularity perspective}, we have that $(\hat{r}_{p},\hat{\Psi}_{p})$ satisfies the outgoing boundary condition. Moreover, since when $\Re\left(\frac{-\sigma}{q_{k}}\right) = \frac{\eta}{q_{k}}$, the Dirichlet solutions will never be linearly dependent on the outgoing solutions, we have that the only solution $(\hat{r}_{p},\hat{\Psi}_{p})$ to $\mathcal{L} = F$ will be given by the resolvent operator $R(\sigma)(\partial_{z}F_{1},\partial_{z}F_{2})(z)$.

Followed by the argument above, since we have shown that $(\hat{r}_{p},\hat{\Psi}_{p})$ satisfies the Dirichlet boundary condition and outgoing boundary condition, we have that \begin{equation*}
    \rho(z)\begin{bmatrix}
        \hat{r}_{p}\\\hat{\Psi}_{p}
    \end{bmatrix}(z) = \rho(z)R(\sigma)(\partial_{z}F_{1},\partial_{z}F_{2})(z).
\end{equation*}
By the a priori decay estimate established in Theorem~\ref{thm:a priori energy estimate} and the standard Fourier--Laplace inversion theorem, we can recover $(r_{p},\Psi_{p})$ by  \begin{equation*}
    \rho(z)\begin{bmatrix}
        r_{p}\\\Psi_{p}
    \end{bmatrix} (s,z)= \int_{\Re\sigma = -\eta}e^{\sigma s}\rho(z)R(\sigma)(\partial_{z}F_{1},\partial_{z}F_{2})d\sigma.
\end{equation*}
Then for $\beta\in(p_{k},\frac{3}{2})$, we have $R(\sigma)(\partial_{z}F_{1},\partial_{z}F_{2})$ is well-defined for $\sigma\in\mathbb{I}_{[-\beta_{*},-\eta]}$. Taking $\epsilon_{*}$ small enough such that $\beta_{*}-2\epsilon_{*}>1$, we define the oriented contours \begin{equation}
    \begin{cases}
    \Gamma_{\epsilon} &= -1+\epsilon_{*}e^{-i\theta},\quad \theta\in[0,2\pi),\\
    \Gamma_{1,R}^{\pm}&=-\eta\pm iR\pm it,\quad t\in[0,\infty),\\
    \Gamma_{2,R}^{\pm}& = -\eta\pm iR\pm t,\quad t\in[0,-\eta+\beta_{*}-\epsilon_{*}],\\
    \Gamma_{3,R}& = (-\beta_{*}+\epsilon_{*})+t,\quad t\in[-R,R],
    \end{cases}
\end{equation}
and the path $R_{R}$ and $\bar{P}_{R}$  \begin{align}
    P_{R}: &= -\Gamma_{1,R}^{-}\cup\Gamma_{2,R}^{-}\cup\Gamma_{3,R}\cup-\Gamma_{2,R}^{+}\cup \Gamma_{1,R}^{+},\\
    \bar{P}_{R}:& = -\Gamma_{1,R}^{-}\cup\Gamma_{1,R}^{+}.
\end{align}
The Figure \ref{fig: contour} depicts this contour.
\begin{figure}[H]
\centering
\begin{tikzpicture}[x=1.4cm,y=1.4cm,>=stealth]

    \def\R{2.2}
    \def\eta{0.6}
    \def\B{2.5}      
    \def\eps{0.2}

    \draw[->] (-4.8,0) -- (4.8,0) node[right] {$\Re \sigma$};
    \draw[->] (0,-3.4) -- (0,0.8) node[above] {$\Im \sigma$};

    \draw (-\R,0.08) -- (-\R,-0.08);
    \draw (\R,0.08) -- (\R,-0.08);
    \node[above] at (-\R,0) {$-iR$};
    \node[above] at (\R,0) {$iR$};

    \draw (0,-1+0.08) -- (0,-1-0.08);
    \draw (0,-\eta+0.08) -- (0,-\eta-0.08);
    \draw (0,-\B+0.08) -- (0,-\B-0.08);

    \node[right] at (0,-2) {$-1$};

    \draw[thick,->] (-4.4,-\eta) -- (-\R,-\eta);
    \draw[thick,->] (-\R,-\eta) -- (-\R,-\B);
    \draw[thick,->] (-\R,-\B) -- (\R,-\B);
    \draw[thick,->] (\R,-\B) -- (\R,-\eta);
    \draw[thick,->] (\R,-\eta) -- (4.4,-\eta);

    \draw[thick,->]
      ({\eps*cos(35)},{-2+\eps*sin(35)})
      arc[start angle=35,end angle=-300,radius=\eps];

    \node[above] at (-3.9,-\eta) {$-\Gamma_{1,R}^{-}$};
    \node[left]  at (-\R,-1.9) {$\Gamma_{2,R}^{-}$};
    \node[below] at (0,-\B) {$\Gamma_{3,R}$};
    \node[right] at (\R,-1.9) {$-\Gamma_{2,R}^{+}$};
    \node[above] at (3.9,-\eta) {$\Gamma_{1,R}^{+}$};
    \node[right] at (0.55,-0.72) {$\Gamma_{\epsilon}$};

\end{tikzpicture}
\caption{Oriented contours}
\label{fig: contour}
\end{figure}

Then, by the residue theorem in complex analysis, since $\rho(z)R(\sigma)(\partial_{z}F_{1},\partial_{z}F_{2})$ is holomorphic in $\sigma$ for $\sigma\in\mathbb{I}_{(-\beta_{*},-\eta]}\backslash\{-1\}$, we have\begin{equation}
    \rho(z)\begin{bmatrix}
        r_{p}\\\Psi_{p}
    \end{bmatrix}(z) = \frac{1}{2\pi}\int_{P_{R}}e^{\sigma s}\rho(z)R(\sigma)(\partial_{z}F_{1},\partial_{z}F_{2})d\sigma+\frac{1}{2\pi}\int_{\Gamma_{\epsilon}}e^{\sigma s}\rho(z)R(\sigma)(\partial_{z}F_{1},\partial_{z}F_{2})d\sigma,\label{eq: inverse laplace transform to recover the solutions}
\end{equation} 
for any $R>0$ and any $\epsilon$ sufficiently small.

Again, by the residue theorem, we can compute \begin{align*}
\frac{1}{2\pi}\int_{\Gamma_{\epsilon}}e^{\sigma s}\rho(z)R(\sigma)(\partial_{z}F_{1},\partial_{z}F_{2})d\sigma &=\frac{1}{2\pi} \int_{\Gamma_{\epsilon}}e^{\sigma s}\frac{\mathcal{R}_{1}(\sigma,z)}{\sigma+1}d\sigma \\& =  e^{-s}\mathcal{R}_{1}(-1,z)\\[1em]& =  c_{\infty} \begin{bmatrix}
0\\\rho(z)e^{-s}\mr{r}(z)
\end{bmatrix} = c_{\infty}\begin{bmatrix}
    0\\\rho(z)r_{k}(s,z)
\end{bmatrix},
\end{align*}
where $c_{\infty}$ is the constant depending on $(F_{1},F_{2})$ \begin{equation}
    c_{\infty}(F_{1},F_{2}) = c_{*}\int_{-1}^{0}\left((\widetilde{G}_{-1})_{23}\partial_{z}^{2}F_{1}+(\widetilde{G}_{-1})_{24}\partial_{z}^{2}F_{2}\right)dz^{\prime}.
\end{equation}
To estimate $c_{\infty}$, by \eqref{eq: estimate on Gsigma}, we have \begin{equation}
    \begin{aligned}
       \left\vert c_{\infty}(F_{1},F_{2})\right\vert\lesssim \int_{-1}^{0}(-z)^{\frac{\sigma}{q_{k}}+\beta-1}\left(\left\Vert F_{1}\right\Vert_{\mathcal{C}_{N}^{\widetilde{\beta}}}+\left\Vert F_{2}\right\Vert_{\mathcal{C}_{N}^{\beta}}\right)dz\lesssim \frac{1}{\left\vert\beta-p_{k}\right\vert}\left(\left\Vert F_{1}\right\Vert_{\mathcal{C}_{N}^{\widetilde{\beta}}}+\left\Vert F_{2}\right\Vert_{\mathcal{C}_{N}^{\beta}}\right).
    \end{aligned}
\end{equation}

It remains to estimate the first integral term in \eqref{eq: inverse laplace transform to recover the solutions}. We further decompose this integration as \begin{equation*}
    \int_{P_{R}} = \int_{\bar{P}_{R}}+\int_{P_{R}\backslash\left(\bar{P}_{R}\cup \Gamma_{3,R}\right)}+\int_{\Gamma_{3,R}}.
\end{equation*}
For the integration on $\bar{P}_{R}$ and $0\leq i+j\leq 2$, by \eqref{eq: resolvent estimate}, we have \begin{equation}
\begin{aligned}
    &\left\vert\partial_{s}^{i}\partial_{z}^{j}\frac{1}{2\pi}\int_{\bar{P}_{R}}e^{\sigma s}\rho(z)R(\sigma)(\partial_{z}F_{1},\partial_{z}F_{2})d\sigma\right\vert\\\lesssim_{k}&\int_{\bar{P}_{R}}e^{(\Re\sigma) s}\vert\sigma\vert \left(\Vert F_{1}\Vert_{\mathcal{C}_{N}^{\widetilde{\beta}}}+\left\Vert F_{2}\right\Vert_{\mathcal{C}_{N}^{\beta}}\right)d\sigma\\\lesssim_{k}&e^{-\eta s}\int_{\bar{P}_{R}}\vert\sigma\vert\left(\Vert F_{1}\Vert_{\mathcal{C}_{N}^{\widetilde{\beta}}}+\left\Vert F_{2}\right\Vert_{\mathcal{C}_{N}^{\beta}}\right) d\sigma\\\lesssim_{k}&e^{-\eta s}\left(\sup_{s\geq0}\sum_{l = 0}^{3}e^{\frac{3}{2}s}\left(\left\Vert\partial_{s}^{l}I_{1}\right\Vert_{\mathcal{C}_{N}^{\widetilde{\beta}}}+\left\Vert\partial_{s}^{l}I_{2}\right\Vert_{\mathcal{C}_{N}^{\beta}}+\left\Vert \partial_{s}^{l}I_{3}\right\Vert_{\mathcal{C}_{N}^{p_{k}}}\right)\right)\int_{\bar{P}_{R}}\vert\sigma\vert^{-2}d\sigma\\\lesssim_{k}&e^{-\eta s}R^{-1}\sum_{l = 0}^{3}\sup_{s\geq 0}e^{\frac{3}{2}s}\left(\left\Vert\partial_{s}^{l}I_{1}\right\Vert_{\mathcal{C}_{N}^{\widetilde{\beta}}}+\left\Vert \partial_{s}^{l}I_{2}\right\Vert_{\mathcal{C}_{N}^{\beta}}+\left\Vert \partial_{s}^{l}I_{3}\right\Vert_{\mathcal{C}_{N}^{p_{k}}}\right),
    \end{aligned}
    \label{eq: estimate on barPr part}
\end{equation}
where the third inequality follows from \eqref{eq: losing derivative on F1}-\eqref{eq: losing derivative on F2}.

For the integration on $P_{R}\backslash\left(\bar{P}_{R}\cup\Gamma_{3,R}\right)$ and $0\leq i+j\leq2$, by \eqref{eq: resolvent estimate}, we have \begin{equation}
    \begin{aligned}
        &\left\vert\partial_{s}^{i}\partial_{z}^{j}\frac{1}{2\pi}\int_{P_{R}\backslash\left(\bar{P}_{R}\cup\Gamma_{3,R}\right)}e^{\sigma s}\rho(z)R(\sigma)(\partial_{z}F_{1},\partial_{z}F_{2})d\sigma\right\vert\\\lesssim&_{k}
    \int_{P_{R}\backslash\left(\bar{P}_{R}\cup\Gamma_{3,R}\right)}\vert\sigma\vert e^{\left(\Re\sigma \right)s}\left(1+\frac{1}{\vert\sigma+1\vert}\right)\frac{1}{\beta_{*}+\Re\sigma}\left(\left\Vert F_{1}\right\Vert_{\mathcal{C}_{N}^{\widetilde{\beta}}}+\left\Vert F_{2}\right\Vert_{\mathcal{C}_{N}^{\beta}}\right)d\sigma\\\lesssim&_{k,\epsilon_{*}}e^{-\eta s}R^{-1}\sum_{l = 0}^{3}\sup_{s\geq0}e^{\frac{3}{2}s}\left(\left\Vert\partial_{s}^{l}I_{1}\right\Vert_{\mathcal{C}_{N}^{\widetilde{\beta}}}+\left\Vert\partial_{s}^{l}I_{2}\right\Vert_{\mathcal{C}_{N}^{\beta}}+\left\Vert\partial_{s}^{l}I_{3}\right\Vert_{\mathcal{C}_{N}^{p_{k}}}\right).
    \end{aligned}
    \label{eq: estimate on the vertical part}
\end{equation}
For the integration on $\Gamma_{3,R}$ and $0\leq i+j\leq 2$, again by \eqref{eq: resolvent estimate}, we have \begin{equation}
    \begin{aligned}
    &\left\vert\partial_{s}^{i}\partial_{z}^{j}\frac{1}{2\pi}\int_{\Gamma_{3,R}}e^{\sigma s}\rho(z)R(\sigma)(\partial_{z}F_{1},\partial_{z}F_{2})d\sigma\right\vert\\\lesssim&_{k,\epsilon_{*}}\int_{\Gamma_{3,R}}\vert\sigma\vert e^{-(\beta_{*}-\epsilon_{*})s}\left(\left\Vert F_{1}\right\Vert_{\mathcal{C}_{N}^{\widetilde{\beta}}}+\left\Vert F_{2}\right\Vert_{\mathcal{C}_{N}^{\beta}}\right)d\sigma\\\lesssim&_{k,\epsilon_{*}}e^{-(\beta_{*}-\epsilon_{*})s}\sum_{l = 0}^{3}\sup_{s\geq0}e^{\frac{3}{2}s}\left(\left\Vert\partial_{s}^{l}I_{1}\right\Vert_{\mathcal{C}_{N}^{\widetilde{\beta}}}+\left\Vert\partial_{s}^{l}I_{2}\right\Vert_{\mathcal{C}_{N}^{\beta}}+\left\Vert\partial_{s}^{l}I_{3}\right\Vert_{\mathcal{C}_{N}^{p_{k}}}\right)\int_{\Gamma_{3,R}}\vert\sigma\vert^{-2}d\sigma\\\lesssim&_{k,\epsilon_{*}}
    e^{-(\beta_{*}-\epsilon_{*})s}\sum_{l= 0}^{3}\sup_{s\geq0}e^{\frac{3}{2}s}\left(\left\Vert\partial_{s}^{l}I_{1}\right\Vert_{\mathcal{C}_{N}^{\widetilde{\beta}}}+\left\Vert\partial_{s}^{l}I_{2}\right\Vert_{\mathcal{C}_{N}^{\beta}}+\left\Vert\partial_{s}^{l}I_{3}\right\Vert_{\mathcal{C}_{N}^{p_{k}}}\right).
    \end{aligned}
    \label{eq: estimate on Gamma3 part}
\end{equation}
Combining \eqref{eq: estimate on barPr part}, \eqref{eq: estimate on the vertical part}, and \eqref{eq: estimate on Gamma3 part}, and taking $R\rightarrow\infty$, for $0\leq i+j\leq 2$, we have \begin{equation}
    \begin{aligned}
&\left\vert\partial_{s}^{i}\partial_{z}^{j}\left(\rho(z)\begin{bmatrix}
        r_{p}\\\Psi_{p}
    \end{bmatrix}-c_{\infty}\begin{bmatrix}
        0\\\rho(z)r_{k}
    \end{bmatrix}\right)\right\vert\\\lesssim&_{k,\epsilon_{*}}e^{-(\beta_{*}-\epsilon_{*})s}\sum_{l = 0}^{3}\sup_{s\geq0}e^{\frac{3}{2}s}\left(\left\Vert\partial_{s}^{l}I_{1}\right\Vert_{\mathcal{C}_{N}^{\widetilde{\beta}}}+\left\Vert\partial_{s}^{l}I_{2}\right\Vert_{\mathcal{C}_{N}^{\beta}}+\left\Vert\partial_{s}^{l}I_{3}\right\Vert_{\mathcal{C}_{N}^{p_{k}}}\right).
    \end{aligned}
\end{equation}
It will also be useful to get the corresponding estimates for $\phi_{p}$. Recall that \begin{equation*}
    \Psi_{p} = r_{k}\phi_{p}+\mr{\phi}r_{p}.
\end{equation*}
Then for $0\leq i+j\leq1$, we have \begin{equation}
\begin{aligned}
\sum_{0\leq i+j\leq 1}\left\vert\partial_{s}^{i}\partial_{z}^{j}\left(\phi_{p}-c_{\infty}\right)\right\vert =& \left\vert\partial_{s}^{i}\partial_{z}^{j}\left(\frac{1}{r_{k}}\left(\Psi_{p}-c_{\infty}r_{k}\right)-\frac{\mr{\phi}}{r_{k}}r_{p}\right)\right\vert\\\leq&
\left\vert \partial_{s}^{i}\partial_{z}^{j}\left(\frac{1}{r_{k}}\left(\Psi_{p}-c_{\infty}r_{k}\right)\right)\right\vert+\left\vert \partial_{s}^{i}\partial_{z}^{j}\left(\frac{\mr{\phi}}{r_{k}}r_{p}\right)\right\vert\\\lesssim&_{k,\epsilon_{*}}e^{-(\beta_{*}-\epsilon_{*}-1)s}\sum_{l = 0}^{3}\sup_{s\geq 0}e^{\frac{3}{2}s}\left(\left\Vert\partial_{s}^{l}I_{1}\right\Vert_{\mathcal{C}_{N}^{\widetilde{\beta}}}+\left\Vert\partial_{s}^{l}I_{2}\right\Vert_{\mathcal{C}_{N}^{\beta}}+\left\Vert\partial_{s}^{l}I_{3}\right\Vert_{\mathcal{C}_{N}^{p_{k}}}\right).
\end{aligned}
\end{equation}
In particular, for $i+j =  1$, we have \begin{equation*}
    \sum_{i+j=1}\left\vert\partial_{s}^{i}\partial_{z}^{j}\phi_{p}\right\vert\lesssim_{k,\epsilon_{*}}e^{-(\beta_{*}-\epsilon_{*}-1)s}\sum_{l = 0}^{3}\sup_{s\geq0}e^{\frac{3}{2}s}\left(\left\Vert\partial_{s}^{l}I_{1}\right\Vert_{\mathcal{C}_{N}^{\widetilde{\beta}}}+\left\Vert\partial_{s}^{l}I_{2}\right\Vert_{\mathcal{C}_{N}^{\beta}}+\left\Vert\partial_{s}^{l}I_{3}\right\Vert_{\mathcal{C}_{N}^{p_{k}}}\right).
\end{equation*}

It remains to conclude the near-axis decay estimate \eqref{eq: near-axis decay estimate for mp} for $m_{p}$. To see this, we use the $\partial_{z}$-transport equation for $m_{p}$ and the boundary condition $m_{p}\bigl|_{\Gamma} = 0$. We can schematically write the $\partial_{z}m_{p}$ equation in terms of $\phi_{p}$ and $r_{p}$ as \begin{align*}
    \partial_{z}m_{p}+\frac{1}{2}\left(\frac{r_{k}}{\partial_{z}r_{k}}\left(\partial_{z}\phi_{k}\right)^{2}\right)m_{p} = -\frac{1}{2}\left(\frac{r}{\partial_{z}r}(\partial_{z}\phi)^{2}\right)_{p}m_{k}+\left(\frac{r^{2}}{\partial_{z}r}(\partial_{z}\phi)^{2}\right)_{p}.
\end{align*}
Straightforward algebraic computation gives \begin{align*}
    &0\leq \int_{-1}^{0}\frac{r_{k}}{\partial_{z}r_{k}}(\partial_{z}\phi_{k})^{2}\leq Ck^{2},\\&\left\vert\left(\frac{r}{\partial_{z}r}(\partial_{z}\phi)^{2}\right)_{p}m_{k}\right\vert\lesssim e^{-(\beta_{*}-\epsilon_{*})s}\sum_{l = 0}^{3}\sup_{s\geq 0}e^{\frac{3}{2}s}\left(\left\Vert \partial_{s}^{l}I_{1}\right\Vert_{\mathcal{C}_{N}^{\widetilde{\beta}}}+\left\Vert \partial_{s}^{l}I_{2}\right\Vert_{\mathcal{C}_{N}^{\beta}}+\left\Vert \partial_{s}^{l}I_{3}\right\Vert_{\mathcal{C}_{N}^{p_{k}}}\right),\\&\left\vert\left(\frac{r^{2}}{\partial_{z}r}(\partial_{z}\phi)^{2}\right)_{p}\right\vert\lesssim_{k,\epsilon_{*}}e^{-(\beta_{*}-\epsilon_{*})s}\sum_{l = 0}^{3}\sup_{s\geq0}e^{\frac{3}{2}s}\left(\left\Vert\partial_{s}^{l}I_{1}\right\Vert_{\mathcal{C}_{N}^{\widetilde{\beta}}}+\left\Vert\partial_{s}^{l}I_{2}\right\Vert_{\mathcal{C}_{N}^{\beta}}+\left\Vert\partial_{s}^{l}I_{3}\right\Vert_{\mathcal{C}_{N}^{p_{k}}}\right).
\end{align*}
Then, directly integrating the $\partial_{z}$-transport equation for $m_{p}$, for $-1\leq z\leq \frac{1}{4}$, we have \begin{equation}
   \sum_{0\leq i+j\leq 2} \left\vert\partial_{s}^{i}\partial_{z}^{j} m_{p}\right\vert(s,z)\lesssim_{k,\epsilon_{*}}e^{-(\beta_{*}-\epsilon_{*})s}\sum_{l = 0}^{3}\sup_{s\geq0}e^{\frac{3}{2}s}\left(\left\Vert\partial_{s}^{l}I_{1}\right\Vert_{\mathcal{C}_{N}^{\widetilde{\beta}}}+\left\Vert\partial_{s}^{l}I_{2}\right\Vert_{\mathcal{C}_{N}^{\beta}}+\left\Vert\partial_{s}^{l}I_{3}\right\Vert_{\mathcal{C}_{N}^{p_{k}}}\right).
\end{equation}
Letting $\beta_{*}^{\prime} = \beta_{*}-\epsilon_{*}$ concludes the proof of this proposition.
\end{proof}

\section{Decay results for the linearized system in the whole interior region}
\label{sec: decay in the interior region}
In this section, we prove Theorem \ref{thm: linearized result}.
\begin{proof}
\textbf{Step 1: Reduction to an inhomogeneous problem with zero initial data.} For the linearized Einstein-scalar field equations $\mathcal{P}(r_{p},\Psi_{p},m_{p}) = 0$ with the initial data $(r_{p},\Psi_{p})|_{\Sigma_{-1}^{(in)}} = (r_{p}^{0},\Psi_{p}^{0})\in\mathcal{C}_{N}^{\widetilde{\beta}}\times\mathcal{C}_{N}^{\beta}$ where $p_{k}<\beta<\frac{3}{2}$ and $\beta+\frac{1}{4}<\widetilde{\beta}<2$, let $\xi = \xi(s)$ be a cut-off function with \begin{equation*}
    \xi = 0,\ \text{for } 0\leq s\leq \frac{1}{2},\quad \xi = 1,\ \text{for } s\geq 1.
\end{equation*}
Commuting the equations $\mathcal{P}(r_{p},\Psi_{p},m_{p}) = 0$ with $\xi(s)$, we have \begin{equation}
    \begin{aligned}
     &\mathcal{P}\left(\xi(s)r_{p},\xi(s)\Psi_{p},\xi (s)m_{p}\right) = (\partial_{z}I_{1},\partial_{z}I_{2},I_{3},0), \\&
\left(\xi(s)r_{p},\xi(s)\Psi_{p},\xi(s)m_{p}\right)|_{s = 0}.
    \end{aligned}
\end{equation}
where $(I_{1},I_{2},I_{3})$ is determined by the initial data $(r_{p}^{0},\Psi_{p}^{0})$, the cut-off function $\xi$, and the background spacetime. More precisely, we have \begin{align*}
    \partial_{z}I_{1}=& \partial_{s}\xi\partial_{z}r_{p}-\frac{\mu_{k}}{1-\mu_{k}}\frac{\partial_{z}r_{k}}{r_{k}}(\partial_{s}\xi)r_{p},\\
    \partial_{z}I_{2} =& \partial_{s}\xi\partial_{z}\Psi_{p}-\frac{1}{r_{k}^{2}}\frac{\mu_{k}}{1-\mu_{k}}\Psi_{k}\partial_{z}r_{k}(\partial_{s}\xi) r_{p},\\I_{3} = &(\partial_{s}\xi)m_{p}-(1-\mu_{k})\frac{r_{k}(\partial_{s}+q_{k}z\partial_{z})\phi_{k}}{(\partial_{s}+q_{k}z\partial_{z})r_{k}}(\partial_{s}\xi)\Psi_{p}\\&+(1-\mu_{k})\left(\frac{1}{2}\left(\frac{r_{k}(\partial_{s}+q_{k}z\partial_{z})\phi_{k}}{(\partial_{s}+q_{k}z\partial_{z})r_{k}}\right)^{2}+\frac{r_{k}(\partial_{s}+q_{k}z\partial_{z})\phi_{k}}{(\partial_{s}+q_{k}z\partial_{z})r_{k}}(\phi_{k}-ks)\right)(\partial_{s}\xi)r_{p}.
\end{align*}

Moreover, by the property of $\xi$, we have that $I_{i}$ is identically zero for $s\geq 1$. Checking the regularity, we have that \begin{equation}
\sup_{s\geq 0}e^{\frac{3}{2}s}\sum_{j = 0}^{3}\left(\left\Vert\partial_{s}^{j}I_{1}\right\Vert_{\mathcal{C}_{N}^{\widetilde{\beta}}}+\left\Vert\partial_{s}^{j}I_{2}\right\Vert_{\mathcal{C}_{N}^{\beta}}+\left\Vert\partial_{s}^{j}I_{3}\right\Vert_{\mathcal{C}_{N}^{p_{k}}}\right)\lesssim\left\Vert r_{p}^{0}\right\Vert_{\mathcal{C}_{N}^{\widetilde{\beta}}}+\left\Vert\Psi_{p}^{0}\right\Vert_{\mathcal{C}_{N}^{\beta}}.
\end{equation}

\textbf{Step 2: Verify the decay rate assumption \eqref{eq: decay rate assumption}} Due to the energy estimate in Theorem~\ref{thm:a priori energy estimate}, we have that $(\xi(s)r_{p},\xi(s)\Psi_{p},\xi(s)m_{p})$ satisfies the assumption \eqref{eq: decay rate assumption} for any $\vert\eta\vert\leq\frac{1}{10}$.

\textbf{Step 3: Near-axis decay}
Now, we can apply Proposition \ref{prop: near axis decay} since the inhomogeneous terms satisfy the assumption \eqref{eq: assumption on the inhomogeneous terms} and the solution $(\xi(s)r_{p},\xi(s)\Psi_{p},\xi(s)m_{p})$ satisfies the decay rate \eqref{eq: decay rate assumption}. Hence, we can close the proof of Theorem \ref{thm: linearized result} in the near-axis region.

\textbf{Step 4: Decay in the whole region}
Finally, note that the decay rates in the near-axis region satisfy the assumption \eqref{eq: good assumption on the axis}. Hence, using Proposition \ref{prop: near axis decay} and Proposition \ref{prop: key to close the linearized result}, we can conclude the proof of Theorem \ref{thm: linearized result}.

\textbf{Step 5: Decay result for the threshold initial data} Now consider the linearized Einstein-scalar field equations $\mathcal{P}(r_{p},\Psi_{p},m_{p}) = 0$ with the initial data $(r_{p},\Psi_{p})|_{\Sigma_{-1}^{(in)}} = (r_{p}^{0},\Psi_{p}^{0})\in\mathcal{C}^{\widetilde{\beta}}_{N}\times \mathcal{C}_{N}^{p_{k},\delta}$, where $p_{k}+\frac{1}{4}<\widetilde{\beta}<2$ and $0<\delta\leq \delta_{0}$. By Lemma \ref{lemma: decomposition of the function spaces}, there exists a function $\bar{\Psi}_{p}^{0}\in\mathcal{C}_{N}^{p_{k}+\delta^{\prime}}$ with $0<\delta^{\prime}<\min\{\beta-\frac{1}{4}-p_{k},\delta\}$ such that $\Psi_{p}^{0} = cr_{k}\mr{\phi}+\bar{\Psi}_{p}^{0}$. Since the Einstein-scalar field equations are scaling invariant, we have that $(r_{k},\Psi_{k},m_{k})$ is a solution to the linearized equations $\mathcal{P} = 0$. Therefore, it is equivalent to consider $\mathcal{P}(r_{p}-cr_{k},\Psi_{p}-c\Psi_{k},m_{p}-cm_{k}) = 0$ with the initial data $(r_{p},\Psi_{p})|_{\Sigma^{(in)}_{-1}}= (r_{p}^{0}-cr_{k},\bar{\Psi}_{p}^{0})$. Using the argument in the previous steps again, we can conclude the proof of Theorem~\ref{thm: linearized result at the threshold}.

\end{proof}

\bibliographystyle{plain}
\bibliography{references}

\end{document}